\numberwithin{equation}{section}
\newtheorem*{proposition*}{Proposition}
\newtheorem*{theorem*}{Theorem}
\newtheorem*{conjecture*}{Conjecture}
\newtheorem*{claim*}{Claim}
\newtheorem*{lemma*}{Lemma}
\newtheorem*{corollary*}{Corollary}
\newtheorem{theorem}{Theorem}[section]
\newtheorem{proposition}[theorem]{Proposition}
\newtheorem{lemma}[theorem]{Lemma}
\newtheorem{corollary}[theorem]{Corollary}
\newtheorem*{definition*}{Definition}
\newtheorem{definition}{Definition}[section]
\newtheorem*{assumption*}{\mathcal{A}ssumption}
\newtheorem*{remark*}{Remark}
\newtheorem{remark}{Remark}[section]
\newcommand{\R}{\mathbb{R}}
\newcommand{\s}{\mathbb{S}}
\newcommand{\Z}{\mathbb{Z}}
\newcommand{\N}{\mathbb{N}}
\newcommand{\T}{\mathbf{T}}
\newcommand{\snabla}{\slashed{\nabla}}
\newcommand{\sD}{\slashed{\Delta}}
\newcommand{\Lbar}{\underline{L}}
\newcommand{\h}{\mathcal{H}^{+}}
\newcommand{\I}{\mathcal{I}^{+}}
\begin{document}

\title[Late-time asymptotics on ERN]{Late-time asymptotics for the wave equation \\on extremal Reissner--Nordstr\"om backgrounds }
\author{Y. Angelopoulos, S. Aretakis, and D. Gajic}

\maketitle

\begin{abstract}
We derive the precise late-time asymptotics for solutions to the wave equation on extremal Reissner--Nordstr\"{o}m black holes and explicitly express the leading-order coefficients in terms of the initial data. Our method is based on purely physical space techniques. We derive novel weighted energy hierarchies and develop a singular time inversion theory which allow us to uncover the subtle contribution of both the near-horizon and near-infinity regions to the precise asymptotics. 
We introduce a new horizon charge and provide applications pertaining to the interior dynamics of extremal black holes. 
Our work confirms, and in some cases extends, the numerical and heuristic analysis of Lucietti--Murata--Reall--Tanahashi,  Ori--Sela and  Blaksley--Burko.

\end{abstract}

\tableofcontents

\section{Introduction}
\label{sec:intro}

\subsection{Introduction and first remarks }
\label{sec:Introduction}

The existence of \textit{black hole} regions, namely regions of spacetime which are not visible to far away observers, is a celebrated prediction of the \textit{Einstein field equations}. A rigorous understanding of their dynamical properties is of fundamental importance for
addressing several conjectures in general relativity such as \textit{the weak and strong cosmic censorship conjectures} as well as for investigating \textit{the propagation of gravitational waves}. 
Important aspects of the black hole dynamics are captured by the evolution of solutions to the \textit{wave equation}
\begin{equation}
\label{eq:waveequation}
\square_g\psi=0
\end{equation}
on black hole backgrounds. Initial data are prescribed on a Cauchy hypersurface $\Sigma_0$ which intersects \textit{the event horizon} $\h$ and terminates at \textit{the null infinity} $\I$, as in the figure below.  A first step towards the non-linear stability of black hole backgrounds is to establish quantitative \textit{dispersive} estimates in the domain of outer communications up to and including the event horizon.
  \begin{figure}[H]
	\begin{center}
				\includegraphics[scale=0.18]{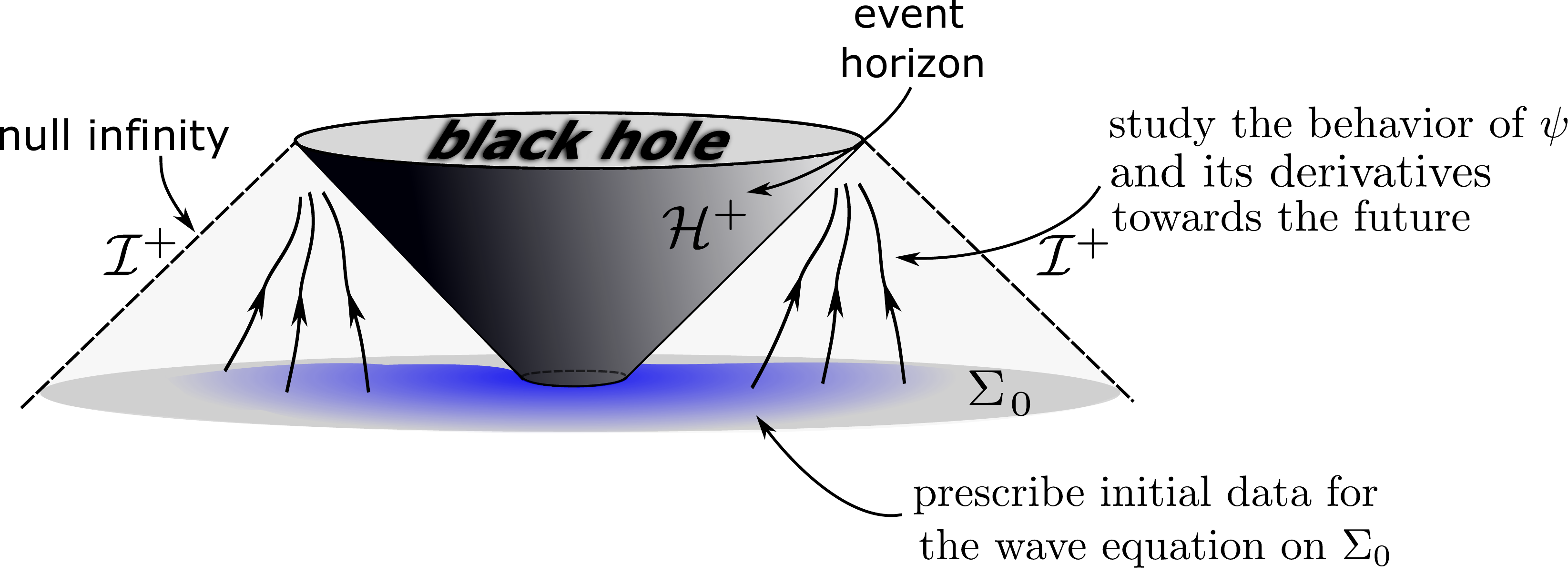}
\end{center}
\vspace{-0.2cm}
\caption{The initial value problem for the wave equation on black hole backgrounds. }
	\label{fig:p455}
\end{figure}
\vspace{-0.3cm} 
\noindent This problem has been extensively studied in both the mathematics and the physics communities.

\textbf{Quantitative decay rates} for \textit{scalar fields satisfying \eqref{eq:waveequation} and all their higher-order derivatives} have been obtained 
for the general \textit{sub-extremal Reissner--Nordstr\"{o}m} and the general \textit{sub-extremal Kerr} families of black hole spacetimes (see \cite{part3}). We refer to  \cite{lecturesMD, MDIR05,  blu1, moschidis1,  volker1, tataru3, metal, blukerr, dssprice,  other1} for additional results in the asymptotically flat setting.

A definitive proof of the \textbf{precise late-time asymptotics} of solutions to the wave equation on the general sub-extremal Reissner--Nordstr\"{o}m backgrounds, including the celebrated \textit{Schwarzschild} family of black holes, was obtained in the recent series of papers \cite{paper1, paper2, paper-bifurcate} confirming, in particular, Price's heuristics (for more details and references see Section \ref{sec:TheWaveEquationOnBlackHolesBackgrounds}).

In the present paper, we focus on another fundamental class of black holes, namely the \textit{extremal} black holes. These are characterized by the \textbf{vanishing} of the \textit{surface gravity} of the event horizon (see also Section \ref{sec:TheERNManifoldFoliationsAndVectorFields}). Geometrically, this condition has to do with the fact that the Killing normal to the event horizon coincides with the affine null normal to the event horizon. 
Extremal black holes play a fundamental role in\footnote{For more details and an exhaustive list of references for works related to extremal black holes see Section \ref{sec:PhysicalImportanceOfExtremalBlackHoles}.}
\begin{itemize}
	\item \textit{astronomy}: according to an abundance of astronomical observations, near-extremal black holes should be ubiquitous in the universe. Such observations concern stellar black holes and supermassive black holes in the centers of galaxies;
	\item \textit{high energy physics}: they allow for the study of supersymmetric theories of gravity, black hole thermodynamics and of  quantum descriptions of gravity;
	\item \textit{classical general relativity}: they saturate various geometric inequalities concerning the mass, angular-momentum and charge. Furthermore, they have intriguing dynamical properties with no analogue in sub-extremal black holes.
	\end{itemize}
	The latter intriguing dynamical properties of extremal black holes are the objects of study in this paper. Specifically, we investigate scalar perturbations of the \textit{extremal Reissner--Nordstr\"{o}m} (ERN) one-parameter family of black hole backgrounds. The corresponding metrics take the following form with respect to the so-called ingoing Eddington--Finkelstein coordinates $(v,r,\theta,\varphi)$:
\begin{equation}
g_{\text{\tiny{ERN}}}=-\left(1-\frac{M}{r}\right)^2dv^2+dv\otimes dr+dr\otimes dv+r^2(d\theta^2+\sin^2\theta d\varphi^2), 
\label{ernmetric}
\end{equation}
where $M>0$ is the mass parameter. 

ERN black hole spacetimes are spherically symmetric, asymptotically flat solutions to the Einstein--Maxwell system
\begin{equation*}
\begin{split}
R_{\mu\nu}(g)-\frac{1}{2}R(g)\cdot g_{\mu\nu}&=T_{\mu\nu}(F),\\
dF=d\star F&=0.
\end{split}
\end{equation*}
Here, $T_{\mu\nu}(F)$ denote the components of the energy-momentum tensor of the electromagnetic field $F$. 

The techniques for establishing decay in time and in fact the precise late-time asymptotics for solutions to the wave equation on sub-extremal backgrounds \textit{break down} in the case of ERN. Indeed, a large portion of the analysis of perturbations of sub-extremal black holes exploits the celebrated \textit{redshift effect} along the event horizon. This \textit{local} version of the redshift effect, which essentially depends on the strict positivity of the surface gravity of the event horizon (see \cite{redshift, lecturesMD}), can be illustrated as follows: consider two observers A and B entering the black hole region such that A crosses the event horizon first (see Figure \ref{fig:p4551ref}). Suppose A emits a light signal that travels along the event horizon and is intercepted by B. Then the frequency of this signal as measured by B will be ``shifted to the red" when compared to the frequency measured by A.
 \begin{figure}[H]
	\begin{center}
				\includegraphics[scale=0.18]{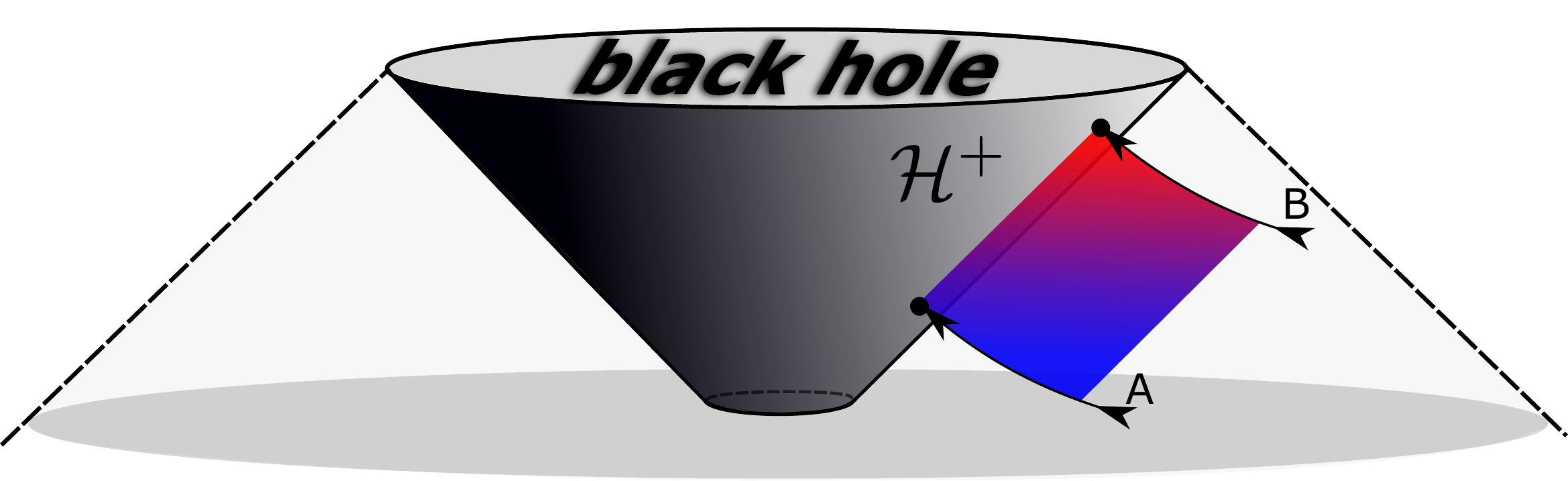}
\end{center}
\vspace{-0.2cm}
\caption{The local redshift effect for sub-extremal horizons. }
	\label{fig:p4551ref}
\end{figure}
\vspace{-0.3cm}

The vanishing of the surface gravity of extremal event horizons means that \textit{the redshift effect along the event horizon degenerates on extremal black holes} and hence cannot be used as a stabilizing mechanism. In fact, it was shown in \cite{SA10, aretakis1, aretakis2} that solutions to the wave equation on ERN satisfy a \textit{conservation law along the event horizon} $\mathcal{H}^{+}$. Consider the  translation-invariant vector field $Y=\partial_r$. Note that $Y$ is transversal to the event horizon as in the figure below. Consider the advanced time parameter  $\tau$ (explicitly defined in Section \ref{sec:TheHyperboloidalFoliation})  which is comparable to the Schwarzschild coordinate time $t$ away from the event horizon and null infinity and  denote by $S_{\tau_0}=\h\cap \{\tau=\tau_0\}$ the corresponding spherical sections of the event horizon. 
 \begin{figure}[H]
\begin{center}
		\includegraphics[scale=0.18]{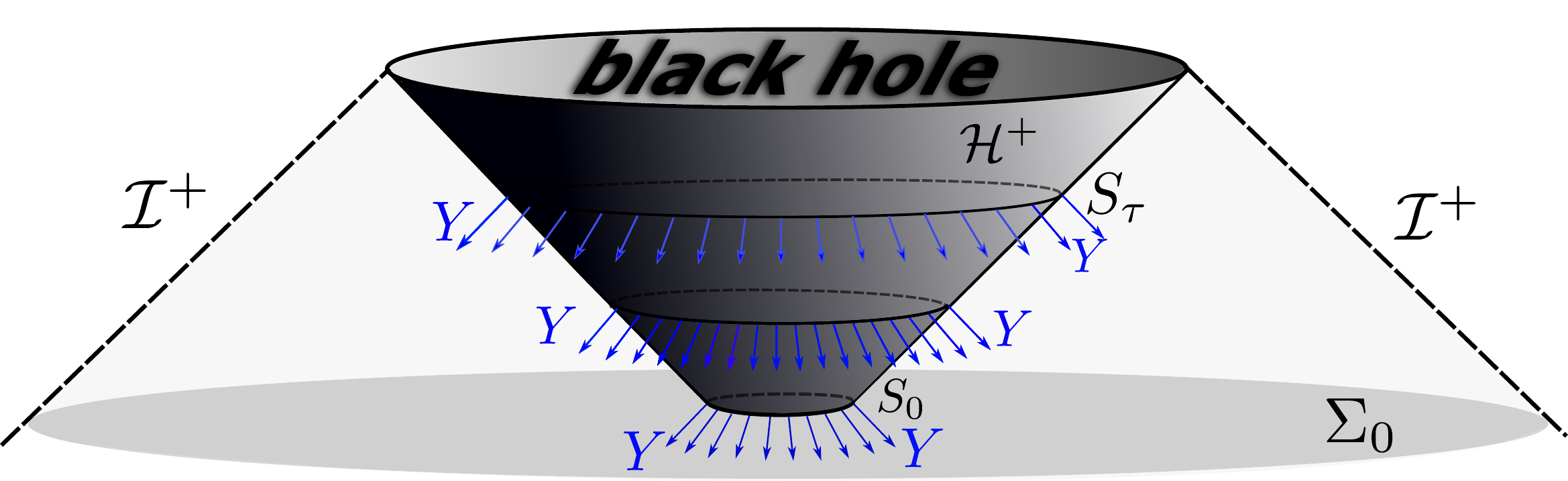}
\end{center}
	\label{fig:p4125}
	\vspace{-0.4cm}
\caption{The sections $S_{\tau}$ of $\mathcal{H}^{+}$ and the transversal to $\mathcal{H}^{+}$ vector field $Y$.}
\end{figure}
	\vspace{-0.3cm}
\noindent Then, the surface integrals
\begin{equation}
H_0[\psi]:=-\frac{M^2}{4\pi}\int_{S_{\tau}}Y(r\psi) \, d\omega
\label{hcons1}
\end{equation}
are \textbf{independent} of $\tau$. Here $\omega=(\theta,\varphi)$ and $d\omega=\sin\theta d\theta d\varphi$.  We will frequently refer to $H_{0}[\psi]$ as a \textit{conserved charge} for $\psi$. This conservation law is certainly an obstruction to decay for generic initial data for which $H_0[\psi]\neq 0$. It can further been shown that higher order derivatives asymptotically \textbf{blow up} along $\h$:
\begin{equation}
|Y^{k}\psi|_{\h}\sim\  c_k H_0[\psi]\cdot \tau^{k-1}\rightarrow +\infty
\label{instaeq1}
\end{equation}for $k\geq 2$ as $\tau\rightarrow +\infty$. Here $c_k$ are constants that depend only on $M,k$.  The growth along $\h$ of (transversal) derivatives yields a genuine \textit{horizon instability of extremal black holes} which can in fact be measured by local observers who cross the event horizon \cite{hm2012,zimmerman2}.   
On the other hand, it can be shown that \textit{away} from the event horizon $\psi$ and \underline{all} its derivatives $Y^{k}\psi$ \textbf{decay} in time. This means that one may regard $H_0$ as a type of \textit{horizon ``hair''} associated to the event horizon. 
 We remark that an analogous version of the horizon instability holds also for scalar perturbations of extremal Kerr \cite{aretakis4, hj2012} and in fact for many other types of perturbations in various settings (see Sections \ref{sec:TheHorizonInstabilityOfExtremalBlackHoles} and \ref{sec:PreviousWorksOnLateTimeAsymptoticsOnERN}).

Returning to decay estimates, the following weak decay rate was rigorously established in \cite{aretakis2} for $\psi$ everywhere in the domain of outer communications up to and including the event horizon:
\begin{equation}
|\psi| \lesssim \frac{1}{\tau^{\frac{3}{5}}}.
\label{bound1}
\end{equation}
Since $\psi$ decays along the event horizon it follows that, in view of the conservation of \eqref{hcons1}, the first-order transversal derivative $Y\psi$ of $\psi$ \textbf{does not decay} along the horizon for initial data for which $H_0[\psi]\neq 0$.

The above results do \emph{not} provide an insight into the precise asymptotic behavior for $\psi$.
 There is extensive work in the physics literature regarding late-time asymptotics for scalar fields on extremal Reissner--Nordstr\"om via heuristic or numerical methods, see for instance \cite{other2, extremal-rn-qnm, Burko2007, hm2012, ori2013, sela, sela2, zimmerman1} and Section \ref{sec:PreviousWorksOnLateTimeAsymptoticsOnERN} for more details. However, there has been no mathematically rigorous proof or derivation of these asymptotics. In fact, the heuristic and numerical predictions in the physics literature did not provide the late-time asymptotics in the \emph{full} spacetime, which remained an open problem and is resolved in the present paper. Before we give a more precise statement of this open problem, we introduce the following definition concerning initial data for \eqref{eq:waveequation}:

\begin{definition}
Initial data on the Cauchy hypersurface $\Sigma_0$ are called \textbf{horizon-penetrating} if they smoothly extend to the event horizon $\mathcal{H}^{+}$ such that the conserved charge $H_0[\psi]\neq 0$.
\label{def1intro}
\end{definition}
The following problem had been left completely open

\vspace{0.1cm}
\textit{Obtain the late-time asymptotics of the radiation field along the null infinity $\mathcal{I}^{+}$ for horizon-penetrating compactly supported initial data.}
\vspace{0.1cm}

\noindent The physical importance of the above problem lies in the fact that these asymptotics capture the observations made by far-away observers of perturbations of the near-horizon region of extremal black holes. This problem is definitively resolved in the present paper. In fact, in this paper: 

\vspace{0.1cm}
 \textit{We derive and rigorously prove the \textbf{precise} late-time asymptotics for scalar fields on ERN \textbf{globally} in the domain of outer communications, for \textbf{a general class of initial data}.}
\vspace{0.1cm}

\noindent In particular, we derive late-time asymptotics along the event horizon $\h$, along constant $r=r_0$ hypersurfaces and along the null infinity $\I$. The \textit{exact coefficient of the leading-order terms in the asymptotic estimate is obtained in terms of explicit expressions of the initial data}. See Section \ref{sec:SummaryOfTheMainResults} for a non-technical summary of the results and Section \ref{subsec:TheMainTheorems} for the precise statements of the main theorems. Our results provide, in particular, sharp upper and lower decay rates for the evolution of scalar fields. Our method is based purely on physical space constructions and avoids explicit representations of solutions to the wave equation.  We establish a novel elliptic estimate and a new class of hierarchies of weighted estimates adapted to the extremal near-horizon geometry. 

Our results provide a rigorous confirmation and proof of the numerics in \cite{Burko2007, hm2012} and heuristics in \cite{ori2013, sela}. For example,  \cite{hm2012} was the first work to  numerically obtain the following late-time asymptotics along the event horizon for horizon-penetrating compactly supported initial data:
\[\psi|_{\h}\sim \frac{2}{M}H_0[\psi]\cdot\frac{1}{\tau}. \]
These asymptotics, which were subsequently heuristically derived in \cite{ori2013,sela}, are indeed rigorously recovered here. Furthermore, as mentioned above, our results extend the works in the physics literature in various directions. Notably, we obtain the asymptotics of the radiation field along the future null infinity $\I$ for horizon-penetrating, compactly supported initial data:
\[r\psi|_{\I}\sim \Big(4MH_0[\psi]-I_0^{(1)}[\psi]\Big)\cdot\frac{1}{\tau^2}. \]
Here $I^{(1)}_0[\psi]$ is the Newman--Penrose constant of a \textit{singular} time integral of $\psi$ and depends on the global properties of the initial data  (see Sections \ref{sec:TheNewHorizonHairH01Psi} and \ref{sec:GeometricOriginOfTheNewHair}). We remark that the horizon charge $H_0[\psi]$ of a scalar perturbation that is initially localized near the event horizon in fact appears in the asymptotic behavior along $\I$. In other words, observations along null infinity (that is, arbitrarily far from the event horizon) can in principle be used to measure the charge $H_0[\psi]$ associated to in-falling observers at the horizon. This might be thought of as a ``leakage'' of horizon information to null infinity and hence could, in principle, be measured by gravitational detectors demonstrating \textit{a new observational signature for extremal black holes }\cite{extremal-prl}.

\paragraph{Outline of the introduction}
\label{sec:OutlineOfTheIntroduction}

We finish this brief introductory subsection with an outline of the remaining sections in the introduction. In Section \ref{sec:TheWaveEquationOnBlackHolesBackgrounds}, we review the key mechanisms behind the existence of late-time tails in the asymptotics of scalar fields on sub-extremal black holes. In Section \ref{sec:PhysicalImportanceOfExtremalBlackHoles} we list various works which emphasize the importance of the dynamics of extremal black holes and hence serve as a motivation for the work of the present paper. In Section \ref{sec:TheHorizonInstabilityOfExtremalBlackHoles}  we provide a review of the horizon instability of extremal black holes and in Section \ref{sec:PreviousWorksOnLateTimeAsymptoticsOnERN} we discuss the physics literature that is relevant to our problem.

\subsection{Asymptotics for the wave equation on sub-extremal black holes}
\label{sec:TheWaveEquationOnBlackHolesBackgrounds}

The following late-time \textit{polynomial} tails for solutions to the wave equation with smooth, \textit{compactly supported} initial data on 
Schwarzschild spacetimes were obtained in a heuristic manner by Price \cite{Price1972} in 1972  along constant radius $r=r_0$ hypersurfaces away from the event horizon
\begin{equation}
\psi|_{r=r_0}(\tau,r=r_0,\omega)\sim \frac{1}{\tau^3}.
\label{eq:1}
\end{equation}
Subsequent heuristic and numerical works \cite{leaver, CGRPJP94b, LB99} suggested the following asymptotics on the event horizon $\h$ and along the null infinity $\mathcal{I}^{+}$:
\begin{equation}
\psi|_{\mathcal{H}^{+}}(\tau,r=2M,\omega) \sim \frac{1}{\tau^3},\ \ \ r\psi|_{\mathcal{I}^{+}}(\tau,r=\infty,\omega) \sim \frac{1}{\tau^2}.
\label{eq:2}
\end{equation}
Here $\tau$ denotes a global time parameter and $\omega\in \mathbb{S}^{2}$. The following global quantitative estimates which establish rigorously the above asymptotics were obtained for general sub-extremal Reissner--Nordstr\"{o}m spacetimes in \cite{paper2, paper-bifurcate}:

\begin{equation}
\left|\psi(\tau, r_0,\cdot)+8 I_{0}^{(1)}[\psi]\cdot\frac{1}{\tau^3}\right| \leq C_{r_0} \cdot \sqrt{E_{\Sigma_0}[\psi]}\cdot \frac{1}{\tau^{3+\epsilon}},
\label{our1}
\end{equation}
\begin{equation}
\left|r\psi|_{\mathcal{I}^{+}}(\tau,\cdot)+2 I_{0}^{(1)}[\psi]\cdot\frac{1}{\tau^2}\right| \leq C \cdot \sqrt{E_{\Sigma_0}[\psi]}\cdot \frac{1}{\tau^{2+\epsilon}}, 
\label{ourrad1}
\end{equation}
where $\sqrt{E_{\Sigma_0}[\psi]}$ are weighted norms of the initial data and the constant $I_{0}^{(1)}$ is given by the following explicit expression of the initial data on $\Sigma_0$: 
\begin{equation}
I_{0}^{(1)}[\psi]=\frac{M}{4\pi} \int_{\Sigma_0\cap\mathcal{H}^{+}}\!\!\psi r^2d\omega+\lim_{r_0\rightarrow \infty}\left(\frac{M}{4\pi}\int_{\Sigma_0\cap\{r\leq r_0\}} n_{\Sigma_0}(\psi)d\mu_{\Sigma_0}+\frac{M}{4\pi}\int_{\Sigma_0\cap\{r=r_0\}}\Big(\psi-\frac{2}{M}r\partial_v(r\psi)\Big)r^2d\omega\right),
\label{i011}
\end{equation}
with $\partial_v$ is an outgoing null derivative and $d\mu_{\Sigma_0}$ denotes the induced volume form on $\Sigma_0$.  We note that for compactly supported initial data on the maximal hypersurface $\{t=0\}$, the  above expression for the coefficient $I_{0}^{(1)}[\psi]$ reduces to 
\[I_0^{(1)}[\psi]= \frac{M}{4\pi}\int_{ S_{\text{BF}}}\!\!\psi \, r^2d\omega+\frac{M}{4\pi}\int_{\{t=0\}}\ \frac{1}{1-\frac{2M}{r}}\partial_t\psi\, r^2 dr d\omega,
\]
where $S_{\text{BF}}$ denotes the bifurcation sphere. 

Generic initial data satisfy $I_{0}^{(1)}[\psi]\neq 0$ and hence give rise to solutions to the wave equation which decay exactly like $\frac{1}{\tau^3}$. This result yielded the first \textit{pointwise lower bounds} for solutions to the wave equation on Schwarzschild backgrounds\footnote{Note that the sharpness of the \emph{decay rate} of the time derivative of $\psi$ along the event horizon was first established by Luk and Oh \cite{luk2015}.}.  In other words, \eqref{our1}, \eqref{ourrad1} and \eqref{i011} provide a complete characterization of all solutions to \eqref{eq:waveequation} which satisfy Price's law as a lower bound. We remark that the study of precise late-time asymptotic expansions is very important in issues related to black hole interior regions and, in particular, in addressing the strong cosmic censorship conjecture \cite{MD03,MD05c, MD12, luk2015, LukSbierski2016, DafShl2016, Hintz2015, Franzen2014, Luk2016a,Luk2016b}.

It is important to emphasize that the approach of  \cite{paper2,paper-bifurcate} is based on purely physical space techniques. On the other hand, the heuristic work of Leaver \cite{leaver} related the late-time power law to the branch point at $\omega=0$ in the Laplace transform of Green's function for each fixed angular frequency. This is consistent with the results of \cite{paper2,paper-bifurcate}, in view of the fact that the geometric origin of the constant $I_{0}^{(1)}[\psi]$ is related to \textit{an obstruction to the invertibility of the time operator} $T=\partial_t$ in a suitable function space (and hence is related to the $\omega=0$ frequency in the Fourier space). Indeed, restricting (strictly) to the future of the bifurcation sphere where $T\neq 0$, we have that \textbf{an obstruction to the invertibility of the operator $T$ is the existence of a conservation law along the null infinity} $\I$: For solutions $\psi$ to the wave equation \eqref{eq:waveequation} on Reissner--Nordstr\"om spacetimes, the limits
\[I_{0}[{\psi}](u):=\frac{1}{4\pi}\lim_{r\rightarrow\infty} \int_{\mathbb{S}^{2}}r^2 \partial_r (r{\psi}) (u,r,\omega) \, d\omega \]
are \textbf{independent} of the retarded time $u$. Here, we consider the standard outgoing Eddington--Finkelstein coordinates $(u,r,\omega)$ (with $\omega \in\mathbb{S}^2$). The associated constant 
\begin{equation}
I_{0}[{\psi}]:=I_{0}[{\psi}](u)
\label{np}
\end{equation}
is called the \textit{Newman--Penrose constant} of ${\psi}$ (see \cite{ np2}).
\begin{figure}[H]
\begin{center}
\includegraphics[width=5cm]{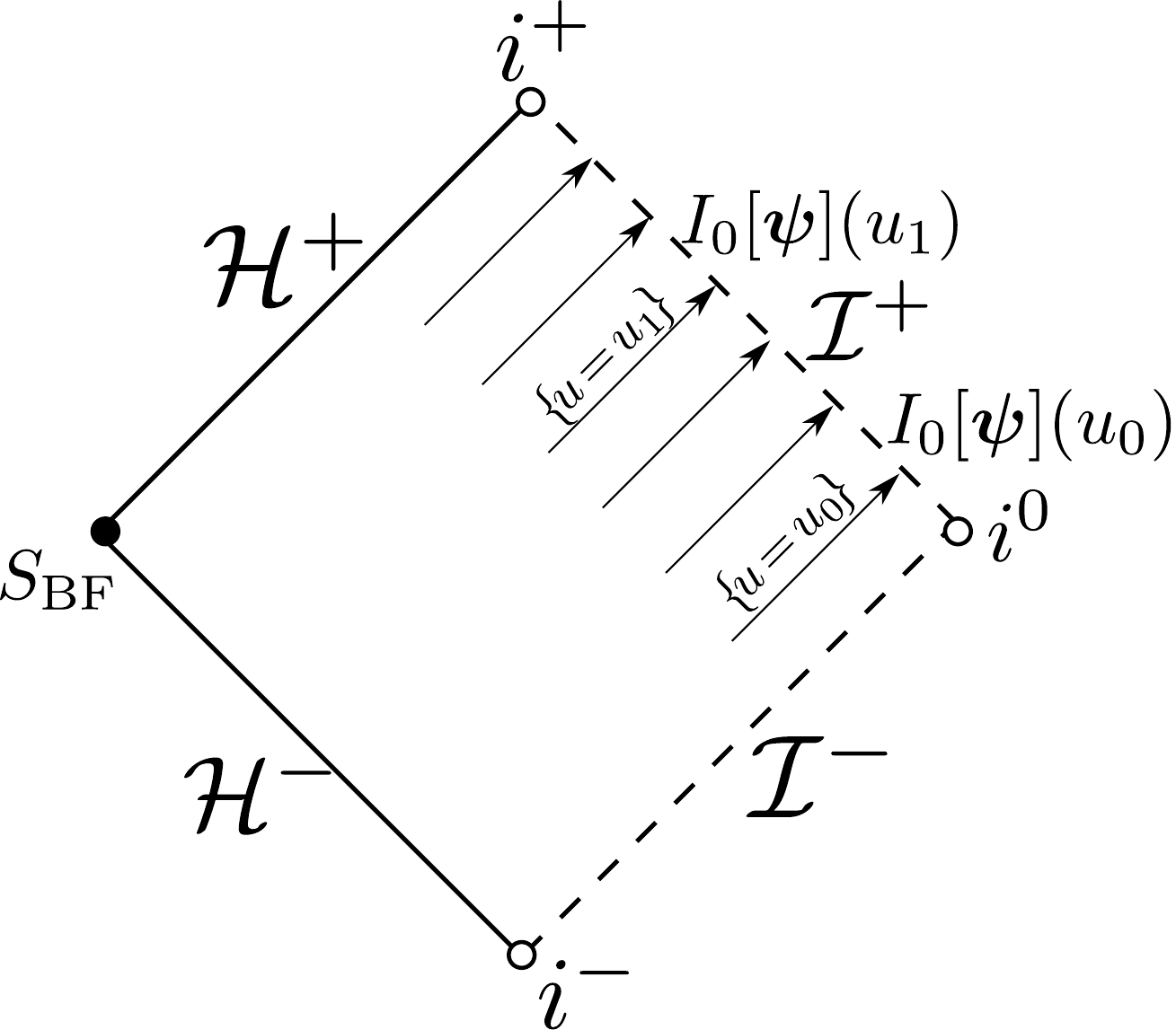}\vspace{-0.25cm} 
\caption{\label{fig:5}The Newman--Penrose constant on $\mathcal{I}^{+}$.}
\end{center}
\end{figure}
The existence of this asymptotic conservation law is an obstruction to inverting the time operator $T$, if the domain of $T$ is taken to be the set of all  smooth solutions $\psi$ to the wave equation which satisfy the condition $|r^2\partial_{r}(r\psi)|\in O_{1}(r^0)$ on the initial hypersurface $\Sigma_0$ (see Section \ref{sec:timeint} for more details).\footnote{We also refer to Section \ref{sec:TheERNManifoldFoliationsAndVectorFields} for a definition of this ``big O notation''.} Indeed, if there is a regular solution $\psi^{(1)}$  to \eqref{eq:waveequation} in the domain of $T$ such that 
\[ T\psi^{(1)}=\psi \]
then we must necessarily have that
\[I_{0}[\psi]=I_0[T\psi^{(1)}]=0.  \]
Conversely, if we consider a smooth initial data on a Cauchy hypersurface $\Sigma_0$ which crosses the event horizon to the future of the bifurcation sphere (see figure below) such that 
\begin{equation}
\lim_{r\rightarrow \infty} \int_{\mathbb{S}^2} r^3 \partial_r(r\psi)|_{\Sigma_0}d\omega<\infty, 
\label{r3condition}
\end{equation}
which in particular implies that $I_{0}[\psi]=0$, then,  by the results in \cite{paper2}, there is a unique smooth spherically symmetric solution $\psi^{(1)}$ to \eqref{eq:waveequation} in the domain of $T$ such that 
\begin{equation}
T\psi^{(1)}=\frac{1}{4\pi}\int_{\mathbb{S}^2} \psi d\omega
\label{timeinvint1}
\end{equation}in $\mathcal{J}^{+}(\Sigma_0)$.
\begin{figure}[H]
\begin{center}
\includegraphics[width=5cm]{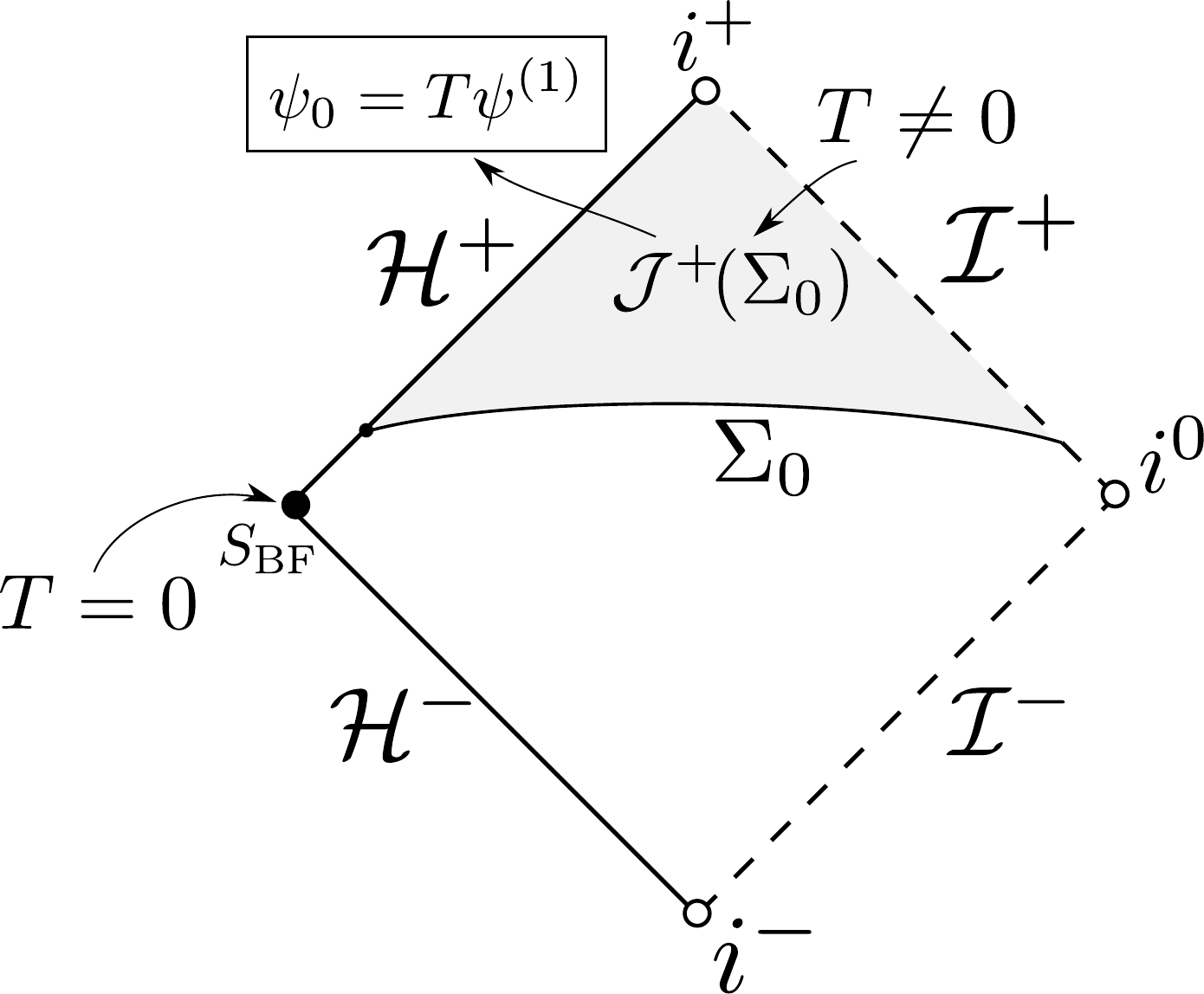}
\caption{\label{fig:2432}Time inversion for the spherical mean $\psi_0$ of $\psi$.}
\end{center}
\end{figure}
Hence, \textbf{$I_0[\psi]$ appears as the unique obstruction to inverting the time operator $T$ on the projection to the spherical mean} of $\psi$. If the Newman--Penrose constant $I_0[\psi]\neq 0$ then \eqref{timeinvint1} has no solution and in this case it is $I_{0}[\psi]$ that appears in the late-time asymptotics of the spherical mean (see \cite{paper2}): for example, at fixed $r=r_0$ we have that
\begin{equation}
\left|\frac{1}{4\pi}\int_{\mathbb{S}^2}\psi(\tau, r_0,\cdot)d\omega-4 I_{0}[\psi]\cdot\frac{1}{\tau^2}\right| \leq C_{r_0} \cdot \sqrt{E_{\Sigma_0}[\psi]}\cdot \frac{1}{\tau^{2+\epsilon}}.
\label{our120}
\end{equation}
We remark that the restriction to the spherical mean is justified by the fact that the non-spherically symmetric projection :
\[\psi_{\ell\geq 1}:=\psi-\frac{1}{4\pi}\int_{\mathbb{S}^2}\psi d\omega,\]
decays at least like $\tau^{-3.5+\epsilon}$ (see \cite{paper2}), for some small  $\epsilon>0$, and hence does \textbf{not} contribute to the leading order terms in the late-time asymptotics. 

If, on the other hand, \eqref{r3condition} holds (and hence $I_0[\psi]=0$) then by the above result $T$ can be inverted to produce the time integral $\psi^{(1)}$. In this case,  the Newman--Penrose constant $I_0[\psi^{(1)}]$ of $\psi^{(1)}$ is an obstruction to acting with $T^{-1}$ on $\psi^{(1)}$, or equivalently, an obstruction to acting with $T^{-2}$ on $\frac{1}{4\pi}\int_{\mathbb{S}^{2}}\psi$. This obstruction is precisely the origin of the coefficient $I_{0}^{(1)}$ in \eqref{our1} and \eqref{ourrad1}, that is
\[I_{0}^{(1)}[\psi]=I_{0}[\psi^{(1)}].\]
Note that $I_{0}^{(1)}[\psi]$ is given in terms of the initial data of $\psi$ by \eqref{i011}. 

Summarizing the asymptotics for the sub-extremal case, we have:
\begin{table}[!ht]{\footnotesize
\begin{center}
    \begin{tabular}{ l | c  }
    \hline
   asymptotics for $\psi$  in $\{r\leq R_0\}$ & origin of the coefficient \\\Xcline{1-2}{0.05cm} 

    $-4I_{0}[\psi]\cdot \frac{1}{\tau^2}$ & $I_{0}[\psi]\neq 0$ unique obstruction to inverting $T$ \\ \hline
   $8I_{0}^{(1)}[\psi]\cdot \frac{1}{\tau^3}$ &  $I_{0}^{(1)}[\psi]\neq 0$ unique obstruction to inverting $T^2$   \\     \hline
  \end{tabular}
\end{center}}
\end{table}

\textbf{In the case of ERN there are additional obstructions to inverting the time operator $T$ which cause many subtle difficulties in obtaining the precise late-time asymptotics} (see Sections \ref{sec:GeometricOriginOfTheNewHair} and \ref{sec:OverviewOfTechniques}).

\subsection{Physical importance of extremal black holes}
\label{sec:PhysicalImportanceOfExtremalBlackHoles}

As has already been mentioned in Section \ref{sec:Introduction}, extremal black holes are of fundamental importance in general relativity. Let us emphasize that an understanding of the dynamical properties of ``exactly'' extremal black holes is relevant also when one is studying the dynamics of ``near-extremal'' black holes over large (but finite) time intervals. In this section we provide a list of references which underpin the intimate connection of extremal black holes with astronomy/astrophysics, high energy physics and classical general relativity.

\vspace{0.2cm}

\noindent{\textbf{Observations of (near-)extremal black holes}}

\vspace{0.15cm}

\noindent

\noindent Astronomical evidence suggests that near-extremal black holes are ubiquitous in the universe.  Various techniques have been developed to analyze the mechanisms for the formation and distribution of near-extremal black holes \cite{near-extremal-accretion, kesden}. It has been suggested that $70\%$ of the stellar black holes, which have been formed from the collapse of massive stars, in the universe are near-extremal \cite{rees2005}. Using techniques from $X$-ray reflection spectroscopy, it has been shown that many supermassive black holes (whose mass is at least 1 billion times the mass of the sun) are near-extremal \cite{brenneman-spin, reynolds-nearextremal}. 
\vspace{0.3cm}

\noindent{\textbf{Observational signatures of extremal black holes}}

\vspace{0.15cm}

\noindent Many astronomical conclusions are based on calculations for exactly Kerr spacetimes. However, time variability might introduce additional observational signatures of extremal black holes, that is features in the observations that are characteristic to the dynamics of extremal black holes. The near-horizon geometry provides a great background for probing such signatures. Such signatures can be divided in two main categories: gravitational signatures \cite{gralla2016, ekerr-plunge} and electromagnetic signatures \cite{extreme-optical2, grallastrominger}. The asymptotics of the present paper derive a new gravitational signature \cite{extremal-prl} (see the discussion
at the end of Section 1.1).  

\vspace{0.3cm}

\noindent{\textbf{Supersymmetry, holography and quantum gravity}}

\vspace{0.15cm}

\noindent Extremal black holes are often supersymmetric as a consequence of the BPS bound. They have zero Hawking temperature and hence play an important role in understanding black hole thermodynamics and the Hawking radiation \cite{haw95}. Quantum considerations of black hole entropy in five-dimensional extremal black holes and applications in string theory can be found in \cite{stromingerextremalentropy, em06}.  One can define a near-horizon limit \cite{gibbonssuper,armen} which yields new solutions to the Einstein equations with conformally invariant properties. These limiting geometries have been classified in \cite{ luciettikund, k09, chrus}. On the other hand, the conformal properties of the near-horizon geometries allow for a description of quantum gravity via a holographic duality \cite{strominger-extremal-holography}.

\vspace{0.3cm}

\noindent{\textbf{Uniqueness and classification of extremal black holes}}

\vspace{0.15cm}

\noindent Extremal event horizon enjoy various rigidity properties \cite{extremalrigidity, hol09}.  Global uniqueness results for extremal black holes in various settings have been obtained in \cite{pauextremalluci, chru, horow}. We also refer to interesting examples of higher dimensional extremal black holes. 

\vspace{0.3cm}

\noindent{\textbf{Extremal black holes as mass minimizers}}

\vspace{0.15cm}

\noindent  Extremal black holes saturate geometric inequalities for the total mass, angular momentum and charge \cite{SD08, chru2008} at higher dimensions \cite{alaee}. They also saturate quasi-local versions of these inequalities for the mass, angular momentum and charge contained in the black hole region \cite{DJR11,  dain2010}.

\vspace{0.3cm}

\noindent{\textbf{Quasinormal modes of extremal black holes}}

\vspace{0.15cm}

\noindent  Starobinski \cite{staro} first investigated the effects of superradiance and extremality. Extensions for quasinormal modes of extremal Kerr were obtained in \cite{detweiler80}  where a sequence of zero damped modes was computed.  Subsequent  analysis was presented in \cite{mhighinsta, glampedakisfull}. The most precise analysis of quasinormal modes in extremal Kerr has been presented in \cite{zeni13}. Gravitational modes of the near extremal Kerr geometry were studied in \cite{harvey-modes-2009}.

\vspace{0.3cm}

\noindent{\textbf{Extremality and non-linear effects}}

\vspace{0.15cm}

\noindent  An intriguing aspect of near-extremal black holes is that they exhibit turbulent gravitational behavior \cite{luis}, that is energy is transferred from high frequencies to low frequencies. Non-linear simulations of formation of binary systems of near-extremal black holes were presented in \cite{extremal-binary-merger}. Furthermore, numerical simulations of the evolutions of the Einstein--Maxwell-scalar field system in a neighborhood of extremal Reissner--Nordstr\"om were studied in \cite{harvey2013}. A general theory of evolution of extremal black holes was developed here \cite{booth16}. For other non-linear works pertaining to the dynamics of extremal black holes we refer to \cite{areangel6, yannis1, bizon-extremal-nonlinear}.

\subsection{The horizon instability of extremal black holes}
\label{sec:TheHorizonInstabilityOfExtremalBlackHoles}

The wave equation on ERN in ingoing Eddington--Finkelstein $(v,r,\theta,\varphi)$ coordinates takes the form
\begin{equation}
\Box_g\psi= D\partial_r\partial_r\psi+2\partial_v\partial_r\psi+\frac{2}{r}\partial_v\psi+R\partial_r\psi+\sD\psi=0,
\label{weern}
\end{equation}
where $D(r)=\left(1-\frac{M}{r}\right)^2$ and $R(r)=\frac{dD}{dr}+\frac{2D}{r}$ and $\sD=\frac{1}{r^2}\sD_{\s^2}$, where $\sD_{\s^2}$ is the standard Laplacian on the round unit sphere $\mathbb{S}^{2}$. 

We will review here the decay, non-decay and blow-up results for \eqref{weern} that were established in \cite{aretakis1,aretakis2, SA10, aag1} and describe the ``\textit{horizon instability of extremal black holes}''. We consider smooth initial data on a spherically symmetric Cauchy hypersurface $\Sigma_0$ which crosses the event horizon and terminates at future null infinity. Recall that the event horizon is the hypersurface given by
\[\h=\{r=M\}. \]
Let $F_{\tau}$ denote the flow of the stationary Killing vector field $T=\partial_v$ and let $\Sigma_{\tau}=F_{\tau}(\Sigma_0)$.

\subsubsection{Conservation laws along the event horizon}
\label{sec:ConservationLawsAlongTheEventHorizon}

Consider the spherical sections $S_{\tau}=\Sigma_{\tau}\cap \h$ of the event horizon. \textit{Restricting to the spherical mean of the wave equation \eqref{weern} on the event horizon} yields
\[ \partial_{v}\left(\int_{S_{\tau}}\left(2\partial_r\psi+2M^{-1}\psi\right)\, M^2\, d\omega\right)=0. \]
Since $\partial_v$ is null and normal to the event horizon $\h$, it immediately follows that  \textit{the surface integrals}
\begin{equation}
H_0[\psi]:=-\frac{M^2}{4\pi}\int_{S_{\tau}}\partial_r(r\psi) \, d\omega
\label{introhorizonH}
\end{equation} 
 \textit{are independent of} $\tau$. Here $d\omega=\sin\theta d\theta d\varphi$ is the volume form of the unit round sphere $\mathbb{S}^2$ with $\omega=(\theta,\varphi)$. This gives rise to a conservation law along the event horizon. Surprisingly, an analogous conservation law holds for each projection on the eigenspace of the angular Laplacian. Indeed, it can be shown that if $\psi_{\ell}$ denotes the projection of $\psi$ on the eigenspace $E_{\ell}$ of $\sD$ with eigenvalue $-\frac{\ell (\ell+1)}{r^2}$, then the following derivative $\psi_\ell$ of order $\ell+1$ that is transversal to $\h$,
\[\partial_{r}^{\ell} \Big( r\partial_r(r \psi_{\ell})\Big), \]
is \textit{constant along the null generators of the event horizon}. 

It is important to emphasize that the derivative $\partial_r$ is translation-invariant (since $[\partial_r,\partial_v]=0$) and hence the above conservation laws provide highly non-trivial \textit{obstructions to decay} for all the geometric quantities associated to a scalar field. Summarizing we have the following: 

\vspace{0.2cm}
\textbf{Hierarchy of conservation laws on ERN:} \textit{for every fixed angular frequency $\ell$ we have a conservation law along the event horizon involving exactly the first $\ell+1$ translation-invariant, transversal derivatives of the scalar field on the event horizon. }

\vspace{0.2cm}
An analogue of this hierarchy for axisymmetric solutions on extremal Kerr was obtained in \cite{aretakis4}. Lucietti and Reall \cite{hj2012} generalized this hierarchy for electromagnetic and gravitational perturbations of extremal Kerr which they used to derive a \textit{gravitational instability of extremal Kerr}. We remark that these conservation laws are a feature characteristic to extremal event horizons. Indeed, it was shown in \cite{aretakisglue} that non-extremal horizons do \textbf{not} admit conservation laws associated to solutions of the wave equation. Further extensions of these conservation laws have recently been provided in \cite{godazgar17}.

\subsubsection{The trapping effect on the event horizon}
\label{sec:TheTrappingEffect}

Let $N$ be a translation-invariant future-directed timelike vector field defined globally in the domain of outer communications up to and including the event horizon. This vector field will be used to measure the energy $E_{\gamma}(s)$ of affinely-parametrized null geodesics  $\gamma(s)$:
\[E_{\gamma}(s)=g\left(\overset{\cdot}{\gamma}(s),N\right),\]
where $\overset{\cdot}{\gamma}(s)=\frac{d\gamma}{ds}(s)$.  A key observation is that for sub-extremal black holes  the energy $E_{\gamma}(s)$ of the null generators of the event horizon with positive surface gravity $\kappa>0$ decays exponentially in $s$. On the other hand, the energy $E_{\gamma}(s)$ of the null generators of the event horizon of ERN remains constant for all $s$. This is intimately related to the geometric characterization of extremal horizons, namely that the Killing normal vector field to the event horizon gives rise to an affine foliation of the event horizon. Sbierski \cite{janpaper} used the Gaussian beam approximation and the above result to show that there are solutions to the wave equation on ERN that are localized in a neighborhood of $\h$ with almost constant energy across $\Sigma_{\tau}$ for arbitrarily large $\tau$. This result immediately yields an obstruction to proving local integrated estimates bounding
\[ \Gamma_1[\psi]=\int_{0}^{\infty}\left(\int_{\Sigma_{\tau}\cap \{r\leq M+\epsilon\}} |\partial\psi|^2 \right)\, d\tau\]
for some arbitrarily small $\epsilon>0$. 
Specifically, Sbierski's result shows that the above integral cannot be bounded purely in terms of the initial energy of $\psi$ on $\Sigma_0$. A Morawetz estimate bounding $\Gamma_1[\psi]$ was established in \cite{aag1} where it was shown that such an estimate requires
\begin{enumerate}
		\item \textit{the finiteness of a weighted higher-order norm of the initial data, and }
 \item\textit{ the vanishing of the conserved charge $H_{0}[\psi]$. }
\end{enumerate}
Furthermore, it was shown that \textit{for smooth and compactly supported initial data, $\Gamma_1[\psi]$ is \underline{infinite} if and only if  $H_0[\psi]\neq 0$.}

The first requirement above is reminiscent to that of the Morawetz estimates  on the photon sphere which accounts for the high-frequency solutions localized on the trapped null geodesics. On the other hand, the second requirement is a global (low-frequency) condition on all of the event horizon, that is on all the null generators of the event horizon. This shows that \textit{the event horizon on ERN exhibits a global trapping effect. }

Another characteristic feature of the event horizon on ERN is the following \textit{stable higher-order trapping effect}:\textit{ For generic smooth and compactly supported initial data with support away from the event horizon, the following higher-order integral }
\[ \Gamma_k[\psi]=\int_{0}^{\infty}\left(\int_{\Sigma_{\tau}\cap \{r\leq M+\epsilon\}} |\partial^k\psi|^2 \right)\, d\tau\]
\textit{is infinite, for all }$k\geq 2$. For more details see \cite{aag1}.

Bounding the integral in time of the energy flux through $\Sigma_{\tau}$ is further obstructed by the standard photon sphere which is an obstruction present for general black hole spacetimes. We refer to \cite{lecturesMD, aretakis2} for the details. 

\subsubsection{Energy and pointwise boundedness and weak decay}
\label{sec:EnergyAndPointwiseBoundednessAndDecay}

An important aspect of ERN is that the Killing vector field $T=\partial_v$ is globally causal. That implies that the conserved energy $T$-fluxes are non-negative definite. However, since $T$ is null at the horizon, the $T$-flux $\mathcal{E}^{T}[\psi]$ along $\Sigma_{\tau}$ \textit{degenerates} at the horizon. Schematically, we have
\[\mathcal{E}^{T}_{\Sigma_{\tau}}[\psi] \sim  \int_{\Sigma_{\tau}} \left(1-\frac{M}{r}\right)^2\cdot |\partial\psi|^2\, d\mu_{\Sigma_{\tau}}.  \] 
Clearly, we have 
\[ \mathcal{E}^{T}_{\Sigma_{\tau}}[\psi]\leq \mathcal{E}^{T}_{\Sigma_{0}}[\psi]. \]
The above estimate was also used in \cite{dd2012} where various boundedness results where shown for the wave equation on ERN away from the event horizon. One can go beyond such boundedness estimates and derive decay for the $T$-flux (see \cite{aretakis2}):
\begin{equation}
\mathcal{E}^{T}_{\Sigma_{\tau}}[\psi] \leq C\cdot \frac{1}{\tau^2}\cdot E[\psi].
\label{energydecaytintro}
\end{equation}
where $E[\psi]$ is an appropriate weighted higher-order energy norm of the initial data.  Using this type of estimate, it can be shown that $\psi$ satisfies the following \textit{pointwise decay estimate}
\begin{equation}
|\psi|_{\Sigma_{\tau\cap \{r\geq r_0\}}}\leq C_{r_0}\cdot \frac{1}{\tau}\cdot E[\psi]
\label{eq:awayintro}
\end{equation}
away from the event horizon $r\geq r_0>M$.

To obtain non-degenerate control of $\psi$ and its derivatives along the event horizon, we consider the energy flux $\mathcal{E}^N[\psi]$ associated to the timelike vector field $N$ which satisfies the \textit{positivity} property
\[\mathcal{E}^{N}_{\Sigma_{\tau}}[\psi]\sim  \int_{\Sigma_{\tau}}  |\partial\psi|^2\, d\mu_{\Sigma_{\tau}}.  \] 
It turns out that there is a uniform positive constant $C$ such that (see \cite{aretakis1})
\[ \mathcal{E}^{N}_{\Sigma_{\tau}}[\psi]\leq C\cdot \mathcal{E}^{N}_{\Sigma_{0}}[\psi]. \]
On the other hand, no decay estimate was known for $J^N$. Nonetheless, via an interpolation argument, it can be shown that $\psi$ does decay along the event horizon: 
\begin{equation}
|\psi|_{\Sigma_{\tau}\cap \h}\leq C \cdot \frac{1}{\tau^{\frac{3}{5}}}\cdot E[\psi]
\label{eq:onintro}
\end{equation}
The decay estimates \eqref{eq:awayintro}, \eqref{eq:onintro} were the only decay rates  that had been proved rigorously for scalar fields $\psi$ on ERN. In this paper, we derive the sharp rates (upper and lower bounds); in fact we derive the precise late-time asymptotics for $\psi$. See Section \ref{sec:SummaryOfTheMainResults}.

\subsubsection{Energy and pointwise blow-up}
\label{sec:EnergyAndPointwiseBlowUp}

As we shall see, the decay rates in \eqref{eq:awayintro}, \eqref{eq:onintro} are not sharp. However, they do suggest that \textit{the decay rate of $\psi$ along the event horizon is slower than the decay rate of $\psi$ away from the horizon}. This statement, which rigorously follows from the main results of the present paper, is a precursor of the horizon instability of ERN. Recall that with respect to the spherical sections $S_{\tau}$ of $\h$, the spherical means  $-\frac{1}{4\pi }\int_{S_{\tau}}\left(\partial_r\psi+M^{-1}\psi\right)\, d\omega$ are conserved. On the other hand, for generic initial data on $\Sigma_0$ we have $-\frac{1}{4\pi }\int_{S_{0}}\left(\partial_r\psi+M^{-1}\psi\right)\,d\omega=\frac{1}{M^3}H_0[\psi]\neq 0$. Hence, in view of the estimate \eqref{eq:onintro}, we conclude the following 
\vspace{0.1cm}

\textbf{Non-decay:} \textit{generically, the spherical mean of the transversal derivative $-\frac{1}{4\pi M^2}\int_{S_{\tau}}\partial_r\psi$ does not decay along the event horizon of ERN. In fact, }
\[ -\frac{1}{4\pi }\int_{S_{\tau}}\!\partial_r\psi \, d\omega\ \rightarrow \frac{1}{M^3}H_0[\psi], \ \text{ as }\tau\rightarrow \infty. \]
The non-decaying transversal derivative along the event horizon accounts for the different decay rates of $\psi$ on and away from the horizon $\h$. On the other hand, it can be shown that $\partial_r\psi$ decays along the hypersurfaces $\{r=r_0>M\}$ away from the event horizon $\h$. It was observed in \cite{hm2012} that the above non-decay result implies that \textit{the component $\T_{rr}[\psi]$ of the energy-momentum tensor of the scalar field $\psi$ does not decay along} $\h$. In fact, we have 
\[\frac{1}{4\pi }\int_{S_{\tau}}\!\T_{rr}[\psi]\, d\omega\rightarrow \frac{1}{M^6}\left(H_{0}[\psi]\right)^2.\] 
Since, $\T_{rr}[\psi]$ is related to the energy density measured by an observer crossing $\h$, the authors of \cite{hm2012} concluded that the conserved charge $H_0[\psi]$ might be thought of as ``hair'' of the extremal event horizon. It is important to remark that \textit{the results of the present paper yield a new way in potentially measuring this hair from observations along null infinity}. See Section \ref{sec:SummaryOfTheMainResults}.

By acting with $\partial_r$ on the wave equation \eqref{weern}, restricting on the event horizon and using the previous results we conclude the following:

\vspace{0.1cm}

\textbf{Blow-up:} \textit{the spherical mean of higher-order transversal derivatives $-\frac{1}{4\pi}\int_{S_{\tau}}\partial_r^k\psi\, d\omega$ with $k\geq 2$ generically blows up along the event horizon of ERN. In fact, }
\[ \frac{1}{4\pi }\int_{S_{\tau}}\!|\partial_r^k\psi| \, d\omega \geq c_k\cdot  H_0[\psi]\cdot \tau^{k-1}, \ \text{ as }\tau\rightarrow \infty. \]
Furthermore, the following \textit{higher-order energy blow-up} result generically holds (see \cite{aretakis2}):
\[\mathcal{E}^{N}_{\Sigma_{\tau}}[N^k\psi]\rightarrow \infty \]
for all $k\geq 1$ as $\tau\rightarrow \infty$.

We remark that an extension of the above instabilities to linearized electromagnetic and gravitational perturbations of ERN was presented in \cite{hm2012} and \cite{sela2}. Nonlinear extensions have been presented in \cite{aretakis2013, harvey2013, bizon-extremal-nonlinear, areangel6, harveyeffective}. For higher-dimensional extensions we refer to \cite{murata2012}. For a more detailed discussion of works in the physics literature, see the next section.

\subsection{Physics literature on the dynamics of extremal Reissner--Nordstr\"{o}m }
\label{sec:PreviousWorksOnLateTimeAsymptoticsOnERN}

In this section we present results in the physics literature which concern the late-time asymptotics for ERN.

\subsubsection{The Blaksley--Burko asymptotic analysis}
\label{sec:TheBlaksleyBurkoAsymptoticAnalysis}

The first work on asymptotics of scalar fields on ERN goes back to 1972 when Bi{\v{c}}{\'a}k suggested in \cite{Bicak1972} that scalar fields $\psi_{\ell}$ on ERN with non-vanishing Newman--Penrose constant and with angular frequency $\ell$ decay with the rate $\frac{1}{t^{\ell+2}}$. However, this result was shown to be false in 2007 when Blaksley and Burko \cite{Burko2007} performed a more accurate heuristic and numerical analysis. Their work considered the following two types of initial data (see Section \ref{sec:Introduction} for the relevant definition):
\begin{itemize}
	\item Type I: horizon-penetrating and null-infinity-extending,
	\item Type II: Supported away from the horizon and compactly supported.
\end{itemize}
Define $\mu\in \{0,1\}$ such that $\mu=0$ for data of Type I and $\mu =1$ for data of type II. The authors argued that the \textit{sharp} decay rates for the scalar field are the following:
\begin{itemize}
	\item Away $\h$ and $\I$:  $\ |\psi_{\ell}|_{r=r_0>M} \ \ \text{decays like} \ \  \frac{1}{\tau^{2\ell+2+\mu}}$,
		\item On $\h$:   $\ |\psi_{\ell}|_{\h}   \ \ \text{decays like} \ \   \frac{1}{\tau^{\ell+1+\mu}}$,
		\item On $\I$: $\ |r\psi_{\ell}|_{\I}  \ \ \text{decays like} \ \   \frac{1}{\tau^{\ell+1+\mu}}$.  
\end{itemize}
Reference \cite{Burko2007} did not obtain the precise late-time asymptotics in the above two cases. 
Moreover, \cite{Burko2007} did not study other types of initial data, and in particular, did not study horizon-penetrating compactly supported initial data.

\subsubsection{The Lucietti--Murata--Reall--Tanahashi asymptotic analysis}
\label{sec:TheWorkOfReallEtAl}

The asymptotic analysis of Lucietti--Murata--Reall--Tanahashi \cite{hm2012} was the first work to numerically investigate the precise late-time asymptotics for scalar fields on ERN. The present paper is highly motivated by \cite{hm2012}. 

A major result of the numerical analysis of \cite{hm2012} is the following precise late-time asymptotic behavior of scalar fields with compactly supported initial data
\begin{equation}
M\cdot\psi|_{\h} \sim  2H_0[\psi]\cdot \frac{1}{\tau}+4MH_0[\psi]\cdot \frac{\log \tau}{\tau^2},  \ \text{ as } \tau\rightarrow \infty.
\label{harveyhorizon}
\end{equation}
Furthermore, the authors suggested, using a near-horizon calculation, that the following precise late-time asymptotic behavior off the horizon along $r=r_0>M$ holds:
\begin{equation}
\psi|_{\{r=r_0\}} \sim  \frac{4M}{r_0-M}H_0[\psi]\cdot\frac{1}{\tau^2},  \ \text{ as } \tau\rightarrow \infty.
\label{harveyaway}
\end{equation}
Moreover, the authors, extrapolating from numerical simulations for the $\ell=1,2$ angular frequencies, suggested the following sharp rate off the horizon along $r=r_0>M$ \:
\begin{equation}
|\psi_{\ell}|_{\{r=r_0\}}  \ \ \text{decays like} \ \   \frac{1}{\tau^{2\ell+2}}.
\label{harveyell}
\end{equation}
On the other hand, the numerics of \cite{hm2012} suggested the following asymptotic expression  in the case of data with $H_0[\psi]=0$
\begin{equation}
\psi|_{\h}\sim \frac{C_0}{\tau^2},  \ \text{ as } \tau\rightarrow \infty.
\label{harvey0hor}
\end{equation}
However, the following points were not addressed in \cite{hm2012}:
\begin{itemize}
	\item The constant $ C_0$ in \eqref{harvey0hor} was not explicitly computed in terms of the initial data. 
	\item The precise asymptotic estimate \eqref{harveyhorizon} was only obtained for compactly supported data, and not, in particular, for data with non-vanishing Newman--Penrose constant. 
	\item The asymptotics of the radiation field $r\psi_{\I}$ along the null infinity $\I$ were not investigated (in the $H_0[\psi]\neq 0$ case).
\end{itemize}
In the present paper, we address all the above issues (see Section \ref{sec:SummaryOfTheMainResults}). 

Another important question that was first raised and investigated in \cite{hm2012} is whether one can trigger the horizon instability using ingoing radiation; that is, using perturbations which are initially supported away from the event horizon and hence necessarily satisfy $H_0[\psi]=0$. The authors found the following stability results
\[|\psi|_{\h}\rightarrow 0,\ \ \ |\partial_{r}\psi|_{\h}\rightarrow 0: \ \text{ along }\h, \]
and uncovered the following (generic) instability behavior
\[ |\partial_{r}^2\psi|_{\h}\nrightarrow 0 \ \ \  |\partial_{r}^3\psi|_{\h}\rightarrow \infty: \ \text{ along }\h. \]
 This instability behavior, which has also been discussed in \cite{bizon2012}, was subsequently rigorously proved in \cite{aretakis2012}. 

Reference \cite{hm2012} also investigated the late-time behavior of massive scalar fields which solve $\Box_g \psi= m^2\psi$. For such massive fields it is widely believed that the late-time behavior is dominated by the $\omega=\pm m$ frequencies (instead of the $\omega=0$ frequency for massless fields on sub-extremal black holes) which results in a damped oscillatory late-time behavior. In particular, massive fields and all their derivatives are expected to decay like ${\tau^{-\frac{5}{6}}}$ in the domain of outer communications (up to and including the event horizon) of a sub-extremal black hole. The results of \cite{hm2012} suggest that this remains true on ERN backgrounds off the horizon (a result that had also been seen in \cite{koyama2001}). On the other hand, \cite{hm2012} found that the horizon instability persists for a \textit{discrete} set of masses $m^2$. Specifically, if  $(mM)^2=n(n+1)$ then the authors argued that 
\[ |\partial_{r}^{n+1}\psi|_{\h}\nrightarrow 0 \ \ \  |\partial_{r}^{n+2}\psi|_{\h}\rightarrow \infty: \ \text{ along }\h. \]
More generally, the numerical analysis of \cite{harvey2013} suggests  the following asymptotic behavior for \textit{general} masses $m^2$:
\[\partial_{r}^{k}\psi\ \ \text{ behaves like } \tau^{k-\frac{1}{2}-\sqrt{(mM)^2+\frac{1}{4}}},\] for all $k\geq 0$. 
A rigorous proof of the above statements for massive fields remains open.

\subsubsection{The Ori--Sela asymptotic analysis}
\label{sec:TheOriSelaAsymptoticAnalysis}

Ori \cite{ori2013} and Sela \cite{sela} used the conservation laws that hold for each fixed angular frequency $\ell$ (see Section \ref{sec:ConservationLawsAlongTheEventHorizon}) to heuristically obtain the precise late-time asymptotics of $\psi_{\ell}$ for horizon-penetrating compactly supported initial data. 
Specifically, they found that along $r=r_0>M$ away from the horizon the following holds:
\[ \psi_{\ell}|_{\{r=r_0\}}\sim (-4)^{\ell+1}eM^{3\ell+2}\frac{r}{(r-M)^{\ell+1}}\cdot \frac{1}{\tau^{2\ell+2}}, \ \ \text{ as }\tau\rightarrow \infty, \]
where $e$ is an explicit expression of the conserved charge $H_{\ell}[\psi_{\ell}]$ for $\psi_{\ell}$. Hence, the above result improves the statement \eqref{harveyell} of \cite{hm2012}. 

Furthermore, Ori and Sela derived  the precise late-time asymptotics of $\psi_{\ell}$ along the horizon
\[\psi_{\ell}|_{\h}\sim e (-M)^{\ell+1}\cdot \frac{1}{\tau^{\ell+1}},\ \ \text{ as }\tau\rightarrow \infty, \]
where $e$ is as above. The recent Fourier based work of Bhattacharjee  et al \cite{ind2018} supported the validity of the above  asymptotics.

On the other hand, no asymptotic estimate was derived for the radiation field $r\psi_{\ell}|_{\I}$ along null infinity. Furthermore, the authors did not obtain precise late-time asymptotics in the case where the initial data are supported away from the event horizon and did not provide an explicit expression for the constant $C_0$ that appears in the asymptotic statement \eqref{harvey0hor}. 

Sela \cite{sela2} subsequently used the decay rates obtained in \cite{ori2013,sela} in order to obtain decay rates for the coupled electromagnetic and gravitational system for ERN.

\subsubsection{The Murata--Reall--Tanahashi spacetimes}
\label{sec:TheReallSpacetimes}

In a very beautiful work \cite{harvey2013}, \textit{Murata, Reall and Tanahashi studied numerically the fully non-linear evolution of the horizon instability of ERN}. Specifically, the authors of \cite{harvey2013} investigated perturbations of ERN in the context of the Cauchy problem for the \textit{spherically symmetric Einstein--Maxwell-(massless) scalar field system}. The authors studied various types of perturbations and obtained a great number of results, all of which are consistent with the linear theory described in the previous sections. Specifically, the authors numerically showed that the maximum size of higher-order derivatives of the first and second order mass perturbations does not go to zero as we let the size of the initial perturbation  go to zero.  They also numerically showed that the situation gets more dramatic when we consider ``smaller'' perturbations of ERN which evolve to what  the authors called   \textit{dynamically extremal} spacetimes. Such spacetimes do not contain trapped surfaces. In this case, the perturbations do grow unboundedly along the event horizon. This may be interpreted as a remnant of the horizon instability in the non-linear theory.

\subsubsection{Addendum: The Casals--Gralla--Zimmerman work on extremal Kerr}
\label{sec:RecentBreakthroughOfCasalsGrallaAndZimmerman}

Casals, Gralla and Zimmerman \cite{zimmerman1} were the first to derive the late-time asymptotics along the event horizon for $\psi_m$ on extremal Kerr, where $\psi_m$ denotes that projection on the $m$'th  the azimuthal frequency. Their semi-analytic work, which is based on the  mode decomposition method of Leaver \cite{leaver}, yielded that $
|\psi_{m}|_{\h}$  {decays like}  $  \frac{1}{\sqrt{\tau}}$ and that  $\partial_r\psi_{m}|_{\h}$  {behaves like}    $\sqrt{\tau}$.  Further extensions have been provided in 
\cite{harveyeffective, zimmerman3, zimmerman2,  zimmerman4, zimmerman5, khanna17}.

\subsection{Outline of the paper}
\label{sec:Outline}

The geometry of ERN along is presented in Section \ref{geometrysection}. The types of initial data for the wave equation that we will consider are introduced in Section \ref{sec:TheTypesOfInitialDataABCD}. A non-technical summary of our results and various applications are presented in Sections \ref{sec:SummaryOfTheMainResults}  and \ref{sec:Remarks}, respectively. The main theorems and an overview of the ideas of the proofs can be found in Section \ref{subsec:TheMainTheorems}. The weighted hierarchies are derived in Sections \ref{sec:rweightest} and \ref{sec:extendhier} and pointwise and energy decay results are obtained in Section \ref{sec:decayest}. The precise late-time asymptotics are derived in Sections \ref{sec:asympnonzeroconst}--\ref{sec:hoasymp}.

\subsection{Acknowledgements}
\label{sec:Acknowledgements}

We would like to thank our mentor Mihalis Dafermos for several insightful discussions. We would also like to thank Harvey Reall, Peter Zimmerman and Samuel Gralla for elucidative conversations.  The second author (S.A.) acknowledges support through NSF grant DMS-1265538, NSERC grant 502581, an Alfred P. Sloan Fellowship in Mathematics and the Connaught Fellowship 503071.

\section{The geometry of ERN}
\label{geometrysection}

\subsection{The ERN metric}
\label{sec:TheERNManifoldFoliationsAndVectorFields}

 The \emph{extremal Reissner--Nordstr\"om} spacetimes  $(\mathcal{M}_M,g_M)$, $M>0$, are given by the following manifold-with-boundary
\begin{equation*}
\mathcal{M}_{M}=\R\times [M ,\infty) \times \s^2,
\end{equation*}
equipped with the coordinate chart $(v,r,\theta,\varphi)$, where $v\in \R$, $r\in [M,\infty)$ and $(\theta,\varphi)$ is the standard spherical coordinate chart on the round 2-sphere $\s^2$.
and the following Lorentzian metric
\begin{equation*}
g_{M}=-D(r)dv^2+2dvdr+r^2(d\theta^2+\sin^2\theta d\varphi^2),
\end{equation*}
where
\begin{equation*}
D(r)=(1-Mr^{-1})^2.
\end{equation*}
We denote the \emph{future event horizon} as the boundary $\mathcal{H}^+=\partial \mathcal{M}_{M}=\{r=M\}$.   We will also denote 
\[T:=\partial_v,\ \ \ Y:=\partial_r.\]
Note that $Y$ is \textit{transversal} to the event horizon. For ERN we have
\[\nabla_{T}T=0\]
on the event horizon. This means that $\h$ has vanishing surface gravity in ERN. 

 We next introduce the \textit{tortoise} coordinate $r^{*}$ by
\[r_*(r)=r-M-M^2(r-M)^{-1}+2M\log\left(\frac{r-M}{M}\right).
\]
The \textit{double null coordinate chart} $(u,v,\theta,\varphi)$ in the manifold $\mathring{\mathcal{M}}:=\mathcal{M}\setminus \partial M$, is given by 
\[u=v-2r_*(r).\]
with $u,v\in \R$. In double null coordinates, the extremal Reissner--Nordstr\"om metric can be expressed as
\begin{equation*}
g_{M}=-D(r)dudv+r^2(d\theta^2+\sin^2\theta d\varphi^2).
\end{equation*}
If we consider the vector fields 
\[L:=\partial_v \ \  \text{ and }\ \ \underline{L}=\partial_u,\]
with respect to the double null coordinates $(u,v,\theta,\varphi)$, then we have the relations
\begin{equation*}
L=\partial_v+\frac{1}{2}D\partial_r, \ \ \underline{L}=-\frac{1}{2}D\partial_r.
\end{equation*}
Finally, we define $t=(u+v)/2$ and introduce the coordinate system $(t,r^{*},\theta,\varphi)$ with respect to which the metric takes the form
\[ g_{M}=-D(r)dt^2+D(r)d(r^{*})^2+r^2(d\theta^2+\sin^2\theta d\varphi^2).
\]
Note that the vector field $T:=\partial_t$ is Killing and timelike everywhere away from the event horizon. 

The null hypersurfaces  $C_{\tau} = \{u = \tau \}$ terminate in the future (as $r,v \rightarrow \infty$) at \textit{future null infinity} $\mathcal{I}^{+}$. We will  occasionally use the notation $v_{r_0}(u')$, with $r_0>M$, to indicate the value of the $v$ coordinate along the hypersurface $\{r=r_0\}$ at $u=u'$, and similarly, $u_{r_0}(v')$ to indicate the value of the $u$ coordinate along the hypersurface $\{r=r_0\}$ at $v=v'$. 

We use the notation $\slashed{\nabla}_{\s^2}$ for the covariant derivative with respect to the metric of the unit round 2-sphere and $\slashed{\Delta}_{\s^2}$ for the corresponding Laplacian.

We will also use the following ``big O notation'' with respect to $u$, $v$ and $r$. We use the notation $O_k(r^{-l})$ to indicate functions $f$ on (a subset of) $\mathcal{M}_M$ that satisfy the behavior $|Y^kf|\leq C r^{-\ell-k}$, where $C>0$ is a constant that is independent of $f$ and $k\in \N_0$, $l\in \Z$. Similarly, we use the notation  $O_k(u^{-l})$ and $O_k(v^{-l})$ when $|\underline{L}^kf|\leq C u^{-\ell-k}$ and  $|L^kf|\leq C v^{-\ell-k}$, respectively. Finally, we will also employ the notations $O_k((v-u)^{-l})$ and $O_k((u-v)^{-l})$ to group functions $f$ that satisfy, for $k_1+k_2=k$, $k_1,k_2\in \N_0 $, $|\underline{L}^{k_1}L^{k_2}f|\leq C |v-u|^{-\ell-k}$ and $|\underline{L}^{k_1}L^{k_2}f|\leq C |u-v|^{-\ell-k}$, respectively. When $k=0$ in the above notation, we will omit the subscript in $O_k$.

\subsection{The spacelike-null foliation}
\label{sec:TheHyperboloidalFoliation}

Let $\Sigma_0$ be a spherically symmetric hypersurface which crosses the event horizon and terminates at null infinity:
\begin{equation*}
\Sigma_0:=\{v=v_{\Sigma_0}(r)\},
\end{equation*}
where $v_{\Sigma_0}: [M,\infty)\to \R$ is a function defined as follows
\begin{equation}
v_{\Sigma_0}(r)=v_{\rm \min}+\int_M^rh(r')\,dr',
\label{definitionh}
\end{equation}
where we take $v_{\rm \min}\in \R_+$ to be a constant and $h: [M,\infty)\to \R_{\geq 0}$ is a non-negative function satisfying 
\begin{align*}
0\leq 2D^{-1}(r)-h(r)&\:=O(r^{-1-\eta}),
\end{align*}
for some constant $\eta>0$. We will take $v_{\Sigma_0}(r)$ to be monotonically increasing function. Moreover, $u_{\Sigma_0}(r):=u|_{\Sigma_0}(r)=v_{\Sigma_0}(r)-2r_*(r)$ satisfies $\frac{du_{\Sigma_0}}{dr}=h(r)-2D^{-1}(r)\leq 0$, so $u_{\Sigma_0}(r)$ is a  monotonically decreasing function.  For convenience, we will assume that $\Sigma_0$ satisfies the following symmetry condition:
\[(t,r^{*})\in \Sigma_0 \Longrightarrow (t,-r^{*}) \in \Sigma_0. \]
This condition is here imposed only for convenience because it simplifies the expressions of various new quantities that we introduce in this paper; our results apply for general initial hypersurfaces $\Sigma_0$ as well. An important example of such a hypersurface is defined as follows: Let $M<r_{\mathcal{H}}<2M$ and $r_{\mathcal{I}}>2M$ such that $r^*(r_{\mathcal{H}})=-r^*(r_{\mathcal{I}})$. Then we may further assume that 
\begin{equation*}
\begin{split}
\Sigma_0\cap \{r\leq r_{\mathcal{H}}\}=&{N}_0^{\mathcal{H}}: =\{v=v_0\}\cap \{r\leq r_{\mathcal{H}}\} ,\\
\Sigma_0\cap \{r\geq r_{\mathcal{I}}\}=&{N}_0^{\mathcal{I}}: =\{u=u_0\}\cap \{r\geq r_{\mathcal{I}}\},
\end{split}
\end{equation*}
with $u_0,v_0>0$.

Let $F_{\tau}$ denote the flow of the stationary vector field $T$ where the \emph{time function} $\tau:J^+({\Sigma_0})\to \R_{\geq 0}$ is defined as follows
\begin{align*}
\tau|_{\Sigma_0}=&\:0,\\
T(\tau)=&\:1.
\end{align*} 
Note that for all $\tau \geq 1$ we have
\[\tau\sim v \text{ for } r\leq r_{\mathcal{H}},\ \  \tau\sim v \sim u \text{ for } r_{\mathcal{H}}\leq r \leq r_{\mathcal{I}}, \ \  \tau \sim u \text{ for } r\geq r_{\mathcal{I}}. \ \]
We  define the following foliation of the future $\mathcal{R}=J^+({\Sigma_0})$ of $\Sigma_0$:
\[\mathcal{R}=\cup_{\tau\geq 0}\Sigma_{\tau}=F_{\tau}(\Sigma_0);  \]
see Figure \ref{fig:sigma0}.

We use the notations $d\mu_{\Sigma_{\tau}}$ and $d\mu_{\tau}$ to indicate the natural volume form on $\Sigma_{\tau}$ with respect to the induced metric, where on the null parts $\mathcal{N}^{\mathcal{H}}_{\tau}$ and $\mathcal{N}^{\mathcal{I}}_{\tau}$ we take this volume form to be $r^2d\omega du$ and $r^2d\omega dv$, respectively, where $d\omega=\sin\theta d\theta d\varphi$. Similarly, we denote the normal vector field to $\Sigma_{\tau}$ with $\mathbf{n}_{\Sigma_{\tau}}$ and $\mathbf{n}_{\tau}$, where we take the normal to $\mathcal{N}^{\mathcal{H}}_{\tau}$ and $\mathcal{N}^{\mathcal{I}}_{\tau}$ to be $\underline{L}$ and $L$, respectively.
\begin{figure}[H]
\begin{center}\vspace{-0.25cm} 
\includegraphics[scale=0.2]{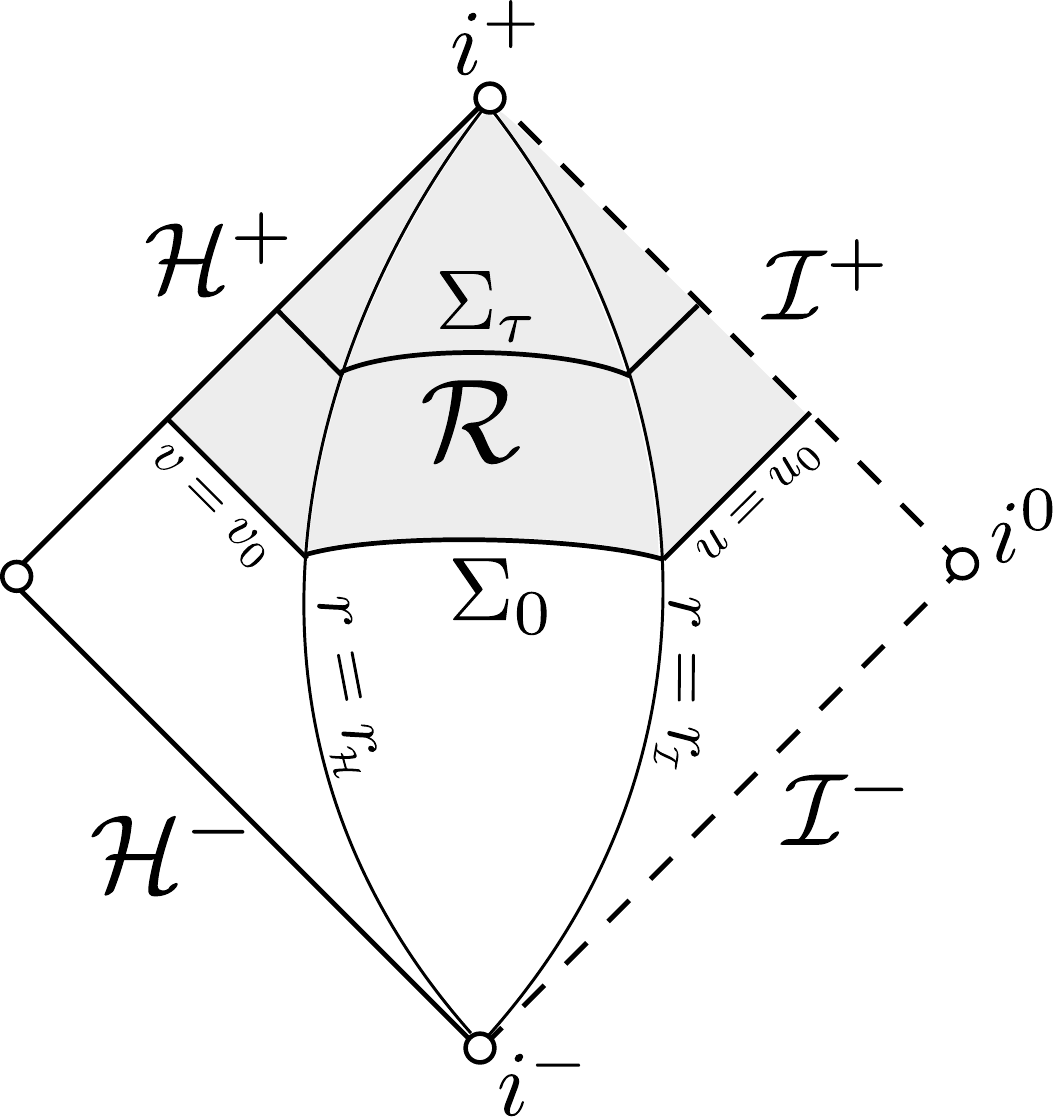}
\caption{The spacelike-null foliation $\Sigma_{\tau}$.}
\label{fig:sigma0}
\end{center}
\end{figure}
\vspace{-0.2cm} 
It will be useful to moreover introduce the following corresponding partition of the spacetime region  $\mathcal{R}$:
\begin{equation*}
\mathcal{R}=\mathcal{A}^{\mathcal{H}}\cup\mathcal{B}\cup \mathcal{A}^{\mathcal{I}},
\end{equation*}
where
\begin{align*}
\mathcal{A}^{\mathcal{H}}:=\mathcal{R}\cap\{r\geq r_{\mathcal{H}}\},  \ \ \mathcal{B}:=\mathcal{R}\cap\{r_{\mathcal{H}}<r<r_{\mathcal{I}}\},\ \ 
\mathcal{A}^{\mathcal{I}}:=\mathcal{R}\cap\{r\leq r_{\mathcal{I}}\};
\end{align*}
see Figure \ref{fig:sigmaabc0}.
\begin{figure}[H]
\begin{center}\vspace{-0.45cm} 
\includegraphics[scale=0.2]{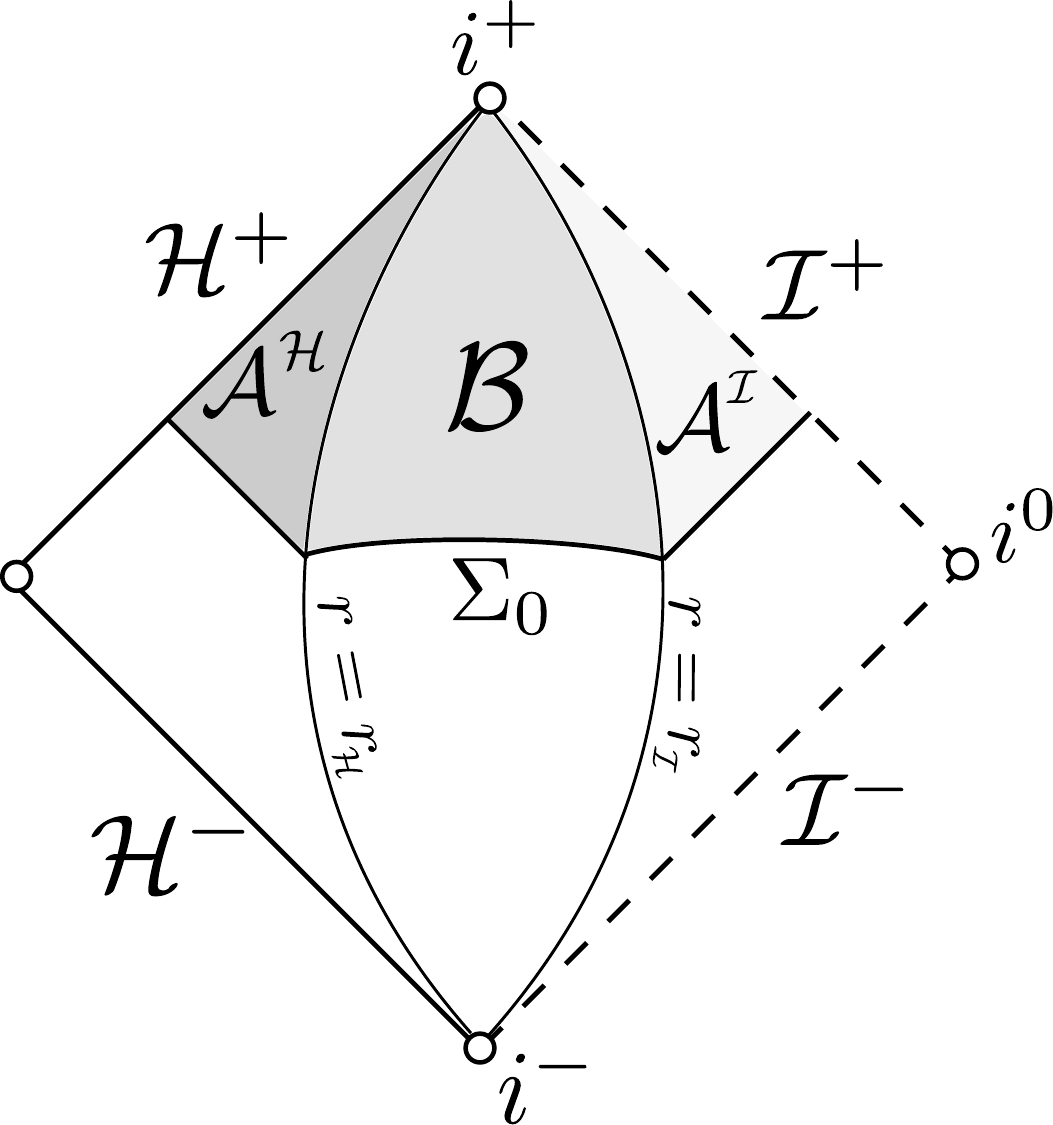}
\caption{The regions $\mathcal{A}^{\mathcal{H}}, \mathcal{B}$ and $\mathcal{A}^{\mathcal{I}}$.}
\label{fig:sigmaabc0}
\end{center}
\end{figure}

\vspace{-0.7cm}

\subsection{Cauchy data of Type \textbf{A}, \textbf{B}, \textbf{C} and  \textbf{D}.}
\label{sec:TheTypesOfInitialDataABCD}

Recall that ERN admits two independent conserved charges: 1) the horizon charge\footnote{The importance of the horizon charge $H_0[\psi]$ for the dynamics of ERN has been discussed in Section \ref{sec:TheHorizonInstabilityOfExtremalBlackHoles}.} $H_0[\psi]$ given by \eqref{introhorizonH}, and 2) the Newman--Penrose constant $I_0[\psi]$ at null infinity given by \eqref{np}. It is important to emphasize that the values of $H_0[\psi]$ and $I_0[\psi]$ depend \textbf{only} on the initial data of $\psi$ at the event horizon $\h$ and at null infinity $\I$, respectively.  Hence, compactly supported initial data necessarily satisfy $I_0[\psi]=0$ whereas data for which $I_0[\psi]\neq 0$ are necessarily not compactly supported. Similarly, data supported away from the horizon necessarily satisfy $H_0[\psi]=0$. Recall Definition \ref{def1intro} according to which data which satisfy $H_0[\psi]\neq 0$ are called horizon-penetrating. We also introduce the following:
\begin{definition}
Initial data on a Cauchy hypersurface $\Sigma_0$ are called \textbf{null-infinity-extending} if the Newman--Penrose constant $I_0[\psi]\neq 0$. 
\label{def1intro2}
\end{definition}

We distinguish the following four types of initial data:

\vspace{0.15cm}

 \textbf{Type A:} \textit{Compactly supported data but horizon-penetrating}.

\vspace{0.15cm}

These data should be thought of as local data in the sense that they reflect perturbations in a neighborhood of the event horizon. 

\vspace{0.15cm}

	\textbf{Type B:} \textit{Compactly supported data that is supported away from the event horizon.}
	
	\vspace{0.15cm}
	
	These data correspond to compact perturbations from afar, that is away from the event horizon. 

\vspace{0.15cm}

\textbf{Type C:} \textit{Null-infinity-extending and horizon-penetrating data.}

\vspace{0.15cm}

These data correspond to global perturbations with non-trivial support across the whole initial hypersurface $\Sigma_0$. In the physics literature, such data are said to have an ``\textit{initial static moment}''.

\vspace{0.15cm}

\textbf{Type D:} \textit{Null-infinity-extending but supported away from the horizon data.} 

\vspace{0.15cm}

These data correspond to perturbations from afar extending all the way to null infinity.

In summary we have the following table:

\begin{table}[H]{\footnotesize
\begin{center}
  \begin{tabular}{ cV{4}c | c }                                \hline
   Data  & $H_0$           & $I_0$                           \\ 
	\Xcline{1-3}{0.05cm} 
  Type \textbf{A} & $\neq 0$        & $=0$      \\ \hline
  Type \textbf{B} &  $=0$           & $=0$             \\ \hline
  Type \textbf{C} &  $\neq 0$       & $\neq 0$             \\ \hline
  Type \textbf{D} &  $=0$           & $\neq 0$             \\ \hline
  \end{tabular}
  \end{center}
 \caption{Types of initial data and the associated conserved charges $H_0, I_0$.}}
\end{table}

\vspace{-0.3cm}
 As we shall see, each of these types requires a separate treatment and exhibits different asymptotic behavior.

\section{A first version of the main results}
\label{sec:SummaryOfTheMainResults}

In this section we will present the main theorems of the present paper. We will first present a non-technical version of the results and then in Section \ref{sec:Remarks} various applications of our results. Finally we will present the rigorous statements of the main theorems in Section \ref{subsec:TheMainTheorems}.  

We first introduce the notion of a new horizon charge which plays a fundamental role in our study of the dynamics of ERN.

\subsection{The new horizon charge $H_{0}^{(1)}[\psi]$ }
\label{sec:TheNewHorizonHairH01Psi}

We introduce the \textit{dual} scalar field $\widetilde{\psi}$ of $\psi$ as follows
\begin{equation}
\widetilde{\psi}(t,r^*,\theta,\phi) = \frac{M}{r-M}\psi(t,-r^*,\theta,\phi).
\label{dual}
\end{equation}
First observe that duality is self-inverse: $\widetilde{\widetilde{\psi}}=\psi$. Furthermore, $\psi$ satisfies the wave equation \eqref{eq:waveequation} if and only if its dual $\widetilde{\psi}$ satisfies \eqref{eq:waveequation}. This duality is motivated by the Couch--Torrence conformal symmetry \cite{couch} of ERN. References \cite{bizon2012,hj2012} showed that this duality can be used to relate the horizon charge with the Newman--Penrose constant as follows:
\[ H_0[\psi] =I_0[\widetilde{\psi}]. \] 
If the Newman--Penrose constant vanishes $I_0[\psi]=0$ then the following expression 
\begin{equation}
I_{0}^{(1)}[\psi]=\frac{M}{4\pi} \int_{\Sigma_0\cap\mathcal{H}^{+}}\!\!\psi r^2d\omega+\lim_{r_0\rightarrow \infty}\left(\frac{M}{4\pi}\int_{\Sigma_0\cap\{r\leq r_0\}} n_{\Sigma_0}(\psi)d\mu_{\Sigma_0}+\frac{M}{4\pi}\int_{\Sigma_0\cap\{r=r_0\}}\Big(\psi-\frac{2}{M}r\partial_v(r\psi)\Big)r^2d\omega\right),
\label{timeinvertednp}
\end{equation}
is finite and conserved\footnote{that is, it is independent of the choice of the hypersurface $\Sigma_0$.}. See the discussion in Section \ref{sec:TheWaveEquationOnBlackHolesBackgrounds} and for more details in \cite{paper-bifurcate}. We refer to  $I_{0}^{(1)}[\psi]$ as the time-inverted Newman--Penrose constant. Note that $I_{0}^{(1)}[\psi]$ is only defined for initial data of type \textbf{A} and \textbf{B}.

In the case where $H_0[\psi]=0$ we introduce the following quantity
\begin{equation}
H_{0}^{(1)}[\psi]= I_{0}^{(1)}[\widetilde{\psi}]. 
\label{h01}
\end{equation} 
 We will refer to $H_{0}^{(1)}[\psi]$ as the \textit{time-inverted horizon charge}. 
Clearly, $H_{0}^{(1)}[\psi]$ is only defined for initial data of Type \textbf{B} and \textbf{D}.

For a discussion on the geometric importance of the constants $H_{0}^{(1)}[\psi]$ and $I_{0}^{(1)}[\psi]$ and their role in the analysis of the present paper see Section \ref{sec:GeometricOriginOfTheNewHair}.  

\subsection{The late-time asymptotics}
\label{sec:TheMainTheoremsintrosummary}

We rigorously derive the late-time asymptotics solutions to the wave equation \eqref{eq:waveequation} on ERN. We next summarize our results. 

\subsubsection{Asymptotics for Type \textbf{C} perturbations}
\label{sec:GlobalPerturbationTypeC}

We first consider {global} perturbations of Type \textbf{C}. These perturbations satisfy $H_0\neq 0$ and $I_0\neq 0$. 
Recall from Section \ref{sec:PreviousWorksOnLateTimeAsymptoticsOnERN} that the heuristic and numerical work \cite{Burko2007}  argued that the decay \textit{rate} of $r\psi$ is $\tau^{-1}$, $\tau^{-2}$ or $\tau^{-1}$ along $\h$, $\{r=r_0\}$ and $\I$, respectively. However, precise late-time asymptotics for this type of perturbations were not known. The non-vanishing of the conserved constants $H_0$ and $I_0$ might seem to suggest that they appear in a potentially complicated way in the asymptotics for $\psi$.  In fact, \cite{hm2012} conjectured that both $H_0$ and $I_0$ appear in the asymptotics of $\psi$ along the event horizon $\h$. In this paper, we derive and rigorously prove the precise late-time asymptotics for scalar perturbations of Type \textbf{C}. We falsify the above conjecture by showing that \textit{the asymptotics along the event horizon are independent of the Newman--Penrose constant $I_0$:}
\begin{equation}
 r\psi|_{\h}\sim 2H_0[\psi]\cdot\frac{1}{\tau}+4MH_0[\psi]\cdot\frac{\log\tau}{\tau^2} \ \ \text{ as } \ \ \tau\rightarrow \infty. 
\label{horasympt}
\end{equation}
On the other hand, we show that \textit{\underline{both} constants $H_0$ and $I_0$ appear in the leading-order terms for the late-time asymptotics of $\psi|_{\{r=r_0\}}$ along $r=r_0$ hypersurfaces away from the event horizon ($r_0>M$)}:
\begin{equation}
\psi|_{\{r=r_0\}} \sim \left(4I_0[\psi]+\frac{4M}{r-M}H_0[\psi]\right)\cdot \frac{1}{\tau^2} \ \ \text{ as } \ \ \tau\rightarrow \infty. 
\label{awayasympt}
\end{equation} 
The proof of \eqref{awayasympt} is particularly subtle  since both the horizon region and the null infinity region  contribute to the asympotics of $\psi|_{\{r=r_0\}}$ via the constants $H_0$ and $I_0$, respectively. This is in stark contrast with the sub-extremal case (see Section \ref{sec:TheWaveEquationOnBlackHolesBackgrounds}) where the dominant terms originate only from the null infinity region. Note  that \textit{the term $\frac{4M}{r-M}$ in front of $H_0$ is itself a static solution of \eqref{eq:waveequation} on ERN}.  We remark that in order to show the asymptotics \eqref{awayasympt}, we need to derive first the asymptotics for the radial  derivative\footnote{with respect to the coordinate system $(\rho=r,\theta,\varphi)$ on $\Sigma_0$} $\partial_{\rho}\psi$ of $\psi$ along $\Sigma_{\tau}$:
\begin{equation}
\partial_{\rho}\psi|_{\{r=r_0\}}\sim -\frac{4M}{(r-M)^2}H_0[\psi]\cdot \frac{1}{\tau^2} \ \ \text{ as } \ \ \tau\rightarrow \infty. 
\label{derivaawayasympt}
\end{equation}
The crucial insight of \eqref{derivaawayasympt} is that\textit{ the leading-order asymptotics of $\partial_{\rho}\psi|_{\{r=r_0\}}$ are independent of $I_0$ for all values of $r_0>M$!} This is somewhat surprising; it shows that, from the point of view of the derivative $\partial_{\rho}\psi$, the event horizon is, in a sense, more relevant than null infinity. Furthermore, note that the decay rate of $\partial_{\rho}\psi|_{\{r=r_0\}}$ is only $\tau^{-2}$ which is equal to the decay rate of $\psi|_{\{r=r_0\}}$. This is again in stark contrast with the sub-extremal case where $\partial_{\rho}\psi|_{\{r=r_0\}}$ decays like $\tau^{-3}$. 

We obtain the following asymptotics along null infinity $\I$:
\begin{equation}
r\psi|_{\I}\sim 2I_0[\psi]\cdot\frac{1}{\tau}+4MI_0[\psi]\cdot\frac{\log\tau}{\tau^2}  \ \ \text{ as } \ \ \tau\rightarrow \infty. 
\label{nullinfasympt}
\end{equation}
Note that these asymptotics are independent of the horizon charge $H_0$.  

\subsubsection{Asymptotics for Type \textbf{A}  perturbations}
\label{sec:AsymptoticsForTypeAPerturbations}

We next consider \textbf{local} horizon-penetrating perturbations of Type \textbf{A}. These perturbations, which satisfy $H_0\neq 0$ and $I_0=0$, are the most physically relevant since they represent local perturbations of ERN. In the physics literature, they are said to describe \textit{outgoing radiation}. 

The asymptotics \eqref{horasympt} along $\h$, and \eqref{awayasympt} and \eqref{derivaawayasympt} along $\{r=r_0\}$ hold in this case as well, where in \eqref{awayasympt} we have to use that $I_0=0$. On the other hand, the asymptotics along null infinity for the radiation field $r\psi_{\I}$ cannot be read off from \eqref{nullinfasympt}. In this case, we derive the following asymptotics
\begin{equation}
r\psi|_{\mathcal{I}^+}\sim \left( 4MH_0[\psi]-2I_{0}^{(1)}[\psi] \right)\cdot \frac{1}{\tau^2} \ \ \text{ as } \ \ \tau\rightarrow \infty. 
\label{typeanullinf}
\end{equation}
Here $I_{0}^{(1)}$ is the time-inverted Newman--Penrose constant given by \eqref{timeinvertednp}. We observe that \textit{for Type \textbf{A} perturbations the dominant term in the asymptotics of the radiation field $r\psi|_{\I}$ contains the horizon charge $H_{0}$}. Such perturbations exhibit the \textit{full} horizon instability
\begin{equation}
\partial_{r}\psi|_{\h}\sim -\frac{1}{M^3}H_{0}[\psi], \ \ \ \partial_{r}^2\psi|_{\h}\sim \frac{1}{M^5} H_{0}[\psi]\cdot\tau \  : \ \text{ along }\h,
\label{typeainsta}
\end{equation}
with respect to $(v,r)$ coordinates, the origin of which is the charge $H_0$ (see Section \ref{sec:TheHorizonInstabilityOfExtremalBlackHoles} for a review). Therefore, the precise asymptotics \eqref{typeanullinf} yield a way to potentially measure the horizon charge $H_0$ and hence detect the horizon instability of extremal black holes from observations in the far away radiation region.

\subsubsection{Asymptotics for Type \textbf{B} perturbations}
\label{sec:AsymptoticsForTypeBPerturbations}

Perturbations of Type \textbf{B} constitute another very important class of physically relevant perturbations. Such perturbations are initially compactly supported and supported away from the horizon and hence satisfy $H_0=0$ and $I_0=0$. They represent \textbf{local} perturbations \textbf{from afar}. In the physics literature, such perturbations are said to describe \textit{ingoing radiation}. 

Recall from Section \ref{sec:TheWorkOfReallEtAl} that Lucietti--Murata--Reall--Tanahashi \cite{hm2012} numerically demonstrated that such perturbations exhibit a \textit{weaker} version of the horizon instability, namely  
\begin{equation}
|\psi|_{\h}\rightarrow 0,\ \ \ |\partial_{r}\psi|_{\h}\rightarrow 0, \ \ \ |\partial_{r}^2\psi|_{\h}\nrightarrow 0 \ \ \  |\partial_{r}^3\psi|_{\h}\rightarrow \infty\ : \ \text{ along }\h.
\label{typebinsta}
\end{equation}
Perturbations of Type \textbf{B} which exhibit the above behavior where rigorously constructed in \cite{aretakis2012}. However, \cite{aretakis2012} did not provide a necessary and sufficient condition for perturbations of Type \textbf{B} so that \eqref{typebinsta} holds. In this paper, we provide an answer to this question. In fact, we provide the precise late-time asymptotics for all perturbations of Type \textbf{B}. 

Recall that the horizon charge $H_{0}^{(1)}[\psi]$ given by \eqref{h01} which is well-defined for all Type \textbf{B} perturbations. We prove that\textit{ the weak horizon instability \eqref{typebinsta} holds if and only if $H_{0}^{(1)}[\psi]\neq 0$.} Specifically,
\begin{equation}
\partial_{r}^2\psi \sim \frac{1}{M^5}H_{0}^{(1)}[\psi], \ \ \ \partial_{r}^3\psi \sim -\frac{3}{M^7}H_{0}^{(1)}[\psi]\cdot\tau \ \ \text{ as } \ \ \tau\rightarrow \infty. 
\label{tbhorinstabi}
\end{equation}
 Furthermore, we prove that $H_{0}^{(1)}[\psi]$ determines the leading-order asymptotics along the event horizon
\begin{equation}
r\psi|_{\h}\sim -2H_{0}^{(1)}[\psi]\cdot \frac{1}{\tau^2}-8MH_0^{(1)}[\psi]\cdot\frac{\log \tau}{\tau^3} \ \ \text{ as } \ \ \tau\rightarrow \infty. 
\label{tbhorasympt}
\end{equation}
On the other hand, the asymptotics of the radiation field depend \textbf{only} on the value of the time-inverted Newman--Penrose constant $I_{0}^{(1)}$:
\begin{equation}
r\psi|_{\I}\sim -2I_{0}^{(1)}[\psi]\cdot \frac{1}{\tau^2}-8MI_0^{(1)}[\psi]\cdot\frac{\log \tau}{\tau^3} \ \ \text{ as } \ \ \tau\rightarrow \infty. 
\label{tbnullasympt}
\end{equation}
Finally, the asymptotics along $\{r=r_0\}$ depend on the value of both constants $H_{0}^{(1)}[\psi]$ and  $I_0^{(1)}[\psi]$:
\begin{equation}
\psi|_{\{r=r_0\}}\sim -8\left(I_{0}^{(1)}[\psi]+\frac{M}{r-M}H_{0}^{(1)}[\psi] \right)\cdot \frac{1}{\tau^3} \ \ \text{ as } \ \ \tau\rightarrow \infty.
\label{tbnullasympt2}
\end{equation}
Note the decay rate of \eqref{tbnullasympt2} agrees with the decay rate of \eqref{our1} for Schwarzschild spacetimes. However, in constrast to Schwarzschild, the coefficient of the asymptotic term in \eqref{tbnullasympt2} depends additionally on the new horizon charge  $H_{0}^{(1)}[\psi]$.

\subsubsection{Asymptotics for Type \textbf{D} perturbations}
\label{sec:AsymptoticsForTypeDPerturbationsintro}

We conclude our discussion with a brief discussion about the asymptotics of Type \textbf{D} perturbations. These are \textbf{non-compact} pertrubations supported \textbf{away} from the event horizon. They satisfy $H_0=0$ and $I_0\neq 0$. Such perturbations have a well-defined $H_{0}^{(1)}$. Our first result for this case shows that the radiation field $r\psi|_{\I}$ and the scalar field $\psi|_{\{r=r_0\}}$ ``see'' to leading order \textbf{only} $I_0$:
\[r\psi|_{\I}\sim 2I_{0}[\psi]\cdot \frac{1}{\tau}, \ \ \ \ \psi|_{\{r=r_0\}}\sim 4I_{0}[\psi]\cdot \frac{1}{\tau^2} \ \ \ \text{ as } \ \ \tau\rightarrow \infty.
\]
On the other hand, the asymptotics along $\h$ to leading order to depend on \textbf{both} $I_0$ and the horizon charge $H_0^{(1)}$:
\[ r\psi|_{\h}\sim \left(4MI_{0} -2H_{0}^{(1)}[\psi]\right)\cdot \frac{1}{\tau^2}\  \ \text{ as } \ \ \tau\rightarrow \infty.\]
Type \textbf{D} perturbations exhibit the weak version of the horizon instability. In this context, it is important to remark that \textit{even though the asymptotic terms along the event horizon contain both $I_0$ and $H_{0}^{(1)}$, the source of the horizon instability in this case originates purely from the horizon charge $H_{0}^{(1)}$.}  In fact, the exact non-decay and blow-up along the event horizon results given by \eqref{tbhorinstabi} hold for Type \textbf{D} perturbations as well. In contrast to Type \textbf{A} perturbations, the derivative $\partial_r\psi$ decays faster than $\psi$ away from the horizon:
\[\partial_r\psi|_{\{r=r_0\}}\sim\left( \frac{8M}{(r-M)^2}\cdot H_0^{(1)}[\psi]+\frac{8(r^2-M^2)}{(r-M)^2}\cdot I_0[\psi]\right)\cdot \tau^{-3}.\]
The  precise asymptotic expression, as above, is derived here for the first time in the literature.

\subsubsection{Summary of the asymptotics}
\label{sec:SummaryOfTheAsymptotics}

We summarize our findings in the table below
\begin{table}[H]{\footnotesize
\begin{center}
    \begin{tabular}{ cV{4}c | c |c} 
		 \Xcline{2-4}{0.01cm} 
		 & \multicolumn{3}{c}{\textbf{Asymptotics}} \\
   \Xcline{1-4}{0.02cm} 
{\textbf{Data}} & $r\psi|_{\mathcal{H}^{+}}$ & $\psi|_{\{r=r_0\}}$ & $r\psi|_{\mathcal{I}^{+}}$  \\ \Xcline{1-4}{0.05cm} 
 {Type \textbf{A}} &   $2H_0\cdot \tau^{-1}$  & $ \frac{4M}{r-M}H_0 \cdot \tau^{-2}$ & $\left(4MH_0-2I_0^{(1)}\right) \cdot \tau^{-2}$\\  \hline 
{Type \textbf{B}} & $-2H_0^{(1)} \cdot \tau^{-2}$  & $-8\left(I_0^{(1)}+\frac{M}{r-M}H_0^{(1)}\right) \cdot \tau^{-3}$ &  $-2I_0^{(1)} \cdot \tau^{-2}$\\    \hline
{Type \textbf{C}}	& $2H_0 \cdot \tau^{-1}$ &  $4\left(I_0+\frac{M}{r-M}H_0\right) \cdot \tau^{-2}$  & $2I_0 \cdot \tau^{-1}$  \\ \hline
{Type \textbf{D}}	& $\left(4MI_0-2H_0^{(1)}\right) \cdot \tau^{-2}$ &  $4I_0\cdot \tau^{-2}$ &  $2I_0\cdot \tau^{-1}$\\ \hline
  \end{tabular}
\end{center}
\caption{The asymptotics for $\psi$ along $\h, \{r=r_0\}$ and $\I$ for data of Type \textbf{A}, \textbf{B}, \textbf{C}, and \textbf{D}.}
\label{summarytable}}
\end{table} 
We remark that we in fact derive the asymptotics for $T^{k}\psi$ for all $k\geq 0$. The relevant asymptotic expressions can be found by taking the $\frac{\partial^{k}}{\partial \tau^{k}}$ derivative of the expressions in Table \ref{summarytable}. Furthermore, we derive the following asymptotics for the transversal derivative $\partial_r\psi$: 
\begin{table}[H]{\footnotesize
\begin{center}
    \begin{tabular}{ cV{3}c | c } 
		\cline{2-3}
		 & \multicolumn{2}{c}{\textbf{Asymptotics}} \\
   \Xcline{1-3}{0.02cm} 
{\textbf{Data}} &  $\partial_r\psi|_{\mathcal{H}^{+}}$ & $\partial_r\psi|_{\{r=r_0\}}$ \\ 
\Xcline{1-3}{0.05cm} 
 {Type \textbf{A}} &   $-\frac{1}{M^3}\cdot H_0$ & $-\frac{4M}{(r-M)^2}\cdot H_0[\psi]\cdot \tau^{-2}$
\\ \hline
 {Type \textbf{B}} & $\frac{2}{M^2}\cdot H_0^{(1)}\cdot {\tau^{-2}} $  & $\frac{8M}{(r-M)^2}\cdot H_0^{(1)}[\psi]\cdot \tau^{-3}$  \\ \hline
 {Type \textbf{C}}	& $-\frac{1}{M^3}\cdot H_0$ &  $-\frac{4M}{(r-M)^2}\cdot H_0[\psi]\cdot \tau^{-2}$   \\ \hline
 {Type \textbf{D}}	& $\frac{2}{M^2}\cdot H_0^{(1)}\cdot {\tau^{-2}} $  &  $\left( \frac{8M}{(r-M)^2}\cdot H_0^{(1)}+\frac{8(r^2-M^2)}{(r-M)^2}\cdot I_0\right)\cdot \tau^{-3}$   \\ \hline
  \end{tabular}
\end{center} 
\caption{The asymptotics for $\partial_r\psi$  on and away from the event horizon}
\label{summarytablerho}}
\end{table}

At the horizon, we have the following asymptotics for the higher order transversal derivatives $\partial_r^k\psi$ revealing the strong horizon instability for Type \textbf{A} and \textbf{C} and the weak horizon instability for Type \textbf{B} and \textbf{D}. 
\begin{table}[H]{\footnotesize
\begin{center}
    \begin{tabular}{ cV{3}c | c | c| c } 
		\cline{2-5}
		 & \multicolumn{4}{c}{\textbf{Asymptotics}} \\
   \Xcline{1-5}{0.02cm} 
{\textbf{Data}} &  $\partial_r\psi|_{\mathcal{H}^{+}}$ & $\partial_r^2\psi|_{\mathcal{H}^{+}}$ & $\partial_r^3\psi|_{\mathcal{H}^{+}}$  & $\ \ \partial_r^k\psi|_{\mathcal{H}^{+}},\, k\geq 2$\\ 
\Xcline{1-5}{0.05cm} 
 {Type \textbf{A}} &  $-\frac{1}{M^3}\cdot H_0$  & $\frac{1}{M^5}\cdot H_0 \cdot\tau$ & $-\frac{3}{2M^7} \cdot H_0 \cdot\tau^2$ & $c_k\cdot H_0\cdot  \tau^{k-1}$
\\ \hline
 {Type \textbf{B}} & $\frac{2}{M^2}\cdot H_0^{(1)}\cdot {\tau^{-2}} $ & $\frac{1}{M^5}\cdot H_0^{(1)}$ & $-\frac{3}{M^7}\cdot H_0^{(1)}\cdot \tau$ &  $a_k\cdot c_{k-1}\cdot H_0^{(1)}\cdot \tau^{k-2}$ \\ \hline
 {Type \textbf{C}}	& $-\frac{1}{M^3}\cdot H_0$  & $\frac{1}{M^5}\cdot H_0 \cdot\tau$ & $-\frac{3}{2M^7} \cdot H_0 \cdot\tau^2$ & $c_k\cdot H_0\cdot  \tau^{k-1}$ \\ \hline
{Type \textbf{D}}	 & $\frac{2}{M^2}\cdot H_0^{(1)}\cdot {\tau^{-2}} $ & $\frac{1}{M^5}\cdot H_0^{(1)}$ & $-\frac{3}{M^7}\cdot H_0^{(1)}\cdot \tau$ &  $a_k\cdot c_{k-1}\cdot H_0^{(1)}\cdot \tau^{k-2}$ \\ \hline
  \end{tabular}
\end{center}
\caption{The horizon instability for transversal derivatives along $\h$.}
\label{horizontable}}
\end{table}

where 
\[a_k= \frac{(-1)}{M^2}\cdot\binom{k}{2}\ \ \ \text{ and } \ \ \ c_k=(-1)^{k}\cdot\frac{1}{M^3}\cdot \frac{1}{\left(2M^2 \right)^{k-1}}\cdot k!. \]

More generally, the late-time asymptotics along $\h$  for $Y^{k}T^{m}\psi$, with $k\geq m+1$ for Type \textbf{A} and \textbf{C} and $k\geq m+2$ for Type \textbf{B} and \textbf{D}, can be informally found by taking the  $\frac{\partial^{m}}{\partial \tau^{m}}$ derivative of the expressions in Table \ref{horizontable} (see also Section \ref{sec:DecayForScalarInvariants}).

\section{Applications and additional remarks}
\label{sec:Remarks}

 In this section we present a few applications and remarks about our results.

\subsection{Singular time inversion and the new horizon charge}
\label{sec:GeometricOriginOfTheNewHair}

Recall from Section \ref{sec:TheWaveEquationOnBlackHolesBackgrounds} that the constants $I_0$ and $I_{0}^{(1)}$ are obstructions to inverting the time operators $T$ and $T^{2}$, respectively. Specifically, $I_0$ and $I_{0}^{(1)}$ are obstructions to defining the operators $T^{-1}$ and  $T^{-2}$, respectively, on solutions the wave equation \eqref{eq:waveequation}, such that  their target functional space consists of  solutions the wave equation which decay appropriately  in $r$ towards null or spacelike infinity. In sub-extremal black holes, $I_0$ and $I_{0}^{(1)}$ are the only such obstructions. However, for ERN we have an additional obstruction that originates from the geometry of the horizon, namely the conserved charge $H_0$. Indeed, for any smooth solution $\psi$ to the wave equation \eqref{eq:waveequation} on ERN we have
\begin{equation}
H_{0}[T\psi]=0.\vspace{-0.12cm}
\label{h0t}
\end{equation}
Hence, \textit{$H_0$ is an obstruction to defining the inverse operator $T^{-1}$ on smooth solutions to \eqref{eq:waveequation} such that the image is also a smooth solution to \eqref{eq:waveequation}}. On the other hand, if $\psi$ is a smooth solution to \eqref{eq:waveequation} with $H_0=0$ then the horizon charge $H_{0}^{(1)}$ is well-defined and satisfies
\[H_{0}^{(1)}[T^2\psi]=0. \vspace{-0.12cm}\]
Hence, \textit{$H_0^{(1)}$ is an obstruction to defining the inverse operator $T^{-2}$ on smooth solutions (with $H_0=0$) to \eqref{eq:waveequation} such that the image is also a smooth solution to \eqref{eq:waveequation}}. The above imply that the horizon associated charges $H_0$ and $H_0^{(1)}$ are related to singularities at time frequencies $\omega \sim 0$. We thus conclude that \textit{the leading order terms in the late-time asymptotic expansion are dominated by the $\omega\sim 0$ frequencies}. 

An important aspect of our analysis is that we invert the operators $T$ and $T^2$ even if the images of $T^{-1}$ and $T^{-2}$ do \underline{not} contain smooth function. This is accomplished by developing a \textbf{singular time inversion theory}. This theory is needed for Type \textbf{A} and Type \textbf{D} perturbations. Let's first consider Type \textbf{A} perturbations. Since such perturbations satisfy $H_0\neq 0$ and $I_0=0$, $I_0^{(1)}$ is well-defined whereas $H_{0}^{(1)}$ is undefined. Clearly, there is no smooth solution $T^{-1}\psi$ to \eqref{eq:waveequation}. Indeed, if a smooth solution $T^{-1}\psi$ to \eqref{eq:waveequation}  existed then by replacing $\psi$ with $T^{-1}\psi$ in \eqref{h0t} we would obtain $H_0[\psi]=H_0[T(T^{-1}\psi)]=0$, which is a contradiction. It turns out that we can still \textit{canonically} define a \textit{singular} time inversion $T^{-1}\psi$ such that 
\begin{itemize}
	\item $T^{-1}\psi\rightarrow 0$ as $r\rightarrow \infty$,
	\item $I_0[T^{-1}\psi]<\infty$, 
	\item $\partial_r(T^{-1}\psi)\sim -2H_{0}[\psi]\cdot \frac{1}{r-M}$ in the region $r\sim M$. 
\end{itemize}
Similar results hold for Type \textbf{D} perturbations. For perturbations of Type \textbf{A} and \textbf{D}, \textit{we develop a low regularity theory which allows us to obtain the precise late-time asymptotics for the singular scalar fields} $T^{-1}\psi$.  We remark that for Type \textbf{B} perturbations we develop a \textbf{regular} time inversion theory, whereas no time inversion is needed for Type \textbf{C} perturbations.  Summarizing,
\begin{table}[H]{\footnotesize
\begin{center}
    \begin{tabular}{ cV{3}c | c| c|cV{3}c  } 
		\cline{2-6}
		 & \multicolumn{5}{c}{\textbf{Time inversion theory}} \\
   \Xcline{1-6}{0.02cm} 
{\textbf{Data}} &  $H_0[\psi]$ & $H_0^{(1)}[\psi]$ & $I_0[\psi]$ & $I_{0}^{(1)}[\psi]$ & {$T^{-1}\psi$} \\ 
\Xcline{1-6}{0.05cm} 
 {Type \textbf{A}} &  $\neq 0$  & $\boldsymbol{=\infty}$ & $=0$ & $<\infty$ & {singular at $\h$} 
\\ \hline
 {Type \textbf{D}} & $=0$  & $<\infty$  & $\neq 0$ & $\boldsymbol{=\infty}$ & {singular at $\I$}   \\ \hline
{Type \textbf{B}}	& $=0$ &  $<\infty$ & $=0$ & $<\infty$ & {regular}  \\ \hline
  \end{tabular}
\end{center}
\caption{The time inversion and its singular support.}
\label{timeinversiontable}}
\end{table}

\vspace{-0.7cm}

\subsection{Decay for scalar invariants}
\label{sec:DecayForScalarInvariants}

Hadar and Reall \cite{harveyeffective}, assuming the asymptotics on ERN (rigorously established in the present paper), showed that the scalar invariants $|\nabla^k\psi|^2$ decay in time. Similar decay results were presented in \cite{khanna17} and in \cite{zimmerman4}.
Let's briefly recall the argument of \cite{harveyeffective}. First of all, note that the Christoffel symbols $\Gamma^{a}_{bc}$, with $a,b,c\in \{ v,r\}$, vanish on the event horizon and, hence, if $\partial_{i^1},\cdots, \partial_{i^k}\in \{\partial_{v},\partial_{r}\}$ then  $\nabla^{k}\psi_{i_1\cdots i_k}=\partial_{i^1}\cdots \partial_{i^k}\psi$ on the event horizon. The following asymptotic decay rates hold along the event horizon for all derivatives:
\[\partial_r^k T^m\psi \sim \tau^{k-m-1-\epsilon(k,m)} \]
where 
\begin{equation}
\epsilon(k,m)=\begin{cases}
  0, \ \text{ if }\  k=0 \ \text{ or } \ k\geq m+1, \\
  1,  \ \text{ if }\  1\leq k \leq m. 
\end{cases}
\label{skipepsilon}
\end{equation} 
Note that the presence of $\epsilon(k,m)$ introduces a \textit{skip} in the decay rates for the derivatives of $\psi$. This skip was also previously observed in \cite{zimmerman1}.

To show that $|\nabla^k\psi|^2$ always decays, it suffices to consider the ``slowest'' case, namely the case of perturbations of Type \textbf{C}. In this case,
\begin{equation*}
\begin{split}
|\nabla^k\psi|^2 &\sim \sum_{k_1+k_2=k}\partial_{r}^{k_1}T^{k_2}\psi\cdot \partial_{r}^{k_2}T^{k_1}\psi\sim \sum_{k_1+k_2=k}\tau^{k_1-k_2-1-\epsilon(k_1,k_2)}\cdot \tau^{k_2-k_1-1-\epsilon(k_2,k_1)}\\
&\sim  \sum_{k_1+k_2=k}\tau^{-2-\epsilon(k_1,k_2)-\epsilon(k_2,k_1)}\sim \tau^{-2},
\end{split}
\end{equation*} 
for all $k\geq 1$, since $\epsilon(k_1,k_2), \epsilon(k_2,k_1)\geq 0$ and $\epsilon(k,0)=\epsilon(0,k)=0$. Note that the decay rate for $|\nabla^k\psi|^2$ is independent of $k$.

\subsection{The interior of black holes and strong cosmic censorship}
\label{sec:TheInteriorOfBlackHolesAndStrongCosmicCensorship}

In this paper we have restricted the analysis of the wave equation to the extremal Reissner--Nordstr\"om \emph{black hole exterior} (the domain of outer communications). One can also extend the initial data hypersurface $\Sigma_0$ into the black hole \emph{interior} (see Section \ref{sec:TheHyperboloidalFoliation} for a precise definition of $\Sigma_0$) and investigate the behavior of solutions to \eqref{eq:waveequation} in the restriction of the domain of dependence of the extended initial data hypersurface to the {black hole interior}. 
\vspace{-0.15cm}
  \begin{figure}[H]
	\begin{center}
				\includegraphics[scale=0.35]{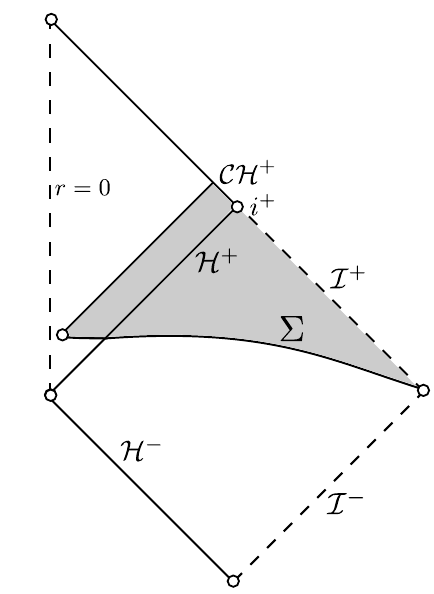}
\end{center}
\vspace{-0.4cm}
\caption{The extended initial value problem that includes the interior region, where $\Sigma$ is the extension of $\Sigma_0$ into the interior.}
	\label{fig:interiop455}
\end{figure}
\vspace{-0.3cm}
An analysis of the behavior of solutions to \eqref{eq:waveequation} in the black hole interior of extremal Reissner--Nordstr\"om was carried out by the third author in \cite{gajic} in the setting of a characteristic initial value problem with initial data imposed on a future geodesically complete segment of the future black hole event horizon and initial data imposed on an ingoing null hypersurface intersecting the event horizon to the past. The late-time behaviour of the solution to \eqref{eq:waveequation} on extremal Reissner--Nordstr\"om along the event horizon was \emph{assumed} to be consistent with the numerical predictions of \cite{hm2012}. The results of \cite{gajic} illustrate a remarkably \emph{delicate} dependence of the qualitative behaviour at the inner horizon in the black hole interior on the \emph{precise} late-time behaviour of the solution to \eqref{eq:waveequation} along the event horizon of extremal Reissner--Nordstr\"om as predicted by numerics and heuristics.

By combining the results stated schematically in Section \ref{sec:SummaryOfTheMainResults} and more precisely in Section \ref{sec:StatementsOfTheTheorems}, that confirm in particular the numerical predictions of \cite{hm2012}, with Theorem 2, 5 and 6 of \cite{gajic}, we conclude that the following theorem holds:
\begin{theorem}
\label{thm:interior}
Solutions $\psi$ to \eqref{eq:waveequation} on extremal Reissner--Nordstr\"om arising from smooth compactly supported data on an extension of $\Sigma_0$ into the black hole interior are extendible across the black hole inner horizon as functions in $C^{0,\alpha}\cap W^{1,2}_{\rm loc}$, with $\alpha<1$. Furthermore, the spherical mean $\frac{1}{4\pi}\int_{\s^2}\psi\,d\omega$ can in fact be extended as a $C^2$ function.
\end{theorem}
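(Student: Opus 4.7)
The plan is to reduce the statement to a direct combination of the late-time asymptotics along the event horizon $\mathcal{H}^+$ established in this paper (summarized in Table~\ref{summarytable}) with the conditional interior results of \cite{gajic}, Theorems~2, 5 and 6. The key point is that \cite{gajic} treats the characteristic initial value problem in the black hole interior with data prescribed on $\mathcal{H}^+$ and on an ingoing null hypersurface, and derives the regularity across the inner horizon \emph{as a function of the precise decay rate and leading-order coefficient of $\psi|_{\mathcal{H}^+}$}. Our task is therefore to verify that the asymptotics proved in Section~\ref{sec:SummaryOfTheMainResults} imply exactly the hypotheses needed in \cite{gajic}.

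First, I would set up the geometric framework: extend the spacelike-null hypersurface $\Sigma_0$ of Section~\ref{sec:TheHyperboloidalFoliation} to a hypersurface $\Sigma$ that crosses $\mathcal{H}^+$ transversally and enters the black hole interior, and let $\psi$ be the solution of \eqref{eq:waveequation} arising from smooth compactly supported data on $\Sigma$. By finite speed of propagation, the restriction of $\psi$ to the domain of outer communications arises from smooth, compactly supported data on $\Sigma_0$; hence it falls into \textbf{Type A} if $H_0[\psi]\neq 0$ and into \textbf{Type B} if $H_0[\psi]=0$ (both are compactly supported, so $I_0[\psi]=0$ automatically). The next step is to use the main theorems of the present paper to assert that in both cases $\psi|_{\mathcal{H}^+}$ has the precise late-time expansion
\begin{equation*}
r\psi|_{\mathcal{H}^+}\sim 2H_0[\psi]\,\tau^{-1}+4MH_0[\psi]\,\tau^{-2}\log\tau\quad\text{(Type A)},\qquad r\psi|_{\mathcal{H}^+}\sim -2H_0^{(1)}[\psi]\,\tau^{-2}+\ldots\quad\text{(Type B)},
\end{equation*}
together with the corresponding asymptotics for $T^k\psi|_{\mathcal{H}^+}$ and the non-decay/blow-up behavior for $Y^k\psi|_{\mathcal{H}^+}$ recorded in Table~\ref{horizontable}. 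These are exactly the asymptotic profiles that were \emph{assumed} (consistent with the numerics of \cite{hm2012}) as input in \cite{gajic}.

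With the asymptotics on $\mathcal{H}^+$ in hand, I would then quote the conditional interior theorems of \cite{gajic}. Theorem~2 of \cite{gajic} gives $C^{0,\alpha}$-extendibility of $\psi$ across the inner horizon $\mathcal{CH}^+$ for any $\alpha<1$, while Theorems~5 and~6 give the $W^{1,2}_{\rm loc}$ extendibility and the improved $C^2$ extendibility of the spherical mean $\psi_0:=\tfrac{1}{4\pi}\int_{\s^2}\psi\,d\omega$. The ingoing null hypersurface in the characteristic formulation of \cite{gajic} is an outgoing radial null segment which is in the domain of outer communications; by our domain-of-dependence argument $\psi$ on this segment is smooth and compactly supported in the $v$-direction, so the data needed on it are trivially provided. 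The only non-trivial ingredient is therefore the horizon data, which is precisely what the main results of this paper supply.

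The main obstacle — and the only real content beyond the citation — is a careful compatibility check: one must verify that the decay rates and leading coefficients we prove are not only \emph{consistent} with but \emph{exactly} the ones required in \cite{gajic}, including the logarithmic subleading term $\tfrac{\log\tau}{\tau^2}$ (whose presence/absence controls whether $\psi_0$ is $C^2$ or merely $C^{1,\alpha}$ across $\mathcal{CH}^+$) and the non-decay of $Y\psi|_{\mathcal{H}^+}$ in the Type~A case (which governs the borderline $C^{0,\alpha}$ regularity with $\alpha<1$). Since our expansions match those of \cite{hm2012} that were used as the hypothesis in \cite{gajic}, this matching is essentially a bookkeeping exercise. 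Once compatibility is confirmed, the conclusions of Theorem~\ref{thm:interior} follow by direct invocation of \cite[Theorems~2, 5, 6]{gajic}.
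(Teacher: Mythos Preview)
Your proposal is correct and follows exactly the paper's approach: the paper itself offers no detailed proof of Theorem~\ref{thm:interior} but simply states that it follows by combining the late-time asymptotics along $\mathcal{H}^+$ established here (in particular the leading \emph{and} next-to-leading order terms in \eqref{horasympt}, as emphasized in the remark following the theorem) with Theorems~2, 5 and~6 of \cite{gajic}. One small geometric slip: the ingoing null hypersurface in the characteristic setup of \cite{gajic} lies in the black hole \emph{interior}, not the domain of outer communications, but your conclusion that the induced data on it are smooth (by local Cauchy evolution from the extended $\Sigma$) remains valid.
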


\begin{remark}
It follows from Theorem \ref{thm:interior} that for spherically symmetric data one can construct $C^2$ extensions of $\psi$ across the inner horizons that are moreover \emph{classical} solutions to \eqref{eq:waveequation} with respect to a smooth extension of the extremal Reissner--Nordstr\"om metric across the inner horizon. These extensions of $\psi$, much like the smooth extensions of the metric, are \textbf{highly non-unique}!
\end{remark}
\begin{remark}
In order to derive $C^2$ extendibility of the spherical mean across the inner horizon for initial data with $H_0[\psi]\neq 0$, we have to make use of the precise leading order \underline{and next-to-leading order} behavior of $\psi$ along the event horizon in \eqref{horasympt}. 
\end{remark}

See also \cite{gajic2} for extendibility results in the context of \eqref{eq:waveequation} in the interior of extremal Kerr--Newman spacetimes.

The extendibility properties in Theorem \ref{thm:interior} differ drastically from the extendibility properties of solutions to \eqref{eq:waveequation} in the interior of \underline{sub}-extremal Reissner--Nordstr\"om black holes, which are extendible in $C^0$ across the inner (Cauchy) horizon, but \emph{inextendible} in $W^{1,2}_{\rm loc}$, see \cite{Franzen2014, luk2015}. See also \cite{Hintz2015, LukSbierski2016, DafShl2016, gregjan} for extendibility results in sub-extremal Kerr. 

The study of the wave equation in black hole interiors serves as a linear ``toy model'' for the analysis of dynamical black hole interiors, which is closely related to the \emph{Strong Cosmic Censorship Conjecture} (SCC). As formulated in \cite{DC09}, this conjecture states that ``generic'' asymptotically flat initial data for the Einstein vacuum equations have maximal globally hyperbolic developments that are inextendible as a Lorentzian manifold with a continuous metric and locally square integrable Christoffel symbols. See for example \cite{dl-scc} for a more elaborate discussion on SCC. 

Building on the pioneering work of Dafermos \cite{MD03, MD05c, MD12}  and Dafermos--Rodnianski \cite{MDIR05}, Luk--Oh showed in \cite{Luk2016a, Luk2016b} that a $C^2$-version of SCC holds in the spherically symmetric Einstein--Maxwell--scalar field setting: they proved inextendibility of the metric in $C^2$ for generic asymptotically flat two-ended data.

We next consider the case of ``dynamical extremal black holes'', i.e.~black hole spacetime solutions which approach the extremal Reissner--Nordstr\"om suitably rapidly along the event horizon. We remark that in \cite{harvey2013} dynamical extremal black hole spacetimes are defined as being black hole spacetimes without trapped surfaces and it is shown numerically that the solutions under consideration actually approach an extremal Reissner Nordstr\"om solution along the event horizon. Conversely, it can be shown in this setting (massless, uncharged scalar field) that black hole spacetimes that approach extremal Reissner--Nordstr\"om along the event horizon will have no radial trapped surfaces in the black hole interior, provided none of the round spheres foliating the event horizon are (marginally) \emph{anti-trapped}; see also Remark 1.7 in \cite{dejanjon1}.

In this dynamical setting, the interior dynamics are significantly different for dynamical extremal black holes compared to the ``dynamical sub-extremal black holes'' that arise from asymptotically flat two-ended data, as shown in \cite{Luk2016a, Luk2016b}. Indeed, in \cite{dejanjon1} it was shown that the dynamical extremal black holes in consideration are extendible across the inner horizon as \emph{weak} solutions to the spherically symmetric Einstein--Maxwell--(charged) scalar field system of equations (in particular, with Christoffel symbols in $L^2_{\rm loc}$). Hence, \textit{in contrast to the sub-extremal case, dynamical extremal black holes do not conform to the inextendibility properties stated in SCC}. The only way that SCC can therefore still be valid, at least in the spherically symmetric setting under consideration, is if \emph{dynamical extremal black holes do not arise from ``generic'' initial data}. Analogous numerical results  were presented in \cite{harvey2013} that are moreover compatible with the non-genericity of dynamical extremal black holes (see also Section \ref{sec:TheReallSpacetimes} for a discussion on the results of  \cite{harvey2013} in the black hole exterior).

\section{The main theorems and ideas of the proofs}
\label{subsec:TheMainTheorems}

\subsection{Statements of the theorems}
\label{sec:StatementsOfTheTheorems}

We  consider smooth solutions $\psi$ to the wave equation \eqref{eq:waveequation} on ERN and we derive global late-time asymptotic estimates for $\psi$. We will mostly express our main theorems in terms of the double null coordinates $(u,v)$ (see Section \ref{sec:TheERNManifoldFoliationsAndVectorFields} for a review of the geometry of ERN). The constants $H_0$ and $H_0^{(1)}$ are defined in Sections \ref{sec:ConservationLawsAlongTheEventHorizon} and \ref{sec:TheNewHorizonHairH01Psi}, respectively. The constants $I_0$ and $I_0^{(1)}$ are defined in Section \ref{sec:TheWaveEquationOnBlackHolesBackgrounds}. The perturbations of Type \textbf{A}, \textbf{B}, \textbf{C} and \textbf{D} are introduced in Section \ref{sec:TheTypesOfInitialDataABCD}. For simplicity, the initial data norms on the right hand side of the estimates of the main theorems are not presented here and instead are presented in the relevant sections where these theorems are proved. We mention below the exact section where each theorem is proved. Moreover, the quantities $\eta>0$ and $\epsilon>0$ below are suitably small constants, $\beta\in (0,1]$ appears in the initial data norms and $k\in\mathbb{N}_{0}$. Furthermore, $C=C(M,\Sigma_0,r_\mathcal{H},r_\mathcal{I},\eta,\epsilon,\beta, k)>0$ is a universal constant.


\begin{theorem}[{\textbf{Asymptotics for Type \textbf{C} perturbations}}]
\label{prop:asympsitheo} Assume that the initial data of $\psi$ are of Type \textbf{C} and that $\psi$ solves the wave equation \eqref{eq:waveequation}. The following {global} estimate holds:
\begin{equation}
\label{eq:asympsiintro}
\begin{split}
\Bigg|T^k\psi_0(u,v)&-4\left(I_0[\psi]+ \frac{M}{r\sqrt{D}}H_0[\psi]\right)T^k\left(\frac{1}{u\cdot v}\right)\Bigg|\\
\leq&\: C\left(\sqrt{E_{0;k+1}^{\epsilon}[\psi]+\sum_{|\alpha|\leq 2}E_{1;k}^{\epsilon}[\Omega^{\alpha}\psi]}+I_0[\psi]+P_{I_0,\beta;k}[\psi]\right)v^{-1}u^{-1-k-\eta}\\
&+C\left(\sqrt{E_{0;k+1}^{\epsilon}[\psi]+\sum_{|\alpha|\leq 2}E_{1;k}^{\epsilon}[\Omega^{\alpha}\psi]}+H_0[\psi]+P_{H_0,1;k}[\psi]\right)D^{-\frac{1}{2}}u^{-1}v^{-1-k-\eta}.
\end{split}
\end{equation}
\label{ctheo}
\end{theorem}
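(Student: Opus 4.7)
The plan is to derive this global asymptotic by tracking two simultaneous contributions: the $I_0$-contribution coming from the asymptotically flat end and the $H_0$-contribution coming from the near-horizon region. The crucial structural observation is that $\frac{M}{r\sqrt{D}} = \frac{M}{r-M}$ is itself a \emph{static, spherically symmetric} solution of $\Box_g \cdot = 0$ on ERN (integrating $(r^2 D \phi')' = 0$ yields $\phi \propto \frac{1}{r-M} + $ const), so $\left(I_0[\psi] + \frac{M}{r-M}H_0[\psi]\right) \cdot \frac{1}{uv}$ is a natural stationary spatial profile multiplied by the universal time-decay factor $\frac{1}{uv}$. The parallelism between horizon and infinity contributions is made precise by Couch--Torrence duality \eqref{dual}: if $\psi$ solves \eqref{eq:waveequation} then so does $\tilde\psi$, and $I_0[\tilde\psi] = H_0[\psi]$, so the near-horizon analysis for $\psi$ is converted to a near-infinity analysis for $\tilde\psi$.

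The proof proceeds in four steps. First, apply the $r^p$-weighted hierarchies of Sections \ref{sec:rweightest}--\ref{sec:extendhier}, together with the decay estimates of Section \ref{sec:decayest}, to the radiation field $\phi := r\psi_0$; this yields $\tau^{-2-k-\eta}$ decay of the $T$-energy of $T^k \phi$ once the conserved $I_0$-mode is extracted. Second, integrate the transport equation $\underline{L}(r\psi_0) = $ (decaying terms) backward along outgoing cones from $\mathcal{I}^+$, using $I_0[\psi]$ to pin down the integration constant and obtaining $r\psi_0 \sim 2 I_0[\psi]/u$ near $\mathcal{I}^+$, with the expected $T^k$ improvements. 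Third, apply the dual of Steps 1--2 to $\tilde\psi$: this yields the horizon asymptotic $M\psi_0|_{\mathcal{H}^+} \sim 2 H_0[\psi]/v$. Finally, patch these two endpoint asymptotics into the single interior expression $4\bigl(I_0 + \frac{M}{r-M}H_0\bigr) T^k\bigl(\frac{1}{uv}\bigr)$: the difference between this candidate and the true $T^k\psi_0$ solves an inhomogeneous wave equation whose source decays fast enough, by the hierarchy estimates, to give both error terms $v^{-1}u^{-1-k-\eta}$ and $D^{-1/2}u^{-1}v^{-1-k-\eta}$.

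The principal obstacle will be the simultaneous uniformity of the two error terms: the first must remain valid down to the event horizon, where $D \to 0$, and the second must remain valid all the way to null infinity, where $D \to 1$. Achieving this requires exploiting the identity $H_0[T\psi] = 0$ (equation \eqref{h0t}), which ensures that after $T$-commutation the remainder has vanishing horizon charge and its error is therefore free of $D^{-1/2}$ singularities, together with the dual observation (via Couch--Torrence) that $T\psi$ similarly has no obstruction at infinity, so its error carries only the faster-in-$v$ weight. Careful commutator bookkeeping---specifically, that the static mode $\frac{M}{r-M}$ is annihilated by $T$ and that $T^k$-derivatives fall cleanly on the $\frac{1}{uv}$ factor without creating further $D^{-1/2}$ losses---is what allows the error rates claimed in \eqref{eq:asympsiintro} to be maintained uniformly across the whole exterior region.
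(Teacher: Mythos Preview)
Your Steps 1--3 are broadly aligned with the paper: the near-infinity and near-horizon asymptotics for $\phi_0=r\psi_0$ are indeed obtained separately (Propositions \ref{prop:asympLphi}--\ref{prop:asympradfieldnonzeroIH}), and the Couch--Torrence duality you invoke is noted by the authors as an alternative route (see the remark after Proposition \ref{prop:asympsi}). However, Steps 2--3 only produce asymptotics in the far regions $\mathcal{A}^{\mathcal{I}}_{\gamma^{\mathcal{I}}_{\alpha}}$ and $\mathcal{A}^{\mathcal{H}}_{\gamma^{\mathcal{H}}_{\alpha}}$, where $r\gtrsim\tau^{\alpha}$ or $(r-M)^{-1}\gtrsim\tau^{\alpha}$; they say nothing about the intermediate region $\mathcal{B}$ of bounded $r$.

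Your Step 4 is where the gap lies. You propose to subtract the candidate $4\bigl(I_0+\tfrac{M}{r-M}H_0\bigr)T^k(\tfrac{1}{uv})$ and estimate the remainder $w$ as a solution of an inhomogeneous wave equation. But this does not close: first, $\tfrac{1}{uv}$ is not a solution of $\Box_g$ (nor even of the free wave equation, since $\partial_u\partial_v(\tfrac{1}{uv})=\tfrac{1}{u^2v^2}$), so the source $\Box_g[\textnormal{candidate}]$ does not obviously decay faster than the candidate uniformly in $r$; second, and more fundamentally, even granting a fast-decaying source, you would need decay estimates for \emph{inhomogeneous} solutions with that source, and the hierarchies of Sections \ref{sec:rweightest}--\ref{sec:decayest} are stated for homogeneous solutions---a Duhamel argument would require exactly the asymptotics you are trying to prove. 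Your appeal to $H_0[T\psi]=0$ is correct but does not address this.

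The paper closes this gap by a different mechanism (Proposition \ref{prop:partasymdrhopsi}): it derives the \emph{precise leading-order asymptotics of the radial derivative} $\partial_\rho T^k\psi_0$ itself, by integrating the equation $\partial_r(Dr^2\partial_r T^k\psi_0+2rT^{k+1}\phi_0)=2T^{k+1}\phi_0$ along constant-$v$ hypersurfaces from $r=M$, feeding in the already-established asymptotics of $T^{k+1}\phi_0|_{\mathcal{H}^+}$. This is essential precisely because, as the paper emphasizes (Difficulty 2 in Section \ref{sec:LateTimeAsymptotics}), on ERN $\partial_\rho\psi_0$ decays only as fast as $\psi_0$---so a mere upper bound on $\partial_\rho\psi_0$ cannot be integrated to propagate asymptotics; one needs its exact leading term. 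With $\partial_\rho T^k\psi_0\sim -\tfrac{4M}{(r-M)^2}H_0\,T^k(u^{-2})$ in hand, one integrates from the boundary curve $\gamma^{\mathcal{I}}_{\delta}$ (where the asymptotics are already known) inward to cover the whole intermediate region. This is the step your proposal is missing.
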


\begin{theorem}[{\textbf{Asymptotics for Type \textbf{A} perturbations}}] Assume that the initial data of $\psi$ are of Type \textbf{A} and that $\psi$ solves the wave equation \eqref{eq:waveequation}. The following {global} estimate holds:
\begin{equation*}
\begin{split}
\Bigg|T^k\psi_0(u,v)&-4\left[ I_0^{(1)}[\psi]T^{k+1}\left(\frac{1}{u\cdot v}\right)+\frac{M}{r \sqrt{D}}H_0[\psi]T^{k}\left(\frac{1}{u(v+4M-2r)}\right)\right]\Bigg|\\
\leq&\: C\Bigg[\sqrt{E^{\epsilon}_{0, \mathcal{I}; k+1}[\psi]+\sum_{|\alpha|\leq 2}E_{1,\mathcal{I};k}^{\epsilon}[\Omega^{\alpha}\psi]}+P_{I_0,\beta;k+1}[\psi]+P_{H_0,1;k}[\psi]+H_0[\psi]+I_0^{(1)}[\psi] \Bigg]\\
&\cdot \left( v^{-1}u^{-2-k-\eta}+D^{-\frac{1}{2}}u^{-1}v^{-1-k-\eta}\right).
\end{split}
\end{equation*}
\label{atheo}
\end{theorem}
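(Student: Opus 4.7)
The plan is to reduce Theorem \ref{atheo} to Theorem \ref{ctheo} through a \emph{singular time inversion}, combined with an explicit horizon profile that captures the subtle contribution of $H_0[\psi]$ to the leading-order asymptotics. Because the data are of Type \textbf{A} one has $I_0[\psi]=0$ (together with sufficient $r$-decay coming from compact support), so the time-inverted field $\psi^{(1)}$ satisfying $T\psi^{(1)}=\psi$ in $J^+(\Sigma_0)$ can be canonically defined with Newman--Penrose constant $I_0[\psi^{(1)}]=I_0^{(1)}[\psi]<\infty$. However, because $H_0[T\phi]=0$ for every smooth $\phi$ (the identity \eqref{h0t}), the non-vanishing of $H_0[\psi]$ forces $\psi^{(1)}$ to be \emph{singular} at $\mathcal{H}^+$, with the precise transversal blow-up $\partial_r\psi^{(1)}\sim -\tfrac{2H_0[\psi]}{r-M}$ as $r\to M^+$, in accordance with the singular time-inversion theory summarised in Table \ref{timeinversiontable}.

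Next I would subtract an explicit ``horizon profile'' $\chi$ absorbing this singularity. The natural candidate, motivated by the claimed asymptotic, is
\begin{equation*}
\chi(u,v) \;:=\; \frac{2M\,H_0[\psi]}{r\sqrt{D}}\cdot\frac{1}{u(v+4M-2r)}\cdot \zeta(r),
\end{equation*}
where $\zeta(r)$ is a smooth cut-off equal to $1$ in a neighbourhood of $\mathcal{H}^+$ and supported in $\{r\leq r_\mathcal{I}\}$. The specific form is dictated by two constraints. First, along null generators approaching $\mathcal{H}^+$ one has $u\sim 2M^2/(r-M)$, so that the factor $1/u$ exactly cancels the $1/(r-M)$-blow-up of $\psi^{(1)}$, making $\tilde\psi^{(1)}:=\psi^{(1)}-\chi$ smoothly extendible across $\mathcal{H}^+$ with $H_0[\tilde\psi^{(1)}]=0$. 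Second, since $\chi$ is stationary-modulated by $1/u$ and $1/(v+4M-2r)$, applying $T$ reproduces exactly the horizon term $\frac{4M}{r\sqrt{D}}H_0[\psi]\,T^k\!\bigl(\tfrac{1}{u(v+4M-2r)}\bigr)$ appearing in the statement.

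With $\tilde\psi^{(1)}$ now regular across the horizon, $I_0[\tilde\psi^{(1)}]=I_0^{(1)}[\psi]$ and $H_0[\tilde\psi^{(1)}]=0$, I would apply a forced-wave analogue of Theorem \ref{ctheo} to $\tilde\psi^{(1)}$, treating $\square_g\tilde\psi^{(1)}=-\square_g\chi$ as a source supported in $\{r\leq r_\mathcal{I}\}$ whose $v$- and $u$-decay is sufficient to be absorbed into the error terms established via the weighted hierarchies of Sections \ref{sec:rweightest}--\ref{sec:extendhier} and the decay estimates of Section \ref{sec:decayest}. This yields
\begin{equation*}
T^{k+1}\tilde\psi^{(1)}_0(u,v) = 4\,I_0^{(1)}[\psi]\cdot T^{k+1}\!\left(\tfrac{1}{uv}\right) + R_{k+1}(u,v),
\end{equation*}
where $R_{k+1}$ is bounded by the right-hand side of \eqref{eq:asympsiintro} with the indices shifted so as to match the weaker (Type \textbf{A}) initial-data norm $E^\epsilon_{0,\mathcal{I};k+1}[\psi]+\sum_{|\alpha|\leq 2}E^\epsilon_{1,\mathcal{I};k}[\Omega^\alpha\psi]$. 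Finally, writing $T^k\psi = T^{k+1}\psi^{(1)} = T^{k+1}\tilde\psi^{(1)}+T^{k+1}\chi$ assembles the two leading profiles and completes the proof.

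The main obstacle is the construction and analysis of the profile $\chi$ in Step 2: one must verify that $\chi$ matches the leading singularity of $\psi^{(1)}$ at $\mathcal{H}^+$ to sufficient order that $\tilde\psi^{(1)}$ lies in the function space to which the (forced) Type \textbf{C} argument applies, and simultaneously that $\square_g\chi$ is integrable against the $r^p$-weighted fluxes used in Sections \ref{sec:rweightest}--\ref{sec:extendhier} without saturating them. Equivalently, the key identity behind the specific denominator $u(v+4M-2r)$ is that this function is, to leading order near $\mathcal{H}^+$ and along $\mathcal{I}^+$, approximately annihilated by the wave operator on ERN after one division by $r\sqrt{D}$, so that the residual source is lower-order and consistent with the claimed error bound. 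Controlling this residual uniformly in $(u,v)$ across the bulk and translating the data norms of $\tilde\psi^{(1)}$ into the Type \textbf{A} norms of $\psi$ (including the appearance of $P_{I_0,\beta;k+1}[\psi]$ and $P_{H_0,1;k}[\psi]$) is the delicate technical step.
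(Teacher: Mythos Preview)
Your proposal has a concrete gap in the profile construction. Near $\mathcal{H}^+$ one has $u\sim 2M^2(r-M)^{-1}$ and $r\sqrt{D}=r-M$, so your profile satisfies $\chi\sim \frac{2MH_0}{(r-M)}\cdot\frac{(r-M)}{2M^2(v+2M)}=\frac{H_0}{M(v+2M)}$, which is \emph{bounded} as $r\to M$. But by Proposition~\ref{lm:someregproptimeint} the time integral satisfies $\psi^{(1)}_0(v,r)=-2M^{-1}H_0[\psi]\log(r-M)+\widetilde f(v,r)$ with $\widetilde f$ smooth: the singularity of $\psi^{(1)}$ at $\mathcal{H}^+$ is \emph{logarithmic}, not removable by a bounded subtraction. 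Hence $\tilde\psi^{(1)}=\psi^{(1)}-\chi$ is not smoothly extendible across $\mathcal{H}^+$, $H_0[\tilde\psi^{(1)}]$ is not defined, and the weighted norm $E^\epsilon_{0}[\tilde\psi^{(1)}]$ (which requires $\int_{N^{\mathcal{H}}_0}(r-M)^{-3+\epsilon}(\underline L\phi_0^{(1)})^2\,du<\infty$) is infinite; cf.\ \eqref{eq:blowupPenergypsi1}. So Theorem~\ref{ctheo} cannot be applied to $\tilde\psi^{(1)}$. If instead you subtract a genuinely logarithmic profile, you lose the algebraic form $T^k\bigl(\frac{1}{u(v+4M-2r)}\bigr)$ that you need to reproduce the statement, and you must in any case develop a forced-wave version of the hierarchies in Sections~\ref{sec:rweightest}--\ref{sec:extendhier}, which the paper does not provide.

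The paper avoids this entirely by \emph{never} trying to regularise $\psi^{(1)}$ at $\mathcal{H}^+$. It works region by region: in $\mathcal{A}^{\mathcal{H}}_{\gamma^{\mathcal{H}}}$ the horizon asymptotics come directly from Proposition~\ref{prop:asympradfieldnonzeroIH} applied to $\psi_0$ (since $H_0\neq 0$, exactly as for Type~\textbf{C}); in $\mathcal{A}^{\mathcal{I}}_{\gamma^{\mathcal{I}}}$ one uses $\psi^{(1)}$, which is smooth there, to get asymptotics for $LT^k\phi_0$ (Proposition~\ref{prop:improvedasymderphi}). The two regions are matched along $\gamma^{\mathcal{I}}$ via the splitting identity $\tfrac{D}{2}r\psi=rL\phi-r^2L\psi$: the term $rL\phi|_{\gamma^{\mathcal{I}}}$ is shown (using $\psi^{(1)}$) to be sub-leading, while $r^2L\psi|_{\gamma^{\mathcal{I}}}$ is controlled by integrating $L(r^2L\psi)=rLT\phi$ from $r=r_{\mathcal{I}}$ and invoking the improved decay $\int_{N^{\mathcal{I}}_\tau}r^2(LT^{k+1}\phi_0)^2\,dv\lesssim\tau^{-5-2k+\epsilon}$ from Proposition~\ref{prop:hoextraendecay} (available precisely because $I_0=0$ lengthens the $\mathcal{I}$-hierarchy to five). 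This is the content of Lemma~\ref{prop:improvedasympLpsiTypeA} and Proposition~\ref{prop:improvedasymphigamma}, and it is the step your approach is missing.
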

\vspace{-0.7cm}
\begin{theorem}[{\textbf{Asymptotics for Type \textbf{D} perturbations}}] Assume that the initial data of $\psi$ are of Type \textbf{D} and that $\psi$ solves the wave equation \eqref{eq:waveequation}. 
The following {global} estimate holds:
\begin{equation*}
\begin{split}
\Bigg|T^k\psi_0(u,v)&-4\left[ \frac{1}{\sqrt{D}}H_0^{(1)}[\psi]T^{k+1}\left(\frac{1}{u\cdot v}\right)+I_0[\psi]T^{k}\left(\frac{1}{v(u+2M-2M^2(r-M)^{-1})}\right)\right]\Bigg|\\
\leq&\: C\Bigg[\sqrt{E^{\epsilon}_{0, \mathcal{H}; k+1}[\psi]+\sum_{|\alpha|\leq 2}E_{1,\mathcal{H};k}^{\epsilon}[\Omega^{\alpha}\psi]}+P_{I_0,\beta;k}[\psi]+P_{H_0,1;k+1}[\psi]+I_0[\psi]+H_0^{(1)}[\psi] \Bigg]\\
&\cdot \left( v^{-1}u^{-1-k-\eta}+D^{-\frac{1}{2}}u^{-1}v^{-2-k-\eta}\right).
\end{split}
\end{equation*}
\label{dtheo}
\end{theorem}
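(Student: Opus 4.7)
The plan is to reduce Theorem \ref{dtheo} (Type \textbf{D}) to Theorem \ref{atheo} (Type \textbf{A}) via the Couch--Torrence duality \eqref{dual}, $\widetilde{\psi}(t,r^\ast,\omega)=\frac{M}{r-M}\psi(t,-r^\ast,\omega)$. The first observation is that if $\psi$ is of Type \textbf{D}, then $\widetilde{\psi}$ is a solution of \eqref{eq:waveequation} of Type \textbf{A}: indeed $H_0[\widetilde{\psi}]=I_0[\psi]\neq 0$ and $I_0[\widetilde{\psi}]=H_0[\psi]=0$, while by \eqref{h01} together with self-duality one has $I_0^{(1)}[\widetilde{\psi}]=H_0^{(1)}[\psi]$. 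Since the involution fixes the sphere and the symmetry assumption $(t,r^\ast)\in\Sigma_0\Rightarrow(t,-r^\ast)\in\Sigma_0$ is preserved by $r^\ast\mapsto-r^\ast$, the Cauchy data of $\widetilde{\psi}$ on $\Sigma_0$ is the pullback of the data of $\psi$ by the involution (and commutes with $\Omega^\alpha$).

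The next step is to apply Theorem \ref{atheo} to $\widetilde{\psi}$ at the reflected double-null point $(v,u)$, using that $r^\ast\mapsto-r^\ast$ swaps $u\leftrightarrow v$ and sends the radial coordinate to $r'=M+M^2/(r-M)$; in particular $r'-M=r'\sqrt{D'}=M^2/(r-M)$, $\frac{M}{r'\sqrt{D'}}=\frac{r-M}{M}$, and $u+4M-2r'=u+2M-2M^2/(r-M)$. Since $T=\partial_u+\partial_v$ is symmetric under $u\leftrightarrow v$, Theorem \ref{atheo} supplies
\begin{equation*}
T^k\widetilde{\psi}_0(v,u)=4H_0^{(1)}[\psi]\,T^{k+1}\!\left(\tfrac{1}{uv}\right)+4\tfrac{r-M}{M}I_0[\psi]\,T^{k}\!\left(\tfrac{1}{v(u+2M-2M^2(r-M)^{-1})}\right)+\mathrm{error}.
\end{equation*}
Inverting the duality via $\psi(u,v)=\tfrac{M}{r-M}\widetilde{\psi}(v,u)$---which follows from \eqref{dual} applied at $(v,u)$ together with $(r'-M)/M=M/(r-M)$---and multiplying the display by $\tfrac{M}{r-M}$ produces the main term $4\tfrac{M}{r-M}H_0^{(1)}[\psi]T^{k+1}(1/(uv))+4I_0[\psi]T^{k}(1/(v(u+2M-2M^2(r-M)^{-1})))$, which matches the main term of Theorem \ref{dtheo} via the elementary identity $\tfrac{1}{\sqrt{D}}-\tfrac{M}{r-M}=\tfrac{r-M}{r-M}=1$. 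The leftover contribution $4H_0^{(1)}[\psi]T^{k+1}(1/(uv))$ is bounded pointwise by $C|H_0^{(1)}[\psi]|((uv^{k+2})^{-1}+(u^{k+2}v)^{-1})$ and is absorbed into the asserted error $Cv^{-1}u^{-1-k-\eta}+CD^{-1/2}u^{-1}v^{-2-k-\eta}$ for suitably small $\eta>0$ in the forward region where $u,v$ are bounded below.

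The principal obstacle will be matching the initial-data norms across the duality. The $\mathcal{I}^+$--adapted quantities $E_{0,\mathcal{I};k+1}^{\epsilon}[\widetilde{\psi}]$, $E_{1,\mathcal{I};k}^{\epsilon}[\Omega^\alpha\widetilde{\psi}]$, $P_{I_0,\beta;k+1}[\widetilde{\psi}]$, $P_{H_0,1;k}[\widetilde{\psi}]$ appearing on the right-hand side of Theorem \ref{atheo} applied to $\widetilde{\psi}$ must be controlled by the $\mathcal{H}^+$--adapted quantities $E_{0,\mathcal{H};k+1}^{\epsilon}[\psi]$, $E_{1,\mathcal{H};k}^{\epsilon}[\Omega^\alpha\psi]$, $P_{I_0,\beta;k}[\psi]$, $P_{H_0,1;k+1}[\psi]$ from Theorem \ref{dtheo}. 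This reduces to the observations that under the involution (i) the decay weight $r^{-\ell}$ near $\mathcal{I}^+$ transforms, up to bounded factors of $r/M$, to the weight $((r-M)/M)^{\ell}$ near $\mathcal{H}^+$, and (ii) conjugation by the multiplicative factor $\tfrac{M}{r-M}$ together with the formulas $L=\partial_v+\tfrac{D}{2}Y$ and $\underline{L}=-\tfrac{D}{2}Y$ converts null derivatives of $\widetilde{\psi}$ into transversal-to-$\mathcal{H}^+$ derivatives of $\psi$ modulo lower-order terms already captured by the $\mathcal{H}^+$ norms. The tedious but routine bookkeeping of commutators of $\tfrac{M}{r-M}$ with $L,\underline{L},Y$, and of the induced shifts between $k$ and $k+1$ in the $P$--norm indices (reflecting that one derivative of $\widetilde{\psi}$ at $\mathcal{I}^+$ corresponds to an extra weight of $r-M$ at $\mathcal{H}^+$ for $\psi$), then yields the right-hand side of Theorem \ref{dtheo} from that of Theorem \ref{atheo}.
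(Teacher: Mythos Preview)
Your duality strategy is natural and gets the leading term correctly: the identity $\widetilde\phi(u,v)=\phi(v,u)$ makes the norm translation genuinely routine, and the absorption of the leftover $4H_0^{(1)}T^{k+1}(1/(uv))$ is fine. There is, however, a real issue with the \emph{error bound}. Pulling Theorem~\ref{atheo}'s error through the involution and multiplying by $M/(r-M)$ gives
\[
\frac{M}{r-M}\,u^{-1}v^{-2-k-\eta}\;+\;D^{-1/2}\,v^{-1}u^{-1-k-\eta},
\]
which is \emph{not} the error asserted in Theorem~\ref{dtheo}, namely $v^{-1}u^{-1-k-\eta}+D^{-1/2}u^{-1}v^{-2-k-\eta}$. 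The second term you obtain carries an extra factor $D^{-1/2}$ compared to Theorem~\ref{dtheo}'s first term, and this matters near $\mathcal{H}^+$: on $\gamma^{\mathcal{H}}_{\alpha}$ with $\eta<\alpha<1$ one has $u\sim v$, $D^{-1/2}\sim v^{\alpha}$, so your error is $\sim v^{\alpha-2-k-\eta}$ while the target error is $\sim v^{-2-k-\eta}$ and the leading term is $\sim v^{-2-k}$. Thus your black-box dualization fails to deliver the asymptotics on $\gamma^{\mathcal{H}}_{\alpha}$. The point is that Theorems~\ref{atheo} and~\ref{dtheo} as stated are \emph{not} exact Couch--Torrence duals of one another: each error form is sharpened toward its own asymptotic end.

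The paper avoids this by dualizing at the level of the intermediate propositions rather than the final statement (Propositions~\ref{prop:improvedasymderphiD}--\ref{prop:mainasymptypeD}). The extra ingredient that restores the $D^{-1/2}$-free term $v^{-1}u^{-1-k-\eta}$ is a \emph{direct} (non-dualized) input: the improved decay of $\underline{L}\phi_0$ near $\mathcal{H}^+$ coming from the $(r-M)^{-p}$ hierarchy (Proposition~\ref{prop:improvedasymderphiD}, applied to $\phi_0$ itself), used together with the splitting
\[
M^2r^{-1}T^k\psi_0\;=\;M\partial_r\bigl((r-M)T^k\psi_0\bigr)\;+\;2Mr(r-M)^{-1}\underline{L}T^k\phi_0
\]
on $\gamma^{\mathcal{H}}_\alpha$ (Proposition~\ref{prop:improvedasymphigammaB}). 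The first piece \emph{is} the dual of $r^2L\psi$ and is handled by Lemma~\ref{prop:improvedasympLpsiTypeD}; the second piece is controlled directly and is what your black-box approach misses.
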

\vspace{-0.7cm}
\begin{theorem}[{\textbf{Asymptotics for Type \textbf{B} perturbations}}]
Assume that the initial data of $\psi$ are of Type \textbf{B} and that $\psi$ solves the wave equation \eqref{eq:waveequation}. The following {global} estimate holds:
\begin{equation}
\label{eq:asympsizeroIHtheo}
\begin{split}
\Bigg|T^k\psi_0(u,v)&-4\left(I_0^{(1)}[\psi]+ \frac{M}{r\sqrt{D}}H_0^{(1)}[\psi]\right)T^{k+1}\left(\frac{1}{v\cdot u}\right)\Bigg|\\
\leq&\: C\left(\sqrt{E_{0, \mathcal{H};k+1}^{\epsilon}[\psi]+E_{0, \mathcal{I};k+1}^{\epsilon}[\psi]+\sum_{|\alpha|\leq 2}E_{2;k+1}^{\epsilon}[\Omega^{\alpha}\psi]}+I_0^{(1)}[\psi]+P_{I_0,\beta;k+1}[\psi]\right)v^{-1}u^{-2-k-\eta}\\
&+C\left(\sqrt{E_{0, \mathcal{H};k+1}^{\epsilon}[\psi]+E_{0, \mathcal{I};k+1}^{\epsilon}[\psi]+\sum_{|\alpha|\leq 2}E_{2;k+1}^{\epsilon}[\Omega^{\alpha}\psi]}+H_0^{(1)}[\psi]+P_{H_0,1;k+1}[\psi]\right)D^{-\frac{1}{2}}u^{-1}v^{-2-k-\eta}.
\end{split}
\end{equation}
\label{btheo}
\end{theorem}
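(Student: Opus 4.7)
}

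The plan is to reduce Theorem \ref{btheo} to Theorem \ref{ctheo} via the regular time inversion theory outlined in Section \ref{sec:GeometricOriginOfTheNewHair}. Since initial data of Type \textbf{B} satisfy $H_0[\psi]=0$ and $I_0[\psi]=0$, both obstructions to inverting $T$ vanish, so I will construct a smooth solution $\psi^{(1)}$ to the wave equation \eqref{eq:waveequation} on $J^+(\Sigma_0)$ satisfying $T\psi^{(1)}=\psi$. The construction of $\psi^{(1)}\big|_{\Sigma_0}$ proceeds in two pieces: near $\mathcal{I}^+$, the vanishing of $I_0[\psi]$ together with decay of $\psi$ (compact support suffices) allows one to integrate the relation $T\psi^{(1)}=\psi$ outgoing along characteristics and close off the prescription of $\partial_r(r\psi^{(1)})$ to leading order in $r$, exactly as in the sub-extremal construction of \cite{paper2}; near $\mathcal{H}^+$, one performs the dual construction using the Couch--Torrence map $\psi\mapsto\widetilde{\psi}$ of \eqref{dual}, exploiting that $H_0[\psi]=0$ is equivalent to $I_0[\widetilde{\psi}]=0$. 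Away from the boundaries one simply prescribes $\psi^{(1)}$ by integrating the wave equation \eqref{weern} radially. By construction, the horizon charge and Newman--Penrose constant of $\psi^{(1)}$ are
\[
H_0[\psi^{(1)}]=H_0^{(1)}[\psi], \qquad I_0[\psi^{(1)}]=I_0^{(1)}[\psi],
\]
so $\psi^{(1)}$ is generically of Type \textbf{C}.

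Having constructed $\psi^{(1)}$, I would apply Theorem \ref{ctheo} to $\psi^{(1)}$ with the shifted index $k+1$ in place of $k$. Using $T^{k+1}\psi^{(1)}_0=T^k\psi_0$, the leading-order term $4\bigl(I_0[\psi^{(1)}]+\tfrac{M}{r\sqrt{D}}H_0[\psi^{(1)}]\bigr)T^{k+1}\bigl(\tfrac{1}{uv}\bigr)$ from the Type \textbf{C} asymptotic expansion becomes exactly the leading-order term in \eqref{eq:asympsizeroIHtheo}. The error terms in the Type \textbf{C} estimate for $\psi^{(1)}$ involve $E^{\epsilon}_{0;k+2}[\psi^{(1)}]$, $\sum_{|\alpha|\leq 2}E^{\epsilon}_{1;k+1}[\Omega^\alpha\psi^{(1)}]$, and the quantities $I_0[\psi^{(1)}]$, $P_{I_0,\beta;k+1}[\psi^{(1)}]$, $H_0[\psi^{(1)}]$, $P_{H_0,1;k+1}[\psi^{(1)}]$.

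The bulk of the remaining work is the translation of these Type \textbf{C} norms of $\psi^{(1)}$ back into Type \textbf{B} norms of $\psi$. Concretely, I would prove bounds of the form
\begin{equation*}
\begin{split}
E^{\epsilon}_{0;k+2}[\psi^{(1)}]+\sum_{|\alpha|\leq 2}E^{\epsilon}_{1;k+1}[\Omega^{\alpha}\psi^{(1)}]\lesssim\ &E_{0,\mathcal{H};k+1}^{\epsilon}[\psi]+E_{0,\mathcal{I};k+1}^{\epsilon}[\psi]+\sum_{|\alpha|\leq 2}E_{2;k+1}^{\epsilon}[\Omega^{\alpha}\psi]\\
&+\bigl(I_0^{(1)}[\psi]\bigr)^2+\bigl(H_0^{(1)}[\psi]\bigr)^2,
\end{split}
\end{equation*}
as well as the matching of the polynomial weighted terms
\[
P_{I_0,\beta;k+1}[\psi^{(1)}]\lesssim P_{I_0,\beta;k+1}[\psi],\qquad P_{H_0,1;k+1}[\psi^{(1)}]\lesssim P_{H_0,1;k+1}[\psi],
\]
and the identities $I_0[\psi^{(1)}]=I_0^{(1)}[\psi]$, $H_0[\psi^{(1)}]=H_0^{(1)}[\psi]$. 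These bounds follow by carefully unpacking the prescription of $\psi^{(1)}$ on $\Sigma_0$ in the near-horizon region $\mathcal{A}^{\mathcal{H}}$, the near-infinity region $\mathcal{A}^{\mathcal{I}}$ and the bounded region $\mathcal{B}$, computing one extra $r$-weighted derivative using the wave equation at the expense of an extra $T$-derivative on $\psi$, and invoking the weighted hierarchies of Sections \ref{sec:rweightest}--\ref{sec:extendhier}.

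The hard part will be step three: establishing the weighted energy estimates for $\psi^{(1)}$ uniformly in the initial data norms of $\psi$. The difficulty is twofold. First, one $T$-integration of $\psi$ costs one power of $r$ near $\mathcal{I}^+$ and introduces a $(r-M)^{-1}$ singularity near $\mathcal{H}^+$; closing these estimates forces higher weights and higher differentiability on $\psi$ than would naively seem necessary, explaining the appearance of $E^{\epsilon}_{2;k+1}$ and the separate $\mathcal{H}$- and $\mathcal{I}$-adapted norms in the right-hand side of \eqref{eq:asympsizeroIHtheo}. Second, because both obstructions $H_0$ and $I_0$ vanish \emph{simultaneously}, the construction at the horizon and at infinity must be matched through $\mathcal{B}$ without generating a spurious constant mode; controlling this matching is exactly where the quantities $P_{I_0,\beta;k+1}[\psi]$ and $P_{H_0,1;k+1}[\psi]$, capturing the subleading $r$-behavior of the initial data, enter as right-hand-side contributions. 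Once the norm-propagation estimate for $\psi^{(1)}\mapsto\psi$ is established, the final asymptotic \eqref{eq:asympsizeroIHtheo} is an immediate consequence of the Type \textbf{C} theorem applied to $\psi^{(1)}$.
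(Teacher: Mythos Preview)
Your proposal is correct and follows essentially the same approach as the paper: construct the regular time integral $\psi^{(1)}$ (which is smooth because both $H_0[\psi]$ and $I_0[\psi]$ vanish), observe that $H_0[\psi^{(1)}]=H_0^{(1)}[\psi]$ and $I_0[\psi^{(1)}]=I_0^{(1)}[\psi]$, apply the Type~\textbf{C} result at level $k+1$, and translate the resulting norms of $\psi^{(1)}$ back to norms of $\psi$.

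There is one organizational difference worth noting. The paper does \emph{not} time-invert the full solution $\psi$; it first splits $\psi=\psi_0+\psi_{\geq 1}$ and time-inverts only the spherical mean $\psi_0$ (see Definition~\ref{def:timeint}, which is stated only for spherically symmetric solutions). The higher angular modes $\psi_{\geq 1}$ are handled directly by the pointwise decay estimate \eqref{eq:pdecaylgeq1v3b}, which already gives decay $(1+\tau)^{-7/2-k+\epsilon}$ for $\sqrt{D}\,T^k\psi_{\geq 1}$ in terms of $E^{\epsilon}_{2;k+1}[\Omega^\alpha\psi]$---fast enough to be subsumed in the error. This explains the appearance of $E^{\epsilon}_{2;k+1}$ in \eqref{eq:asympsizeroIHtheo} without ever needing to control $E^{\epsilon}_{1;k+1}[\Omega^\alpha\psi^{(1)}]$ as you propose. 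Correspondingly, the norm translation step you flag as ``the hard part'' is in fact quite short in the paper (Lemma~\ref{lm:estPnormpsi1}), since it only concerns the spherically symmetric quantities $E^{\epsilon}_{0;\cdot}$, $P_{I_0,\cdot}$, $P_{H_0,\cdot}$ and uses the explicit integral formulas for $\psi_0^{(1)}$ from Propositions~\ref{prop:explicitexprtimeint} and~\ref{prop:explicitexprtimeint2}. Your route of time-inverting the full $\psi$ would also work, but requires setting up the time integral for $\psi_{\geq 1}$ and proving the higher-mode norm bound you wrote down, which is extra work the paper avoids.
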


\begin{theorem}[{\textbf{Logarithmic corrections for Type \textbf{C} perturbations}}] Assume that the initial data of $\psi$ are spherically symmetric and of Type \textbf{C} and that $\psi$ solves the wave equation \eqref{eq:waveequation}. Then, the following estimate holds on $\I$
\begin{equation}
\label{eq:2ndasymphinullinfintro}
\left|r\psi|_{\mathcal{I}^+}(u)-2I_0[\psi]u^{-1}+4MI_0[\psi]u^{-2}\log u\right|\leq C\left(I_0[\psi]+H_0[\psi]+\sqrt{E^{\epsilon}_{0; 1}[\psi]}+P_{\mathcal{H}}[\psi]+P_{\mathcal{I}}[\psi]\right)u^{-2}
\end{equation}
and the following estimate holds on $\mathcal{H}^{+}$
\begin{equation}
\label{eq:2ndasymphihointro}
\left|r\psi|_{\mathcal{H}^+}(v)-2H_0[\psi]v^{-1}+4MH_0[\psi]v^{-2}\log v\right|\leq  C\left(I_0[\psi]+H_0[\psi]+\sqrt{E^{\epsilon}_{0;1}[\psi]}+P_{\mathcal{H}}[\psi]+P_{\mathcal{I}}[\psi]\right)v^{-2}.
\end{equation}
\label{logtheo}
\end{theorem}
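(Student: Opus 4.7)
The plan is to refine the leading-order asymptotic expansion of Theorem 4.1 and extract the next-to-leading logarithmic correction at both asymptotic ends. I work with the rescaled radiation field $\phi := r\psi$, which under spherical symmetry satisfies the reduced equation $\partial_u \partial_v \phi = -\tfrac{1}{4} V_0(r) \phi$ with potential $V_0(r) = 2M(r-M)^3/r^6$. This potential has expansions $V_0 \sim 2M r^{-3}$ as $r \to \infty$ and $V_0 \sim 2 M^{-5}(r-M)^3$ as $r \to M$, so it decays at both ends but only polynomially at $\mathcal{I}^{+}$, which is the geometric source of the logarithm. Both estimates \eqref{eq:2ndasymphinullinfintro} and \eqref{eq:2ndasymphihointro} will be derived in parallel using the Couch--Torrence duality $\psi \mapsto \widetilde{\psi}$ of Section 4.1: since this duality swaps $\mathcal{H}^{+}$ with $\mathcal{I}^{+}$ and $H_0$ with $I_0$, it suffices to prove the null-infinity estimate and then apply it to $\widetilde{\psi}$ to recover the horizon estimate, using the symmetry condition on $\Sigma_0$ imposed in Section 2.2 to ensure that the dual initial-data norms are comparable to the original ones.

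For the null-infinity estimate, the first step is to exploit the Newman--Penrose conservation law, which gives $\partial_v \phi \to I_0/(2r^2)$ as $r \to \infty$ along constant-$u$ hypersurfaces, together with quantitative remainders coming from the weighted $r^p$-hierarchies of Sections 6--7. Integrating this relation from a distinguished constant-$v$ hypersurface $\{v = v_\ast\}$ out to $\mathcal{I}^{+}$ and changing variables by $dv = 2 D^{-1} dr$ produces integrals with integrand $(r-M)^{-2}$; the boundary value of $\phi$ on $\{v = v_\ast\}$ is governed by Theorem 4.1 at $k=0$, namely $\phi \sim 4 I_0/(uv) + 4 M H_0/(r\sqrt{D}\, u v)$ with controlled errors. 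The key computation is that $v = u + 2 r^{*}(r) = u + 2r - 2M + 4M \log\bigl((r-M)/M\bigr) + O((r-M)^{-1})$; inserting this expansion into $r/v$ and expanding in powers of $1/v$ produces the $4M\log v$ correction, which after matching to the outgoing direction at $\mathcal{I}^{+}$ yields a $u^{-2}\log u$ contribution with coefficient exactly $-4M I_0$.

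To render this rigorous I would iterate Theorem 4.1 at the level of the time derivative: the $k=1$ case provides $T\psi_0 \sim -4I_0\, T(1/(u v))$ with improved error bounds, and integrating $T(r\psi) = \partial_u(r\psi) + \partial_v(r\psi)$ backward in $u$ along $\mathcal{I}^{+}$ after taking the limit $v \to \infty$ produces an ODE for $(r\psi)|_{\mathcal{I}^{+}}(u)$ whose leading inhomogeneity is $-2 I_0\, u^{-2}$ plus a source of order $M I_0\, u^{-3}\log u$ generated by the $V_0$ term. Solving this ODE with the boundary condition $(r\psi)|_{\mathcal{I}^{+}} \to 0$ as $u \to \infty$ gives \eqref{eq:2ndasymphinullinfintro}. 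The initial-data quantities $E^{\epsilon}_{0;1}[\psi]$, $P_{\mathcal{H}}[\psi]$ and $P_{\mathcal{I}}[\psi]$ control the error terms exactly as they appear on the right-hand side of the theorem, and the horizon estimate \eqref{eq:2ndasymphihointro} then follows by applying the full argument to $\widetilde{\psi}$.

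The principal obstacle I foresee is isolating the precise logarithmic coefficient. In Theorem 4.1 both the $I_0$-term and the $M H_0/(r\sqrt{D})$-term decay at rate $u^{-1}v^{-1}$, and one must verify that only the first yields a logarithmic correction at $\mathcal{I}^{+}$. This requires the sharp cancellation $M/(r\sqrt{D}) = M/(r-M) \to 0$ as $r \to \infty$, together with the vanishing of $V_0$ to third order at the horizon, which together prevent the $H_0$-piece from generating any $\log u$ contribution far from the horizon. This cancellation is the exact analogue, in the extremal setting, of the sub-extremal phenomenon in which only the null-infinity obstruction to $T$-inversion produces logs at $\mathcal{I}^{+}$, and it is here that the assumed spherical symmetry is essential, since higher angular modes would introduce additional $\ell(\ell+1)$-contributions from the Regge--Wheeler potential that do not obey this cancellation.
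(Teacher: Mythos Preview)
Your overall strategy—feed the leading Type \textbf{C} asymptotics back into the reduced wave equation and extract the next-order correction—is the right one and matches the paper. However, the specific mechanism you propose for generating the logarithm is off, and neither of your two suggested routes would close as stated.

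The actual source of the $u^{-2}\log u$ term is the double integration of $\partial_u\partial_v\phi$. In the paper's approach (Section~\ref{sec:hoasymp}), one rewrites the equation as $\partial_u\partial_v\phi = \bigl[-2M(v-u)^{-2}+O\bigl((v-u)^{-3+\eta}\bigr)\bigr]\,\psi$ in $\mathcal{A}^{\mathcal{I}}$ using Lemma~\ref{lm:relationruv}, substitutes the Type \textbf{C} asymptotic $\psi\sim 4I_0/(uv)$ from Proposition~\ref{prop:asympsi} (the $H_0/(r\sqrt{D})$-piece contributes an extra factor $(v-u)^{-1}$ and is therefore lower order here—this is precisely the cancellation you worried about in your last paragraph, and it is automatic), and obtains $\partial_u\partial_v\phi\approx -8MI_0\,(uv)^{-1}(v-u)^{-2}$. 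Integrating this in $u$ from $u_0$ produces, via the partial-fraction identity $\frac{1}{u(v-u)^2}=\frac{1}{v^2u}+\frac{1}{v^2(v-u)}+\frac{1}{v(v-u)^2}$, the explicit logarithmic contributions in $\partial_v\phi$ (Proposition~\ref{prop:asymdvphi}). One then integrates $\partial_v\phi$ in $v$ starting from the curve $\{r=r_{\mathcal{I}}\}$, where Proposition~\ref{prop:asympsi} already controls $|r\psi|\lesssim u^{-2}$, and lets $v\to\infty$ (Proposition~\ref{prop:2ndasymphi}). Your claim that the logarithm arises from expanding $v=u+2r_*$ inside $r/v$ is a red herring: that expansion is used only to convert $r^{-k}$ weights into $(v-u)^{-k}$ weights, but the $\log u$ in the final answer comes from the $u$-integration of $(v-u)^{-2}u^{-1}$, not from the $\log$ buried in $r_*$.

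Your alternative route via the $k=1$ case of Theorem~\ref{ctheo} cannot work directly: that theorem controls $T\psi_0$ only up to errors $O(v^{-1}u^{-2-\eta})+O(D^{-1/2}u^{-1}v^{-2-\eta})$, which at $\mathcal{I}^+$ yields merely $\phi|_{\mathcal{I}^+}=2I_0 u^{-1}+O(u^{-1-\eta})$—not enough resolution to see $u^{-2}\log u$. The whole point of returning to $\partial_u\partial_v\phi$ is that the potential is known \emph{exactly}, so multiplying by the leading-order $\psi$ gives a source known one full order better than $\phi$ itself. Your Couch--Torrence reduction of \eqref{eq:2ndasymphihointro} to \eqref{eq:2ndasymphinullinfintro} is valid; the paper instead repeats the argument directly in $\mathcal{A}^{\mathcal{H}}$ with $u\leftrightarrow v$ and $I_0\leftrightarrow H_0$, which amounts to the same thing.
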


\begin{theorem}[{\textbf{Logarithmic corrections for Type \textbf{B} perturbations}}] Assume that the initial data of $\psi$ are spherically symmetric and of Type \textbf{B} and that $\psi$ solves the wave equation \eqref{eq:waveequation}. Then, the following estimate holds on $\I$
\begin{equation}
\label{eq:2ndasymphinullinftimeintintro}
\begin{split}
\Big|&r\psi|_{\mathcal{I}^+}(u)+2I_0^{(1)}[\psi]u^{-2}-8MI_0^{(1)}[\psi]u^{-3}\log u\Big|\\
\leq&\: C\left(I_0^{(1)}[\psi]+H_0^{(1)}[\psi]+\sqrt{E^{\epsilon}_{0,\mathcal{H}; 1}[\psi]+E^{\epsilon}_{0,\mathcal{I}; 1}[\psi]}+P_{\mathcal{H}, T}[\psi]+P_{\mathcal{I}, T}[\psi]\right)u^{-3},
\end{split}
\end{equation}
and the following estimate holds on $\mathcal{H}^{+}$
\begin{equation}
\label{eq:2ndasymphihotimintintro}
\begin{split}
\Big|&r\psi|_{\mathcal{H}^+}(v)+2H_0^{(1)}[\psi]v^{-2}-4MH_0^{(1)}[\psi]v^{-3}\log v\Big|\\
\leq&\:  C\left(I_0^{(1)}[\psi]+H_0^{(1)}[\psi]+\sqrt{E^{\epsilon}_{0,\mathcal{H}; 1}[\psi]+E^{\epsilon}_{0,\mathcal{I}; 1}[\psi]}+P_{\mathcal{H}, T}[\psi]+P_{\mathcal{I}, T}[\psi]\right)v^{-3}.
\end{split}
\end{equation}
\label{logtheoB}
\end{theorem}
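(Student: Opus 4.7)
The plan is to deduce the theorem from Theorem \ref{logtheo} by means of the time-inversion construction. Since the initial data are of Type \textbf{B}, both conserved charges vanish ($H_0[\psi]=I_0[\psi]=0$), and the finiteness of $I_0^{(1)}[\psi]$ and $H_0^{(1)}[\psi]$ is guaranteed by the compact support of the data. The spherical symmetry hypothesis together with the time-inversion theory of Section \ref{sec:GeometricOriginOfTheNewHair} produces a spherically symmetric solution $\psi^{(1)}$ to \eqref{eq:waveequation} with $T\psi^{(1)}=\psi$, and by construction $I_0[\psi^{(1)}]=I_0^{(1)}[\psi]$ and $H_0[\psi^{(1)}]=H_0^{(1)}[\psi]$. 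Thus $\psi^{(1)}$ is of Type \textbf{C}, and Theorem \ref{logtheo} applies to $\psi^{(1)}$, yielding the log-corrected asymptotic expansions of $r\psi^{(1)}|_{\mathcal{H}^+}$ and $r\psi^{(1)}|_{\mathcal{I}^+}$ expressed in terms of $H_0^{(1)}[\psi]$ and $I_0^{(1)}[\psi]$.

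The next step is to translate these expansions back to $\psi=T\psi^{(1)}$. Along $\mathcal{H}^+$, where $r\equiv M$ is constant, one has $r\psi|_{\mathcal{H}^+}(v)=\partial_v\bigl(r\psi^{(1)}|_{\mathcal{H}^+}\bigr)(v)$. Along $\mathcal{I}^+$, one writes $T=\partial_u+\partial_v$ in double-null coordinates and uses the identity $rT\psi^{(1)}=\partial_u(r\psi^{(1)})+\partial_v(r\psi^{(1)})$. Passing to the limit $v\to\infty$ at fixed $u$, the term $\partial_v(r\psi^{(1)})$ decays as $I_0^{(1)}[\psi]\,r^{-2}$ by the very definition of the Newman--Penrose constant and so drops out, giving $r\psi|_{\mathcal{I}^+}(u)=\partial_u\bigl(r\psi^{(1)}|_{\mathcal{I}^+}\bigr)(u)$. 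Formally differentiating the log-corrected expansion of $r\psi^{(1)}$ from Theorem \ref{logtheo} then produces the leading $v^{-2}$, $u^{-2}$ terms and log-corrections appearing on the left-hand sides of \eqref{eq:2ndasymphinullinftimeintintro}--\eqref{eq:2ndasymphihotimintintro}.

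The main obstacle is precisely the validity of this differentiation step: a pointwise bound of the form $|r\psi^{(1)}|_{\mathcal{H}^+}(v)-\mathrm{leading}|\leq C v^{-2}$ does \emph{not} by itself imply an $O(v^{-3})$ bound on the derivative of the remainder. To handle this one must strengthen Theorem \ref{logtheo} to a statement in which the $O(v^{-2})$ (respectively $O(u^{-2})$) remainder is itself differentiable with derivative of improved order. Concretely, I would iterate the $r^p$-weighted hierarchies of Sections \ref{sec:rweightest}--\ref{sec:extendhier} one further step on $\psi^{(1)}$, identify the next order in the expansion of $r\psi^{(1)}$ along $\mathcal{H}^+$ and $\mathcal{I}^+$ so that the remainder becomes $O(v^{-2-\eta})$ modulo an explicitly tracked subleading term, and then propagate this improvement to the boundary hypersurfaces using the pointwise estimates of Section \ref{sec:decayest}. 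Once such a differentiable expansion is in place, taking $\partial_v$ (respectively $\partial_u$) produces the desired log-corrected asymptotics with the precise coefficients.

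An alternative that I anticipate is to run the log-correction argument of Theorem \ref{logtheo} directly at the level of $\psi$, replicating the $r^p$-hierarchy step that generates the logarithmic divergence in that proof but with the constants $H_0^{(1)}[\psi]$ and $I_0^{(1)}[\psi]$ now appearing as the natural sources, since they dominate the decay of $\psi$ after the time inversion. The Couch--Torrence duality $\psi\mapsto\widetilde{\psi}$ can be invoked to trade the horizon estimate \eqref{eq:2ndasymphihotimintintro} for a null-infinity estimate on the dual field and vice versa, which reveals the parallel structure of the two expansions and is consistent with the near-horizon/near-infinity symmetry emphasized throughout the paper. The delicate point in either route is the extraction of the precise log coefficient, which depends on the extremal near-horizon weights and cannot be read off from sub-extremal analogues.
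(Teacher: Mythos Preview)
Your overall strategy---construct the regular time integral $\psi^{(1)}$, observe that it is Type~\textbf{C} with $I_0[\psi^{(1)}]=I_0^{(1)}[\psi]$ and $H_0[\psi^{(1)}]=H_0^{(1)}[\psi]$, and reduce to the log-correction machinery of Theorem~\ref{logtheo}---is correct and matches the paper's approach. The paper's own proof (Proposition~\ref{prop:2ndasymphitimeint}) is extremely terse: it records that $\psi^{(1)}$ is smooth for Type~\textbf{B} data (Proposition~\ref{eq:smoothextTypeBtimeint}) and then defers the remaining work to ``the arguments in the proof of Proposition~3.3 of \cite{logasymptotics}'' applied to $T\psi^{(1)}=\psi$.

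What that referenced argument does is essentially your \emph{alternative} route rather than your primary one. Instead of differentiating the final pointwise expansion of $r\psi^{(1)}|_{\mathcal{H}^+}$ and $r\psi^{(1)}|_{\mathcal{I}^+}$ (which, as you correctly flag, fails without separate control of the remainder's derivative), one re-runs the PDE argument of Propositions~\ref{prop:asymdvphi}--\ref{prop:2ndasymphi} one level down in decay. Concretely: feed the first-order Type~\textbf{B} asymptotics of $\psi_0$ from Corollary~\ref{cor:asympsizeroIH}, in which $I_0^{(1)}$ and $H_0^{(1)}$ already appear as the leading coefficients, into the identity $\partial_u\partial_v(r\psi_0)=-\tfrac{1}{4r}DD'\,\phi_0$; integrating once in $u$ (resp.\ $v$) gives log-corrected asymptotics for $\partial_v(r\psi_0)$ in $\mathcal{A}^{\mathcal{I}}$ (resp.\ $\partial_u(r\psi_0)$ in $\mathcal{A}^{\mathcal{H}}$) at the $v^{-3}$, $u^{-3}$ level, and integrating again along the outgoing (resp.\ ingoing) cones and restricting to the boundary yields \eqref{eq:2ndasymphinullinftimeintintro}--\eqref{eq:2ndasymphihotimintintro}. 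The smoothness of $\psi^{(1)}$ enters only to guarantee that the initial data norms $E^{\epsilon}_{0,\mathcal{H};1}$, $E^{\epsilon}_{0,\mathcal{I};1}$, $P_{\mathcal{H},T}$, $P_{\mathcal{I},T}$ are finite. Your primary route---strengthening Theorem~\ref{logtheo} to a differentiable remainder for $\psi^{(1)}$ by iterating the hierarchy once more---would also succeed, but it amounts to doing the above plus an extra integration, so it is slightly less direct than what the paper invokes.
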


\vspace{-0.5cm}
\noindent We split $\psi=\psi_0+\psi_{\geq 1}$ and prove the appropriate decay estimates for $\psi_{\geq 1}$ in Section \ref{sec:pdecayest}. We can then replace $\psi$ with $\psi_0$ in the theorem statements: Theorem \ref{ctheo} is proved in Section \ref{sec:asympnonzeroconst}, Theorem \ref{atheo} is proved in Section \ref{sec:asympzeroconst} and Theorems \ref{dtheo} and \ref{btheo} are proved in Section \ref{sec:AsymptoticsForTypeDPerturbations}. Finally, Theorem \ref{logtheo} and \ref{logtheoB} are proved in Section \ref{sec:hoasymp}. 

\subsection{Overview of techniques}
\label{sec:OverviewOfTechniques}
In this section we will give an overview of the main steps and methods involved in proving the theorems stated in Section \ref{sec:StatementsOfTheTheorems}. We will moreover highlight the key new ideas that play a role in the proofs. 
\subsubsection{The zeroth step}
\label{sec:TheZerothStep}

Deriving the precise late-time asymptotics requires obtaining decay rates for weighted energy fluxes and pointwise norms that are as sharp as possible. Our strategy is based on the integrated $r^{p}$-weighted energy decay approach of Dafermos--Rodnianski \cite{newmethod} and its extension presented in \cite{paper1}.  The main idea is to derive energy decay by first establishing boundedness for suitable (weighted) \textit{spacetime} integrals. For ERN, the ``zeroth'' step is the Morawetz estimate of the form (see Appendix \ref{sec:EnergyBounds})
\begin{equation}
\int_{\tau_1}^{\tau_2}\int_{\Sigma_{\tau}}\left(1-\frac{M}{r}\right)^{\sigma_1}\cdot \frac{1}{r^{\sigma_2}}\cdot J^{T}[\psi]  \, d\tau\  \lesssim \ \int_{\Sigma_{\tau_1} } J^{T}[\psi]+J^{T}[T\psi],
\label{morasche}
\end{equation}
with $\sigma_1,\sigma_2>2$ sufficient large constants. Here, $J^{T}[\psi]$ denotes the standard $T$-energy current through $\Sigma_{\tau}$. From now on, if the volume form is missing in the integrals, it is implied that we consider the standard volume form with respect to the induced metric on the corresponding hypersurface. The higher-order terms on the right hand side account for \textit{the high-frequency trapping effect on the photon sphere} at $\{r=2M\}$. The $r^{-\sigma_2}$ degenerate coefficient is related to the asymptotic flatness of the spacetime and is present in the analogous estimate for Minkowski spacetime. On the other hand, the degenerate factor $(r-M)^{\sigma_1}$ accounts for \textit{the global trapping effect on the extremal event horizon}, a feature characteristic to ERN (see Section \ref{sec:TheTrappingEffect}). 

Clearly, one needs to remove the degenerate factors from \eqref{morasche} in order to prove decay for the energy flux 
\begin{equation}
\mathcal{E}^{T}_{\Sigma_{\tau}}[\psi]:= \int_{\Sigma_{\tau}}\!J^{T}[\psi].
\label{tfluxsigmatau}
\end{equation} Dafermos and Rodnianski \cite{newmethod} and subsequently Moschidis \cite{moschidis1} showed that the weight at infinity $r^{-\sigma_2}$ can be removed for general asymptotically flat spacetimes by introducing appropriate \textit{growing} $r$ weights on the right hand side yielding \textit{a hierarchy of two $r$-weighted estimates}. In view of the degenerate factors both at the horizon and at infinity in the Morawetz estimate \eqref{morasche} on ERN, one needs to obtain \textit{an analogue of the Dafermos--Rodnianski hierarchy both at the near-infinity region $\mathcal{A}^{\mathcal{I}}$ and at the near-horizon region $\mathcal{A}^{\mathcal{H}}$ }(see Section \ref{sec:TheHyperboloidalFoliation} for the relevant definitions). This was accomplished in \cite{aretakis2}. We denote
\[N_{\tau}^{\mathcal{I}}= \Sigma_{\tau}\cap \mathcal{A}^{\mathcal{I}}, \ \ \text{ and }  \ \ N_{\tau}^{\mathcal{H}}= \Sigma_{\tau}\cap \mathcal{A}^{\mathcal{H}}. \]
\vspace{-0.4cm}
  \begin{figure}[H]
	\begin{center}
				\includegraphics[scale=0.2]{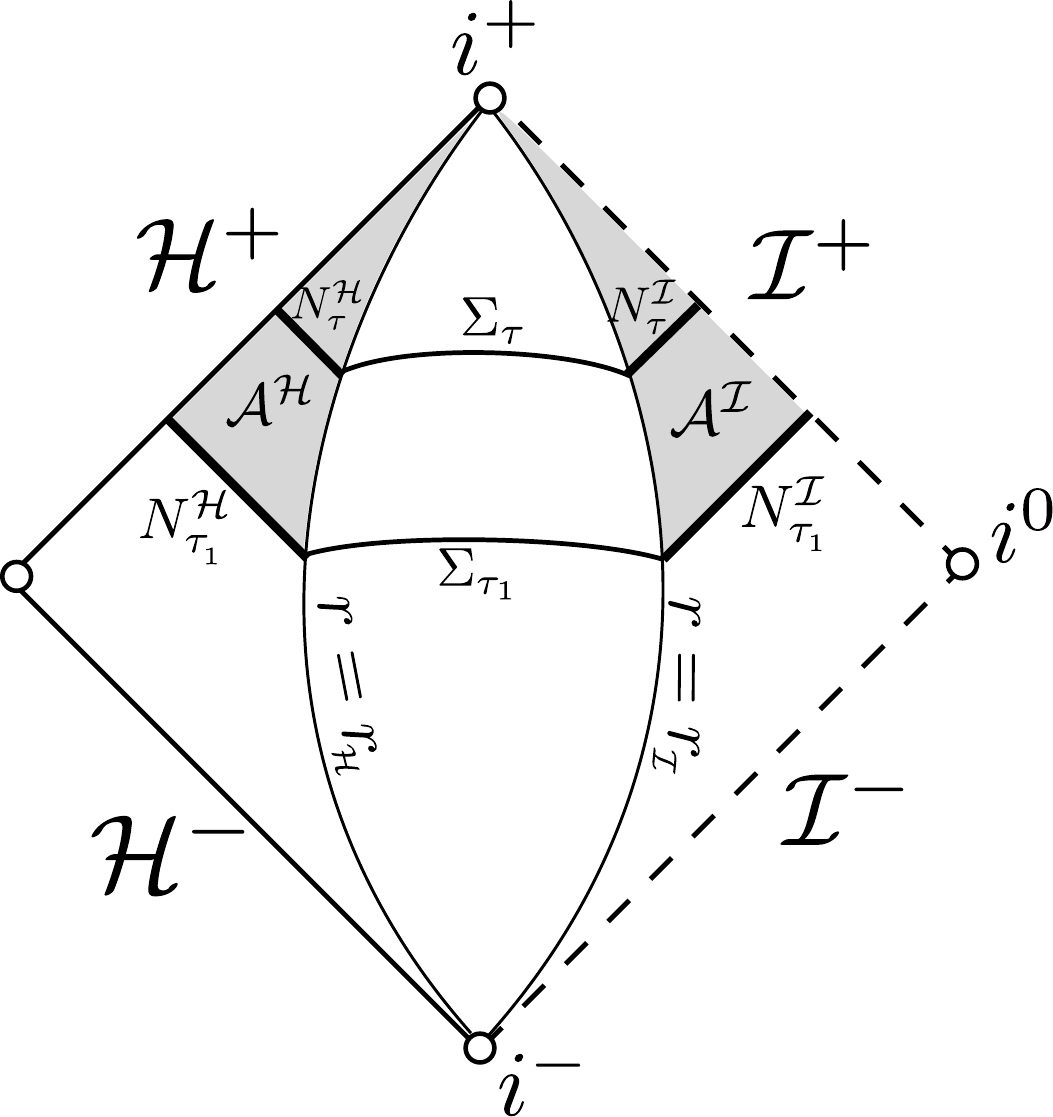}
\end{center}
\vspace{-0.4cm}
\caption{The hypersurfaces $N_{\tau}^{\mathcal{H}}$ and $N_{\tau}^{\mathcal{I}}$.}
	\label{fig:interiop455ntau}
\end{figure}
\vspace{-0.3cm}
\noindent The following  $\I-$\textbf{localized hierarchy} holds in $\mathcal{A}^{\mathcal{I}}$ for all $0\leq \tau_1<\tau_2$:
\begin{equation}
\begin{split}
\int_{\tau_1}^{\tau_2} \left[\int_{{N}^{\mathcal{I}}_{\tau}} J^{T}[\psi]\right] d\tau &\lesssim \int_{{N}^{\mathcal{I}}_{\tau_1}} r\cdot (\partial_v(r\psi))^2\,d\omega dv+\text{l.o.t.},\\
\int_{\tau_1}^{\tau_2} \left[\int_{{N}^{\mathcal{I}}_{\tau}} r\cdot (\partial_v(r\psi))^2\,d\omega dv\right]d\tau &\lesssim \int_{{N}^{\mathcal{I}}_{\tau_1}} r^2\cdot (\partial_v(r\psi))^2\,d\omega dv+\text{l.o.t.},
\label{hierioverview}
\end{split}
\end{equation}
and the following $\h-$\textbf{localized hierarchy} holds in $\mathcal{A}^{\mathcal{H}}$ for all $0\leq \tau_1<\tau_2$:
\begin{equation}
\begin{split}
\int_{\tau_1}^{\tau_2} \left[\int_{{N}^{\mathcal{H}}_{\tau}} J^{T}[\psi]\right] d\tau &\lesssim \int_{{N}^{\mathcal{H}}_{\tau_1}} (r-M)^{-1}\cdot (\partial_u(r\psi))^2\,d\omega du+\text{l.o.t.},\\
\int_{\tau_1}^{\tau_2} \left[\int_{{N}^{\mathcal{H}}_{\tau}} (r-M)^{-1}\cdot (\partial_u(r\psi))^2\,d\omega du\right]d\tau &\lesssim \int_{{N}^{\mathcal{H}}_{\tau_1}} (r-M)^{-2}\cdot (\partial_u(r\psi))^2\,d\omega du+\text{l.o.t.}.
\label{hierhoverview}
\end{split}
\end{equation}
The integral on the right hand side of the second estimate of the $\I-$localized hierarchy corresponds to \textit{the conformal energy near $\I$}. Similarly, the integral on the right hand side of the second estimate of the $\h-$localized hierarchy corresponds to \textit{the conformal energy near $\h$}. We denote 
\begin{equation}
\hspace{-1.6cm}\text{\textbf{Conformal energy near }}\I: \ \ \ \mathcal{C}_{N_{\tau}^{\mathcal{I}}}[\psi]= \int_{{N}^{\mathcal{I}}_{\tau}} r^2\cdot (\partial_v(r\psi))^2\,d\omega dv
\label{confenergyI}
\end{equation}
and
\begin{equation}
\text{\textbf{Conformal energy near }}\h: \ \ \ \mathcal{C}_{N_{\tau}^{\mathcal{H}}}[\psi]= \int_{{N}^{\mathcal{H}}_{\tau}} (r-M)^{-2}\cdot (\partial_u(r\psi))^2\,d\omega du.
\label{nondegnH}
\end{equation}
It is important to note that $du=-2\left(1-\frac{M}{r}\right)^{-2}dr$ on $\Sigma_{\tau}$ and $\partial_{u}=-\frac{1}{2}\left(1-\frac{M}{r}\right)^2Y$, where $Y=\partial_r$ is a regular vector field at the horizon. Hence, the conformal flux near $\h$ $\mathcal{C}_{N_{\tau}^{\mathcal{H}}}[\psi]\sim \int_{{N}^{\mathcal{H}}_{\tau}}(Y\psi)^2$ is at the level of the \textit{non-degenerate} energy. 

If both of the energies \eqref{confenergyI} and \eqref{nondegnH} are initially finite, then, by using a standard application of the mean value theorem on dyadic time intervals and the boundedness of the $T$-energy flux, we obtain the decay rate $\tau^{-2}$ for the $T$-energy flux $\mathcal{E}^{T}_{\Sigma_{\tau}}[\psi]$. This decay rate however is quite weak. Faster decay rates for the higher order flux $\mathcal{E}^{T}_{\Sigma_{\tau}}[T\psi]$ were obtained for sub-extremal black holes by Schlue \cite{volker1} and Moschidis \cite{moschidis1}. Their method used $\partial_v$, $r\partial_v$ as commutator vector fields in the near-infinity region. Nonetheless, their approach does not yield faster decay for the $T$-flux $\mathcal{E}^{T}_{\Sigma_{\tau}}[\psi]$ itself.

\subsubsection{Commuted hierarchies in the regions $\mathcal{A}^{\mathcal{H}}$ and $\mathcal{A}^{\mathcal{I}}$}
\label{sec:CommutedHierarchies}

Our strategy for obtaining further decay for $\mathcal{E}^{T}_{\Sigma_{\tau}}[\psi]$ on ERN is to establish \textit{integrated decay estimates for the conformal fluxes}\footnote{Note that (non-degenerate) integrated decay estimates for the fluxes $\mathcal{C}_{N_{\tau}^{\mathcal{I}}}[\psi]$ and $\mathcal{C}_{N_{\tau}^{\mathcal{H}}}[\psi]$ on ERN are closely related to the \textit{trapping effect} at $\I$ and at $\h$.}  $\mathcal{C}_{N_{\tau}^{\mathcal{I}}}[\psi]$ and $\mathcal{C}_{N_{\tau}^{\mathcal{H}}}[\psi]$, extending thereby the $\I-$localized and $\h-$localized hierarchies \eqref{hierioverview} and \eqref{hierhoverview}. However, it is not possible to further extend of \eqref{hierioverview} and \eqref{hierhoverview} by considering larger powers of $r$ and $(r-M)^{-1}$, respectively. Instead, motivated by the following Hardy inequality (see also Section \ref{sec:HardyInequalities} in the Appendix):
\begin{equation}
\int_{0}^{\infty}x^2\cdot \left(\partial_{x}f\right)^{2} \,dx \leq C\int_{0}^{\infty}\Big(\partial_{x}\left(\boldsymbol{x^{2}\partial_{x}}f\right)\Big)^2 \,dx,
\label{basichardy}
\end{equation}
applied to $f=r\psi$, with $x=r$, $\partial_x=\partial_v$ in $\mathcal{A}^{\mathcal{I}}$, and $x=(r-M)^{-1}$, $\partial_x=\partial_u$ in $\mathcal{A}^{\mathcal{H}}$,  
we introduce the following $n$-\textbf{commuted} quantities:
\[ \Phi_{(n)}:=(r^2\partial_v)^n(r\psi), \hspace{1.5cm}\underline{\Phi}_{(n)}:=Y^n(r\psi)  \sim \Big(-(r-M)^{-2}\partial_{u}\Big)^n(r\psi), \]
where $n\in \N_0$. The idea therefore is to derive  $\I-$localized and $\h-$localized \textit{commuted} hierarchies which yield decay for weighted fluxes of the commuted functions $\Phi_{(n)}$ and $\underline{\Phi}_{(n)}$, respectively. As we shall see, these hierarchies involve growing $r$ and $(r-M)^{-1}$ weights. Partial results for $\Phi_{(n)}$ and $\underline{\Phi}_{(n)}$ were previously obtained in \cite{aretakis2, paper1}. 

If $\psi$ solves the wave equation \eqref{eq:waveequation} on ERN then for all $n\geq 0$ and \textit{for all} $p\in \R$ the commuted quantities $\Phi_{(n)}$ and $\underline{\Phi}_{(n)}$ satisfy the following \textit{key identities} in $\mathcal{A}^{\mathcal{I}}$ and $\mathcal{A}^{\mathcal{H}}$ regions, respectively (see Section \ref{sec:mainest1}): 

\medskip
\noindent\textbf{Near-infinity identity:}
\begin{equation}
\label{eq:keyid}
\begin{split}
\int_{\s^2}& \partial_u\left(r^p(\partial_v\Phi_{(n)})^2\right)+\partial_v\left( r^{p-2}|\snabla_{\s^2}\Phi_{(n)}|^2-n(n+1)r^{p-2}\Phi_{(n)}^2\right)\,d\omega\\
&+\int_{\s^2}  (p+4n)r^{p-1} (\partial_v\Phi_{(n)})^2+(2-p)r^{p-3}\left( |\snabla_{\s^2}\Phi|^2-n(n+1)\Phi_{(n)}^2\right)\,d\omega\\
=&\: n\cdot\sum_{k=0}^{\max\{0,n-1\}} \int_{\s^2}O(r^{p-2})\cdot \Phi_{(k)}\cdot \partial_v\Phi_{(n)}\,d\omega+\text{l.o.t.},
\end{split}
\end{equation}
\textbf{Near-horizon identity:}
\begin{equation}
\label{eq:keyidhor}
\begin{split}
\int_{\mathbb{S}^{2}}& \partial_v\left((r-M)^{-p}(\partial_u\underline{\Phi}_{(n)})^2\right)+\partial_u\left( (r-M)^{-p+2}|\snabla_{\s^2}\underline{\Phi}_{(n)}|^2-n(n+1)(r-M)^{-p+2}\underline{\Phi}_{(n)}^2\right)\,d\omega\\
&+\int_{\mathbb{S}^{2}}  (p+4n)(r-M)^{-p+1} (\partial_u\underline{\Phi}_{(n)})^2+(2-p)(r-M)^{-p+3}\left( |\snabla_{\mathbb{S}^{2}}\underline{\Phi}_{(n)}|^2-n(n+1)\underline{\Phi}_{(n)}^2\right)\,d\omega\\
=&\: n\cdot\sum_{k=0}^{\max\{0,n-1\}} \int_{\mathbb{S}^{2}}O((r-M)^{-p+2})\cdot \underline{\Phi}_{(k)}\cdot \partial_u\underline{\Phi}_{(n)}\,d\omega+\text{l.o.t.}
\end{split}
\end{equation}
Note that \eqref{eq:keyidhor} is of the same form as \eqref{eq:keyid}, but with $u$ and $v$ reversed and $r$ replaced by $(r-M)^{-1}$. This is of course related to the existence of the Couch--Torrence conformal inversion of ERN.

 After integrating in $u$ and $v$, the ``\textit{error}'' terms that appear on the right-hand sides of the corresponding spacetime identities can be controlled via \textit{Morawetz and Hardy inequalities} for the following range of weights\footnote{For spherically symmetric solutions (with harmonic mode number $\ell=0$) we only take $n=0$.}: 
\begin{equation}
-4n<p\leq 2.
\label{pbound}
\end{equation} We arrive at the following inequalities (see Section \ref{sec:mainest})

\medskip
\noindent \textbf{$\I-$localized $n-$commuted $p-$inequalities for $\Phi_{(n)}$:}
\begin{equation}
\begin{split}
\label{eq:introrweightestI}
&\int_{{N}^{\mathcal{I}}_{\tau_2}} r^p\left(\partial_v\Phi_{(n)}\right)^2\,d\omega dv\\+& \int_{\tau_1}^{\tau_2} \int_{{N}^{\mathcal{I}}_{\tau}} (p+4n)r^{p-1} \left(\partial_v\Phi_{(n)}\right)^2+ (2-p)r^{p-3}\left( |\snabla_{\s^2}\Phi_{(n)}|^2-n(n+1)\Phi_{(n)}^2\right)\,d\omega dv d\tau\\
&\lesssim_p\: \int_{{N}^{\mathcal{I}}_{\tau_1}} r^p\left(\partial_v\Phi_{(n)}\right)^2\,d\omega dv+\ldots,
\end{split}
\end{equation}
\noindent \textbf{$\h-$localized $n-$commuted $p-$inequalities for $\underline{\Phi}_{(n)}$:}
\begin{equation}
\begin{split}
\label{eq:introrweightestH}
&\int_{{N}^{\mathcal{H}}_{\tau_2}} (r-M)^{-p}\left(\partial_u \underline{\Phi}_{(n)}\right)^2\,d\omega du\\
+& \int_{\tau_1}^{\tau_2} \int_{{N}^{\mathcal{H}}_{\tau}} (p+4n)(r-M)^{-p+1} \left(\partial_u\underline{\Phi}_{(n)}\right)^2+ (2-p)(r-M)^{-p+3}\left({|\snabla_{\s^2}\underline{\Phi}_{(n)}|^2-n(n+1)\underline{\Phi}_{(n)}^2}\right)\,d\omega du d\tau\\
&\lesssim_p \: \int_{{N}^{\mathcal{H}}_{\tau_1}}(r-M)^{-p}\left(\partial_u \underline{\Phi}_{(n)}\right)^2\,d\omega du+\ldots
\end{split}
\end{equation}
These inequalities hold for \textbf{all} $n$, as long as $p$ satisfies \eqref{pbound}.
In order to turn these inequalities into actual estimates we need to guarantee the non-negativity of the terms 
$|\snabla_{\s^2}{\Phi}_{(n)}|^2-n(n+1){\Phi}_{(n)}^2$ and $|\snabla_{\s^2}\underline{\Phi}_{(n)}|^2-n(n+1)\underline{\Phi}_{(n)}^2$. In view of the \textit{Poincar\'e inequality} on $\s^2$ (see Section \ref{sec:HardyInequalities} in the Appendix), these terms are non-negative if $\psi$ is supported on angular frequencies $\ell$ such that 
\begin{equation}
\ell \geq n.
\label{nell}
\end{equation} 
In other words, \emph{we can commute the wave equation $n$ times and obtain two estimates for $\Phi_{(n)}$ and two estimates for $\underline{\Phi}_{(n)}$ for each $n$, as long as $n$ is less or equal than the lowest harmonic mode that is present in a harmonic mode expansion of $\psi$.} The two estimates correspond to the values $p=1$ and $p=2$. 

It is worth mentioning that the estimates \eqref{eq:introrweightestH} can be thought of as  \emph{degenerate remnants} of the red shift estimates. Note that the degeneracy of the (higher-order) red shift effect is manifested in the additional factor of $(r-M)$ that appears in the spacetime integral of $(\partial_u\underline{\Phi}_{(n)})^2$ on the left-hand side of \eqref{eq:introrweightestH}.

The table below summarizes the number of the $\h-$localized $n-$commuted estimates and the $\I-$localized $n-$commuted estimates for each fixed $n$ as well as the total number of estimates available in the \textbf{total hierarchy} over all admissible values of $n$. 

\medskip

\underline{Definition}: \textit{ We define the \textbf{length of a hierarchy} to be equal to the number of available and useful integrated estimates in the hierarchy. Useful here means that the exponents $p$ of the weights  increase by an integer number or by an almost (modulo $\epsilon>0$) integer number.} 
 		\begin{table}[H]{\footnotesize
\begin{center}
    \begin{tabular}{ cV{5}c | c |c} 
		\cline{2-4}
		 & \multicolumn{3}{c}{\textbf{Commuted hierarchies}} \\
  \hline
\multirow{2}{*}{\textbf{Harmonic mode}} &   \multicolumn{2}{c|}{\textbf{Fixed} ${n}$ \textbf{commuted}} & {\textbf{Total hierarchy}}  \\
	\cline{2-4}
  &   \multicolumn{1}{c|}{${n}$}  & \textbf{Length}  &  \textbf{Length}   \\
 \Xcline{1-4}{0.06cm} 
 {$\ell=0$} & {$0$} & $2$ & $2$  \\  
 \Xcline{1-4}{0.04cm} 
 \multirow{2}{*}{$\ell=1$} & {$0$}  & $2$ &   \multirow{2}{*}{$4$}  \\    		
\cline{2-3}
 & {1} &2&  \\  
 \Xcline{1-4}{0.04cm} 
\multirow{3}{*}{$\ell\geq 2$} & {$0$}  & $2$ & \multirow{3}{*}{$6$}   \\    	
	\cline{2-3}
 & {1} &2& \\    
	\cline{2-3}
 & {2} &2& \\   
 \Xcline{1-4}{0.04cm} 
  \end{tabular}
\end{center}
\caption{The length of the commuted hierarchies for $\ell=0$, $\ell=1$ and $\ell\geq 2$.}
\label{summarytablebasic}}
\end{table} 
\subsubsection{Improved hierarchies for $\ell=0,1$}
\label{sec:ImprovedHierarchiesForEll01}

The harmonic projections $\psi_{\ell=0}$ and $\psi_{\ell=1}$ of $\psi$ satisfy only two and four estimates in the total hierarchy, respectively, as in Table \ref{summarytablebasic}. When dealing with $\ell=0$ (and hence $n=0$) separately, we show in Section \ref{sec:mainestextended} that the range of $p$ can actually be \textit{extended} to $0<p<3$ for both the $\h-$localized and the $\I-$localized hierarchies. Note that even though we cannot take $p=3$ exactly in this case, we can take $p=3-\epsilon$ for $\epsilon>0$ arbitrarily small. Additionally, we show that 
\begin{itemize}
	\item if  $I_{0}[\psi]=0$ then we can take $0<p<5$ in the $\I-$localized hierarchy, and 
\item if  $H_{0}[\psi]=0$ then we can take $0<p<5$ in the $\h-$localized hierarchy.
\end{itemize}
Similarly as above, even though we cannot take $p=5$ exactly, we will take $p=5-\epsilon$ for $\epsilon>0$. In this sense, the lengths of the above hierarchies (under the vanishing assumptions) is indeed five. Moreover, these hierarchies are \textit{inextendible} (consistent with the horizon instability results of Section \ref{sec:TheHorizonInstabilityOfExtremalBlackHoles}) and hence their length is sharp. It is important to observe that, based on the above result, the lengths of the total hierarchies depend on the type of data. These are summarized in the table below. 

\underline{Convention:} By $\mathcal{R}-$\textbf{global hierarchy} we mean the hierarchy that arises for weighted fluxed on $\Sigma_{\tau}$ by adding the $\h-$localized hierarchy (in region $\mathcal{A}^{\mathcal{H}}$), the $\I-$localized hierarchy (in region $\mathcal{A}^{\mathcal{I}}$) and the higher-order Morawetz estimates (in region $\mathcal{B}$; see Appendix \ref{sec:EnergyBounds}). Recall that $\mathcal{R}=\mathcal{A}^{\mathcal{H}}\cup \mathcal{A}^{\mathcal{I}}\cup \mathcal{B}$.

		\begin{table}[H]{\footnotesize
\begin{center}
    \begin{tabular}{ cV{3}c | c |c } 
		\cline{2-4}
		 & \multicolumn{3}{c}{\textbf{Improved  hierarchies for} $\ell=0$  \textbf{with} $n=0$} \\
  \hline
{\textbf{Data}} &   $\h-$\textbf{localized} & $\I-$\textbf{localized} & $\mathcal{R}-$\textbf{global}  \\
 \Xcline{1-4}{0.05cm} 
{Type \textbf{A}} &  {$3$} & {$5$} & {$3$} \\  
  \hline
{Type \textbf{B}}  &  {$5$} & {$5$} & {$5$} \\  
  \hline
{Type \textbf{C}}  & {$3$} & {$3$} & {$3$} \\  
  \hline
{Type \textbf{D}} & {$5$} & {$3$} & {$3$} \\  
\hline
  \end{tabular}
\end{center}
\caption{Lengths of improved hierarchies for $\ell=0$.}
\label{summarytablel0}}
\end{table}\vspace{-0.3cm}

In order to extend the length of the hierarchies for $\ell=1$ we introduce the following ``modified'' variants of $\Phi_{(1)}$ and $\underline{\Phi}_{(1)}$ (with $n=1$): 
\[\widetilde{\Phi}=\widetilde{\Phi}_{(1)}:=r(r-M)\partial_v(r\psi_{\ell=1}),  \hspace{1.5cm} \underline{\widetilde{\Phi}}=\underline{\widetilde{\Phi}}_{(1)}:=r\cdot Y(r\psi_{\ell=1}).\]
The derivatives $r^2\partial_v\widetilde{\Phi}_{(1)}$ and $(r-M)^{-2}\partial_u\underline{\widetilde{\Phi}}_{(1)}$ are conserved along the spheres foliating future null infinity $\mathcal{I}^+$ and the future event horizon $\mathcal{H}^+$, respectively. This follows from Lemma \ref{lm:maineq}.\footnote{Note that in the Minkowski spacetime, the quantity $r^2\partial_v\Phi_{(1)}$ is actually conserved along future null infinity, but in asymptotically flat spacetimes with non-zero mass, like extremal Reissner--Nordstr\"om, one has to replace $\Phi_{(1)}$ with $\widetilde{\Phi}_{(1)}$ to obtain a conservation law. The same modification of $\Phi_{(1)}$ appears in \cite{paper2}.}

We obtain the following improved identities for $\psi_{\ell=1}$ (see Section \ref{sec:mainestextended})
\begin{equation}
\label{eq:keyidmod}
\begin{split}
\int_{\mathbb{S}^{2}}& \partial_u\left(r^p(\partial_v\widetilde{\Phi})^2\right)\,d\omega+\int_{\mathbb{S}^{2}}  (p+4n)r^{p-1} (\partial_v\widetilde{\Phi})^2\,d\omega=\int_{\mathbb{S}^{2}}\boldsymbol{O(r^{p-3})} \cdot r\psi\cdot \partial_v\widetilde{\Phi}\,d\omega+\text{l.o.t}
\end{split}
\end{equation}
and
\begin{equation}
\label{eq:keyidhormod}
\begin{split}
\!\int_{\mathbb{S}^{2}}& \partial_v\left((r-M)^{-p}(\partial_u\widetilde{\underline{\Phi}})^2\right)+\!\int_{\mathbb{S}^{2}}  (p+4n)(r-M)^{-p+1} (\partial_u\widetilde{\underline{\Phi}})^2\,d\omega=\!\int_{\mathbb{S}^{2}}\boldsymbol{O((r-M)^{-p+3})}\cdot\! r\psi\cdot \partial_u\widetilde{\underline{\Phi}}\,d\omega+\text{l.o.t}.
\end{split}
\end{equation}
Note that the error terms (in bold) are now of lower order compared to the error terms in \eqref{eq:keyid} and \eqref{eq:keyidhor}. This allows us to obtain versions of \eqref{eq:introrweightestI} and \eqref{eq:introrweightestH} with $\Phi_{(1)}$ and $\underline{\Phi}_{(1)}$ replaced by $\widetilde{\Phi}$ and $\underline{\widetilde{\Phi}}$, respectively, where the range of $p$ can be \emph{extended} to either $0<p<3$. We further obtain that:
\begin{itemize}
	\item the range of the $\I-$localized hierarchy can be further extended to $0<p<4$ if ${\Phi}$ decays sufficiently fast towards $\I$, and 
\item 	the range of the $\h-$localized hierarchy can be further extended to $0<p<4$ if $\underline{\Phi}$ decays sufficiently fast towards $\h$.
\end{itemize}
Again, we cannot take $p=3$ or 4, but we will take $p=3-\epsilon$ or $4-\epsilon$. 

The results for $\ell=1$ are summarized in the table below.


			\begin{table}[H]{\footnotesize
\begin{center}
    \begin{tabular}{ cV{5}c | c |cV{3}c|c|cV{3}c} 
		\cline{2-8}
		 & \multicolumn{7}{c}{\textbf{Improved hierarchies for} $\ell=1$} \\
		\cline{2-8}
&  \multicolumn{3}{cV{3}}{$\h-$\textbf{localized}} &    \multicolumn{3}{cV{3}}{$\I-$\textbf{localized}} & $\mathcal{R}-$\textbf{global}  \\
 \Xcline{1-8}{0.04cm} 
 \multirow{2}{*}{\textbf{Data}}  &   \multicolumn{2}{c|}{${n}-$\textbf{commuted}}  &  \multirow{2}{*}{\textbf{Total length}}  &   \multicolumn{2}{c|}{${n}-$\textbf{commuted}}  &  \multirow{2}{*}{\textbf{Total length}}  & \multirow{2}{*}{\textbf{Total length}}  \\
		\cline{2-3}  \cline{5-6}
  &   \multicolumn{1}{c|}{${n}$} & \textbf{Length} &   & $n$&  \textbf{Length}  & &  \\
 \Xcline{1-8}{0.06cm} 
  \multirow{2}{*}{Type \textbf{A}} & \multicolumn{1}{c|}{$0$} & $2$ & \multirow{2}{*}{$5$} & {$0$} & {$2$} & \multirow{2}{*}{$6$} & \multirow{2}{*}{$5$} \\  
	\cline{2-3}  \cline{5-6}
 & \multicolumn{1}{c|}{1} &3& & 1 & 4& &\\ 
 \Xcline{1-8}{0.04cm} 
 \multirow{2}{*}{Type \textbf{B}} & {$0$}  & $2$ &   \multirow{2}{*}{$6$} & {$0$}  &  {$2$} & \multirow{2}{*}{$6$}& \multirow{2}{*}{$6$} \\    		
\cline{2-3} \cline{5-6}
 &{1} &4& & 1 & 4 & &\\  
 \Xcline{1-8}{0.04cm} 
 \multirow{2}{*}{Type \textbf{C}} & {$0$}  & $2$ &   \multirow{2}{*}{$5$} & 0 & 2 & \multirow{2}{*}{$5$} & \multirow{2}{*}{$5$} \\    		
\cline{2-3} \cline{5-6}
 & {1} &3& & 1 &3 & &\\  
 \Xcline{1-8}{0.04cm} 
 \multirow{2}{*}{Type \textbf{D}} & {$0$}  & $2$ &   \multirow{2}{*}{$6$} & 0 & 2  & \multirow{2}{*}{$5$} &\multirow{2}{*}{$5$}\\    		
\cline{2-3} \cline{5-6}
 & {1} &4& & 1 &3 & &\\  
 \Xcline{1-8}{0.04cm} 
\hline
  \end{tabular}
\end{center}
\caption{Lengths of improved hierarchies for $\ell=1$.}
\label{summarytablel1}}
\end{table}\vspace{-0.3cm}

\underline{\emph{Remark: additionally extended hierarchies for time-derivatives}}\\
\\
Schlue \cite{volker1} and Moschidis \cite{moschidis1} obtained improved energy decay estimates for the time derivative $T\psi$ by considering $r$-weighted estimates for the quantities $\partial_v(r\psi)$ or $r\partial_v(r\psi)$. We generalize in Section \ref{sec:extendhier} their approach by establishing estimates for $\partial_v^k\Phi_{(n)}$ in the near-infinity region $\mathcal{A}^{\mathcal{I}}$ and for $\partial_u^k\underline{\Phi}_{(n)}$ in the near-horizon region $\mathcal{A}^{\mathcal{H}}$  (with $n$ as above), where $k\in \N$ takes \textit{any} positive value $k\geq 1$.  This yields the following: \textit{for each time derivative that we take, we gain two more estimates in the $\I-$localized hierarchy and two more estimates in the $\h-$localized hierarchy.} These improvements play an important role in the subsequent subsections.

\subsubsection{Energy and pointwise decay}
\label{sec:EnergyAndPointwiseDecay}

The total (that is, over all admissible $n$) $\I-$localized and $\h-$localized hierarchies give quantitative decay rates for the conformal fluxes $\mathcal{C}_{N_{\tau}^{\mathcal{I}}}[\psi]$, given by \eqref{confenergyI}, and  $\mathcal{C}_{N_{\tau}^{\mathcal{H}}}[\psi]$, given by \eqref{nondegnH}, respectively. This is easily obtained via successive application of the mean value theorem in dyadic intervals and  the Hardy inequality \eqref{basichardy}. The rule is the following: 
\[\hspace{-0.2cm}\text{\textit{decay rate of the conformal flux }} \mathcal{C}_{N_{\tau}^{\mathcal{I}}}[\psi]=\text{\textit{length}}\left(\I\!\!-\!\text{\textit{localized hierarchy}}\right)-2-\epsilon,\]
and
\[\text{\textit{decay rate of the conformal flux }} \mathcal{C}_{N_{\tau}^{\mathcal{H}}}[\psi]=\text{\textit{length}}\left(\h\!\!-\!\text{\textit{localized hierarchy}}\right)-2-\epsilon\]
for any sufficiently small $\epsilon>0$. The $\epsilon$ loss here has to do with the fact that the maximum value of $p$ in the extended improved hierarchies for $\ell=0$ and $\ell=1$ is not an exact integer. 

Having obtained decay rates for the conformal fluxes we can proceed to obtain decay rate for the global $T-$flux $\mathcal{E}^{T}_{\Sigma_{\tau}}[\psi]$. For this we revisit the $\h-$localized and $\I-$localized hierarchies; we add the  $\h-$localized hierarchy (in region $\mathcal{A}^{\mathcal{H}}$), the $\I-$localized hierarchy (in region $\mathcal{A}^{\mathcal{I}}$) and the higher-order Morawetz estimates (in region $\mathcal{B}$). Using again successively the mean value theorem in dyadic intervals and appropriate Hardy inequalities we obtain decay estimates for the $T$-energy flux. The rule here is the following:
\[\text{\textit{decay rate of the energy flux }} {\mathcal{E}}^{T}_{\Sigma_{\tau}}[\psi]=\text{\textit{decay rate of \underline{slowest} conformal flux }}+2.\]
Unlike the sub-extremal case, in ERN there are \textbf{two} independent conformal fluxes that contribute to the decay rate for the energy flux. This feature of ERN creates further complications later in the derivation of the precise asymptotics.

As an illustration of our techniques, let us consider initial data for $\psi$ of Type \textbf{A}. As we can see in Table \ref{summarytablel0}, the length of the total $\I-$localized hierarchy and total $\h-$localized hierarchy is 5 and 3 for $\ell=0$ , respectively. Hence, we obtain schematically the decay estimates for the conformal fluxes (see Section \ref{sec:decayest}):
  \begin{align*}
       \mathcal{C}_{N_{\tau}^{\mathcal{H}}}[\psi_{\ell=0}]\ \ \lesssim&\ \ \: E_{\ell=0}\cdot \tau^{-1+\epsilon},\\
  \mathcal{C}_{N_{\tau}^{\mathcal{I}}}[\psi_{\ell=0}]\ \ \lesssim&\ \ \:  E_{\ell=0}\cdot \tau^{-3+\epsilon}.
  \end{align*}
	Furthermore, from  Tables \ref{summarytablebasic} and \ref{summarytablel1} we have that the length of the total $\I-$localized hierarchy and total $\h-$localized hierarchy is  6 and 5 for $\ell\geq 1$, respectively. Hence, 
	  \begin{align*}
  \mathcal{C}_{N_{\tau}^{\mathcal{H}}}[\psi_{\ell\geq 1}]\ \ \lesssim&\ \ \:  E_{\ell\geq 1}\cdot \tau^{-3+\epsilon},\\
       \mathcal{C}_{N_{\tau}^{\mathcal{I}}}[\psi_{\ell\geq 1}]\ \ \lesssim&\ \ \:  E_{\ell\geq 1}\cdot \tau^{-4+\epsilon}.
  \end{align*}
	We conclude the following decay estimate for the $T-$energy flux:
  \begin{align*}
 \mathcal{E}^{T}_{\Sigma_{\tau}}[\psi_{\ell=0}]  \ \  \lesssim&\ \ \:  E_{\ell=0}\cdot \tau^{-3+\epsilon},\\
     \mathcal{E}^{T}_{\Sigma_{\tau}}[\psi_{\ell\geq 1}] \ \  \lesssim&\ \ \:  E_{\ell\geq 1} \cdot \tau^{-5+\epsilon},
  \end{align*}
  where $E_{\ell=0}$ and $E_{\ell\geq 1}$ denote (higher-order, weighted) initial data energy norms. Furthermore,
    \begin{align*}
\mathcal{E}^{T}_{\Sigma_{\tau}}[T^k\psi_{\ell=0}]  \ \  \lesssim&\ \ \:  E_{\ell=0; k}\cdot \tau^{-3-2k+\epsilon},\\
     \mathcal{E}^{T}_{\Sigma_{\tau}}[T^k\psi_{\ell\geq 1}] \ \  \lesssim&\ \ \:  E_{\ell\geq 1; k} \cdot \tau^{-5-2k+\epsilon},
  \end{align*}
for all $k\geq 1$,   where $E_{\ell=0;k}$ and $E_{\ell\geq 1;k}$ denote (higher-order, weighted) initial data energy norms.

We next proceed with deriving pointwise decay estimates (see Section \ref{sec:decayest}). We will use the following Hardy estimates
\begin{align*}
 \int_{\mathbb{S}^{2}} (r\psi)^2\,d\omega \ \ &\lesssim  \ \ \sqrt{\mathcal{C}_{N_{\tau}^{\mathcal{H}}}[\psi] }\cdot \sqrt{ \mathcal{E}^{T}_{\Sigma_{\tau}}[\psi]}  \ \ \ \  \textnormal{ in }\ \ \mathcal{A}^{\mathcal{H}}, \\
 \int_{\mathbb{S}^{2}} (r\psi)^2\,d\omega\ \ &\lesssim   \ \  \sqrt{\mathcal{C}_{N_{\tau}^{\mathcal{I}}}[\psi] }\cdot \sqrt{ \mathcal{E}^{T}_{\Sigma_{\tau}}[\psi]}  \ \ \ \ \textnormal{in }\ \ \mathcal{A}^{\mathcal{I}},\\
 \int_{\mathbb{S}^{2}} (r-M)\!\cdot\!\psi^2\,d\omega \ \ &\lesssim  \ \  \sqrt{ \mathcal{E}^{T}_{\Sigma_{\tau}}[\psi]}  \ \ \ \ \text{ on } \ \Sigma_{\tau}. 
\end{align*}
For initial data of Type \textbf{A}, using the above decay estimates for the conformal energies and the $T$-energy flux, we obtain
\begin{align*}
\int_{\mathbb{S}^{2}} (r\psi_{\ell=0})^2\,d\omega  \ \  \lesssim&  \ \ \:  E_{\ell=0}\cdot \tau^{-2+\epsilon}\:  \ \  \ \  \textnormal{in }\ \ \mathcal{A}^{\mathcal{H}}, \\
\int_{\mathbb{S}^{2}} (r\psi_{\ell=0})^2\,d\omega \ \  \lesssim&\: \ \   E_{\ell=0} \cdot \tau^{-3+\epsilon} \: \ \ \ \ \textnormal{in }\ \ \mathcal{A}^{\mathcal{I}},\\
\int_{\mathbb{S}^{2}} (r-M)\cdot (\psi_{\ell=0})^2\,d\omega  \ \ \lesssim&\: \ \  E_{\ell=0}\cdot \tau^{-3+\epsilon} \:   \ \ \  \text{ on } \ \ \Sigma_{\tau}. 
\end{align*}
Using the standard Sobolev estimates on $\s^2$ we immediately obtain $L^{\infty}$ decay estimates for $r\psi_{\ell=0}$ in $\mathcal{A}^\mathcal{H}$, $r\psi_{\ell=0}$ in $\mathcal{A}^\mathcal{I}$ and $\sqrt{r-M}\cdot\psi_{\ell=0}$ on $\Sigma_{\tau}$, with the decaying factors $\tau^{-1+{\epsilon}}$, $\tau^{-\frac{3}{2}+{\epsilon}}$ and $\tau^{-\frac{3}{2}+{\epsilon}}$, respectively.
Similarly, 
\begin{align*}
\int_{\mathbb{S}^{2}} (r\psi_{\ell\geq 1})^2\,d\omega  \ \  \lesssim&  \ \ \:  E_{\ell=0}\cdot \tau^{-4+\epsilon}\:  \ \  \ \  \textnormal{in }\ \ \mathcal{A}^{\mathcal{H}}, \\
\int_{\mathbb{S}^{2}} (r\psi_{\ell\geq 1})^2\,d\omega \ \  \lesssim&\: \ \   E_{\ell=0} \cdot \tau^{-\frac{9}{2}+\epsilon} \: \ \ \ \ \textnormal{in }\ \ \mathcal{A}^{\mathcal{I}},\\
\int_{\mathbb{S}^{2}} (r-M)\cdot (\psi_{\ell\geq 1})^2\,d\omega  \ \ \lesssim&\: \ \  E_{\ell=0}\cdot \tau^{-5+\epsilon} \:   \ \ \  \text{ on } \ \ \Sigma_{\tau}. 
\end{align*}
As above, $L^{\infty}$ decay estimates for $r\psi_{\ell\geq 1}$ in $\mathcal{A}^\mathcal{H}$, $r\psi_{\ell\geq 1}$ in $\mathcal{A}^\mathcal{I}$ and $\sqrt{r-M}\cdot\psi_{\ell\geq 1}$ on $\Sigma_{\tau}$, with the decaying factors $\tau^{-2+{\epsilon}}$, $\tau^{-\frac{9}{4}+{\epsilon}}$ and $\tau^{-\frac{5}{2}+{\epsilon}}$, respectively.

The above estimates illustrate another deviation from the sub-extremal analysis in \cite{paper1,paper2}: for Type \textbf{A} initial data, the decay rate of $r\psi_{\ell=0}$ in $\mathcal{A}^\mathcal{I}$ \textit{is a power $\frac{1}{2}+\epsilon$ \underline{away} from the sharp decay rate}, whereas in the sub-extremal case, the analogous estimate results in a decay rate that is almost sharp, in other words only $\epsilon$ away from the sharp decay rate. In this case it is the non-vanishing of $H_0$ and hence the slow decay for the conformal energy in the near-horizon region that forms the ``bottleneck'' for the maximal length of the global hierarchy of weighted estimates for $\psi_{\ell=0}$.

The energy and pointwise decay rates are summarized in the two tables below (see Section \ref{sec:decayest}). 
		\begin{table}[H]{\footnotesize
\begin{center}
    \begin{tabular}{ cV{5}c | c |cV{3}c|c|c} 
			\cline{2-7}
		 & \multicolumn{6}{c}{\textbf{Decay rates for $\ell=0$}}\\
				\cline{2-7}
				 & \multicolumn{3}{cV{3}}{\textbf{Energy flux decay}} &  \multicolumn{3}{c}{\textbf{Pointwise decay}}\\
	   \Xcline{1-7}{0.03cm}
{\textbf{Data}} &$\mathcal{E}^{T}_{\ \Sigma_{\tau}}[\psi] $& $\mathcal{C}_{N_{\tau}^{\mathcal{H}}}[\psi]$ & $\mathcal{C}_{N_{\tau}^{\mathcal{I}}}[\psi]$ &$r\psi|_{\mathcal{H}^{+}}$ & $\psi|_{\{r=r_0\}}$ & $r\psi|_{\mathcal{I}^{+}}$ \\
 \Xcline{1-7}{0.05cm} 
 {Type \textbf{A}} &   $\tau^{-3+\epsilon}$  & $\tau^{-1+\epsilon}$ & $\tau^{-3+\epsilon}$ &$\tau^{-1+\epsilon}$& $\boldsymbol{\tau^{-\frac{3}{2}+\epsilon}}$ &$\boldsymbol{\tau^{-\frac{3}{2}+\epsilon}}$  \\  \hline 
{Type \textbf{B}} & $\tau^{-5+\epsilon}$  & $\tau^{-3+\epsilon}$ &  $\tau^{-3+\epsilon}$   &$\tau^{-2+\epsilon}$&$\boldsymbol{\tau^{-\frac{5}{2}+\epsilon}}$  &$\tau^{-2+\epsilon}$  \\    \hline
 {Type \textbf{C}}	& $\tau^{-3+\epsilon}$ &  $\tau^{-1+\epsilon}$  & $\tau^{-1+\epsilon}$ &$\tau^{-1+\epsilon}$&$\boldsymbol{\tau^{-\frac{3}{2}+\epsilon}}$ &$\tau^{-1+\epsilon}$ \\ \hline
 {Type \textbf{D}}	& $\tau^{-3+\epsilon}$ & $\tau^{-3+\epsilon}$ & $\tau^{-1+\epsilon}$&$\boldsymbol{\tau^{-\frac{3}{2}+\epsilon}}$ &$\boldsymbol{\tau^{-\frac{3}{2}+\epsilon}}$  &$\tau^{-1+\epsilon}$ \\ \hline
  \end{tabular}
\end{center} 
\caption{Decay rates for $\ell=0$. All are almost sharp except the bold rates.}
\label{summarytablenew}}
\end{table}\vspace{-0.3cm}

		\begin{table}[H]{\footnotesize
\begin{center}
    \begin{tabular}{ cV{5}c | c |cV{3}c|c|c} 
			\cline{2-7}
		 & \multicolumn{6}{c}{\textbf{Decay rates for $\ell\geq 1$}}\\
				\cline{2-7}
				 & \multicolumn{3}{cV{3}}{\textbf{Energy flux decay}} &  \multicolumn{3}{c}{\textbf{Pointwise decay}}\\
	   \Xcline{1-7}{0.03cm}
{\textbf{Data}} &$\mathcal{E}^{T}_{\ \Sigma_{\tau}}[\psi] $& $\mathcal{C}_{N_{\tau}^{\mathcal{H}}}[\psi]$ & $\mathcal{C}_{N_{\tau}^{\mathcal{I}}}[\psi]$ &$r\psi|_{\mathcal{H}^{+}}$ & $\psi|_{\{r=r_0\}}$ & $r\psi|_{\mathcal{I}^{+}}$ \\
 \Xcline{1-7}{0.05cm} 
 {Type \textbf{A}} &   $\tau^{-5+\epsilon}$  & $\tau^{-3+\epsilon}$ & $\tau^{-4+\epsilon}$ &$\tau^{-2+\epsilon}$&$\tau^{-\frac{5}{2}+\epsilon}$ &$\tau^{-\frac{9}{4}+\epsilon}$  \\  \hline 
 {Type \textbf{B}} & $\tau^{-6+\epsilon}$  & $\tau^{-4+\epsilon}$ &  $\tau^{-4+\epsilon}$   &$\tau^{-\frac{5}{2}+\epsilon}$&
\cellcolor{gray!25}$\tau^{-3+\epsilon}$  &$\tau^{-\frac{5}{2}+\epsilon}$  \\    \hline
 {Type \textbf{C}}	& $\tau^{-5+\epsilon}$ &  $\tau^{-3+\epsilon}$  & $\tau^{-3+\epsilon}$ &$\tau^{-2+\epsilon}$&$\tau^{-\frac{5}{2}+\epsilon}$  &$\tau^{-2+\epsilon}$ \\ \hline
 {Type \textbf{D}}	& $\tau^{-5+\epsilon}$ & $\tau^{-4+\epsilon}$ & $\tau^{-3+\epsilon}$&$\tau^{-\frac{5}{2}+\epsilon}$ &$\tau^{-\frac{5}{2}+\epsilon}$  &$\tau^{-2+\epsilon}$ \\ \hline
  \end{tabular}
\end{center} 
\caption{Decay rates for $\ell \geq 1$. All are sub-dominant except the one in the shaded cell.  }
\label{summarytablenew1}}
\end{table}\vspace{-0.3cm}
Note that the decay rates for $\psi_{\{r=r_0\}}$ apply for $\sqrt{r-M}\psi$ for all $r>M$.

\subsubsection{An elliptic estimate for $\ell \geq 1$}
\label{sec:EllipticEstimates}

The decay rates for $\psi_{\{r=r_0\}}$, in the $\ell=0$ case, in the  Table \ref{summarytablenew} are $\frac{1}{2}+\epsilon$ away from sharp.  Furthermore, the decay rate for $\psi_{\{r=r_0\}}$, in the $\ell\geq 1$ case, as in Table \ref{summarytablenew1}, is slower than the corresponding expected sharp rate for the $\ell=0$ case. For obtaining late-time asymptotics, the $\ell\geq 1$ rate must be improved. 

The desired improvement of the decay rate of $\psi_{\{r=r_0\}}$ will be achieved using an elliptic estimate and the improved decay rates for $T\psi$. The challenge for obtaining the elliptic estimate is that, in contrast to the sub-extremal case, the decaying global energy flux ${\mathcal{E}}^{T}_{\Sigma_{\tau}}$ is highly \textit{degenerate} at the event horizon. Indeed, recall that ${\mathcal{E}}^{T}_{\Sigma_{\tau}}[\psi]\sim \int_{\Sigma_{\tau}}\left(1-\frac{M}{r}\right)^2\cdot |\partial\psi|^{2}$. In other words, we need to obtain a degenerate elliptic estimate on ERN. It turns out that such an estimate is \textbf{not} possible for $\ell=0$ and hence we will need to derive the precise asymptotics using the aforementioned weak rates (see the next subsection). On the other hand, we can establish such a degenerate elliptic estimate for $\ell \geq 1$ (see Section \ref{sec:ellpest}) which, coupled with a Hardy inequality, schematically gives:
\begin{equation}
\int_{\Sigma_{\tau}}\left(1- \frac{M}{r}\right)^4 \cdot \left(\partial_{\rho}\psi_{\ell \geq 1}\right)^2\cdot r^{-2}\,  d\mu_{\Sigma_{\tau}}\ \ \lesssim \ \ \int_{\Sigma_{\tau}} \left(1- \frac{M}{r}\right)^2 \cdot \left(\partial_{\rho} T \psi_{\ell \geq 1}\right)^2\, d\mu_{\Sigma_{\tau}},
\label{scheelli}
\end{equation}
where $\partial_{\rho}$ denotes the radial ($SO(3)-$invariant) vector field tangent to $\Sigma_{\tau}$. Consequently, recalling that $D=\left(1-\frac{M}{r}\right)^2$ and using a standard Hardy inequality  and the improved  energy decay estimates for $T\psi$ we obtain for Type \textbf{B} data:
\begin{equation*}
\begin{split}
\int_{\mathbb{S}^2}  \left(\psi_{\ell\geq 1}\right)^2\,d\omega\ \ \lesssim\ \  &\:\ \ \frac{1}{{D}}\sqrt{ \int_{\Sigma_{\tau}} D^2 \cdot\left(\partial_{\rho} \psi_{\ell\geq 1}\right)^2 \cdot r^{-2}\,d\mu_{\Sigma_{\tau}}}\cdot \sqrt{\int_{\Sigma_{\tau}} \psi_{\ell\geq 1}^2\cdot r^{-2}\,d\mu_{\Sigma_{\tau}}}\\\
\overset{\eqref{scheelli}}{\lesssim} &\:\ \ \frac{1}{{D}}\sqrt{ \int_{\Sigma_{\tau}} D \cdot\left(\partial_{\rho} T \psi_{\ell\geq 1}\right)^2 \,d\mu_{\Sigma_{\tau}}}\cdot \sqrt{\int_{\Sigma_{\tau}} D \cdot\left(\partial_{\rho} \psi_{\ell\geq 1}\right)^2\,d\mu_{\Sigma_{\tau}}}\\
= \ \ &\: \ \ \frac{1}{{D}}\sqrt{\mathcal{E}^{T}_{\Sigma_{\tau}}[T\psi_{\ell\geq 1}]}\cdot \sqrt{\mathcal{E}^{T}_{\Sigma_{\tau}}[\psi_{\ell\geq 1}]}\\
\lesssim\ \  &\: \ \ \frac{1}{{D}} \sqrt{E_{\ell\geq 1;1}}\cdot \sqrt{E_{\ell\geq 1}}\cdot\tau^{-7+\epsilon},
\end{split}
\end{equation*}
where used the decay rates in Table \ref{summarytablenew1}. This yields that  $\left(1-\frac{M}{r}\right)\cdot \psi_{\ell \geq 1}$ decays with a rate $\tau^{-\frac{7}{2}+\frac{\epsilon}{2}}$. This rate is now indeed sub-dominant (i.e.\ strictly faster than $\tau^{-3}$). We summarize our findings in the table below:
		\begin{table}[H]{\footnotesize
\begin{center}
    \begin{tabular}{ cV{5}c | c |cV{3}c|c|c} 
			\cline{2-7}
		 & \multicolumn{6}{c}{\textbf{Decay rates for $\ell\geq 1$}}\\
				\cline{2-7}
				 & \multicolumn{3}{cV{3}}{\textbf{Energy flux decay}} &  \multicolumn{3}{c}{\textbf{Pointwise decay}}\\
	   \Xcline{1-7}{0.03cm}
{\textbf{Data}} &$\mathcal{E}^{T}_{\ \Sigma_{\tau}}[\psi] $& $\mathcal{C}_{N_{\tau}^{\mathcal{H}}}[\psi]$ & $\mathcal{C}_{N_{\tau}^{\mathcal{I}}}[\psi]$ &$r\psi|_{\mathcal{H}^{+}}$ & $\psi|_{\{r=r_0\}}$ & $r\psi|_{\mathcal{I}^{+}}$ \\
 \Xcline{1-7}{0.05cm} 
 {Type \textbf{A}} &   $\tau^{-5+\epsilon}$  & $\tau^{-3+\epsilon}$ & $\tau^{-4+\epsilon}$ &$\tau^{-2+\epsilon}$&$\tau^{-\frac{5}{2}+\epsilon}$ &$\tau^{-\frac{9}{4}+\epsilon}$  \\  \hline 
 {Type \textbf{B}} & $\tau^{-6+\epsilon}$  & $\tau^{-4+\epsilon}$ &  $\tau^{-4+\epsilon}$   &$\tau^{-\frac{5}{2}+\epsilon}$&
\cellcolor{gray!25}$\tau^{-\frac{7}{2}+\epsilon}$  &$\tau^{-\frac{5}{2}+\epsilon}$  \\    \hline
 {Type \textbf{C}}	& $\tau^{-5+\epsilon}$ &  $\tau^{-3+\epsilon}$  & $\tau^{-3+\epsilon}$ &$\tau^{-2+\epsilon}$&$\tau^{-\frac{5}{2}+\epsilon}$  &$\tau^{-2+\epsilon}$ \\ \hline
 {Type \textbf{D}}	& $\tau^{-5+\epsilon}$ & $\tau^{-4+\epsilon}$ & $\tau^{-3+\epsilon}$&$\tau^{-\frac{5}{2}+\epsilon}$ &$\tau^{-\frac{5}{2}+\epsilon}$  &$\tau^{-2+\epsilon}$ \\ \hline
  \end{tabular}
\end{center} 
\caption{The decay in the shaded cell, obtained using the elliptic estimate, is sub-dominant.  }
\label{summarytablenewafterelliptic}}
\end{table}\vspace{-0.3cm}

\subsubsection{Late-time asymptotics }
\label{sec:LateTimeAsymptotics}

In this section we will provide a summary of the mechanism that gives rise to the precise leading-order asymptotics for $\psi$. Our discussion here complements that of Section \ref{sec:SummaryOfTheMainResults}. The complete proofs cover Sections \ref{sec:asympnonzeroconst}--\ref{sec:hoasymp}. We claim that the decay rates for $\psi_{\ell\geq 1}$ as in Table \ref{summarytablenewafterelliptic} are faster than the sharp decay rates for $\psi_{\ell=0}$. Based on this claim, we will derive first the precise late-time asymptotics (and hence the sharp rates) for $\psi_{\ell=0}$.  For this reason, we will assume in the rest of this section that $\psi$ is a spherically symmetric (and hence supported only on $\ell=0$) solution to the wave equation \eqref{eq:waveequation} on ERN. 

We need to overcome the following difficulties

\begin{itemize}
\item \textbf{Difficulty 1:} Find spacetime regions in which asymptotics can be derived \textit{independently} of their complement in $\mathcal{R}$. An obstruction here is that the decay rates that we have already obtained (as summarized in the previous subsections) are a power $\frac{1}{2}+\epsilon$ from the sharp values in the region $\mathcal{B}=\{r_{\mathcal{H}}\leq r\leq r_{\mathcal{I}}\}$. Compare the rates in Tables \ref{summarytable} and \ref{summarytablenew}.

\item \textbf{Difficulty 2:} Propagate the above asymptotics \textit{globally} in the region $\mathcal{R}$. The main obstruction here is that for data of Type \textbf{A}, \textbf{B} and \textbf{C} the radial (tangential to $\Sigma_{\tau}$) derivative $\partial_{\rho}\psi$ decays only as fast as $\psi$ itself and hence the corresponding decay estimates cannot be easily integrated to propagate the asymptotics of $\psi$, without first deriving the precise asymptotics of $\partial_{\rho}\psi$. Compare the rates in Tables \ref{summarytable} and \ref{summarytablerho}. We remark that this is not the case in sub-extremal black holes where radial derivatives decay faster than the scalar field itself. 

\end{itemize}
We consider the timelike hypersurfaces $\gamma^{\mathcal{I}}$ and $\gamma^{\mathcal{H}}$ such that $(v-u) |_{\gamma^{\mathcal{I}}}\sim u^{\alpha}$ and $(u-v) |_{\gamma^{\mathcal{H}}}\sim v^{\alpha}$  where $0<\alpha<1$ is a constant, and we define the following subsets of the near-infinity region $\mathcal{A}^\mathcal{I}$ and the near-horizon region $\mathcal{A}^\mathcal{H}$: $\mathcal{A}^\mathcal{I}_{\gamma^{\mathcal{I}}}:=\mathcal{A}^\mathcal{I}\cap \{r\geq r|_{\gamma^{\mathcal{I}}}\}$ and $\mathcal{A}^\mathcal{H}_{\gamma^{\mathcal{H}}}=\mathcal{A}^\mathcal{H}\cap \{r\leq r|_{\gamma^{\mathcal{H}}}\}$. Note that $(r-M)|_{\gamma^{\mathcal{H}}} \sim r|_{\gamma^{\mathcal{I}}}\sim\tau^{\alpha}$.

\vspace{-0.15cm}
  \begin{figure}[H]
	\begin{center}
				\includegraphics[scale=0.4]{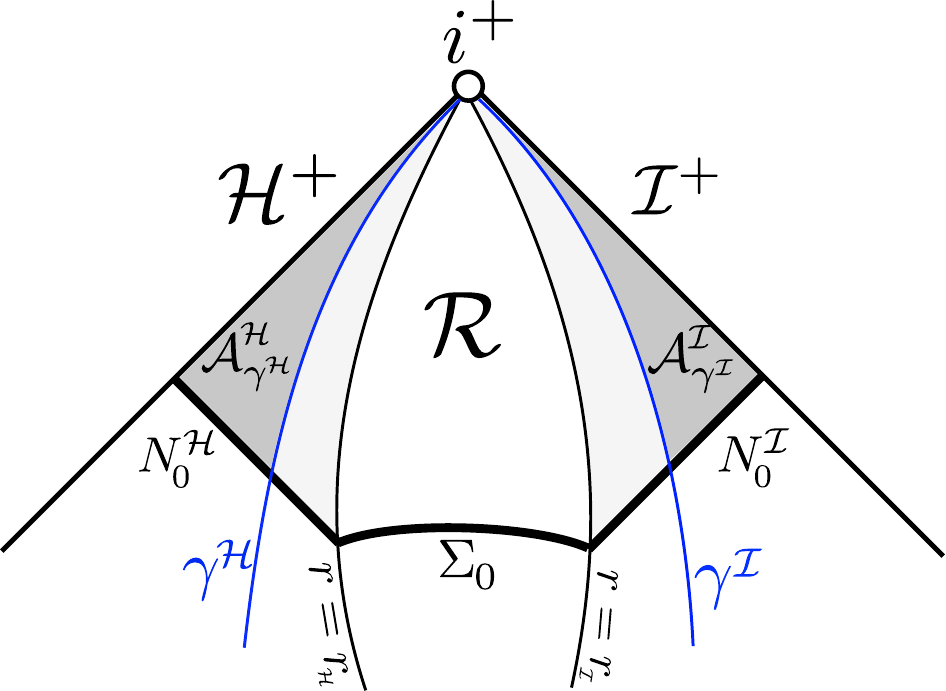}
\end{center}
\vspace{-0.4cm}
\caption{The curves $\gamma^{\mathcal{H}}, \gamma^{\mathcal{I}}$ and the regions $\mathcal{A}^{\mathcal{H}}_{\gamma^{\mathcal{H}}}, \mathcal{A}^{\mathcal{I}}_{\gamma^{\mathcal{I}}}$.}
	\label{fig:gammaa}
\end{figure}
\vspace{-0.3cm}
\noindent  We will summarize the resolutions to the above  difficulties mainly for initial data of Type \textbf{C} and \textbf{A} and make a few concluding comments for data of Type \textbf{B} and \textbf{D}. 

\medskip

\noindent\textit{\underline{Late-time asymptotics for Type \textbf{C} data}}

\medskip

\noindent\textbf{Resolution of difficulty 1} 

\smallskip

For Type \textbf{C} data we derive the leading-order asymptotics of $\psi$ in the near-horizon region $\mathcal{A}^{\mathcal{H}}_{\gamma^{\mathcal{H}}}$ and separately and independently in the near-infinity region  $\mathcal{A}^{\mathcal{I}}_{\gamma^{\mathcal{I}}}$. This derivation distinguishes the extremal case from the sub-extremal case treated in \cite{paper2}, where the asymptotics at the near-infinity region can be propagated all the way to the event horizon using that the radial derivative $\partial_{\rho}\psi$ decays faster than $\psi$. The reason we can independently derive the asymptotics in  the regions $\mathcal{A}^{\mathcal{H}}_{\gamma^{\mathcal{H}}}$ and $\mathcal{A}^{\mathcal{I}}_{\gamma^{\mathcal{I}}}$ in the extremal case has to do with the existence of the two (independent) conserved charges $H_0$ and $I_0$; moreover, for Type \textbf{C} data they are both non-zero, i.e.\ $H_0\neq 0$ and $I_0\neq 0$. To obtain the  precise asymptotics in $\mathcal{A}^{\mathcal{I}}_{\gamma^{\mathcal{I}}}$ and $\mathcal{A}^{\mathcal{H}}_{\gamma^{\mathcal{H}}}$ we propagate the following $v$-asymptotics and $u$-asymptotics of the initial data on  $N_{0}^{\mathcal{I}}$ and $N_{0}^{\mathcal{H}}$, respectively,
\begin{equation}\begin{split}
\partial_v(r\psi)|_{N_{0}^{\mathcal{I}}}=&2I_0v^{-2}+O(v^{-2}),\\
\partial_u(r\psi)|_{N_{0}^{\mathcal{H}}}=&2H_0u^{-2}+O(u^{-2})
\end{split}
\label{uvinitialasymptotics}
\end{equation}
everywhere in $\mathcal{A}^{\mathcal{I}}_{\gamma^{\mathcal{I}}}$ and $\mathcal{A}^{\mathcal{H}}_{\gamma^{\mathcal{H}}}$, respectively. This can be achieved for $\alpha<1$, but sufficiently close to 1.  We next integrate the resulting estimates for $\partial_v(r\psi)$ and $\partial_u(r\psi)$ starting from $\gamma^{\mathcal{I}}$ and $\gamma^{\mathcal{H}}$, respectively, to obtain the asymptotics for $r \psi$, and consequently $\psi$, in appropriate sub-regions $\mathcal{A}^{\mathcal{I}}_{{\gamma'}^{\mathcal{I}}}$ and $\mathcal{A}^{\mathcal{H}}_{{\gamma'}^{\mathcal{H}}}$ of $\mathcal{A}^{\mathcal{I}}_{\gamma^{\mathcal{I}}}$ and $\mathcal{A}^{\mathcal{H}}_{\gamma^{\mathcal{H}}}$ obtained by replacing $\alpha$ with appropriate $\alpha'$ such that  $\alpha< \alpha'<1$. A crucial observation is that  the previously derived decay rates for $\sqrt{r-M}\cdot\psi_{{\gamma'}^{\mathcal{I}}}$ and  $\sqrt{r-M}\cdot\psi_{{\gamma'}^{\mathcal{H}}}$ are almost sharp\footnote{Note, however, that the relevant decay rates for $\psi$, without the $\sqrt{r-M}$ weight, are \underline{not} almost sharp; see Table \ref{summarytablenew}.} and hence strong enough to close this argument by showing that, as long as $a<1$, the terms $r\psi|_{\gamma^{\mathcal{I}}}$ and $r\psi|_{\gamma^{\mathcal{H}}}$ decay faster than, say $r\psi|_{{\gamma'}^{\mathcal{I}}}$ and $r\psi|_{{\gamma'}^{\mathcal{H}}}$, and hence are lower order terms.

\medskip

\noindent\textbf{Resolution of difficulty 2} 

\smallskip

Ideally, we would like to propagate to the left of ${\gamma'}^{\mathcal{I}}$ the asymptotics for $\psi_{{\gamma'}^{\mathcal{I}}}$. In the sub-extremal case this would follow using that $\alpha'<1$ and that the radial derivative $\partial_{\rho}\psi$ decays \emph{faster} that $\psi$. This approach however breaks down in the extremal case in view of the fact that the expected sharp decay rate for $\partial_{\rho}\psi$ is now the \emph{same} as the expected sharp rate for $\psi$. 

Instead we obtain first the precise asymptotic behavior of the radial derivative $\partial_{\rho}\psi$. We commute by $T$ and reproduce the above argument to derive the precise late-time asymptotics for $T(r\psi)$ in the near-horizon region $\mathcal{A}^{\mathcal{H}}_{\gamma^{\mathcal{H}}}$. The crucial observation here is that \textit{the asymptotics for $\partial_{\rho}\psi$ in the region $\{M<r\leq r_{{\mathcal{I}}}\}$ depend only on the asymptotics of $T\psi$ along the event horizon, which in turn depend only on $H_0$}! We next derive sharp decay estimates (with growing $r$ weights in the error terms) for $\partial_{\rho}\psi$ up to the curve ${\gamma'}^{\mathcal{I}}$, that is in the region $\{r_{{\mathcal{I}}}\leq r\leq r_{{\gamma'}^{\mathcal{I}}}\}$.  The latter step would fail if we were to take $\alpha'=1$. We can next derive the  asymptotics for $\psi$ in $\{M<r\leq r_{{\gamma'}^{\mathcal{I}}}\}$  by integrating the estimate for  $\partial_{\rho}\psi$ in the same region backwards from ${\gamma'}^{\mathcal{I}}$.  In the last step we crucially use again that $\alpha' <1$ and that we have already computed the asymptotics for $\psi|_{{\gamma'}^{\mathcal{I}}}$. Global asymptotics follow using a dual argument from infinity and the asymptotics in $\mathcal{A}^{\mathcal{H}}_{{\gamma'}^{\mathcal{H}}}$. See Section \ref{sec:asympnonzeroconst}. Higher order logarithmic corrections are derived in Section \ref{sec:hoasymp}.

\medskip

\noindent\textit{\underline{Late-time asymptotics for Type \textbf{A} data}}

\medskip

\noindent\textbf{Resolution of difficulty 1} 

\smallskip

For Type \textbf{A} data we can derive the leading-order asymptotics of $\psi$, and crucially of $T\psi$, in the near-horizon region $\mathcal{A}^{\mathcal{H}}_{\gamma^{\mathcal{H}}}$ as in the Type \textbf{C} case, but in contrast to the Type \textbf{C} case, {we \underline{cannot} obtain independently the asymptotics in the near-infinity region}  $\mathcal{A}^{\mathcal{I}}_{\gamma^{\mathcal{I}}}$ since the first equation of \eqref{uvinitialasymptotics} does not provide exact asymptotics anymore, given that $I_0=0$. 

\medskip

\noindent\textbf{Resolution of difficulty 2} 

\smallskip

As in the Type \textbf{C} case, we can obtain the precise asymptotics for $\partial_{\rho}\psi$ in the region $\{M<r\leq r_{{\mathcal{I}}}\}$ using the asymptotics of $T\psi$ along the event horizon. However, like before, these asymptotics for $\partial_{\rho}\psi$ do not yield asymptotics for $\psi$ away from $\mathcal{A}^{\mathcal{H}}_{\gamma^{\mathcal{H}}}$. The main idea is that we can, however, derive the precise asymptotics exactly on $\gamma^{\mathcal{I}}$. In other words, equipped with the asymptotics for $\psi$ in $\mathcal{A}^{\mathcal{H}}_{\gamma^{\mathcal{H}}}$ we can next obtain asymptotics only along $\gamma^{\mathcal{I}}$ (and not to the right or to the left of $\gamma^{\mathcal{I}}$ as the asymptotics in these regions will only be derived at a later step). In order to derive asymptotics for $\psi|_{\gamma^{\mathcal{I}}}$ we need to analyze the contributions from the left side (horizon side)  and the right side (infinity side) of $\gamma^{\mathcal{I}}$. As we shall see, in order to capture the precise contributions from both sides we will need to make crucial use of $I_0=0$. It turns out that we can only capture the precise contributions at one level of differentiability higher using the following splitting identity
\begin{equation}
\Big.\frac{D}{2} r \psi\Big|_{\gamma^{\mathcal{I}}}\hspace{0.5cm}= \Big.\underbrace{r\partial_v (r\psi)\Big|_{\gamma^{\mathcal{I}}}}_{\substack{\text{contribution from} \\ \text{the right side of $\gamma^{\mathcal{I}}$}}}-\Big.\underbrace{r^2 \partial_v\psi\Big|_{\gamma^{\mathcal{I}}}}_{\substack{\text{contribution from} \\ \text{the left side of $\gamma^{\mathcal{I}}$}}}
\label{leftright}
\end{equation}
The point behind using this identity is that we can derive asymptotics for certain first-order derivative quantities independently near the horizon and infinity whereas the coupling between horizon and infinity prevents us from doing so directly for the zeroth-order derivatives. 

\textit{Contribution from the right side of }$\gamma^{\mathcal{I}}$: Recall that we want to show that $r\psi|_{{\gamma^{\mathcal{I}}}}$ decays like $\tau^{-2}$ (see Table \ref{summarytable}) and hence all error terms must decay like $\tau^{-2-\epsilon}$. Now propagating in $\mathcal{A}^{\mathcal{I}}_{\gamma^{\mathcal{I}}}$ the first of \eqref{uvinitialasymptotics} only yields an $\epsilon$ improvement for $\partial_{v}(r\psi)|_{\gamma^{\mathcal{I}}}$, that is $r\partial_{v}(r\psi)|_{\gamma^{\mathcal{I}}}\sim r\tau^{-2-\epsilon}\sim \tau^{-2-\epsilon +\alpha}$ which is not fast enough since $\alpha$ is close to 1. To circumvent this difficulty, we need to introduce a new technique which we call \textbf{the singular time inversion} (see Section \ref{sec:timeint}). Specifically, we construct the  time integral $\psi^{(1)}$ of $\psi$ that solves the wave equation  $\square_g\psi^{(1)}=0$ and satisfies $T\psi^{(1)}=\psi$. Note that if $H_0[\psi]\neq 0$ then $\psi^{(1)}$ is \emph{singular} at the horizon and in fact satisfies
\[(r-M)\cdot \partial_{\rho}\psi^{(1)} =-\frac{2}{M}\cdot H_0[\psi]+O(r-M) \]
for $r$ close to  $M$,  but is smooth away from $r=M$. Using appropriate low regularity estimates we can obtain global-in-time decay estimates for $\psi^{(1)}$ to the right of $\gamma^{\mathcal{I}}$. Moreover, using that for $\psi^{(1)}$ has a well-defined Newman--Penrose constant $I_0[\psi^{(1)}]<\infty$, we can propagate \eqref{uvinitialasymptotics} for $\psi^{(1)}$ which yields $\partial_{v}(r\psi^{(1)})|_{\gamma^{\mathcal{I}}}\sim \tau^{-2}$ and hence $r\partial_{v}(r\psi)|_{\gamma^{\mathcal{I}}}\sim r\tau^{-3}\sim \tau^{-3+\alpha}$ which shows that this term does \underline{not} contribute to the asymptotics for $r \psi|_{\gamma^{\mathcal{I}}}$.

\textit{Contribution from the left side of }$\gamma^{\mathcal{I}}$: This is the side that fully contributes to the asymptotics for $r \psi|_{\gamma^{\mathcal{I}}}$ via the term $r^2 \partial_v\psi|_{\gamma^{\mathcal{I}}}$. For we will derive the precise asymptotics for $r^2 \partial_v\psi|_{\gamma^{\mathcal{I}}}$. We make use of the \textit{improved decay rates for the conformal flux} $\mathcal{C}_{N_{\tau}^{\mathcal{I}}}[T\psi]$ (see Table \ref{summarytablenew}; Type \textbf{A}) which, upon integrating the wave equation on $N_{\tau}^{\mathcal{I}}$, yield that the asymptotics for $r^2\partial_{v}\psi|_{\gamma^{\mathcal{I}}}$ can be derived from the asymptotics of $\partial_{\rho}\psi|_{\{r=r_{\mathcal{I}}\}}$ which we already derived (and recall they depend only on $H_0[\psi]$). Hence, the asymptotics for $r^2\partial_{v}\psi|_{\gamma^{\mathcal{I}}}$ depend only on $H_{0}$ and the respective rate is $\tau^{-2}$.  

Concluding, \textit{the precise asymptotics for $r\psi_{\gamma^{\mathcal{I}}}$ depend only on the horizon charge $H_0[\psi]$} and the respective rate is $\tau^{-2}$. The estimate for the conformal flux, as above, in fact yields the asymptotics for $r^2\partial_{v}\psi$ in $\{M< r\leq r_{\gamma^{\mathcal{I}}} \}$ which we can now integrate backwards from $\gamma^{\mathcal{I}}$ (using the asymptotics for $r\psi|_{\gamma^{\mathcal{I}}}$!) to obtain the asymptotics for $r\psi$ in whole region $\{M< r\leq r_{\gamma^{\mathcal{I}}}\}$. It remains to find the asymptotics of $r\psi$  to the right of $\gamma_{\mathcal{\mathcal{I}}}$ all the way up to null infinity. For this, we use the singular time integral $\psi^{(1)}$ once again. Specifically, using the time decay estimates for $\psi^{(1)}$ and that it generically satisfies $I_0[\psi^{(1)}]\neq 0$ we derive the asymptotics of $T(r\psi^{(1)})-T(r\psi^{(1)})|_{\gamma^{\mathcal{I}}}=r\psi-r\psi|_{\gamma^{\mathcal{I}}}$ in $\mathcal{A}^{\mathcal{I}}_{\gamma^{\mathcal{I}}}$ in terms of $I_0[\psi^{(1)}]$. Combined with the above asymptotics for $r\psi|_{\gamma^{\mathcal{I}}}$ we obtain the asymptotics of $r\psi$ in $\mathcal{A}^{\mathcal{I}}_{\gamma^{\mathcal{I}}}$. Note that this shows that both the near-horizon region and the near-infinity region contribute to the asymptotics for the radiation field $r\psi|_{\I}$. This completes the derivation of the asymptotics for $\psi$ everywhere in $\mathcal{R}$. See Section \ref{sec:asympzeroconst}.

\medskip

\noindent\textit{\underline{Late-time asymptotics for Type \textbf{B} data}}

\medskip

In the case of Type \textbf{B} initial data the time integral $\psi^{(1)}$ extends smoothly to the horizon, so we can apply the same procedure as for Type \textbf{C} data to $\psi^{(1)}$ to derive the global late-time asymptotics of $\psi^{(1)}$ and of $T\psi^{(1)}=\psi$. See Section \ref{sec:AsymptoticsForTypeDPerturbations}.

\medskip

\noindent\textit{\underline{Late-time asymptotics for Type \textbf{D} data}}

\medskip
A modified variant of the methods for Type \textbf{A} data can be applied for initial data of Type \textbf{D}. In this case $\partial_{\rho}\psi$ decays faster than $\psi$ itself. In order to obtain the asymptotics for $\partial_{\rho}\psi$ we need to first obtain the asymptotics for the weighted derivative $\partial_{\rho}\big((r-M)\psi\big)$, which in fact decays as fast as $\psi$, by starting from null infinity and propagating up to $\gamma^{\mathcal{H}}$. Once we obtain the asymptotics for $\psi$ and its time derivatives then a posteriori we obtain the asymptotics for $\partial_{\rho}\psi$. See Section \ref{sec:AsymptoticsForTypeDPerturbations}.

\section{The $\h-$localized and $\I-$localized hierarchies}
\label{sec:rweightest}
In this section we will derive the main hierarchies of commuted $r^p$-weighted estimates near $\mathcal{I}^+$ and the analogous commuted ``$(r-M)^{-p}$-weighted'' estimates  $\mathcal{H}^+$ that lie at the heart of the energy and pointwise decay estimates in the subsequent sections.

\subsection{The commutator vector fields and basic estimates}
\label{sec:maineq}
We define the main quantities obtained from $\psi$ for which we will derive $r$-weighted estimates.
\begin{definition}
\label{def:horadfields}
We introduce the following \emph{higher-order radiation fields}: let $n\in \N_0$ and let $\phi=r\cdot \psi$, with $\psi$ a solution to \eqref{eq:waveequation}, then define
\begin{align*}
\Phi_{(n)}:=&\:(2D^{-1}r^2L)^n\phi,\\
\widetilde{\Phi}_{(1)}:=&\:2r(r-M)D^{-1}L\phi,\\
\underline{\Phi}_{(n)}:=&\:(2D^{-1}r^2\underline{L})^n\phi,\\
\widetilde{\underline{\Phi}}_{(1)}:=&\:2rD^{-1}\underline{L}\phi.
\end{align*}
Denote moreover $\widetilde{\underline{\Phi}}_{(0)}:=\phi$ and $\widetilde{\Phi}_{(0)}:=\phi$.
\end{definition}
The lemma below provides the equations for the higher-order radiation fields that are central in deriving the $r$-weighted estimates in a neighbourhood of $\mathcal{H}^+$ and $\mathcal{I}^+$.
\begin{lemma}
\label{lm:maineq}
Let $\psi$ be a smooth solution to \eqref{eq:waveequation}, then for all $n\in \N_0$, we have that
\begin{align}
 \label{eq:maincommeqI}
   4L\underline{L}\Phi_{(n)}=&\:Dr^{-2}\slashed{\Delta}_{\s^2}\Phi_{(n)}+[-4n r^{-1}+O(r^{-2})] L\Phi_{(n)}+ D\left[n(n+1)r^{-2} +O(r^{-3})\right]\Phi_{(n)}\\ \nonumber
    &+n\sum_{k=0}^{\max\{0,n-1\}} O(r^{-2}) \Phi_{(k)},\\
     \label{eq:maincommeqH}
      4\underline{L}L\underline{\Phi}_{(n)}=&\:Dr^{-2}\slashed{\Delta}_{\s^2}\underline{\Phi}_{(n)}+[-4M^{-2}n (r-M)+O((r-M)^{2})] \underline{L}\underline{\Phi}_{(n)}\\ \nonumber
      &+ \left[n(n+1)M^{-4}(r-M)^{2} +O((r-M)^{3})\right]\underline{\Phi}_{(n)}\\ \nonumber
    &+n\sum_{k=0}^{\max\{0,n-1\}} O((r-M)^{2}) \underline{\Phi}_{(k)}.
     \end{align}
Furthermore, 
\begin{align}
 \label{eq:maincommeqL1I}
   4L\underline{L}\widetilde{\Phi}_{(1)}=&\:Dr^{-2}\slashed{\Delta}_{\s^2}\widetilde{\Phi}_{(1)}+[-4 r^{-1}+O(r^{-2})] L\widetilde{\Phi}_{(1)}+ D\left[2r^{-2} +O(r^{-3})\right]\widetilde{\Phi}_{(1)}\\ \nonumber
    &+[2Mr^{-2}+O(r^{-3})]\phi+[Mr^{-2}+O(r^{-3})] \slashed{\Delta}_{\s^2}\phi,\\
     \label{eq:maincommeqL1H}
      4\underline{L}L\widetilde{\underline{\Phi}}_{(1)}=&\:Dr^{-2}\slashed{\Delta}_{\s^2}\widetilde{\underline{\Phi}}_{(1)}+[-4M^{-2} (r-M)+O((r-M)^{2})] \underline{L}\widetilde{\underline{\Phi}}_{(1)}\\ \nonumber
      &+ \left[2M^{-4}(r-M)^{2} +O((r-M)^{3})\right]\widetilde{\underline{\Phi}}_{(1)}\\ \nonumber
    &+[-2M^{-2}(r-M)^2+O((r-M)^{3})]\phi+[-M^{-2}(r-M)^2+O((r-M)^{3})] \slashed{\Delta}_{\s^2}\phi.
     \end{align}
\end{lemma}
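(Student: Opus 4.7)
The plan is to prove the four identities by explicit calculation: \eqref{eq:maincommeqI} and \eqref{eq:maincommeqH} by induction on $n$, and \eqref{eq:maincommeqL1I}--\eqref{eq:maincommeqL1H} by combining the $n=1$ cases of the former with the algebraic identities $\widetilde{\Phi}_{(1)} = \tfrac{r-M}{r}\Phi_{(1)}$ and $\widetilde{\underline{\Phi}}_{(1)} = r^{-1}\underline{\Phi}_{(1)}$, both of which follow immediately from Definition \ref{def:horadfields}. The base case $n=0$ is the classical reduced wave equation for $\phi = r\psi$: a short computation of $\square_g\psi = 0$ in the double-null coordinates $(u,v,\theta,\varphi)$, using $Lr = -\underline{L}r = D/2$ and $\sqrt{-\det g} = \tfrac{D r^2}{2}\sin\theta$, yields $4L\underline{L}\phi = \tfrac{D}{r^2}\slashed{\Delta}_{\s^2}\phi - \tfrac{D D'}{r}\phi$. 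On ERN the potential evaluates to $\tfrac{D D'}{r} = \tfrac{2M(r-M)^3}{r^6}$, which is simultaneously $D\cdot O(r^{-3})$ at infinity and $O((r-M)^3)$ near the horizon, covering $n=0$ in both \eqref{eq:maincommeqI} and \eqref{eq:maincommeqH}.

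For the inductive step of \eqref{eq:maincommeqI}, set $f := 2D^{-1}r^2$, so $\Phi_{(n+1)} = f\cdot L\Phi_{(n)}$, and use $[L,\underline{L}] = 0$ to expand
\[
4L\underline{L}\Phi_{(n+1)} = 4(L\underline{L}f)\cdot L\Phi_{(n)} + 4(\underline{L}f)\cdot L^2\Phi_{(n)} + 4(Lf)\cdot L\underline{L}\Phi_{(n)} + f\cdot L\bigl(4L\underline{L}\Phi_{(n)}\bigr).
\]
Direct computation on ERN gives $\underline{L}f = -Lf = \tfrac{2r(2M-r)}{r-M}$, hence the key expansion $\underline{L}f\cdot \tfrac{D}{2r^2} = -r^{-1} + O(r^{-2})$ and $L\underline{L}f = O(1)$ at infinity. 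Substituting the inductive hypothesis for $L\underline{L}\Phi_{(n)}$, applying $L$ to it, and using $L\Phi_{(k)} = \tfrac{D}{2r^2}\Phi_{(k+1)}$ throughout, one obtains the asserted identity at level $n+1$: the transport coefficient shifts from $-4nr^{-1}$ to $-4(n+1)r^{-1}$ via the combination of $\underline{L}f\cdot L^2\Phi_{(n)}$ with the inductive term $-4nr^{-1}L\Phi_{(n)}$, and an analogous bookkeeping lifts $n(n+1)$ to $(n+1)(n+2)$ in the $\Phi_{(n+1)}$-coefficient; all remaining contributions collapse into the stated lower-order sum $\sum_{k\leq n}O(r^{-2})\Phi_{(k)}$.

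Identity \eqref{eq:maincommeqH} will be proved by the same argument with $L \leftrightarrow \underline{L}$ and the near-horizon expansion $D = M^{-2}(r-M)^2 + O((r-M)^3)$, which gives $2D^{-1}r^2 = 2M^4(r-M)^{-2} + O((r-M)^{-1})$; this formal exchange is a manifestation of the Couch--Torrence inversion $r \mapsto M + M^2/(r-M)$ swapping $\mathcal{H}^+$ and $\mathcal{I}^+$, and it explains why the angular-mass coefficient in the near-horizon case is $n(n+1) M^{-4}(r-M)^2$ rather than $n(n+1)r^{-2}$. For the tilded variants, applying the Leibniz rule to $\widetilde{\Phi}_{(1)} = \tfrac{r-M}{r}\Phi_{(1)}$ gives
\[
4L\underline{L}\widetilde{\Phi}_{(1)} = 4\bigl(L\underline{L}\tfrac{r-M}{r}\bigr)\Phi_{(1)} + 4\bigl(L\tfrac{r-M}{r}\bigr)\underline{L}\Phi_{(1)} + 4\bigl(\underline{L}\tfrac{r-M}{r}\bigr)L\Phi_{(1)} + \tfrac{r-M}{r}\cdot 4L\underline{L}\Phi_{(1)},
\]
and the explicit values $L(\tfrac{r-M}{r}) = -\underline{L}(\tfrac{r-M}{r}) = MD/(2r^2)$, together with the $n=1$ case of \eqref{eq:maincommeqI}, yield \eqref{eq:maincommeqL1I} after re-expressing $\Phi_{(1)}, L\Phi_{(1)}$ in terms of $\widetilde{\Phi}_{(1)}, L\widetilde{\Phi}_{(1)}, \phi$; the $\slashed{\Delta}_{\s^2}\phi$ term emerges from the angular Laplacian in the $\Phi_{(1)}$-equation after this rewriting. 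Identity \eqref{eq:maincommeqL1H} follows analogously from the $n=1$ case of \eqref{eq:maincommeqH}.

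The principal obstacle is the bookkeeping in the induction: one must verify at each stage that the transport coefficient $-4n$ and the mass-like coefficient $n(n+1)$ (with their near-horizon counterparts) shift correctly, while simultaneously confirming that every remaining error term truly sits in the claimed classes $D\cdot O(r^{-3})$, $O(r^{-2})$, $O((r-M)^3)$, or $O((r-M)^2)$, after expressing mixed quantities $L^j\Phi_{(k)}$ purely in terms of the $\Phi_{(k')}$. A useful sanity check is that the quadratic $n(n+1)$ coincides with the angular-Laplacian eigenvalue $\ell(\ell+1)$ at $\ell = n$, which is precisely the borderline index appearing in the Poincaré condition used in Section \ref{sec:CommutedHierarchies} to ensure non-negativity of the commuted $p$-weighted bulk terms.
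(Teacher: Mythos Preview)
Your inductive approach to \eqref{eq:maincommeqI} and \eqref{eq:maincommeqH} is correct and essentially matches the paper's, though the paper organizes the step differently: rather than expanding $4L\underline{L}(f\cdot L\Phi_{(n)})$ by the full Leibniz rule at once, it first computes $2\underline{L}\Phi_{(n+1)}$, substitutes the inductive hypothesis to obtain a first-order transport equation (in which the $\slashed{\Delta}_{\s^2}\Phi_{(n)}$ term is already visible), and only then applies $2L$. This two-step route makes the shift $-4n\mapsto -4(n+1)$ and $n(n+1)\mapsto (n+1)(n+2)$ easier to read off and avoids having to handle the mixed quantity $\underline{L}L\Phi_{(n)}$ separately from $L(4L\underline{L}\Phi_{(n)})$.

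For the tilded identities your route genuinely differs from the paper's. The paper does \emph{not} pass through the $n=1$ case; it computes $2\underline{L}\widetilde{\Phi}_{(1)}$ directly from the definition $\widetilde{\Phi}_{(1)}=2D^{-1}(r^2-Mr)L\phi$ and the $n=0$ equation \eqref{eq:eqradield}, then applies $2L$. Your alternative via $\widetilde{\Phi}_{(1)}=\tfrac{r-M}{r}\Phi_{(1)}$ and Leibniz is valid, but your explanation of where the $\slashed{\Delta}_{\s^2}\phi$ term comes from is not quite right: the $n=1$ equation \eqref{eq:maincommeqI} contains only $\slashed{\Delta}_{\s^2}\Phi_{(1)}$, which rewrites cleanly as $\slashed{\Delta}_{\s^2}\widetilde{\Phi}_{(1)}$ with no remainder. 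The $\slashed{\Delta}_{\s^2}\phi$ term instead arises from the Leibniz cross term $4(Lg)\,\underline{L}\Phi_{(1)}$ with $g=(r-M)/r$, and to evaluate $\underline{L}\Phi_{(1)}$ you must go back to the $n=0$ equation (this is exactly the intermediate identity $2\underline{L}\Phi_{(1)}=\slashed{\Delta}_{\s^2}\phi+\ldots$ that the paper's two-step induction produces). So your method still works, but it does not reduce purely to the $n=1$ case as claimed; the paper's direct-from-$n=0$ computation is slightly more economical here.
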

\begin{proof}
The equations \eqref{eq:maincommeqI} and \eqref{eq:maincommeqH} can easily be derived inductively, using that the $n=0$ case follows directly from rewriting \eqref{eq:waveequation} as the following equation for $\phi$:
\begin{equation}
\label{eq:eqradield}
4\underline{L}L\phi=Dr^{-2}\slashed{\Delta}_{\s^2}\phi-\frac{DD'}{r}\phi.
\end{equation}
\end{proof}

In the following proposition, we establish finiteness of certain limits of the higher-order radiation fields $\Phi_{(n)}$ at $\mathcal{I}^+$.

\begin{proposition}
\label{prop:radfieldsinf}
Let $\psi$ be a smooth solution to \eqref{eq:waveequation}. 
Let $n\in \N$ and assume that
\begin{equation*}
\int_{\Sigma_0}J^T[\psi]\cdot \mathbf{n}_{\Sigma_0}\,d\mu_{\Sigma_0}<\infty.
\end{equation*}
\begin{itemize}
\item[$\rm (i)$]
If we assume that
\begin{equation*}
\sum_{0\leq k\leq n}\left[\int_{\s^2} |\slashed{\Delta}_{\s^2}^{n-k} \Phi_{(k)}|^2|_{{N}_{0}^{\mathcal{I}}}\,d\omega\right](v)<\infty
\end{equation*}
then for all $u\geq u_0$, we have that there exists a constant $C_u=C_u(M,N,{\Sigma_0},u)>0$, such that
\begin{equation}
\label{eq:udepradfieldsinft}
\sum_{0\leq k\leq n}\sup_{u_0\leq u'\leq u}\left[\int_{\s^2} |\slashed{\Delta}_{\s^2}^{n-k} \Phi_{(k)}|^2|_{{N}^{\mathcal{I}}}\,d\omega\right](v)<C_u\sum_{0\leq k\leq n}\left[\int_{\s^2} |\slashed{\Delta}_{\s^2}^{n-k} \Phi_{(k)}|^2|_{{N}_{0}^{\mathcal{I}}}\,d\omega\right](v).
\end{equation}
\item[$\rm (ii)$]
If we make the stronger assumption that for all $0\leq k \leq n$:
\begin{equation*}
\lim_{v\to \infty}\left[\int_{\s^2} |\slashed{\Delta}_{\s^2}^{n-k} \Phi_{(k)}|^2|_{{N}_{0}^{\mathcal{I}}}\,d\omega\right](v)<\infty,
\end{equation*}
then, for all $u\geq u_0$ and $0\leq k \leq n$, we have that
\begin{equation}
\label{eq:limitradfieldsinft}
\lim_{v\to \infty}\left[\int_{\s^2} |\slashed{\Delta}_{\s^2}^{n-k} \Phi_{(k)}|^2|_{{N}_{u}^\mathcal{I}}\,d\omega\right](v)<\infty.
\end{equation}
\end{itemize}
\end{proposition}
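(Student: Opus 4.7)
The proof proceeds by induction on $k$ from $0$ to $n$, applied to the auxiliary quantities $\Psi_k := \slashed{\Delta}_{\s^2}^{n-k}\Phi_{(k)}$. Commuting \eqref{eq:maincommeqI} with $\slashed{\Delta}_{\s^2}^{n-k}$ (which is legitimate because $\slashed{\Delta}_{\s^2}$ commutes with $L$ and $\underline{L}$) produces an evolution equation for $\Psi_k$ that couples only to $\Psi_j$ with $j<k$ through $r^{-2}$-weighted source terms. The Poincar\'e inequality on $\s^2$ then gives $\|\slashed{\Delta}_{\s^2}^{n-k}\Phi_{(j)}\|_{L^2(\s^2)}\lesssim \|\Psi_j\|_{L^2(\s^2)}$ for $j<k$, modulo a contribution from the spherical mean of $\Phi_{(j)}$ that is controlled independently by the lower-index case of the induction.

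For part $\mathrm{(i)}$, fix $v$ and define $F_k(u',v):=\int_{\s^2}|\Psi_k|^2(u',v,\omega)\,d\omega$; the objective is to bound $\sup_{u_0\leq u'\leq u}F_k(u',v)$ in terms of the initial values $F_j(u_0,v)$ for $0\leq j\leq k$, plus bulk data on $\Sigma_0$. For $v$ large enough that the segment $\{(u'',v):u_0\leq u''\leq u'\}$ lies in $\mathcal{A}^{\mathcal{I}}$, one differentiates $F_k$ in $u'$ to obtain $\partial_{u'}F_k=2\int_{\s^2}\Psi_k\,\underline{L}\Psi_k\,d\omega$, and then expresses $\underline{L}\Psi_k(u'',v,\omega)$ by integrating the commuted form of \eqref{eq:maincommeqI} in the $L$ direction from the boundary $\{v=v_{r_{\mathcal{I}}}(u'')\}$; the boundary term is controlled via the $T$-energy of $\psi$ commuted with $\slashed{\Delta}_{\s^2}^{n-k}$ and $L$ in the bulk region $\mathcal{B}$. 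Combined with Cauchy--Schwarz and a Gronwall step (using the inductive control on $F_j$ for $j<k$), this closes the estimate with a constant $C_u$ that may depend on $u$. For $v$ close to $v_{r_{\mathcal{I}}}(u')$, the bound instead follows directly from the bulk $T$-energy estimate.

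For part $\mathrm{(ii)}$, the plan is to show that $F_k(u,v)$ is Cauchy as $v\to\infty$ on $N_u^{\mathcal{I}}$. From the commuted equation, the difference $F_k(u,v_2)-F_k(u,v_1)$ may be expressed, via integration over the rectangle $\{u_0\leq u'\leq u,\,v_1\leq v'\leq v_2\}$, as the corresponding difference $F_k(u_0,v_2)-F_k(u_0,v_1)$ (which vanishes as $v_1,v_2\to\infty$ by the hypothesis) plus a double integral of $r^{-2}$-weighted source terms. Since the $L^2(\s^2)$ norms of these source terms are uniformly bounded by the inductive hypothesis together with the a priori bounds of part $\mathrm{(i)}$, and the $r^{-2}$ weight is integrable in $v$ toward $\mathcal{I}^+$, the double integral tends to zero in the limit.

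The main technical obstacle I anticipate is the careful tracking of the cross-terms in the hierarchy: the source terms in the equation for $\Psi_k$ are $\slashed{\Delta}_{\s^2}^{n-k}\Phi_{(j)}$ with $j<k$, which differ from $\Psi_j$ by a mismatch in the number of angular derivatives, so ensuring that the Poincar\'e reduction is performed uniformly in $v$, and that the spherical-mean contributions to $\Phi_{(j)}$ are properly absorbed through the base case of the induction, is where the bookkeeping is most delicate. A related concern is the careful extraction of the integrable-in-$v$ decay of the source terms that is needed to close the Cauchy argument in part $\mathrm{(ii)}$.
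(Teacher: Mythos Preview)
Your inductive framework and the use of Gr\"onwall in $u$ are in the right spirit, but the step where you obtain $\underline{L}\Psi_k$ by integrating \eqref{eq:maincommeqI} in the $L$-direction does not close. That integration leaves
\[
\underline{L}\Psi_k(u,v)=\underline{L}\Psi_k(u,v_{r_{\mathcal I}}(u))+\int_{v_{r_{\mathcal I}}(u)}^{v}\tfrac14\Big[Dr^{-2}\slashed{\Delta}_{\s^2}\Psi_k+\ldots\Big]\,dv',
\]
and $\slashed{\Delta}_{\s^2}\Psi_k=\slashed{\Delta}_{\s^2}^{\,n-k+1}\Phi_{(k)}$ carries one more angular derivative than any of the quantities $\Psi_j=\slashed{\Delta}_{\s^2}^{\,n-j}\Phi_{(j)}$ in your hierarchy or in the hypothesis. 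When you pair this against $\Psi_k(u,v,\omega)$ inside $\partial_u F_k$, you cannot integrate by parts on $\s^2$ (the two factors live at different $v$); at best you are left with $\|\slashed{\nabla}_{\s^2}\Psi_k\|_{L^2(\s^2)}$, which you do not control, and enlarging the system to include it just pushes the problem up by one derivative.

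The paper avoids this by never integrating in $v$. It exploits the recursive definition $\Phi_{(k)}=2D^{-1}r^2L\Phi_{(k-1)}$ to compute $2\underline{L}\Phi_{(k)}$ \emph{algebraically}: expanding $4\underline{L}(D^{-1}r^2L\Phi_{(k-1)})$ and substituting \eqref{eq:maincommeqI} at level $k-1$ for $4\underline{L}L\Phi_{(k-1)}$ yields
\[
2\underline{L}\Phi_{(k)}=\slashed{\Delta}_{\s^2}\Phi_{(k-1)}+O(r^{-1})\Phi_{(k)}+O(r^0)\sum_{j\le k-1}\Phi_{(j)}.
\]
After commuting with $\slashed{\Delta}_{\s^2}^{\,n-k}$ the top-order source is $\slashed{\Delta}_{\s^2}^{\,n-k+1}\Phi_{(k-1)}=\Psi_{k-1}$, which \emph{is} in the induction. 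One then integrates this transport identity in $u$ from $u_0$, applies Gr\"onwall, and takes the $L^2(\s^2)$ norm; part~(ii) follows from the same identity, since the difference between $\int_{\s^2}|\Phi_{(k)}|^2\,d\omega$ at $u$ and at $u_0$ equals a quantity converging by induction plus a remainder of size $O(r^{-1}(u,v))\to 0$. The lower-order sources $\slashed{\Delta}_{\s^2}^{\,n-k}\Phi_{(j)}$ and the spherical-mean issue you worried about are handled simply by noting that the inductive statement applies verbatim to $\slashed{\Delta}_{\s^2}\psi$ (since $[\slashed{\Delta}_{\s^2},\square_g]=0$), which supplies the required angular-commuted control at the previous level.
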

\begin{proof}
We will prove inductively that \eqref{eq:limitradfieldsinft} holds. The $n=0$ case follows directly from Proposition 3.4 of \cite{paper1}, so we will omit the derivation here.  Let $N\in \N_0$ and suppose \eqref{eq:limitradfieldsinft} holds for all $0\leq n \leq N$. Then, by applying the fundamental theorem of calculus together with \eqref{eq:maincommeqI}, we have that
\begin{equation}
\label{eq:transpest}
\begin{split}
\Phi_{(N+1)}(u,v,\theta,\varphi)=&\:\Phi_{(N+1)}(u_0,v,\theta,\varphi)+\int_{u_0}^u \underline{L}(2r^2D L\Phi_{(N)})(u',v,\theta,\varphi)\,du'\\
=&\:\Phi_{(N+1)}(u_0,v,\theta,\varphi)\\
&+\int_{u_0}^u \left[O(r^{-1})\Phi_{(N+1)}+O(r^0)\slashed{\Delta}_{\s^2}\Phi_{(N)}+O(r^0)\sum_{k=0}^N \Phi_{(k)}\right](u',v,\theta,\varphi)\,du'
\end{split}
\end{equation}
By applying a Gr\"onwall inequality in $u$, we can therefore estimate
\begin{equation*}
\sup_{u_0\leq u'\leq u}|\Phi_{(N+1)}|^2(u',v,\theta,\varphi)\leq C(u)\left( |\Phi_{(N+1)}(u_0,v,\theta,\varphi)|^2+ \sup_{u_0\leq u' \leq u}\left[|\slashed{\Delta}_{\s^2}\Phi_{(N)}|^2+ \sum_{k=0}^N |\Phi_{(k)}|^2\right](u',v,\theta,\varphi)\right).
\end{equation*}
The above equation integrated over $\s^2$, together with \eqref{eq:transpest}, gives the following estimate:
\begin{equation*}
\begin{split}
\Bigg|&\int_{\s^2}|\Phi_{(N+1)}|^2(u,v,\theta,\varphi)d\omega-\int_{\s^2}|\Phi_{(N+1)}|^2(u_0,v,\theta,\varphi)d\omega\\
&-\int_{\s^2}\left(\int_{u_0}^u \left[O(r^0)\slashed{\Delta}_{\s^2}\Phi_{(N)}+O(r^0)\sum_{k=0}^N \Phi_{(k)}\right](u',v,\theta,\varphi)\,du\right)^2\,d\omega\Bigg|\\
\leq& C(u) r^{-1}(u,v) \sup_{u_0\leq u' \leq u}\int_{\s^2}|\Phi_{(N+1)}(u_0,v,\theta,\varphi)|^2+ |\slashed{\Delta}_{\s^2}\Phi_{(N)}|^2(u',v,\theta,\varphi)+ \sum_{k=0}^N |\Phi_{(k)}|^2(u',v,\theta,\varphi)\,d\omega.
\end{split}
\end{equation*}
By the inductive step, together with the additional fact that the equation for the inductive step immediately holds for $\Phi_{(N)}$ replaced by $\slashed{\Delta}_{\s^2}\Phi_{(N)}$, since $[\slashed{\Delta}_{\s^2},\square_g]=0$, we can infer
that \eqref{eq:udepradfieldsinft} holds and morever, the right-hand side of the equation above goes to zero as $v\to \infty$ and therefore, using once more the inductive step (commuted with $\slashed{\Delta}_{\s^2}$), we conclude that
\begin{equation*}
\lim_{v\to \infty}\int_{\s^2}|\Phi_{(N+1)}|^2(u,v,\theta,\varphi)\,d\omega<\infty,
\end{equation*}
which allows us to obtain \eqref{eq:limitradfieldsinft}.
\end{proof}

\begin{remark}
Since $\widetilde{\Phi}_{(1)}=(1+Mr^{-1})\Phi_{(1)}$, \eqref{eq:udepradfieldsinft} and \eqref{eq:limitradfieldsinft} with $n=1$ automatically hold when $\Phi_{(1)}$ is replaced by $\widetilde{\Phi}_{(1)}$.
\end{remark}

\subsection{The key identities}
\label{sec:mainest1}
In order to establish the $r^p$- and $(r-M)^{-p}$-weighted estimates below, we first derive the following $r^p$- and $(r-M)^{-p}$-weighted \emph{identities} in the following \textbf{key lemma}:
\begin{lemma}
\label{lm:rweightidentitiess2v1} Let $\psi$ be a smooth solution to \eqref{eq:waveequation} and $p\in \R$. Then the following identities hold for all $n\in \N$:
\begin{equation}
\label{eq:mainidentitysphereinf}
\begin{split}
\int_{\s^2}&\underline{L}\left(r^p (L\Phi_{(n)})^2\right)\,d\omega + \frac{1}{2}\int_{\s^2}[(p+4n)r^{p-1}+O(r^{p-2})](L\Phi_{(n)})^2\,d\omega\\ 
&+\frac{1}{8}\int_{\s^2} (2-p)r^{p-3}D\left(|\slashed{\nabla}_{\s^2}\Phi_{(n)}|^2-n(n+1)\Phi_{(n)}^2\right)\,d\omega\\ 
=&\:  \int_{\s^2} \frac{1}{4}L\left(n(n+1)r^{p-2}\Phi_{(n)}^2-r^{p-2}|\slashed{\nabla}_{\s^2}\Phi_{(n)}|^2\right)\,d\omega\\ 
+&\int_{\s^2} O(r^{p-3}) \Phi_{(n)}\cdot L\Phi_{(n)}+O(r^{p-3})\slashed{\Delta}_{\s^2}\Phi_{(n)}\cdot L\Phi_{(n)}+n\sum_{k=0}^{\max\{0,n-1\}}\int_{\s^2} O(r^{p-2}) \Phi_{(k)}\cdot L\Phi_{(n)}\,d\omega,
\end{split}
\end{equation}
and
\begin{equation}
\label{eq:mainidentityspherehor}
\begin{split}
\int_{\s^2}&{L}\left((r-M)^{-p} (\underline{L}\underline{\Phi}_{(n)})^2\right)\,d\omega \\ 
&+ \frac{1}{2}\int_{\s^2}(p+4n)M^{-2}[(r-M)^{1-p}+O((r-M)^{2-p})](\underline{L}\underline{\Phi}_{(n)})^2\,d\omega\\ 
&+\frac{1}{8}\int_{\s^2} (2-p)M^{-6}(r-M)^{3-p}\left(|\slashed{\nabla}_{\s^2}\underline{\Phi}_{(n)}|^2-n(n+1)\underline{\Phi}_{(n)}^2\right)\,d\omega\\ 
=&\:  \int_{\s^2} \frac{1}{4}\underline{L}\left(n(n+1)M^{-4}(r-M)^{2-p}\underline{\Phi}_{(n)}^2-M^{-4}(r-M)^{2-p}|\slashed{\nabla}_{\s^2}\underline{\Phi}_{(n)}|^2\right)\,d\omega\\ 
+&\int_{\s^2} O((r-M)^{3-p}) \underline{\Phi}_{(n)}\cdot \underline{L}\underline{\Phi}_{(n)}+O((r-M)^{3-p})\slashed{\Delta}_{\s^2}\underline{\Phi}_{(n)}\cdot \underline{L}\underline{\Phi}_{(n)}\,d\omega\\
&+n\int_{\s^2}\sum_{k=0}^{\max\{0,n-1\}}\int_{\s^2} O((r-M)^{2-p}) \underline{\Phi}_{(k)}\cdot \underline{L}\underline{\Phi}_{(n)}\,d\omega.
\end{split}
\end{equation}
\end{lemma}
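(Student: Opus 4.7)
The plan is to derive the two identities in parallel by multiplying the relevant commuted wave equation by the appropriate weighted derivative of the higher-order radiation field and reorganising the resulting expression as a divergence plus a manifestly signed bulk term plus controlled error terms. I will treat \eqref{eq:mainidentitysphereinf} first; the horizon identity \eqref{eq:mainidentityspherehor} will follow by an essentially symmetric computation, using the formal analogy between the two commuted equations \eqref{eq:maincommeqI} and \eqref{eq:maincommeqH} under the substitution $r \leftrightarrow (r-M)^{-1}$, $L\leftrightarrow \underline{L}$.

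For the near-infinity identity, I would multiply \eqref{eq:maincommeqI} by $\tfrac{1}{2}r^p L\Phi_{(n)}$. The left-hand side then becomes $2r^p L\Phi_{(n)}\,\underline{L}L\Phi_{(n)}$, which I would combine with the identity $\underline{L}(r^p) = -\tfrac{p}{2}Dr^{p-1} = -\tfrac{p}{2}r^{p-1} + O(r^{p-2})$ to recognise
\[
\underline{L}\!\left(r^p(L\Phi_{(n)})^2\right) \;=\; -\tfrac{p}{2}\bigl[r^{p-1}+O(r^{p-2})\bigr](L\Phi_{(n)})^2 \;+\; 2r^p L\Phi_{(n)}\,\underline{L}L\Phi_{(n)}.
\]
Substituting \eqref{eq:maincommeqI} on the right gives, after combining the two coefficients of $(L\Phi_{(n)})^2$, the factor $\tfrac{1}{2}(p+4n)r^{p-1}+O(r^{p-2})$ that appears in the statement. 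The remaining terms carry the angular Laplacian and the $\Phi_{(n)}$-multiplier, both multiplied by $\tfrac12 Dr^{p-2}=\tfrac12 r^{p-2}+O(r^{p-3})$. I would then integrate over $\mathbb{S}^2$, integrate by parts on the sphere to convert $\int_{\mathbb{S}^2}\slashed{\Delta}_{\s^2}\Phi_{(n)}\,L\Phi_{(n)}\,d\omega$ into $-\tfrac12 L\int_{\mathbb{S}^2}|\slashed\nabla_{\s^2}\Phi_{(n)}|^2 d\omega$, and rewrite $\int_{\mathbb{S}^2}\Phi_{(n)}L\Phi_{(n)}\,d\omega$ as $\tfrac12 L\int_{\mathbb{S}^2}\Phi_{(n)}^2\,d\omega$. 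Finally I would absorb the $r$-weights into the $L$-derivative using $L(r^{p-2}) = \tfrac{(p-2)D}{2}r^{p-3}$, which produces precisely the coefficient $\tfrac18(2-p)Dr^{p-3}$ in front of the Poincar\'e-type combination $|\slashed\nabla_{\s^2}\Phi_{(n)}|^2 - n(n+1)\Phi_{(n)}^2$. All remaining terms are either of the advertised $O(r^{p-3})$ quadratic type or of the lower-order mixed $O(r^{p-2})\Phi_{(k)}\cdot L\Phi_{(n)}$ type coming from the last sum in \eqref{eq:maincommeqI}.

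For the horizon identity I would multiply \eqref{eq:maincommeqH} by $\tfrac12(r-M)^{-p}\underline{L}\underline{\Phi}_{(n)}$ and use $L((r-M)^{-p}) = -\tfrac{p}{2M^2}(r-M)^{1-p}+O((r-M)^{2-p})$ together with $Dr^{-2} = M^{-4}(r-M)^2+O((r-M)^{3})$. The two coefficients of $(\underline{L}\underline{\Phi}_{(n)})^2$ combine to give $\tfrac{p+4n}{2}M^{-2}(r-M)^{1-p}$, modulo $O((r-M)^{2-p})$. Sphere integration by parts proceeds as before, and the analogous weight identity $\underline{L}((r-M)^{2-p}) = -\tfrac{(2-p)D}{2}(r-M)^{1-p}$ combined with $D = M^{-2}(r-M)^2 + O((r-M)^3)$ yields the advertised $\tfrac18(2-p)M^{-6}(r-M)^{3-p}$ coefficient for the Poincar\'e combination. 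The residual terms have precisely the structure stated on the right-hand side of \eqref{eq:mainidentityspherehor}.

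I do not expect any genuine obstacles: the proof is a bookkeeping exercise, and the main care is in tracking the expansions of $D$, $Dr^{-2}$, and the weights in powers of $r^{-1}$ near infinity and in powers of $(r-M)$ near the horizon, so as to cleanly split every coefficient into its leading term plus the appropriate $O(\cdot)$ remainder that can be absorbed into the error terms on the right-hand side. The one small subtlety is to verify that each lower-order term generated by the substitutions indeed fits into one of the three error classes stated in \eqref{eq:mainidentitysphereinf}--\eqref{eq:mainidentityspherehor}, i.e.\ either quadratic in $L\Phi_{(n)}$ (resp.\ $\underline{L}\underline{\Phi}_{(n)}$) with one extra power of the weight, or of mixed type with $\Phi_{(n)}$ or $\slashed\Delta_{\s^2}\Phi_{(n)}$ (resp.\ their underlined analogues), or of the explicit summation type over $\Phi_{(k)}$ with $k<n$ inherited from the last term in Lemma~\ref{lm:maineq}.
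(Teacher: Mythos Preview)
Your proposal is correct and follows essentially the same route as the paper: multiply the commuted equation \eqref{eq:maincommeqI} (resp.\ \eqref{eq:maincommeqH}) by the weighted derivative $r^p L\Phi_{(n)}$ (resp.\ $(r-M)^{-p}\underline{L}\underline{\Phi}_{(n)}$), recognise the $\underline{L}$- (resp.\ $L$-) divergence via the Leibniz rule, and integrate by parts on $\s^2$ to extract the $L$- (resp.\ $\underline{L}$-) divergence and the Poincar\'e-type bulk term. The bookkeeping of the $O(\cdot)$ remainders that you describe matches the paper's treatment.
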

\begin{proof}
The identities follow by applying \eqref{eq:maincommeqI} and \eqref{eq:maincommeqH} and integrating by parts on $\s^2$.
\end{proof}

We will make use of the following orthogonal projections
\begin{equation*}
P_{\ell},P_{\leq \ell}, P_{\geq \ell}:L^2(\s^2)\to L^2(\s^2),
\end{equation*}
with $\ell\in \N_0$, which are defined as follows: let $f\in L^2(\s^2)$, then
\begin{align*}
P_{\ell}f=\:f_{\ell},\quad P_{\leq \ell}f=\sum_{\ell'=0}^{\ell}f_{\ell'},\quad P_{\geq \ell}f=\sum_{\ell'=\ell}^{\infty}f_{\ell'},
\end{align*}
where $f_{\ell'}$ is the $\ell'$-th angular mode.

In the lemma below, we prove similar identities for the orthogonal projection $P_1\widetilde{\Phi}_{(1)}$, but we exploit \textbf{crucial cancellations} occurring when we apply standard Poincar\'e inequalities on $\s^2$.
\begin{lemma}
\label{lm:rweightidentitiess2v2} Let $\psi$ be a smooth solution to \eqref{eq:waveequation}. 
The following identities hold for all $p\in \R$:
\begin{align}
\label{eq:rweightidentitiess2v2I}
\int_{\s^2}&\underline{L}\left(r^p (LP_1\widetilde{\Phi}_{(1)})^2\right)\,d\omega + \frac{1}{2}\int_{\s^2}[(p+4)r^{p-1}+O(r^{p-2})](LP_1\widetilde{\Phi}_{(1)})^2\,d\omega\\ \nonumber
=&\:  \int_{\s^2} O(r^{p-3}) P_1\widetilde{\Phi}_{(1)}\cdot LP_1\widetilde{\Phi}_{(1)}+\int_{\s^2} O(r^{p-3}) P_1\phi\cdot LP_1\widetilde{\Phi}_{(1)}\,d\omega,\\
\label{eq:rweightidentitiess2v2H}
\int_{\s^2}&{L}\left((r-M)^{-p} (\underline{L}P_1\widetilde{\underline{\Phi}}_{(1)})^2\right)\,d\omega+ \frac{1}{2}\int_{\s^2}(p+4)M^{-2}[(r-M)^{1-p}+O((r-M)^{2-p})](\underline{L}P_1\widetilde{\underline{\Phi}}_{(1)})^2\,d\omega\\ \nonumber
=&\:  \int_{\s^2} O((r-M)^{3-p}) P_1\widetilde{\underline{\Phi}}_{(1)}\cdot \underline{L}P_1\widetilde{\underline{\Phi}}_{(1)}\,d\omega+\int_{\s^2} O((r-M)^{3-p}) P_1\phi\cdot \underline{L}P_1\widetilde{\underline{\Phi}}_{(1)}\,d\omega.
\end{align}
\end{lemma}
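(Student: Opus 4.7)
The plan is to derive both identities by projecting the evolution equations \eqref{eq:maincommeqL1I} and \eqref{eq:maincommeqL1H} of Lemma \ref{lm:maineq} onto the first spherical harmonic mode with the operator $P_1$, and then running essentially the integration-by-parts computation used in the proof of Lemma \ref{lm:rweightidentitiess2v1} for $n=1$. The whole point of the present lemma is that on $\ell=1$ two \emph{independent} algebraic cancellations occur simultaneously; both rely on the fact that $P_1$ commutes with $L$, $\underline{L}$ and $\slashed{\Delta}_{\s^2}$, and on the saturation of the Poincar\'e inequality on $\ell=1$, namely $\slashed{\Delta}_{\s^2}P_1 f=-2P_1 f$, so that $\int_{\s^2}|\slashed{\nabla}_{\s^2}P_1 f|^2\,d\omega=2\int_{\s^2}(P_1 f)^2\,d\omega$ for all $f\in L^2(\s^2)$.

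For \eqref{eq:rweightidentitiess2v2I}, I would take the $P_1$-projection of \eqref{eq:maincommeqL1I}, multiply by $\tfrac{1}{2}r^p\,LP_1\widetilde{\Phi}_{(1)}$, integrate over $\s^2$, and apply Leibniz in $L$ together with integration by parts on $\s^2$, in complete analogy with the derivation of \eqref{eq:mainidentitysphereinf} for $n=1$. The \emph{first} cancellation is that the combined contributions of the $Dr^{-2}\slashed{\Delta}_{\s^2}\widetilde{\Phi}_{(1)}$ and $D[2r^{-2}+O(r^{-3})]\widetilde{\Phi}_{(1)}$ terms in \eqref{eq:maincommeqL1I} produce, after IBP on $\s^2$, a bulk piece of exactly the form $\tfrac{1}{8}(2-p)Dr^{p-3}\bigl(|\slashed{\nabla}_{\s^2}P_1\widetilde{\Phi}_{(1)}|^2-2(P_1\widetilde{\Phi}_{(1)})^2\bigr)$ and a boundary piece of exactly the form $\tfrac{1}{4}L\bigl(r^{p-2}(2(P_1\widetilde{\Phi}_{(1)})^2-|\slashed{\nabla}_{\s^2}P_1\widetilde{\Phi}_{(1)}|^2)\bigr)$; both vanish identically upon integration over $\s^2$ by Poincar\'e saturation. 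The \emph{second} cancellation is that the source term on $\ell=1$ collapses as
\begin{equation*}
[2Mr^{-2}+O(r^{-3})]P_1\phi+[Mr^{-2}+O(r^{-3})]\slashed{\Delta}_{\s^2}P_1\phi=(2Mr^{-2}-2Mr^{-2})P_1\phi+O(r^{-3})P_1\phi=O(r^{-3})P_1\phi,
\end{equation*}
which, once multiplied by $r^p LP_1\widetilde{\Phi}_{(1)}$, produces exactly the error term $\int_{\s^2}O(r^{p-3})P_1\phi\cdot LP_1\widetilde{\Phi}_{(1)}\,d\omega$ appearing on the right-hand side of \eqref{eq:rweightidentitiess2v2I}. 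All remaining subleading contributions from the $O(r^{-2})$ coefficient of $L\widetilde{\Phi}_{(1)}$ and from the Leibniz expansion are of the form $O(r^{p-3})P_1\widetilde{\Phi}_{(1)}\cdot LP_1\widetilde{\Phi}_{(1)}$, which yields the other error term.

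For \eqref{eq:rweightidentitiess2v2H} I would run the Couch--Torrence dual computation starting from \eqref{eq:maincommeqL1H}: multiply the $P_1$-projection by $\tfrac{1}{2}(r-M)^{-p}\,\underline{L}P_1\widetilde{\underline{\Phi}}_{(1)}$, integrate over $\s^2$, and apply Leibniz in $\underline{L}$ together with IBP on $\s^2$. The same two cancellations operate with $r$ replaced by $(r-M)^{-1}$ and $L$ replaced by $\underline{L}$: the bulk and boundary angular-derivative pieces $\tfrac{1}{8}(2-p)M^{-6}(r-M)^{3-p}\bigl(|\slashed{\nabla}_{\s^2}P_1\widetilde{\underline{\Phi}}_{(1)}|^2-2(P_1\widetilde{\underline{\Phi}}_{(1)})^2\bigr)$ and $\tfrac{1}{4}\underline{L}\bigl(M^{-4}(r-M)^{2-p}(2(P_1\widetilde{\underline{\Phi}}_{(1)})^2-|\slashed{\nabla}_{\s^2}P_1\widetilde{\underline{\Phi}}_{(1)}|^2)\bigr)$ integrate to zero on $\s^2$, while the source $[-2M^{-2}(r-M)^2+O((r-M)^3)]P_1\phi+[-M^{-2}(r-M)^2+O((r-M)^3)]\slashed{\Delta}_{\s^2}P_1\phi$ collapses at leading order to $O((r-M)^3)P_1\phi$ and yields the stated $O((r-M)^{3-p})$ source error.

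The only step requiring genuine care is the bookkeeping of the $O(r^{-k})$ and $O((r-M)^k)$ subleading coefficients in \eqref{eq:maincommeqL1I}, \eqref{eq:maincommeqL1H}: once the two leading cancellations have been pinned down, each remaining term produced by the Leibniz and IBP steps must be checked to be absorbable into either $O(r^{p-3})P_1\widetilde{\Phi}_{(1)}\cdot LP_1\widetilde{\Phi}_{(1)}$ or $O(r^{p-3})P_1\phi\cdot LP_1\widetilde{\Phi}_{(1)}$ (respectively $O((r-M)^{3-p})P_1\widetilde{\underline{\Phi}}_{(1)}\cdot\underline{L}P_1\widetilde{\underline{\Phi}}_{(1)}$ or $O((r-M)^{3-p})P_1\phi\cdot\underline{L}P_1\widetilde{\underline{\Phi}}_{(1)}$), and in particular that no surviving obstruction term of the form $O(r^{p-3})\slashed{\Delta}_{\s^2}P_1\widetilde{\Phi}_{(1)}\cdot LP_1\widetilde{\Phi}_{(1)}$ remains. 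This last point is automatic because $\slashed{\Delta}_{\s^2}P_1=-2P_1$ converts any such stray term into the allowed form $O(r^{p-3})P_1\widetilde{\Phi}_{(1)}\cdot LP_1\widetilde{\Phi}_{(1)}$. I expect this routine but delicate verification to be the main obstacle; no further commutations or projections are required.
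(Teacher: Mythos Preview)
Your proposal is correct and follows essentially the same approach as the paper: run the computation of Lemma \ref{lm:rweightidentitiess2v1} for $n=1$ using the modified equations \eqref{eq:maincommeqL1I}, \eqref{eq:maincommeqL1H}, and exploit $\slashed{\Delta}_{\s^2}P_1=-2P_1$ together with the Poincar\'e saturation \eqref{eq:poincareeq} to effect the two cancellations. In fact you are more explicit than the paper's proof, which merely states that these cancellations occur and yield the extra factors of $r^{-1}$ and $r-M$.
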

\begin{proof}
The proof proceeds exactly as the proof of Lemma \ref{lm:rweightidentitiess2v1} with $n=1$, but we additionally use the standard Poincar\'e inequalities on $\s^2$ together with the fact that
\begin{equation*}
\slashed{\Delta}_{\s^2}P_1\psi+2P_1\psi=0
\end{equation*}
to arrive at additional cancellations resulting in additional factors of $\frac{1}{r}$ and $r-M$ on the right-hand sides of the identities for $P_1\widetilde{\Phi}_{(1)}$ and $P_1\widetilde{\underline{\Phi}}_{(1)}$, respectively.
\end{proof}

\subsection{The main commuted hierarchies}
\label{sec:mainest}
We have the following
\begin{proposition}
\label{prop:generalrpest} Let $\psi$ be a smooth solution to \eqref{eq:waveequation}. 
Fix $n\in \N_0$ and assume that for all $0\leq k\leq \min\{n-1,0\}$ and $0\leq j \leq n-k$
\begin{equation}
\label{assm:id1}
\lim_{v\to \infty}\left(\int_{\s^2}|\snabla_{\s^2}\slashed{\Delta}_{\s^2}^jP_{\geq 1}\Phi_{(k)}|^2\,d\omega\right)(u_0,v)<\infty.
\end{equation}

Let $\epsilon>0$ be arbitrarily small, then there exists $r_{\mathcal{I}}>0$ sufficiently large, such that for $p\in (-4n,2]$ and for all $0\leq u_1\leq u_2$:
\begin{equation}
\label{eq:rpestinf}
\begin{split}
(1-\epsilon)\int_{{N}^{\mathcal{I}}_{u_2}}& r^p(LP_{\geq 1}\Phi_{(n)})^2\,d\omega dv+ \frac{1}{2}(1-\epsilon)\int_{u_1}^{u_2} \int_{{N}^{\mathcal{I}}_u}(p+4n)r^{p-1}(LP_{\geq 1}\Phi_{(n)})^2\,d\omega dv du\\
&+\frac{1}{4}\int_{\mathcal{I}^+(u_1,u_2)}\left[r^{p-2} |\snabla_{\s^2}P_{\geq 1}\Phi_{(n)}|^2-n(n+1)r^{p-2}(P_{\geq 1}\Phi_{(n)})^2\right]\,d\omega du\\
&+\frac{1}{8}\int_{u_1}^{u_2} \int_{{N}^{\mathcal{I}}_u}(2-p)r^{p-3}D\left(|\snabla_{\s^2}P_{\geq 1}\Phi_{(n)}|^2-n(n+1)P_{\geq 1}\Phi_{(n)}^2\right)\,d\omega dv du\\
\leq&\: C\int_{{N}^{\mathcal{I}}_{u_1}}r^p(LP_{\geq 1}\Phi_{(n)})^2\,d\omega dv+ C\sum_{k\leq n}\int_{\Sigma_{u_1}} J^T[T^k\psi]\cdot \mathbf{n}_{u_1}\,d\mu_{\Sigma_{u_1}},
\end{split}
\end{equation}
where $C=C(n,M, {\Sigma_0},r_{\mathcal{H}})>0$ is a constant and we can take $r_{\mathcal{I}}=(p+4n)^{-1}R_0(n,M)>0$. 

Furthermore, there exists $r_{\mathcal{H}}>M$, with $r_{\mathcal{H}}-M$ suitably small, such that for $p\in (-4n,2]$ and for all $0\leq u_1\leq u_2$:
\begin{equation}
\label{eq:rpesthor}
\begin{split}
(1-\epsilon)\int_{{N}^{\mathcal{H}}_{v_2}}& (r-M)^{-p}(\underline{L}P_{\geq 1}\underline{\Phi}_{(n)})^2\,d\omega du+ \frac{1}{2}(1-\epsilon)\int_{v_1}^{v_2} \int_{{N}^{\mathcal{H}}_u}(p+4n)(r-M)^{1-p}(\underline{L}P_{\geq 1}\underline{\Phi}_{(n)})^2\,d\omega du dv\\
&+\frac{1}{4}M^{-4}\int_{\mathcal{H}^+(v_1,v_2)}\left[(r-M)^{2-p} |\snabla_{\s^2}\underline{\Phi}_{(n)}|^2-n(n+1)(r-M)^{2-p}\underline{\Phi}_{(n)}^2\right]\,d\omega dv\\
&+\frac{1}{8}\int_{v_1}^{v_2} \int_{{N}^{\mathcal{H}}_v}(2-p)(r-M)^{3-p}M^{-6}\left(|\snabla_{\s^2}P_{\geq 1}\underline{\Phi}_{(n)}|^2-n(n+1)P_{\geq 1}\underline{\Phi}_{(n)}^2\right)\,d\omega du dv\\
\leq&\: C\int_{{N}^{\mathcal{H}}_{v_1}}r^p(\underline{L}P_{\geq 1}\underline{\Phi}_{(n)})^2\,d\omega du+ C\sum_{k\leq n}\int_{\Sigma_{v_1}} J^T[T^k\psi]\cdot \mathbf{n}_{v_1}\,d\mu_{\Sigma_{v_1}},
\end{split}
\end{equation}
where $C=C(n,D,r_{\mathcal{H}})>0$ is a constant and we can take $(r_{\mathcal{H}}-M)^{-1}=(p+4n)^{-1}M^{-2}R_0(n,M)>0$. 
\end{proposition}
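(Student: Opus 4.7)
The proof proceeds by integrating the key identities of Lemma \ref{lm:rweightidentitiess2v1} over ${N}^{\mathcal{I}}_u$ for $u\in[u_1,u_2]$ (respectively ${N}^{\mathcal{H}}_v$ for $v\in[v_1,v_2]$), after first applying the projection $P_{\geq 1}$, which commutes with $L,\underline{L}$ and $\slashed{\Delta}_{\s^2}$ since the metric is spherically symmetric. Integration of \eqref{eq:mainidentitysphereinf} in $v$ produces the flux terms $\int_{N^{\mathcal{I}}_{u_i}} r^p(L P_{\geq 1}\Phi_{(n)})^2\,d\omega dv$, a boundary term at $\mathcal{I}^+$ arising from $L$-integration of the total $L$-derivative (whose finiteness is provided by Proposition \ref{prop:radfieldsinf} under assumption \eqref{assm:id1}), together with the bulk terms involving $(p+4n)r^{p-1}(L\Phi_{(n)})^2$ and $(2-p)r^{p-3}D(|\snabla\Phi_{(n)}|^2-n(n+1)\Phi_{(n)}^2)$. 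Subsequent $u$-integration over $[u_1,u_2]$ then produces the left-hand side of \eqref{eq:rpestinf}.

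The next step is to absorb the $O(r^{p-2})$ correction to the main bulk coefficient $(p+4n)r^{p-1}$ by imposing $r_\mathcal{I}\geq C_0(n,M)/(p+4n)$ for a constant $C_0$ depending only on $n$ and $M$; this yields the factor $(1-\epsilon)$ on the LHS. The error terms $O(r^{p-3})\Phi_{(n)}\cdot L\Phi_{(n)}$ and $O(r^{p-3})\slashed{\Delta}_{\s^2}\Phi_{(n)}\cdot L\Phi_{(n)}$ are handled by weighted Cauchy--Schwarz:
\begin{equation*}
\Big|\!\int O(r^{p-3})\Phi_{(n)} L\Phi_{(n)}\Big|\leq \delta\! \int r^{p-1}(L\Phi_{(n)})^2+C_\delta\!\int r^{p-5}\Phi_{(n)}^2,
\end{equation*}
where the first term is absorbed into the main bulk term by choosing $\delta$ small, and the second is controlled via the Hardy inequalities of Appendix \ref{sec:HardyInequalities} applied to $\Phi_{(n)}=2D^{-1}r^2 L\Phi_{(n-1)}$, giving control by $\int r^{p-1}(L\Phi_{(n-1)})^2$; analogous reasoning with $\snabla_{\s^2}\Phi_{(n)}$ in place of $\Phi_{(n)}$ exploits $[\slashed{\Delta}_{\s^2},L]=0$ and the commuted identity. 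The Poincar\'e inequality \eqref{eq:poincareeq} plays the role of ensuring that, when we later add contributions from $P_{\geq n}\Phi_{(n)}$ (the only modes for which $|\snabla_{\s^2}\Phi_{(n)}|^2-n(n+1)\Phi_{(n)}^2\geq 0$), the indefinite spherical term drops out, while for $1\leq \ell<n$ the bulk contribution on the LHS is separately nonnegative after using the improved identity that replaces $\Phi_{(n)}$ by $\widetilde{\Phi}_{(1)}$ (which is why the proposition is later refined in Section \ref{sec:mainestextended}).

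The remaining lower-order terms $n\sum_{k=0}^{n-1}O(r^{p-2})\Phi_{(k)}\cdot L\Phi_{(n)}$ are the principal difficulty. The plan is to apply Cauchy--Schwarz to split them into $\delta$ times the main bulk term $\int r^{p-1}(L\Phi_{(n)})^2$ plus $C_\delta \int r^{p-3}\Phi_{(k)}^2$, and then control each $\Phi_{(k)}$, $k\leq n-1$, inductively. Here one uses the algebraic relation $2\underline{L}\Phi_{(k+1)}=\slashed{\Delta}_{\s^2}\Phi_{(k)}+\ldots$ (following from Lemma \ref{lm:maineq}) to trade factors of $\Phi_{(k)}$ for a $T$-derivative, since $2T=\underline{L}+L$ modulo $D$-suppressed terms. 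Iterating this exchange allows one to bound the $\Phi_{(k)}$ terms ultimately by the $T$-commuted $T$-fluxes $\int_{\Sigma_{u_1}}J^T[T^j\psi]\cdot \mathbf{n}_{u_1}\,d\mu_{\Sigma_{u_1}}$ for $j\leq n$, which produces the last term on the RHS of \eqref{eq:rpestinf}.

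The estimate \eqref{eq:rpesthor} near the horizon is proved by precisely the same strategy applied to identity \eqref{eq:mainidentityspherehor}, with the weight $r$ replaced by $(r-M)^{-1}$ and $L\leftrightarrow \underline{L}$; this structural duality is the analytic manifestation of the Couch--Torrence conformal inversion of ERN. The smallness condition on $r_\mathcal{H}-M$, of the form $(r_\mathcal{H}-M)^{-1}\geq C_0(n,M)/(p+4n)$, again absorbs the $O((r-M)^{2-p})$ correction to the principal coefficient. The main obstacle, as above, is controlling the lower-order $\underline{\Phi}_{(k)}$ terms by $T$-commuted energies, and the argument is entirely parallel with $T$-derivatives now recovered from $\underline{L}$ via $\underline{L}=-\tfrac12 D\,Y$ together with $[T,Y]=0$, ensuring that the extracted $T$-derivatives act on $\psi$ rather than producing $Y$-derivatives that would fail to be bounded uniformly in $\tau$.
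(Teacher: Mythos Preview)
Your outline captures the broad architecture (integrate the identity of Lemma \ref{lm:rweightidentitiess2v1}, absorb lower-order terms via weighted Cauchy--Schwarz and Hardy, exploit the $\mathcal{H}^+$/$\mathcal{I}^+$ duality for the horizon estimate), but the treatment of the principal error term $J_3=\sum_{k\leq n-1}\int O(r^{p-2})\Phi_{(k)}L\Phi_{(n)}$ diverges from the paper in a way that matters, and you misattribute the role of $P_{\geq 1}$.

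The paper does \emph{not} apply Cauchy--Schwarz directly to $J_3$. Instead, because $P_{\geq 1}$ kills the kernel of $\slashed{\Delta}_{\s^2}$, one can introduce $f_{(n)}$ with $\slashed{\Delta}_{\s^2}f_{(n)}=\Phi_{(n)}$ and integrate by parts twice on $\s^2$ to rewrite $J_3=\sum_k\int O(r^{p-2})\slashed{\Delta}_{\s^2}\Phi_{(k)}\cdot Lf_{(n)}$. One then substitutes the wave equation $\slashed{\Delta}_{\s^2}\Phi_{(k)}\sim r^2\underline{L}L\Phi_{(k)}+\ldots$ and integrates by parts in $\underline{L}$, producing flux terms $\int_{N_u} O(r^p)L\Phi_{(k)}\cdot Lf_{(n)}$. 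Now Cauchy--Schwarz together with the Poincar\'e bound $\int(Lf_{(n)})^2\leq\tfrac14\int(L\Phi_{(n)})^2$ gives $\epsilon\int r^p(L\Phi_{(n)})^2+C\epsilon^{-1}\int r^{p-4}\Phi_{(k+1)}^2$, and repeated Hardy closes the estimate. This is the actual role of $P_{\geq 1}$ in the proof; it is \emph{not} invoked to render $|\snabla\Phi_{(n)}|^2-n(n+1)\Phi_{(n)}^2$ nonnegative (which, as the remark after the proposition notes, would require $P_{\geq n}$). Your direct Cauchy--Schwarz produces $\int r^{p-3}\Phi_{(k)}^2$ rather than $\int r^{p-4}\Phi_{(k+1)}^2$; for $p=2$ this has Hardy exponent $q=-1$, where the Hardy inequality degenerates. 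The ``trade $\Phi_{(k)}$ for $T$-derivatives via $2\underline{L}\Phi_{(k+1)}=\slashed{\Delta}\Phi_{(k)}+\ldots$'' that you propose as a substitute is not a workable replacement: that relation expresses $\slashed{\Delta}\Phi_{(k)}$, not $\Phi_{(k)}$, and there is no clear mechanism to absorb the resulting angular derivatives without the $f_{(n)}$ device.

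Two further omissions: you do not introduce the cut-off $\chi$ localizing to $r\geq r_{\mathcal{I}}$, which is how the paper handles the inner boundary at $r=r_{\mathcal{I}}$ and generates the compactly supported remainders estimated by the Morawetz inequality \eqref{eq:iledwayps} (this is the source of the $T$-commuted fluxes on the right-hand side); and your treatment of the $O(r^{p-3})\slashed{\Delta}_{\s^2}\Phi_{(n)}\cdot L\Phi_{(n)}$ piece of $J_2$ by ``analogous reasoning'' does not close without again using the wave equation and integrating by parts in $\underline{L}$, exactly as in the paper.
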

\begin{proof}
Observe first of all that the assumption \eqref{assm:id1} together with the smoothness assumption of the initial data on $\Sigma_0$ imply that
\begin{equation*}
\int_{\Sigma_0} J^T[\psi_{\geq 1}]\cdot \mathbf{n}_{\Sigma_0}\,d\mu_{\Sigma_0}<\infty.
\end{equation*}
We can therefore appeal to Proposition \ref{prop:radfieldsinf} with regards to the limiting behaviour of $P_{\geq 1}\Phi_{(k)}$ at $\mathcal{I}^+$.

We will first derive the estimate \eqref{eq:rpestinf}. In all the estimates in this proof, we assume for notational convenience that $\int_{\s^2}\psi\,d\omega =0$. We introduce a smooth cut-off function $\chi: \R \to \R$ such that $\chi(r)=0$ for all $r\leq r_{\mathcal{I}}$ and $\chi(r)=1$ for all $r\geq r_{\mathcal{I}}+M$. We will choose $r_{\mathcal{I}}$ appropriately large.

We now integrate both sides of \eqref{eq:mainidentitysphereinf} in the $u$ and $v$ directions to obtain
\begin{equation}
\label{eq:maineqrpestinf}
\begin{split}
\int_{{N}^{\mathcal{I}}_{u_2}}& r^p(L(\chi\Phi_{(n)}))^2\,d\omega dv+ \frac{1}{2}\int_{u_1}^{u_2} \int_{{N}^{\mathcal{I}}_u}(p+4n)r^{p-1}(L(\chi\Phi_{(n)}))^2\,d\omega dv du\\
&+\frac{1}{4}\int_{\mathcal{I}^+(u_1,u_2)}\left[r^{p-2} |\snabla_{\s^2}\Phi_{(n)}|^2-n(n+1)r^{p-2}\Phi_{(n)}^2\right]\,d\omega du\\
&+\frac{1}{8}\int_{u_1}^{u_2} \int_{{N}^{\mathcal{I}}_u}(2-p)r^{p-3}D\chi^2\left(|\snabla_{\s^2}\Phi_{(n)}|^2-n(n+1)\Phi_{(n)}^2\right)\,d\omega dv du\\
=&\: \int_{{N}^{\mathcal{I}}_{u_1}} r^p(L(\chi\Phi_{(n)}))^2\,d\omega dv\\
&+J_1+J_2+J_3+\sum_{|\alpha|\leq 1} \int_{u_1}^{u_2} \int_{{N}^{\mathcal{I}}_u}r^{p-2}\underline{L} (\chi\Phi_{(n)}) \cdot R_{\chi}[\partial^{\alpha}\Phi_{(n)}]\,d\omega dudv,
\end{split}
\end{equation}
where we use the notation $R_{\chi}[f]$ for terms that are compactly supported in $r_{\mathcal{I}}\leq r\leq r_{\mathcal{I}}+M$ and are linear in the function $f$, and we define
\begin{align*}
J_1:=&\: \int_{u_1}^{u_2} \int_{{N}^{\mathcal{I}}_u}O(r^{p-2})(L(\chi \Phi_{(n)}))^2\,d\omega du dv,\\
J_2:=&\: \int_{u_1}^{u_2} \int_{{N}^{\mathcal{I}}_u}O(r^{p-3})\chi \Phi_{(n)}\cdot L(\chi \Phi_{(n)})+ O(r^{p-3})\slashed{\Delta}_{\s^2}(\chi\Phi_{(n)})\cdot L(\chi \Phi_{(n)})\,d\omega du dv,\\
J_3:=&\: n\sum_{k=0}^{n-1}\int_{u_1}^{u_2} \int_{{N}^{\mathcal{I}}_u} O(r^{p-2}) \chi \Phi_{(k)}\cdot L(\chi \Phi_{(n)})\,d\omega dv du.
\end{align*}
In order to obtain \eqref{eq:maineqrpestinf} we used that $r^{p-2}\chi^2\Phi_{(n)}^2$ and $r^{p-2}\chi^2\Phi_{(n)}^2$ vanish on $\{r=r_{\mathcal{I}}\}$.

First of all, by \eqref{eq:iledwayps} and the compactness of the support of $R_{\chi}$ it follows that there exists a constant $C(M,{\Sigma_0},r_{\mathcal{I}})>0$ such that
\begin{equation}
\label{est:cptsuppremainder}
\sum_{|\alpha|\leq 1} \int_{u_1}^{u_2} \int_{{N}^{\mathcal{I}}_u}r^{p-2}\underline{L} (\chi\Phi_{(n)}) \cdot R_{\chi}[\partial^{\alpha}\Phi_{(n)}]\,d\omega dudv\leq C\sum_{k\leq n}\int_{\Sigma_{u_1}} J^T[T^k\psi]\cdot \mathbf{n}_{u_1}\,d\mu_{\Sigma_{u_1}}.
\end{equation}
The strategy for the remainder of the proof will therefore be to absorb $J_1+J_2+J_3$ into the second term on the left-hand side of \eqref{eq:maineqrpestinf}.\\
\\
\underline{\textbf{Estimating $J_1$}}\\
\\
We apply Young's inequality with weights in $\epsilon$ to estimate
\begin{equation*}
|J_1|\leq \int_{u_1}^{u_2} \int_{{N}^{\mathcal{I}}_u} \epsilon (p+4n) r^{p-1} (L(\chi \Phi_{(n)}))^2+C\epsilon^{-1}(p+4n)^{-1}r^{p-5}\chi^2 \Phi_{(n)}^2 \,d\omega du dv,
\end{equation*}
for some $\epsilon>0$ to be chosen suitably small. We absorb the first term into the left-hand side of \eqref{eq:maineqrpestinf}. We apply \eqref{eq:hardyinf} to further estimate
\begin{equation*}
\begin{split}
\epsilon^{-1}(p+4n)^{-1}&\int_{u_1}^{u_2} \int_{{N}^{\mathcal{I}}_u}r^{p-5}\chi^2 \Phi_{(n)}^2\,d\omega dudv \leq C\epsilon^{-1}(p+4n)^{-1} \int_{{N}^{\mathcal{I}}_u}r^{p-3} (L(\chi\Phi_{(n)}))^2\,d\omega dudv\\
\leq &\: C\epsilon^{-1}(p+4n)^{-1}r_{\mathcal{I}}^{-2}\int_{{N}^{\mathcal{I}}_u}r^{p-1} (L(\chi\Phi_{(n)}))^2\,d\omega dudv.
\end{split}
\end{equation*}
For $r_{\mathcal{I}}^2>0$ suitably large (depending linearly on $(p+4n)^{-1}$), we can therefore also absorb the term above in to the second integral on the left-hand side of \eqref{eq:maineqrpestinf}.\\
\\
\underline{\textbf{Estimating $J_2$}}\\
\\

To estimate $J_2$, we first consider $O(r^{p-3})\chi \Phi_{(n)}\cdot L(\chi \Phi_{(n)})$ and apply Young's inequality to obtain:
\begin{equation*}
\begin{split}
\int_{u_1}^{u_2} \int_{{N}^{\mathcal{I}}_u}O(r^{p-3})\chi \Phi_{(n)}\cdot L(\chi \Phi_{(n)}) \,d\omega du dv\leq&\: \int_{u_1}^{u_2} \int_{{N}^{\mathcal{I}}_u}\epsilon (p+4n)r^{p-1}(L(\chi \Phi_{(n)})^2 \,d\omega du dv\\
&+C {\epsilon}^{-1} (p+4n)^{-1}\int_{u_1}^{u_2} \int_{{N}^{\mathcal{I}}_u} r^{p-5}\chi^2\Phi_{(n)}^2\,d\omega du dv.
\end{split}
\end{equation*}
The first term on the right-hand side can be absorbed into the left-hand side of \eqref{eq:maineqrpestinf} and the second term on the right-hand side can be absorbed into $J_1$, as above.

In order to estimate $O(r^{p-3})\chi \slashed{\Delta}_{\s^2}\Phi_{(n)}\cdot L(\chi \Phi_{(n)})$ we first rearrange the terms in \eqref{eq:maincommeqI} to obtain:
\begin{equation}
\label{eq:equationforsphlaplacian}
\frac{1}{2}D \chi \slashed{\Delta}_{\s^2} \Phi_{(n)}=2r^2\underline{L}L(\chi \Phi_{(n)})+(2nr+O(r^0))L(\chi \Phi_{(n)})+\sum_{k=0}^n O(r^0) \chi \Phi_{(k)}+\sum_{|\alpha|\leq 1}R_{\chi}[\partial^{\alpha}\Phi_{(n)}],
\end{equation}
so that
\begin{equation*}
\begin{split}
\left|\int_{u_1}^{u_2} \int_{{N}^{\mathcal{I}}_u}O(r^{p-3})\chi\slashed{\Delta}_{\s^2}  \Phi_{(n)}\cdot L(\chi \Phi_{(n)}) \,d\omega du dv\right|\leq&\: \left|\int_{u_1}^{u_2} \int_{{N}^{\mathcal{I}}_u}O(r^{p-1}) \underline{L}L(\chi \Phi_{(n)}) \cdot L(\chi \Phi_{(n)}) \,d\omega du dv\right|\\
&+\left|\int_{u_1}^{u_2} \int_{{N}^{\mathcal{I}}_u}O(r^{p-2}) (L(\chi \Phi_{(n)}))^2 \,d\omega du dv\right|\\
&+\left|\sum_{k=0}^n\int_{u_1}^{u_2} \int_{{N}^{\mathcal{I}}_u}O(r^{p-3}) \chi\Phi_{(k)}\cdot L(\chi \Phi_{(n)}) \,d\omega du dv\right|\\
&+ C\sum_{k\leq n}\int_{\Sigma_{u_1}} J^T[T^k\psi]\cdot \mathbf{n}_{u_1}\,d\mu_{\Sigma_{u_1}}
\end{split}
\end{equation*}
Note that we can absorb the second integral on the right-hand side into $J_1$ and we can group the third integral with the $O(r^{p-3})\chi \Phi_{(n)}\cdot L(\chi \Phi_{(n)})$ term of $J_2$ and with $J_3$ (which we estimate below). It remains to estimate the integral of $O(r^{p-1}) \underline{L}L(\chi \Phi_{(n)}) \cdot L(\chi \Phi_{(n)})$. 

We first integrate by parts in the $\underline{L}$ direction:
\begin{equation*}
\begin{split}
\int_{u_1}^{u_2}& \int_{{N}^{\mathcal{I}}_u}O(r^{p-1}) \underline{L}L(\chi \Phi_{(n)}) \cdot L(\chi \Phi_{(n)}) \,d\omega du dv= \int_{u_1}^{u_2} \int_{{N}^{\mathcal{I}}_u}\underline{L}(O(r^{p-1}) (L(\chi \Phi_{(n)}))^2) \,d\omega du dv\\
&+(p-1)\int_{u_1}^{u_2} \int_{{N}^{\mathcal{I}}_u}O(r^{p-2}) (L(\chi \Phi_{(n)}))^2 \,d\omega du dv\\
=&\: \int_{{N}^{\mathcal{I}}_{u_2}}O(r^{p-1}) (L(\chi \Phi_{(n)}))^2 \,d\omega dv- \int_{{N}^{\mathcal{I}}_{u_1}}O(r^{p-1}) (L(\chi \Phi_{(n)}))^2 \,d\omega dv\\
&+(p-1)\int_{u_1}^{u_2} \int_{{N}^{\mathcal{I}}_u}O(r^{p-2}) (L(\chi \Phi_{(n)}))^2 \,d\omega du dv.
\end{split}
\end{equation*}
We can absorb the third term on the very right-hand side above into $J_1$ and we can absorb the absolute values of the remaining terms into the integrals over ${N}^{\mathcal{I}}_{u_2}$ and ${N}^{\mathcal{I}}_{u_1}$  that appear in \eqref{eq:maineqrpestinf} (after taking $r_{\mathcal{I}}>0$ suitably large).\\
\\
\underline{\textbf{Estimating $J_3$}}\\
\\
If $n=0$, there is nothing to estimate. Suppose therefore that $n\geq 1$. \textbf{It is only in this step that we will make use of the assumption} $\int_{\s^2}\psi\,d\omega=0$. That is to say, using this assumption it follows that there exist functions $f_{(k)}$, with $0\leq k\leq n$, such that
\begin{equation*}
\slashed{\Delta}_{\s^2}f_{(k)}=\Phi_{(k)}.
\end{equation*}
for all $0\leq k\leq n$. We can then estimate $J_3$ as by integrating by parts twice on $\s^2$ and then applying once more \eqref{eq:equationforsphlaplacian} to obtain:
\begin{equation*}
\begin{split}
J_3= &\: \sum_{k=0}^{n-1}\int_{u_1}^{u_2} \int_{{N}^{\mathcal{I}}_u} O(r^{p-2}) \chi \slashed{\Delta}_{\s^2}\Phi_{(k)}\cdot L(\chi f_{(n)})\,d\omega dv du \\
=&\: \sum_{k=0}^{n-1}\int_{u_1}^{u_2} \int_{{N}^{\mathcal{I}}_u} O(r^{p}) \chi \underline{L}L\Phi_{(k)}\cdot L(\chi f_{(n)})+O(r^{p-1})\chi L\Phi_{(k)}\cdot L(\chi f_{(n)})\\
&+\sum_{m=0}^{k-1} O(r^{p-2})\chi \Phi_{(m)}\cdot L(\chi f_{(n)}) \,d\omega dv du.
\end{split}
\end{equation*}
We integrate by parts in the $\underline{L}$ direction to obtain:
\begin{equation*}
\begin{split}
\sum_{k=0}^{n-1}\int_{u_1}^{u_2} \int_{{N}^{\mathcal{I}}_u} O(r^{p}) \chi \underline{L}L\Phi_{(k)}\cdot L(\chi f_{(n)})\,d\omega dudv=&\:\sum_{k=0}^{n-1}\int_{u_1}^{u_2} \int_{{N}^{\mathcal{I}}_u} \underline{L}\left(O(r^{p}) \chi L\Phi_{(k)}\cdot L(\chi f_{(n)})\right)\,d\omega dudv\\
&+p\sum_{k=0}^{n-1}\int_{u_1}^{u_2} \int_{{N}^{\mathcal{I}}_u} O(r^{p-1}) \chi L\Phi_{(k)}\cdot L(\chi f_{(n)})\,d\omega dudv+\ldots\\
=&\: \int_{{N}_{u_2}} O(r^{p}) \chi L\Phi_{(k)}\cdot L(\chi f_{(n)})\,d\omega dv\\
&+\int_{{N}_{u_1}} O(r^{p}) \chi L\Phi_{(k)}\cdot L(\chi f_{(n)})\,d\omega dv\\
&+p\sum_{k=0}^{n-1}\int_{u_1}^{u_2} \int_{{N}^{\mathcal{I}}_u} O(r^{p-1}) \chi L\Phi_{(k)}\cdot L(\chi f_{(n)})\,d\omega dudv+\ldots,
\end{split}
\end{equation*}
where for the sake of brevity we employ the schematic notation $\ldots$ to denote all integral terms that are supported in $r_{\mathcal{I}}\leq r\leq r_{\mathcal{I}}+M$. 

Note that by applying the standard Poincar\'e inequality on $\s^2$, we can estimate
\begin{equation}
\label{eq:poincaref}
\int_{\s^2} f_{(n)}^2\,d\omega\leq \frac{1}{2} \int_{\s^2} |\slashed{\nabla}_{\s^2}f_{(n)}|^2\,d\omega\leq \frac{1}{4}\int_{\s^2} \Phi_{(n)}^2\,d\omega
\end{equation}
and hence,
\begin{align*}
\left|\int_{{N}_{u}} O(r^{p}) \chi L\Phi_{(k)}\cdot L(\chi f_{(n)})\,d\omega dv\right|\leq& \:\int_{{N}_{u}} \epsilon r^{p}  (L\chi\Phi_{(n)})^2 +C\epsilon^{-1}r^{p-4} \chi^2 (\Phi_{(k+1)})^2\,d\omega dv,\\
\left|\int_{u_1}^{u_2} \int_{{N}^{\mathcal{I}}_u} O(r^{p-1}) \chi L\Phi_{(k)}\cdot L(\chi f_{(n)})\,d\omega dudv \right| \leq& \:\int_{u_1}^{u_2} \int_{{N}_{u}} \epsilon r^{p-1}  (L\chi\Phi_{(n)})^2 +C\epsilon^{-1}r^{p-5} \chi^2 (\Phi_{(k+1)})^2\,d\omega dv.
\end{align*}

By applying \eqref{eq:hardyinf}, with $r_1=r_{\mathcal{I}}$ suitably large and $r_2=\infty$, a number $n-k$ times to the second term on the right-hand side above and moreover \eqref{eq:udepradfieldsinft} (which holds by the assumption \eqref{assm:id1}) we can conclude that all the boundary terms appearing in \eqref{eq:hardyinf} vanish, so that we can further estimate for $p<3$:
\begin{equation*}
\begin{split}
\int_{{N}_{u}} \epsilon r^{p}  (L\chi\Phi_{(n)})^2 +C\epsilon^{-1}r^{p-4} \chi^2 (\Phi_{(k+1)})^2\,d\omega dv \leq &\:C\epsilon\int_{{N}_{u}}  r^{p}  (L\chi\Phi_{(n)})^2\,d\omega dv+C\sum_{m\leq n}\int_{\Sigma_{u}} J^T[T^m\psi]\cdot n_{u}\,d\mu_{\Sigma_{u}},
\end{split}
\end{equation*}
where we take $u=u_1$ or $u=u_2$.

Similarly,
\begin{equation*}
\int_{u_1}^{u_2} \int_{{N}_{u}} \epsilon r^{p-1}  (L\chi\Phi_{(n)})^2 +C\epsilon^{-1}r^{p-5} \chi^2 (\Phi_{(k+1)})^2\,d\omega dv \leq C\epsilon \int_{u_1}^{u_2}\int_{{N}_{u}}  r^{p}  (L\chi\Phi_{(n)})^2\,d\omega dvdu +\ldots.
\end{equation*}

Putting the above estimates together, we therefore obtain:
\begin{equation*}
\begin{split}
|J_3|\leq &\: C\int_{u_1}^{u_2} \int_{{N}^{\mathcal{I}}_u}\epsilon r^{p-1}(L\chi\Phi_{(n)})^2+C\epsilon\int_{{N}_{u_2}}  r^{p}  (L\chi\Phi_{(n)})^2\,d\omega dv+C\epsilon\int_{{N}_{u_1}}  r^{p}  (L\chi\Phi_{(n)})^2\,d\omega dv\\
&+C\sum_{k\leq n}\int_{\Sigma_{u_1}} J^T[T^k\psi]\cdot \mathbf{n}_{u_1}\,d\mu_{\Sigma_{u_1}}.
\end{split}
\end{equation*}
The first integral on the right-hand side can be absorbed into $J_1$ and the second integral on the right-hand side of the above equation can be absorbed into the left-hand side of \eqref{eq:maineqrpestinf}.

Hence, we arrive at \eqref{eq:rpestinf} with $\Phi_{(n)}$ replaced by $\chi \Phi_{(n)}$. In order to remove the cut-off function $\chi$ on the right-hand side of \eqref{eq:maineqrpestinf}, we estimate:
\begin{equation*}
\begin{split}
\int_{{N}^{\mathcal{I}}_{u_1}} r^{p}  (L\chi\Phi_{(n)})^2\,d\omega dv \leq&\: C\int_{{N}^{\mathcal{I}}_{u_1}} r^{p}  (L\Phi_{(n)})^2\,d\omega dv +C \int_{{N}^{\mathcal{I}}_{u_1}\cap\{r_{\mathcal{I}}\leq r\leq r_{\mathcal{I}}+M\}} \Phi_{(n)}^2\,d\omega dv\\
\leq&\: C\int_{{N}^{\mathcal{I}}_{u_1}} r^{p}  (L\Phi_{(n)})^2\,d\omega dv +C \sum_{k\leq n-1} \int_{{N}^{\mathcal{I}}_{u_1}\cap\{r_{\mathcal{I}}\leq r\leq r_{\mathcal{I}}+M\}} J^T[T^k\psi]\cdot L \,d\omega dv,
\end{split}
\end{equation*}
where we applied  \eqref{eq:hardyinf} together with \eqref{eq:udepradfieldsinft} and a standard elliptic estimate on ${N}_{u_1}$ to arrive at the second inequality. We can similarly estimate
\begin{equation*}
\begin{split}
\int_{{N}^{\mathcal{I}}_{u_2}} r^{p}  (L\Phi_{(n)})^2\,d\omega dv \leq&\: C\int_{{N}^{\mathcal{I}}_{u_2}\cap\{r\geq r_{\mathcal{I}}+M\}} r^{p}  (L\chi \Phi_{(n)})^2\,d\omega dv +C \int_{{N}^{\mathcal{I}}_{u_2}\cap\{r_{\mathcal{I}}\leq r\leq r_{\mathcal{I}}+M\}} (L\Phi_{(n)})^2\,d\omega dv\\
\leq&\: C\int_{{N}^{\mathcal{I}}_{u_2}} r^{p}  (L\chi \Phi_{(n)})^2\,d\omega dv +C \sum_{k\leq n} \int_{{N}^{\mathcal{I}}_{u_2}} J^T[T^k\psi]\cdot L \,d\omega dv\\
\leq&\: C\int_{{N}^{\mathcal{I}}_{u_2}} r^{p}  (L\chi \Phi_{(n)})^2\,d\omega dv +C \sum_{k\leq n} \int_{\Sigma_{u_1}} J^T[T^k\psi]\cdot \mathbf{n}_{u_1} \,d\mu_{\Sigma_{u_1}}
\end{split}
\end{equation*}
and, by applying moreover \eqref{eq:iledwayps}, we also obtain
\begin{equation*}
\begin{split}
\int_{u_1}^{u_2}\int_{{N}^{\mathcal{I}}_{u}} r^{p-1}  (L\Phi_{(n)})^2\,d\omega dvdu \leq&\: C\int_{u_1}^{u_2}\int_{{N}^{\mathcal{I}}_{u}\cap\{r\geq r_{\mathcal{I}}+M\}} r^{p-1}  (L\chi \Phi_{(n)})^2\,d\omega dvdu \\
&+C \int_{u_1}^{u_2}\int_{{N}^{\mathcal{I}}_{u}\cap\{r_{\mathcal{I}}\leq r\leq r_{\mathcal{I}}+M\}} (L\Phi_{(n)})^2\,d\omega dvdu\\
\leq&\:C\int_{u_1}^{u_2}\int_{{N}^{\mathcal{I}}_{u}} r^{p-1}  (L\chi \Phi_{(n)})^2\,d\omega dvdu  +C \sum_{k\leq n} \int_{\Sigma_{u_1}} J^T[T^k\psi]\cdot \mathbf{n}_{u_1} \,d\mu_{\Sigma_{u_1}}.
\end{split}
\end{equation*}

In order to derive \eqref{eq:rpesthor} we introduce a different smooth cut-off function $\chi: \R \to \R$ (we use the same notation for this cut-off function for the sake of convenience) such that $\chi(r)=0$ for all $r\geq r_{\mathcal{H}}$ and $\chi(r)=1$ for all $M\leq r\leq M+\frac{1}{2}(r_{\mathcal{H}}-M)$, with $r_{\mathcal{H}}<2M$ and $r_{\mathcal{H}}-M$ appropriately small. We integrate both sides of \eqref{eq:mainidentityspherehor} in the $u$ and $v$ directions to obtain:
\begin{equation}
\label{eq:maineqrpesthor}
\begin{split}
\int_{{N}^{\mathcal{H}}_{v_2}}& (r-M)^{-p}(\underline{L}(\chi\underline{\Phi}_{(n)}))^2\,d\omega du+ \frac{1}{2}M^{-2}\int_{v_1}^{v_2} \int_{{N}^{\mathcal{H}}_v}(p+4n)(r-M)^{1-p}(L(\chi\underline{\Phi}_{(n)}))^2\,d\omega du dv\\
&+\frac{1}{4}M^{-4}\int_{\mathcal{H}^+(v_1,v_2)}\left[(r-M)^{2-p} |\snabla_{\s^2}\underline{\Phi}_{(n)}|^2-n(n+1)(r-M)^{2-p}\underline{\Phi}_{(n)}^2\right]\,d\omega dv\\
&+\frac{1}{8}M^{-6}\int_{v_1}^{v_2} \int_{{N}^{\mathcal{H}}_v}(2-p)(r-M)^{3-p}\chi^2\left(|\snabla_{\s^2}\underline{\Phi}_{(n)}|^2-n(n+1)\underline{\Phi}_{(n)}^2\right)\,d\omega du dv\\
=&\: \int_{{N}^{\mathcal{H}}_{v_1}} (r-M)^{-p}(L(\chi\underline{\Phi}_{(n)}))^2\,d\omega du\\
&+J_1+J_2+J_3+\sum_{|\alpha|\leq 1} \int_{v_1}^{v_2} \int_{{N}^{\mathcal{H}}_u}r^{p-2}\underline{L} (\chi\underline{\Phi}_{(n)}) \cdot R_{\chi}[\partial^{\alpha}\underline{\Phi}_{(n)}]\,d\omega dudv,
\end{split}
\end{equation}
where we now use the notation $R_{\chi}[f]$ for terms that are compactly supported in $M+\frac{1}{2}(r_{\mathcal{H}}-M)\leq r\leq r_{\mathcal{H}}$ and are linear in the function $f$, and we define
\begin{align*}
\underline{J}_1:=&\: \int_{v_1}^{v_2} \int_{{N}^{\mathcal{H}}_v}O((r-M)^{2-p})(\underline{L}(\chi \underline{\Phi}_{(n)}))^2\,d\omega dv du,\\
\underline{J}_2:=&\: \int_{v_1}^{v_2} \int_{{N}^{\mathcal{H}}_v}O((r-M)^{3-p})\chi \underline{\Phi}_{(n)}\cdot \underline{L}(\chi \underline{\Phi}_{(n)})+ O((r-M)^{3-p})\slashed{\Delta}_{\s^2}(\chi\underline{\Phi}_{(n)})\cdot \underline{L}(\chi \underline{\Phi}_{(n)})\,d\omega dv du,\\
\underline{J}_3:=&\: n\sum_{k=0}^{n-1}\int_{v_1}^{v_2} \int_{{N}^{\mathcal{H}}_v} O((r-M)^{2-p}) \chi \underline{\Phi}_{(k)}\cdot \underline{L}(\chi \underline{\Phi}_{(n)})\,d\omega dv du.
\end{align*}
We can absorb $\underline{J}_1+\underline{J}_2+\underline{J}_3$ into the left-hand side of \eqref{eq:rpesthor} by repeating the estimates in $\mathcal{A}^{\mathcal{I}}$ above in the region $\mathcal{A}^{\mathcal{H}}$, using \eqref{eq:hardyinf} instead of \eqref{eq:hardyinf}. Note that in the $\mathcal{A}^{\mathcal{H}}$ case there is no need to apply \eqref{eq:udepradfieldsinft} as the analogous estimate at $\mathcal{H}^+$ follows immediately from the smoothness of $\psi$ at $\mathcal{H}^+$ (as we consider smooth initial data).

\end{proof}

\begin{remark}
The third and fourth integrals on the right-hand sides of \eqref{eq:rpestinf} and \eqref{eq:rpesthor} have a \textbf{positive sign} if we consider $P_{\geq n}\psi$ rather than the more general $P_{\geq 1}\psi$. This follows directly from the standard Poincar\'e inequality on $\s^2$.
\end{remark}

\subsection{The improved hierarchies for $\ell =0$ and $\ell=1$}
\label{sec:mainestextended}
The next proposition yields improved hierarchies for the harmonic mode numbers $\ell=0$ and $\ell=1$. 
\begin{proposition}
\label{prop:rpestell01} Let $\psi$ be a smooth solution to \eqref{eq:waveequation}. 
Fix $n\in \{0,1\}$ and assume that on ${N}^{\mathcal{I}}$:
\begin{equation}
\label{asm:radfieldl1}
\lim_{v\to \infty}\left(\int_{\s^2}\phi_1^2\,d\omega\right)(u_0,v)<\infty.
\end{equation}

Then there exists an $r_{\mathcal{I}}>0$, such that for $p\in (-4n,4)$ and for all $0\leq u_1\leq u_2$:
\begin{equation}
\label{eq:rpestinftilde}
\begin{split}
\int_{{N}^{\mathcal{I}}_{u_2}}& r^p(LP_{n}\widetilde{\Phi}_{(n)})^2\,d\omega dv+ \frac{1}{2}\int_{u_1}^{u_2} \int_{{N}^{\mathcal{I}}_u}(p+4n)r^{p-1}(LP_{n}\widetilde{\Phi}_{(n)})^2\,d\omega dv du\\
\leq&\: C\int_{{N}^{\mathcal{I}}_{u_1}}r^p(LP_{n}\widetilde{\Phi}_{(n)})^2\,d\omega dv+ C\sum_{k\leq n}\int_{\Sigma_{u_1}} J^T[T^k\psi]\cdot \mathbf{n}_{u_1}\,d\mu_{\Sigma_{u_1}},
\end{split}
\end{equation}
where $C=C(M,{\Sigma_0},r_{\mathcal{I}})>0$ is a constant and we can take $r_{\mathcal{I}}=(p+4n)^{-1}R_0(n,M)>0$.

Furthermore, there exists an $r_{\mathcal{H}}>M$, such that for $p\in (-4n,4)$ and for all $0\leq u_1\leq u_2$:
\begin{equation}
\label{eq:rpeshortilde}
\begin{split}
\int_{{N}^{\mathcal{H}}_{v_2}}& (r-M)^{-p}(\underline{L}P_n\widetilde{\underline{\Phi}}_{(n)})^2\,d\omega du+ \frac{1}{2}\int_{v_1}^{v_2} \int_{{N}^{\mathcal{H}}_u}(p+4n)(r-M)^{1-p}(\underline{L}P_n\widetilde{\underline{\Phi}}_{(n)})^2\,d\omega du dv\\
\leq&\: C\int_{{N}^{\mathcal{H}}_{v_1}}(r-M)^{-p}(\underline{L}P_n\widetilde{\underline{\Phi}}_{(n)})^2\,d\omega du+ C\sum_{k\leq n}\int_{\Sigma_{v_1}} J^T[T^k\psi]\cdot \mathbf{n}_{v_1}\,d\mu_{\Sigma_{v_1}},
\end{split}
\end{equation}
where $C=C(M,{\Sigma_0},r_{\mathcal{H}})>0$ is a constant and we can take $(r_{\mathcal{H}}-M)^{-1}=(p+4n)^{-1}R_0(n,M)>0$.

If we additionally assume that there exist constants $\eta>0$ and $\mathcal{E}_{\eta}>0$ such that
\begin{align}
\label{addasmforp5I}
\sum_{k\leq n}\int_{\s^2}|P_n\widetilde{\Phi}_{(k)}|^2\,d\omega\lesssim \mathcal{E}_{\eta}\cdot  (1+u)^{-2-\eta}\quad \textnormal{in $\mathcal{A}^{\mathcal{I}}$},\\
\label{addasmforp5H}
\sum_{k\leq n}\int_{\s^2}|P_n\widetilde{\underline{\Phi}}_{(k)}|^2\,d\omega\lesssim \mathcal{E}_{\eta} \cdot (1+v)^{-2-\eta}\quad \textnormal{in $\mathcal{A}^{\mathcal{H}}$},
\end{align}
then we can obtain for $0<p<5$:
\begin{equation}
\label{eq:rpestinfp5}
\begin{split}
\int_{{N}^{\mathcal{I}}_{u_2}}& r^p(LP_n\widetilde{\Phi}_{(n)})^2\,d\omega dv+ \frac{1}{2}\int_{u_1}^{u_2} \int_{{N}^{\mathcal{I}}_u}(p+4n)r^{p-1}(LP_n\widetilde{\Phi}_{(n)})^2\,d\omega dv du\\
\leq&\: C\int_{{N}^{\mathcal{I}}_{u_1}}r^p(LP_n\widetilde{\Phi}_{(n)})^2\,d\omega dv+ C\sum_{k\leq n}\int_{\Sigma_{u_1}} J^T[T^k\psi]\cdot \mathbf{n}_{u_1}\,d\mu_{\Sigma_{u_1}}+C(p-5)^{-1}\mathcal{E}_{\eta},
\end{split}
\end{equation}
and 
\begin{equation}
\label{eq:rpeshorp5k}
\begin{split}
\int_{{N}^{\mathcal{H}}_{v_2}}& (r-M)^{-p}(\underline{L}P_n\widetilde{\underline{\Phi}}_{(n)})^2\,d\omega du+ \frac{1}{2}\int_{v_1}^{v_2} \int_{{N}^{\mathcal{H}}_u}(p+4n)(r-M)^{1-p}(\underline{L}P_n\widetilde{\underline{\Phi}}_{(n)})^2\,d\omega du dv\\
\leq&\: C\int_{{N}^{\mathcal{H}}_{v_1}}(r-M)^{-p}(\underline{L}P_n\widetilde{\underline{\Phi}}_{(n)})^2\,d\omega du+ C\sum_{k\leq n}\int_{\Sigma_{v_1}} J^T[T^k\psi]\cdot \mathbf{n}_{v_1}\,d\mu_{\Sigma_{v_1}}+C(p-5)^{-1}\mathcal{E}_{\eta}.
\end{split}
\end{equation}
\end{proposition}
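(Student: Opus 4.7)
The strategy is to mirror the proof of Proposition \ref{prop:generalrpest}, but replace the key identities of Lemma \ref{lm:rweightidentitiess2v1} by sharper identities that exploit the lowest-mode cancellations. For $n=1$ the sharp identities are precisely \eqref{eq:rweightidentitiess2v2I}--\eqref{eq:rweightidentitiess2v2H} of Lemma \ref{lm:rweightidentitiess2v2}, where, crucially, the coefficients of the $O(r^{p-2})$ and $O((r-M)^{2-p})$ error terms that were present in Lemma \ref{lm:rweightidentitiess2v1} have been traded for $O(r^{p-3})$ and $O((r-M)^{3-p})$ terms; this is a direct consequence of $\slashed{\Delta}_{\s^2}P_1 = -2 P_1$ together with the form of \eqref{eq:maincommeqL1I}--\eqref{eq:maincommeqL1H}. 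For $n=0$ I would first establish the analogous improved identity for $P_0\phi$, which satisfies the much simpler equation $4\underline{L}L (P_0\phi) = -\frac{DD'}{r} P_0\phi$ obtained from \eqref{eq:eqradield} by using $\slashed{\Delta}_{\s^2}P_0=0$. Repeating the Leibniz-and-integrate-by-parts computation in the proof of Lemma \ref{lm:rweightidentitiess2v1} then yields, for $p\in\R$, identities of the form
\[
\int_{\s^2}\underline{L}\left(r^p (LP_0\phi)^2\right) + \tfrac12 p r^{p-1}(LP_0\phi)^2\,d\omega = \int_{\s^2} O(r^{p-3}) P_0\phi \cdot LP_0\phi\,d\omega,
\]
and the mirror identity near $\mathcal{H}^+$ with $r\mapsto (r-M)^{-1}$; again the error terms gain one additional power compared to the generic $n=0$ identity implicit in Proposition \ref{prop:generalrpest}.

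Next I would integrate these identities over the relevant spacetime slab with a cut-off $\chi$ supported near $\mathcal{I}^+$ (or $\mathcal{H}^+$), as in the proof of Proposition \ref{prop:generalrpest}. The error terms are estimated by Young's inequality: the $L P_n\widetilde\Phi_{(n)}$ (resp.\ $\underline L P_n\widetilde{\underline\Phi}_{(n)}$) factor is absorbed, at the cost of $\epsilon^{-1}$, into the $(p+4n)$ bulk term on the left-hand side, leaving a weighted zeroth-order remainder. Crucially, the weights of these remainders are $r^{p-5}$ (resp.\ $(r-M)^{5-p}$), i.e.\ two powers better than in the generic case. One then closes the estimate by a Hardy inequality (Section \ref{sec:HardyInequalities}): the boundary term at $\mathcal{I}^+$ (resp.\ $\mathcal{H}^+$) is controlled by the assumption \eqref{asm:radfieldl1} via Proposition \ref{prop:radfieldsinf}, and the bulk term is absorbed into the $r^{p-1}(L P_n\widetilde\Phi_{(n)})^2$ spacetime integral provided $p<4$. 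This yields \eqref{eq:rpestinftilde} and, by the dual argument (cf.\ the derivation of \eqref{eq:rpesthor} in Proposition \ref{prop:generalrpest}), also \eqref{eq:rpeshortilde}. The cut-off error terms are handled exactly as in Proposition \ref{prop:generalrpest}, using the energy bounds in Appendix \ref{sec:EnergyBounds} applied to $T^k\psi$ with $k\leq n$.

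The main obstacle is extending the range to $0<p<5$ under the extra assumptions \eqref{addasmforp5I}--\eqref{addasmforp5H}. The issue is that the Hardy step above loses a power of $r$ (resp.\ of $(r-M)^{-1}$) and fails precisely when $p\geq 4$: the bulk absorption no longer produces a positive coefficient, and the boundary term at $\mathcal{I}^+$ (resp.\ $\mathcal{H}^+$) is no longer automatically finite because $\widetilde\Phi$ generically attains a non-vanishing Newman--Penrose-type limit. I would circumvent this by \emph{not} applying Hardy to the remainder once $p\geq 4$. Instead, the remainder $r^{p-5}(P_n\widetilde\Phi_{(n)})^2$ (resp.\ $r^{p-5}(P_n\phi)^2$) is integrated directly in $v$ at fixed $u$, using the assumed pointwise decay $\int_{\s^2}|P_n\widetilde\Phi_{(k)}|^2\,d\omega\lesssim \mathcal{E}_\eta (1+u)^{-2-\eta}$: the $v$-integral $\int_{v(u)}^\infty r^{p-5}\,dv$ is finite for $p<5$ and of order $(1+u)^{p-4}$, so that integrating once more in $u$ yields a contribution of order $\mathcal{E}_\eta\int (1+u)^{p-4-2-\eta}du$, which is finite and produces the $(p-5)^{-1}\mathcal{E}_\eta$ term on the right-hand side of \eqref{eq:rpestinfp5}. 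The horizon version \eqref{eq:rpeshorp5k} follows by the same argument with $u\leftrightarrow v$ and $r\leftrightarrow (r-M)^{-1}$. Finally, the cross-term $O(r^{p-3}) P_1\phi \cdot LP_1\widetilde\Phi_{(1)}$ appearing for $n=1$ in \eqref{eq:rweightidentitiess2v2I} is split by Young's inequality into the same bulk term (absorbed) plus the same zeroth-order remainder, to which the same pointwise-decay argument applies; the hypothesis \eqref{addasmforp5I} on $P_n\widetilde\Phi_{(k)}$ for all $k\leq n$ is what makes this work uniformly in $k$.
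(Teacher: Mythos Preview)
Your treatment of the range $p<4$ is essentially the paper's argument: use the improved identities (Lemma~\ref{lm:rweightidentitiess2v2} for $n=1$, the $n=0$ case of Lemma~\ref{lm:rweightidentitiess2v1} with $\slashed{\Delta}_{\s^2}P_0=0$), apply Young's inequality, and close via Hardy. That part is fine.

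The extension to $0<p<5$, however, has a genuine gap. After your Young split you are left with a spacetime integral of $r^{p-5}(P_n\widetilde\Phi_{(k)})^2$, and you claim that the $v$-integral $\int_{v_{r_{\mathcal{I}}}(u)}^{\infty} r^{p-5}\,dv$ is finite for $p<5$. This is false: along $N^{\mathcal{I}}_u$ we have $dv\sim 2\,dr$ with $r\to\infty$, so $\int r^{p-5}\,dv$ diverges as soon as $p\geq 4$. The pointwise assumption \eqref{addasmforp5I} gives only $u$-decay, not $r$-decay, so pulling out $(1+u)^{-2-\eta}$ does not help. Your subsequent bookkeeping ``of order $(1+u)^{p-4}$'' cannot be right for the same reason.

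The paper closes this gap by using an \emph{asymmetric}, $u$-weighted Young inequality:
\[
r^{p-3}|\phi|\,|LP_n\widetilde\Phi_{(n)}|\;\leq\;\epsilon\,u^{-1-\eta/2}\,r^{p}(LP_n\widetilde\Phi_{(n)})^2\;+\;\tfrac{1}{4\epsilon}\,u^{1+\eta/2}\,r^{p-6}\phi^2.
\]
The point is twofold. First, the zeroth-order piece now carries $r^{p-6}$, and $\int r^{p-6}\,dv\sim (5-p)^{-1}r_{\mathcal{I}}^{p-5}$ \emph{does} converge for $p<5$; combined with \eqref{addasmforp5I} and the remaining $u^{1+\eta/2}(1+u)^{-2-\eta}$ factor, the $u$-integral converges and produces the $(p-5)^{-1}\mathcal{E}_\eta$ term. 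Second, the derivative piece now has weight $r^{p}$ rather than $r^{p-1}$, so it cannot be absorbed into the bulk; instead one takes a supremum in $u$ on the left-hand side flux $\int_{N^{\mathcal{I}}_u}r^p(LP_n\widetilde\Phi_{(n)})^2\,d\omega\,dv$, uses $\int u^{-1-\eta/2}\,du<\infty$, and absorbs $\epsilon\cdot\sup_u(\cdot)$ for $\epsilon$ small. This two-sided trade---gaining one power of $r^{-1}$ at the cost of absorbing into the flux rather than the bulk---is exactly what is missing from your argument.
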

\begin{proof}
For $p<4$, the proof proceeds in a similar manner to the proof of Proposition \ref{prop:generalrpest}, using in the $n=1$ case the identities in Lemma \ref{lm:rweightidentitiess2v2} rather than those in Lemma \ref{lm:rweightidentitiess2v1}. Note in particular that due to the \emph{lower} powers in $r$ and $(r-M)^{-1}$ appearing in the identities in Lemma \ref{lm:rweightidentitiess2v2} and in Lemma \ref{lm:rweightidentitiess2v1} if $n=0$ (compared to the general $n$ case), we are able to increase the range of $p$. We omit the details of these steps.

We will now show how we can extend the range of $p$ to $0<p<5$, after invoking the additional assumptions \eqref{addasmforp5I} and \eqref{addasmforp5H}. We restrict to $\mathcal{A}^{\mathcal{I}}$, because the argument in $\mathcal{A}^{\mathcal{H}}$ proceeds analogously.

By Lemma \ref{lm:rweightidentitiess2v1} in the $n=0$ case and Lemma \ref{lm:rweightidentitiess2v2} in the $n=1$ case, we only need to estimate
\begin{equation*}
\int_{u_1}^{u_2} \int_{{N}^{\mathcal{I}}_u} r^{p-3}|\phi| |LP_n \widetilde{\Phi}_{(n)}|+ r^{p-3}|P_n \widetilde{\Phi}_{(n)}||LP_n \widetilde{\Phi}_{(n)}|\,d\omega dv du.
\end{equation*}
We apply Young's inequality to obtain
\begin{equation*}
\begin{split}
\int_{u_1}^{u_2}& \int_{{N}^{\mathcal{I}}_u} r^{p-3}|\phi| |LP_n \widetilde{\Phi}_{(n)}|+ r^{p-3}|P_n \widetilde{\Phi}_{(n)}||LP_n \widetilde{\Phi}_{(n)}|\,d\omega dv du\\
 \leq&\: \epsilon \int_{u_1}^{u_2} \int_{{N}^{\mathcal{I}}_u} u^{-1-\frac{\eta}{2}}\cdot r^p(LP_n \widetilde{\Phi}_{(n)})^2\,d\omega dv du+ \frac{1}{4\epsilon} \int_{u_1}^{u_2} \int_{{N}^{\mathcal{I}}_u} u^{1+\frac{\eta}{2}}r^{p-6}\left[(\phi)^2+(P_n \widetilde{\Phi}_{(n)})^2 \right]\,d\omega dv du\\
 \leq&\: C\epsilon u_0^{-\frac{\eta}{2}} \sup_{u_1\leq u\leq u_2} \int_{{N}^{\mathcal{I}}_u} r^p(LP_n \widetilde{\Phi}_{(n)})^2\,d\omega dv + \frac{C}{\epsilon}(p-5)^{-1} u_0^{-\frac{\eta}{2}} E_{\eta}.
\end{split}
\end{equation*}
For $\epsilon>0$ suitably small, we can absorb the first term on the right-hand side above into the left-hand side of the spacetime integral of the identities \eqref{eq:mainidentitysphereinf} (for $n=0$) and \eqref{eq:rweightidentitiess2v2H} (for $n=1$), where we take a supremum in $u$ on the left-hand side.
\end{proof}

\section{Extended hierarchies for $T^k\psi$}
\label{sec:extendhier}

\subsection{The preliminary extended identities}
\label{sec:TheLAndUnderlineLCommutedIdenties}

In order to obtain \emph{improved} estimates for time-derivatives of $\psi$, which are essential for deriving the late-time asymptotics for $\psi$ itself, we will derive additional hierarchies of $r^p$ and $(r-M)^{-p}$ weighted estimates for $T^k\psi$, with $k\geq 1$. As a first step, we will derive additional hierarchies for $L^k\Phi_{(n)}$ and $L^k\phi_0$ in $\mathcal{A}^{\mathcal{I}}$ and $\underline{L}^k\underline{\Phi}_{(n)}$ and $\Lbar^k\phi_0$ in $\mathcal{A}^{\mathcal{H}}$. We start with the following identities.

\begin{lemma}
\label{lm:hoequations}
Let $n\in \N_0$ and $k\in \N$. Then:
\begin{equation}
\label{eq:hoequations1}
\begin{split}
4 L\underline{L}(L^k\Phi_{(n)})=&\:Dr^{-2}\slashed{\Delta}_{\s^2}L^k\Phi_{(n)}+[-kDr^{-3}+O(r^{-4})]\slashed{\Delta}_{\s^2}L^{k-1}\Phi_{(n)}+[-4nr^{-1}+O(r^{-2})]L^{k+1}\Phi_{(n)}\\
&+D[n(n+1+2k)r^{-2}+O(r^{-3})]L^k\Phi_{(n)}+k(k-1)\sum_{j=\min\{k,2\}}^k O(r^{-2-j})\slashed{\Delta}_{\s^2}L^{k-j}\Phi_{(n)}\\
&+\sum_{j=1}^k O(r^{-2-j})L^{k-j}\Phi_{(n)}+n\sum_{m=0}^{\max\{0,n-1\}}\sum_{j=0}^k O(r^{-2-j})L^{k-j}\Phi_{(m)}
\end{split}
\end{equation}
and
\begin{equation}
\label{eq:hoequations2}
\begin{split}
4 L\underline{L}(\underline{L}^k\underline{\Phi}_{(n)})=&\:Dr^{-2}\slashed{\Delta}_{\s^2}\underline{L}^k\underline{\Phi}_{(n)}+[-kM^{-6}(r-M)^3+O((r-M)^4)]\slashed{\Delta}_{\s^2}\Lbar^{k-1}\underline{\Phi}_{(n)}\\
&+[-4M^{-2}n(r-M)+O((r-M)^2)]\Lbar^{k+1}\underline{\Phi}_{(n)}\\
&+[n(n+1+2k)M^{-4}(r-M)^2+O((r-M)^3)]\Lbar^k\underline{\Phi}_{(n)}\\
&+k(k-1)\sum_{j=\min\{k,2\}}^k O((r-M)^{2+j})\slashed{\Delta}_{\s^2}\Lbar^{k-j}\underline{\Phi}_{(n)}\\
&+\sum_{j=1}^k O((r-M)^{2+j})\Lbar^{k-j}\underline{\Phi}_{(n)}+n\sum_{m=0}^{\max\{0,n-1\}}\sum_{j=0}^k O((r-M)^{2+j})\Lbar^{k-j}\underline{\Phi}_{(m)}.
\end{split}
\end{equation}
\end{lemma}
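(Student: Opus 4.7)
My plan is to prove both identities by induction on $k\geq 1$ (with $n$ held fixed), using the crucial fact that $L=\partial_v$ and $\underline{L}=\partial_u$ are coordinate vector fields in the $(u,v,\theta,\varphi)$ chart, so $[L,\underline{L}]=0$. This commutativity lets me convert $L^{k+1}\underline{L}$ into $L\underline{L}\circ L^k$ (and similarly for $\underline{L}$), so that applying $L^k$ to the equation $4L\underline{L}\Phi_{(n)}=\mathrm{RHS}$ from Lemma \ref{lm:maineq} immediately produces the desired $4L\underline{L}(L^k\Phi_{(n)})$ on the left-hand side. The entire content of the lemma is therefore to compute $L^k(\mathrm{RHS})$ via the Leibniz rule and keep careful track of the weights; the analogous statement holds with $L\leftrightarrow\underline{L}$ and $r\leftrightarrow(r-M)^{-1}$.

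For the base case $k=1$ of \eqref{eq:hoequations1}, I apply $L$ to \eqref{eq:maincommeqI}. Two schematic facts drive the computation: near $\mathcal{I}^+$ we have $L(r)=D/2=1/2+O(r^{-1})$, so $L(r^{-\ell})=O(r^{-\ell-1})$, and moreover $L(D)=\tfrac{1}{2}DD'=O(r^{-2})$. In particular $L(Dr^{-2})=-r^{-3}+O(r^{-4})$, which produces the $-kDr^{-3}\slashed{\Delta}_{\s^2}L^{k-1}\Phi_{(n)}$ term when combined with the binomial coefficient from Leibniz at the inductive step. The coefficient $n(n+1+2k)$ in front of $Dr^{-2}L^k\Phi_{(n)}$ is the one slightly nontrivial check: it arises from combining the direct contribution $Dn(n+1)r^{-2}L^k\Phi_{(n)}$ (obtained by placing all $k$ derivatives on $\Phi_{(n)}$ in the $\Phi_{(n)}$-term of the RHS) with the cross-term $kL(-4nr^{-1})L^k\Phi_{(n)}=2knDr^{-2}L^k\Phi_{(n)}+O(r^{-3})L^k\Phi_{(n)}$ coming from placing one $L$ derivative on the coefficient $-4nr^{-1}$ in the $L\Phi_{(n)}$-term.

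The inductive step then follows the same pattern. Suppose \eqref{eq:hoequations1} holds for $k$. Applying $L$ to both sides and using $[L,\underline{L}]=0$ gives $4L\underline{L}(L^{k+1}\Phi_{(n)})=L(\mathrm{RHS}_k)$. I expand $L(\mathrm{RHS}_k)$ by Leibniz, keeping the leading orders in each term (which only involve the highest derivative $L^{k+1}\Phi_{(n)}$ or $\slashed{\Delta}_{\s^2}L^k\Phi_{(n)}$) and absorbing every other contribution into the schematic error sums
\[
k(k-1)\sum_{j=\min\{k,2\}}^k O(r^{-2-j})\slashed{\Delta}_{\s^2}L^{k-j}\Phi_{(n)},\qquad \sum_{j=1}^k O(r^{-2-j})L^{k-j}\Phi_{(n)},
\]
and the cross-$\Phi_{(m)}$ error with $m\leq n-1$. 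The index shifts $k\mapsto k+1$ in these error sums are consistent because each additional application of $L$ to a coefficient lowers the $r$-weight by exactly one, and each additional $L$ falling on $\Phi_{(n)}$ raises the derivative count by one. The analogous computation for \eqref{eq:hoequations2} at the horizon follows identically, with the substitutions $L\leftrightarrow\underline{L}$, the observation $\underline{L}(r-M)=-D/2=O((r-M)^2)$, and the explicit expansion $\underline{L}(Dr^{-2})=-M^{-6}(r-M)^3+O((r-M)^4)$ replacing $L(Dr^{-2})$; the coefficient $n(n+1+2k)$ arises by the same combination of the $\underline{L}\underline{\Phi}_{(n)}$-cross-term with the $\underline{\Phi}_{(n)}$-term.

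The only place some care is required is in the schematic bookkeeping: one must verify that no new $\slashed{\Delta}_{\s^2}L^{k-j}\Phi_{(n)}$ term with $j=1$ appears at the inductive step with a coefficient that would contradict the stated coefficient $-kDr^{-3}+O(r^{-4})$ (which is itself linear in $k$). This follows because the only way to produce such a term at level $k+1$ is either (a) from the $Dr^{-2}\slashed{\Delta}_{\s^2}L^k\Phi_{(n)}$ term, by placing one $L$ on the coefficient $Dr^{-2}$, which yields an additional $-r^{-3}\slashed{\Delta}_{\s^2}L^k\Phi_{(n)}$ contribution, or (b) from the $[-kDr^{-3}+O(r^{-4})]\slashed{\Delta}_{\s^2}L^{k-1}\Phi_{(n)}$ term by placing all $k+1-(k-1)=2$ derivatives on $\slashed{\Delta}_{\s^2}L^{k-1}\Phi_{(n)}$, which gives a term of the schematic form $-kDr^{-3}\slashed{\Delta}_{\s^2}L^{k+1-2}\Phi_{(n)}$ that is absorbed into the double-index error sum. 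Summing contributions (a) and the top contribution already present gives $-(k+1)Dr^{-3}\slashed{\Delta}_{\s^2}L^k\Phi_{(n)}+O(r^{-4})\slashed{\Delta}_{\s^2}L^k\Phi_{(n)}$, precisely matching the claim at level $k+1$. The remaining error terms collect into the stated sums, completing the induction.
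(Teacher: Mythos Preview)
Your proposal is correct and follows exactly the approach the paper takes: the paper's proof consists of a single sentence stating that the identities follow from a straightforward induction argument on $k$, using Lemma~\ref{lm:maineq} as the base case. Your write-up simply fills in the details the paper omits (the Leibniz expansion, the weight-tracking via $L(r)=D/2$ and $\underline{L}(r)=-D/2$, and the combination producing the $n(n+1+2k)$ coefficient), though your final bookkeeping paragraph on the $-kDr^{-3}$ coefficient is slightly garbled in its labeling of which contribution goes where---the correct accounting is that applying one more $L$ to the $Dr^{-2}\slashed{\Delta}_{\s^2}L^k\Phi_{(n)}$ term (derivative on the coefficient) and to the $-kDr^{-3}\slashed{\Delta}_{\s^2}L^{k-1}\Phi_{(n)}$ term (derivative on the field) together give $-(k+1)Dr^{-3}\slashed{\Delta}_{\s^2}L^k\Phi_{(n)}+O(r^{-4})\slashed{\Delta}_{\s^2}L^k\Phi_{(n)}$, as you ultimately state.
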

\begin{proof}
The identities \eqref{eq:hoequations1} and \eqref{eq:hoequations2} follow from a straightforward induction argument, where we apply Lemma \ref{lm:maineq} for the $k=0$ case.
\end{proof}

When we restrict the  spherical mean $\psi_0$, the analog of Lemma \ref{lm:hoequations} (with $n=0$) simplifies significantly.
\begin{lemma}
\label{lm:hoequationsl0}
Consider $\psi_0$. Then for all $k\in \N$,
\begin{align}
\label{eq:hoequations1l0}
4 L\underline{L}(L^k\phi_0)=&\:O(r^{-2})L^{k+1}\phi_0+O(r^{-3})L^k\phi_0+\sum_{j=1}^k O(r^{-3-j})L^{k-j}\phi_0,\\
\label{eq:hoequations2l0}
4 \Lbar L(\Lbar^k\phi_0)=&\:O((r-M)^{2})\Lbar \Lbar^k\phi_0+O((r-M)^{3})\Lbar^k\phi_0+\sum_{j=1}^k O((r-M)^{3+j})\Lbar^{k-j}\phi_0.
\end{align}
\end{lemma}
\begin{proof}
Equations \eqref{eq:hoequations1l0} and \eqref{eq:hoequations2l0} follow from a standard induction argument, where we apply Lemma \ref{lm:maineq} to obtain the $k=0$ case.
\end{proof}

Before we derive $r$-weighted estimates for $L^k\Phi_{(n)}$ and $\underline{L}^k\underline{\Phi}_{(n)}$, with $k>0$, we need to establish appropriate ($u$-dependent) boundedness estimates near $\mathcal{I}^+$. Recall that in the $k=0$ case these were obtained in Proposition \ref{prop:radfieldsinf}.
\begin{proposition}
\label{prop:radfieldsinfLk}
Let $n\in \N$ and $J\in \N_0$ and assume that
\begin{equation*}
\int_{\Sigma_0}J^T[\psi]\cdot \mathbf{n}_{\Sigma_0}\,d\mu_{\Sigma_0}<\infty.
\end{equation*}
\begin{itemize}
\item[$\rm (i)$]
For all $u\geq u_0$, we have that there exists a constant $C_u=C_u(M,J,{\Sigma_0},u)>0$, such that
\begin{equation}
\label{eq:udepradfieldsinftLk}
\begin{split}
\sum_{0\leq k\leq n}\sum_{0\leq j\leq J}&\sup_{u_0\leq u'\leq u}\left[\int_{\s^2} r^{2j}|\slashed{\Delta}_{\s^2}^{n-k} L^j\Phi_{(k)}|^2|_{{N}^{\mathcal{I}}}\,d\omega\right](v)\\
<&\:C_u\sum_{0\leq k\leq n}\sum_{0\leq j\leq J}\left[\int_{\s^2}r^{2j} |\slashed{\Delta}_{\s^2}^{n-k} L^j\Phi_{(k)}|^2|_{{N}^{\mathcal{I}}}\,d\omega\right](v).
\end{split}
\end{equation}
\item[$\rm (ii)$]
If we assume that for all $0\leq k \leq n$ and $0\leq j\leq J$
\begin{equation*}
\lim_{v\to \infty}\left[\int_{\s^2}r^{2j} |\slashed{\Delta}_{\s^2}^{n-k} L^j\Phi_{(k)}|^2|_{{N}^{\mathcal{I}}}\,d\omega\right](v)<\infty,
\end{equation*}
then, for all $u\geq u_0$, $0\leq k \leq n$ and $0\leq j\leq J$, we have that
\begin{equation}
\label{eq:limitradfieldsinftLk}
\lim_{v\to \infty}\left[\int_{\s^2}r^{2j} |\slashed{\Delta}_{\s^2}^{n-k} L^j\Phi_{(k)}|^2|_{{N}_{u}^\mathcal{I}}\,d\omega\right](v)<\infty.
\end{equation}
\end{itemize}
\end{proposition}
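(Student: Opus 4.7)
The plan is to proceed by induction on $J$, with the base case $J=0$ being exactly the content of Proposition \ref{prop:radfieldsinf}. Fix $n\in\mathbb{N}$ and suppose the statement holds for some $J\geq 0$ and all $0\leq k\leq n$. The inductive step will follow from the transport equation \eqref{eq:hoequations1}. Since $[L,\underline{L}]=0$ in double null coordinates, \eqref{eq:hoequations1} (with $k$ replaced by $J+1$) can be rewritten as
\begin{equation*}
4\underline{L}(L^{J+1}\Phi_{(n)})=Dr^{-2}\slashed{\Delta}_{\s^2}L^{J+1}\Phi_{(n)}+O(r^{-1})L^{J+1}\Phi_{(n)}+E,
\end{equation*}
where the error term $E$ is a linear combination of quantities of the form $r^{-a}L^i\Phi_{(k)}$ and $r^{-a}\slashed{\Delta}_{\s^2}L^i\Phi_{(k)}$ with $i\leq J$ and decay in $r$ at least $a\geq 2$, which are all controlled (after being multiplied by appropriate $r^{J+1}$ weights) by quantities for which the inductive assumption holds (with possibly more angular derivatives, which is harmless since $[\slashed{\Delta}_{\s^2},\square_g]=0$ allows one to apply the inductive hypothesis to $\slashed{\Delta}_{\s^2}^m\psi$).

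Integrating this identity in $u$ from $u_0$ to $u$ at fixed $(v,\theta,\varphi)$, multiplying by $r^{J+1}$ and then by $\slashed{\Delta}_{\s^2}^{n-k}L^{J+1-(J+1)+k}$-type operators (as in the proof of Proposition \ref{prop:radfieldsinf}), squaring and integrating over $\s^2$, yields an integral inequality of the schematic form
\begin{equation*}
F(u,v)\leq F(u_0,v)+C\int_{u_0}^u r^{-1}(u',v)F(u',v)\,du'+G(u,v),
\end{equation*}
where $F(u,v)=\sup_{u_0\leq u'\leq u}\int_{\s^2}r^{2(J+1)}|\slashed{\Delta}_{\s^2}^{n-k}L^{J+1}\Phi_{(k)}|^2(u',v,\cdot)\,d\omega$ (summed over $0\leq k\leq n$) and $G(u,v)$ bounds the contribution of the lower-order terms, controlled by the inductive hypothesis (in $J$) and by Proposition \ref{prop:radfieldsinf} (applied to $\slashed{\Delta}_{\s^2}^m\psi$). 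Applying Gr\"onwall's inequality in $u$, using that $\int_{u_0}^u r^{-1}(u',v)\,du'\leq C\log(r(u_0,v)/r(u,v))$ is bounded by a $u$-dependent constant, yields \eqref{eq:udepradfieldsinftLk} for $J+1$.

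For part (ii), the same identity is used but now we integrate the FTC expression and square before taking $v\to\infty$. By part (i) the supremum in $u$ is already controlled, so we may pass to the limit $v\to\infty$ on both sides. Here we use that on the right-hand side the integrand over $u$ is multiplied by $r^{2(J+1)}$ and an additional factor of $r^{-2}$ from the spherical Laplacian term, so it is of a lower order in $r$ than the top-order $v$-limit one seeks to bound; hence the resulting $u$-integral over $[u_0,u]$ tends to zero as $v\to\infty$ (or is bounded by the inductive hypothesis). The finiteness of $\lim_{v\to\infty}F(u_0,v)$ together with the inductive hypothesis applied to $E$ then gives the finiteness claim \eqref{eq:limitradfieldsinftLk} for $J+1$.

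The main technical obstacle is the careful bookkeeping of the $r$-weights: to close the Gr\"onwall argument, the dangerous $r^{-2}\slashed{\Delta}_{\s^2}L^{J+1}\Phi_{(n)}$ term on the right-hand side must be dealt with either by commuting with $\slashed{\Delta}_{\s^2}^m$ (observing that spherical derivatives commute with the argument) or by absorbing the spherical Laplacian using the structure of \eqref{eq:equationforsphlaplacian} to convert $\slashed{\Delta}_{\s^2}L^{J+1}\Phi_{(n)}$ into $L^{J+2}\Phi_{(n-1)}$ plus lower-order terms when $n\geq 1$; for $n=0$ the term is simply absent at the spherically symmetric level. All lower-order terms, weighted by $r^{2j}$ with $j\leq J$, are handled by the inductive hypothesis at level $J$, and the base $J=0$ is handled directly by Proposition \ref{prop:radfieldsinf}.
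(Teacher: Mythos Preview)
Your proposal is correct and follows essentially the same approach as the paper: induction on $J$ with base case $J=0$ given by Proposition \ref{prop:radfieldsinf}, and the inductive step carried out by repeating the transport-and-Gr\"onwall argument of that proposition using the higher-order equation \eqref{eq:hoequations1} in place of \eqref{eq:maincommeqI}. The paper's own proof is stated much more tersely (it simply refers back to the argument of Proposition \ref{prop:radfieldsinf}), but your elaboration of the weight bookkeeping and the handling of the $\slashed{\Delta}_{\s^2}$ terms via commutation is consistent with what is implicitly required.
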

\begin{proof}
The proof proceeds inductively in $J$. The $J=0$ case is proved in Proposition \ref{prop:radfieldsinf}. We then suppose that \eqref{eq:udepradfieldsinftLk} holds for $J$ and prove that it must also hold for $J$ replaced by $J+1$ by applying the same arguments as in the proof of Proposition \ref{prop:radfieldsinf}, using equation \eqref{eq:hoequations1}.
\end{proof}

We now state the key lemma containing the $r$-weighted \emph{identities} for $L^J\Phi_{(n)}$ and $\underline{L}^j \underline{\Phi}_{(n)}$ with $J>1$. Recall that the $J=0$ case was obtained in  Lemma \ref{lm:rweightidentitiess2v1}.
\begin{lemma}
\label{lm:rweightidentitiesho}
Let $p\in \R$. Then the following identities hold for all $N\in \N_0$ and $J\in \N$:
\begin{equation}
\label{eq:mainidentitysphereinfho}
\begin{split}
\int_{\s^2}&\underline{L}\left(r^p (L^{J+1}\Phi_{(n)})^2\right)\,d\omega+\int_{\s^2} L\left(\frac{1}{4}r^{p-2}|\slashed{\nabla}_{\s^2}L^J\Phi_{(n)}|^2\right)\,d\omega\\
&+\frac{1}{2}\int_{\s^2}L\left(\left[-JDr^{p-3}+O(r^{p-4})\right]\slashed{\nabla}_{\s^2}L^J\Phi_{(n)}\cdot \snabla_{\s^2}L^{J-1}\Phi_{(n)}\right)\,d\omega \\
&+ \frac{1}{2}\int_{\s^2}[(p+4N)r^{p-1}+O(r^{p-2})](L^{J+1}\Phi_{(n)})^2\,d\omega\\ 
&+\frac{1}{8}\int_{\s^2} [(2+4J-p)r^{p-3}+O(r^{p-4})]D|\slashed{\nabla}_{\s^2}L^J\Phi_{(n)}|^2+(p-2)N(N+1+2J)r^{p-3}D(L^J\Phi_{(n)})^2\,d\omega\\ 
=&\:  \int_{\s^2} \frac{1}{4}L\left(N(N+1+2J)r^{p-2}(L^J\Phi_{(n)})^2\right)\,d\omega\\ 
&+\int_{\s^2} O(r^{p-3}) L^J\Phi_{(n)}\cdot L^{J+1}\Phi_{(n)}\,d\omega +\sum_{j=1}^J\int_{\s^2}O(r^{p-2-j})L^{J-j}\Phi_{(n)}\cdot L^{J+1}\Phi_{(n)}\\
&+J(J-1)\int_{\s^2}\sum_{j=\min\{J,2\}}^J O(r^{p-2-j})\slashed{\Delta}_{\s^2}L^{J-j}\Phi_{(n)}\cdot L^{J+1}\Phi_{(n)}\,d\omega\\
&+\int_{\s^2} O(r^{p-3})\slashed{\Delta}_{\s^2}L^J\Phi_{(n)}\cdot L^{J+1}\Phi_{(n)}\,d\omega+N\sum_{n=0}^{\max\{0,N-1\}}\sum_{j=0}^J\int_{\s^2} O(r^{p-2-j}) L^{J-j}\Phi_{(n)}\cdot L^{J+1}\Phi_{(n)}\,d\omega\\
&+\int_{\s^2}O(r^{p-4}) \slashed{\nabla}_{\s^2}L^{J-1}\Phi_{(n)} \cdot \slashed{\nabla}_{\s^2}L^J\Phi_{(n)}\,d\omega,
\end{split}
\end{equation}
and
\begin{equation}
\label{eq:mainidentityspherehorho}
\begin{split}
\int_{\s^2}&L\left((r-M)^{-p} (\Lbar^{J+1}\underline{\Phi}_{(n)})^2\right)\,d\omega+\int_{\s^2} \Lbar\left(\frac{1}{4}M^{-4}(r-M)^{2-p}|\slashed{\nabla}_{\s^2}\Lbar^J\underline{\Phi}_{(n)}|^2\right) \,d\omega\\
&+\frac{1}{2}\int_{\s^2}L\left(\left[-JM^{-6}D(r-M)^{3-p}+O((r-M)^{4-p})\right]\snabla_{\s^2}\Lbar^J\underline{\Phi}_{(n)}\cdot \slashed{\nabla}_{\s^2}\Lbar^{J-1}\underline{\Phi}_{(n)}\right)\,d\omega \\
&+ \frac{1}{2}\int_{\s^2}[(p+4N)M^{-2}(r-M)^{1-p}+O((r-M)^{2-p})](\Lbar^{J+1}\underline{\Phi}_{(n)})^2\,d\omega\\ 
&+\frac{1}{8}\int_{\s^2}[ (2+4J-p)M^{-6}(r-M)^{3-p}+O((r-M)^{4-p})]|\slashed{\nabla}_{\s^2}\Lbar^J\underline{\Phi}_{(n)}|^2\\
&+(p-2)M^{-6}N(N+1+2J)(r-M)^{3-p}(\Lbar^J\underline{\Phi}_{(n)})^2\,d\omega\\ 
=&\:  \int_{\s^2} \frac{1}{4}M^{-4}\Lbar\left(N(N+1+2J)(r-M)^{2-p}(\Lbar^J\underline{\Phi}_{(n)})^2-(r-M)^{2-p}|\slashed{\nabla}_{\s^2}\Lbar^J\underline{\Phi}_{(n)}|^2\right)\,d\omega\\ 
&+\int_{\s^2} O((r-M)^{3-p}) \Lbar^J\underline{\Phi}_{(n)}\cdot \Lbar^{J+1}\underline{\Phi}_{(n)}\,d\omega +\sum_{j=1}^J\int_{\s^2}O((r-M)^{2+j-p}))\Lbar^{J-j}\underline{\Phi}_{(n)}\cdot \Lbar^{J+1}\underline{\Phi}_{(n)}\\
&+J(J-1)\int_{\s^2}\sum_{j=\min\{J,2\}}^J O((r-M)^{2+j-p})\slashed{\Delta}_{\s^2}\Lbar^{J-j}\underline{\Phi}_{(n)}\cdot \Lbar^{J+1}\underline{\Phi}_{(n)}\,d\omega\\
&+\int_{\s^2} O((r-M)^{3-p})\slashed{\Delta}_{\s^2}\Lbar^J\underline{\Phi}_{(n)}\cdot \Lbar^J\underline{\Phi}_{(n)}\,d\omega\\
&+N\sum_{n=0}^{\max\{0,N-1\}}\sum_{j=0}^J\int_{\s^2} O((r-M)^{2+j-p}) \Lbar^{J-j}\underline{\Phi}_{(n)}\cdot \Lbar^{J+1}\underline{\Phi}_{(n)}\,d\omega\\
&+\int_{\s^2}O((r-M)^{4-p}) \slashed{\nabla}_{\s^2}\Lbar^{J-1}\underline{\Phi}_{(n)} \cdot \slashed{\nabla}_{\s^2}\Lbar^J\underline{\Phi}_{(n)}\,d\omega.
\end{split}
\end{equation}
\end{lemma}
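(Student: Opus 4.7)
The plan is to derive \eqref{eq:mainidentitysphereinfho} by applying $\underline{L}$ to the weighted quantity $r^p(L^{J+1}\Phi_{(n)})^2$ and substituting the higher-order commuted wave equation \eqref{eq:hoequations1}. Since $L$ and $\underline{L}$ commute in the double-null coordinate frame, $\underline{L}L^{J+1}\Phi_{(n)}=L\underline{L}L^J\Phi_{(n)}$, so
\begin{equation*}
\underline{L}\left(r^p(L^{J+1}\Phi_{(n)})^2\right)=-\tfrac{p}{2}Dr^{p-1}(L^{J+1}\Phi_{(n)})^2+\tfrac{1}{2}r^p L^{J+1}\Phi_{(n)}\cdot\left[4L\underline{L}(L^J\Phi_{(n)})\right],
\end{equation*}
into which \eqref{eq:hoequations1} is substituted. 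Each of the resulting principal terms is then processed by a combination of integration by parts on $\s^2$ (converting $\slashed{\Delta}_{\s^2}$ into $|\snabla_{\s^2}|^2$) and the Leibniz rule in $L$ (exploiting $\snabla_{\s^2}L^{J+1}\Phi_{(n)}=L\snabla_{\s^2}L^J\Phi_{(n)}$ and the analogous identity for $(L^J\Phi_{(n)})^2$) in order to separate the total $L$-derivatives, which become the flux terms on the left-hand side of \eqref{eq:mainidentitysphereinfho}, from bulk spacetime terms of one order lower in $r$-decay.

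First, the $[-4Nr^{-1}+O(r^{-2})]L^{J+1}\Phi_{(n)}$ term in \eqref{eq:hoequations1} combines with $-\tfrac{p}{2}Dr^{p-1}(L^{J+1}\Phi_{(n)})^2$ to produce the coefficient $\tfrac{1}{2}(p+4N)r^{p-1}$. Next, the $Dr^{-2}\slashed{\Delta}_{\s^2}L^J\Phi_{(n)}$ term produces the flux $L(\tfrac{1}{4}r^{p-2}|\snabla_{\s^2}L^J\Phi_{(n)}|^2)$ together with a bulk $\tfrac{1}{8}(2-p)Dr^{p-3}|\snabla_{\s^2}L^J\Phi_{(n)}|^2$. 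The Leibniz rule applied to $D[N(N+1+2J)r^{-2}+O(r^{-3})]L^J\Phi_{(n)}\cdot L^{J+1}\Phi_{(n)}$ produces the flux $L(\tfrac{1}{4}N(N+1+2J)r^{p-2}(L^J\Phi_{(n)})^2)$ and the bulk $\tfrac{1}{8}(p-2)N(N+1+2J)Dr^{p-3}(L^J\Phi_{(n)})^2$. All remaining terms in \eqref{eq:hoequations1} (those involving $\Phi_{(m)}$ with $m<N$, or involving $L^{J-j}\Phi_{(n)}$ and $\slashed{\Delta}_{\s^2}L^{J-j}\Phi_{(n)}$ with $j\geq 2$) contribute error integrands on the right-hand side of \eqref{eq:mainidentitysphereinfho}.

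The one step requiring genuinely new structure and constituting the main subtle point of the proof is the treatment of the $[-JDr^{-3}+O(r^{-4})]\slashed{\Delta}_{\s^2}L^{J-1}\Phi_{(n)}\cdot L^{J+1}\Phi_{(n)}$ contribution. This term is processed by \emph{two} integration-by-parts steps: first on $\s^2$, replacing $\slashed{\Delta}_{\s^2}L^{J-1}\Phi_{(n)}\cdot L^{J+1}\Phi_{(n)}$ by $-\snabla_{\s^2}L^{J-1}\Phi_{(n)}\cdot\snabla_{\s^2}L^{J+1}\Phi_{(n)}$, and then in $L$, by writing $\snabla_{\s^2}L^{J+1}\Phi_{(n)}=L\snabla_{\s^2}L^J\Phi_{(n)}$ and invoking Leibniz. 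This yields the mixed flux $\tfrac{1}{2}L([-JDr^{p-3}+O(r^{p-4})]\snabla_{\s^2}L^{J-1}\Phi_{(n)}\cdot\snabla_{\s^2}L^J\Phi_{(n)})$, plus the crucial bulk term $+\tfrac{1}{2}JDr^{p-3}|\snabla_{\s^2}L^J\Phi_{(n)}|^2$, plus a lower-order cross-term $O(r^{p-4})\snabla_{\s^2}L^{J-1}\Phi_{(n)}\cdot\snabla_{\s^2}L^J\Phi_{(n)}$ absorbed into the errors. The bulk $\tfrac{1}{2}JDr^{p-3}$ combines with the $\tfrac{1}{8}(2-p)Dr^{p-3}$ arising from the $\slashed{\Delta}_{\s^2}L^J\Phi_{(n)}$ processing above to give precisely $\tfrac{1}{8}(2+4J-p)Dr^{p-3}|\snabla_{\s^2}L^J\Phi_{(n)}|^2$ on the left-hand side of \eqref{eq:mainidentitysphereinfho}. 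The emergence of the integer $4J$ here depends delicately on the precise coefficient $-J$ multiplying $Dr^{-3}\slashed{\Delta}_{\s^2}L^{J-1}\Phi_{(n)}$ in \eqref{eq:hoequations1}, which in turn originates from commuting $L^J$ through the principal part of $\square_g$; without the exact balance between these two bulk contributions, the sign of the coefficient of $|\snabla_{\s^2}L^J\Phi_{(n)}|^2$ (and hence the viability of the estimate for large $J$) would be unclear.

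The horizon identity \eqref{eq:mainidentityspherehorho} is then obtained by repeating the above sequence of steps verbatim, applied to $L((r-M)^{-p}(\Lbar^{J+1}\underline{\Phi}_{(n)})^2)$ with the roles of $L$ and $\Lbar$ reversed and substituting \eqref{eq:hoequations2} in place of \eqref{eq:hoequations1}. The weights $r^{-k}$ of the near-infinity setting are systematically replaced by $(r-M)^k$ near $\{r=M\}$, reflecting the dual structure predicted by the Couch--Torrence conformal inversion, together with explicit multiplicative factors $M^{-2}$, $M^{-4}$, $M^{-6}$ arising from Taylor expansion of $D(r)$ and $r^{-2}$ around $r=M$. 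The combinatorial identity $(2-p)+4J=(2+4J-p)$ reappears verbatim, so no further obstacles arise beyond careful bookkeeping of the horizon-side coefficients.
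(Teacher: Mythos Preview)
Your proposal is correct and follows exactly the approach indicated in the paper: mimic the proof of Lemma~\ref{lm:rweightidentitiess2v1} but substitute the higher-order equations \eqref{eq:hoequations1}--\eqref{eq:hoequations2} from Lemma~\ref{lm:hoequations} in place of those of Lemma~\ref{lm:maineq}, and integrate by parts in $L$ (respectively $\Lbar$) and on $\s^2$. Your write-up is in fact considerably more detailed than the paper's own proof, which is essentially a two-line reference to the earlier lemma; in particular, you have correctly isolated the one genuinely new manipulation---the double integration by parts (first on $\s^2$, then in $L$) applied to the $[-JDr^{-3}+O(r^{-4})]\slashed{\Delta}_{\s^2}L^{J-1}\Phi_{(n)}$ term---and verified that its bulk contribution $\tfrac{1}{2}JDr^{p-3}|\snabla_{\s^2}L^J\Phi_{(n)}|^2$ combines with the $\tfrac{1}{8}(2-p)Dr^{p-3}$ from the principal angular term to yield the coefficient $\tfrac{1}{8}(2+4J-p)$.
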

\begin{proof}
The proof of \eqref{eq:mainidentitysphereinfho} and \eqref{eq:mainidentityspherehorho} proceeds in an analogous fashion to the proof of Lemma \ref{lm:rweightidentitiess2v1}, where we use the more general equations in Lemma \ref{lm:hoequations}, rather than the equations in Lemma \ref{lm:maineq}, and we  integrate by parts appropriately in the $L$ direction and on $\s^2$ and in the $\Lbar$ direction and on $\s^2$, respectively. 
\end{proof}

\subsection{The preliminary extended hierarchies}
\label{sec:TheExtendedHierarchies}

We now obtain the \emph{higher-order} (with respect to $L$ or $\underline{L}$ derivation) $r$-weighted estimates.

\begin{proposition}
\label{prop:generalrpestLkpsi}
Fix $N\in \N_0$ and $J\in \N_0$ and assume that there exists a constant $C_0>0$ such that on ${N}^{\mathcal{I}}$: for all $0\leq n\leq \min\{N-1,0\}$ and $0\leq j\leq J$
\begin{equation}
\label{assm:id1b}
\lim_{v\to \infty} \left(\int_{\s^2}r^{2j}|\snabla_{\s^2}\slashed{\Delta}_{\s^2}^nL^jP_{\geq 1}\Phi_{(N-n)}|^2\,d\omega\right)(u_0,v)<\infty.
\end{equation}
Then there exists $r_{\mathcal{I}}>0$ sufficiently large, such that for $p\in (-4N+2J,2+2J]$ and for all $0\leq u_1\leq u_2$:
\begin{equation}
\label{eq:rpestinfLkpsi}
\begin{split}
\int_{{N}^{\mathcal{I}}_{u_2}}& r^p(LP_{\geq 1}L^J\Phi_{(N)})^2\,d\omega dv+ \int_{u_1}^{u_2} \int_{{N}^{\mathcal{I}}_u}(p+4N)r^{p-1}(LP_{\geq 1}L^J\Phi_{(N)})^2\,d\omega dv du\\
&+\int_{u_1}^{u_2} \int_{{N}^{\mathcal{I}}_u}r^{p-3}D|\snabla_{\s^2}P_{\geq 1}L^J\Phi_{(N)}|^2\,d\omega dv du\\
\leq&\: C\sum_{0\leq j\leq J}\int_{{N}^{\mathcal{I}}_{u_1}}r^{p-2j}(LP_{\geq 1}L^{J-j}\Phi_{(N)})^2\,d\omega dv+ C\sum_{m\leq N+J}\int_{\Sigma_{u_1}} J^T[T^m\psi]\cdot \mathbf{n}_{u_1}\,d\mu_{\Sigma_{u_1}}\\
&+C\cdot N \int_{\mathcal{I}^+(u_1,u_2)}r^{p-2-2J}(P_{\leq \max\{N-1,1\}}P_{\geq 1}\Phi_{(n)})^2\,d\omega du\\
&+C\cdot N\int_{u_1}^{u_2} \int_{{N}^{\mathcal{I}}_u} (2+2J-p)r^{p+1-2J}(LP_{\leq \max\{N-1,1\}}P_{\geq 1}\Phi_{(\max\{N-1,0\})})^2\,d\omega dv du,
\end{split}
\end{equation}
where $C=C(N,J,M,{\Sigma_0},r_{\mathcal{I}})>0$ is a constant and we can take $r_{\mathcal{I}}=(p-2J+4N)^{-1}R_0(N,J,M)>0$. 

Furthermore, there exists $r_{\mathcal{H}}>M$, with $r_{\mathcal{H}}-M$ suitably small, such that for $p\in (-4N+2J,2+2J]$ and for all $0\leq u_1\leq u_2$:
\begin{equation}
\label{eq:rpesthorLbarkpsi}
\begin{split}
\int_{{N}^{\mathcal{H}}_{v_2}}& (r-M)^{-p}(\underline{L}P_{\geq 1}\Lbar^J\underline{\Phi}_{(N)})^2\,d\omega du+\int_{v_1}^{v_2} \int_{{N}^{\mathcal{H}}_u}(p+4N)(r-M)^{1-p}(\underline{L}P_{\geq 1}\Lbar^J\underline{\Phi}_{(N)})^2\,d\omega du dv\\
&+\int_{v_1}^{v_2} \int_{{N}^{\mathcal{H}}_v}(r-M)^{3-p}D|\snabla_{\s^2}P_{\geq 1}\Lbar^J\underline{\Phi}_{(N)}|^2\,d\omega du dv\\
\leq&\: C\sum_{0\leq j\leq J}\int_{{N}^{\mathcal{H}}_{v_1}}(r-M)^{-p+2j}(\underline{L}P_{\geq 1}\Lbar^{J-j}\underline{\Phi}_{(N)})^2\,d\omega du+ C\sum_{m\leq N+J}\int_{\Sigma_{v_1}} J^T[T^m\psi]\cdot \mathbf{n}_{v_1}\,d\mu_{\Sigma_{v_1}}\\
&+C\cdot N \int_{\mathcal{H}^+(v_1,v_2)}(r-M)^{-p+2J+2}(P_{\leq \max\{N-1,1\}}P_{\geq 1}\Phi_{(n)})^2\,d\omega dv\\
&+C\cdot N\int_{v_1}^{v_2} \int_{{N}^{\mathcal{H}}_v} (2+2J-p)(r-M)^{-p+2J-1}(\Lbar P_{\leq \max\{N-1,1\}}P_{\geq 1}\underline{\Phi}_{(\max\{N-1,0\})})^2\,d\omega du dv,
\end{split}
\end{equation}
where $C=C(N,J,M,{\Sigma_0},r_{\mathcal{H}})>0$ is a constant and we can take $(r_{\mathcal{H}}-M)^{-1}=(p-2J+4N)^{-1}M^{-2}R_0(N,J,M)>0$. 
\end{proposition}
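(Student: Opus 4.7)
The plan is to proceed by a double induction: an outer induction on $J\in \N_0$ with the base case $J=0$ being Proposition \ref{prop:generalrpest}, and, for each $J$, an induction on $p$ proceeding from the smallest admissible $p$ upward. The general strategy mirrors the proof of Proposition \ref{prop:generalrpest}: one integrates the identities \eqref{eq:mainidentitysphereinfho} and \eqref{eq:mainidentityspherehorho} of Lemma \ref{lm:rweightidentitiesho} over the spacetime region bounded by $\Sigma_{u_1}$ and $\Sigma_{u_2}$ (respectively $\Sigma_{v_1}$ and $\Sigma_{v_2}$), using cut-off functions $\chi$ supported in $\{r\geq r_{\mathcal{I}}\}$ (respectively $\{r\leq r_{\mathcal{H}}\}$) to localize to the asymptotic regions, and then absorbs the error terms on the right-hand side into the positive spacetime integrals on the left, for appropriate choices of $r_{\mathcal{I}}$ large and $r_{\mathcal{H}}-M$ small.

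The key observations needed to implement this scheme are: (i) the coefficient $(p+4N)$ of the bulk $(LL^J\Phi_{(N)})^2$ term on the left-hand side is strictly positive precisely when $p>-4N$, which is the origin of the restriction on the lower end of the admissible $p$-range; (ii) the coefficient $(2+4J-p)$ of the bulk $|\snabla_{\s^2}L^J\Phi_{(n)}|^2$ term is non-negative precisely when $p\leq 2+2J$ (after absorbing into the positive $(2+2J-p)$ term via the Poincar\'e inequality \eqref{eq:poincareineq}), which gives the upper end of the admissible $p$-range. The shift $p\mapsto p+2J$ (compared to the $J=0$ case) reflects the fact that each application of $L$ improves the $r$-decay of $\Phi_{(N)}$ by one power of $r^{-1}$ near $\I$ (and analogously for $\Lbar$ and $r-M$ near $\h$). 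The limiting boundary terms at $\I$ appearing after integration in $v$ are controlled by Proposition \ref{prop:radfieldsinfLk}; the corresponding boundary terms at $\h$ vanish automatically from the smoothness of the data. For $N\geq 1$, the Poincar\'e inequality on the projection $P_{\geq N}$ annihilates the $N(N+1+2J)$ coefficient, while on the complementary projection $P_{\leq \max\{N-1,1\}}P_{\geq 1}$ it does not, which is precisely why the last two integrals appear on the right-hand side.

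The main obstacles to absorption are the following \emph{error terms} on the right-hand sides of \eqref{eq:mainidentitysphereinfho} and \eqref{eq:mainidentityspherehorho}:
\begin{itemize}
\item Cross-terms of the form $O(r^{p-2-j})L^{J-j}\Phi_{(N)}\cdot L^{J+1}\Phi_{(N)}$ for $1\leq j\leq J$: these are controlled by Young's inequality, absorbing $\epsilon r^{p-1}(L^{J+1}\Phi_{(N)})^2$ into the left-hand side and controlling the remaining $\epsilon^{-1}r^{p-3-2j}(L^{J-j}\Phi_{(N)})^2$ by the inductive estimate at level $J-j$ combined with the Hardy inequality \eqref{eq:hardyinf} to pass between $(L^{J-j}\Phi_{(N)})^2$ and $(LL^{J-j-1}\Phi_{(N)})^2$ with appropriate $r$-weights.
\item Terms involving $\slashed{\Delta}_{\s^2} L^{J-j}\Phi_{(N)}$ for $j\geq 0$: these are handled by using the equation \eqref{eq:hoequations1} to replace $\slashed{\Delta}_{\s^2} L^{J-j}\Phi_{(N)}$ by $r^2 \underline{L} L^{J-j+1}\Phi_{(N)}$ plus lower-order terms, then integrating by parts in $\underline{L}$ as in the proof of Proposition \ref{prop:generalrpest}.
\item The terms coupling $\Phi_{(N)}$ to $\Phi_{(n)}$ with $n<N$ (the lines weighted by the factor $N$): these are handled, as in Proposition \ref{prop:generalrpest}, by inverting the sphere Laplacian on $P_{\geq 1}\psi$ to define the potentials $f_{(n)}$, integrating by parts twice on $\s^2$, and then using equation \eqref{eq:hoequations1} again to trade sphere Laplacians for $\underline{L}L$-derivatives before integrating by parts in $\underline{L}$.
\end{itemize}
The compactly supported remainder terms $R_\chi$ arising from the cut-off function $\chi$ are controlled exactly as in \eqref{est:cptsuppremainder}, using the higher-order $T$-energy flux and the integrated local energy decay estimates of Appendix \ref{sec:EnergyBounds}, which accounts for the $\sum_{m\leq N+J} \int_{\Sigma_{u_1}} J^T[T^m\psi]\cdot \mathbf{n}_{u_1}$ term (one needs $N+J$ derivatives since we have used the wave equation $N+J$ times to produce $\Phi_{(N)}$ and then $L^J\Phi_{(N)}$).

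The hardest step, and the one requiring the most careful bookkeeping, is controlling the cross-term $\int O(r^{p-3})\slashed{\Delta}_{\s^2} L^J\Phi_{(N)}\cdot L^{J+1}\Phi_{(N)}\,d\omega$, which involves one more angular derivative than is available on the left-hand side. Here one must combine the trick of using equation \eqref{eq:hoequations1} to convert the sphere Laplacian into a $\underline{L} L$-derivative together with an integration by parts in $\underline{L}$ that transfers the $\underline{L}$ derivative onto the $r^{p-1}$ weight, losing one power of $r$ and thereby producing a term of the form $r^{p-2}(LL^J\Phi_{(N)})^2$ which can now be absorbed into the main $(p+4N)r^{p-1}$ spacetime integral on the left-hand side, provided $r_{\mathcal{I}}$ is chosen sufficiently large (and dually, $r_{\mathcal{H}}-M$ sufficiently small). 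The $\h$-localized argument is formally identical with the substitutions $r\leftrightarrow (r-M)^{-1}$, $L\leftrightarrow \underline{L}$, reflecting the Couch--Torrence-type symmetry already visible in the structure of Lemma \ref{lm:rweightidentitiesho}.
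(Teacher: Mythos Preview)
Your approach is correct and matches the paper's: induction on $J$ with base case Proposition \ref{prop:generalrpest}, using the identities of Lemma \ref{lm:rweightidentitiesho}, absorbing error terms via Young, Hardy, the inductive hypothesis, integration by parts for the $\slashed{\Delta}_{\s^2}$ terms, and cut-offs exactly as in the $J=0$ argument. One minor correction: the $p$-range $(-4N+2J,2+2J]$ is forced by the inductive structure (the $r^{p-2j}$ fluxes at level $J-j$ must lie in the admissible range of the already-established lower-$J$ estimates), not directly by the signs of the coefficients $(p+4N)$ and $(2+4J-p)$ as you state, and the secondary induction on $p$ you propose is unnecessary.
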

\begin{proof}
We prove \eqref{eq:rpestinfLkpsi} and \eqref{eq:rpesthorLbarkpsi} inductively in $J$. The $J=0$ case follows immediately from the estimates \eqref{eq:rpestinf} and \eqref{eq:rpesthor}. Now suppose \eqref{eq:rpestinfLkpsi} and \eqref{eq:rpesthorLbarkpsi} hold for all $0\leq J\leq J'$. Then we need to show that they must also hold with $J$ replaced by $J'+1$. In order to show this, we use the identity \eqref{eq:mainidentitysphereinfho}, where we either absorb all terms without a sign into the terms with a good sign or we use the induction step to estimate them, applying the Hardy inequalities \eqref{eq:hardyinf} and \eqref{eq:hardyhor1} where necessary, after introducing a cut-off as in the proof of Proposition \ref{prop:generalrpest}. In addition, we integrate by parts the terms with a $\slashed{\Delta}_{\s^2}$ derivative that arise on the right-hand side of \eqref{eq:mainidentitysphereinfho}. We refer to Proposition 4.6 in \cite{paper1} for additional details in an analogous proof (where $N=2$).
\end{proof}

\begin{remark}
Let us emphasize that for $N\geq 2$ the estimates \eqref{eq:rpestinfLkpsi} and \eqref{eq:rpesthorLbarkpsi} applied to $P_{\leq N-1} \psi$ have integrals on the right-hand side that are \underline{not} solely supported on $\Sigma_{u_1}$ or $\Sigma_{v_1}$. We will need \emph{different} $r$-weighted estimates to be able to estimate these terms further. More precisely, we will apply Proposition \ref{prop:rpestell01} in the case where $N=2$.
\end{remark}

We can similarly extend the estimates for $n=0$ in Proposition \ref{prop:rpestell01} to the higher-order quantities $L^k\phi_0$ and $\underline{L}^k\phi_0$ as follows:

\begin{proposition}
\label{prop:rpestell01Lkpsi}
Let $k\in \N_0$ and consider the following assumptions for all $0\leq j\leq k-1$,
\begin{align}
\label{eq:asmLjphi_01}
\lim_{v\to \infty} r^{j+2} L^{j+1}\phi_0(u_0,v)<&\infty, \\
\label{eq:asmLjphi_02}
\lim_{v\to \infty} r^{j+3} L^{j+1}\phi_0(u_0,v)<&\infty.
\end{align}
Then, if we assume \eqref{eq:asmLjphi_01}, there exists an $r_{\mathcal{I}}>0$, such that for $p\in (-4n+2k,4+2k)$ and for all $0\leq u_1\leq u_2$:
\begin{equation}
\label{eq:rpestinfLkpsi0}
\begin{split}
\int_{{N}^{\mathcal{I}}_{u_2}}& r^p(L^{k+1}\phi_0)^2\,d\omega dv+\int_{u_1}^{u_2} \int_{{N}^{\mathcal{I}}_u}p r^{p-1}(L^{k+1}\phi_0)^2\,d\omega dv du\\
\leq&\: C\int_{{N}^{\mathcal{I}}_{u_1}}r^p(L^{k+1}\phi_0)^2\,d\omega dv+ C\sum_{m\leq k}\int_{\Sigma_{u_1}} J^T[T^m\psi]\cdot \mathbf{n}_{u_1}\,d\mu_{\Sigma_{u_1}},
\end{split}
\end{equation}
where $C=C(k,M,{\Sigma_0},r_{\mathcal{I}})>0$ is a constant and we can take $r_{\mathcal{I}}=(p-2k)^{-1}R_0(n,M)>0$, and we additionally assume \eqref{eq:asmLjphi_02} when $p\geq 3+k$.

Furthermore, there exists an $r_{\mathcal{H}}>M$, such that for $p\in (2k-4n,4+2k)$ and for all $0\leq u_1\leq u_2$:
\begin{equation}
\label{eq:rpeshorLkpsi0}
\begin{split}
\int_{{N}^{\mathcal{H}}_{v_2}}& (r-M)^{-p}(\Lbar^{k+1}\phi_0)^2\,d\omega du+\int_{v_1}^{v_2} \int_{{N}^{\mathcal{H}}_u}p(r-M)^{1-p}(\Lbar^{k+1}\phi_0)^2\,d\omega du dv\\
\leq&\: C\int_{{N}^{\mathcal{H}}_{v_1}}(r-M)^{-p}(\Lbar^{k+1}\phi_0)^2\,d\omega du+ C\sum_{m\leq k}\int_{\Sigma_{v_1}} J^T[T^m\psi]\cdot \mathbf{n}_{v_1}\,d\mu_{\Sigma_{v_1}},
\end{split}
\end{equation}
where $C=C(k,M,{\Sigma_0},r_{\mathcal{H}})>0$ is a constant and we can take $(r_{\mathcal{H}}-M)^{-1}=(p+4n)^{-1}R_0(n,M)>0$.

If we additionally assume that there exists constants $\eta>0$ and $\mathcal{E}_{k,\eta}>0$ such that
\begin{align}
\label{eq:hoauxdecayassm1}
\sum_{j\leq k}r^{2j}(L^j\phi_0)^2\lesssim \mathcal{E}_{k,\eta}\cdot  (1+u)^{-2-\eta}\quad \textnormal{in $\mathcal{A}^{\mathcal{I}}$},\\
\label{eq:hoauxdecayassm2}
\sum_{j\leq k}(r-M)^{-2j}(\underline{L}^j\phi_0)^2\lesssim \mathcal{E}_{k,\eta} \cdot (1+v)^{-2-\eta}\quad \textnormal{in $\mathcal{A}^{\mathcal{H}}$},\\
\end{align}
and we assume \eqref{eq:asmLjphi_02}, then we can obtain for $2k<p<5+2k$:
\begin{equation}
\label{eq:rpestinfp5k}
\begin{split}
\int_{{N}^{\mathcal{I}}_{u_2}}& r^p(L^{k+1}\phi_0)^2\,d\omega dv+ \int_{u_1}^{u_2} \int_{{N}^{\mathcal{I}}_u}pr^{p-1}(L^{k+1}\phi_0)^2\,d\omega dv du\\
\leq&\: C\int_{{N}^{\mathcal{I}}_{u_1}}r^p(L^{k+1}\phi_0)^2\,d\omega dv+ C\sum_{m\leq k}\int_{\Sigma_{u_1}} J^T[T^m\psi]\cdot \mathbf{n}_{u_1}\,d\mu_{\Sigma_{u_1}}+C(p+2k-5)^{-1}\mathcal{E}_{k,\eta},
\end{split}
\end{equation}
and 
\begin{equation}
\label{eq:rpeshorp5}
\begin{split}
\int_{{N}^{\mathcal{H}}_{v_2}}& (r-M)^{-p}(\Lbar^{k+1}\phi_0)^2\,d\omega du+\int_{v_1}^{v_2} \int_{{N}^{\mathcal{H}}_u}p(r-M)^{1-p}(\Lbar^{k+1}\phi_0)^2\,d\omega du dv\\
\leq&\: C\int_{{N}^{\mathcal{H}}_{v_1}}(r-M)^{-p}(\Lbar^{k+1}\phi_0)^2\,d\omega du+ C\sum_{m\leq k}\int_{\Sigma_{v_1}} J^T[T^m\psi]\cdot \mathbf{n}_{v_1}\,d\mu_{\Sigma_{v_1}}+C(p+2k-5)^{-1}\mathcal{E}_{k,\eta}.
\end{split}
\end{equation}
\end{proposition}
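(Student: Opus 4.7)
The plan is to proceed by induction on $k$, following the same structure as the proof of Proposition \ref{prop:rpestell01} (which handles the base case $k=0$) and Proposition \ref{prop:generalrpestLkpsi}. Assume the result holds up to some $k-1 \geq 0$ and consider $L^{k+1}\phi_0$ in $\mathcal{A}^{\mathcal{I}}$ (the argument in $\mathcal{A}^{\mathcal{H}}$ being entirely analogous after replacing $r$ with $(r-M)^{-1}$ and interchanging $L$ and $\underline{L}$). Starting from the equation \eqref{eq:hoequations1l0} for $L^k\phi_0$, I multiply by $2 r^p L^{k+1}\phi_0$, apply the Leibniz rule in $\underline{L}$ to the left-hand side, and then introduce a smooth cut-off $\chi$ supported in $\{r \geq r_{\mathcal{I}}\}$ exactly as in the proof of Proposition \ref{prop:generalrpest}. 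Integration over $[u_1,u_2] \times {N}^{\mathcal{I}}_u$ produces the boundary term $\int_{{N}^{\mathcal{I}}_{u_2}} r^p(L^{k+1}\phi_0)^2\,d\omega dv$, the bulk term $\int\int (p-2k) r^{p-1}(L^{k+1}\phi_0)^2\,d\omega dv du$ with a positive coefficient (for $p > 2k$, which is ensured by the range $p \in (2k,4+2k)$), plus error terms to be absorbed.

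The error terms come in three groups: first, terms of the form $\int\int O(r^{p-2})(L^{k+1}\phi_0)^2$, absorbed for $r_{\mathcal{I}}$ large depending on $(p-2k)^{-1}$; second, cross terms $\int\int O(r^{p-3-j}) L^{k-j}\phi_0 \cdot L^{k+1}\phi_0$ with $j \geq 1$, handled by Young's inequality combined with the Hardy inequality \eqref{eq:hardyinf} applied repeatedly to trade $r^{p-4-2j}(L^{k-j}\phi_0)^2$ for $r^{p-2-2j}(L^{k-j+1}\phi_0)^2$ and then invoking the inductive hypothesis at weight $p-2j$ (which lies in the allowed range); third, genuine boundary contributions at $\mathcal{I}^+$ arising when integrating by parts in $L$. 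For $p < 3+k$ the assumption \eqref{eq:asmLjphi_01} combined with \eqref{eq:rpestinfLkpsi0} at lower levels suffices to ensure all Hardy-inequality boundary terms at $r=\infty$ vanish. When $p \geq 3+k$ one needs the stronger assumption \eqref{eq:asmLjphi_02} so that the relevant $r$-weighted limits of $L^{j+1}\phi_0$ remain bounded at $\mathcal{I}^+$; this is the key role of \eqref{eq:asmLjphi_02} and is the analogue of the condition ensuring $\Phi_{(k)}$-boundedness used in Proposition \ref{prop:radfieldsinfLk}.

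To extend the range beyond $p = 4+2k$ up to $p < 5+2k$, the previous absorption breaks down: the error term $\int\int r^{p-5-2j}(L^{k-j}\phi_0)^2$ produced by Hardy-chaining a lower-order term through to $\phi_0$ is no longer bounded by the initial data alone. Here I would replicate the bootstrap device used for $p \in (4,5)$ in Proposition \ref{prop:rpestell01}: applying Young's inequality with a factor $u^{-1-\eta/2}$ splits each bad cross-term into a piece $\epsilon \int\int u^{-1-\eta/2} r^p (L^{k+1}\phi_0)^2$ — absorbed after taking a supremum in $u$ on the left-hand side — and a piece $\epsilon^{-1} \int\int u^{1+\eta/2} r^{p-6-2j}(L^{k-j}\phi_0)^2$, which is controlled via the pointwise auxiliary decay \eqref{eq:hoauxdecayassm1} (applied after Hardy-chaining down to $L^{k-j}\phi_0$, which lies inside the hypothesis) yielding the factor $(p+2k-5)^{-1}\mathcal{E}_{k,\eta}$ on the right-hand side.

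The main obstacle is the bookkeeping of the lower-order cross terms: each application of the Hardy inequality both shifts the weight in $r$ and demotes one $L$-derivative, so one must confirm that after at most $k$ Hardy steps the surviving term sits inside the admissible $p$-range of the inductive hypothesis and — crucially in the extended range $p \in (2k, 5+2k)$ — that the boundary contributions from Hardy truly vanish, which is exactly where the extra decay provided by \eqref{eq:asmLjphi_02} and the pointwise bounds \eqref{eq:hoauxdecayassm1}-\eqref{eq:hoauxdecayassm2} intervene. The remainder of the proof — choosing $r_{\mathcal{I}}$ of order $(p-2k)^{-1}$, removing the cut-off $\chi$ using a local elliptic/energy bound on $\{r_{\mathcal{I}} \leq r \leq r_{\mathcal{I}}+M\}$ as at the end of the proof of Proposition \ref{prop:generalrpest}, and dualizing everything via the Couch--Torrence symmetry $r \leftrightarrow M + M^2(r-M)^{-1}$ to obtain \eqref{eq:rpeshorLkpsi0} and \eqref{eq:rpeshorp5} — is routine.
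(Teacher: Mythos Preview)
Your approach is essentially the paper's: induction on $k$ with the $k=0$ case from Proposition \ref{prop:rpestell01} (at $n=0$), inductive step via the $\ell=0$ equation \eqref{eq:hoequations1l0} (the paper phrases this as Lemma \ref{lm:rweightidentitiesho} with $N=0$), error absorption by Young plus the Hardy chain \eqref{eq:hardyinf}, and the same $u^{-1-\eta/2}$-weighted bootstrap for the extended range $p\in(4+2k,5+2k)$.

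One correction worth flagging: the bulk coefficient coming out of the identity $\underline{L}\bigl(r^p(L^{k+1}\phi_0)^2\bigr)=-\tfrac{p}{2}Dr^{p-1}(L^{k+1}\phi_0)^2+\ldots$ is $p$, not $p-2k$; there is no $-4nr^{-1}$ shift here since $n=0$. The lower bound $p>2k$ does \emph{not} come from bulk positivity but from the Hardy chain needed to climb the lower-order cross terms $O(r^{p-3-j})L^{k-j}\phi_0\cdot L^{k+1}\phi_0$ back up to level $k+1$: each Hardy step $\int r^{q}(L^{j}\phi_0)^2\lesssim\int r^{q+2}(L^{j+1}\phi_0)^2$ requires a finite boundary contribution at $\mathcal{I}^+$, and it is precisely here that \eqref{eq:asmLjphi_01} (for $p<3+2k$) or \eqref{eq:asmLjphi_02} (for $p<5+2k$) enters, after propagation to $u\geq u_0$ as in Proposition \ref{prop:radfieldsinfLk}. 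The paper singles out exactly this Hardy step as the one nontrivial ingredient beyond the $k=0$ case; your identification of it as ``the main obstacle'' is on target, but the mechanism is the vanishing of Hardy boundary terms rather than the sign of the bulk.
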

\begin{proof}
The estimates follow from an induction argument. The $k=0$ case follows from Proposition \ref{prop:rpestell01} with $n=0$ and we use the identities in Lemma \ref{lm:rweightidentitiesho} (with $n=0$) in the induction step, as in the proof of Proposition \ref{prop:generalrpestLkpsi}. We moreover use that for all $0\leq j \leq k-1$ and $0<p<5$:
\begin{equation*}
\int_{{N}^{\mathcal{I}}_{u}} r^{p+2j} (L^{j+1}\phi_0)^2\,d\omega dv\lesssim \int_{{N}^{\mathcal{I}}_{u}} r^{p+2j+2} (L^{j+2}\phi_0)^2\,d\omega dv,
\end{equation*}
by \eqref{eq:hardyinf} together with the assumption $\lim_{v\to \infty}r^{\frac{p+1}{2}+j} L^{j+1}\phi_0(u,v)<\infty$, which follows from the assumption \eqref{eq:asmLjphi_01} in the $p<3$ case and \eqref{eq:asmLjphi_02} in the $p<5$ case after a straightforward propagation to $u\geq u_0$ as in the proof of Proposition \ref{prop:radfieldsinfLk}. A similar inequality holds along ${N}^{\mathcal{H}}_{v}$, where there is no need for additional initial data assumptions by the assumption of smoothness of the initial data.
\end{proof}

\subsection{The hierarchies for $T^{k}\psi$}
\label{sec:TheHierarchiesForTKPsi}

In order to use the extended $r$-weighted hierarchies for $L^k\Phi_{(n)}$ and $\underline{L}^k\underline{\Phi}_{(n)}$ from Proposition \ref{prop:generalrpestLkpsi} and \ref{prop:rpestell01Lkpsi} to obtain additional $r$-weighted estimates for $T^k\Phi_{(n)}$ and $T^k\Phi_{(n)}$ compared to $\Phi_{(n)}$ and $\underline{\Phi}_{(n)}$, we first \emph{relate} $r$-weighted estimates for $T$-derivatives to $r$-weighted estimates for $L$- or $\underline{L}$-derivatives.
\begin{lemma}
\label{lm:intestTpsi}
Let $n\in \N_0$, $k\geq1$ and $p\in (2k,2k+2]$, then we can estimate
\begin{equation}
\label{eq:intestTpsiinf}
\begin{split}
\int_{u_1}^{u_2}&\int_{{N}^{\mathcal{I}}_{u}} r^{p-1}(L L^{k-1}T\Phi_{(n)})^2\,d\omega dv du\\
\leq &\:C\sum_{|\alpha|=1}\int_{u_1}^{u_2}\int_{{N}^{\mathcal{I}}_{u}} r^{p-1}(LL^{k}\Phi_{(n)})^2+r^{p-3}|\snabla_{\s^2}L^{k-1}\Omega^{\alpha}\Phi_{(n)}|^2\,d\omega dv du\\
&+ C\sum_{m\leq k+n}\int_{\Sigma_{u_1}} J^T[T^m\psi]\cdot \mathbf{n}_{u_1}\,d\mu_{\Sigma_{u_1}}
\end{split}
\end{equation}
and
\begin{equation}
\label{eq:intestTpsihor}
\begin{split}
\int_{v_1}^{v_2}&\int_{{N}^{\mathcal{H}}_{v}} (r-M)^{1-p}(\Lbar \Lbar^{k-1}T\underline{\Phi}_{(n)})^2\,d\omega du dv\\
\leq &\:C\sum_{|\alpha|=1} \int_{v_1}^{v_2}\int_{{N}^{\mathcal{I}}_{v}} (r-M)^{1-p}(\Lbar \Lbar^{k}\underline{\Phi}_{(n)})^2+(r-M)^{3-p} |\snabla_{\s^2} \Lbar^{k-1}\Omega^{\alpha}\underline{\Phi}_{(n)}|^2\,d\omega du dv\\
&+ C\sum_{m\leq k+n}\int_{\Sigma_{v_1}} J^T[T^m\psi]\cdot \mathbf{n}_{v_1}\,d\mu_{\Sigma_{v_1}}.
\end{split}
\end{equation}
Furthermore, if we let $p\in (2k,2k+5)$, and we assume that for all $0\leq j\leq k$,
\begin{align*}
\lim_{v\to \infty} r^{j+2} L^{j+1}\phi_0(u_0,v)<&\infty \quad \textnormal{when $p<3+2k$ and} \\
\lim_{v\to \infty} r^{j+3} L^{j+1}\phi_0(u_0,v)<&\infty\quad \textnormal{when $p<5+2k$,}
\end{align*}
 then we can estimate
\begin{equation}
\label{eq:intestTpsiinfl0}
\begin{split}
\int_{u_1}^{u_2}&\int_{{N}^{\mathcal{I}}_{u}} r^{p-1}(L L^{k-1}T\phi_0)^2\,d\omega dv du\\
\leq &\:C\int_{u_1}^{u_2}\int_{{N}^{\mathcal{I}}_{u}} r^{p-1}(LL^{k}\phi_0)^2\,d\omega dv du+ C\sum_{m\leq k}\int_{\Sigma_{u_1}} J^T[T^m\psi]\cdot \mathbf{n}_{u_1}\,d\mu_{\Sigma_{u_1}}
\end{split}
\end{equation}
and
\begin{equation}
\label{eq:intestTpsihorl0}
\begin{split}
\int_{v_1}^{v_2}&\int_{{N}^{\mathcal{H}}_{v}} (r-M)^{1-p}(\Lbar \Lbar^{k-1}T\phi_0)^2\,d\omega du dv\\
\leq &\:C\int_{v_1}^{v_2}\int_{{N}^{\mathcal{I}}_{v}} (r-M)^{1-p}(\Lbar \Lbar^{k}\phi_0)^2\,d\omega du dv+ C\sum_{m\leq k}\int_{\Sigma_{v_1}} J^T[T^m\psi]\cdot \mathbf{n}_{v_1}\,d\mu_{\Sigma_{v_1}}.
\end{split}
\end{equation}
\end{lemma}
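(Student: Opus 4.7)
The plan is to use the identities of Lemma \ref{lm:hoequationsTpsi} as the starting point. Equation \eqref{eq:hoequations1Tpsi} expresses $4L(L^{k-1}T\Phi_{(n)})$ as $4L^{k+1}\Phi_{(n)}$ plus a collection of ``error'' terms of the schematic form $O(r^{-2})\slashed{\Delta}_{\s^2} L^{k-1}\Phi_{(n)}$, $O(r^{-1})L^k\Phi_{(n)}$, $O(r^{-2})L^{k-1}\Phi_{(n)}$, and lower-order terms $O(r^{-2-j})L^{k-1-j}\Phi_{(n)}$ together with $O(r^{-2-j})L^{k-1-j}\Phi_{(m)}$ for $0\leq m\leq n-1$. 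Squaring the identity, integrating against the weight $r^{p-1}$ over a region $\mathcal{A}^{\mathcal{I}}(u_1,u_2)$, and applying the elementary inequality $(a+b)^2\leq 2a^2+2b^2$, the leading term reproduces the first integral on the right-hand side of \eqref{eq:intestTpsiinf}, while every error term contributes an integral with at least two additional negative powers of $r$. The analogous argument based on \eqref{eq:hoequations2Tpsi} handles \eqref{eq:intestTpsihor} with $r$ replaced by $(r-M)^{-1}$ and $L$ by $\underline{L}$.

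The main task is therefore to absorb the error integrals into the right-hand side of the asserted estimates. The most delicate is the angular term $\int r^{p-5}(\slashed{\Delta}_{\s^2}L^{k-1}\Phi_{(n)})^2\,d\omega dv du$ produced by the $Dr^{-2}\slashed{\Delta}_{\s^2}L^{k-1}\Phi_{(n)}$ summand. Using $[L,\slashed{\Delta}_{\s^2}]=0$ and the standard $\s^2$-elliptic identity $\int_{\s^2}(\slashed{\Delta}_{\s^2}f)^2\,d\omega\lesssim \sum_{|\alpha|=1}\int_{\s^2}|\snabla_{\s^2}\Omega^{\alpha}f|^2\,d\omega$, this can be rewritten in terms of $\sum_{|\alpha|=1}|\snabla_{\s^2}L^{k-1}\Omega^{\alpha}\Phi_{(n)}|^2$ and then bounded by the second integral on the right-hand side of \eqref{eq:intestTpsiinf} once $r_{\mathcal{I}}$ is taken sufficiently large to convert the $r^{p-5}$ weight into $r^{p-3}$ at the cost of an absorbable constant. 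The remaining error terms (containing $L^{k-j-1}\Phi_{(n)}$ or $L^{k-j-1}\Phi_{(m)}$ with $j\geq 1$) are bounded iteratively via the Hardy inequality \eqref{eq:hardyinf}, each application trading a power of $L$ for two powers of $r$; since the initial weights already carry an extra factor of $r^{-4}$ or smaller, after finitely many applications one reaches the $T$-energy of $T^m\psi$ with $m\leq k+n$, which accounts for the last term on the right-hand side. The restriction $p\in(2k,2k+2]$ is precisely what is needed to keep all intermediate integrands in the applicability range of \eqref{eq:hardyinf} and to discard the boundary terms in the Hardy inequality via Proposition \ref{prop:radfieldsinfLk}.

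For the horizon estimate \eqref{eq:intestTpsihor} the argument is structurally identical, with \eqref{eq:hardyhor1} replacing \eqref{eq:hardyinf}; since the initial data of $\psi$ is smooth along $\mathcal{H}^+$, no analogue of Proposition \ref{prop:radfieldsinfLk} is required to justify the absence of boundary contributions from the Hardy estimates. For the spherically symmetric estimates \eqref{eq:intestTpsiinfl0}--\eqref{eq:intestTpsihorl0}, the angular term is absent entirely (it is the $n=0$ specialisation, in which $\slashed{\Delta}_{\s^2}L^{k-1}\Phi_{(n)}$ and all $O(r^{-2-j})\Phi_{(m)}$ contributions vanish), so only the first type of error terms appears. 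This allows the range of $p$ to be enlarged to $(2k,2k+5)$ in accordance with the extended range in Proposition \ref{prop:rpestell01Lkpsi}, provided the stronger initial data decay hypotheses \eqref{eq:asmLjphi_01}--\eqref{eq:asmLjphi_02} are assumed so that the boundary terms arising when applying \eqref{eq:hardyinf} with $p-1>3+2k$ continue to vanish.

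The step I expect to be the main obstacle is the careful bookkeeping of the Hardy-inequality iterations together with the initial-data limit conditions: each application of \eqref{eq:hardyinf} generates a boundary term at $\mathcal{I}^+$, and the higher the power of $p$ we permit, the more refined limiting behaviour of $L^j\Phi_{(n)}$ or $L^j\phi_0$ we must establish. The precise form of the hypotheses \eqref{assm:id1b} and \eqref{eq:asmLjphi_01}--\eqref{eq:asmLjphi_02} is designed exactly to make the requisite limits finite after transporting them from $u=u_0$ to $u\geq u_0$ via Proposition \ref{prop:radfieldsinfLk} (or its $\phi_0$ analogue). Verifying that this bookkeeping closes uniformly in $k$ and $n$ is the only nontrivial analytical point beyond the routine manipulations above.
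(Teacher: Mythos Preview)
Your proposal is correct and follows essentially the same approach as the paper: square the identities of Lemma \ref{lm:hoequationsTpsi} (and \eqref{eq:hoequations1l0Tpsi}--\eqref{eq:hoequations2l0Tpsi} for $\phi_0$), integrate with the weight $r^{p-1}$ respectively $(r-M)^{1-p}$, handle the angular Laplacian via \eqref{eq:angmomentineq}, and absorb the remaining lower-order terms through Hardy iterations. The paper's own proof is deliberately terse and records only the first two of these steps explicitly, so your more detailed account is a faithful expansion of it.
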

\begin{proof}
All the estimates in the lemma follow directly from the identities in Lemma \ref{lm:hoequations} and \ref{lm:hoequationsl0} together with the equality $T=L+\underline{L}$. Note that we have made use of standard relations between angular derivatives and angular momentum operators $\Omega_i$, $i=1,2,3$, to replace $\slashed{\Delta}_{\s^2} L^{k-1}\Phi_{(n)}$ appearing in the integral by $\sum_{|\alpha|=1}\snabla_{\s^2}\Omega^{\alpha}L^{k-1}\Phi_{(n)}$ (and similarly for $\Phi_{(n)}$ replaced with $\underline{\Phi}_{(n)}$).
\end{proof}

\section{Time decay estimates}
\label{sec:decayest}
In this section we will derive time decay estimates for $\psi$. First, we will \emph{convert} hierarchies of $r$-weighted estimates from Section \ref{sec:rweightest} and \ref{sec:extendhier} into time decay estimates for various ($r$-weighted) energy quantities in Section \ref{sec:edecayest} and \ref{sec:hoedecayest}. Then, we will use these energy decay estimates to obtain \emph{pointwise} time decay estimates in Section \ref{sec:pdecayest} by applying in addition certain elliptic estimates that are derived in Section \ref{sec:ellpest}. 
\subsection{Energy decay estimates}
\label{sec:edecayest}
We start by deriving separately energy decay estimates for $\psi_0$ and $\psi_{\geq 1}$.
\begin{proposition}
\label{prop:edaypsi0}
For all $\epsilon>0$ there exists a constant $C=C(M,\Sigma_0,\epsilon)>0$ such that
\begin{align}
\label{eq:edecayl01}
\int_{\Sigma_{\tau}} J^T[ \psi_{0}]\cdot \mathbf{n}_{\tau}\,d\mu_{\tau}\leq &\:C\cdot E^{\epsilon}_{0}[\psi]  (1+\tau)^{-3+\epsilon},\\
\label{eq:r2edecayl01}
\int_{{N}^{\mathcal{I}}_{\tau}} r^2\cdot (L\phi_0)^2\,d\omega dv+\int_{{N}^{\mathcal{I}}_{\tau}} (r-M)^{-2}\cdot (\underline{L}\phi_0)^2\,d\omega dv\leq &\:C\cdot E^{\epsilon}_{0}[\psi]  (1+\tau)^{-1+\epsilon}.
\end{align}
with
\begin{equation*}
E^{\epsilon}_{0}[\psi]:= \int_{{N}^{\mathcal{I}}} r^{3-\epsilon}(L\phi_0)^2\,d\omega dv+\int_{{N}^{\mathcal{H}}} (r-M)^{-3+\epsilon}(\underline{L}\phi_0)^2\,d\omega du+\int_{\Sigma_0} J^T[\psi_0]\cdot \mathbf{n}_{\Sigma_0}\,d\mu_{\Sigma_0}.
\end{equation*}
\end{proposition}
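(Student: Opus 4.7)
The plan is to derive \eqref{eq:edecayl01} and \eqref{eq:r2edecayl01} by combining the $\ell=0$ improved hierarchies of Proposition \ref{prop:rpestell01} with the bulk Morawetz estimates of Appendix \ref{sec:EnergyBounds} and iterating via a dyadic mean-value argument in the spirit of Dafermos--Rodnianski. Since $\psi_0$ is spherically symmetric, it is supported only on the harmonic mode $\ell=0$, so Proposition \ref{prop:rpestell01} applies with $n=0$ for all $p\in(0,3)$ in both the $\I$-localized and $\h$-localized regions. After combining with the $T$-energy identity and the Morawetz estimate in $\mathcal{B}$, I obtain a global hierarchy of schematic form
\begin{equation*}
\mathcal{F}^{p}_{\tau_2}[\psi_0]+\int_{\tau_1}^{\tau_2}\mathcal{G}^{p-1}_{\tau}[\psi_0]\,d\tau \leq C\,\mathcal{F}^{p}_{\tau_1}[\psi_0],
\end{equation*}
where $\mathcal{F}^p_{\tau}[\psi_0]$ denotes the natural $r^p$-/$(r-M)^{-p}$-weighted flux through $\Sigma_{\tau}$ (including the $T$-flux in $\mathcal{B}$), and $\mathcal{G}^{p-1}_{\tau}[\psi_0]$ is the associated bulk integrand that degenerates by one power in $r$ in $\mathcal{A}^{\mathcal{I}}$ and in $(r-M)^{-1}$ in $\mathcal{A}^{\mathcal{H}}$ relative to $\mathcal{F}^{p}$.

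Next I run the standard dyadic iteration on the interval $[\tau_k,\tau_{k+1}]$ with $\tau_k=2^k\tau_0$. Taking $p=3-\epsilon$ and using the initial data norm $E^{\epsilon}_{0}[\psi]$ to bound $\mathcal{F}^{3-\epsilon}_{\tau_1}[\psi_0]$ gives a uniform bound on $\int_{\tau_1}^{\tau_2}\mathcal{F}^{2-\epsilon}_{\tau}[\psi_0]\,d\tau$. A mean-value/pigeonhole argument on the dyadic sequence produces times $\bar\tau_k\in[\tau_k,\tau_{k+1}]$ at which $\mathcal{F}^{2-\epsilon}_{\bar\tau_k}[\psi_0]\lesssim E^{\epsilon}_{0}[\psi]\,\tau_k^{-1}$, and the hierarchy applied between $\bar\tau_k$ and any $\tau\in[\bar\tau_k,\bar\tau_{k+1}]$ propagates this to all times, producing $\mathcal{F}^{2-\epsilon}_{\tau}[\psi_0]\lesssim E^{\epsilon}_{0}[\psi](1+\tau)^{-1+\epsilon}$. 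After absorbing the $\epsilon$ losses into the exponents of the weights, this yields the estimate \eqref{eq:r2edecayl01}.

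I then iterate two more times. Integrating the $\tau^{-1+\epsilon}$ decay of $\mathcal{F}^{2-\epsilon}_{\tau}$ against the $p=2-\epsilon$ hierarchy and pigeonholing again gives $\mathcal{F}^{1-\epsilon}_{\tau}[\psi_0]\lesssim E^{\epsilon}_{0}[\psi](1+\tau)^{-2+\epsilon}$. One further iteration using the $p=1-\epsilon$ hierarchy (available since $n=0$ admits all $0<p<3$) yields $\mathcal{F}^{-\epsilon}_{\tau}[\psi_0]\lesssim E^{\epsilon}_{0}[\psi](1+\tau)^{-3+\epsilon}$. Since the flux $\mathcal{F}^{-\epsilon}_{\tau}[\psi_0]$ controls the non-degenerate $T$-flux $\int_{\Sigma_{\tau}} J^{T}[\psi_0]\cdot\mathbf{n}_{\tau}\,d\mu_{\tau}$ up to the $\epsilon$-loss in the weights (which we can redistribute by choosing $\epsilon$ at the outset three times smaller), this is exactly \eqref{eq:edecayl01}.

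The main technical obstacle is synchronising the $\I$-localized and $\h$-localized hierarchies: the two independent pieces of $\mathcal{F}^p$ communicate only via the bulk $T$-flux in $\mathcal{B}$, and the boundary terms on $\{r=r_{\mathcal{H}}\}$ and $\{r=r_{\mathcal{I}}\}$ generated when passing from the $r^p$- and $(r-M)^{-p}$-weighted currents to the $T$-current must be absorbed by the higher-order $T$-energy control provided by Appendix \ref{sec:EnergyBounds}, which must in particular dominate the losses from the photon sphere and the degenerate horizon trapping. The $\epsilon$-loss per iteration is unavoidable because the $\ell=0$ extended range of $p$ is $(0,3)$ and the endpoint $p=3$ is excluded, so the overall rate is $\tau^{-3+\epsilon}$ rather than $\tau^{-3}$.
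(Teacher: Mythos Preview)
Your approach is the same as the paper's---descend the $\ell=0$ hierarchy of Proposition \ref{prop:rpestell01} with $n=0$ via the dyadic pigeonhole argument, combining the $\mathcal{I}$- and $\mathcal{H}$-localized pieces with the Morawetz estimate in $\mathcal{B}$---but the final step has a genuine gap. After obtaining $\mathcal{F}^{1-\epsilon}_\tau\lesssim E^\epsilon_0[\psi](1+\tau)^{-2+\epsilon}$, you apply the $p=1-\epsilon$ hierarchy and conclude $\mathcal{F}^{-\epsilon}_\tau\lesssim(1+\tau)^{-3+\epsilon}$, then assert that $\mathcal{F}^{-\epsilon}$ controls the $T$-flux. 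This inequality goes the wrong way: on $N^{\mathcal{I}}_\tau$ we have $r^{-\epsilon}\leq 1$, so $\int_{N^{\mathcal{I}}_\tau} r^{-\epsilon}(L\phi_0)^2\leq \int_{N^{\mathcal{I}}_\tau} (L\phi_0)^2$, i.e.\ $\mathcal{F}^{-\epsilon}\leq \int_{\Sigma_\tau}J^T[\psi_0]\cdot\mathbf{n}_\tau$, not the reverse; the same happens on $N^{\mathcal{H}}_\tau$ since $(r-M)^{\epsilon}\leq (r_{\mathcal{H}}-M)^{\epsilon}$. Your proposed ``redistribution by choosing $\epsilon$ three times smaller'' does not help, since three unit descents from $p=3-\epsilon$ always land at a negative exponent.

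The fix is what the paper does: at the bottom of the chain take $p=1$ \emph{exactly}, so that the bulk integrand is $\int r^{0}(L\phi_0)^2+\int (r-M)^{0}(\underline{L}\phi_0)^2$, which \emph{is} the $T$-flux on the null pieces. To feed that step you need $\mathcal{F}^{1}_\tau\lesssim(1+\tau)^{-2+\epsilon}$, and this does not follow directly from the decay of $\mathcal{F}^{1-\epsilon}$; it requires the interpolation of Lemma \ref{lm:interpolation}, applied to the pair $\mathcal{F}^{1-\epsilon}_\tau\lesssim\tau^{-2}$ and $\mathcal{F}^{2-\epsilon}_\tau\lesssim\tau^{-1}$. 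The same interpolation is what upgrades your $\mathcal{F}^{2-\epsilon}$ bound to the genuine $r^2$-weighted flux in \eqref{eq:r2edecayl01}; your phrase ``absorbing the $\epsilon$ losses into the exponents of the weights'' is right in spirit, but should be an explicit appeal to Lemma \ref{lm:interpolation}.
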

\begin{proof}
Energy decay follows from the hierarchies of $r$-weighted estimates in Proposition \ref{prop:rpestell01} with $n=0$ by a repeated use of the \emph{mean value theorem along dyadic time intervals} (sometimes called ``the pigeonhole principle''). For additional details, see also the analogous Proposition 7.1 in \cite{paper1} for the sub-extremal case.
\end{proof}
Before we proceed with proving energy decay for the $\psi_{\geq 1}$ part of the solution, we will show that we can obtain \emph{additional improvements} to the estimates in  Proposition \ref{prop:edaypsi0} in the cases where $I_0[\psi]=0$ or $H_0[\psi]=0$, after proving the following \emph{auxiliary} decay lemma:

\begin{lemma}
\label{lm:auxedaypsi0}
For all $\epsilon>0$ there exists a constant $C=C(M,\Sigma_0,\epsilon)>0$ such that
\begin{align}
\label{eq:auxdecayl01}
\int_{{N}^{\mathcal{I}}_{\tau}} r^2\cdot (L\phi_0)^2\,d\omega dv \leq&\: C\cdot \left[E^{\epsilon}_{0}[\psi]+ \int_{{N}^{\mathcal{I}}_{0}} r^{4-\epsilon}\cdot (L\phi_0)^2\,d\omega dv\right] (1+\tau)^{-2+\epsilon},\\
\label{eq:auxdecayl02}
 \int_{{N}^{\mathcal{H}}_{\tau}} (r-M)^{-2}\cdot (\underline{L}\phi_0)^2\,d\omega du\leq &\: C\cdot \left[E^{\epsilon}_{0}[\psi]+  \int_{{N}^{\mathcal{H}}_{0}} (r-M)^{-4+\epsilon}\cdot (\underline{L}\phi_0)^2\,d\omega dv\right] (1+\tau)^{-2+\epsilon}.
\end{align}
Furthermore,
\begin{align}
\label{eq:auxpointdecayl01}
||\phi_0||^2_{L^{\infty}({N}^{\mathcal{I}}_{\tau})}\leq&\: C\cdot \left[E^{\epsilon}_{0}[\psi]+ \int_{{N}^{\mathcal{I}}_{0}} r^{4-\epsilon}\cdot (L\phi_0)^2\,d\omega dv\right] (1+\tau)^{-\frac{5}{2}+\epsilon},\\
\label{eq:auxpointdecayl02}
||\phi_0||^2_{L^{\infty}({N}^{\mathcal{I}}_{\tau})}\leq &\: C\cdot \left[E^{\epsilon}_{0}[\psi]+  \int_{{N}^{\mathcal{H}}_{0}} (r-M)^{-4+\epsilon}\cdot (\underline{L}\phi_0)^2\right] (1+\tau)^{-\frac{5}{2}+\epsilon}.
\end{align}
\end{lemma}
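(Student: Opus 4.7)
The plan is to combine the improved $r^p$-weighted hierarchy from Proposition \ref{prop:rpestell01} (with $n=0$) at higher weight $p=4-\epsilon$ with the dyadic mean-value argument already used in Proposition \ref{prop:edaypsi0}. Throughout, we work in $\mathcal{A}^{\mathcal{I}}$ for \eqref{eq:auxdecayl01} and \eqref{eq:auxpointdecayl01}; the analogous estimates on $\mathcal{A}^{\mathcal{H}}$ follow by replacing $L$ with $\underline{L}$ and $r$ with $(r-M)^{-1}$ throughout, using \eqref{eq:rpeshortilde} in place of \eqref{eq:rpestinftilde} and the Hardy inequalities near $\mathcal{H}^+$ in place of those near $\mathcal{I}^+$.

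First, apply \eqref{eq:rpestinftilde} with $n=0$ and $p=4-\epsilon\in(0,4)$ between $\tau_1=0$ and arbitrary $\tau_2=\tau$. Combined with the already-established energy decay \eqref{eq:edecayl01}, this yields the uniform-in-$\tau$ bound
\[
\sup_{\tau\geq 0}\int_{N^{\mathcal{I}}_\tau} r^{4-\epsilon}(L\phi_0)^2\,d\omega dv\;+\;\int_0^\infty\!\!\int_{N^{\mathcal{I}}_s} r^{3-\epsilon}(L\phi_0)^2\,d\omega dv\,ds\;\leq\; C\, F_0[\psi],
\]
where $F_0[\psi]:=E^\epsilon_0[\psi]+\int_{N^{\mathcal{I}}_0}r^{4-\epsilon}(L\phi_0)^2\,d\omega dv$. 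A dyadic pigeonhole argument on the finite spacetime integral, followed by a re-application of \eqref{eq:rpestinftilde} with $p=3-\epsilon$ on each dyadic slab $[\tau_i,\tau_{i+1}]$, produces a rate $\int_{N^{\mathcal{I}}_\tau}r^{3-\epsilon}(L\phi_0)^2\lesssim F_0[\psi]\,(1+\tau)^{-1}$. Iterating once more with $p=2-\epsilon$ gives
\[
\int_{N^{\mathcal{I}}_\tau} r^{2-\epsilon}(L\phi_0)^2\,d\omega dv \;\lesssim\; F_0[\psi]\,(1+\tau)^{-2}.
\]
Finally, H\"older's inequality with the factorisation $r^{2}=r^{(2-\epsilon)\theta}\cdot r^{(4-\epsilon)(1-\theta)}$ for $\theta=1-\epsilon/2$ interpolates the preceding display with the uniform bound at weight $r^{4-\epsilon}$ and yields \eqref{eq:auxdecayl01}.

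For the pointwise estimate \eqref{eq:auxpointdecayl01}, fix $u$ such that $\Sigma_\tau\cap\{r=r_{\mathcal{I}}\}$ lies at parameter $v=v_{r_{\mathcal{I}}}(u)$ and write, for any $v\geq v_{r_{\mathcal{I}}}(u)$,
\[
\phi_0^2(u,v)=\phi_0^2(u,v_{r_{\mathcal{I}}}(u))+2\int_{v_{r_{\mathcal{I}}}(u)}^{v}\phi_0\cdot L\phi_0\,dv'.
\]
The boundary term is controlled, via a one-dimensional Sobolev embedding on $\Sigma_\tau\cap\{r\leq r_{\mathcal{I}}\}$ together with \eqref{eq:edecayl01}, by $F_0[\psi]\,(1+\tau)^{-3+\epsilon}$. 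For the bulk term Cauchy--Schwarz gives
\[
\left|\int_{v_{r_{\mathcal{I}}}(u)}^{v}\phi_0\,L\phi_0\,dv'\right|\leq \left(\int r^2(L\phi_0)^2\,dv'\right)^{\!1/2}\left(\int r^{-2}\phi_0^2\,dv'\right)^{\!1/2};
\]
the first factor decays at rate $(1+\tau)^{-1+\epsilon/2}$ by \eqref{eq:auxdecayl01}, while the second is reduced, via the Hardy inequality \eqref{eq:hardyinf} and the already established boundary control at $r=r_{\mathcal{I}}$, to a multiple of $\int (L\phi_0)^2\,dv'+\phi_0^2(u,v_{r_{\mathcal{I}}}(u))\lesssim F_0[\psi]\,(1+\tau)^{-3+\epsilon}$. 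The product decays at rate $(1+\tau)^{-5/2+\epsilon}$, which dominates the boundary contribution and gives \eqref{eq:auxpointdecayl01}. The estimates \eqref{eq:auxdecayl02} and \eqref{eq:auxpointdecayl02} are proved in the same way in $\mathcal{A}^{\mathcal{H}}$.

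The main subtlety is that the two inputs in the interpolation step come with \emph{different} $\epsilon$-losses (the top of the hierarchy $p=4-\epsilon$ is uniformly bounded, whereas the intermediate weight has quadratic $\tau$-decay); one has to choose the free exponent $\theta$ so that the $\epsilon$-loss in the weight survives without spoiling the targeted $(1+\tau)^{-2+\epsilon}$ rate. Since $\epsilon>0$ is arbitrary in the definition of $E_0^\epsilon[\psi]$, this can be arranged by shrinking $\epsilon$ a fixed amount at the outset.
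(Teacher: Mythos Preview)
Your proof is correct and follows essentially the same approach as the paper: extend the $n=0$ hierarchy of Proposition~\ref{prop:rpestell01} by one level to $p=4-\epsilon$, run the dyadic pigeonhole twice to reach $\tau^{-2}$ decay for the $r^{2-\epsilon}$-weighted flux, and then apply the fundamental theorem of calculus with Cauchy--Schwarz and Hardy for the pointwise bounds. The only cosmetic difference is that you close the $\epsilon$-gap between $r^{2-\epsilon}$ and $r^2$ via a direct H\"older interpolation, whereas the paper invokes the interpolation estimate of Lemma~\ref{lm:interpolation}; both mechanisms accomplish the same $\epsilon$-transfer from the $r$-weight to the $\tau$-rate.
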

\begin{proof}
The estimates \eqref{eq:auxdecayl01} and \eqref{eq:auxdecayl02} follow after applying, in addition to the estimates in the proof of Proposition \ref{prop:edaypsi0}, \emph{either} \eqref{eq:rpestinftilde} \emph{or} \eqref{eq:rpeshortilde} with $n=0$ and $p=4-\epsilon$ and $p=3-\epsilon$ (and applying the mean value theorem twice in an analogous fashion to the  proof of Proposition \ref{prop:edaypsi0}. The pointwise decay estimates then follow from a standard application of the fundamental theorem of calculus; see the proof of Proposition \ref{prop:pointdecay} for explicit details of this type of computation.
\end{proof}
With the $L^{\infty}$ estimates in Lemma \ref{lm:auxedaypsi0}, we can recover the assumptions \eqref{addasmforp5I} \emph{or} \eqref{addasmforp5H} and make use of the \emph{full} hierarchy of $r$-weighted estimates from Proposition \ref{prop:rpestell01}.

\begin{proposition}
\label{prop:extraendecay}
For all $\epsilon>0$ there exists a constant $C=C(M,{\Sigma_0},\epsilon)>0$ such that
\begin{align}
\label{eq:optr2decay1}
\int_{{N}_{\tau}^\mathcal{I}}r^{2-\epsilon} (L \phi_0)^2\,d\omega dv\leq&\:  C \cdot E^{\epsilon}_{0,\mathcal{I}}[\psi] \cdot (1+\tau)^{-3+\epsilon},\\
\label{eq:optr2decay2}
\int_{{N}^{\mathcal{H}}}(r-M)^{-2+\epsilon} (\underline{L}\phi_0)^2\,du\leq&\:C \cdot E^{\epsilon}_{0,\mathcal{H}}[\psi] \cdot (1+\tau)^{-3+\epsilon},
\end{align}
where
\begin{align*}
E^{\epsilon}_{0,\mathcal{I}}[\psi]:=&\: E^{\epsilon}_{0}[\psi]+\int_{{N}^{\mathcal{I}}} r^{5-\epsilon} (L  \phi_0)^2\,d\omega dv,\\
E^{\epsilon}_{0,\mathcal{H}}[\psi]:=&\: E^{\epsilon}_{0}[\psi]+\int_{{N}^{\mathcal{H}}} (r-M)^{-5+\epsilon} ( \underline{L}\phi_0)^2\,d\omega du.
\end{align*}
\end{proposition}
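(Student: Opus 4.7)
The plan is to run the same dyadic mean-value argument used for Proposition \ref{prop:edaypsi0}, but now based on the \emph{extended} hierarchy \eqref{eq:rpestinfp5k}--\eqref{eq:rpeshorp5} of Proposition \ref{prop:rpestell01}, which reaches up to $p=5-\epsilon$. The key observation is that the pointwise estimates \eqref{eq:auxpointdecayl01}--\eqref{eq:auxpointdecayl02} of Lemma \ref{lm:auxedaypsi0} are precisely strong enough to verify the auxiliary decay assumptions \eqref{addasmforp5I} and \eqref{addasmforp5H} needed to apply Proposition \ref{prop:rpestell01} with $p>2$.

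More concretely, I would first establish \eqref{eq:optr2decay1}. Under the finiteness of the $r^{5-\epsilon}$ weighted flux in $E^{\epsilon}_{0,\mathcal{I}}[\psi]$, the estimate \eqref{eq:auxpointdecayl01} from Lemma \ref{lm:auxedaypsi0} yields
\begin{equation*}
\sup_{{N}^{\mathcal{I}}_{\tau}}\phi_0^2 \;\lesssim\; E^{\epsilon}_{0,\mathcal{I}}[\psi]\cdot(1+\tau)^{-\frac{5}{2}+\epsilon},
\end{equation*}
which, upon choosing $\epsilon$ small enough, is stronger than $(1+\tau)^{-2-\eta}$ for some $\eta>0$ and hence verifies hypothesis \eqref{addasmforp5I} with $n=0$ and $\mathcal{E}_\eta \lesssim E^{\epsilon}_{0,\mathcal{I}}[\psi]$. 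Consequently Proposition \ref{prop:rpestell01} yields the $p=5-\epsilon$ estimate
\begin{equation*}
\int_{{N}^{\mathcal{I}}_{\tau_2}} r^{5-\epsilon}(L\phi_0)^2\,d\omega dv + \int_{\tau_1}^{\tau_2}\!\int_{{N}^{\mathcal{I}}_{\tau}}r^{4-\epsilon}(L\phi_0)^2\,d\omega dv d\tau \;\leq\; C\, E^{\epsilon}_{0,\mathcal{I}}[\psi].
\end{equation*}

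Next I would run the dyadic pigeonhole argument exactly as in Proposition \ref{prop:edaypsi0}, but now starting from the $p=5-\epsilon$ estimate and descending through $p=4-\epsilon$, $p=3-\epsilon$, $p=2-\epsilon$. At each step the mean-value theorem on dyadic intervals $[\tau_i,\tau_{i+1}]$ with $\tau_{i+1}\sim\tau_i$, combined with the interpolation inequalities of Lemma \ref{lm:interpolation} to absorb the $\epsilon$-loss in the $r$-weights as an $\epsilon$-loss in $\tau$, converts each $p$-estimate into decay with rate $(1+\tau)^{-(p-1)+\epsilon}$ for the flux carrying the $(p-1)$-weight in $r$. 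Iterating four times (i.e.~going from $p=5-\epsilon$ down to $p=2-\epsilon$) produces the decay
\begin{equation*}
\int_{{N}^{\mathcal{I}}_{\tau}}r^{2-\epsilon}(L\phi_0)^2\,d\omega dv \;\lesssim\; E^{\epsilon}_{0,\mathcal{I}}[\psi]\cdot(1+\tau)^{-3+\epsilon},
\end{equation*}
which is \eqref{eq:optr2decay1}. The estimate \eqref{eq:optr2decay2} is then obtained by an entirely analogous argument using \eqref{eq:auxpointdecayl02} to verify \eqref{addasmforp5H}, followed by descending through the $(r-M)^{-p}$ hierarchy with $p=5-\epsilon,\ldots,2-\epsilon$ from \eqref{eq:rpeshorp5} in $\mathcal{A}^{\mathcal{H}}$.

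The main technical point is the bootstrap: the hypotheses \eqref{addasmforp5I}/\eqref{addasmforp5H} of Proposition \ref{prop:rpestell01} require a \emph{faster-than-$\tau^{-2}$} pointwise decay of $\phi_0$, which is \emph{not} available from Proposition \ref{prop:edaypsi0} alone (that only gives $\tau^{-3/2+\epsilon}$ at the horizon and at infinity via Sobolev on ${N}^{\mathcal{H}}_\tau$ or ${N}^{\mathcal{I}}_\tau$ together with $r^1$ Hardy estimates). The whole point of Lemma \ref{lm:auxedaypsi0} is to gain the extra half power: it upgrades the $p=3-\epsilon$ hierarchy of Proposition \ref{prop:edaypsi0} by one extra step (to $p=4-\epsilon$) on a \emph{single} side, which is just enough to close the bootstrap and thereby activate the full $p<5$ range. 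Once the bootstrap is closed, the rest of the argument is routine dyadic mean-value plus interpolation, and the constants in the final estimate depend only on $M$, $\Sigma_0$, $\epsilon$ and the stated initial data norms $E^{\epsilon}_{0,\mathcal{I}}[\psi]$ or $E^{\epsilon}_{0,\mathcal{H}}[\psi]$.
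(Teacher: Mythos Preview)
Your proposal is correct and follows essentially the same approach as the paper: use the $L^{\infty}$ estimates of Lemma \ref{lm:auxedaypsi0} to verify the auxiliary decay hypotheses \eqref{addasmforp5I}/\eqref{addasmforp5H}, thereby unlocking the $p=5-\epsilon$ level of Proposition \ref{prop:rpestell01}, and then run the dyadic mean-value argument from Proposition \ref{prop:edaypsi0} down the extended hierarchy. One small citation slip: the $p<5$ estimates you want are \eqref{eq:rpestinfp5} and \eqref{eq:rpeshorp5k} in Proposition \ref{prop:rpestell01} (the $k=0$ case), not the $L^k$-commuted versions \eqref{eq:rpestinfp5k}--\eqref{eq:rpeshorp5} in Proposition \ref{prop:rpestell01Lkpsi}; this does not affect the argument.
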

\begin{proof}
We repeat the proof of Lemma \ref{lm:auxedaypsi0}, but we \underline{additionally} apply either \eqref{eq:rpestinftilde} or \eqref{eq:rpeshortilde} with $n=0$ and $p=5-\epsilon$, using the $L^{\infty}$ estimates \eqref{eq:rpestinftilde} or \eqref{eq:rpeshortilde}.
\end{proof}

\begin{proposition}
\label{prop:edaypsi1}
Assume that for $n=0,1$ and $0\leq k \leq 2-n$:
\begin{equation*}
\lim_{v\to \infty} |\snabla_{\s^2}\slashed{\Delta}_{\s^2}^kP_{\geq 1}\Phi_{(n)}|^2\,d\omega<\infty.
\end{equation*}
Then, for all $\epsilon>0$ there exists a constant $C=C(M,{\Sigma_0},\epsilon)>0$ such that
\begin{align}
\label{eq:edecayl11}
\int_{\Sigma_{\tau}} J^T[ \psi_{\geq 1}]\cdot \mathbf{n}_{\tau}\,d\mu_{\tau}\leq &\:C\cdot E^{\epsilon}_{1}[\psi]  (1+\tau)^{-5+\epsilon},\\
\label{eq:edecaylr211}
\int_{{N}^{\mathcal{I}}_{\tau}} r^2(L \phi_{\geq 1})^2\,d\omega dv+\int_{{N}^{\mathcal{H}}_{\tau}} (r-M)^{-2}(\underline{L} \phi_{\geq 1})^2\,d\omega du \leq &\: C\cdot E^{\epsilon}_{1}[\psi]  (1+\tau)^{-3+\epsilon},
\end{align}
with
\begin{equation*}
\begin{split}
E^{\epsilon}_{1}[\psi]:=&\: \int_{{N}^{\mathcal{I}}} r^{1-\epsilon}(LP_{\geq 1}\Phi_{(2)})^2+\sum_{n=0}^1 \sum_{m=0}^{3-2n} \int_{{N}^{\mathcal{I}}} r^{2}(LT^mP_{\geq 1}\Phi_{(n)})^2+r^{1}(LT^{1+m}P_{\geq 1}\Phi_{(n)})^2\,d\omega dv\\
&+\int_{{N}^{\mathcal{H}}}(r-M)^{-1+\epsilon}(\underline{L}P_{\geq 1}\underline{\Phi}_{(2)})^2 +\sum_{n=0}^1 \sum_{m=0}^{3-2n} \int_{{N}^{\mathcal{H}}} (r-M)^{-2}(\underline{L}T^mP_{\geq 1}\underline{\Phi}_{(n)})^2\\
&+(r-M)^{-1}(\underline{L}T^{1+m}P_{\geq 1}\underline{\Phi}_{(n)})^2\,d\omega du+\sum_{m=0}^{5}\int_{\Sigma_0} J^T[T^m\psi_{\geq 1}]\cdot \mathbf{n}_{\Sigma_0}\,d\mu_{\Sigma_0}.
\end{split}
\end{equation*}
\end{proposition}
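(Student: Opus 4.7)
The strategy is to adapt the dyadic mean-value iteration of Proposition \ref{prop:edaypsi0} to the commuted setting, splitting $\psi_{\geq 1} = P_1\psi + P_{\geq 2}\psi$ because the maximal number of times we can commute with $2D^{-1}r^2 L$ (or its $\mathcal{A}^{\mathcal{H}}$-analogue) differs between the two pieces. For $P_{\geq 2}\psi$ we apply Proposition \ref{prop:generalrpest} at $n=0,1,2$, yielding six $r^p$-weighted estimates (at $p=1,2$ for each $n$) in both the $\mathcal{I}$-localized and $\mathcal{H}$-localized hierarchies. For $P_1\psi$ we combine Proposition \ref{prop:generalrpest} at $n=0$ with the extended identity of Lemma \ref{lm:rweightidentitiess2v2} as packaged in Proposition \ref{prop:rpestell01} at $n=1$, which allows us to take $p$ slightly less than $3$ with an $\epsilon$-loss; the resulting total hierarchies have length $5$, and this is the bottleneck determining the decay rate $\tau^{-5+\epsilon}$ in \eqref{eq:edecayl11}.

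The iteration itself mirrors the proof of Proposition \ref{prop:edaypsi0}. Starting from $T$-flux boundedness and the Morawetz estimate \eqref{eq:iledwayps}, we apply at each step the next available $r^p$-weighted estimate from the total hierarchy, convert the resulting spacetime bound into pointwise-in-$\tau$ decay via the mean value theorem on dyadic intervals $[\tau_i,\tau_{i+1}]$ with $\tau_i \sim 2^i$, and thereby improve the decay rate of the $T$-flux by one additional power of $\tau^{-1}$. To relate the $r^p$-weighted integrals on the null pieces $\mathcal{N}^{\mathcal{I}}_\tau$ and $\mathcal{N}^{\mathcal{H}}_\tau$ back to integrals of $J^T$ across $\Sigma_\tau$ we use the Hardy inequalities \eqref{eq:hardyinf}--\eqref{eq:hardyhor1} together with the interpolation estimates of Lemma \ref{lm:interpolation}, which are also what accounts for the $\epsilon$-loss when the top weight $r^{1-\epsilon}$ is involved. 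Because the initial data norm on the right-hand side of each step of the iteration involves $T$-commuted quantities, at each step we must also invoke the hierarchy applied to $T^k\psi_{\geq 1}$; for commutations at levels $n=0$ and $n=1$ we use Proposition \ref{prop:generalrpestLkpsi} together with Lemma \ref{lm:intestTpsi} to convert $r$-weighted integrals of $L^{k-1}T\Phi_{(n)}$ into $r$-weighted integrals of $L^{k}\Phi_{(n)}$ modulo lower-order terms. This is precisely why $E^{\epsilon}_1[\psi]$ collects $T$-derivatives up to order $5$ in the $T$-flux and up to lower orders with progressively larger $r$-weights as $n$ increases.

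The main technical obstacle is the control of the coupling terms in the commuted equations \eqref{eq:maincommeqI}--\eqref{eq:maincommeqH}, whose right-hand sides contain source terms $\Phi_{(k)}$ with $k<n$ weighted by $r^{p-2}$ (respectively $(r-M)^{2-p}$). At each pigeonhole step these must be reabsorbed either using the already-established decay for the lower-$n$ iterates, or, at the top of the hierarchy where no such bound is available yet, via Young's inequality into the good bulk terms of the same hierarchy, exactly as in the proof of Proposition \ref{prop:generalrpest} (see in particular the analysis of $J_3$ there). A parallel issue arises in replacing $|\slashed{\nabla}_{\s^2}\Phi_{(n)}|^2$ by a pointwise non-negative quantity inside the bulk: this is where the restriction to $P_{\geq 1}$ (with the extra loss encoded in $r^{1-\epsilon}$ at $n=2$) is essential, since the Poincar\'e inequality \eqref{eq:poincareineq} only guarantees non-negativity of $|\slashed{\nabla}_{\s^2}\Phi_{(n)}|^2 - n(n+1)\Phi_{(n)}^2$ on the range of $P_{\geq n}$; the $P_{\geq 2}$ part absorbs this cleanly and the $P_1$ part is handled by Lemma \ref{lm:rweightidentitiess2v2}.

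Finally, the second estimate \eqref{eq:edecaylr211} follows by stopping the iteration two steps earlier: once the conformal fluxes $\int_{\mathcal{N}^{\mathcal{I}}_\tau} r^2 (L\phi_{\geq 1})^2$ and $\int_{\mathcal{N}^{\mathcal{H}}_\tau} (r-M)^{-2}(\underline{L}\phi_{\geq 1})^2$ have been shown to decay at rate $\tau^{-3+\epsilon}$ via the top two levels of each hierarchy, we reapply the $p=2$ estimate of Proposition \ref{prop:generalrpest} (respectively of Proposition \ref{prop:rpestell01}) on a dyadic interval and use the $T$-energy decay \eqref{eq:edecayl11} together with Morawetz to obtain the stated bound; the general rule ``conformal-flux decay rate $=$ total-hierarchy length $-2-\epsilon$'' then gives exactly the exponent in \eqref{eq:edecaylr211}.
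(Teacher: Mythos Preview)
Your proposal is essentially correct and follows the paper's strategy: dyadic pigeonhole iteration driven by the commuted $r^p$- and $(r-M)^{-p}$-hierarchies, with the $\ell=1$ piece as the bottleneck fixing the total length at five. Two points of departure are worth noting.

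First, your invocation of Proposition~\ref{prop:generalrpestLkpsi} and Lemma~\ref{lm:intestTpsi} is unnecessary here and does not match the stated norm. The $T$-commuted weights $r^2(LT^m\Phi_{(n)})^2$ in $E^{\epsilon}_1$ arise not from converting $L^{k-1}T$-data into $L^k$-data, but simply because (i) each pigeonhole step in the region $\mathcal{B}$ containing the photon sphere invokes the Morawetz estimate \eqref{eq:iledlossder} with its loss of one $T$-derivative, and (ii) the right-hand side of the $n$-commuted estimate in Proposition~\ref{prop:generalrpest} already carries $\sum_{k\leq n}J^T[T^k\psi]$. One then applies the \emph{same} hierarchies to $T^m\psi$ directly, using only that $T$ is Killing; the extended $L^k$-hierarchies are reserved for Proposition~\ref{prop:hoedaypsi1}. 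If you had actually converted to $L^k$-weighted data, the resulting initial norm would not be $E^{\epsilon}_1$.

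Second, the paper handles the $P_1$ contribution differently from your outline: rather than bypassing $n=2$ for $P_1$ altogether, it runs the $n=2$ estimate for all of $P_{\geq 1}$ at $p=1-\epsilon$, absorbing the wrong-sign bulk term $(2-p)r^{p-3}(P_1\Phi_{(2)})^2$ via a Hardy inequality into $r^{p+1}(LP_1\widetilde{\Phi}_{(1)})^2$ and then invoking Proposition~\ref{prop:rpestell01} with $n=1$, $p<3$ to control that. Your alternative---building the $P_1$ hierarchy directly from $n=0$ plus the $\widetilde{\Phi}_{(1)}$-hierarchy up to $p<3$---also yields length five and a norm bounded by $E^{\epsilon}_1$ (since $r^{3-\epsilon}(L\widetilde{\Phi}_{(1)})^2 \lesssim r^{-1-\epsilon}\Phi_{(2)}^2$, which Hardy controls by $r^{1-\epsilon}(L\Phi_{(2)})^2$), so both routes are valid.
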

\begin{proof}
In order to prove \eqref{eq:edecayl01} we follow a similar strategy to the proof of Proposition \ref{prop:edaypsi0}: we apply the mean value theorem on dyadic intervals. However, in this case we appeal to the hierarchies of $r$-weighted estimates in Proposition \ref{prop:generalrpest}, where we take $n=0,1,2$ and we \emph{relate} the $r$-weighted estimates at different $n$ via the following estimate: for $p<4$, we have that
\begin{equation*}
\begin{split}
\int_{u_1}^{u_2}\int_{{N}^{\mathcal{I}}_{u}} r^{p-1} (L \chi \Phi_{(n)})^2\,d\omega dvdu\lesssim&\: \int_{u_1}^{u_2}\int_{{N}^{\mathcal{I}}_{u}} r^{p-5}  \chi^2 (\Phi_{(n+1)})^2\,d\omega dv+ \sum_{k=0}^n \int_{\Sigma_{u_1}}\int_{\Sigma_{u_1}} J^T[T^k\psi ]\cdot \mathbf{n}_{\Sigma_{u_1}}\,d\mu_{\Sigma_{u_1}}\\
\lesssim&\: \int_{u_1}^{u_2}\int_{{N}^{\mathcal{I}}_{u}} r^{p-3}  (L\chi \Phi_{(n+1)})^2\,d\omega dvdu+ \sum_{k=0}^n \int_{\Sigma_{u_1}}\int_{\Sigma_{u_1}} J^T[T^k\psi ]\cdot \mathbf{n}_{\Sigma_{u_1}}\,d\mu_{\Sigma_{u_1}},
\end{split}
\end{equation*}
where $\chi$ is the cut-off function that appears in the estimates in $\mathcal{A}^{\mathcal{I}}$ in the proof of Proposition  \ref{prop:generalrpest} and the first inequality on the right-hand side above follows from the Morawetz inequality \eqref{eq:iledwayps}, whereas the second inequality follows from the Hardy inequality \eqref{eq:hardyinf}. We can similarly estimate for $p<4$:
\begin{equation*}
\begin{split}
\int_{v_1}^{v_2}\int_{{N}^{\mathcal{H}}_{v}} (r-M)^{-p+1} (\underline{L} \chi \underline{\Phi}_{(n)})^2\,d\omega dudv\lesssim&\: \int_{v_1}^{v_2}\int_{{N}^{\mathcal{H}}_{v}} (r-M)^{-p+3}   (\underline{L} \chi \underline{\Phi}_{(n+1)})^2\,d\omega dudv\\
&+ \sum_{k=0}^n \int_{\Sigma_{u_1}}\int_{\Sigma_{u_1}} J^T[T^k\psi ]\cdot \mathbf{n}_{\Sigma_{u_1}}\,d\mu_{\Sigma_{u_1}},
\end{split}
\end{equation*}
where $\chi$ here denotes the cut-off function that appears in the estimates in $\mathcal{A}^{\mathcal{H}}$ in the proof of Proposition \ref{prop:generalrpest}.

In the $n=2$ case, the left-hand side of the $r$-weighted estimates \eqref{eq:rpestinf} and \eqref{eq:rpesthor} are not positive definite due to the presence of the spacetime integrals of the terms
\begin{align*}
(2-p)&r^{p-3}\left[|\snabla_{\s^2}P_{\geq 1} \Phi_{(2)}|^2-6(P_{\geq 1} \Phi_{(2)})^2\right]\quad\textnormal{and}\\
(2-p)&(r-M)^{-p+3}\left[|\snabla_{\s^2}P_{\geq 1} \underline{\Phi}_{(2)}|^2-6(P_{\geq 1} \underline{\Phi}_{(2)})^2\right].
\end{align*}
Note however that, after applying the Poincar\'e inequality it follows that the above terms \emph{are} positive definite for $P_{\geq 2}\Phi_{(2)}$. To deal with the $\ell=1$ case, we instead use that for $p<2$:
\begin{equation*}
\begin{split}
\int_{u_1}^{u_2}\int_{{N}^{\mathcal{I}}_{u}} r^{p-3}\chi^2(P_{1}\Phi_{(2)})^2\,d\omega dv du \lesssim&\: \int_{u_1}^{u_2}\int_{{N}^{\mathcal{I}}_{u}} r^{p+1}(L\chi P_{1} \widetilde{\Phi}_{(1)})^2+r^{p-3}\chi^2{\Phi}_{(1)}^2\,d\omega dv du\\
&+\sum_{k=0}^1 \int_{\Sigma_{u_1}}\int_{\Sigma_{u_1}} J^T[T^k\psi ]\cdot \mathbf{n}_{\Sigma_{u_1}}\,d\mu_{\Sigma_{u_1}}\\
\lesssim&\:  \int_{u_1}^{u_2}\int_{{N}^{\mathcal{I}}_{u}} r^{p+1}(L\chi P_{1} \widetilde{\Phi}_{(1)})^2+r^{p-1}\chi^2(LP_{1} \Phi_{(1)})^2\,d\omega dv\\
&+\sum_{k=0}^1 \int_{\Sigma_{u_1}}\int_{\Sigma_{u_1}} J^T[T^k\psi ]\cdot \mathbf{n}_{\Sigma_{u_1}}\,d\mu_{\Sigma_{u_1}}\\
\lesssim&\:  \int_{u_1}^{u_2}\int_{{N}^{\mathcal{I}}_{u}} r^{p+1}(L\chi P_{1} \widetilde{\Phi}_{(1)})^2+r^{p-5}\chi^2(P_{1}\Phi_{(2)})^2\,d\omega dv du\\
&+\sum_{k=0}^1 \int_{\Sigma_{u_1}}\int_{\Sigma_{u_1}} J^T[T^k\psi ]\cdot \mathbf{n}_{\Sigma_{u_1}}\,d\mu_{\Sigma_{u_1}},
\end{split}
\end{equation*}
where we made use of \eqref{eq:iledwayps} and \eqref{eq:hardyinf}. Note that the second term on the very right-hand side above can be absorbed into the left-hand side for sufficiently large $r_{\mathcal{I}}$, similarly, for $r_{\mathcal{H}}-M$ suitably small, we have that for $p<2$:
\begin{equation*}
\begin{split}
\int_{v_1}^{v_2}\int_{{N}^{\mathcal{H}}_{v}} (r-M)^{-p+3}\chi^2(P_{1}\underline{\Phi}_{(2)})^2\,d\omega du dv 
\lesssim&\:  \int_{v_1}^{v_2}\int_{{N}^{\mathcal{H}}_{v}} (r-M)^{-p-1}(L\chi P_{1} \widetilde{\underline{\Phi}}_{(1)})^2\,d\omega dudv\\
&+\sum_{k=0}^1 \int_{\Sigma_{v_1}}\int_{\Sigma_{v_1}} J^T[T^k\psi ]\cdot \mathbf{n}_{\Sigma_{v_1}}\,d\mu_{\Sigma_{v_1}}.
\end{split}
\end{equation*}
The above estimate therefore allow us to use the $r$-weighted estimates in Proposition \ref{prop:rpestell01} with $n=1$ and $p<3$ to estimate the integrals of $r^{p-3}\chi^2(P_{1}\Phi_{(2)})^2$ and $(r-M)^{-p+3}\chi^2(P_{1}\underline{\Phi}_{(2)})^2$ \emph{first} and then combine these with the $n=2$ estimates in Proposition \ref{prop:generalrpest} for $p<1$.

Finally, we note that in order to estimate a \emph{global} integrated energy, we moreover apply the Morawetz estimate \eqref{eq:iledlossder} which has a loss of $T$-derivatives on the right-hand side, and therefore the initial data norms that appear on the right-hand side of the time decay estimates will have \emph{additional} $T$-derivatives.
\end{proof}

We will also need the following energy decay estimate that involves higher-order weights in $r$ or $(r-M)^{-1}$ in the initial data norms.
\begin{proposition}
\label{prop:edaypsi1b}
Assume that for $n=0,1$ and $0\leq k \leq 2-n$:
\begin{equation*}
\lim_{v\to \infty} |\snabla_{\s^2}\slashed{\Delta}_{\s^2}^kP_{\geq 1}\Phi_{(n)}|^2\,d\omega<\infty.
\end{equation*}
Then, for all $\epsilon>0$ there exists a constant $C=C(M,{\Sigma_0},\epsilon)>0$ such that
\begin{align}
\label{eq:extrarweightsl1inf}
\int_{{N}^{\mathcal{I}}_{\tau}}& r^2(L \phi_{\geq 1})^2\,d\omega dv \leq  C\cdot  E^{\epsilon}_{1,\mathcal{I}}[\psi] (1+\tau)^{-4+\epsilon},\\
\label{eq:extrarweightsl1hor}
\int_{{N}^{\mathcal{H}}_{\tau}}& (r-M)^{-2}(\underline{L} \phi_{\geq 1})^2\,d\omega du \leq   C\cdot E^{\epsilon}_{1,\mathcal{H}}[\psi](1+\tau)^{-4+\epsilon},\\
\label{eq:edecayl11b}
\int_{\Sigma_{\tau}}& J^T[ \psi_{\geq 1}]\cdot \mathbf{n}_{\tau}\,d\mu_{\tau}\leq  C\cdot{E}^{\epsilon}_{2}[\psi]  (1+\tau)^{-6+\epsilon}.
\end{align}
with
\begin{align*}
E^{\epsilon}_{1,\mathcal{I}}[\psi]:=&\: E^{\epsilon}_{1}[\psi]+  \int_{{N}^{\mathcal{I}}_0} r^{2-\epsilon}(LP_{\geq 1}\Phi_{(2)})^2+r^{4-\epsilon}(LP_1\widetilde{\Phi}_{(1)})^2\,d\omega dv,\\
E^{\epsilon}_{1,\mathcal{H}}[\psi]:=&\: E^{\epsilon}_{1}[\psi]+ \int_{{N}^{\mathcal{H}}_0}(r-M)^{-2+\epsilon}(\underline{L}P_{\geq 1}\underline{\Phi}_{(2)})^2\,d\omega du,\\
{E}^{\epsilon}_{2}[\psi]:=&\: \int_{{N}^{\mathcal{I}}} r^{2-\epsilon}(LP_{\geq 1}\Phi_{(2)})^2+r^{1-\epsilon}(LTP_{\geq 1}\Phi_{(2)})^2+r^{4-\epsilon}(LP_1\widetilde{\Phi}_{(1)})^2+r^{3-\epsilon}(LTP_1\widetilde{\Phi}_{(1)})^2\,d\omega dv\\
&+ \sum_{n=0}^1 \sum_{m=0}^{4-2n} \int_{{N}^{\mathcal{I}}} r^{2}(LT^mP_{\geq 1}\Phi_{(n)})^2+r^{1}(LT^{1+m}P_{\geq 1}\Phi_{(n)})^2\,d\omega dv\\
&+\int_{{N}^{\mathcal{H}}}(r-M)^{-2+\epsilon}(\underline{L}P_{\geq 1}\underline{\Phi}_{(2)})^2+(r-M)^{-1+\epsilon}(\underline{L}TP_{\geq 1}\underline{\Phi}_{(2)})^2+(r-M)^{-4+\epsilon}(LP_1\widetilde{\underline{\Phi}}_{(1)})^2\\
&+(r-M)^{-3+\epsilon}(\underline{L}TP_1\widetilde{\underline{\Phi}}_{(1)})^2\,d\omega du\\
&+ \sum_{n=0}^1 \sum_{m=0}^{4-2n} \int_{{N}^{\mathcal{H}}} (r-M)^{-2}(\underline{L}T^mP_{\geq 1}\underline{\Phi}_{(n)})^2+(r-M)^{-1}(\underline{L}T^{1+m}P_{\geq 1}\underline{\Phi}_{(n)})^2\,d\omega du\\
&+\sum_{m=0}^{6}\int_{\Sigma_0} J^T[T^m\psi_{\geq 1}]\cdot \mathbf{n}_{\Sigma_0}\,d\mu_{\Sigma_0}.
\end{align*}
\end{proposition}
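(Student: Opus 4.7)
The plan is to bootstrap from the decay rates of Proposition~\ref{prop:edaypsi1}, exploiting the additional $r^{-\epsilon}$ and $(r-M)^{-\epsilon}$ enhancements in the $\widetilde{\Phi}_{(1)}$-weights appearing in $E^\epsilon_{1,\mathcal{I}}[\psi]$, $E^\epsilon_{1,\mathcal{H}}[\psi]$ and $E^\epsilon_2[\psi]$ compared to $E^\epsilon_1[\psi]$. These enhanced weights allow us to extend the $r^p$- and $(r-M)^{-p}$-weighted hierarchies for the modified radiation fields $\widetilde{\Phi}_{(1)}$ and $\widetilde{\underline{\Phi}}_{(1)}$ to the range $p<5$ via Proposition~\ref{prop:rpestell01}, which yields one additional iteration of the dyadic-mean-value argument and hence one extra power of $\tau$-decay for each of the three estimates in the statement. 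This is the direct $\ell\geq 1$ analog of the $\ell=0$ improvement from Proposition~\ref{prop:edaypsi0} to Proposition~\ref{prop:extraendecay}.

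The first step is to verify the auxiliary decay assumptions \eqref{addasmforp5I}--\eqref{addasmforp5H} of Proposition~\ref{prop:rpestell01} with $n=1$. To this end, I would apply Proposition~\ref{prop:edaypsi1} to the angular commutations $\Omega^\alpha \psi$ for $|\alpha|\leq 2$, together with Sobolev embedding on $\mathbb{S}^2$ and the fundamental theorem of calculus along outgoing (resp.\ ingoing) null rays inside $\mathcal{A}^{\mathcal{I}}$ (resp.\ $\mathcal{A}^{\mathcal{H}}$), to obtain pointwise bounds of the form
\begin{equation*}
\sum_{k\leq 1}\int_{\mathbb{S}^2}|P_1\widetilde{\Phi}_{(k)}|^2\,d\omega\lesssim \mathcal{E}_\eta\,(1+u)^{-2-\eta}\quad\text{in }\mathcal{A}^{\mathcal{I}},
\end{equation*}
and analogously in $\mathcal{A}^{\mathcal{H}}$ with $u$ replaced by $v$. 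Crucially, this only requires the hierarchy of Proposition~\ref{prop:rpestell01} in its basic range $p<4$, which is available unconditionally, together with the $n=2$ hierarchy from Proposition~\ref{prop:generalrpest}; so no circularity arises.

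Next, I would rerun the dyadic mean-value-theorem argument of the proof of Proposition~\ref{prop:edaypsi1}, but now appending one additional step using the $p=4-\epsilon$ estimate for $P_1\widetilde{\Phi}_{(1)}$ (resp.\ $P_1\widetilde{\underline{\Phi}}_{(1)}$) provided by the enhanced Proposition~\ref{prop:rpestell01}, combined with the $p=2-\epsilon$ estimate for $P_{\geq 2}\Phi_{(2)}$ (resp.\ $P_{\geq 2}\underline{\Phi}_{(2)}$) from Proposition~\ref{prop:generalrpest} and the Hardy inequalities of Section~\ref{sec:HardyInequalities} used to trade mode projections. The interpolation Lemma~\ref{lm:interpolation} then converts the $\epsilon$-loss in the $r$-weight into an $\epsilon$-loss in the $\tau$-rate, and each dyadic step produces one additional power of $\tau^{-1}$; the extra hierarchy step yields precisely the rate $\tau^{-4+\epsilon}$ in \eqref{eq:extrarweightsl1inf} and \eqref{eq:extrarweightsl1hor}. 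For the improved $T$-energy estimate \eqref{eq:edecayl11b}, I would commute the wave equation with $T$, apply the first two steps to $T\psi$ (whose data norm is captured by $E^\epsilon_2[\psi]$) to obtain a $\tau^{-4+\epsilon}$ decay rate for the conformal energies of $T\psi$, and then combine this with the higher-order Morawetz estimate \eqref{eq:iledlossder} and the Hardy inequalities \eqref{eq:hardyinf}--\eqref{eq:hardyhor1} in a final dyadic step to control the full $T$-flux through $\Sigma_\tau$, arriving at the rate $\tau^{-6+\epsilon}$.

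The main technical obstacle will be the mode interaction present in the identities \eqref{eq:rweightidentitiess2v2I}--\eqref{eq:rweightidentitiess2v2H} for $P_1\widetilde{\Phi}_{(1)}$: the error terms on their right-hand sides couple to $P_1\phi$ (without any $\widetilde{\Phi}$ structure) with weights $r^{p-3}$ and $(r-M)^{3-p}$, and one must verify that at $p$ close to $5$ these terms can still be absorbed via Young's inequality into the positive bulk on the left-hand side, using the auxiliary pointwise decay from the first step to provide the missing $\tau^{-1-\frac{\eta}{2}}$ smallness. A secondary difficulty is the precise bookkeeping of cross-mode contributions when combining the $n=1$ extended hierarchy for $\widetilde{\Phi}_{(1)}$ with the $n=2$ basic hierarchy for $P_{\geq 2}\Phi_{(2)}$ inside a single dyadic iteration; this is handled by the same cut-off and Hardy-inequality scheme already employed in the proof of Proposition~\ref{prop:edaypsi1}, which now simply has to be carried one step further up the ladder.
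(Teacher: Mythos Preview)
Your second step contains the right idea and matches the paper: one appends a single extra rung to the hierarchy, namely the $n=2$ estimate of Proposition~\ref{prop:generalrpest} at $p=2-\epsilon$ for $P_{\geq 2}\Phi_{(2)}$ together with the $n=1$ estimate of Proposition~\ref{prop:rpestell01} at $p=4-\epsilon$ for $P_1\widetilde{\Phi}_{(1)}$ (and the analogous horizon versions), and then reruns the dyadic pigeonhole of Proposition~\ref{prop:edaypsi1}. That is precisely the paper's proof.

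However, your framing around it is off in two places. First, your ``Step 1'' and the ``main technical obstacle'' paragraph are unnecessary: the value $p=4-\epsilon$ you actually use lies in the \emph{basic} range $p\in(-4,4)$ of Proposition~\ref{prop:rpestell01}, so the auxiliary pointwise assumptions \eqref{addasmforp5I}--\eqref{addasmforp5H} never enter and there is nothing to verify. The analogy you draw with the $\ell=0$ passage from Proposition~\ref{prop:edaypsi0} to Proposition~\ref{prop:extraendecay} (which does require the $p<5$ bootstrap via Lemma~\ref{lm:auxedaypsi0}) is misleading here; for $\ell\geq 1$ the extra rung is obtained for free from the unconditional estimates.

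Second, your route to \eqref{eq:edecayl11b} via commuting with $T$ and controlling the conformal energies of $T\psi$ does not close as written: decay of the conformal flux of $T\psi$ does not by itself yield decay of the $T$-energy of $\psi$. The paper instead descends the \emph{same} extended (length-6) hierarchy for $\psi$ all the way to the $T$-energy; the conformal-energy estimates \eqref{eq:extrarweightsl1inf}--\eqref{eq:extrarweightsl1hor} come out as byproducts along the way, requiring only the weaker one-sided norms $E^{\epsilon}_{1,\mathcal{I}}$ and $E^{\epsilon}_{1,\mathcal{H}}$. The $T$-commuted terms you see in $E^{\epsilon}_2[\psi]$ arise from the derivative loss in the Morawetz estimate \eqref{eq:iledlossder} at each dyadic step, not from applying the hierarchy to $T\psi$.
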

\begin{proof}
In order to prove \eqref{eq:edecayl11b} we proceed as in the proof of Proposition \ref{prop:edaypsi1}, but we consider additionally the $r$-weighted estimates in Proposition \ref{prop:generalrpest}  for $n=2$ with $p=2-\epsilon$ and the $r$-weighted estimates in Proposition \ref{prop:rpestell01} for $n=1$ with $p=4-\epsilon$, resulting in an additional power in the energy decay rate.

In the process of deriving \eqref{eq:edecayl11b}, we also obtain \eqref{eq:extrarweightsl1inf} and \ref{eq:extrarweightsl1hor}, but we require a weaker energy norm on the right-hand side. That is to say, the energy norm will contain higher powers in \underline{\emph{either}} r \underline{\emph{or}} $(r-M)^{-1}$ on the right hand side (and not both, as in the norm ${E}^{\epsilon}_{2}[\psi]$).
\end{proof}

\subsection{Improved energy decay estimates for time derivatives}
\label{sec:hoedecayest}
In this section, we obtain improved decay estimates for the time-derivatives $T^J\psi$.
\begin{proposition}
\label{prop:hoedaypsi0}
Let $J\in \N_0$ and assume that for all $0\leq j\leq J-1$,
\begin{align*}
\lim_{v\to \infty} r^{j+2} L^{j+1}\phi_0(u_0,v)<&\infty.
\end{align*}
Then, for all $\epsilon>0$, there exists a constant $C=C(M,{\Sigma_0},\epsilon,J)>0$ such that
\begin{align}
\label{eq:hoedecayl01a}
\int_{\Sigma_{\tau}}& J^T[ T^J\psi_{0}]\cdot \mathbf{n}_{\tau}\,d\mu_{\tau}\leq C\cdot E^{\epsilon}_{0; J}[\psi]  (1+\tau)^{-3-2J+\epsilon}, \\
\label{eq:hor2edecayl01}
\sum_{\substack{j_1+j_2=J\\ j_1\geq 0, j_2\geq 0}}\int_{{N}^{\mathcal{I}}_{\tau}}& r^{2+2j_1}\cdot (L^{1+j_1}T^{j_2}\phi_0)^2\,d\omega dv+\int_{{N}^{\mathcal{H}}_{\tau}} (r-M)^{-2-2j_2}\cdot (\underline{L}^{j_1+1} T^{j_2}\phi_0)^2\,d\omega du\\ \nonumber
\leq&\: C\cdot E^{\epsilon}_{0;J}[\psi]  (1+\tau)^{-1-2j_2+\epsilon}.
\end{align}
with
\begin{equation*}
E^{\epsilon}_{0;J}[\psi]:= \int_{{N}^{\mathcal{I}}} r^{3+2J-\epsilon}(L^{1+J}\phi_0)^2\,d\omega dv+\int_{{N}^{\mathcal{H}}} (r-M)^{-3-2J+\epsilon}(\underline{L}^{1+J}\phi_0)^2\,d\omega du+\sum_{j=0}^J\int_{\Sigma_0} J^T[T^j\psi_0]\cdot \mathbf{n}_{\Sigma_0}\,d\mu_{\Sigma_0}.
\end{equation*}
\end{proposition}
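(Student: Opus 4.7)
My plan is to proceed by induction on $J$, treating Proposition \ref{prop:edaypsi0} as the base case $J=0$. For the inductive step, the strategy mirrors the proof of Proposition \ref{prop:edaypsi0} but uses the higher-order hierarchies from Proposition \ref{prop:rpestell01Lkpsi} for $L^{J+1}\phi_0$ and $\underline{L}^{J+1}\phi_0$ in place of the $k=0$ hierarchy, and invokes Lemma \ref{lm:intestTpsi} to transfer the resulting decay of $L^{k}$- and $\underline{L}^k$-weighted quantities back to $T^J\psi_0$. Concretely, assume the statement holds for all $0\leq J'\leq J-1$. Then the higher-order auxiliary pointwise decay bounds (which can be derived exactly as in Lemma \ref{lm:auxedaypsi0} but for $L^j\phi_0$ and $\underline{L}^j\phi_0$ with $j\leq J$) verify the decay hypotheses \eqref{eq:hoauxdecayassm1} and \eqref{eq:hoauxdecayassm2} needed to access the full range $2J < p < 5+2J$ in Proposition \ref{prop:rpestell01Lkpsi}; the limiting assumption on $r^{j+2}L^{j+1}\phi_0$ in the hypothesis of the proposition guarantees \eqref{eq:asmLjphi_01}.

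With the hierarchies \eqref{eq:rpestinfp5k} and \eqref{eq:rpeshorp5} for $L^{J+1}\phi_0$ and $\underline{L}^{J+1}\phi_0$ available for $p$ up to $5+2J-\epsilon$, I apply the dyadic mean-value argument exactly as in the proof of Proposition \ref{prop:edaypsi0}: stepping down from $p=5+2J-\epsilon$ to $p=1+2J$ in integer (or near-integer) increments yields decay of the fluxes
\[
\int_{{N}^{\mathcal{I}}_{\tau}} r^{2+2J}(L^{1+J}\phi_0)^2\,d\omega dv,\qquad \int_{{N}^{\mathcal{H}}_{\tau}} (r-M)^{-2-2J}(\underline{L}^{1+J}\phi_0)^2\,d\omega du
\]
at the rate $\tau^{-1+\epsilon}$, with intermediate powers $p=2+2J-\epsilon,\, 3+2J-\epsilon$ providing respectively decay rates $\tau^{-2+\epsilon},\ \tau^{-3+\epsilon}$ for these fluxes; the $\epsilon$-losses are absorbed via Lemma \ref{lm:interpolation} as in Proposition \ref{prop:edaypsi0}.

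Next, I convert these $L$- and $\underline{L}$-weighted fluxes into $T$-derivative fluxes. Applying Lemma \ref{lm:intestTpsi}, specifically \eqref{eq:intestTpsiinfl0} and \eqref{eq:intestTpsihorl0} inductively, I obtain for each $0\leq j_2\leq J$ the estimate
\[
\int_{{N}^{\mathcal{I}}_{\tau}} r^{2+2j_1}(L^{1+j_1}T^{j_2}\phi_0)^2\,d\omega dv+\int_{{N}^{\mathcal{H}}_{\tau}} (r-M)^{-2-2j_1}(\underline{L}^{1+j_1}T^{j_2}\phi_0)^2\,d\omega du\lesssim E^{\epsilon}_{0;J}[\psi]\cdot (1+\tau)^{-1-2j_2+\epsilon},
\]
with $j_1=J-j_2$, where on the right-hand side the energy fluxes of lower-order $T$-derivatives that appear in Lemma \ref{lm:intestTpsi} are controlled using the inductive hypothesis. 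This yields \eqref{eq:hor2edecayl01}. Finally, \eqref{eq:hoedecayl01a} follows from the case $j_2=J$ above by applying the Morawetz estimate \eqref{eq:iledwayps}, the $T$-energy boundedness \eqref{eq:degenbound} (used to replace the dyadic time $\tau_i^{(1)}$ by arbitrary $\tau$) and the Hardy inequalities \eqref{eq:hardyinf}, \eqref{eq:hardyhor1} in the region $\mathcal{B}$, exactly as in the very first step of the proof of Proposition \ref{prop:edaypsi0}.

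The main obstacle is bookkeeping: one must ensure at every stage of the dyadic iteration that the intermediate energies appearing on the right-hand side of \eqref{eq:rpestinfp5k}--\eqref{eq:rpeshorp5} and \eqref{eq:intestTpsiinfl0}--\eqref{eq:intestTpsihorl0} either belong to the norm $E^{\epsilon}_{0;J}[\psi]$ or are controlled by the inductive hypothesis applied to $T^{J'}\psi$ with $J'<J$; and that the auxiliary decay assumptions \eqref{eq:hoauxdecayassm1}--\eqref{eq:hoauxdecayassm2} required to unlock the $p<5+2J$ range of the hierarchy can indeed be derived from the preceding (weaker) step of the iteration via a fundamental theorem of calculus argument, rather than assumed a priori. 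This is the extremal-case analog of the sub-extremal argument of \cite{paper1}; the only genuinely new feature here is that the $\mathcal{A}^{\mathcal{H}}$ hierarchy \eqref{eq:rpeshorp5} is fully dual to the $\mathcal{A}^{\mathcal{I}}$ hierarchy and must be iterated in parallel, rather than being handled by a red-shift commutation.
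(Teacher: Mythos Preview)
Your overall strategy is essentially the paper's: apply the dyadic mean-value argument of Proposition~\ref{prop:edaypsi0} to $T^J\psi_0$, extend the $r^p$- and $(r-M)^{-p}$-hierarchies by the higher-order estimates of Proposition~\ref{prop:rpestell01Lkpsi}, and link successive levels through Lemma~\ref{lm:intestTpsi} (specifically \eqref{eq:intestTpsiinfl0} and \eqref{eq:intestTpsihorl0}).

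There is, however, a bookkeeping confusion that matters for the stated norm. You invoke the extended range $p<5+2J$ via \eqref{eq:rpestinfp5k}--\eqref{eq:rpeshorp5}, which requires the auxiliary pointwise bounds \eqref{eq:hoauxdecayassm1}--\eqref{eq:hoauxdecayassm2}. To close those (as in Lemma~\ref{lm:hoauxedaypsi0}) you would need the weights $r^{4+2J-\epsilon}$ and $(r-M)^{-4-2J+\epsilon}$ on the initial data, which are \emph{not} part of $E^{\epsilon}_{0;J}[\psi]$; that norm only carries $r^{3+2J-\epsilon}$ and $(r-M)^{-3-2J+\epsilon}$. So the route through the $p<5+2J$ range would prove the decay rate but with a strictly stronger right-hand side than the proposition asserts. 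The paper avoids this: it only uses the range $p\in(2J,3+2J-\epsilon]$ at the top level (well inside $p<4+2J$, needing only \eqref{eq:asmLjphi_01}), and the links via \eqref{eq:intestTpsiinfl0}--\eqref{eq:intestTpsihorl0} are applied at $p=2k+1,\,2k+2<3+2k$, so again only \eqref{eq:asmLjphi_01} is required. In fact, the rates you then quote (``$\int r^{2+2J}(L^{J+1}\phi_0)^2\lesssim \tau^{-1+\epsilon}$'') are consistent with a single step down from $p=3+2J-\epsilon$, not from $5+2J-\epsilon$, so your computation implicitly uses the correct range even though your prose does not. Drop the references to \eqref{eq:rpestinfp5k}--\eqref{eq:rpeshorp5} and the auxiliary assumptions; they belong to Proposition~\ref{prop:hoextraendecay}, not here.
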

\begin{proof}
We follow the steps in the proof of Proposition \ref{prop:edaypsi0}, applied to $T^J\psi$ instead of $\psi$ and we \emph{extend} the hierarchy of $r$-weighted estimates in the case $J\geq 1$ by   employing the additional $r$-weighted estimates in Proposition \ref{prop:rpestell01Lkpsi} and relating them to the original hierarchy of estimates (for $J=0$) via \eqref{eq:intestTpsihor} and \eqref{eq:intestTpsiinfl0}.
\end{proof}

In analogy with Lemma \ref{lm:auxedaypsi0}, we now obtain auxiliary decay estimates for $J\geq 1$ that will be useful when improving the estimates Proposition \ref{prop:hoedaypsi0} in the setting where either $H_0[\psi]=0$ or $I_0[\psi]=0$. 

\begin{lemma}
\label{lm:hoauxedaypsi0}
Let $J\in \N_0$ and assume that for all $0\leq j\leq J-1$,
\begin{align*}
\lim_{v\to \infty} r^{j+3} L^{j+1}\phi_0(u_0,v)<&\infty.
\end{align*}
Then, for all $\epsilon>0$, there exists a constant $C=C(M,{\Sigma_0},\epsilon,J)>0$ such that
\begin{align}
\label{eq:hoauxdecayl01}
\sum_{\substack{j_1+j_2=J\\ j_1\geq 0, j_2\geq 0}}&\int_{{N}^{\mathcal{I}}_{\tau}}r^{2+2j_1}\cdot (L^{1+j_1}T^{j_2}\phi_0)^2\,d\omega dv\\
\leq&\: C\cdot \left[E^{\epsilon}_{0;J}[\psi]+ \int_{{N}^{\mathcal{I}}_{0}} r^{4+2J-\epsilon}\cdot (L^{1+J}\phi_0)^2\,d\omega dv\right] (1+\tau)^{-2-2j_2+\epsilon},\\
\label{eq:hoauxdecayl02}
\sum_{\substack{j_1+j_2=J\\ j_1\geq 0, j_2\geq 0}}&\int_{{N}^{\mathcal{H}}_{\tau}} (r-M)^{-2-2j_2}\cdot (\underline{L}^{j_1+1} T^{j_2}\phi_0)^2\,d\omega du\\
\leq &\: C\cdot \left[E^{\epsilon}_{0;J}[\psi]+\int_{{N}^{\mathcal{H}}_{0}} (r-M)^{-4-2J+\epsilon}\cdot (\underline{L}^{1+J} \phi_0)^2\,d\omega du\right] (1+\tau)^{-2-2j_2+\epsilon}.
\end{align}
Furthermore,
\begin{align}
\label{eq:hoauxpointdecayl01}
||r^{J}L^J\phi_0||^2_{L^{\infty}({N}^{\mathcal{I}}_{\tau})}\leq&\: C\cdot \left[E^{\epsilon}_{0}[\psi]+\int_{{N}^{\mathcal{I}}_{0}} r^{4+2J-\epsilon}\cdot (L^{1+J}\phi_0)^2\,d\omega dv\right] (1+\tau)^{-\frac{5}{2}+\epsilon},\\
\label{eq:hoauxpointdecayl02}
||r^{J}L^J\phi_0||^2_{L^{\infty}({N}^{\mathcal{I}}_{\tau})}\leq &\: C\cdot \left[E^{\epsilon}_{0}[\psi]+ \int_{{N}^{\mathcal{H}}_{0}} (r-M)^{-4-2J+\epsilon}\cdot (\underline{L}^{1+J} \phi_0)^2\,d\omega du\right] (1+\tau)^{-\frac{5}{2}+\epsilon}.
\end{align}
\end{lemma}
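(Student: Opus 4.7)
The strategy mirrors that of Lemma \ref{lm:auxedaypsi0} (the $J=0$ case), but we now feed the extended hierarchies of Section \ref{sec:extendhier} into the dyadic mean-value argument. The two decay estimates \eqref{eq:hoauxdecayl01} and \eqref{eq:hoauxdecayl02} are proved separately and symmetrically, the first using the additional initial-data assumption at $\mathcal{I}^+$, the second using the analogous assumption at $\mathcal{H}^+$ (which, away from $\mathcal{I}^+$, is automatic from the smoothness of data).

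The plan for \eqref{eq:hoauxdecayl01} is as follows. The assumption $\lim_{v\to\infty} r^{j+3}L^{j+1}\phi_0(u_0,v)<\infty$ for $0\leq j\leq J-1$ is exactly what is required to apply Proposition \ref{prop:rpestell01Lkpsi} with $k=J$ and $p=4+2J-\epsilon$ on $N^{\mathcal{I}}$, yielding
\begin{equation*}
\int_{N^{\mathcal{I}}_{u_2}} r^{4+2J-\epsilon}(L^{J+1}\phi_0)^2\,d\omega dv + \int_{u_1}^{u_2}\!\int_{N^{\mathcal{I}}_u} r^{3+2J-\epsilon}(L^{J+1}\phi_0)^2\,d\omega dv\, du \leq C\bigl(E^{\epsilon}_{0;J}[\psi] + \mathcal{D}^{\epsilon}_{\mathcal{I};J}[\psi]\bigr),
\end{equation*}
where $\mathcal{D}^{\epsilon}_{\mathcal{I};J}[\psi]:=\int_{N^{\mathcal{I}}_0}r^{4+2J-\epsilon}(L^{J+1}\phi_0)^2\,d\omega dv$. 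Coupling this spacetime boundedness with the baseline decay \eqref{eq:hor2edecayl01} of Proposition \ref{prop:hoedaypsi0} (which already gives $\tau^{-1-2j_2+\epsilon}$ for $\int_{N^{\mathcal{I}}_\tau} r^{2+2j_1}(L^{1+j_1}T^{j_2}\phi_0)^2\,d\omega dv$ whenever $j_1+j_2=J$), the mean value theorem on a dyadic sequence $\tau_i=2^i$ produces a dyadic subsequence along which the desired stronger decay holds. The interpolation estimates of Lemma \ref{lm:interpolation} then transfer the $\epsilon$-loss in $r$-weights to an $\epsilon$-loss in $\tau$-decay, and propagating to general $\tau$ uses $T$-energy boundedness. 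To pass from $L^{1+J}\phi_0$ to the mixed derivatives $L^{1+j_1}T^{j_2}\phi_0$ with $j_1+j_2=J$ we invoke Lemma \ref{lm:intestTpsi}, specifically \eqref{eq:intestTpsiinfl0}, which relates the required $T$-weighted integrals to the corresponding purely $L$-differentiated integrals modulo lower-order $T$-energies that are already controlled by $E^{\epsilon}_{0;J}[\psi]$. The estimate \eqref{eq:hoauxdecayl02} follows in exactly the same way with the roles of $L$ and $\Lbar$, of $r$ and $(r-M)^{-1}$, and of $N^{\mathcal{I}}$ and $N^{\mathcal{H}}$ interchanged, using \eqref{eq:rpeshorLkpsi0} with $p=4+2J-\epsilon$ in place of \eqref{eq:rpestinfLkpsi0} and \eqref{eq:intestTpsihorl0} in place of \eqref{eq:intestTpsiinfl0}.

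For the pointwise estimates \eqref{eq:hoauxpointdecayl01}--\eqref{eq:hoauxpointdecayl02} we argue on $N^{\mathcal{I}}_\tau$ by the fundamental theorem of calculus: for any $v\geq v_{\Sigma_0}|_{r=r_{\mathcal{I}}}$,
\begin{equation*}
(r^J L^J\phi_0)^2(\tau,v,\omega) = -2\int_v^{\infty} r^J L^J\phi_0\cdot L(r^J L^J\phi_0)(\tau,v',\omega)\,dv',
\end{equation*}
provided the assumed $r^{j+3}L^{j+1}\phi_0$ finiteness propagates to all $u\geq u_0$, which follows from an analogue of Proposition \ref{prop:radfieldsinfLk} (and is in fact used implicitly in Proposition \ref{prop:rpestell01Lkpsi}). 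Expanding $L(r^J L^J\phi_0)=Jr^{J-1}(\frac{1}{2}D)L^J\phi_0+r^J L^{J+1}\phi_0$, applying Cauchy--Schwarz in $v$ and then integrating over $\s^2$, we bound the pointwise quantity by the square root of the product of $\int_{N^{\mathcal{I}}_\tau} r^{2J}(L^J\phi_0)^2\,d\omega dv$ and $\int_{N^{\mathcal{I}}_\tau} r^{2+2J}(L^{J+1}\phi_0)^2\,d\omega dv$; the first factor is $O(\tau^{-2+\epsilon})$ by the $j_2=0$, $j_1=J-1$ case of the already-proved \eqref{eq:hoauxdecayl01} (combined, if $J=0$, with \eqref{eq:auxdecayl01} applied to $\phi_0$ itself, or more simply a weighted Hardy inequality from the same energy), while the second is $O(\tau^{-3+\epsilon})$ by the $j_2=0$, $j_1=J$ case of \eqref{eq:hoauxdecayl01}; the geometric mean yields the claimed $\tau^{-5/2+\epsilon}$ decay. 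Standard Sobolev embedding on $\s^2$, after commuting with angular momentum operators $\Omega^{\alpha}$ (which commute with $T$, $L$, $\Lbar$, and with all $r$- and $(r-M)$-weights), upgrades the $L^2(\s^2)$ bound to $L^\infty$. The parallel argument based on $N^{\mathcal{H}}_\tau$ and \eqref{eq:hoauxdecayl02} yields \eqref{eq:hoauxpointdecayl02}. No step here is technically difficult; the main bookkeeping obstacle is keeping track of how the extra $r$-weight in the initial data norm propagates correctly through the dyadic-plus-interpolation mechanism to produce a clean additional factor of $\tau^{-1}$ without losing the necessary $\epsilon$ room.
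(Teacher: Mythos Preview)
Your treatment of the energy estimates \eqref{eq:hoauxdecayl01}--\eqref{eq:hoauxdecayl02} is correct and matches the paper: you add the $p=4+2J-\epsilon$ estimate from Proposition~\ref{prop:rpestell01Lkpsi} to the hierarchy of Proposition~\ref{prop:hoedaypsi0}, run the dyadic mean-value/interpolation argument, and pass between $L^{1+J}\phi_0$ and $L^{1+j_1}T^{j_2}\phi_0$ via Lemma~\ref{lm:intestTpsi}.

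The pointwise part, however, has a gap. Your Cauchy--Schwarz splitting produces the factors $\int r^{2J}(L^J\phi_0)^2$ and $\int r^{2+2J}(L^{J+1}\phi_0)^2$, and you claim decay rates $\tau^{-2+\epsilon}$ and $\tau^{-3+\epsilon}$ respectively. The index ``$j_1=J-1,\ j_2=0$'' you invoke violates $j_1+j_2=J$, and more seriously, the $j_1=J,\ j_2=0$ case of \eqref{eq:hoauxdecayl01} only gives $\int r^{2+2J}(L^{J+1}\phi_0)^2\lesssim\tau^{-2+\epsilon}$, not $\tau^{-3+\epsilon}$. With the decay rates actually available from \eqref{eq:hoauxdecayl01} alone, your product yields only $\tau^{-2+\epsilon}$, not $\tau^{-5/2+\epsilon}$.

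The paper supplies the missing piece: one descends \emph{one further step} in the $L^{J+1}$-hierarchy to obtain the auxiliary estimate
\[
\int_{N^{\mathcal{I}}_\tau} r^{1+2J}\,(L^{J+1}\phi_0)^2\,d\omega\,dv
\;\lesssim\;
\Bigl[E^{\epsilon}_{0;J}[\psi]+\int_{N^{\mathcal{I}}_0}r^{4+2J-\epsilon}(L^{1+J}\phi_0)^2\,d\omega\,dv\Bigr](1+\tau)^{-3+\epsilon}
\]
(the paper's displayed version has typos in the $r$-exponent and the $\tau$-power). The correct fundamental-theorem-of-calculus\,/\,Cauchy--Schwarz\,/\,Hardy combination, exactly as in the proof of Proposition~\ref{prop:pointdecay}, then bounds $(\chi r^J L^J\phi_0)^2$ by
\[
\sqrt{\int_{N^{\mathcal{I}}_\tau} r^{2J}(L^{J+1}\phi_0)^2\,d\omega\,dv}\cdot
\sqrt{\int_{N^{\mathcal{I}}_\tau} r^{2+2J}(L^{J+1}\phi_0)^2\,d\omega\,dv}
\]
(both factors involve $L^{J+1}\phi_0$, not $L^J\phi_0$ in the first). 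The first factor is controlled by the auxiliary estimate above ($\lesssim\tau^{-3+\epsilon}$), the second by \eqref{eq:hoauxdecayl01} ($\lesssim\tau^{-2+\epsilon}$), and their geometric mean gives the claimed $\tau^{-5/2+\epsilon}$.
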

\begin{proof}
We obtain \eqref{eq:hoauxdecayl01} and \eqref{eq:hoauxdecayl02} by adding one estimate to the hierarchies of Proposition \ref{prop:hoedaypsi0}, in analogy to the proof of  Lemma \ref{lm:auxedaypsi0}. In order to obtain \eqref{eq:hoauxpointdecayl01}, we need additionally use the estimate
\begin{equation*}
\int_{{N}^{\mathcal{I}}_{\tau}}r^{1+J}\cdot (L^{1+J}\phi_0)^2\,d\omega dv\leq C\cdot \left[E^{\epsilon}_{0;J}[\psi]+\int_{{N}^{\mathcal{I}}_{0}} r^{4+2J-\epsilon}\cdot (L^{1+J}\phi_0)^2\,d\omega dv\right] (1+\tau)^{-3-2J+\epsilon}.
\end{equation*}
In order to prove \eqref{eq:hoauxdecayl01}, we then apply the fundamental theorem of calculus (together with the Morawetz estimate \eqref{eq:iledwayps} and the Hardy inequality \eqref{eq:hardyinf}) to estimate $r^{2J} \chi^2 (L^J\Phi_0)^2$ using the estimate above together with \eqref{eq:hoauxdecayl01}.  We arrive at \eqref{eq:hoauxpointdecayl02} by applying similar arguments in the region $\mathcal{A}^{\mathcal{H}}$.
\end{proof}
                
The $L^{\infty}$ estimates in Lemma \ref{lm:hoauxedaypsi0} allow us to retrieve the assumptions \eqref{eq:hoauxdecayassm1} and \eqref{eq:hoauxdecayassm2} so that we can use the \emph{full} hierarchy of $r$-weighted estimates in Proposition \ref{prop:hoedaypsi0} to obtain the following analog of Proposition \ref{prop:extraendecay}:

\begin{proposition}
\label{prop:hoextraendecay}
Let $J\in \N_0$. Then, for all $\epsilon>0$, there exists a constant $C=C(M,{\Sigma_0},\epsilon,J)>0$ such that
\begin{align}
\label{eq:hooptr2decay1}
\sum_{\substack{j_1+j_2=J\\ j_1\geq 0, j_2\geq 0}}\int_{{N}^{\mathcal{I}}_{\tau}}r^{2+2j_1}\cdot (L^{1+j_1}T^{j_2}\phi_0)^2\,d\omega dv \leq&\: C\cdot E^{\epsilon}_{0,\mathcal{I};J}[\psi] (1+\tau)^{-3-2j_2+\epsilon},\\ 
\label{eq:hooptr2decay2}
\sum_{\substack{j_1+j_2=J\\ j_1\geq 0, j_2\geq 0}}\int_{{N}^{\mathcal{H}}_{\tau}} (r-M)^{-2-2j_2}\cdot (\underline{L}^{1+j_1} T^{j_2}\phi_0)^2\,d\omega du \leq&\:  C\cdot E^{\epsilon}_{0,\mathcal{H};J}[\psi] (1+\tau)^{-3-2j_2+\epsilon},
\end{align}
where
\begin{align*}
E^{\epsilon}_{0,\mathcal{I};J}[\psi]=:&\: E^{\epsilon}_{0;J}[\psi]+ \int_{{N}^{\mathcal{I}}_{0}} r^{5+2J-\epsilon}\cdot (L^{1+J}\phi_0)^2\,d\omega dv,\\
E^{\epsilon}_{0,\mathcal{H};J}[\psi]=:&\: E^{\epsilon}_{0;J}[\psi]+ \int_{{N}^{\mathcal{H}}_{0}} (r-M)^{-5-2J+\epsilon}\cdot (\underline{L}^{1+J} \phi_0)^2\,d\omega du,
\end{align*}
and we assume \eqref{eq:asmLjphi_02} for \eqref{eq:hooptr2decay1} and \eqref{eq:asmLjphi_01} for \eqref{eq:hooptr2decay2}.
\end{proposition}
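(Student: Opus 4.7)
The plan is to bootstrap from Proposition \ref{prop:hoedaypsi0} and Lemma \ref{lm:hoauxedaypsi0} in exactly the same way that Proposition \ref{prop:extraendecay} upgrades Proposition \ref{prop:edaypsi0} and Lemma \ref{lm:auxedaypsi0}. The goal is to gain two additional powers of $\tau$-decay by extending the hierarchy of $r^p$-weighted estimates for $L^{J+1}\phi_0$ in $\mathcal{A}^{\mathcal{I}}$ and for $\Lbar^{J+1}\phi_0$ in $\mathcal{A}^{\mathcal{H}}$ from their previous top level $p=3+2J-\epsilon$ (used in the proof of Proposition \ref{prop:hoedaypsi0}) up to the maximal value $p=5+2J-\epsilon$ allowed by Proposition \ref{prop:rpestell01Lkpsi}.

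First I would feed the pointwise $L^\infty$ bounds \eqref{eq:hoauxpointdecayl01} and \eqref{eq:hoauxpointdecayl02} of Lemma \ref{lm:hoauxedaypsi0} into the assumptions \eqref{eq:hoauxdecayassm1} and \eqref{eq:hoauxdecayassm2} of Proposition \ref{prop:rpestell01Lkpsi}: these estimates imply that $\sum_{j\le J} r^{2j}(L^j\phi_0)^2 \lesssim E^\epsilon_{0,\mathcal{I};J}[\psi]\cdot(1+u)^{-5/2+\epsilon}$ in $\mathcal{A}^{\mathcal{I}}$ (and the analogous bound in $\mathcal{A}^{\mathcal{H}}$), so we may take $\eta=\tfrac12-\epsilon$ with $\mathcal{E}_{J,\eta}$ controlled by $E^\epsilon_{0,\mathcal{I};J}[\psi]$ (resp.\ $E^\epsilon_{0,\mathcal{H};J}[\psi]$). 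This activates the improved $r$-weighted estimates \eqref{eq:rpestinfp5k} and \eqref{eq:rpeshorp5} with $k=J$ and $p=5+2J-\epsilon$. As in Proposition \ref{prop:extraendecay}, the near-horizon and near-infinity arguments decouple, so I would close them separately using the appropriate one-sided data norm.

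Adjoining this new top-level estimate to the hierarchy used in the proof of Proposition \ref{prop:hoedaypsi0}, I would run two additional rounds of the dyadic mean-value/pigeonhole procedure of that proposition, each gaining a factor of $(1+\tau)^{-1}$ and together improving the energy decay rate from $(1+\tau)^{-1-2j_2+\epsilon}$ in \eqref{eq:hor2edecayl01} to the stated $(1+\tau)^{-3-2j_2+\epsilon}$. To transfer the gain to time-derivatives with $j_2\geq 1$, I would invoke the comparison estimates \eqref{eq:intestTpsiinfl0} and \eqref{eq:intestTpsihorl0}, which exchange $L^{j_1}LT^{j_2}\phi_0$ for the purely-$L$ commuted quantity $L^{J+1}\phi_0$ (and the $\Lbar$ analogue in $\mathcal{A}^{\mathcal{H}}$), modulo $T$-energy terms already controlled by Proposition \ref{prop:hoedaypsi0} applied to $T^m\psi$ with $m\le J$.

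The main technical obstacle is verifying that the finiteness of $E^\epsilon_{0,\mathcal{I};J}[\psi]$ (resp.\ $E^\epsilon_{0,\mathcal{H};J}[\psi]$), combined with a propagation argument using Proposition \ref{prop:radfieldsinfLk}, implies the initial-data assumption \eqref{eq:asmLjphi_02} of Proposition \ref{prop:rpestell01Lkpsi} (with the higher $r^{j+3}$ weights), and that $\eta=\tfrac12-\epsilon$ is large enough for the $\mathcal{E}_{J,\eta}$ error term in \eqref{eq:rpestinfp5k}--\eqref{eq:rpeshorp5} to be absorbable against the $\epsilon$-loss coming from choosing $p=5+2J-\epsilon$ rather than $p=5+2J$. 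Once this bookkeeping is carried out, the improvement of the decay rate and its propagation to $T$-derivatives are entirely routine repetitions of the arguments already appearing in Propositions \ref{prop:extraendecay} and \ref{prop:hoedaypsi0}.
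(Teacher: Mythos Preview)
Your proposal is correct and follows essentially the same approach as the paper. The paper's argument is stated tersely in the sentence preceding the proposition: the $L^\infty$ estimates of Lemma \ref{lm:hoauxedaypsi0} are used to recover the decay assumptions \eqref{eq:hoauxdecayassm1}--\eqref{eq:hoauxdecayassm2}, thereby unlocking the full $p<5+2J$ range in Proposition \ref{prop:rpestell01Lkpsi}, after which the pigeonhole argument of Proposition \ref{prop:hoedaypsi0} is rerun with the extended hierarchy; your write-up is in fact more detailed than what the paper provides.
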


We now define the following higher-order weighted energy norms for $\psi_{\geq 1}$ and $J\in \N_0$.

\begin{equation*}
\begin{split}
E^{\epsilon}_{1;J}[\psi]:=&\: \sum_{\substack{0\leq j+|\alpha|\leq J\\ |\alpha|\leq J}} \int_{{N}^{\mathcal{I}}} r^{1-\epsilon}|\snabla_{\s^2}^{\alpha}LP_{\geq 1}T^j\Phi_{(2)}|^2\,d\omega dv\\
&+ \sum_{n=0}^1 \sum_{m=0}^{3-2n+2J} \int_{{N}^{\mathcal{I}}} r^{2}(LT^mP_{\geq 1}\Phi_{(n)})^2+r^{1}(LT^{1+m}P_{\geq 1}\Phi_{(n)})^2\,d\omega dv\\
&+\sum_{\substack{|\alpha|\leq \max\{0,J-1\}\\ i\leq \max\{J-1,0\}\\ 0\leq j+|\alpha|\leq J-2i+\min\{J,1\}}} \int_{{N}^{\mathcal{I}}} r^{1+2i-\epsilon}|\snabla_{\s^2}^{\alpha}L^{1+i}P_{\geq 1}T^j\Phi_{(2)}|^2\, d\omega dv\\
&+\sum_{\substack{|\alpha|\leq \max\{0,J-1\}\\ i\leq J\\ 0\leq j+|\alpha|\leq 2J-2i+1}} \int_{{N}^{\mathcal{I}}} r^{2i-\epsilon}|\snabla_{\s^2}^{\alpha}L^{1+i}P_{\geq 1}T^j\Phi_{(2)}|^2\, d\omega dv\\
&+\int_{{N}^{\mathcal{I}}} r^{2+2J-\epsilon}(L^{1+J}P_{\geq 1}\Phi_{(2)})^2\,d\omega dv\\
&+\sum_{\substack{0\leq j+|\alpha|\leq J\\ |\alpha|\leq  J}} \int_{{N}^{\mathcal{H}}} (r-M)^{-1+\epsilon}|\snabla_{\s^2}^{\alpha}\underline{L}P_{\geq 1}T^j\underline{\Phi}_{(2)}|^2\,d\omega du\\
&+ \sum_{n=0}^1 \sum_{m=0}^{3-2n+2J} \int_{{N}^{\mathcal{H}}} (r-M)^{-2}(\underline{L}T^mP_{\geq 1}\underline{\Phi}_{(n)})^2+(r-M)(\underline{L}T^{1+m}P_{\geq 1}\underline{\Phi}_{(n)})^2\,d\omega du\\
&+\sum_{\substack{|\alpha|\leq \max\{0,J-1\}\\ i\leq \max\{J-1,0\}\\ 0\leq j+|\alpha|\leq J-2i+\min\{J,1\}}} \int_{{N}^{\mathcal{H}}} (r-M)^{-1-2i+\epsilon}|\snabla_{\s^2}^{\alpha}\underline{L}^{1+i}P_{\geq 1}T^j\underline{\Phi}_{(2)}|^2\, d\omega du\\
&+\sum_{\substack{|\alpha|\leq \max\{0,J-1\}\\ i\leq J\\ 0\leq j+|\alpha|\leq 2J-2i+1}} \int_{{N}^{\mathcal{H}}} (r-M)^{-2i+\epsilon}|\snabla_{\s^2}^{\alpha}\underline{L}^{1+i}P_{\geq 1}T^j\underline{\Phi}_{(2)}|^2\, d\omega du\\
&+\int_{{N}^{\mathcal{H}}} (r-M)^{-1-2J+\epsilon}(\underline{L}^{1+J}P_{\geq 1}\underline{\Phi}_{(2)})^2\,d\omega du\\
&+\sum_{\substack{j+|\alpha|\leq 5+3k\\ |\alpha|\leq k}}\int_{\Sigma_0} J^T[T^j\Omega^{\alpha}\psi_{\geq 1}]\cdot \mathbf{n}_{\Sigma_0}\,d\mu_{\Sigma_0}.
\end{split}
\end{equation*}
and
\begin{equation*}
\begin{split}
E^{\epsilon}_{2;J}[\psi]:=&\: \sum_{\substack{0\leq j+|\alpha|\leq J\\ |\alpha|\leq J}} \int_{{N}^{\mathcal{I}}} r^{2-\epsilon}|\snabla_{\s^2}^{\alpha}LP_{\geq 1}T^j\Phi_{(2)}|^2+r^{1-\epsilon}|\snabla_{\s^2}^{\alpha}LT^{1+j}P_{\geq 1}\Phi_{(2)}|^2+r^{4-\epsilon}(LP_1T^j\widetilde{\Phi}_{(1)})^2\,d\omega dv\\
&+ \sum_{n=0}^1 \sum_{m=0}^{4-2n+2J} \int_{{N}^{\mathcal{I}}} r^{2}(LT^mP_{\geq 1}\Phi_{(n)})^2+r^{1}(LT^{1+m}P_{\geq 1}\Phi_{(n)})^2\,d\omega dv\\
&+\sum_{\substack{|\alpha|\leq \max\{0,J-1\}\\ i\leq \max\{J-1,0\}\\ 0\leq j+|\alpha|\leq J-2i+\min\{J,1\}}} \int_{{N}^{\mathcal{I}}} r^{2+2i-\epsilon}|\snabla_{\s^2}^{\alpha}L^{1+i}P_{\geq 1}T^j\Phi_{(2)}|^2\, d\omega dv\\
&+\sum_{\substack{|\alpha|\leq \max\{0,J-1\}\\ i\leq J\\ 0\leq j+|\alpha|\leq 2J-2i+1}} \int_{{N}^{\mathcal{I}}} r^{1+2i-\epsilon}|\snabla_{\s^2}^{\alpha}L^{1+i}P_{\geq 1}T^j\Phi_{(2)}|^2\, d\omega dv\\
&+\int_{{N}^{\mathcal{I}}} r^{2+2J-\epsilon}(L^{1+J}P_{\geq 1}\Phi_{(2)})^2\,d\omega dv\\
&+\sum_{\substack{0\leq j+|\alpha|\leq J\\ |\alpha|\leq J}} \int_{{N}^{\mathcal{H}}} (r-M)^{-2+\epsilon}|\snabla_{\s^2}^{\alpha}\underline{L}P_{\geq 1}T^j\underline{\Phi}_{(2)}|^2+(r-M)^{-1+\epsilon}|\snabla_{\s^2}^{\alpha}\underline{L}T^{1+j}P_{\geq 1}\underline{\Phi}_{(2)}|^2\\
&+(r-M)^{-4+\epsilon}(\underline{L}P_1T^j\widetilde{\underline{\Phi}}_{(1)})^2\,d\omega du\\
&+ \sum_{n=0}^1 \sum_{m=0}^{4-2n+2J} \int_{{N}^{\mathcal{H}}} (r-M)^{-2}(\underline{L}T^mP_{\geq 1}\underline{\Phi}_{(n)})^2+(r-M)(\underline{L}T^{1+m}P_{\geq 1}\underline{\Phi}_{(n)})^2\,d\omega du\\
&+\sum_{\substack{|\alpha|\leq \max\{0,J-1\}\\ i\leq \max\{J-1,0\}\\ 0\leq j+|\alpha|\leq J-2i+\min\{J,1\}}} \int_{{N}^{\mathcal{H}}} (r-M)^{-2-2i+\epsilon}|\snabla_{\s^2}^{\alpha}\underline{L}^{1+i}P_{\geq 1}T^j\underline{\Phi}_{(2)}|^2\, d\omega du\\
&+\sum_{\substack{|\alpha|\leq \max\{0,J-1\}\\ i\leq J\\ 0\leq j+|\alpha|\leq 2J-2i+1}} \int_{{N}^{\mathcal{H}}} (r-M)^{-1-2i+\epsilon}|\snabla_{\s^2}^{\alpha}\underline{L}^{1+i}P_{\geq 1}T^j\underline{\Phi}_{(2)}|^2\, d\omega du\\
&+\int_{{N}^{\mathcal{H}}} (r-M)^{-2-2J+\epsilon}(\underline{L}^{1+J}P_{\geq 1}\underline{\Phi}_{(2)})^2\,d\omega du\\
&+\sum_{\substack{j+|\alpha|\leq 6+3k\\ |\alpha|\leq k}}\int_{\Sigma_0} J^T[T^j\Omega^{\alpha}\psi_{\geq 1}]\cdot \mathbf{n}_{\Sigma_0}\,d\mu_{\Sigma_0}.
\end{split}
\end{equation*}
We note that the integrals appearing in the energy norm $E^{\epsilon}_{1;J}[\psi]$ that are supported \emph{away} from $\mathcal{H}^+$ are similar to the energy norms defined in Proposition 7.7 of \cite{paper1} and similarly, the integrals supported away from the horizon in $E^{\epsilon}_{2;J}[\psi]$ appear in the norms defined in Appendix A of \cite{paper2}.

We obtain energy decay estimates for the time derivatives of $\psi_{\geq 1}$.
\begin{proposition}
\label{prop:hoedaypsi1}
Let $J\in \N_0$. Assume that for $n=0,1$ and for all $0\leq k\leq 2-n$ and $0\leq j\leq J$:
\begin{equation*}
\lim_{v\to \infty}\int_{\s^2}r^{2j}|\snabla_{\s^2}\slashed{\Delta}_{\s^2}^kL^jP_{\geq 1}\Phi_{(n)}|^2\,d\omega|_{u=u_0}<\infty.
\end{equation*}
Then, for all $\epsilon>0$ there exists a constant $C=C(M,{\Sigma_0},\epsilon,J)>0$ such that
\begin{align}
\label{eq:hoedecayl01}
\int_{\Sigma_{\tau}} J^T[ T^J\psi_{\geq 1}]\cdot \mathbf{n}_{\tau}\,d\mu_{\tau}\leq&\: C\cdot E^{\epsilon}_{1;J}[\psi]  (1+\tau)^{-5-2J+\epsilon},\\
\label{eq:hoedecaylr211}
\int_{{N}^{\mathcal{I}}_{\tau}} r^2(LT^J\phi_{\geq 1})^2\,d\omega dv+\int_{{N}^{\mathcal{H}}_{\tau}} (r-M)^{-2}(\underline{L}T^J\phi_{\geq 1})^2\,d\omega du \leq&\: C\cdot E^{\epsilon}_{1;J}[\psi]  (1+\tau)^{-3-2J+\epsilon}.
\end{align}
\end{proposition}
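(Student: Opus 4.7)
The plan is to mimic the strategy of Proposition \ref{prop:edaypsi1}, which established the $J=0$ version of this estimate, but now feeding in the \emph{extended} commuted hierarchies of Section \ref{sec:extendhier}. Concretely, for each fixed $J \geq 1$ we will run the same dyadic pigeonhole argument as in the proof of Proposition \ref{prop:edaypsi1}, applied to the quantities $L^{j_1} T^{j_2}\Phi_{(n)}$ (respectively $\underline{L}^{j_1}T^{j_2}\underline{\Phi}_{(n)}$) with $j_1+j_2=J$ and $n\in\{0,1,2\}$, rather than to $\Phi_{(n)}$ alone. At the top of the hierarchy, for $n=2$, we use Proposition \ref{prop:generalrpestLkpsi} to propagate the $r^p$-weighted fluxes of $L^{J}\Phi_{(2)}$ in $\mathcal{A}^{\mathcal{I}}$ (up to $p=2+2J-\epsilon$ by interpolation) and the analogous $(r-M)^{-p}$-weighted fluxes of $\underline{L}^J \underline{\Phi}_{(2)}$ in $\mathcal{A}^{\mathcal{H}}$; for the intermediate level $n=1$ we use Proposition \ref{prop:rpestell01} applied to $\widetilde{\Phi}_{(1)}$ and $\widetilde{\underline{\Phi}}_{(1)}$ with the same $L^k$ and $\underline{L}^k$ commutations. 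These extended hierarchies each contribute two additional $p$-levels per time derivative, which is exactly what is needed to increase the decay rate of the $T$-energy flux by $\tau^{-2}$ per derivative and produce the rate $\tau^{-5-2J+\epsilon}$ claimed in \eqref{eq:hoedecayl01}.

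The key bookkeeping step is to convert the $r$-weighted estimates for $L^{J}\Phi_{(n)}$ (respectively $\underline{L}^J \underline{\Phi}_{(n)}$) into $r$-weighted estimates for $L^{J-1}T\Phi_{(n)}$ (respectively $\underline{L}^{J-1}T\underline{\Phi}_{(n)}$), and more generally to commute all the way through to $T^J$. This is done via Lemma \ref{lm:intestTpsi}: the identities in Lemma \ref{lm:hoequationsTpsi} yield schematically $L L^{J-1}T \Phi_{(n)} = L^{J+1}\Phi_{(n)} + (\textrm{lower $L$-order plus angular/lower-$n$ terms})$, so that any weighted spacetime integral of $(L L^{J-1}T\Phi_{(n)})^2$ is bounded by the analogous integral of $(LL^{J}\Phi_{(n)})^2$ plus an angular term $r^{p-3}|\snabla_{\s^2} L^{J-1}\Omega^{\alpha}\Phi_{(n)}|^2$ and lower-level fluxes controlled by commuted $T$-energies. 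One then iterates: an estimate of $LL^{J-1}T\Phi_{(n)}$ reduces to one for $L^{J+1}\Phi_{(n)}$; an estimate of $LL^{J-2}T^2\Phi_{(n)}$ reduces to one for $LL^{J-1}T\Phi_{(n)}$; and so on down to $LT^J\Phi_{(n)}$. The terms involving $\slashed{\Delta}_{\s^2} L^j \Phi_{(n)}$ are absorbed by commuting with the angular momentum operators $\Omega^{\alpha}$, $|\alpha|\leq 2$, exactly as in the proof of Proposition \ref{prop:edaypsi1}, which is why angular-commuted initial data norms appear in the definition of $E^{\epsilon}_{1;J}[\psi]$. The estimate \eqref{eq:hoedecaylr211} will drop out as an intermediate step in the same dyadic procedure once the $T$-energy of $T^J\psi_{\geq 1}$ has decayed at the top rate; this is entirely analogous to how \eqref{eq:edecaylr211} was obtained after \eqref{eq:edecayl11} in Proposition \ref{prop:edaypsi1}.

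The main obstacle, and the reason the norm $E^{\epsilon}_{1;J}[\psi]$ is so lengthy, is the fact that the $n=2$ hierarchy does \emph{not} close in isolation: as noted in the proof of Proposition \ref{prop:edaypsi1}, the bulk terms arising from $|\snabla_{\s^2}P_{\geq 1}\Phi_{(2)}|^2 - 6 (P_{\geq 1}\Phi_{(2)})^2$ and its horizon analog are indefinite on the $\ell=1$ subspace. Consequently we must estimate the $P_1$ part of $\Phi_{(2)}$ by the $L$-derivative of the modified quantity $\widetilde{\Phi}_{(1)}$, using the Hardy-type chain
\begin{equation*}
\int r^{p-3}\chi^2 (P_1\Phi_{(2)})^2 \,d\omega dv du \lesssim \int r^{p+1}(L\chi P_1\widetilde{\Phi}_{(1)})^2 \,d\omega dv du + \ldots
\end{equation*}
from Proposition \ref{prop:edaypsi1}, now applied at every commutation level. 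This is what forces the inclusion of the $(r-M)^{-4+\epsilon}(\underline{L}\widetilde{\underline{\Phi}}_{(1)})^2$ and $r^{4-\epsilon}(L\widetilde{\Phi}_{(1)})^2$ weights in $E^{\epsilon}_{1;J}[\psi]$, and the matching of $L$-, $T$- and angular derivative counts in the various sums. The remaining work is then purely combinatorial: one fixes the top weights $p=1-\epsilon$ at $n=2$ (in both $\mathcal{A}^{\mathcal{I}}$ and $\mathcal{A}^{\mathcal{H}}$) and $p=3-\epsilon$ at $n=1$, and carries out the pigeonhole argument on nested dyadic intervals $\tau_i^{(k)}$, using Lemma \ref{lm:interpolation} to absorb the $\epsilon$-loss at each step and the Morawetz estimate \eqref{eq:iledlossder} together with $T$-energy boundedness \eqref{eq:degenbound} to handle the non-commuted region $\mathcal{B}$. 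Because \eqref{eq:iledlossder} costs $T$-derivatives, the initial data term on the right-hand side picks up the $\sum_{j+|\alpha|\leq 5+3k}$ summation visible in the definition of $E^{\epsilon}_{1;J}[\psi]$; this is the final source of the apparently large number of derivatives in the norm.
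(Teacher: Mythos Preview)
Your overall strategy is correct and matches the paper's approach closely: repeat the dyadic pigeonhole argument of Proposition~\ref{prop:edaypsi1} applied to $T^J\psi_{\geq 1}$, extend the hierarchy using the higher-order $r$-weighted estimates for $L^J\Phi_{(2)}$ and $\underline{L}^J\underline{\Phi}_{(2)}$ from Proposition~\ref{prop:generalrpestLkpsi}, and glue these to the $T$-commuted hierarchy via Lemma~\ref{lm:intestTpsi}. The iteration $LL^{J-1}T\Phi_{(n)}\to L^{J+1}\Phi_{(n)}\to\cdots$ is exactly the mechanism the paper invokes.

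There are, however, two inaccuracies in your write-up. First, you claim the $P_1/P_{\geq 2}$ splitting of $\Phi_{(2)}$ must be carried out ``at every commutation level''. In fact the paper observes (and this is visible in the structure of Lemma~\ref{lm:rweightidentitiesho} and the statement of Proposition~\ref{prop:generalrpestLkpsi}) that for $J\geq 1$ the bulk angular term is simply $r^{p-3}|\snabla_{\s^2}L^J\Phi_{(2)}|^2$, with no $-6(L^J\Phi_{(2)})^2$ subtraction, so it is manifestly non-negative on $P_{\geq 1}$ without any splitting. The $P_1$ issue only enters through the residual \emph{non-commuted} $\Phi_{(2)}$ terms that appear on the right-hand side of \eqref{eq:rpestinfLkpsi}--\eqref{eq:rpesthorLbarkpsi}, and those are handled once, at the $J=0$ level, via Proposition~\ref{prop:rpestell01}. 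Second, you assert that the weights $r^{4-\epsilon}(L\widetilde{\Phi}_{(1)})^2$ and $(r-M)^{-4+\epsilon}(\underline{L}\widetilde{\underline{\Phi}}_{(1)})^2$ appear in $E^{\epsilon}_{1;J}[\psi]$; they do not. Those higher $\widetilde{\Phi}_{(1)}$ weights belong to $E^{\epsilon}_{2;J}[\psi]$, which is the norm for the \emph{stronger} decay of Proposition~\ref{prop:hoedaypsi1b}. The norm $E^{\epsilon}_{1;J}[\psi]$ uses only $r^2$ and $r$ weights on $\Phi_{(1)}$ (not $\widetilde{\Phi}_{(1)}$), consistent with the five-step hierarchy rather than the six-step one. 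Neither inaccuracy invalidates your argument, but the first causes you to do unnecessary work and the second misidentifies the source of the norm structure.
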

\begin{proof}
The $J=0$ case follows from Proposition \ref{prop:edaypsi1}. In order to obtain the estimates for $J\geq 1$ we repeat the steps in the proof of Proposition \ref{prop:edaypsi1} applied to $T^J\phi_{\geq 1}$, but we use additionally that for $J\geq 1$ we can extend the hierarchy of $r$-weighted estimates by using the $r$-weighted estimates for $L^J\Phi_{(2)}$ that follow from Proposition \ref{prop:generalrpestLkpsi} and combining them with the estimates in Lemma \ref{lm:intestTpsi} to extend the hierarchy for $T^J\Phi_{(2)}$. Note that we need to distinguish between $P_{\geq 2}\Phi_{(2)}$ and $P_{1}\Phi_{(2)}$ when estimating $J=0$ terms, so for $J\geq 1$ there is no further need to perform the splitting $P_{\geq 2}L^J\Phi_{(2)}=P_{1}L^J\Phi_{(2)}+P_{\geq 2}L^J\Phi_{(2)}$. We omit further details of the proof and refer to the proof of Proposition 7.7 in \cite{paper1} for more details on estimates that are analogous to the estimates in $\mathcal{A}^{\mathcal{I}}$ in extremal Reissner--Nordstr\"om. The estimates in $\mathcal{A}^{\mathcal{H}}$ follow via very similar arguments.
\end{proof}

\begin{proposition}
\label{prop:hoedaypsi1b}
Let $J\in \N_0$. Assume that
\begin{equation*}
\lim_{v\to \infty}\sum_{j=0}^J\sum_{n=0}^2 \sum_{k=0}^{2-n}\int_{\s^2}r^{2j}|\snabla_{\s^2}\slashed{\Delta}_{\s^2}^kL^jP_{\geq 1}\Phi_{(n)}|^2\,d\omega|_{u=u_0}<\infty.
\end{equation*}
Then, for all $\epsilon>0$ there exists a constant $C=C(M,{\Sigma_0},\epsilon,J)>0$ such that
\begin{align}
\label{eq:hoedecaylr211inf}
\int_{{N}^{\mathcal{I}}_{\tau}} r^2(LT^J\phi_{\geq 1})^2\,d\omega dv\leq&\: C\cdot E^{\epsilon}_{1,\mathcal{I};J}[\psi]  (1+\tau)^{-4-2J+\epsilon},\\
\label{eq:hoedecaylr211hor}
\int_{{N}^{\mathcal{H}}_{\tau}} (r-M)^{-2}(\underline{L}T^J\phi_{\geq 1})^2\,d\omega du \leq&\: C\cdot E^{\epsilon}_{1,\mathcal{H};J}[\psi]  (1+\tau)^{-4-2J+\epsilon},\\
\label{eq:hoedecayl01b}
\int_{\Sigma_{\tau}} J^T[ T^J\psi_{\geq 1}]\cdot \mathbf{n}_{\tau}\,d\mu_{\tau}\leq&\: C\cdot E^{\epsilon}_{2;J}[\psi]  (1+\tau)^{-6-2J+\epsilon},\\
\end{align}
where
\begin{align*}
E^{\epsilon}_{1,\mathcal{I};J}[\psi]:=&\: E^{\epsilon}_{1;J}[\psi] +  \sum_{\substack{0\leq j+|\alpha|\leq J\\ |\alpha|\leq J}} \int_{{N}^{\mathcal{I}}} r^{2-\epsilon}|\snabla_{\s^2}^{\alpha}LP_{\geq 1}T^j\Phi_{(2)}|^2+r^{4-\epsilon}(LP_1T^j\widetilde{\Phi}_{(1)})^2\,d\omega dv  \\
&+\sum_{\substack{|\alpha|\leq \max\{0,J-1\}\\ i\leq \max\{J-1,0\}\\ 0\leq j+|\alpha|\leq J-2i+\min\{J,1\}}} \int_{{N}^{\mathcal{I}}} r^{2+2i-\epsilon}|\snabla_{\s^2}^{\alpha}L^{1+i}P_{\geq 1}T^j\Phi_{(2)}|^2\, d\omega dv\\
&+\int_{{N}^{\mathcal{I}}} r^{2+2J-\epsilon}(L^{1+J}P_{\geq 1}\Phi_{(2)})^2\,d\omega dv,\\
E^{\epsilon}_{1,\mathcal{H};J}[\psi]:=&\: E^{\epsilon}_{1;J}[\psi] +  \sum_{\substack{0\leq j+|\alpha|\leq J\\ |\alpha|\leq J}} \int_{{N}^{\mathcal{H}}} (r-M)^{-2+\epsilon}|\snabla_{\s^2}^{\alpha}\underline{L}P_{\geq 1}T^j\underline{\Phi}_{(2)}|^2+(r-M)^{-4+\epsilon}(\underline{L}P_1T^j\widetilde{\underline{\Phi}}_{(1)})^2\,d\omega du  \\
&+\sum_{\substack{|\alpha|\leq \max\{0,J-1\}\\ i\leq \max\{J-1,0\}\\ 0\leq j+|\alpha|\leq J-2i+\min\{J,1\}}} \int_{{N}^{\mathcal{H}}} (r-M)^{-2-2i+\epsilon}|\snabla_{\s^2}^{\alpha}\underline{L}^{1+i}P_{\geq 1}T^j \underline{\Phi}_{(2)}|^2\, d\omega du\\
&+\int_{{N}^{\mathcal{H}}} (r-M)^{-2-2J+\epsilon}(\underline{L}^{1+J}P_{\geq 1}\underline{\Phi}_{(2)})^2\,d\omega du.\\
\end{align*}
\end{proposition}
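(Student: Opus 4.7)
The strategy is to repeat the proof of Proposition \ref{prop:hoedaypsi1} applied to $T^J\psi_{\geq 1}$, but to \emph{extend} the underlying hierarchy of $r$-weighted and $(r-M)^{-p}$-weighted estimates by one additional step at the top, exactly as Proposition \ref{prop:edaypsi1b} extended Proposition \ref{prop:edaypsi1} in the $J=0$ case. Specifically, the extra top-level estimates we need are: (a) the $n=2$ estimate from Proposition \ref{prop:generalrpestLkpsi} with $p=2+2J-\epsilon$ controlling the integral of $r^{2+2J-\epsilon}(L^{1+J}P_{\geq 1}\Phi_{(2)})^2$, together with its analogue in $\mathcal{A}^{\mathcal{H}}$ from \eqref{eq:rpesthorLbarkpsi}; and (b) the $\ell=1$ estimate from Proposition \ref{prop:rpestell01} applied to the modified radiation fields $\widetilde{\Phi}_{(1)}$ and $\widetilde{\underline{\Phi}}_{(1)}$ with $p=4-\epsilon$, which lies in the extended range $p<5$ allowed by \eqref{eq:rpestinfp5}--\eqref{eq:rpeshorp5k} once we verify the pointwise assumption \eqref{addasmforp5I} or \eqref{addasmforp5H}.

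To verify the pointwise assumptions needed to access $p=4-\epsilon$ in Proposition \ref{prop:rpestell01}, we first run the argument of Proposition \ref{prop:hoedaypsi1} itself to obtain preliminary decay of $P_1\widetilde{\Phi}_{(1)}$ and $P_1\widetilde{\underline{\Phi}}_{(1)}$, and then upgrade to $L^\infty$ bounds via the fundamental theorem of calculus combined with the Morawetz estimate \eqref{eq:iledwayps} and the Hardy inequalities \eqref{eq:hardyinf}, \eqref{eq:hardyhor1}; this step is exactly parallel to the auxiliary bootstrap used in Lemma \ref{lm:auxedaypsi0}. Once these top-level commuted estimates are in hand, we convert them into estimates for $T^{1+J}\Phi_{(2)}$ and $T^{1+J}\underline{\Phi}_{(2)}$ via Lemma \ref{lm:intestTpsi}, and we relate neighboring levels of the $n$-hierarchy ($n=2,1,0$) by the same combination of Morawetz and Hardy inequalities already used in the proof of Proposition \ref{prop:edaypsi1}.

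With the extended hierarchy in place, the bounds \eqref{eq:hoedecayl01b}--\eqref{eq:hoedecaylr211hor} follow from the standard iteration of the mean-value theorem in dyadic time intervals $\tau_i=2^i$: each additional $r$-weighted estimate in the hierarchy produces one more power of $\tau^{-1}$ of decay. The energy norm $E^{\epsilon}_{2;J}[\psi]$ collects precisely the weighted initial-data quantities that appear when all the extra top-level estimates are invoked simultaneously, so \eqref{eq:hoedecayl01b} requires the full norm. The intermediate bounds \eqref{eq:hoedecaylr211inf} and \eqref{eq:hoedecaylr211hor} appear at the \emph{penultimate} step of the dyadic iteration, i.e. after only one of the two extended top-level estimates has been used; consequently they require only the one-sided norms $E^{\epsilon}_{1,\mathcal{I};J}[\psi]$ and $E^{\epsilon}_{1,\mathcal{H};J}[\psi]$ respectively, in which higher powers of $r$ appear \emph{or} higher inverse powers of $(r-M)$ appear, but not both.

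The main technical obstacle is the careful treatment of the $\ell=1$ mode, since at $n=2$ the Poincaré-type terms $|\snabla_{\s^2}P_1\Phi_{(2)}|^2-6(P_1\Phi_{(2)})^2$ and $|\snabla_{\s^2}P_1\underline{\Phi}_{(2)}|^2-6(P_1\underline{\Phi}_{(2)})^2$ fail to have a good sign in \eqref{eq:rpestinf}, \eqref{eq:rpesthor}; the remedy, already used in Proposition \ref{prop:edaypsi1}, is to bound these $P_1$ projections separately through the modified quantities $\widetilde{\Phi}_{(1)}$, $\widetilde{\underline{\Phi}}_{(1)}$ governed by Lemma \ref{lm:rweightidentitiess2v2} via the chain of Hardy and Morawetz estimates displayed in that proof, and then to combine with the $n=2$, $p<1+2J$ estimate from Proposition \ref{prop:generalrpestLkpsi}. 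Beyond this, the derivation is bookkeeping: every weighted spacetime integral must be matched against the corresponding term in $E^{\epsilon}_{1,\mathcal{I};J}$, $E^{\epsilon}_{1,\mathcal{H};J}$, or $E^{\epsilon}_{2;J}$, and the higher-order Morawetz bound \eqref{eq:iledlossder} is used to handle the global integrated energy in the region $\mathcal{B}$, accounting for the additional $T$-derivatives appearing in the initial data norms.
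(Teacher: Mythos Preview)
Your proposal is correct and follows essentially the same approach as the paper: reduce the $J=0$ case to Proposition~\ref{prop:edaypsi1b}, and for $J\geq 1$ run the argument of Proposition~\ref{prop:hoedaypsi1} with the hierarchy extended by the additional top-level estimates ($n=2$ with $p=2-\epsilon$ from Proposition~\ref{prop:generalrpest}/\ref{prop:generalrpestLkpsi}, and $n=1$ with $p=4-\epsilon$ from Proposition~\ref{prop:rpestell01}), converting via Lemma~\ref{lm:intestTpsi} and iterating the mean-value theorem on dyadic intervals. One small over-complication: since $p=4-\epsilon<4$ lies in the unrestricted range of Proposition~\ref{prop:rpestell01}, the auxiliary pointwise assumptions \eqref{addasmforp5I}--\eqref{addasmforp5H} are not actually needed here, so your bootstrap paragraph invoking the analogue of Lemma~\ref{lm:auxedaypsi0} can be omitted.
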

\begin{proof}
For $J=0$, the estimates in the proposition are contained in Proposition \ref{prop:edaypsi1b}. In order to prove the estimates in the $J\geq 1$ case, we proceed as in Proposition \ref{prop:hoedaypsi1} but we moreover apply the \emph{extended} hierarchy in the $J=0$ case in $\mathcal{A}^{\mathcal{I}}$ or $\mathcal{A}^{\mathcal{H}}$ (or in both regions) that is used in the proof of Proposition \ref{prop:edaypsi1b}.
\end{proof}

\subsection{Degenerate elliptic estimates for $\psi_{\geq 1}$}
\label{sec:ellpest}
In this section we derive a degenerate elliptic estimate for solutions $\psi_{\geq 1}$ to \eqref{eq:waveequation} that will be used to obtain pointwise decay estimates for $\psi_{\geq 1}$.

\begin{proposition}
\label{prop:degenelliptic}
Let $\psi$ be a solution to \eqref{eq:waveequation}. Assume moreover that
\begin{align*}
\lim_{\rho\to \infty}r^{\frac{1}{2}}T\psi|_{\Sigma_{\tau}}=&0,\\
\lim_{\rho\to \infty}r^{\frac{1}{2}}L\psi|_{\Sigma_{\tau}}=&0.
\end{align*}

Then there exists a $C=C(M,\Sigma_0)>0$, such that, with respect to $(\rho,\theta,\varphi)$ coordinates,
\begin{equation}
\label{eq:ellipticpsi}
\begin{split}
\int_{M}^{\infty}&\int_{\s^2} Dr^2(\partial_{\rho}( D\partial_\rho \psi_{\geq 1}))^2+D^2|\snabla_{\s^2} \partial_{\rho}\psi_{\geq 1}|^2+Dr^{-2}(\slashed{\Delta}_{\s^2}\psi_{\geq 1})^2\,d\omega d\rho\\
\leq&\: C\int_{M}^{\infty}\int_{\s^2}Dr^2(\partial_{\rho} T \psi_{\geq 1})^2+Dh(r)^2r^2(T^2\psi_{\geq 1})^2\,d\omega  d\rho.
\end{split}
\end{equation}
\end{proposition}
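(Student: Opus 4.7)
The plan is to view the wave equation $\square_g\psi=0$ restricted to $\Sigma_\tau$ as a degenerate elliptic equation in $(\rho,\omega)$ whose inhomogeneity is controlled by $T$-derivatives of $\psi$, and then to derive the bound by squaring, applying a weighted $L^2$ estimate, and integrating by parts. First, using $\partial_r=\partial_\rho-h(r)T$ (which follows from the parametrisation of $\Sigma_\tau$) together with $[\partial_\rho,T]=0$, I would rewrite \eqref{weern} as
\[
\partial_\rho(D\partial_\rho\psi)+\frac{2D}{r}\partial_\rho\psi+\sD\psi = -(2-2Dh)\,\partial_\rho T\psi+A_1(r)T\psi+A_2(r)T^2\psi=:F,
\]
where $A_2=h(2-Dh)$ and the asymptotics of $h$ give $2-Dh=D\cdot O(r^{-1-\eta})$; consequently $Dr^2 A_2^2\lesssim Dh^2 r^2$ and $(2-2Dh)^2\lesssim 1$, so $\int Dr^2 F^2$ is bounded precisely by the right-hand side of \eqref{eq:ellipticpsi}, modulo lower-order $T\psi$ contributions which will be handled by Poincar\'e--Hardy on $\Sigma_\tau$.

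Next, I would split off the term $\tfrac{2D}{r}\partial_\rho\psi$ from the principal part by setting $\mathcal{Q}\psi:=\partial_\rho(D\partial_\rho\psi)+\sD\psi$, so that $\mathcal{Q}\psi=F-\tfrac{2D}{r}\partial_\rho\psi$. Squaring and integrating with the weight $Dr^2\,d\omega\,d\rho$, the two diagonal terms $\int Dr^2(\partial_\rho(D\partial_\rho\psi))^2$ and $\int Dr^{-2}(\slashed{\Delta}_{\s^2}\psi)^2$ appear directly. The cross-term
\[
2\int_M^\infty\!\!\int_{\s^2} D\,\partial_\rho(D\partial_\rho\psi)\,\slashed{\Delta}_{\s^2}\psi\,d\omega\,d\rho
\]
is handled by integrating by parts once on $\s^2$ (pairing $\snabla_{\s^2}$) and once in $\rho$. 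All boundary contributions vanish: at $\rho=M$ because $D(M)=D'(M)=0$, and at $\rho\to\infty$ thanks to the vanishing limits assumed in the statement. This produces the desired positive term $2\int D^2|\snabla_{\s^2}\partial_\rho\psi|^2\,d\omega\,d\rho$ plus the error $2\int DD'\,\snabla_{\s^2}\psi\cdot\snabla_{\s^2}\partial_\rho\psi\,d\omega\,d\rho$. Using the identity $(D')^2+DD''=(DD')'$, one further integration by parts rewrites this error as $-\int((D')^2+DD'')|\snabla_{\s^2}\psi|^2\,d\omega\,d\rho$, a quantity symmetric in its signs that is not manifestly small.

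The remaining task -- and the main obstacle -- is to absorb this error plus the nuisance term $\int D^3(\partial_\rho\psi)^2\,d\omega\,d\rho$ coming from $\bigl(\tfrac{2D}{r}\partial_\rho\psi\bigr)^2$. For the second, I would invoke a Hardy inequality in $\rho$ exploiting that $D\partial_\rho\psi$ vanishes at $\rho=M$, thereby trading $\int D^3(\partial_\rho\psi)^2$ for $\int Dr^2(\partial_\rho(D\partial_\rho\psi))^2$ (with a small constant obtained by localising near the horizon and combining with the decay assumption at infinity). For the angular error the key input is the restriction to $\psi_{\geq 1}$, which yields the improved Poincar\'e inequality
\[
\int_{\s^2}|\snabla_{\s^2}\psi|^2\,d\omega\leq \frac{1}{\ell(\ell+1)}\int_{\s^2}(\slashed{\Delta}_{\s^2}\psi)^2\,d\omega\leq \frac{1}{2}\int_{\s^2}(\slashed{\Delta}_{\s^2}\psi)^2\,d\omega.
\]
Combining this with the explicit relation $(D')^2=4(M/r)^2 Dr^{-2}$ and a weighted Young's inequality, the error term gets estimated in terms of $\int D^2|\snabla_{\s^2}\partial_\rho\psi|^2$ and $\int Dr^{-2}(\slashed{\Delta}_{\s^2}\psi)^2$, both of which are present on the left-hand side.

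The technical heart is tracking constants: one must verify that after the Young split the coefficients of the two absorbed quantities are strictly less than their left-hand coefficients $2$ and $1$. Because the worst ratio $(D')^2/(Dr^{-2})$ is attained at $\rho=M$, the absorption requires, if necessary, a further splitting of the $\rho$-integral: in a small neighbourhood of the horizon one exploits the factor $\tfrac{1}{\ell(\ell+1)}\leq \tfrac12$ together with the vanishing of $D'$ at $M$ (which kills the naively critical constant to any prescribed order in $(r-M)$), while in the far region the weight $(D')^2\lesssim r^{-4}$ gives arbitrary smallness. The estimate \eqref{eq:ellipticpsi} then follows, with a constant $C=C(M,\Sigma_0)$ that accounts for the Young constants and the Hardy constants used along the way.
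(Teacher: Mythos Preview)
Your overall strategy---square the wave equation on $\Sigma_\tau$, multiply by a $D$-weight, integrate by parts, and use Poincar\'e for $\ell\geq 1$---is exactly the paper's. The difference is in the grouping of the principal part, and this is where your argument has a genuine gap.

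You work with $\mathcal{Q}\psi=\partial_\rho(D\partial_\rho\psi)+r^{-2}\slashed{\Delta}_{\s^2}\psi$ and split off the ``nuisance'' $\tfrac{2D}{r}\partial_\rho\psi$. After the two integrations by parts the angular error is governed by $(D')^2$, and you propose to absorb it using Poincar\'e together with ``the vanishing of $D'$ at $M$ (which kills the naively critical constant to any prescribed order in $(r-M)$)''. This is the step that fails: $D'=2(1-M/r)Mr^{-2}$ does vanish at $r=M$, but so does the positive weight $Dr^{-2}$, and at the \emph{same} quadratic rate in $(r-M)$. The ratio $(D')^2/(Dr^{-2})=4M^2/r^2$ equals $4$ at the horizon and does not become small under any localisation. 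After Poincar\'e (factor $\tfrac12$) the error carries coefficient $2$ in front of $Dr^{-2}(\slashed{\Delta}_{\s^2}\psi)^2$, while you can only afford coefficient $1$; no splitting of the $\rho$-integral or choice of $r$-dependent Young parameter closes this, since at $r=M$ the two competing constraints $\epsilon<2$ and $\epsilon>2M^2/r^2=2$ collide.

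The paper avoids this by keeping the $\tfrac{2D}{r}\partial_\rho\psi$ term \emph{inside} the principal part, writing the equation as $\partial_\rho(Dr^2\partial_\rho\psi)+\slashed{\Delta}_{\s^2}\psi=F$ and using the weight $Dr^{-2}$. The cross-term error then involves $\partial_r(Dr^{-2})=-2r^{-3}(1-M/r)(1-2M/r)$, and the key identity $Dr^{6}(\partial_r(Dr^{-2}))^2=4D^2(1-2M/r)^2$ replaces your troublesome factor $4(M/r)^2$ by $4(1-2M/r)^2$. Since $|1-2M/r|\leq 1$ globally, Cauchy--Schwarz plus Poincar\'e absorb the error at the cost of the full $Dr^{-2}(\slashed{\Delta}_{\s^2}\psi)^2$ and $2D^2|\snabla_{\s^2}\partial_\rho\psi|^2$ terms, yielding first $\int Dr^{-2}(\partial_\rho(Dr^2\partial_\rho\psi))^2\leq C\cdot\textnormal{RHS}$; the remaining two terms on the left of \eqref{eq:ellipticpsi} are then recovered directly from the equation and the cross-term identity. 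Everything in your write-up from the handling of $F$ (including the Hardy step for $T\psi$) is fine; it is only the choice of grouping that needs to change.
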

\begin{proof}
For the sake of convenience, we will assume that $\int_{\s^2}\psi\,d\omega=0$ so that $\psi=\psi_{\geq 1}$. By Lemma 7.9 in \cite{paper1} it follows that \eqref{eq:waveequation} on (extremal) Reissner--Nordstr\"om reduces to the following equation
\begin{equation}
\label{eq:waveeqrhocoord}
\partial_{\rho}(Dr^2\partial_{\rho}\psi)+\slashed{\Delta}_{\s^2}\psi=-2(1-h\cdot D)r^2 \partial_{\rho}T\psi+(2-h\cdot D)r^2hT^2\psi+((hDr^2)' -2r)T\psi.
\end{equation}
By squaring both sides of \eqref{eq:waveeqrhocoord} and multiplying the resulting equation with the factor $Dr^{-2}$, we obtain the following estimate:
\begin{equation}
\label{eq:waveeqestext}
\begin{split}
\int_{M}^{\infty}\int_{\s^2} &r^{-2}D(\partial_{\rho}(Dr^2\partial_{\rho}\psi))^2+r^{-2}D(\slashed{\Delta}_{\s^2}\psi)^2+2r^{-2}D\partial_{\rho}(Dr^2\partial_{\rho}\psi)\slashed{\Delta}_{\s^2}\psi\,d\omega d\rho\\
\leq&\: C\int_{M}^{\infty}\int_{\s^2}r^2D(\partial_{\rho} T \psi)^2+h(r)^2Dr^2(T^2\psi)^2+D(T\psi)^2\,d\omega d\rho.
\end{split}
\end{equation}

By applying \eqref{eq:hardyhor2} again as follows
\begin{equation*}
\int_{M}^{\infty}(T\psi)^2\,d\rho\leq 4\int_{M}^{\infty}(r-M)^2(\partial_{\rho} T\psi)^2\,d\rho,
\end{equation*}
so that
\begin{equation}
\label{eq:waveeqestextv2}
\begin{split}
\int_{M}^{\infty}\int_{\s^2} &r^{-2}D(\partial_{\rho}(Dr^2\partial_{\rho}\psi))^2+r^{-2}D(\slashed{\Delta}_{\s^2}\psi)^2+2r^{-2}D\partial_{\rho}(Dr^2\partial_{\rho}\psi)\slashed{\Delta}_{\s^2}\psi\,d\omega d\rho\\
\leq&\: C\int_{M}^{\infty}\int_{\s^2}Dr^2(\partial_{\rho} T \psi)^2+Dh(r)^2r^2(T^2\psi)^2\,d\omega d\rho.
\end{split}
\end{equation}
We first consider the mixed derivative term on the left-hand side of \eqref{eq:waveeqestextv2}. We integrate over $\s^2$ and integrate by parts in $\rho$ and the angular variables:
\begin{equation}
\begin{split}
\label{eq:ibpangularext}
\int_{M}^{\infty} \int_{\s^2}2Dr^{-2}\partial_{\rho}(Dr^2\partial_{\rho}\psi)\slashed{\Delta}_{\s^2}\psi\,d\omega d\rho=&\:\int_{M}^{\infty} \int_{\s^2}-2Dr^{2}\partial_r(Dr^{-2})\partial_{\rho}\psi\slashed{\Delta}_{\s^2}\psi-2D^2\partial_{\rho}\psi\slashed{\Delta}_{\s^2}\partial_{\rho}\psi\,d\omega d\rho\\
=&\:\int_{M}^{\infty} \int_{\s^2}-2Dr^{2}\partial_r(Dr^{-2})\partial_{\rho}\psi\slashed{\Delta}_{\s^2}\psi+2D^2|\snabla_{\s^2}\partial_{\rho}\psi|^2\,d\omega d\rho,
\end{split}
\end{equation}
where we used that all resulting boundary terms vanish.

Note that
\begin{equation*}
\begin{split}
\partial_r(Dr^{-2})&=\partial_r((r^{-1}-Mr^{-2})^2)=-2(r^{-1}-Mr^{-2})(r^{-2}-2Mr^{-3})\\
&=-2r^{-3}\left(1-\frac{M}{r}\right)\left(1-\frac{2M}{r}\right).
\end{split}
\end{equation*}

We now apply Cauchy--Schwarz and the standard Poincar\'e inequality on $\s^2$ to estimate the first term inside the integral on the very right-hand side above:
\begin{equation*}
\begin{split}
\int_{M}^{\infty}\int_{\s^2}\left|2r^{2}D\partial_r(Dr^{-2})\partial_{\rho}\psi\slashed{\Delta}_{\s^2}\psi\right|\,d\omega d\rho\leq&\: \int_{M}^{\infty}Dr^{6}(\partial_r(Dr^{-2}))^2(\partial_{\rho}\psi)^2+r^{-2}D(\slashed{\Delta}_{\s^2}\psi)^2\,d\omega d\rho\\
= &\: \int_{M}^{\infty}4D^2\left(1-\frac{2M}{r}\right)^2(\partial_{\rho}\psi)^2+r^{-2}D(\slashed{\Delta}_{\s^2}\psi)^2\,d\omega d\rho\\
\leq &\:\int_{M}^{\infty}\int_{\s^2}2D^2|\snabla_{\s^2}\partial_{\rho}\psi|^2+r^{-2}D(\slashed{\Delta}_{\s^2}\psi)^2\,d\omega d\rho,
\end{split}
\end{equation*}
where we used moreover that $\left|1-\frac{2M}{r}\right|\leq 1$.

We use the above estimates together with \eqref{eq:waveeqestextv2} to estimate:
\begin{equation}
\label{eq:mainineqellipticext}
\begin{split}
\int_{M}^{\infty}\int_{\s^2} &Dr^{-2}(\partial_{\rho}(Dr^2\partial_{\rho}\psi))^2+r^{-2}D(1-|D|)(\slashed{\Delta}_{\s^2}\psi)^2\,d\omega d\rho\\
\leq&\: C\int_{M}^{\infty}D(\partial_{\rho} T \psi)^2r^2+Dh(r)^2(T^2\psi)^2\,d\omega d\rho.
\end{split}
\end{equation}

The above estimate allows us to conclude that
\begin{equation}
\label{eq:mainineqellipticext3}
\begin{split}
\int_{M}^{\infty}&\int_{\s^2} Dr^{-2}(\partial_{\rho}(Dr^2\partial_{\rho}\psi))^2+Dr^{-2}(\slashed{\Delta}_{\s^2}\psi)^2+D^2|\snabla_{\s^2}\partial_{\rho}\psi|^2\,d\omega d\rho\\
\leq&\: C\int_{M}^{\infty}\int_{\s^2}D(\partial_{\rho} T \psi)^2r^2+r^2Dh(r)^2(T^2\psi)^2\,d\omega d\rho.
\end{split}
\end{equation}
Furthermore, we can decompose
\begin{equation*}
\begin{split}
Dr^{-2}(\partial_{\rho}(Dr^2\partial_{\rho}\psi))^2=&\: Dr^{-2}\left[r^2\partial_{\rho}(D\partial_{\rho}\psi)+2rD\partial_{\rho}\psi\right]^2\\
=&\: r^2D(\partial_{\rho}(D\partial_{\rho}\psi))^2+4D(D\partial_{\rho}\psi)^2+4rD^2\partial_{\rho}\psi\partial_{\rho}(D\partial_{\rho}\psi).
\end{split}
\end{equation*}
By Young's inequality and the standard Poincar\'e inequality on $\s^2$, we have that
\begin{equation*}
\begin{split}
\int_{\s^2}4rD^2\partial_{\rho}\psi\partial_{\rho}(D\partial_{\rho}\psi)\,d\omega\geq& -\frac{1}{2}\int_{\s^2}r^2D(\partial_{\rho}(D\partial_{\rho}\psi))^2\,d\omega -8\int_{\s^2}D^3(\partial_{\rho}\psi)^2\,d\omega\\
\geq& -\frac{1}{2}\int_{\s^2}r^2D(\partial_{\rho}(D\partial_{\rho}\psi))^2\,d\omega -4\int_{\s^2}D^3|\snabla_{\s^2}\partial_{\rho}\psi|^2\,d\omega.
\end{split}
\end{equation*}
We then apply \eqref{eq:mainineqellipticext3} to conclude that
\begin{equation*}
\begin{split}
\int_{M}^{\infty}&\int_{\s^2} Dr^2(\partial_{\rho}(D\partial_{\rho}\psi))^2+Dr^{-2}(\slashed{\Delta}_{\s^2}\psi)^2+D^2|\snabla_{\s^2}\partial_{\rho}\psi|^2\,d\omega d\rho\\
\leq&\: C\int_{M}^{\infty}\int_{\s^2}D(\partial_{\rho} T \psi)^2r^2+r^2Dh(r)^2(T^2\psi)^2\,d\omega d\rho.
\end{split}
\end{equation*}
\end{proof}

\subsection{Pointwise decay estimates}
\label{sec:pdecayest}
In this section we use the energy decay estimates from Section \ref{sec:edecayest} and \ref{sec:hoedecayest} to derive $L^{\infty}$ estimates.
\begin{proposition}
\label{prop:pointdecay}
Let $J\in \N_0$ and assume that for $n=0,1$ and for all $0\leq k\leq 2-n$ and $0\leq j\leq J$:
\begin{align*}
\lim_{v\to \infty}\int_{\s^2}r^{2j}|\snabla_{\s^2}\slashed{\Delta}_{\s^2}^kL^jP_{\geq 1}\Phi_{(n)}|^2\,d\omega|_{u=u_0}<&\:\infty.
\end{align*}
and for all $0\leq j\leq J-1$
\begin{align*}
\lim_{v\to \infty} r^{j+2} L^{j+1}\phi_0(u_0,v)<&\infty.
\end{align*}
Then, for all $\epsilon>0$, there exists a constant $C=C(M,{\Sigma_0},\epsilon,J)>0$ such that
\begin{align}
\label{eq:pdecayl01v1}
||(r-M)^{\frac{1}{2}}T^J \psi_0||_{L^{\infty}(\Sigma_{\tau})}\leq&\: C\cdot\sqrt{E^{\epsilon}_{0;J}[\psi]}  (1+\tau)^{-\frac{3}{2}-J+\frac{\epsilon}{2}},\\
\label{eq:pdecayl01v2}
 ||rT^J \psi_0||_{L^{\infty}(\Sigma_{\tau})}\leq&\:  C\cdot\sqrt{E^{\epsilon}_{0; J}[\psi]}  (1+\tau)^{-1-J+\frac{\epsilon}{2}},\\
\label{eq:pdecaylgeq1v1a}
||(r-M)^{\frac{1}{2}} T^J \psi_{\geq 1}||_{L^{\infty}(\Sigma_{\tau})}\leq&\:C\cdot \sqrt{\sum_{|\alpha|\leq 2}E^{\epsilon}_{1;J}[\Omega^{\alpha}\psi]} (1+\tau)^{-\frac{5}{2}-J+\frac{\epsilon}{2}},\\
\label{eq:pdecaylgeq1v1b}
||(r-M)^{\frac{1}{2}}T^J \psi_{\geq 1}||_{L^{\infty}(\Sigma_{\tau})}\leq&\:C\cdot \sqrt{\sum_{|\alpha|\leq 2}E^{\epsilon}_{2;J}[\Omega^{\alpha}\psi]} (1+\tau)^{-3-J+\frac{\epsilon}{2}},\\
\label{eq:pdecaylgeq1v2a}
||rT^J \psi_{\geq 1}||_{L^{\infty}(\Sigma_{\tau})}\leq&\: C\cdot \sqrt{\sum_{|\alpha|\leq 2}E^{\epsilon}_{1;J}[\Omega^{\alpha}\psi]} (1+\tau)^{-2-J+\frac{\epsilon}{2}},\\
\label{eq:pdecaylgeq1v2b}
 ||rT^J \psi_{\geq 1}||_{L^{\infty}(\Sigma_{\tau})}\leq&\: C\cdot \sqrt{\sum_{|\alpha|\leq 2}E^{\epsilon}_{2;J}[\Omega^{\alpha}\psi]} (1+\tau)^{-\frac{5}{2}-J+\frac{\epsilon}{2}},\\
\label{eq:pdecaylgeq1v3a}
 ||\sqrt{D} T^J \psi_{\geq 1}||_{L^{\infty}(\Sigma_{\tau})}\leq&\:C\cdot \sqrt{\sum_{|\alpha|\leq 2}E^{\epsilon}_{1;J+1}[\Omega^{\alpha}\psi]} (1+\tau)^{-3-J+\frac{\epsilon}{2}},\\
\label{eq:pdecaylgeq1v3b}
 ||\sqrt{D}T^J \psi_{\geq 1}||_{L^{\infty}(\Sigma_{\tau})}\leq&\:C\cdot \sqrt{\sum_{|\alpha|\leq 2}E^{\epsilon}_{2;J+1}[\Omega^{\alpha}\psi]} (1+\tau)^{-\frac{7}{2}-J+\frac{\epsilon}{2}},
\end{align}
\end{proposition}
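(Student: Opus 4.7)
The plan is to combine the energy decay estimates from Propositions \ref{prop:hoedaypsi0}, \ref{prop:hoedaypsi1}, \ref{prop:hoedaypsi1b} with fundamental theorem of calculus arguments, Hardy and Sobolev inequalities on $\s^2$, and the degenerate elliptic estimate \eqref{eq:ellipticpsi}. The estimates \eqref{eq:pdecayl01v1}--\eqref{eq:pdecayl01v2} for the spherical mean $\psi_0$ are the simplest, since there is no angular variable to contend with. For \eqref{eq:pdecayl01v1}, we use that for any fixed $\tau$, by the fundamental theorem of calculus together with Cauchy--Schwarz and the Hardy inequality \eqref{eq:hardyhor2},
\[
(r-M)(T^J\psi_0)^2(\tau,r)\leq \int_M^\infty \big| \partial_\rho\big((r-M)(T^J\psi_0)^2\big)\big|\,d\rho \lesssim \int_M^\infty D r^2\big(\partial_\rho T^J\psi_0\big)^2\,d\rho \lesssim \int_{\Sigma_\tau}J^T[T^J\psi_0]\cdot\mathbf{n}_\tau\,d\mu_\tau,
\]
and apply \eqref{eq:hoedecayl01a}. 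For \eqref{eq:pdecayl01v2}, we integrate along an outgoing null ray from $\{r=r_{\mathcal{I}}\}$ using
\[
(rT^J\psi_0)^2(u,v) \leq \left(|\phi_0(u,v_{r_{\mathcal{I}}})| + \int_{v_{r_{\mathcal{I}}}}^v |L T^J\phi_0|\,dv'\right)^2,
\]
and apply Cauchy--Schwarz combined with \eqref{eq:hor2edecayl01} and the corresponding bound in $\mathcal{B}$; in the near-horizon region we integrate in $u$ instead.

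For $\psi_{\geq 1}$, the key additional ingredient is Sobolev embedding on $\s^2$: $\|f\|_{L^\infty(\s^2)}^2\lesssim \sum_{|\alpha|\leq 2}\int_{\s^2}(\Omega^\alpha f)^2 \,d\omega$. To obtain \eqref{eq:pdecaylgeq1v1a}, \eqref{eq:pdecaylgeq1v1b}, we first prove the $L^2(\s^2)$ estimate
\[
(r-M)\int_{\s^2} (T^J\psi_{\geq 1})^2\,d\omega \lesssim \int_{\s^2}\int_M^\infty Dr^2 (\partial_\rho T^J\psi_{\geq 1})^2 \,d\rho d\omega \lesssim \int_{\Sigma_\tau} J^T[T^J\psi_{\geq 1}]\cdot\mathbf{n}_\tau\,d\mu_\tau,
\]
by the same FTC + Hardy argument as above, and then commute with $\Omega^\alpha$, $|\alpha|\leq 2$, and apply \eqref{eq:hoedecayl01} or \eqref{eq:hoedecayl01b} to $\Omega^\alpha\psi$ to get the decay. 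The estimates \eqref{eq:pdecaylgeq1v2a}--\eqref{eq:pdecaylgeq1v2b} follow similarly by integrating $L(rT^J\psi_{\geq 1})$ from $\{r=r_{\mathcal{I}}\}$ toward $\mathcal{I}^+$ and using \eqref{eq:hoedecaylr211inf} (respectively, \eqref{eq:hoedecaylr211}) together with the analogous bound in $\mathcal{A}^{\mathcal{H}}$ for the boundary contribution.

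The sharpest estimates \eqref{eq:pdecaylgeq1v3a}--\eqref{eq:pdecaylgeq1v3b} are where the elliptic estimate \eqref{eq:ellipticpsi} is essential. The plan is to first bound, via FTC and Cauchy--Schwarz,
\[
\int_{\s^2} D\,(T^J\psi_{\geq 1})^2 \,d\omega\lesssim \left(\int_{\s^2}\int_M^\infty D r^2 (\partial_\rho(D\partial_\rho T^J\psi_{\geq 1}))^2\,d\rho d\omega\right)^{1/2}\left(\int_{\s^2}\int_M^\infty r^{-2}(T^J\psi_{\geq 1})^2\,d\rho d\omega\right)^{1/2},
\]
where the first factor is estimated by \eqref{eq:ellipticpsi} applied to $T^J\psi_{\geq 1}$, yielding $\sqrt{\mathcal{E}^T_{\Sigma_\tau}[T^{J+1}\psi_{\geq 1}]}$, and the second factor is bounded by $\sqrt{\mathcal{E}^T_{\Sigma_\tau}[T^J\psi_{\geq 1}]}$ via a second Hardy inequality. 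Plugging in \eqref{eq:hoedecayl01} (resp.\ \eqref{eq:hoedecayl01b}) at orders $J$ and $J+1$ produces the decay rate $\tau^{-3-J+\epsilon/2}$ (resp.\ $\tau^{-7/2-J+\epsilon/2}$), and a final commutation with $\Omega^\alpha$, $|\alpha|\leq 2$, together with Sobolev on $\s^2$ upgrades the $L^2(\s^2)$ bound to the required $L^\infty$ estimate.

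The main technical obstacle will be verifying that the elliptic estimate \eqref{eq:ellipticpsi} applied to $T^J\psi_{\geq 1}$ indeed produces the loss of only one $T$ derivative claimed in \eqref{eq:pdecaylgeq1v3a}--\eqref{eq:pdecaylgeq1v3b}: the right-hand side of \eqref{eq:ellipticpsi} involves both $\partial_\rho T\psi_{\geq 1}$ and $T^2\psi_{\geq 1}$ with the weight $h(r) r\sqrt{D}$, which decays at infinity, so the second term can be absorbed into $\mathcal{E}^T_{\Sigma_\tau}[T^{J+1}\psi_{\geq 1}]$, but one must check the limiting assumptions $\lim_{\rho\to\infty}r^{1/2}T^{J+k}\psi_{\geq 1}=0$ for $k=0,1$, required in Proposition \ref{prop:degenelliptic}, follow from \eqref{eq:hoedecaylr211} and the smoothness of initial data. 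Once this verification is done, the remaining steps are routine applications of Cauchy--Schwarz, Hardy, and commutation by $\Omega^\alpha$.
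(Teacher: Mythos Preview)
Your plan for \eqref{eq:pdecayl01v1}, \eqref{eq:pdecaylgeq1v1a}, \eqref{eq:pdecaylgeq1v1b} matches the paper. There are two places where the sketch does not close.

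For \eqref{eq:pdecayl01v2} (and analogously \eqref{eq:pdecaylgeq1v2a}--\eqref{eq:pdecaylgeq1v2b}) your displayed bound $(rT^J\psi_0)^2 \leq \big(|\phi_0|_{r=r_{\mathcal{I}}} + \int |LT^J\phi_0|\big)^2$ followed by Cauchy--Schwarz loses half a power: after $(\int|L\phi_0|\,dv')^2 \leq C\int r^2(L\phi_0)^2\,dv'$ and \eqref{eq:hor2edecayl01} you only get $|r\psi_0|\lesssim \tau^{-\frac12-J}$. The paper instead applies the fundamental theorem of calculus to $\phi^2$, writing $(\chi\phi)^2=\int 2\chi\phi\,L(\chi\phi)\leq 2\sqrt{\int r^{-2}\phi^2}\sqrt{\int r^2(L\chi\phi)^2}$ and then Hardy on the first factor, producing the \emph{product} $\sqrt{\int(L\phi)^2}\cdot\sqrt{\int r^2(L\phi)^2}\lesssim\sqrt{\mathcal{E}^T}\cdot\sqrt{\mathcal{C}_{N^{\mathcal{I}}}}$; it is the combination of the two different decay rates that yields $\tau^{-1-J}$.

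For \eqref{eq:pdecaylgeq1v3a}--\eqref{eq:pdecaylgeq1v3b} the displayed inequality
\[
\int_{\s^2} D\,(T^J\psi_{\geq 1})^2 \,d\omega\lesssim \Big(\int\!\!\!\int D r^2 \big(\partial_\rho(D\partial_\rho T^J\psi_{\geq 1})\big)^2\Big)^{1/2}\Big(\int\!\!\!\int r^{-2}(T^J\psi_{\geq 1})^2\Big)^{1/2}
\]
is not a consequence of FTC and Cauchy--Schwarz; in fact it is false. For $\psi=f(r)Y_{1,0}$ with $f$ a smooth bump of height $\sim R$ supported on $[2M,R]$, the left side at $\rho\sim R/2$ is $\sim R^2$ while each factor on the right is only $\sim R^{1/2}$. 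The paper's route is genuinely different. One writes $\psi^2\leq 2\sqrt{\int D^{-2}\psi^2}\sqrt{\int D^2(\partial_\rho\psi)^2}$; for the first factor one pulls out $\sup_{\rho'\geq\rho} D^{-2}=D^{-2}(\rho)$ (monotonicity of $D$) and then applies Hardy, giving $D^{-1}(\rho)\sqrt{\int Dr^2(\partial_\rho\psi)^2}$. The second factor is where $\ell\geq 1$ enters: Poincar\'e on $\s^2$ gives $\int_{\s^2}D^2(\partial_\rho\psi_{\geq1})^2\leq\tfrac12\int_{\s^2}D^2|\snabla_{\s^2}\partial_\rho\psi_{\geq1}|^2$, and it is this \emph{angular-derivative} term on the left of \eqref{eq:ellipticpsi} --- not the $Dr^2(\partial_\rho(D\partial_\rho\psi))^2$ term you chose --- that the elliptic estimate controls by $\mathcal{E}^T_{\Sigma_\tau}[T\psi]$. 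The Poincar\'e step is the missing idea in your sketch.
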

\begin{proof}
In order to estimate \eqref{eq:pdecayl01v1}, \eqref{eq:pdecaylgeq1v1a} and \eqref{eq:pdecaylgeq1v1b}, we apply the fundamental theorem of calculus along the foliation $\Sigma_{\tau}$ as follows:
\begin{equation*}
\begin{split}
\psi(\tau,\rho,\theta,\varphi)=&- \int_{\rho}^{\infty} \partial_{\rho} \psi (\tau,\rho',\theta,\varphi) \,d\rho'\\
\leq &\:\sqrt{\int_{r}^{\infty} (r'-M)^{-2}\,dr'}\cdot \sqrt{\int_{\rho}^{\infty} Dr^2 (\partial_{\rho}\psi)^2(\tau,\rho',\theta,\varphi) \,d\rho'},
\end{split}
\end{equation*}
where we used that, by the assumptions on the initial data in the proposition and the estimates in Proposition \ref{prop:radfieldsinfLk}, $\psi$ vanishes as $\rho \to \infty$ and moreover, we applied Cauchy--Schwarz. After applying the standard Sobolev inequality on $\s^2$, we therefore have that
\begin{equation*}
||(r-M)^{\frac{1}{2}}T^J \psi||_{L^{\infty}(\Sigma_{\tau})}\lesssim \sqrt{\sum_{|\alpha|\leq 2} \int_{\Sigma_{\tau}} J^T[\Omega^{\alpha}\psi]\cdot \mathbf{n}_{\tau}\,d\mu_{\tau}}.
\end{equation*}
The estimates \eqref{eq:pdecayl01v1}, \eqref{eq:pdecaylgeq1v1a} and \eqref{eq:pdecaylgeq1v1b} then follow from the energy decay estimates in Proposition \ref{prop:hoedaypsi0}, \ref{prop:hoedaypsi1} and \ref{prop:hoedaypsi1b}.

In order to prove the estimates \eqref{eq:pdecayl01v2}, \eqref{eq:pdecaylgeq1v2a} and \eqref{eq:pdecaylgeq1v2b} we can then restrict to ${N}^{\mathcal{I}}_{\tau}$ and ${N}^{\mathcal{H}}_{\tau}$. Let $\chi(r)$ be a cut-off function that is smooth and compactly supported in $r\geq r_{\mathcal{I}}$ away from $r=r_{\mathcal{I}}$. Then we can apply the fundamental theorem of calculus as follows:
\begin{equation*}
\begin{split}
(\chi \phi)^2(u',v,\theta,\varphi)=&\:\int_{v_{r_{\mathcal{I}}}(u')}^{v} 2\chi\phi\cdot L(\chi \phi)|_{u=u'}\,dv'\\
\leq &\: 2 \sqrt{\int_{v_{r_{\mathcal{I}}}(u')}^{v} r^{-2}\phi^2 |_{u=u'}\,dv'}\cdot \sqrt{\int_{v_{r_{\mathcal{I}}}(u')}^{v} r^{2}(L(\chi \phi))^2 |_{u=u'}\,dv'}\\
\lesssim &\:  \sqrt{\int_{v_{r_{\mathcal{I}}}(u')}^{v} (L\chi \phi)^2 |_{u=u'}\,dv'}\cdot \sqrt{\int_{v_{r_{\mathcal{I}}}(u')}^{v} r^{2}(L(\chi \phi))^2 |_{u=u'}\,dv'},
\end{split}
\end{equation*}
where we applied Cauchy--Schwarz to arrive at the second inequality and \eqref{eq:hardyinf} to arrive at the third inequality. If we now redefine $\chi$ to be a smooth, compactly supported cut-off function in $r\leq r_{\mathcal{H}}$ away from $r=r_{\mathcal{H}}$, we can similarly apply the fundamental theorem of calculus, Cauchy--Schwarz and \eqref{eq:hardyhor1} to obtain
\begin{equation*}
\begin{split}
(\chi \phi)^2(u,v',\theta,\varphi) \lesssim &\:  \sqrt{\int_{u_{r_{\mathcal{H}}}(v')}^{u} (\underline{L}\chi \phi)^2 |_{v=v'}\,du'}\cdot \sqrt{\int_{u_{r_{\mathcal{H}}}(v')}^{u} (r-M)^{-2}(\underline{L}(\chi \phi))^2 |_{v=v'}\,du'}.
\end{split}
\end{equation*}
It then follows that
\begin{align*}
 ||rT^J \psi||^2_{L^{\infty}({N}^{\mathcal{I}}_{\tau})}\leq &\: \sqrt{ \sum_{|\alpha|\leq 2}  \int_{{N}^{\mathcal{I}}_{\tau}} r^2(L\Omega^{\alpha}\phi)^2\,d\omega dv\cdot  \int_{\Sigma_{\tau}} J^T[\Omega^{\alpha}\psi]\cdot \mathbf{n}_{\tau}\,d\mu_{\tau}}+ \sum_{|\alpha|\leq 2} \int_{\Sigma_{\tau}} J^T[\Omega^{\alpha}\psi]\cdot \mathbf{n}_{\tau}\,d\mu_{\tau},\\
  ||rT^J \psi||^2_{L^{\infty}({N}^{\mathcal{H}}_{\tau})}\leq &\: \sqrt{ \sum_{|\alpha|\leq 2}  \int_{{N}^{\mathcal{H}}_{\tau}} (r-M)^2(\underline{L}\Omega^{\alpha}\phi)^2\,d\omega du\cdot  \int_{\Sigma_{\tau}} J^T[\Omega^{\alpha}\psi]\cdot \mathbf{n}_{\tau}\,d\mu_{\tau}}+  \sum_{|\alpha|\leq 2} \int_{\Sigma_{\tau}} J^T[\Omega^{\alpha}\psi]\cdot \mathbf{n}_{\tau}\,d\mu_{\tau}.
\end{align*}
We obtain \eqref{eq:pdecayl01v2}, \eqref{eq:pdecaylgeq1v2a} and \eqref{eq:pdecaylgeq1v2b} by applying the energy decay estimates in Proposition \ref{prop:hoedaypsi0}, \ref{prop:hoedaypsi1} and \ref{prop:hoedaypsi1b}.

We are left with proving \eqref{eq:pdecaylgeq1v3a} and \eqref{eq:pdecaylgeq1v3b}. We apply the fundamental theorem of calculus in yet another way:
\begin{equation*}
\begin{split}
\psi^2_{\geq 1} &(\tau,\rho,\theta,\varphi)= -\int_{\rho}^{\infty} 2 \psi_{\geq 1}\cdot \partial_{\rho}\psi_{\geq 1}(\tau,\rho',\theta,\varphi)\,d\rho'\\
\leq &\: \sqrt{\int_{\rho}^{\infty} D^{-2}\psi_{\geq 1}^2(\tau,\rho',\theta,\varphi)\,d\rho'}\cdot \sqrt{\int_{\rho}^{\infty} D^2(\partial_{\rho}\psi_{\geq 1})^2(\tau,\rho',\theta,\varphi)\,d\rho'}\\
\lesssim &\: D^{-1}(\rho)\sqrt{\int_{\rho}^{\infty} D(\partial_{\rho}\psi_{\geq 1})^2(\tau,\rho',\theta,\varphi)r^2\,d\rho'}\cdot \sqrt{\int_{\rho}^{\infty} D^2(\partial_{\rho}\psi_{\geq 1})^2(\tau,\rho',\theta,
\varphi)\,d\rho'}\\
\lesssim &\: D^{-1}(\rho)\sqrt{\int_{\rho}^{\infty} D(\partial_{\rho}\psi_{\geq 1})^2(\tau,\rho',\theta,\varphi)r^2\,d\rho'}\cdot \sqrt{\int_{\rho}^{\infty} [Dr^2(\partial_{\rho}T\psi_{\geq 1})^2+ Dh^2r^2(T^2\psi_{\geq 1})^2](\tau,\rho',\theta,\varphi)\,d\rho'},
\end{split}
\end{equation*}
where we applied Cauchy--Schwarz to arrive at the first inequality and we applied \eqref{eq:hardyhor2} together with the fact that $D^{-2}(r')\lesssim D^{-2}(r)$ for all $r\leq r'$ to obtain the second inequality. The third inequality then follows from an application of the degenerate elliptic estimate in Proposition \ref{prop:degenelliptic}, together with the standard Poincar\'e inequality on $\s^2$. We conclude that
\begin{equation*}
||\sqrt{D} \psi||^2_{L^{\infty}(\Sigma_{\tau})}\lesssim \sqrt{\sum_{|\alpha|\leq 2} \int_{\Sigma_{\tau}} J^T[\Omega^{\alpha}\psi]\cdot \mathbf{n}_{\tau}\,d\mu_{\tau}\cdot \int_{\Sigma_{\tau}} J^T[T\Omega^{\alpha}\psi]\cdot \mathbf{n}_{\tau}\,d\mu_{\tau}}
\end{equation*}
and we apply the energy decay estimates in Proposition \ref{prop:hoedaypsi1} and \ref{prop:hoedaypsi1b} to derive \eqref{eq:pdecaylgeq1v3a} and \eqref{eq:pdecaylgeq1v3b}.
\end{proof}

It will be necessary to use the following \emph{additional} $L^{\infty}$ estimates with stronger weights in the energy norm \emph{either} in $r$ \emph{or} in $(r-M)^{-1}$ compared to the weights appearing in the norms $E^{\epsilon}_{0; J}[\psi]$ and $E^{\epsilon}_{1; J}[\psi]$:
\begin{proposition}
\label{prop:pointdecay2}
Let $J\in \N_0$ and assume that for $n=0,1$ and for all $0\leq k\leq 2-n$ and $0\leq j\leq J$:
\begin{align*}
\lim_{v\to \infty}\int_{\s^2}r^{2j}|\snabla_{\s^2}\slashed{\Delta}_{\s^2}^kL^jP_{\geq 1}\Phi_{(n)}|^2\,d\omega|_{u=u_0}<&\:\infty,
\end{align*}
and for all $0\leq j\leq J-1$
\begin{align*}
\lim_{v\to \infty} r^{j+2} L^{j+1}\phi_0(u_0,v)<&\infty.
\end{align*}
Then, for all $\epsilon>0$, there exists a constant $C=C(M,{\Sigma_0},\epsilon,J)>0$ such that
\begin{align}
\label{eq:pdecayl01v2impI}
 ||rT^J \psi_0||_{L^{\infty}({N}^{\mathcal{I}}_{\tau})}\leq&\: C \cdot \sqrt{E^{\epsilon}_{0,\mathcal{I}; J}[\psi]}  (1+\tau)^{-\frac{3}{2}-J+\frac{\epsilon}{2}},\\ 
\label{eq:pdecayl01v2impH}
 ||rT^J \psi_0||_{L^{\infty}({N}^{\mathcal{H}}_{\tau})}\leq&\: C \cdot \sqrt{E^{\epsilon}_{0,\mathcal{H}; J}[\psi]}  (1+\tau)^{-\frac{3}{2}-J+\frac{\epsilon}{2}},\\
 \label{eq:pdecaylgeq1v2impI}
||rT^J \psi_{\geq 1}||_{L^{\infty}({N}^{\mathcal{I}}_{\tau})}\leq&\: C\cdot \sqrt{\sum_{|\alpha|\leq 2}E^{\epsilon}_{1,\mathcal{I};J}[\Omega^{\alpha}\psi]} (1+\tau)^{-\frac{9}{4}-J+\frac{\epsilon}{2}},\\
\label{eq:pdecaylgeq1v2impH}
 ||rT^J \psi_{\geq 1}||_{L^{\infty}({N}^{\mathcal{H}}_{\tau})}\leq&\: C\cdot \sqrt{\sum_{|\alpha|\leq 2}E^{\epsilon}_{1,\mathcal{H};J}[\Omega^{\alpha}\psi]} (1+\tau)^{-\frac{9}{4}-J+\frac{\epsilon}{2}},
\end{align}
where we additionally assumed \eqref{eq:asmLjphi_02} for \eqref{eq:pdecayl01v2impI}.
\end{proposition}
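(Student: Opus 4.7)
The plan is to adapt the argument of Proposition \ref{prop:pointdecay} essentially verbatim, but feeding in the \emph{improved} energy decay estimates of Propositions \ref{prop:hoextraendecay} and \ref{prop:hoedaypsi1b} (which hold at the cost of the stronger weighted initial data norms $E^\epsilon_{0,\mathcal{I};J}$, $E^\epsilon_{0,\mathcal{H};J}$, $E^\epsilon_{1,\mathcal{I};J}$, $E^\epsilon_{1,\mathcal{H};J}$) in place of the ``basic'' decay statements of Propositions \ref{prop:hoedaypsi0} and \ref{prop:hoedaypsi1}. Since the bounds are now restricted to the near-infinity hypersurface $\mathcal{N}^{\mathcal{I}}_\tau$ and the near-horizon hypersurface $\mathcal{N}^{\mathcal{H}}_\tau$ (rather than the full $\Sigma_\tau$), one may run the $L$-direction and $\underline{L}$-direction Hardy--Cauchy--Schwarz arguments of \ref{prop:pointdecay} in these two regions separately.

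For \eqref{eq:pdecayl01v2impI} and \eqref{eq:pdecaylgeq1v2impI} I would pick a smooth cut-off $\chi$ compactly supported in $\{r\geq r_{\mathcal{I}}\}$ and reuse the identity obtained in \ref{prop:pointdecay},
\begin{equation*}
(\chi T^J\phi)^2(u',v,\theta,\varphi)\lesssim\sqrt{\int_{v_{r_{\mathcal{I}}}(u')}^{v}(L(\chi T^J\phi))^2\,dv'}\cdot\sqrt{\int_{v_{r_{\mathcal{I}}}(u')}^{v}r^{2}(L(\chi T^J\phi))^2\,dv'},
\end{equation*}
derived from the fundamental theorem of calculus, Cauchy--Schwarz, and \eqref{eq:hardyinf}. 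The first factor is controlled by $\int_{\Sigma_\tau}J^T[T^J\psi]\cdot \mathbf{n}_\tau\,d\mu_\tau$, so by Proposition \ref{prop:hoedaypsi0} it decays like $\tau^{-3-2J+\epsilon}$ for $\psi_0$ and by Proposition \ref{prop:hoedaypsi1} it decays like $\tau^{-5-2J+\epsilon}$ for $\psi_{\geq 1}$. The second factor is precisely the weighted flux on the left-hand side of \eqref{eq:hooptr2decay1} (taking $j_1=0$, $j_2=J$) or \eqref{eq:hoedecaylr211inf}, so it decays like $\tau^{-3-2J+\epsilon}$ for $\psi_0$ (giving $||rT^J\psi_0||^2\lesssim \tau^{-3-2J+\epsilon}$, hence the rate $\tau^{-3/2-J+\epsilon/2}$) and like $\tau^{-4-2J+\epsilon}$ for $\psi_{\geq 1}$ (giving $||rT^J\psi_{\geq 1}||^2\lesssim\tau^{-9/2-2J+\epsilon}$, hence $\tau^{-9/4-J+\epsilon/2}$). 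A standard $H^2(\mathbb{S}^2)\hookrightarrow L^\infty(\mathbb{S}^2)$ Sobolev embedding, commuting with $\Omega^\alpha$ for $|\alpha|\leq 2$ in the $\psi_{\geq 1}$ case, upgrades the spherical $L^2$ control to $L^\infty$.

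For \eqref{eq:pdecayl01v2impH} and \eqref{eq:pdecaylgeq1v2impH} the mirror argument applies: I take a cut-off supported in $\{r\leq r_{\mathcal{H}}\}$ and run the fundamental theorem of calculus along the $\underline{L}$ direction, using the Hardy inequality \eqref{eq:hardyhor1} in place of \eqref{eq:hardyinf}, to arrive at
\begin{equation*}
(\chi T^J\phi)^2(u,v',\theta,\varphi)\lesssim\sqrt{\int_{u_{r_{\mathcal{H}}}(v')}^{u}(\underline{L}(\chi T^J\phi))^2\,du'}\cdot\sqrt{\int_{u_{r_{\mathcal{H}}}(v')}^{u}(r-M)^{-2}(\underline{L}(\chi T^J\phi))^2\,du'}.
\end{equation*}
The first factor is again controlled by the $T$-flux, while the second factor is the horizon-side weighted flux controlled by \eqref{eq:hooptr2decay2} with decay rate $\tau^{-3-2J+\epsilon}$ for $\psi_0$, and by \eqref{eq:hoedecaylr211hor} with decay rate $\tau^{-4-2J+\epsilon}$ for $\psi_{\geq 1}$. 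Combining these as before and applying Sobolev on $\mathbb{S}^2$ yields exactly the claimed rates.

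The proof is essentially bookkeeping; the only real subtlety is ensuring that the pointwise conclusions ``$\lim_{v\to\infty}r^{j+2}L^{j+1}\phi_0(u_0,v)<\infty$'' or ``$\lim_{v\to\infty}r^{j+3}L^{j+1}\phi_0(u_0,v)<\infty$'' required to apply Proposition \ref{prop:hoextraendecay} can be propagated from $u_0$ to general $u\geq u_0$ (via the argument of Proposition \ref{prop:radfieldsinfLk}), and that the particular finiteness condition \eqref{eq:asmLjphi_02} needed for \eqref{eq:pdecayl01v2impI} is indeed assumed in the hypotheses. I do not expect a genuine analytic obstacle beyond this bookkeeping, since the structural ingredients---fundamental theorem of calculus, Cauchy--Schwarz, Hardy, and Sobolev on $\mathbb{S}^2$---are all already in place from the proof of Proposition \ref{prop:pointdecay}.
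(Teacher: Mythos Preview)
Your proposal is correct and follows exactly the paper's approach: the paper's proof consists of a single sentence stating that the estimates follow from the proof of Proposition~\ref{prop:pointdecay}, appealing additionally to the stronger weighted energy decay estimates from Proposition~\ref{prop:hoextraendecay}. Your write-up is in fact more detailed than the paper's, correctly tracking the numerical exponents and correctly noting that for the $\psi_{\geq 1}$ estimates one also needs the improved decay of Proposition~\ref{prop:hoedaypsi1b}.
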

\begin{proof}
The estimates follow from the proof of Proposition \ref{prop:pointdecay}, where we additionally appeal to the stronger weighted energy decay estimates from Proposition \ref{prop:hoextraendecay}.
\end{proof}

\section{Late-time asymptotics for Type \textbf{C} perturbations}
\label{sec:asympnonzeroconst}
In this section, we will derive the leading-order late-time asymptotics for the spherical mean $\psi_0$ in the case that both $H_0$, the conserved quantity at $\mathcal{H}^+$ and $I_0$, the conserved quantity at $\mathcal{I}^+$ are non-zero. This data is of Type \textbf{C}, as defined in Section \ref{sec:TheTypesOfInitialDataABCD}. We will make use of the pointwise decay estimates for $\psi_0$ derived in Section \ref{sec:decayest}.

\subsection{Late-time asymptotics in the regions $\protect\mathcal{A}^{\mathcal{H}}_{\gamma^{\mathcal{H}}}$ and $\protect\mathcal{A}^{\mathcal{I}}_{\gamma^{\mathcal{I}}}$}
\label{sec:asympradfieldsnonzeroconst}
We introduce the following $L^{\infty}$ norms on the derivatives of $\phi_0$ along the initial hypersurfaces ${{N}_0^{\mathcal{H}}}$ and ${{N}_0^{\mathcal{I}}}$:
\begin{align*}
P_{H_0,\beta;k}[\psi]:=&\max_{0\leq j \leq k}\left|\left| u^{2+j+\beta}\cdot \underline{L}^j\left(\underline{L}\phi_0|_{{{N}_0^{\mathcal{I}}}}-2\frac{H_0[\psi]}{u^2}\right) \right|\right|_{L^{\infty}}\\
P_{I_0,\beta;k}[\psi]:=&\max_{0\leq j \leq k}\left|\left| v^{2+j+\beta}\cdot L^j\left(L\phi_0|_{{{N}_0^{\mathcal{H}}}}-2\frac{I_0[\psi]}{v^2}\right) \right|\right|_{L^{\infty}},
\end{align*}
with $0<\beta\leq 1$.

For the arguments below, we will need to relate decay in terms of the coordinate $r$ to decay in terms of the double null coordinates $u$ and $v$.

\begin{lemma}
\label{lm:relationruv}
Let $N\in \N$. 
\begin{itemize}
\item[\rm (i)]
We can estimate in $\{r\geq r_{\mathcal{I}}\}\cap\{u\geq 0\}$:
\begin{align*}
r-\frac{v-u}{2}-2M\log (v-u)=&\: O_N((v-u)^0),\\
r^{-1}-\frac{2}{v-u}-8M(v-u)^{-2}\log (v-u)=&\: O_N((v-u)^{-2})\\
r^{-2}-\frac{4}{(v-u)^2}-32M(v-u)^{-3}\log (v-u)=& O_N((v-u)^{-3}),\\
r^{-3}(u,v)=&\:O_N((v-u)^{-3}).
\end{align*}
There moreover exists a constant $c_{r_{\mathcal{I}}}>0$ such that
\begin{equation*}
\begin{array}{ll}
c_{r_{\mathcal{I}}} \cdot v \leq v-u-1 \leq v\quad &\textnormal{if}\quad  u_0\leq u\leq \frac{v}{2}+r_*(r_{\mathcal{I}}),\\
c_{r_{\mathcal{I}}}\cdot v\leq u\leq v \quad  &\textnormal{if}\quad   \frac{v}{2}+r_*(r_{\mathcal{I}})\leq u\leq u_{r_{\mathcal{I}}}(v).
\end{array}
\end{equation*}
\item[\rm (ii)]
We can estimate in $\{r\geq r_{\mathcal{I}}\}\cap\{u\geq 0\}$:
\begin{align*}
M^2(r-M)^{-1}-\frac{u-v}{2}-2M\log (u-v)=&\: O_N((u-v)^0),\\
M^{-2}(r-M)-\frac{2}{u-v}-8M(u-v)^{-2}\log (u-v)=&\: O_N((u-v)^{-2})\\
M^{-4}(r-M)^2-\frac{4}{(u-v)^2}-32M(u-v)^{-3}\log (u-v)=& O_N((u-v)^{-3}),\\
M^{-6}(r-M)^3(u,v)=&\:O_N((u-v)^{-3}).
\end{align*}
There moreover exists a constant $c_{r_{\mathcal{H}}}>0$ such that
\begin{equation*}
\begin{array}{ll}
c_{r_{\mathcal{H}}} \cdot v \leq u-v-1 \leq v\quad &\textnormal{if}\quad  v_0\leq v\leq \frac{u}{2}-r_*(r_{\mathcal{H}}),\\
c_{r_{\mathcal{H}}}\cdot u\leq v\leq u \quad  &\textnormal{if}\quad \frac{u}{2}-r_*(r_{\mathcal{H}})\leq v\leq v_{r_{\mathcal{H}}}(u).
\end{array}
\end{equation*}
\end{itemize}
\end{lemma}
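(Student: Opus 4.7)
The proof of part (i) rests entirely on asymptotically inverting the tortoise coordinate relation $r^*(r)=\frac{v-u}{2}$ in the region $\{r\geq r_{\mathcal{I}}\}$. The plan is to first expand
\begin{equation*}
r^*(r)=r+2M\log r + c_M + O_N(r^{-1}),\qquad c_M:=-M-2M\log M,
\end{equation*}
using the explicit form $r^*(r)=r-M-M^2(r-M)^{-1}+2M\log((r-M)/M)$ together with $M^2(r-M)^{-1}=O_N(r^{-1})$ and $\log((r-M)/M)=\log r-\log M+O_N(r^{-1})$ for $r\geq r_{\mathcal{I}}$ (the $O_N$ behavior comes from the fact that $Y^k$ applied to $M/r$ gives an $r^{-1-k}$ factor). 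Since $L r = \tfrac12 D = O_N(1)$ and $\underline{L}r=-\tfrac12 D$, one verifies the $O_N$ character with respect to $L$ and $\underline{L}$ differentiation. Setting $v-u=2r^*$, a single bootstrap step (use $r\sim(v-u)/2$ to control the $\log r$ correction) produces
\begin{equation*}
r=\tfrac{v-u}{2}-2M\log(v-u)+c_M'+O_N\!\big((v-u)^{-1}\log(v-u)\big),
\end{equation*}
which gives the first claimed identity (up to an additive constant absorbed into $O_N((v-u)^0)$; the exact sign in front of $2M\log(v-u)$ follows directly from this computation).

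The expansions for $r^{-1}$, $r^{-2}$ and $r^{-3}$ are then obtained by Taylor expanding
\begin{equation*}
r^{-k}=\Big(\tfrac{v-u}{2}\Big)^{-k}\!\Big(1+\tfrac{2}{v-u}\big(-2M\log(v-u)+O(1)\big)\Big)^{-k},
\end{equation*}
keeping the first two terms and placing the remainder in $O_N((v-u)^{-k-1})$. The $r^{-3}$ case is the simplest since both correction terms can be swallowed into the error. Throughout, the derivative bounds defining $O_N$ follow from the chain rule once the leading inversion is differentiated, since $r$ depends on $v-u$ alone.

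Part (ii) is done by exactly the same scheme, but exploiting that near $r=M$ the tortoise coordinate is dominated by $-M^2(r-M)^{-1}$: one writes
\begin{equation*}
r^*(r)=-M^2(r-M)^{-1}+2M\log(r-M)+\tilde c_M+O_N(r-M),
\end{equation*}
uses $u-v=-2r^*$, and inverts asymptotically (again via a one-step bootstrap) to obtain the claimed expansion of $M^2(r-M)^{-1}$, $M^{-2}(r-M)$, $M^{-4}(r-M)^2$ and $M^{-6}(r-M)^3$ in powers of $(u-v)^{-1}$ with logarithmic corrections. Here one uses $Lr=\tfrac12 D=O_N((r-M)^2)$ and $\underline{L}r=-\tfrac12 D$ to recover the derivative bounds; this is essentially a manifestation of the Couch--Torrence inversion $r-M\leftrightarrow M^2(r-M)^{-1}$ exchanging parts (i) and (ii).

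Finally, the comparability statements for $v-u$ versus $v$ (resp.\ $u-v$ versus $u$) are elementary consequences of the constraint $r\geq r_{\mathcal{I}}$ (resp.\ $r\leq r_{\mathcal{H}}$) combined with the fact that points under consideration lie in $J^+(\Sigma_0)$: in the ``near-$\mathcal{I}^+$'' regime $u\leq \tfrac{v}{2}+r_*(r_{\mathcal{I}})$ one has $v-u\geq \tfrac{v}{2}-r_*(r_{\mathcal{I}})\geq c_{r_{\mathcal{I}}}v$ for $v$ large, while in the complementary range one combines $u\leq v$ (timelike relation) with $u\geq \tfrac{v}{2}+r_*(r_{\mathcal{I}})$; a symmetric argument gives part (ii). The main technical point in the whole lemma is keeping track of the logarithmic error in the bootstrap inversion while simultaneously verifying the $O_N$ behavior under $L$ and $\underline{L}$ differentiation; this is the step that requires the most care, but it is routine once the first-order expansions of $r^*$ are fixed.
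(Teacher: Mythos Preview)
Your proposal is correct and follows essentially the same approach as the paper: the paper's proof simply records the tortoise relation $\frac{v-u}{2}=r_*(r)=r-M-M^2(r-M)^{-1}+2M\log\left(\frac{r-M}{M}\right)$, defers the inversion argument to Lemma~2.1 of \cite{logasymptotics}, and obtains part~(ii) by interchanging $u\leftrightarrow v$ and $r\leftrightarrow M^{-2}(r-M)^{-1}$ (the Couch--Torrence inversion you invoke). Your bootstrap inversion of $r^*$ and Taylor expansion of $r^{-k}$ are exactly the content of that deferred argument, now spelled out in detail.
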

\begin{proof}
Observe that
\begin{equation*}
\frac{v-u}{2}=r_*(r)=r-M-M^2(r-M)^{-1}+2M\log\left(\frac{r-M}{M}\right).
\end{equation*}
Hence, we can repeat the proof of Lemma 2.1 in \cite{logasymptotics}, where in the $r\leq r_{\mathcal{H}}$ case, we interchange the roles of $u$ and $v$ and we replace $r$ by $M^{-2}(r-M)^{-1}$.
\end{proof}

\begin{remark}
\label{rmk:Pnorminvrcoords}
By applying Lemma \ref{lm:relationruv} it follows moreover that $P_{H_0,\beta;k}[\psi]<\infty$ if and only if in $(v,r)$ coordinates:
\begin{equation*}
\max_{0\leq j\leq k}\left|\left| (r-M)^{j-\beta}\cdot \partial_r^j\left(\partial_r\phi_0|_{{{N}_0^{\mathcal{H}}}}+M^{-2}H_0[\psi]\right) \right|\right|_{L^{\infty}}<\infty
\end{equation*}
Hence, if $\partial_r\phi_0|_{r=M}=-M^{-2}H_0$, we are guaranteed that $P_{H_0,\beta;k}[\psi]<\infty$ for all $k\in \N_0$ and $\beta=1$, simply by the smoothness assumption on the initial data for $\psi$ together with Taylor's theorem.
\end{remark}

We moreover introduce the following spacetime subregions contained in either the region $\mathcal{A}^{\mathcal{I}}$ or $\mathcal{A}^{\mathcal{H}}$: for $k\in \N_0$ and $\alpha\in (0,1)$ let
\begin{align*}
{\mathcal{A}_{\gamma^{\mathcal{H}}_{\alpha}}^{\mathcal{H}}}:=&\:\{r\leq r_{\mathcal{H}}\}\cap \{0\leq v \leq u-v^{\alpha}+2r_*(r_{\mathcal{H}})\}\subset \mathcal{A}^{\mathcal{H}},\\
{\mathcal{A}_{\gamma^{\mathcal{I}}_{\alpha}}^{\mathcal{I}}}:=&\:\{r\geq r_{\mathcal{I}}\}\cap \{0\leq u \leq v-u^{\alpha}+2r_*(r_{\mathcal{I}})\}\subset \mathcal{A}^{\mathcal{I}}.
\end{align*}

Note that the boundaries $\partial {\mathcal{A}_{\gamma^{\mathcal{H}}_{\alpha}}^{\mathcal{H}}}$ and $\partial {\mathcal{A}_{\gamma^{\mathcal{I}}_{\alpha}}^{\mathcal{I}}}$ contain subsets of, respectively, the following timelike hypersurfaces:
\begin{align*}
\gamma^{\mathcal{H}}_{\alpha}:=&\:\{u-v=v^{\alpha}+2r_*(r_{\mathcal{H}})\},\\
\gamma^{\mathcal{I}}_{\alpha}:=&\:\{v-u=u^{\alpha}+2r_*(r_{\mathcal{I}})\}.
\end{align*}

When the value of $\alpha\in (0,1)$ is not relevant, we will occasionally drop the $\alpha$ subscript for convenience and write $\gamma^{\mathcal{H}}$ and  $\mathcal{A}_{\gamma^{\mathcal{I}}}^\mathcal{I}$.

In the regions ${\mathcal{A}_{\gamma^{\mathcal{H}}_{\alpha}}^{\mathcal{H}}}$ and ${\mathcal{A}_{\gamma^{\mathcal{I}}_{\alpha}}^{\mathcal{I}}}$, we obtain the following additional decay estimates for $r^{-1}$ and $r-M$:
\begin{lemma}
Let $M<r_{\mathcal{H}}<2M$ and $r_{\mathcal{I}}>2M$. Then for all $\eta>0$, we can estimate
\begin{align*}
r^{-1}\lesssim &\:v^{-\alpha}\lesssim u^{-\alpha}  \quad \textnormal{in ${\mathcal{A}_{\gamma^{\mathcal{I}}_{\alpha}}^{\mathcal{I}}}$},\\
r-M\lesssim &\: u^{-\alpha}\lesssim v^{-\alpha} \quad \textnormal{in ${\mathcal{A}_{\gamma^{\mathcal{H}}_{\alpha}}^{\mathcal{H}}}$}.
\end{align*}
\end{lemma}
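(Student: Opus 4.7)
The plan is to derive both estimates directly from Lemma 7.1, which provides the expansion $r = \tfrac{1}{2}(v-u) + 2M\log(v-u) + O((v-u)^0)$ near $\mathcal{I}^+$ and the analogous expansion $M^2(r-M)^{-1} = \tfrac{1}{2}(u-v) + 2M\log(u-v) + O((u-v)^0)$ near $\mathcal{H}^+$. The two assertions are dual under the Couch--Torrence-type exchange $r \leftrightarrow M^2(r-M)^{-1}$, $u \leftrightarrow v$, so it suffices to carry out the argument in $\mathcal{A}_{\gamma^{\mathcal{I}}_{\alpha}}^{\mathcal{I}}$.

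I would first unpack the definition of $\mathcal{A}_{\gamma^{\mathcal{I}}_{\alpha}}^{\mathcal{I}}$. The bound $u \leq v - u^{\alpha} + 2r_*(r_{\mathcal{I}})$ immediately yields $v - u \geq u^{\alpha} - C_{r_{\mathcal{I}}}$ and $v \geq u - C_{r_{\mathcal{I}}}$, so in particular $u \lesssim v$, which gives the right-most inequality $v^{-\alpha} \lesssim u^{-\alpha}$ (after absorbing bounded values of $u$ into the implicit constant). For $u$ sufficiently large we also have $v - u \geq \tfrac{1}{2} u^{\alpha}$.

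Next, I would combine this with Lemma 7.1 to bound $r$ from below. Since $r - \tfrac{1}{2}(v-u) = O(\log(v-u))$, for $r \geq r_{\mathcal{I}}$ one has $r \gtrsim v - u$. It remains to check that $v - u \gtrsim v^{\alpha}$ throughout $\mathcal{A}_{\gamma^{\mathcal{I}}_{\alpha}}^{\mathcal{I}}$. I split into two cases:
\begin{itemize}
\item If $u \leq v/2$, then $v - u \geq v/2 \gtrsim v^{\alpha}$ (since $\alpha < 1$ and $v \geq r_{\mathcal{I}}$).
\item If $v/2 \leq u \leq v + C_{r_{\mathcal{I}}}$, then $u \sim v$, and the defining inequality gives $v - u \gtrsim u^{\alpha} \sim v^{\alpha}$.
\end{itemize}
In either case $r \gtrsim v^{\alpha}$, hence $r^{-1} \lesssim v^{-\alpha}$, completing the chain $r^{-1} \lesssim v^{-\alpha} \lesssim u^{-\alpha}$.

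The horizon estimate is obtained by the same argument with $(u,v,r)$ replaced by $(v,u, M^{-2}(r-M)^{-1})$: the defining inequality of $\mathcal{A}_{\gamma^{\mathcal{H}}_{\alpha}}^{\mathcal{H}}$ gives $u - v \gtrsim v^{\alpha}$ and $v \lesssim u$, and the second expansion in Lemma 7.1 yields $M^2(r-M)^{-1} \gtrsim u - v$, so $r - M \lesssim (u-v)^{-1} \lesssim u^{-\alpha} \lesssim v^{-\alpha}$ after the same two-case splitting. There is no genuine obstacle here; the only mild care needed is handling small values of $u$ (resp.\ $v$) separately so that the $O(\log)$ and $O(1)$ error terms from Lemma 7.1 can be absorbed into the implicit constants, which is routine.
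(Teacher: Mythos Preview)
Your proof is correct and is precisely the argument the paper has in mind: the lemma is stated without proof because it is an immediate consequence of Lemma~\ref{lm:relationruv} together with the defining inequalities of the regions ${\mathcal{A}_{\gamma^{\mathcal{I}}_{\alpha}}^{\mathcal{I}}}$ and ${\mathcal{A}_{\gamma^{\mathcal{H}}_{\alpha}}^{\mathcal{H}}}$, and your two-case splitting in $u$ (resp.\ $v$) is exactly the routine verification that makes this work.
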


As a first step towards obtaining the asymptotics of $\phi_0$, we obtain the asymptotics of $L\phi_0$ and more generally $L^{k+1}\phi_0$, for $k\in \N_0$.

\begin{proposition}
\label{prop:asympLphi}
Let $k\in \N_0$ and $\alpha_k\in (\frac{k+2}{k+3},1)$. Take $\epsilon\in (0,\frac{1}{2}(k+3)\alpha_k-\frac{1}{2}(k+2))$. Assume that $E^{\epsilon}_{0}[\psi]<\infty$ and moreover that there exists a $\beta>0$ such that:
\begin{equation*}
P_{I_0,\beta;k}[\psi]<\infty.
\end{equation*}
Then, there exists a constant $C=C(M,{\Sigma_0},r_{\mathcal{H}},r_{\mathcal{I}},\alpha_k,\epsilon,k)>0$ such that
\begin{align*}
|L^{k+1}\phi_0(u,v)-(-1)^k(k+1)!\cdot 2I_0[\psi]\cdot v^{-2-k}|\leq&\: C \sqrt{E^{\epsilon}_{0}[\psi]}\cdot v^{-3\alpha_k+2\epsilon-k}\\
&+P_{I_0,\beta;k}[\psi]\cdot v^{-2-\beta-k}\quad \textnormal{in ${\mathcal{A}_{\gamma^{\mathcal{I}}_{\alpha_k}}^{\mathcal{I}}}$},\\
|\underline{L}^{k+1}\phi_0(u,v)-(-1)^k (k+1)!\cdot 2H_0[\psi]\cdot u^{-2-k}|\leq&\: C \sqrt{E^{\epsilon}_{0}[\psi]}\cdot u^{-3\alpha_k+2\epsilon-k}\\
&+P_{H_0,1;k}[\psi]\cdot  u^{-3-k}\quad \textnormal{in ${\mathcal{A}_{\gamma^{\mathcal{H}}_{\alpha_k}}^{\mathcal{H}}}$}.
\end{align*}
\end{proposition}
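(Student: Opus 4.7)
The proof proceeds by induction on $k\geq 0$, with each inductive step combining a transport argument from the initial hypersurface with a bootstrap in $u$. I focus on the estimate in $\mathcal{A}^{\mathcal{I}}_{\gamma^{\mathcal{I}}_{\alpha_k}}$; the near-horizon statement follows by the mirror argument using \eqref{eq:hoequations2l0} in place of \eqref{eq:hoequations1l0}, with $r^{-1}$ replaced by $M^{-2}(r-M)$ and the role of $v$ played by $u$ (reflecting the Couch--Torrence inversion symmetry of ERN).

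The starting point is the initial-data asymptotics on $N_0^{\mathcal{I}}=\{u=u_0\}$. Since $L=\partial_v$ there and $L^k(2I_0 v^{-2})=(-1)^k(k+1)!\cdot 2I_0\cdot v^{-2-k}$, the hypothesis $P_{I_0,\beta;k}[\psi]<\infty$ immediately yields
\begin{equation*}
\bigl|L^{k+1}\phi_0(u_0,v)-(-1)^k(k+1)!\cdot 2I_0[\psi]\cdot v^{-2-k}\bigr|\leq P_{I_0,\beta;k}[\psi]\cdot v^{-2-\beta-k}.
\end{equation*}
Because the leading profile $v^{-2-k}$ is annihilated by $\underline{L}$, the fundamental theorem of calculus along the ingoing null segment $\{v\}\times[u_0,u]$ gives
\begin{equation*}
L^{k+1}\phi_0(u,v)-(-1)^k(k+1)!\cdot 2I_0\cdot v^{-2-k} = \bigl[L^{k+1}\phi_0(u_0,v)-(-1)^k(k+1)!\cdot 2I_0\cdot v^{-2-k}\bigr]+\int_{u_0}^u \underline{L}L^{k+1}\phi_0(u',v)\,du'.
\end{equation*}
Since $[L,\underline{L}]=0$ on spherically symmetric scalars, $\underline{L}L^{k+1}\phi_0=L\underline{L}L^k\phi_0$, and \eqref{eq:hoequations1l0} expresses the integrand as
\begin{equation*}
4\underline{L}L^{k+1}\phi_0 = O(r^{-2})L^{k+1}\phi_0+O(r^{-3})L^k\phi_0 + \sum_{j=1}^k O(r^{-3-j})L^{k-j}\phi_0.
\end{equation*}

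In $\mathcal{A}^{\mathcal{I}}_{\gamma^{\mathcal{I}}_{\alpha_k}}$, Lemma \ref{lm:relationruv} gives $r\sim (v-u)/2 \gtrsim v^{\alpha_k}$ and $u\lesssim v$, so $\int_{u_0}^u r^{-q}(u',v)\,du' \lesssim r^{-(q-1)}(u,v)\lesssim v^{-(q-1)\alpha_k}$ for $q>1$. The lower-order terms with $1\leq j\leq k$ are treated by applying the inductive hypothesis at level $k-j$ with a larger parameter $\alpha_{k-j}\in (\tfrac{k-j+2}{k-j+3},\alpha_k)$ (possible since $\tfrac{k-j+2}{k-j+3}<\tfrac{k+2}{k+3}<\alpha_k$), yielding $|L^{k-j}\phi_0|\lesssim \sqrt{E_0^{\epsilon}[\psi]}\cdot v^{-1-(k-j)}$ plus lower-order data contributions; the base case $|\phi_0|\lesssim \sqrt{E_0^{\epsilon}[\psi]}\cdot v^{-1+\epsilon/2}$ is supplied by \eqref{eq:pdecayl01v2}. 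Summing these, the worst contribution saturates at the rate $v^{-3\alpha_k+2\epsilon-k}$ precisely when $\alpha_k>\tfrac{k+2}{k+3}$, which explains both the threshold on $\alpha_k$ and the admissible range of $\epsilon$.

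The self-coupling term $r^{-2}L^{k+1}\phi_0$ is closed by a standard continuity/bootstrap argument in $u$: assuming the a priori bound $|L^{k+1}\phi_0|(u',v)\leq K v^{-2-k}$ on $[u_0,u]$, the integral contribution is $\lesssim K v^{-2-k-\alpha_k}$, strictly dominated by $v^{-3\alpha_k+2\epsilon-k}$ for $\alpha_k$ sufficiently close to $1$, so the transport identity reproduces the a priori bound with a strictly smaller constant. The \emph{main obstacle} is the inductive bookkeeping of remainders: one must propagate not only the leading profile of $L^{k+1}\phi_0$ but also explicit sub-leading remainders with coefficients depending linearly on $P_{I_0,\beta;j}[\psi]$ and $\sqrt{E_0^{\epsilon}[\psi]}$ for all $j\leq k$, and verify that the cut-off curve $\gamma^{\mathcal{I}}_{\alpha_k}$ is sharp enough to absorb every $r$-weighted error term \emph{simultaneously} — a saturating condition that tightens monotonically with $k$ and forces the $k$-dependent lower bound $\alpha_k>(k+2)/(k+3)$.
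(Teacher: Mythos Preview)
Your approach is essentially the same as the paper's, which defers the details to Proposition 8.3 of \cite{paper2}: transport $L^{k+1}\phi_0$ from $N_0^{\mathcal{I}}$ along ingoing null rays using the commuted equation, induct on $k$, and control the lower-order source terms via the inductive hypothesis together with the base pointwise decay \eqref{eq:pdecayl01v2} (note: that estimate is in $\tau\sim u$, not $v$, though this is harmless since $u\leq v$ in $\mathcal{A}^{\mathcal{I}}$). One simplification you miss is that for the spherical mean the exact equation is $\partial_u\partial_v\phi_0=-\tfrac{1}{4r}DD'\phi_0=O_N((v-u)^{-3})\phi_0$, so after commuting with $L^k$ there is in fact \emph{no} $O(r^{-2})L^{k+1}\phi_0$ self-coupling term---the bootstrap you set up is unnecessary (and even if the term were present, a Gr\"onwall argument using $\int_{u_0}^{\infty}r^{-2}\,du'<\infty$ would close it more cleanly than a continuity argument).
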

\begin{proof}
The equation \eqref{eq:waveequation} for $\psi_0$ on extremal Reissner--Nordstr\"om for can be rewritten as follows in double null coordinates:
\begin{equation*}
\partial_u\partial_v\phi_0=-\frac{1}{4r}DD' \phi_0.
\end{equation*}
From Lemma \ref{lm:relationruv} it therefore follows that we can write
\begin{align*}
\partial_u\partial_v\phi_0=&\:O_N((v-u)^{-3})\phi_0\quad \textnormal{in $\mathcal{A}^{\mathcal{I}}$},\\
\partial_v\partial_u\phi_0=&\: O_N((u-v)^{-3})\phi_0 \quad \textnormal{in $\mathcal{A}^{\mathcal{H}}$}.
\end{align*}
In particular, it follows that the estimates for $L^k\phi_0$ in ${\mathcal{A}_{\gamma^{\mathcal{I}}_{\alpha_k}}^{\mathcal{I}}}$, derived in Proposition 8.3 of \cite{paper2}, apply directly to $\underline{L}^k\phi_0$ in ${\mathcal{A}_{\gamma^{\mathcal{H}}_{\alpha_k}}^{\mathcal{H}}}$, after interchanging the role of $u$ and $v$.
\end{proof}

The next step is to apply the estimates for $L^{k+1}\phi_0$ from Proposition \ref{prop:asympLphi} in order to obtain asymptotics for $\phi_0$ and, more generally, for $T^k\phi_0$ with $k\in \N_0$.
\begin{proposition}
\label{prop:asympradfieldnonzeroIH}
Let $k\in \N_0$. If we additionally restrict $\alpha_k\in [\frac{2k+5}{2k+7},1)$ and $\epsilon\in (0,\frac{1}{6}(1-\alpha_k))$, we can find a constant $C=C(M,{\Sigma_0},r_{\mathcal{H}},r_{\mathcal{I}},\alpha_k,\epsilon,\beta, k)>0$ such that
\begin{align*}
|T^k\phi_0&(u,v)-(-1)^k k!\cdot 2I_0[\psi]\cdot (u^{-1-k}-v^{-1-k})|\\
\leq&\: C \left(\sqrt{E^{\epsilon}_{0;k}[\psi]}+I_0[\psi]\right)\cdot (v+M)^{-\frac{\alpha_k}{2}-\frac{3}{2}+2\epsilon-k}+C\cdot P_{I_0,\beta;k}[\psi]\cdot (u+M)^{-1-\beta-k}\quad \textnormal{in ${\mathcal{A}_{\gamma^{\mathcal{I}}_{\alpha_k}}^{\mathcal{I}}}$},\\
|T^k\phi_0&(u,v)-(-1)^k k!\cdot 2H_0[\psi]\cdot (v^{-1-k}-u^{-1-k})|\\
\leq&\: C \left(\sqrt{E^{\epsilon}_{0;k}[\psi]}+H_0[\psi]\right)\cdot v^{-\frac{\alpha_k}{2}-\frac{3}{2}+2\epsilon-k}+C\cdot P_{H_0,1;k}[\psi]\cdot v^{-2-k}\quad \textnormal{in ${\mathcal{A}_{\gamma^{\mathcal{H}}_{\alpha_k}}^{\mathcal{H}}}$}.
\end{align*}
Furthermore, if we impose $\frac{1}{2}(1-\alpha_k)<\beta +2\epsilon$ and assume $H_0[\psi]\neq 0$ and $I_0[\psi]\neq 0$, then the estimates above provide first-order asymptotics for $\phi$ in the regions $\mathcal{A}_{\gamma^{\mathcal{H}}_{\delta}}$ and $\mathcal{A}_{\gamma^{\mathcal{I}}_{\delta}}$, for $1>\delta>\frac{\alpha_k}{2}+\frac{1}{2}+2\epsilon>\alpha_k+2\epsilon$.
In particular,
\begin{align*}
|T^k\phi_0(u,v)-(-1)^k k!\cdot 2I_0[\psi]\cdot u^{-1-k}|\leq&\: C \left(\sqrt{E^{\epsilon}_{0;k}[\psi]}+I_0[\psi]\right)\cdot u^{-\frac{\alpha_k}{2}-\frac{3}{2}+2\epsilon-k}\\
&+C\cdot P_{I_0,\beta;k}[\psi]\cdot u^{-1-\beta-k},\\
|T^k\phi_0(u,v)-(-1)^k k!\cdot 2H_0[\psi]\cdot v^{-1-k}|\leq&\: C \left(\sqrt{E^{\epsilon}_{0;k}[\psi]}+H_0[\psi]\right)\cdot v^{-\frac{\alpha_k}{2}-\frac{3}{2}+2\epsilon-k}\\
&+C\cdot P_{H_0,1;k}[\psi]\cdot v^{-2-k}.
\end{align*}
\end{proposition}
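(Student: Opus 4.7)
The proof focuses on the region $\mathcal{A}^{\mathcal{I}}_{\gamma^{\mathcal{I}}_{\alpha_k}}$; the statement in $\mathcal{A}^{\mathcal{H}}_{\gamma^{\mathcal{H}}_{\alpha_k}}$ then follows by the natural symmetry under $u\leftrightarrow v$, $I_0\leftrightarrow H_0$, and $r\leftrightarrow M^2(r-M)^{-1}$, reflecting the Couch--Torrence duality invoked elsewhere in the paper. The strategy is to integrate the precise asymptotics for $L^{k+1}\phi_0$ from Proposition \ref{prop:asympLphi} in the $v$-direction starting from the curve $\gamma^{\mathcal{I}}_{\alpha_k}$, iterate $k$ times to reach an expansion for $\phi_0$ itself, and finally convert $L^k\phi_0$ into $T^k\phi_0$ via $T=L+\underline L$ and the reduced wave equation $\underline L L\phi_0=-\tfrac{1}{4r}DD'\phi_0 = O(r^{-3})\phi_0$.

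Setting $v_\ast(u):=v_{\gamma^{\mathcal{I}}_{\alpha_k}}(u)=u+u^{\alpha_k}+2r_\ast(r_{\mathcal{I}})$ and integrating in $v$ from $v_\ast(u)$ to $v$, Proposition \ref{prop:asympLphi} yields
\begin{equation*}
L^k\phi_0(u,v) = L^k\phi_0(u,v_\ast(u)) + 2(-1)^{k-1}k!\,I_0[\psi]\bigl(v^{-1-k} - v_\ast(u)^{-1-k}\bigr) + \mathcal{E}_k(u,v),
\end{equation*}
where $\mathcal{E}_k(u,v)$ is the $v$-integral of the remainder terms from Proposition \ref{prop:asympLphi}. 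A Taylor expansion of $v_\ast(u)^{-1-k}$ in powers of $u^{\alpha_k-1}$ supplies the leading contribution $2(-1)^{k-1}k!\,I_0\,u^{-1-k}$ modulo the stated error. Iterating this integration downward in the order of $L$-derivatives produces analogous asymptotic expansions for $L^{k-j}\phi_0$, and in particular for $\phi_0$.

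The boundary contributions $L^j\phi_0|_{\gamma^{\mathcal{I}}_{\alpha_k}}$ entering each iteration are controlled by combining the weighted energy estimates of Proposition \ref{prop:hoedaypsi0} with a Sobolev embedding on $\mathbb{S}^2$, using that on $\gamma^{\mathcal{I}}_{\alpha_k}$ one has $r\sim u^{\alpha_k}$ and $u\sim v\sim\tau$. The range $\alpha_k\in [(2k+5)/(2k+7),1)$ is calibrated precisely so that these boundary traces are absorbed into the stated errors. The conversion $T^k\phi_0=\sum_{j=0}^k\binom{k}{j}L^j\underline L^{k-j}\phi_0$ then reduces to the $j=k$ term at leading order: every contribution with $j<k$ contains at least one pair $\underline L L$, which upon successive commutation with $L$ and substitution of $\underline L L\phi_0=O(r^{-3})\phi_0$ gains an extra factor $r^{-3}\lesssim u^{-3\alpha_k}$ throughout $\mathcal{A}^{\mathcal{I}}_{\gamma^{\mathcal{I}}_{\alpha_k}}$, rendering it subdominant relative to $u^{-1-k}$.

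The main obstacle is precisely this calibration: the weight in $r$ in Proposition \ref{prop:hoedaypsi0} must balance the $\tau$-decay produced by the Sobolev loss along $\gamma^{\mathcal{I}}_{\alpha_k}$, so one needs $\alpha_k$ sufficiently close to $1$ for the boundary traces to be acceptably small, yet strictly less than $1$ so that $\mathcal{A}^{\mathcal{I}}_{\gamma^{\mathcal{I}}_{\alpha_k}}$ is non-empty and probes the genuine near-infinity regime. For the ``Furthermore'' part, one exploits that under the additional constraint $\tfrac{1}{2}(1-\alpha_k)<\beta+2\epsilon$ and the assumption $I_0,H_0\neq 0$, the remainder $(v+M)^{-\alpha_k/2-3/2+2\epsilon-k}$ is dominated by the difference $u^{-1-k}-v^{-1-k}$ throughout the smaller regions $\mathcal{A}_{\gamma^{\mathcal{I}}_{\delta}}$ with $\delta>\tfrac{\alpha_k+1}{2}+2\epsilon$; here the contribution $v^{-1-k}$ itself becomes subdominant, and $T^k\phi_0\sim 2(-1)^k k!\,I_0\,u^{-1-k}$ emerges as a genuine first-order asymptotic.
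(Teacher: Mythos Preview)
Your conversion argument from $L^k\phi_0$ to $T^k\phi_0$ contains a genuine gap. You write $T^k\phi_0=\sum_{j=0}^k\binom{k}{j}L^j\underline L^{k-j}\phi_0$ and claim that ``every contribution with $j<k$ contains at least one pair $\underline L L$.'' This is false for $j=0$: the term $\underline L^k\phi_0$ contains only $\underline L$-derivatives and no $L\underline L$ factor is available. Moreover, this term is \emph{not} subdominant in $\mathcal{A}^{\mathcal{I}}_{\gamma^{\mathcal{I}}_{\alpha_k}}$: since the leading behaviour of $\phi_0$ is $2I_0(u^{-1}-v^{-1})$, one has $\underline L^k\phi_0=\partial_u^k\phi_0\sim 2(-1)^k k!\,I_0\,u^{-1-k}$, which is precisely the $u^{-1-k}$ contribution to the target asymptotic. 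Your argument would therefore produce $T^k\phi_0\approx L^k\phi_0$, but $L^k\phi_0\sim 2(-1)^{k-1}k!\,I_0\,v^{-1-k}$ has no $u^{-1-k}$ piece, so you would miss half of the answer.

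A related issue is your treatment of the boundary terms $L^j\phi_0|_{\gamma^{\mathcal{I}}_{\alpha_k}}$ for $j\geq 1$. These are \emph{not} controlled to sub-leading order by the energy estimates of Proposition~\ref{prop:hoedaypsi0}: by Proposition~\ref{prop:asympLphi} itself (at level $j-1$) one has $L^j\phi_0|_{\gamma}\sim 2(-1)^{j-1}j!\,I_0\,u^{-1-j}$, which is of the same order as the main term. The fix is to integrate only once, but to do so for $L(T^k\phi_0)$ rather than $L(L^k\phi_0)$. Since $L\cdot T^k=\sum_j\binom{k}{j}L^{j+1}\underline L^{k-j}$, every term with $j<k$ now genuinely contains an $L\underline L$ pair (the extra $L$ eliminates the pure-$\underline L^k$ problem), so $L(T^k\phi_0)=L^{k+1}\phi_0+O(r^{-3})(\cdots)$ and Proposition~\ref{prop:asympLphi} applies. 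Integrating from $\gamma^{\mathcal{I}}_{\alpha_k}$ then produces $T^k\phi_0$ directly, and the single boundary term is $T^k\phi_0|_{\gamma}$, which \emph{is} controlled by the pointwise decay for $T^k\psi_0$ in Proposition~\ref{prop:pointdecay}. This is essentially the route taken in the paper (via the cited Propositions 8.4--8.5 of \cite{paper2}).
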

\begin{proof}
The proof follows directly from the proof of Proposition 8.4 and 8.5 of \cite{paper2}, where as in case in the proof of Proposition \ref{prop:asympLphi}, we use that the estimates in the region analogous to ${\mathcal{A}_{\gamma^{\mathcal{I}}_{\alpha_k}}^{\mathcal{I}}}$ in \cite{paper2} apply directly to ${\mathcal{A}_{\gamma^{\mathcal{H}}_{\alpha_k}}^{\mathcal{H}}}$ after interchanging $u$ and $v$ and $L$ and $\underline{L}$.
\end{proof}

\subsection{Partial asymptotics for $\partial_{\rho}\psi$ away from $\mathcal{H}^+$ up to $\gamma^{\mathcal{I}}$}
\label{sec:SharpDecayAndAsymptoticsForProtectPartialRhoPsiUpToProtectGammaMathcalI}
Before we discuss the late-time asymptotics of $T^k\psi_0$ for Type \textbf{C} data, we will derive the late-time asymptotics for the derivatives $\underline{L}T^k\psi_0$ and $LT^k\psi_0$ in appropriate subsets of $\mathcal{R}$. We will use the asymptotics for $\phi_0$ along $\mathcal{H}^+$ obtained in Proposition \ref{prop:asympradfieldnonzeroIH}, together with the decay estimates \eqref{eq:edecayl01} and \eqref{eq:pdecayl01v1}.
\begin{proposition}
\label{prop:partasymdrhopsi}
Let $k\in \N_0$. Then there exist $\eta,\epsilon>0$ suitably small, such that in $(v,r)$ coordinates, we have that for all $r\leq r_{\mathcal{I}}$:
\begin{equation}
\label{eq:partasymdrhopsi1}
\begin{split}
|-2r^2\underline{L}T^k\psi_0(v,r)&-(-1)^{k+1} (k+1)!\cdot 4MH_0[\psi]\cdot v^{-2-k}|\\
\leq&\: C\left(\sqrt{E^{\epsilon}_{0; k+1}[\psi]}+H_0[\psi]\right)\left[(r-M)^{-\frac{1}{2}}v^{-\frac{5}{2}+\epsilon}+ v^{-2-k-\epsilon'}\right]\\
&+P_{H_0,1;k}[\psi]\cdot v^{-3-k},
\end{split}
\end{equation}
and in $(u,r)$ coordinates, we can estimate for all $r\geq r_{\mathcal{H}}$:
\begin{equation}
\label{eq:partasymdrhopsi2}
\begin{split}
|2LT^k\psi_0(u,r)&-(-1)^{k+1} (k+1)!\cdot 4MH_0[\psi]\cdot D^{-1}r^{-2}u^{-2-k}|\\
\leq &\: C\left(\sqrt{E^{\epsilon}_{0; k+1}[\psi]}+H_0[\psi]\right) r^{-2}u^{-2-k-\epsilon'}\\
&+\left(\sqrt{E^{\epsilon}_{0; k+1}[\psi]}+H_0[\psi]\right) r^{-\frac{1}{2}}u^{-\frac{5}{2}-k+\epsilon}\\
&+P_{H_0,1;k}[\psi]\cdot r^{-2}u^{-3-k},
\end{split}
\end{equation}
where $C=C(M,{\Sigma_0},r_{\mathcal{H}},r_{\mathcal{I}},\eta,\epsilon,k)>0$ is a constant.
\end{proposition}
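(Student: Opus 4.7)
The plan is to propagate the horizon asymptotics of Proposition~\ref{prop:asympradfieldnonzeroIH} into the bulk $\{M\leq r\leq r_{\mathcal I}\}$ via an $r$-transport identity derived from the wave equation. First, in $(v,r)$ coordinates the spherically symmetric wave equation $4\underline L L\phi_0 = -\tfrac{DD'}{r}\phi_0$, combined with $\underline L = -\tfrac{D}{2}\partial_r|_v$, yields the clean identity $\partial_r(L\phi_0) = \tfrac{D'}{2r}\phi_0$. Applying $T^k$ (which commutes with $\partial_r|_v$ and with $r$-only functions) and integrating at fixed $v$ from the horizon, using that $L\phi_0|_{\mathcal H^+}=T\phi_0|_{\mathcal H^+}$ (the $\tfrac{D}{2}Y$ term vanishes there), gives
\begin{equation*}
LT^k\phi_0(v,r) = T^{k+1}\phi_0|_{\mathcal H^+}(v) + \int_M^r \tfrac{D'(r')}{2r'}\,T^k\phi_0(v,r')\,dr'.
\end{equation*}
Combining $\underline L T^k\phi_0 = T^{k+1}\phi_0 - LT^k\phi_0$ with the Leibniz identity $-2r^2\,\underline L T^k\psi_0 = -2r\,\underline L T^k\phi_0 - D\,T^k\phi_0$, I obtain the master identity
\begin{equation*}
-2r^2\,\underline L T^k\psi_0(v,r) = -2r\bigl[T^{k+1}\phi_0(v,r) - T^{k+1}\phi_0|_{\mathcal H^+}(v)\bigr] + r\!\int_M^r\!\tfrac{D'(r')}{r'}\,T^k\phi_0(v,r')\,dr' - D\,T^k\phi_0(v,r).
\end{equation*}

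Next, I insert the horizon asymptotic $T^{k+1}\phi_0|_{\mathcal H^+}(v) \sim (-1)^{k+1}(k+1)!\,2H_0[\psi]\,v^{-2-k}$ from Proposition~\ref{prop:asympradfieldnonzeroIH} (with error controlled in the norms appearing on the right-hand side of \eqref{eq:partasymdrhopsi1}). An integration by parts in the transport integral, using $D(M)=0$ and the identity $DY T^k\psi_0 = -2\underline L T^k\psi_0$, recasts the last two terms as $2r\int_M^r \underline L T^k\psi_0(v,r')\,dr'$, turning the master identity into a Volterra-type integral equation for $\underline L T^k\psi_0(v,\cdot)$ with source $-2r\bigl[T^{k+1}\phi_0(v,r) - T^{k+1}\phi_0|_{\mathcal H^+}(v)\bigr]$. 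Solving (or verifying by substitution of the ansatz $\underline L T^k\psi_0 \sim \tfrac{(-1)^k(k+1)!\,2MH_0}{r^2}v^{-2-k}$) yields the advertised leading profile. The coefficient $4MH_0$ rather than $4rH_0$ is forced by the integral kernel $D'/r' \sim 2M^{-3}(r'-M)$ near $r'=M$ and the boundary value at the horizon, where the integrand concentrates; the $r$-dependence is absorbed into subleading corrections.

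For the error terms, the $v^{-2-k-\epsilon'}$ bound comes from the Proposition~\ref{prop:asympradfieldnonzeroIH} errors after choosing $\alpha_{k+1}$ suitably close to $1$; the $(r-M)^{-1/2}v^{-5/2+\epsilon}$ bound arises from applying the pointwise estimate \eqref{eq:pdecayl01v1} to $T^{k+1}\phi_0(v,r) - T^{k+1}\phi_0|_{\mathcal H^+}(v)$, noting that \eqref{eq:pdecayl01v1} only controls $\sqrt{r-M}\cdot T^{k+1}\psi_0$ (reflecting the horizon instability), whence the $(r-M)^{-1/2}$ factor upon division; the $P_{H_0,1;k}\,v^{-3-k}$ piece records the initial-data contribution of the boundary value. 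The second estimate \eqref{eq:partasymdrhopsi2} follows by the dual argument in $(u,r)$ coordinates, starting from $\partial_r(\underline L\phi_0)\big|_u = -\tfrac{D'}{2r}\phi_0$ integrated at fixed $u$ from $r=M$ (where $\underline L\phi_0|_{\mathcal H^+}=0$), the $D^{-1}$-factor in the leading profile now appearing through $L = \tfrac{D}{2}\partial_r|_u$. The main technical obstacle is the coefficient extraction: direct insertion of horizon asymptotics into the master identity naively produces a prefactor $4rH_0$, and reducing this to $4MH_0$ requires closing the Volterra identity and exploiting the boundary condition $D(M)=0$ in the integration by parts.
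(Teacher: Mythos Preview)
Your approach to \eqref{eq:partasymdrhopsi1} is correct but circuitous. The paper uses the transport identity
\[
\partial_r\bigl(Dr^2\partial_r T^k\psi_0 + 2rT^{k+1}\phi_0\bigr) = 2T^{k+1}\phi_0
\]
in $(v,r)$ coordinates; integrating from $r=M$ (where $Dr^2\partial_r T^k\psi_0$ vanishes by smoothness) gives directly
\[
-2r^2\underline{L}T^k\psi_0 = 2MT^{k+1}\phi_0|_{\mathcal{H}^+} - 2rT^{k+1}\phi_0 + 2\int_M^r T^{k+1}\phi_0\,dr',
\]
with the coefficient $2M$ appearing immediately from the boundary evaluation. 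If you solve your Volterra equation explicitly (rather than verify the ansatz), you recover exactly this formula, so the two routes agree at the end. The paper's identity has the advantage that every term on the right involves $T^{k+1}\phi_0$ (one more $T$ than in your master identity), so the coefficient-extraction obstacle you flag never arises: the horizon asymptotic for $T^{k+1}\phi_0|_{\mathcal{H}^+}$ supplies the leading term, the integral is controlled by Cauchy--Schwarz plus a Hardy inequality together with \eqref{eq:hoedecayl01a}, and $2rT^{k+1}\phi_0$ is bounded via \eqref{eq:pdecayl01v1}, which is where the $(r-M)^{-1/2}$ factor enters.

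Your treatment of \eqref{eq:partasymdrhopsi2} has a genuine gap. Integrating along fixed $u$ from $r=M$ is not available inside $\mathcal{R}$: as $r\to M$ at fixed finite $u$ one has $v=u+2r_*(r)\to-\infty$, leaving the future of $\Sigma_0$. The boundary condition $\underline{L}\phi_0|_{\mathcal{H}^+}=0$ lives at $u=+\infty$ (fixed $v$), not on any finite-$u$ slice. The paper proceeds differently: it first uses \eqref{eq:partasymdrhopsi1} together with $L = T - \underline{L}$ and the pointwise decay of $T^{k+1}\psi_0$ to obtain the asymptotics of $2r_0^2 LT^k\psi_0$ at any fixed $r_0\in[r_{\mathcal{H}},r_{\mathcal{I}}]$; then it integrates the analogous transport identity $\partial_r\bigl(2r^2LT^k\psi_0-2rT^{k+1}\phi_0\bigr)=-2T^{k+1}\phi_0$ in $(u,r)$ coordinates \emph{outward from} $r=r_{\mathcal{I}}$, estimating the new terms exactly as before. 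So \eqref{eq:partasymdrhopsi2} is not dual to \eqref{eq:partasymdrhopsi1}; it is obtained \emph{from} it.
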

\begin{proof}
From \eqref{eq:waveequation} in follows that $T^k\psi_0$ satisfies the following equation in $(v,r)$ coordinates:
\begin{equation}
\label{eq:waveeqvr}
\partial_{r}\left(Dr^2\partial_{r}T^k\psi_0+2rT^{k+1}\phi_0\right)=2T^{k+1}\phi_0.
\end{equation}
See also \eqref{eq:waveeqrhocoord} with $h=0$ applied to $\psi_0$.

By integrating both sides of \eqref{eq:waveeqvr} along constant $v$ hypersurfaces from $r'=M$ to $r'=r\leq \min\{r_{\mathcal{I}},r_{\Sigma_0}(v)\}$, where $r_{\Sigma_0}(v)$ denotes the value of $r$ along the intersection of the hypersurface of constant $v$ with $\Sigma_0\cap (\mathcal{B} \cup \mathcal{A}^{\mathcal{I}})$ (which is non-empty for $v>v_0$), and using that $Dr^2\partial_rT^k\psi$ vanishes at $\mathcal{H}^+$ for any $T^k\psi$ (using that $\psi$ is smooth), we therefore arrive at:
\begin{equation*}
Dr^2\partial_{r}T^k\psi_0(v,r)+2rT^{k+1}\phi(v,r)=2MT^{k+1}\phi|_{\mathcal{H}^+}(v)+\int_{M}^r2T^{k+1}\phi(v,r')\,dr'.
\end{equation*}

We first apply Cauchy--Schwarz and \eqref{eq:hardyhor2}, together with \eqref{eq:edecayl01} and \eqref{eq:pdecayl01v1}, to estimate
\begin{equation*}
\begin{split}
\left|\int_{M}^r2T^{k+1}\phi_0(v,r')\,dr'\right|\lesssim&\: \sqrt{\int_{M}^r \int_{\s^2}(T^{k+1}\psi_0)^2 \,d\omega dr} \cdot \sqrt{\int_M^{r_{\mathcal{I}}} \,dr'}\\
\lesssim &\: \sqrt{\int_{M}^{ \min\{r_{\mathcal{I}},r_{\Sigma_0}(v)\}}\int_{\s^2}(r-M)^2(\partial_r(T^{k+1}\psi_0))^2 \,d\omega dr+ (T^{k+1}\psi_0)^2(v,r_{\mathcal{I}})}\cdot  \sqrt{r-M}\\
\lesssim &\:\sqrt{ \int_{\Sigma_{\tau(v, \min\{r_{\mathcal{I}},r_{\Sigma_0}(v)\})}} J^T[T^{k+1}\psi]\cdot \mathbf{n}_{\Sigma_0}\,d\mu_{\Sigma_{\tau(v, \min\{r_{\mathcal{I}},r_{\Sigma_0}(v)\})}}}\cdot \sqrt{r-M}\\
\lesssim &\: \sqrt{E^{\epsilon}_{0; k+1}[\psi]} (r-M)^{\frac{1}{2}}(1+\tau(v,\min\{r_{\mathcal{I}},r_{\Sigma_0}(v)\}))^{-\frac{5}{2}+\epsilon}\\
\lesssim &\: \sqrt{E^{\epsilon}_{0; k+1}[\psi]} (r-M)^{\frac{1}{2}}v^{-\frac{5}{2}+\epsilon}.
\end{split}
\end{equation*}
for $r\leq \min\{r_{\mathcal{I}},r_{\Sigma_0}(v)\}$ and $v\geq v_0$, where in the third inequality we used the conservation property of the $T$-energy flux.

By \eqref{eq:pdecayl01v2} we moreover have that for all $v\geq 0$:
\begin{equation*}
\left|2rT^{k+1}\phi_0(v,r)\right|\lesssim \sqrt{E^{\epsilon}_{0;k+ 1}[\psi]} r^2(r-M)^{-\frac{1}{2}}v^{-\frac{5}{2}-k+\epsilon}.
\end{equation*}

Furthermore, by Proposition \ref{prop:asympradfieldnonzeroIH}, we have that for $\epsilon>0$ suitably small, there exists an $\epsilon'>0$ such that we can estimate
\begin{equation*}
\begin{split}
\left|T^{k+1}\phi_0|_{\mathcal{H}^+}(v)-(-1)^{k+1} (k+1)!\cdot 2H_0[\psi]\cdot v^{-2-k}\right|\lesssim&\: \left(\sqrt{E^{\epsilon}_{0; k+1}[\psi]}+H_0[\psi]\right)\cdot v^{-2-k-\epsilon'}\\
&+P_{H_0,1;k}[\psi]\cdot v^{-3-k}.
\end{split}
\end{equation*}

Combining all the above decay estimates, we can therefore infer that for all $v\geq v_0$ and $r\leq \min\{r_{\mathcal{I}},r_{\Sigma_0}(v)\}$:
\begin{equation}
\label{eq:asymlbarpsidr}
\begin{split}
|Dr^2\partial_rT^k\psi_0(v,r)&-(-1)^{k+1} (k+1)!\cdot 4MH_0[\psi]\cdot v^{-2-k}|\\
\lesssim&\: \left(\sqrt{E^{\epsilon}_{0; k+1}[\psi]}+H_0[\psi]\right)\left[(r-M)^{-\frac{1}{2}}v^{-\frac{5}{2}+\epsilon}+ v^{-2-k-\epsilon'}\right]\\
&+P_{H_0,1;k}[\psi]\cdot v^{-3-k},
\end{split}
\end{equation}
or equivalently, since we can express $\underline{L}=-\frac{D}{2}\partial_r$ in $(v,r)$ coordinates, we can write
\begin{equation}
\label{eq:asymlbarpsi}
\begin{split}
|-2r^2\underline{L}T^k\psi_0(v,r)&-(-1)^{k+1} (k+1)!\cdot 4MH_0[\psi]\cdot v^{-2-k}|\\
\lesssim&\: \left(\sqrt{E^{\epsilon}_{0; k+1}[\psi]}+H_0[\psi]\right)\left[(r-M)^{-\frac{1}{2}}v^{-\frac{5}{2}+\epsilon}+ v^{-2-k-\epsilon'}\right]\\
&+P_{H_0,1;k}[\psi]\cdot v^{-3-k}.
\end{split}
\end{equation}

By using that $T=\underline{L}+L$ and applying once again \eqref{eq:pdecayl01v1}, we can rewrite \eqref{eq:asymlbarpsi} at $r=r_0\geq r_{\mathcal{H}}$ as follows:
\begin{equation}
\begin{split}
\label{eq:LpsiestrI}
|2r_{\mathcal{I}}^2LT^k\psi_0|_{r=r_{0}}(u)&-(-1)^{k+1} (k+1)!\cdot 4MH_0[\psi]\cdot u^{-2-k}|\\
\lesssim&\: \left(\sqrt{E^{\epsilon}_{0; k+1}[\psi]}+H_0[\psi]\right) u^{-2-k-\epsilon'}\\
&+P_{H_0,1;k}[\psi]\cdot u^{-3-k}.
\end{split}
\end{equation}
Let us now switch to $(u,r)$ coordinates in the region $r\geq r_{\mathcal{I}}$. From \eqref{eq:waveequation} in follows that $\psi$ satisfies the following equation in $(u,r)$ coordinates:
\begin{equation}
\label{eq:waveequr}
\partial_{r}\left(2r^2LT^k\psi_0-2rT^{k+1}\phi_0\right)=-2T^{k+1}\phi_0.
\end{equation}
We integrate both sides of \eqref{eq:waveequr} along constant $u$ hypersurfaces from $r'=r_{\mathcal{I}}$ to $r'=r>r_{\mathcal{I}}$ to arrive at:
\begin{equation*}
2r^2LT^k\psi_0(u,r)=2r_{\mathcal{I}}^2LT^k\psi_0(u,r_{\mathcal{I}})-2r_{\mathcal{I}}T^{k+1}\phi_0(u,r_{\mathcal{I}})+2rT^{k+1}\phi_0(u,r)-2\int_{r_{\mathcal{I}}}^rT^{k+1}\phi_0(u,r')\,dr'.
\end{equation*}
First of all, we apply \eqref{eq:pdecayl01v2} to estimate
\begin{equation*}
|2rT^{k+1}\phi_0(u,r)|\lesssim  \sqrt{E^{\epsilon}_{0;k+1}[\psi]} r^{\frac{3}{2}}(1+\tau)^{-\frac{5}{2}-k+\epsilon}
\end{equation*}
for $r\geq r_{\mathcal{I}}$.

We moreover  apply Cauchy--Schwarz together with \eqref{eq:hardyinf} to estimate
\begin{equation*}
\begin{split}
\left|\int_{r_{\mathcal{I}}}^rT^{k+1}\phi_0(u,r')\,dr'\right| \lesssim&\: \sqrt{\int_{{{N}^{\mathcal{I}}}_u} r^{-2} (T^{k+1} \phi_0)^2\,d\omega dr}\sqrt{\int_{r_{\mathcal{I}}}^r r'^2\,dr'} \\
\lesssim&\: \left[\sqrt{\int_{{{N}^{\mathcal{I}}}_u} (\partial_rT^{k+1}\phi_0)^2\,d\omega dr+(T^{k+1}\phi)^2(u,r_{\mathcal{I}})}\right]\cdot r^{\frac{3}{2}}\\
\lesssim&\:   \sqrt{E^{\epsilon'}_{0;k+ 1}[\psi]} r^{\frac{3}{2}}(1+\tau)^{-\frac{5}{2}-k+\epsilon'}.
\end{split}
\end{equation*}
Hence, using that $\partial_r=2D^{-1}L$ in $(u,r)$ coordinates, we have that
\begin{equation*}
\begin{split}
|\partial_rT^k\psi_0(u,r)&-(-1)^{k+1} (k+1)!\cdot 4MH_0[\psi]\cdot D^{-1}r^{-2}u^{-2-k}|\\
\lesssim&\: \left(\sqrt{E^{\epsilon}_{0; k+1}[\psi]}+H_0[\psi]\right) r^{-2}u^{-2-k-\epsilon'}\\
&+\left(\sqrt{E^{\epsilon}_{0; k+1}[\psi]}+H_0[\psi]\right) r^{-\frac{1}{2}}u^{-\frac{5}{2}-k+\epsilon}\\
&+P_{H_0,1;k}[\psi]\cdot r^{-2}u^{-3-k},
\end{split}
\end{equation*}
for all $r\geq r_{\mathcal{I}}$.
\end{proof}
\textbf{We note that the estimate \eqref{eq:partasymdrhopsi2} provides in particular the late-time asymptotics of $\partial_rT^k\psi_0$ in the region $\{r_{\mathcal{H}}\leq r\leq r_{\mathcal{I}}\}$ \underline{even if $I_0[\psi]=0$},} so it will be also be relevant when investigating the asymptotics of Type \textbf{A} data in Section \ref{sec:asympzeroconst} below.
\subsection{Late-time asymptotics in $\mathcal{R}$}
\label{sec:latetimeasympsi}
In this section, we obtain the asymptotics for $T^k\psi_0$, using fundamentally that both $H_0\neq 0$ and $I_0\neq 0$ in the case of Type \textbf{C} data. The next result completes the proof of theorem \ref{prop:asympsitheo}.  
\begin{proposition}
\label{prop:asympsi}
Let $k\in \N_0$. Then there exist $\eta,\epsilon>0$ suitably small, such that we obtain the following \emph{global} estimate:
\begin{equation}
\label{eq:asympsi}
\begin{split}
\Bigg|T^k\psi_0(u,v)&-4\left(I_0[\psi]+ \frac{M}{r\sqrt{D}}H_0[\psi]\right)T^k\left(\frac{1}{u\cdot v}\right)\Bigg|\\
\leq&\: C\left(\sqrt{E_{0;k+1}^{\epsilon}[\psi]}+I_0[\psi]+P_{I_0,\beta;k}[\psi]\right)v^{-1}u^{-1-k-\eta}\\
&+C\left(\sqrt{E_{0;k+1}^{\epsilon}[\psi]}+H_0[\psi]+P_{H_0,1;k}[\psi]\right)D^{-\frac{1}{2}}u^{-1}v^{-1-k-\eta},
\end{split}
\end{equation}
where $C=C(M,{\Sigma_0},r_{\mathcal{H}},r_{\mathcal{I}},\eta,\epsilon, \beta, k)>0$ is a constant.
\end{proposition}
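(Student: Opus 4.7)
The proof will verify the asymptotic expression
\[
F_k(u,v) := 4\!\left(\!I_0[\psi] + \frac{M}{r\sqrt{D}}H_0[\psi]\!\right) T^k\!\!\left(\frac{1}{uv}\right), \qquad \frac{M}{r\sqrt{D}}=\frac{M}{r-M},
\]
on a cover of $\mathcal{R}$ by three overlapping subregions: the near-infinity region $\mathcal{A}^{\mathcal{I}}_{\gamma^{\mathcal{I}}_{\alpha_k}}$, the near-horizon region $\mathcal{A}^{\mathcal{H}}_{\gamma^{\mathcal{H}}_{\alpha_k}}$, and the intermediate band formed by their complement in $\mathcal{R}$. The computation rests on the algebraic identity
\[
u^{-1-k}-v^{-1-k} \;=\; (v-u)\sum_{j=0}^{k} u^{-1-j}v^{-1-k+j} \;=\; \frac{v-u}{(-1)^k k!}\,T^k\!\!\left(\frac{1}{uv}\right),
\]
combined with the relations $v-u \approx 2r$ for $r \gg M$ and $v-u \approx -2M^2/(r-M)$ for $r$ near $M$ obtained in Lemma~\ref{lm:relationruv}, and on the fact that $T$ commutes with every $r$-dependent coefficient (since $r$ is a function of $v-u$ only).

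In $\mathcal{A}^{\mathcal{I}}_{\gamma^{\mathcal{I}}_{\alpha_k}}$ the factor $\tfrac{M}{r-M}$ is $O(r^{-1})$, so the $H_0$-piece of $F_k$ is swallowed by the error on the right-hand side of \eqref{eq:asympsi}. The first estimate of Proposition~\ref{prop:asympradfieldnonzeroIH} yields $T^k\phi_0 = (-1)^k k!\cdot 2I_0(u^{-1-k}-v^{-1-k}) + \mathrm{error}$; invoking the identity above, dividing by $r$ (using $\psi_0 = \phi_0/r$ and that $T$ commutes with $r$), and exploiting $v-u = 2r + O(\log r)$ reproduces $4I_0\,T^k(1/(uv))$ up to errors of the admissible form. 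The region $\mathcal{A}^{\mathcal{H}}_{\gamma^{\mathcal{H}}_{\alpha_k}}$ is handled symmetrically: the second estimate of Proposition~\ref{prop:asympradfieldnonzeroIH} together with $v-u \approx -2M^2/(r-M)$ converts $(-1)^k k!\cdot 2H_0(v^{-1-k}-u^{-1-k})$ into $\tfrac{4M^2 H_0}{r-M}\,T^k(1/(uv))$, and dividing by $r$ and substituting $\tfrac{M^2}{r(r-M)} = \tfrac{M}{r-M} + O(r-M)$ identifies the leading term with the $H_0$-piece of $F_k$, the discrepancy being absorbed in the $D^{-1/2}$-weighted error.

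The remaining ``band'' is covered by propagating the asymptotic from $\gamma^{\mathcal{I}}_{\alpha_k}$ (where Step~1 applies) inward by the fundamental theorem of calculus in $r$ at fixed $v$:
\[
T^k\psi_0(v,r) = T^k\psi_0(v,r_1) + \int_{r_1}^{r}\partial_{r'}T^k\psi_0(v,r')\,dr',\qquad (v,r_1)\in\gamma^{\mathcal{I}}_{\alpha_k}.
\]
The leading part of \eqref{eq:partasymdrhopsi1}, namely $(-1)^{k+1}(k+1)!\cdot 4MH_0\,v^{-2-k}(r'-M)^{-2}$, integrates \emph{exactly} to $(-1)^{k}(k+1)!\cdot 4MH_0\,v^{-2-k}\bigl[(r-M)^{-1}-(r_1-M)^{-1}\bigr]$: the first summand, combined with the dominant $j=0$ contribution $(-1)^k k!\,u^{-1}v^{-1-k}$ to $T^k(1/(uv))$ and the near-horizon relation $u(r-M)\approx 2M^2$, reproduces precisely the $\tfrac{4M}{r-M}H_0\,T^k(1/(uv))$ piece of $F_k$, while the second summand is cancelled by the boundary value $F_k(v,r_1)$ delivered by Step~1. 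An analogous propagation from $\gamma^{\mathcal{H}}_{\alpha_k}$ outward using \eqref{eq:partasymdrhopsi2} in $(u,r)$-coordinates covers the remaining strip contained in $\mathcal{A}^{\mathcal{I}}$, and an overlap argument patches the two half-estimates. The principal obstacle is the singular error $(r-M)^{-1/2}v^{-5/2+\epsilon}$ in \eqref{eq:partasymdrhopsi1}: its $r$-integral only produces an $O\bigl((r-M)^{1/2}\bigr)v^{-5/2+\epsilon}$ remainder, which must be shown to fit inside the allowed $D^{-1/2}u^{-1}v^{-1-k-\eta}$ bound for suitable $\epsilon,\eta$. This unavoidable loss is precisely what forces the factor $D^{-1/2}$ on the right-hand side of \eqref{eq:asympsi}, mirroring the $(r-M)^{-1}$ blow-up of the $H_0$-piece of $F_k$ at $\mathcal{H}^+$.
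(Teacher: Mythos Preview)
Your three-region strategy and your handling of $\mathcal{A}^{\mathcal{I}}_{\gamma^{\mathcal{I}}_{\alpha_k}}$ and $\mathcal{A}^{\mathcal{H}}_{\gamma^{\mathcal{H}}_{\alpha_k}}$ via Proposition~\ref{prop:asympradfieldnonzeroIH} and the identity $u^{-1-k}-v^{-1-k}=(v-u)\tfrac{(-1)^k}{k!}T^k(1/(uv))$ are correct and match the paper. The problem lies in the intermediate band, where you have swapped which radial-derivative estimate goes with which coordinate chart, in a way that violates the validity ranges stated in Proposition~\ref{prop:partasymdrhopsi}.

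You propose integrating at fixed $v$ from $(v,r_1)\in\gamma^{\mathcal{I}}_{\alpha_k}$ using \eqref{eq:partasymdrhopsi1}. But \eqref{eq:partasymdrhopsi1} is established only for $r\leq r_{\mathcal{I}}$ (it is obtained by integrating the wave equation in $(v,r)$ from $r=M$, and the argument stops once the constant-$v$ line leaves $\{r\leq r_{\mathcal{I}}\}$), whereas $r_1\sim u^{\alpha_k}\gg r_{\mathcal{I}}$. Symmetrically, your proposal to propagate from $\gamma^{\mathcal{H}}_{\alpha_k}$ outward at fixed $u$ via \eqref{eq:partasymdrhopsi2} does not furnish a usable boundary value for the fixed-$v$ integration, since the two integrations live on transversal coordinate lines; your ``overlap argument'' would have to compare values at different spacetime points. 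The paper resolves this with a two-step integration that respects the validity ranges: first integrate \eqref{eq:partasymdrhopsi2} at fixed $u$ from $r_{\gamma^{\mathcal{I}}_\delta}(u)$ down to $r=r_{\mathcal{I}}$ (legal since \eqref{eq:partasymdrhopsi2} holds for all $r\geq r_{\mathcal{H}}$), which yields the asymptotic at the interface $r=r_{\mathcal{I}}$ with both the $H_0$- and $I_0$-pieces present; then switch to $(v,r)$ coordinates and integrate the $(v,r)$ estimate (labelled \eqref{eq:asymlbarpsidr} in the proof of Proposition~\ref{prop:partasymdrhopsi}) from $r=r_{\mathcal{I}}$ inward to $r_{\gamma^{\mathcal{H}}_\alpha}(v)$. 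Note in particular that the $I_0$-piece of $F_k$ in the band is not generated by the radial integration (which only sees $H_0$); it enters through the boundary value at $r=r_{\mathcal{I}}$ inherited from the first step, a point your outline does not track.
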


\emph{Outline of proof:}\\
\\
For simplicity let use take $k=0$.
\begin{itemize}
\item[\textbf{Step 1}:]
We use the asymptotics for $\phi_0$ along $\mathcal{H}^+$ obtained in Proposition \ref{prop:asympradfieldnonzeroIH}, together with the decay estimates in \eqref{eq:edecayl01} and  \eqref{eq:pdecayl01v1} to obtain precise decay estimates for $L\psi_0$ and $\underline{L}\psi_0$ (and hence for the radial derivative). This step has been carried out in Proposition \ref{prop:partasymdrhopsi}.
\item[\textbf{Step 2}:]
Using Proposition \ref{prop:asympradfieldnonzeroIH}, we derive the asymptotics for $\psi_0$ in the region $\mathcal{A}^{\mathcal{I}}_{\gamma^{\mathcal{I}}_{\delta}}$, with $\delta<1$ suitably close to 1 and we use the estimates for $L\psi_0$ from \textbf{Step 1} to extend the asymptotics of $\psi_0$ to $\mathcal{A}^{\mathcal{I}}$.
\item[\textbf{Step 3}:]
Similarly, we apply Proposition \ref{prop:asympradfieldnonzeroIH} to obtain the asymptotics for $\psi_0$ in $\mathcal{A}^{\mathcal{H}}_{\gamma^{\mathcal{H}}_{\alpha}}$. We then integrate $\partial_r\psi_0$ from $r=r_{\mathcal{I}}$ in the direction of decreasing $r$ to obtain moreover the asymptotics for $\psi_0$ in $\mathcal{B}\cup \mathcal{A}^{\mathcal{H}}\setminus \mathcal{A}^{\mathcal{H}}_{\gamma^{\mathcal{H}}_{\alpha}}$.
\end{itemize}

\begin{proof}

\textbf{Step 2:}\\
\\
In order to obtain the late-time asymptotics of $\psi_0$ in $\mathcal{A}^{\mathcal{I}}$, we partition the region $\mathcal{A}^{\mathcal{I}}$ into $\mathcal{A}^{\mathcal{I}}_{\gamma^{\mathcal{I}}_{\delta}}=\{r\geq r_{\gamma^{\mathcal{I}}_{\delta}}(u)\}$ and $\mathcal{A}^{\mathcal{I}}\setminus \mathcal{A}^{\mathcal{I}}_{\gamma^{\mathcal{I}}_{\delta}}=\{r< r_{\gamma^{\mathcal{I}}_{\delta}}(u)\}$, with $\delta<1$, where we will choose $1-\delta$ to be suitably small. 

We first use the following identity:
\begin{equation}
\label{eq:identityuminusv}
u^{-1-k}-v^{-1-k}=\frac{v-u}{v\cdot u^{k+1}}\sum_{j=0}^k\left(\frac{u}{v}\right)^j=(v-u)(-1)^k\frac{1}{k!}T^k\left(\frac{1}{u\cdot v}\right),
\end{equation}
together with Proposition \ref{prop:asympradfieldnonzeroIH} and Lemma \ref{lm:relationruv}, to find $\eta,\epsilon>0$ suitably small, so that we can estimate:
\begin{equation}
\label{eq:asympsiinfnonzeroI0}
 \begin{split}
\Bigg|T^k\psi(u,r)-4I_0[\psi]T^k\left(\frac{1}{v\cdot u}\right)\Bigg|\lesssim &\: \left(\sqrt{E^{\epsilon}_{0;k}[\psi]}+I_0[\psi]\right) v^{-1}u^{-1-k-\eta}+P_{I_0,\beta;k}[\psi]\cdot v^{-1}u^{-1-k-\beta}
\end{split}
\end{equation}
in $\mathcal{A}^{\mathcal{I}}_{\gamma^{\mathcal{I}}_{\delta}}$.

Note that this implies in particular that
\begin{equation}
\label{eq:typecasymppsigammaI}
\begin{split}
|T^k\psi_0(u,r_{\gamma^{\mathcal{I}}_{\delta}}(u))-4(-1)^k (k+1)!\cdot I_0[\psi]\cdot u^{-2-k}|\lesssim&\:  \left(\sqrt{E^{\epsilon}_{0;k}[\psi]}+I_0[\psi]\right)\cdot u^{-2-k-\eta}\\
&+P_{I_0,\beta;k}[\psi]\cdot u^{-2-\beta-k}.
\end{split}
\end{equation}

We then integrate in the $-\partial_r$ direction, starting from $r=r_{\gamma^{\mathcal{I}}_{\delta}}(u)$ and we apply \eqref{eq:partasymdrhopsi2} from Step 1. By choosing $\eta>0$ and $\epsilon>0$ suitably small, we obtain:
\begin{equation}
\label{eq:typecasymppsileftgammaI}
 \begin{split}
\Bigg|T^k\psi(u,r)&-T^k\psi(u,r_{\gamma^{\mathcal{I}}_{\delta}}(u))+(-1)^{k+1} (k+1)!\cdot 4MH_0[\psi]u^{-2-k}\int_{r}^{r_{\gamma^{\mathcal{I}}_{\delta}}(u)}\frac{1}{(r'-M)^2}\,dr'\Bigg|\\
\lesssim&\: \left(\sqrt{E^{\epsilon}_{0; k+1}[\psi]}+H_0[\psi]\right) u^{-2-k-\eta}+P_{H_0,1;k}[\psi]\cdot r^{-1}u^{-3-k},
\end{split}
\end{equation}
with $r_{\mathcal{I}}\leq r\leq r_{\gamma^{\mathcal{I}}_{\delta}}(u)$.

By combining \eqref{eq:asympsiinfnonzeroI0}, \eqref{eq:typecasymppsigammaI} and \eqref{eq:typecasymppsileftgammaI}, we conclude that in $\mathcal{A}^{\mathcal{I}}$:
\begin{equation}
\label{eq:asymppsirI}
 \begin{split}
\Bigg|T^k\psi(u,r)&-\left(4MH_0[\psi]\frac{T^k(u^{-2})}{r-M}+4I_0[\psi]T^k\left(\frac{1}{uv}\right)\right)\Bigg|\\
\lesssim&\: \left(\sqrt{E^{\epsilon}_{0; k+1}[\psi]}+H_0[\psi]+I_0[\psi]\right) v^{-1}u^{-1-k-\eta}\\
&+P_{H_0,1;k}[\psi]\cdot r^{-1}u^{-3-k}+P_{I_0,\beta;k}[\psi]\cdot v^{-1}u^{-1-k-\beta}.
\end{split}
\end{equation}

\textbf{Step 3:}\\
\\
We now turn to the region $\mathcal{A}^{\mathcal{H}}\cup\mathcal{B}=\{r\leq r_{\mathcal{I}}\}$. We will partition this region into the region $\mathcal{A}^{\mathcal{H}}_{\gamma^{\mathcal{H}}_{\alpha}}=\{r\leq r_{\gamma^{\mathcal{H}}_{\alpha}(v)}\}$ and $\{r_{\gamma^{\mathcal{H}}_{\alpha}(v)}\leq r\leq r_{\mathcal{I}}\}$ with $\alpha<1$ and $1-\alpha$ suitably small.

Let us first consider the region $\mathcal{A}^{\mathcal{H}}_{\gamma^{\mathcal{H}}_{\alpha}}$. By using the identity
\begin{equation}
\label{eq:identityvminusu}
v^{-1-k}-u^{-1-k}=\frac{u-v}{u\cdot v^{k+1}}\sum_{j=0}^k\left(\frac{v}{u}\right)^j=(u-v)(-1)^k\frac{1}{k!}T^k\left(\frac{1}{u\cdot v}\right)
\end{equation}
together with Lemma \ref{lm:relationruv} and Proposition \ref{prop:asympradfieldnonzeroIH}, we have that for $1-\alpha$ suitably small we can estimate in $r\leq r_{\gamma^{\mathcal{H}}_{\alpha}}(v)$:
\begin{equation}
\label{eq:asympsihornonzeroH0}
\begin{split}
\Bigg|T^k\psi_0(u,v)&-4\frac{M}{r\sqrt{D}}H_0[\psi]T^k\left(\frac{1}{u\cdot v}\right)\Bigg|\\
\leq&\: C\left(\sqrt{E_{0;k+1}^{\epsilon}[\psi]}+H_0[\psi]+P_{H_0,1;k}[\psi]\right)D^{-\frac{1}{2}}u^{-1}v^{-1-k-\eta}.
\end{split}
\end{equation}

We now consider the region  $\{r_{\gamma^{\mathcal{H}}_{\alpha}(v)}\leq r\leq r_{\mathcal{I}}\}$ and use \eqref{eq:asymlbarpsidr} to integrate from $r'=r_{\mathcal{I}}$ to $r'=r\geq r_{\gamma^{\mathcal{H}}_{\alpha}(v)}$ along constant $v$ hypersurfaces with $v\geq v|_{\Sigma_0}(r_{\mathcal{I}})$, for $1-\alpha>0$ suitably small: we have that there exist  $\epsilon,\eta>0$ suitably small such that
\begin{equation*}
\begin{split}
\Bigg|&T^k\psi_0(v,r)- T^k\psi_0(v,r_{\mathcal{I}})+(-1)^{k+1}\frac{(k+1)!4MH_0}{v^{k+2}}\int_{r}^{r_{\mathcal{I}}}(r'-M)^{-2}\,dr'\Bigg|\\
\lesssim&\: \left(\sqrt{E^{\epsilon}_{0; k+1}[\psi]}+H_0[\psi]\right)(r-M)^{-1}v^{-2-k-\eta}\\
&+P_{H_0,1;k}[\psi]\cdot v^{-3-k} .
\end{split}
\end{equation*}
Combined with \eqref{eq:asymppsirI} this implies that for $1-\alpha>0$ suitably small: there exists an $\eta>0$ and $\epsilon>0$ suitably small such that for all  $r_{\gamma^{\mathcal{H}}_{\alpha}(v)}\leq r \leq r_{\mathcal{I}}$:
\begin{equation}
\label{eq:asymppsirH}
\begin{split}
\Bigg|&T^k\psi_0(v,r)-(-1)^{k}(k+1)!\left(\frac{4MH_0}{r\sqrt{D}}+4I_0[\psi]\right)v^{-k-2}\Bigg|\\
\lesssim&\: \left(\sqrt{E^{\epsilon}_{0; k+1}[\psi]}+H_0[\psi]+I_0[\psi]\right)(r-M)^{-1}v^{-2-k-\eta}\\
&+(r-M)^{-1}P_{H_0,1;k}[\psi]\cdot v^{-3-k}+P_{I_0,\beta;k}[\psi]\cdot v^{-2-\beta-k} .
\end{split}
\end{equation}

By combining the estimates for $T^k\psi_0$ in the regions $r_{\mathcal{I}}\leq r\leq r_{\gamma^{\mathcal{I}}_{\delta}}(u)$, $r\geq r_{\gamma^{\mathcal{I}}_{\delta}}(u)$, $r_{\gamma^{\mathcal{H}}_{\alpha}(v)}\leq r \leq r_{\mathcal{I}}$ and  $r\leq r_{\gamma^{\mathcal{H}}_{\alpha}}(v)$, we arrive at \eqref{eq:asympsi}.
\end{proof}

\begin{remark}
We can alternatively consider $\underline{\psi}:=M^{-1}(r-M)\psi=\frac{r}{M}\sqrt{D}\psi$ and reverse the roles of $\mathcal{H}^+$ and $\mathcal{I}^+$ in order to obtain the asymptotics for $\psi_0$ in Proposition \ref{prop:asympsi}; see also the proof of Proposition \ref{prop:explicitexprtimeint2}.
\end{remark}

\section{Time inversion theory}
\label{sec:timeint}
In this section, we will construct an auxilliary ``time integral'' function ${\psi}_0^{(1)}: \mathcal{R}\setminus \mathcal{H}^+\to \R$, which satisfies $T\psi_0^{(1)}=\psi_0$ and $\square_g\psi_0^{(1)}=0$. This construction is fundamental to obtaining asymptotics for $\psi_0$ in Section \ref{sec:asympzeroconst}, when the initial data is of Type \textbf{A}, \textbf{B} or \textbf{D}; that is to say, when $H_0[\psi]$ or $I_0[\psi]$ vanish.

\subsection{Regular time inversion in $\protect\mathring{\mathcal{R}}$}
\label{sec:TheRegularTimeInversionConstructionInOversetOMathcalR}
Consider $\mathring{\mathcal{R}}= \mathcal{R}\setminus \partial \mathcal{R}$. We have the following

\begin{definition}
\label{def:timeint}
Let $\psi_0$ be the spherical mean of a solution $\psi$ to \eqref{eq:waveequation} on extremal Reissner--Nordstr\"om with $I_0[\psi]$ a well-defined limit. We then define the \emph{time integral} $\psi^{(1)}_0$ of $\psi_0$ to be the function ${\psi}^{(1)}: \mathring{\mathcal{R}}\rightarrow \mathbb{R}$, such that 
\begin{itemize}
\item[\emph{(i)}]
$T\psi^{(1)}_0=\psi_0$,
\item[\emph{(ii)}]
 $\square_g\psi^{(1)}_0=0$,
\item[\emph{(iii)}]
 $\lim_{v\to \infty}\psi^{(1)}_0(0,v)=0$,
 \item[\emph{(iv)}]
 $\lim_{u\to \infty}\underline{L}\psi_0^{(1)}(u,v_0)=0$.
\end{itemize}
\end{definition}

As we shall see explicitly in the next proposition, $\psi_0^{(1)}$ is well-defined because it is the unique solution to a boundary-value problem for an inhomogeneous ODE (with boundary conditions at $r=M$ and $r=+\infty$).

\begin{proposition}
\label{prop:explicitexprtimeint}
The time integral $\psi_0^{(1)}$ of the spherical mean $\psi_0$ of a solution to \eqref{eq:waveequation} on extremal Reissner--Nordstr\"om satisfies the following identities:
\begin{align*}
2r^2\underline{L}\psi_0^{(1)}(u,v_0)=&\: 2\int_{u}^{\infty} r\underline{L}\phi_0(u',v_0)\,du' \quad \textnormal{on ${{N}_0^{\mathcal{H}}}\cap \mathring{\mathcal{R}}$},\\
Dr^2\partial_{\rho}\psi_0^{(1)}(0,\rho)=&\:\int_{r_{\mathcal{H}}}^{\rho}\left[-2(1-h\cdot D)r\partial_{\rho}\phi_0+(2-h\cdot D)rhT\phi_0+r\cdot (hD)' \phi_0\right]|_{\Sigma_0}(\rho')d\rho' \\
&+h\cdot Dr^2\phi_0|_{\Sigma_0}(\rho=r_{\mathcal{H}})-2\int_{u_{r_{\mathcal{H}}}(v_0)}^{\infty} r\underline{L}\phi_0(u',v_0)\,du'\quad \textnormal{on $\Sigma_{\tau}\cap \{r_{\mathcal{H}}\leq r\leq r_{\mathcal{I}}\}$},\\
2r^2L\psi_0^{(1)}(u_0,v)=&\:C_0+2\int_{v_{r_{\mathcal{I}}}(u_0)}^{v} rL\phi_0(u_0,v')\,dv'\quad \textnormal{on ${{N}_0^{\mathcal{I}}}$},
\end{align*}
where $h$ is the function of $r$ given by \eqref{definitionh}, and we use the shorthand notation
\begin{equation*}
\begin{split}
4\pi C_0[\psi]:=2\int_{N^{\mathcal{H}}_0} r\underline{L}\phi_0\,d\omega du'+\int_{\Sigma_0\cap \mathcal{B}} n_{\Sigma_0}(\psi)\,d\mu_{0}+4\pi r\phi_0|_{N^{\mathcal{H}}_0} (r=r_{\mathcal{H}})+4\pi r\phi_0|_{N^{\mathcal{H}}_0} (r=r_{\mathcal{I}}).
\end{split}
\end{equation*}
If $\lim_{r\to \infty}r^3\partial_r\phi_0|_{{N^{\mathcal{I}}_0}}<\infty$ (\underline{and therefore $I_0[\psi]=0$}), then we can further express,
\begin{equation*}
r\psi^{(1)}_0|_{{{N}_0^{\mathcal{I}}}}(r)=-r\left[C_0[\psi]+2\int_{v_{r_{\mathcal{I}}}(u_0)}^{\infty} rL\phi_0(u_0,v')\,dv'\right](r-M)^{-1}+2r\int_{r}^{\infty} (r'-M)^{-2}\int_{r'}^{\infty}r\partial_r\phi_0|_{{{N}_0^{\mathcal{I}}}}(r'')\,dr''dr',
\end{equation*}
and we have that in $(u,r)$ coordinates
\begin{equation*}
\begin{split}
I_0[\psi^{(1)}]=&MC_0[\psi]+2M\int_{v_{r_{\mathcal{I}}}(u_0)}^{\infty} rL\phi_0(u_0,v')\,dv'- \lim_{r\to \infty}r^3\partial_r\phi_0|_{{{N}_0^{\mathcal{I}}}}\\
=& M^2\phi_0|_{\mathcal{H}^+} (v=v_0)+\frac{M}{4\pi}\int_{N^{\mathcal{H}}_0} \underline{L}\psi_0\,r^2d\omega du'+\frac{M}{4\pi}\int_{\Sigma_0\cap \mathcal{B}} n_{\Sigma_0}(\psi)\,d\mu_{0}\\
&+M r\phi_0|_{N^{\mathcal{H}}_0} (r=r_{\mathcal{I}})+\frac{2M}{4\pi}\int_{N^{\mathcal{I}}_0} rL\phi_0\,d\omega dv'- \lim_{r\to \infty}r^3\partial_r\phi_0|_{{{N}_0^{\mathcal{I}}}}.
\end{split}
\end{equation*}
\end{proposition}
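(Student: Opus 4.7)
The plan is to construct $\psi^{(1)}$ by prescribing its characteristic initial data on $\Sigma_0\cap\mathring{\mathcal{R}}$ consistent with the defining conditions (i)--(iv), and then reading off the claimed identities as direct consequences. The key tool throughout is the spherically symmetric wave equation in the form \eqref{eq:eqradield}, $4\underline{L}L\phi^{(1)}=-\tfrac{DD'}{r}\phi^{(1)}$, coupled with the commutation $[T,L]=[T,\underline{L}]=0$ which follows from $T$ being Killing. The strategy is to turn (i) into an algebraic identity for the second-order $u$- or $v$-derivatives of $\psi^{(1)}$ and then integrate in the single remaining characteristic parameter along each of the three pieces $N_0^{\mathcal{H}}$, $\Sigma_0\cap\mathcal{B}$, and $N_0^{\mathcal{I}}$ of $\Sigma_0$.

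On $N_0^{\mathcal{H}}$ (where $v=v_0$), commuting $\underline{L}$ with $T$ yields $\underline{L}(\underline{L}\psi^{(1)})+L(\underline{L}\psi^{(1)})=\underline{L}\psi_0$; using the wave equation to eliminate $L\underline{L}\psi^{(1)}$ in favor of $\psi^{(1)}$ produces a first-order ODE in $u$ for $r^2\underline{L}\psi^{(1)}$ restricted to $\{v=v_0\}$. Integrating from $u$ to $\infty$ and imposing (iv) yields the first identity; the absolute convergence of the integral $\int_u^\infty r\underline{L}\phi_0\,du'$ is guaranteed by the pointwise decay of $\underline{L}\phi_0$ on $N_0^{\mathcal{H}}$ established in Proposition \ref{prop:partasymdrhopsi}. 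On $\Sigma_0\cap\mathcal{B}$, substituting $T\psi^{(1)}=\psi_0$ and $T^2\psi^{(1)}=T\psi_0$ into the $(\rho,\theta,\varphi)$-form of the wave equation \eqref{eq:waveeqrhocoord} converts it into a first-order ODE in $\rho$ for $Dr^2\partial_\rho\psi^{(1)}$ along $\Sigma_0$, giving the bulk integrand of the second identity. The boundary value $Dr^2\partial_\rho\psi^{(1)}|_{r=r_{\mathcal{H}}}$ is computed by continuity across the corner $\Sigma_0\cap N_0^{\mathcal{H}}$, using the first identity together with the relation $\partial_\rho=\partial_r+hT$ between the $\Sigma_0$-tangent and $(v,r)$-coordinate derivatives; the $hDr^2\phi_0|_{\Sigma_0}(\rho{=}r_{\mathcal{H}})$ term arises precisely from the $hT$-contribution.

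On $N_0^{\mathcal{I}}$ the derivation of the third identity mirrors the first, commuting $L$ with $T$ and eliminating $\underline{L}L\psi^{(1)}$ via the wave equation to produce a first-order ODE in $v$ for $r^2L\psi^{(1)}$ along $\{u=u_0\}$. Integrating from $v_{r_{\mathcal{I}}}(u_0)$ upward yields the formula modulo the constant of integration $2r^2L\psi^{(1)}(u_0,v_{r_{\mathcal{I}}}(u_0))$; this constant is identified with $C_0[\psi]$ by matching with the $\Sigma_0\cap\mathcal{B}$ formula at $r=r_{\mathcal{I}}$, reorganizing the resulting bulk integral along $\Sigma_0\cap\mathcal{B}$ into the Killing $T$-flux $\int n_{\Sigma_0}(\psi)\,d\mu_0$ via integration by parts in $\rho$, and combining with the contribution from $N_0^{\mathcal{H}}$ produced by the first identity evaluated at the corner.

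Under the further hypothesis $\lim_{r\to\infty}r^3\partial_r\phi_0|_{N_0^{\mathcal{I}}}<\infty$ (in particular, $I_0[\psi]=0$), the ODE on $N_0^{\mathcal{I}}$ can be integrated once more, this time starting from $v=\infty$ using (iii), to obtain the closed-form expression for $r\psi^{(1)}_0|_{N_0^{\mathcal{I}}}$; the iterated-integral term arises after switching the order of integration. The Newman--Penrose constant $I_0[\psi^{(1)}]=\lim_{r\to\infty}r^2\partial_r(r\psi^{(1)})|_{N_0^{\mathcal{I}}}$ is then read off by extracting the leading $O(r)$-behavior of this expression as $r\to\infty$, which produces the $(u,r)$-coordinate formula; the alternative $\Sigma_0$-integral form is obtained by inserting the definition of $C_0[\psi]$ from the third step. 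The main obstacle is the first step: one must simultaneously verify that the ODE-integration along $N_0^{\mathcal{H}}$ is well-posed, that (iv) is compatible with the wave equation, and that the resulting $\underline{L}\psi^{(1)}$ can be consistently matched at the corner $\Sigma_0\cap\mathcal{H}^+$ where $T$ degenerates. In particular, the sharp decay rate $\underline{L}\phi_0=O(u^{-2})$ along $N_0^{\mathcal{H}}$ supplied by Proposition \ref{prop:partasymdrhopsi} (built on the energy and pointwise decay from Section \ref{sec:decayest}) is both necessary and sufficient to make the defining integral converge and to recover (iv) as an identity rather than an assumption.
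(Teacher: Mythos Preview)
Your approach is essentially the same as the paper's: reduce $\square_g\psi_0^{(1)}=0$ to first-order ODEs along each piece of $\Sigma_0$ (the paper writes these directly as $\underline{L}(r^2\underline{L}\psi_0^{(1)})=r\underline{L}\phi_0$ and $L(r^2L\psi_0^{(1)})=rL\phi_0$, which is what your $T$-commutation-plus-wave-equation manipulation produces), integrate with the boundary conditions (iii) and (iv), match at the two corners, and reorganize the bulk term on $\Sigma_0\cap\mathcal{B}$ into the normal-flux form by integrating by parts.

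There is one misattribution worth flagging. You invoke Proposition~\ref{prop:partasymdrhopsi} for the convergence of $\int_u^\infty r\underline{L}\phi_0\,du'$ along $N_0^{\mathcal{H}}$, calling the $O(u^{-2})$ decay ``necessary and sufficient'' and identifying it as the main obstacle. But $N_0^{\mathcal{H}}$ is part of the \emph{initial} hypersurface ($v=v_0$ fixed), so $u\to\infty$ there means $r\to M$, not late time; Proposition~\ref{prop:partasymdrhopsi} gives $v$-decay and is irrelevant here. In $(v,r)$ coordinates the integral becomes $\int_M^r r'\partial_r\phi_0(v_0,r')\,dr'$, which is finite simply because $\partial_r\phi_0$ is bounded near $r=M$ by smoothness of the data. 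The $O(u^{-2})$ behavior of $\underline{L}\phi_0$ along $N_0^{\mathcal{H}}$ is a consequence of the degeneration $D\sim (r-M)^2\sim u^{-2}$ at the horizon, not of any dynamical estimate. So the step you label as the main obstacle is in fact straightforward, and no input from Section~\ref{sec:decayest} is needed for this proposition.
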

\begin{proof}
Note that we can write
\begin{equation*}
\square_g\psi_0^{(1)}(\tau,\rho)=\square_g\psi_0^{(1)}(0,\rho)+\int_0^{\tau} \square_g\psi_0(\tau',\rho)\,d\tau'=\square_g\psi_0^{(1)}(0,\rho),
\end{equation*}
and therefore $\square_g\psi_0^{(1)}=0$ in $\mathring{\mathcal{R}}$ if and only if $\square_g\psi_0^{(1)}(0,\rho)=0$ for $\rho>M$, which is equivalent to the following equation:
\begin{equation*}
\underline{L}(r^2\underline{L}\psi_0^{(1)})=r\underline{L}\phi_0.
\end{equation*}
We therefore obtain the following identity on ${{N}_0^{\mathcal{H}}}\cap\mathring{\mathcal{R}}$:
\begin{equation*}
2r^2\underline{L}\psi_0^{(1)}(u,v_0)= \lim_{u\to \infty}2r^2\underline{L}\psi_0^{(1)}(u,v_0)-2\int_{u}^{\infty} r\underline{L}\phi_0(u',v_0)\,du',
\end{equation*}
where the first term on the right-hand side is zero by definition of $\psi^{(1)}$.

Recall that $\partial_{\rho}=-2D^{-1}\underline{L}+h\cdot T=2D^{-1}L+(h-2D^{-1})\cdot T$, so by the above, we have that
\begin{equation*}
D(r_{\mathcal{H}})r_{\mathcal{H}}^2\partial_{\rho}\psi_0^{(1)}(u_{r_{\mathcal{H}}}(v_0),v_0)=2\int_{u_{r_{\mathcal{H}}}(v_0)}^{\infty} r\underline{L}\phi_0(u',v_0)\,du'+hD(r_{\mathcal{H}}) r_{\mathcal{H}}\phi_0(u_{r_{\mathcal{H}}}(v_0),v_0).
\end{equation*}
We compute
\begin{equation*}
\partial_{\rho}(Dr^2\partial_{\rho}\psi_0^{(1)})=-2(1-h\cdot D)r\partial_{\rho}\phi_0+(2-h\cdot D)rhT\phi_0+r\cdot (hD)' \phi_0,
\end{equation*}
so, by using all the above estimates, we can conclude that everywhere on $\Sigma_0\cap \mathcal{B}$:
\begin{equation*}
\begin{split}
Dr^2\partial_{\rho}\psi_0^{(1)}(0,\rho)=&\:2\int_{u_{r_{\mathcal{H}}}(v_0)}^{\infty} r\underline{L}\phi_0(u',v_0)\,du'+hD(r_{\mathcal{H}}) r_{\mathcal{H}}\phi_0(u_{r_{\mathcal{H}}}(v_0),v_0)\\
&+\int_{r_{\mathcal{H}}}^{\rho}[-2(1-h\cdot D)r\partial_{\rho}\phi_0+(2-h\cdot D)rhT\phi_0+r\cdot (hD)' \phi_0]|_{\Sigma_0}(\rho')d\rho'.
\end{split}
\end{equation*}
By $2L=D\partial_{\rho}+(2-hD)\cdot T$ we also obtain the following expression for $\psi^{(1)}$ on ${{N}_0^{\mathcal{I}}}$:
\begin{equation*}
\begin{split}
2r_{\mathcal{I}}^2L\psi_0^{(1)}(u_0,v_{r_{\mathcal{I}}}(u_0))=&2\int_{u_{r_{\mathcal{H}}}(v_0)}^{\infty} r\underline{L}\phi_0(u',v_0)\,du'+hD(r_{\mathcal{H}}) r_{\mathcal{H}}\phi_0(u_{r_{\mathcal{H}}}(v_0),v_0)\\
&+\int_{r_{\mathcal{H}}}^{r_{\mathcal{I}}}[-2(1-h\cdot D)r\partial_{\rho}\phi_0+(2-h\cdot D)rhT\phi_0+r\cdot (hD)' \phi_0]|_{\Sigma_0}(\rho')d\rho'\\
&+(2-h(r_{\mathcal{I}})D(r_{\mathcal{I}}))r_{\mathcal{I}}\phi_0(u_0,v_{r_{\mathcal{I}}}(u_0))=:C_0[\psi].
\end{split}
\end{equation*}
By using that the normal $n_{\Sigma_0}$ to $\Sigma_0\cap \mathcal{B}$ can be expressed as follows:
\begin{equation*}
\sqrt{\det g_{\Sigma_0\cap \mathcal{B}}}n_{\Sigma_0}=r^2\sin \theta\left[(hD-1)\partial_{\rho} +h(2-hD)T\right],
\end{equation*}
we can rewrite
\begin{equation*}
\begin{split}
[-2&(1-h\cdot D)r\partial_{\rho}\phi_0+(2-h\cdot D)rhT\phi_0+r\cdot (hD)' \phi_0]\sin \theta\\
=&\:2(h D-1)r^2\sin \theta\partial_{\rho}\psi+2(hD-1)\sin \theta \phi_0+(2-hD)h r^2\sin \theta T\psi_0+r(hD)' \sin \theta \phi_0\\
=&\:\partial_{\rho}( (hD-1)r^2 \psi_0)\sin \theta+\sqrt{\det g_{\Sigma_0\cap \mathcal{B}} }n_{\Sigma_0}(\psi).
\end{split}
\end{equation*}
Hence, we obtain
\begin{equation*}
\begin{split}
4\pi C_0=&2\int_{N^{\mathcal{H}}_0} r\underline{L}\phi_0\,d\omega du'+\int_{\Sigma_0\cap \mathcal{B}} n_{\Sigma_0}(\psi)\,d\mu_{0}+4\pi r\phi_0|_{N^{\mathcal{H}}_0} (r=r_{\mathcal{H}})+4\pi r\phi_0|_{N^{\mathcal{H}}_0} (r=r_{\mathcal{I}})\\
=&4\pi M\phi_0|_{\mathcal{H}^+} (v=v_0)-\int_{N^{\mathcal{H}}_0} \underline{L}\psi_0\,r^2d\omega du'+\int_{\Sigma_0\cap \mathcal{B}} n_{\Sigma_0}(\psi)\,d\mu_{0}+4\pi r\phi_0|_{N^{\mathcal{H}}_0} (r=r_{\mathcal{I}}).
\end{split}
\end{equation*}
Since $\psi^{(1)}$ satisfies
\begin{equation*}
L(r^2L\psi_0^{(1)})=rL\phi_0,
\end{equation*}
we therefore  conclude that everywhere on ${{N}_0^{\mathcal{I}}}$ we can write
\begin{equation*}
2r^2L\psi_0^{(1)}(u_0,v)=C_0[\psi]+2\int_{v_{r_{\mathcal{I}}}(u_0)}^{v} rL\phi_0(u_0,v')\,dv'.
\end{equation*}

In particular, if $I_0[\psi]=0$, we have that

\begin{equation*}
\left|\int_{v_{r_{\mathcal{I}}}(u_0)}^{\infty} rL\phi_0(u_0,v')\,dv'\right|<\infty
\end{equation*}
so we can switch to $(u,r)$ coordinates in order to express:
\begin{equation*}
\psi^{(1)}_0|_{{{N}_0^{\mathcal{I}}}}(r)=-2\left[C_0[\psi]+2\int_{v_{r_{\mathcal{I}}}(u_0)}^{\infty} rL\phi_0(u_0,v')\,dv'\right](r-M)^{-1}-2\int_{r}^{\infty} (r'-M)^{-1}\int_{r'}^{\infty}r\partial_r\phi_0|_{{{N}_0^{\mathcal{I}}}}(r'')\,dr''
\end{equation*}
The expression for $I_0[\psi_0^{(1)}]$ then follows from multiplying both sides by $r$, using that $\lim_{r\to \infty}r^3\partial_r\phi_0|_{{{N}_0^{\mathcal{I}}}}<\infty$ and taking an $r$ derivative.
\end{proof}

\begin{proposition}
\label{prop:explicitexprtimeint2}
Let $\psi_0$ be the spherical mean of a solution $\psi$ to \eqref{eq:waveequation} on extremal Reissner--Nordstr\"om. If $H_0[\psi]=0$, then the time integral $\psi_0^{(1)}$ of $\psi_0$ satisfies moreover
\begin{equation*}
r\psi^{(1)}_0|_{{{N}_0^{\mathcal{H}}}}(r)=-\tilde{r}\left[\underline{C}_0[\psi]+2\int_{u_{r_{\mathcal{H}}}(v_0)}^{\infty} \tilde{r}\underline{L}\phi_0(u',v_0)\,dv'\right](\tilde{r}-M)^{-1}+2\tilde{r}\int_{\tilde{r
}}^{\infty} (\tilde{r}'-M)^{-2}\int_{\tilde{r}'}^{\infty}\tilde{r}\partial_{\tilde{r}}\phi_0|_{{{N}_0^{\mathcal{H}}}}(\tilde{r}'')\,d\tilde{r}''d\tilde{r}',
\end{equation*}
with
\begin{align*}
\tilde{r}=&\:M+M^2(r-M)^{-1},\\
4\pi \underline{C}_0[\psi]:=&\:2\int_{N^{\mathcal{I}}_0}\frac{M}{r-M} \cdot r{L}\phi_0\,d\omega dv+\int_{\Sigma_0\cap \mathcal{B}} n_{\Sigma_0}\left(\frac{M}{r-M}\cdot \psi\right)\,d\mu_{0}\\
&+4\pi \frac{M}{r-M} \cdot r\phi_0|_{N^{\mathcal{H}}_0} (r=r_{\mathcal{H}})+4\pi \frac{M}{r-M} \cdot r\phi_0|_{N^{\mathcal{H}}_0} (r=r_{\mathcal{I}}).
\end{align*}
and we have that in $(v,r)$ coordinates
\begin{equation*}
H_0[\psi^{(1)}]=M\underline{C}_0[\psi]+2M\int_{u_{r_{\mathcal{H}}}(v_0)}^{\infty} \frac{Mr}{r-M}\underline{L}\phi_0(u',v_0)\,du'+M^4\partial_{r}^2\phi_0|_{{{N}_0^{\mathcal{H}}}}(r=M).
\end{equation*}
\end{proposition}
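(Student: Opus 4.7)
The cleanest approach is to deduce the statement from Proposition \ref{prop:explicitexprtimeint} by invoking the Couch--Torrence duality $\widetilde{\psi}(t,r^*,\theta,\phi)=\tfrac{M}{r-M}\psi(t,-r^*,\theta,\phi)$ introduced in Section \ref{sec:TheNewHorizonHairH01Psi}. In double null coordinates this duality corresponds to the swap $(u,v)\leftrightarrow (v,u)$, together with multiplication by the weight $\tfrac{M}{r-M}$; it exchanges $L\leftrightarrow \underline{L}$, $\mathcal{H}^+\leftrightarrow \mathcal{I}^+$, and the radial parameter $r\leftrightarrow \tilde{r}=M+M^2(r-M)^{-1}$. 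Moreover $\widetilde{\psi}$ again solves \eqref{eq:waveequation}, and under the assumption $H_0[\psi]=0$ we have $I_0[\widetilde{\psi}]=H_0[\psi]=0$, while the decay assumption $\lim_{r\to\infty}r^3\partial_r\phi_0|_{N_0^{\mathcal{I}}}<\infty$ that appears in Proposition \ref{prop:explicitexprtimeint} is replaced by the smoothness of $\psi$ at $\mathcal{H}^+$, which via the duality yields the corresponding finite limit $\lim_{\tilde{r}\to\infty}\tilde{r}^{\,3}\partial_{\tilde r}\widetilde{\phi}_0|_{\widetilde{N}_0^{\mathcal{I}}}<\infty$.

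Next I would verify that the time inversion commutes with the duality, i.e.\ that $\widetilde{\psi_0^{(1)}}=(\widetilde{\psi})^{(1)}$. Property (i) is automatic since $T$ preserves $r^*$ and the weight $\tfrac{M}{r-M}$: indeed $T\widetilde{\psi_0^{(1)}}=\widetilde{T\psi_0^{(1)}}=\widetilde{\psi_0}$. Property (ii) is immediate from the conformal invariance of the wave equation under the duality. The normalization conditions (iii) and (iv) translate into the analogous conditions at the boundary of the dual spacetime, namely $\lim_{\tilde v\to\infty}(\widetilde\psi)^{(1)}(0,\tilde v)=0$ and $\lim_{\tilde u\to\infty}\underline{L}(\widetilde\psi)^{(1)}_0(\tilde u,\tilde v_0)=0$; the latter matches the boundary conditions for the time integral of $\widetilde\psi$ after using that $H_0[\psi]=0$ guarantees the requisite decay of $\widetilde{\psi_0^{(1)}}$ towards the dual horizon.

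Having identified $\widetilde{\psi_0^{(1)}}=(\widetilde{\psi})^{(1)}$, I would then apply the closed-form expression from Proposition \ref{prop:explicitexprtimeint} to $\widetilde{\psi}$ on $\widetilde{N}_0^{\mathcal{I}}$, which is the image of $N_0^{\mathcal{H}}$ under the duality. This yields
\[
\tilde{r}\,\widetilde{\psi_0^{(1)}}|_{\widetilde{N}_0^{\mathcal{I}}}(\tilde r)
=-\tilde r\Bigl[C_0[\widetilde\psi]+2\!\int_{\tilde v_{\tilde r_{\mathcal{H}}}(\tilde u_0)}^{\infty}\! \tilde r\,L\widetilde{\phi}_0(\tilde u_0,\tilde v')\,d\tilde v'\Bigr](\tilde r-M)^{-1}
+2\tilde r\!\int_{\tilde r}^\infty\!(\tilde r'-M)^{-2}\!\!\int_{\tilde r'}^\infty\! \tilde r\,\partial_{\tilde r}\widetilde{\phi}_0|_{\widetilde{N}_0^{\mathcal{I}}}(\tilde r'')\,d\tilde r''d\tilde r'.
\]
The identity $r\psi_0^{(1)}|_{N_0^{\mathcal{H}}}=\tilde r\,\widetilde{\psi_0^{(1)}}|_{\widetilde N_0^{\mathcal{I}}}$ then gives the displayed formula once one rewrites the data integrals on $\widetilde N_0^{\mathcal{I}}$, $\widetilde N_0^{\mathcal{H}}$ and $\Sigma_0\cap\mathcal{B}$ appearing in $C_0[\widetilde\psi]$ in terms of the original $\psi$, yielding the constant $\underline{C}_0[\psi]$; this is a direct bookkeeping computation, using $\widetilde\phi=\tfrac{M}{r-M}\phi$ and the fact that $n_{\Sigma_0}$ is invariant under the duality up to the weight $\tfrac{M}{r-M}$. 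Finally the formula for $H_0[\psi^{(1)}]=I_0[\widetilde{\psi^{(1)}}]=I_0[(\widetilde\psi)^{(1)}]$ is obtained by substituting into the explicit expression for $I_0[\psi^{(1)}]$ furnished by Proposition \ref{prop:explicitexprtimeint} and rewriting in $(v,r)$ coordinates, where the limit $\lim_{\tilde r\to\infty}\tilde r^{\,3}\partial_{\tilde r}\widetilde\phi_0|_{\widetilde N_0^{\mathcal{I}}}$ turns into $M^4\partial_r^2\phi_0|_{N_0^{\mathcal{H}}}(r=M)$ after a Taylor expansion of $\phi_0$ at $r=M$ using $H_0[\psi]=0$.

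\textbf{Main obstacle.} The only delicate step is the verification that $\widetilde{\psi^{(1)}}=(\widetilde{\psi})^{(1)}$; this amounts to tracking the boundary conditions (iii) and (iv) through the duality and checking that the assumption $H_0[\psi]=0$ is precisely what is needed to ensure that all the improper integrals (originating from integration up to $\tilde r=\infty$, i.e.\ $r=M$) are convergent. Everything else is a direct transcription of Proposition \ref{prop:explicitexprtimeint} via the duality.
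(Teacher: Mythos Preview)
Your proposal is correct and follows essentially the same strategy as the paper. The paper implements the Couch--Torrence duality concretely by passing to the rescaled function $\underline{\psi}_0^{(1)}=M^{-1}(r-M)\psi_0^{(1)}$ on the \emph{same} spacetime, verifying (via Proposition \ref{prop:explicitexprtimeint}) that it satisfies the swapped boundary conditions $\lim_{u\to\infty}\underline{\psi}_0^{(1)}(u,v_0)=0$ and $\lim_{v\to\infty}L\underline{\psi}_0^{(1)}(u_0,v)=0$, deriving the equations $L(\tilde r^{\,2}L\underline{\psi}_0^{(1)})=\tilde r L\phi_0$ and $\underline{L}(\tilde r^{\,2}\underline{L}\underline{\psi}_0^{(1)})=\tilde r\underline{L}\phi_0$, and then re-running the integration argument of Proposition \ref{prop:explicitexprtimeint} starting from $N_0^{\mathcal{I}}$ rather than $N_0^{\mathcal{H}}$. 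Your more abstract route---applying the duality as a map on spacetime and invoking Proposition \ref{prop:explicitexprtimeint} directly for $\widetilde\psi$---is equivalent; the ``main obstacle'' you flag (that $\widetilde{\psi^{(1)}}=(\widetilde\psi)^{(1)}$, i.e.\ the boundary conditions transfer correctly) is precisely the step the paper handles first, and your observation that $H_0[\psi]=0$ is what makes the horizon integrals converge matches the paper's use of that hypothesis.
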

\begin{proof}
Consider now the rescaled functions $\underline{\psi}_0=M^{-1}(r-M)\psi_0$ and $\underline{\psi}^{(1)}_0=M^{-1}(r-M)\psi_0^{(1)}$. Note that by Proposition \ref{prop:explicitexprtimeint}, we have that
\begin{equation*}
\lim_{u\to \infty}\underline{\psi}^{(1)}_0(u,v_0)=0,\\
\lim_{v\to \infty} L \underline{\psi}^{(1)}_0(u_0,v)=0,
\end{equation*}
so $\underline{\psi}^{(1)}_0$ satisfies analogous boundary conditions to $\psi_0^{(1)}$, but with $u$ and $v$ and $L$ and $\underline{L}$ interchanged. We introduce the notation $\tilde{r}=M+M^2(r-M)^{-1}$ and $\tilde{D}(\tilde{r})=(1-M\tilde{r}^{-1})^2$, we have that $\tilde{r} \underline{\psi}_0=r\psi_0$ and $\tilde{r} \underline{\psi}_0^{(1)}=r\psi_0^{(1)}$, and $\underline{\psi}_0^{(1)}$ satisfies the equations:
\begin{align*}
L(\tilde{r}^2L\underline{\psi}^{(1)}_0)=&\tilde{r}L \phi_0,\\
\underline{L}(\tilde{r}^2\underline{L}\underline{\psi}^{(1)}_0)=\:&\tilde{r}\underline{L}\phi_0.
\end{align*}

We can therefore repeat the arguments above, starting the integration along ${{N}_0^{\mathcal{I}}}$ rather than ${{N}_0^{\mathcal{H}}}$, to obtain the following expressions:
\begin{align*}
2\widetilde{r}^2L\underline{\psi}^{(1)}_0(u_0,v)=&\: 2\int_{v}^{\infty} \widetilde{r}L\phi_0(u_0,v)\,dv' \quad \textnormal{on ${{N}_0^{\mathcal{I}}}$},\\
\tilde{D}(\tilde{r})\tilde{r}^2\partial_{\tilde{\rho}}\underline{\psi}^{(1)}_0(0,\tilde{\rho})=&\:\int^{\tilde{\rho}}_{\tilde{r}(r_{\mathcal{I}})}\left[-2(1-\tilde{h}\cdot \tilde{D})\tilde{r}\partial_{\tilde{\rho}}\phi_0+(2-\tilde{h}\cdot \tilde{D})\tilde{r}\tilde{h}T\phi_0+\tilde{r}\cdot \frac{d(\tilde{h}\tilde{D})}{d\tilde{r}} \phi_0\right]|_{\Sigma_0}(\tilde{\rho}')d\tilde{\rho}' \\
&+\tilde{h}\cdot \tilde{D}(\tilde{r})\tilde{r}^2\phi_0|_{\Sigma_0}(\rho=r_{\mathcal{I}})+2\int_{v_{r_{\mathcal{I}}}(u_0)}^{\infty} \tilde{r}{L}\phi_0(u_0,v')\,dv'\quad \textnormal{on ${S}\cap \{r_{\mathcal{H}}\leq r\leq r_{\mathcal{I}}\}$},\\
2\tilde{r}^2\underline{L}\underline{\psi}^{(1)}_0(u,v_0)=&\:\underline{C}_0[\psi]+2\int_{u_{r_{\mathcal{H}}}(u_0)}^{u} \tilde{r}\underline{L}\phi_0(u',v_0)\,du'\quad \textnormal{on ${{N}_0^{\mathcal{H}}}\cap \mathring{\mathcal{M}}$},
\end{align*}
with
\begin{align*}
\tilde{h}(\tilde{r})=&\:(2\tilde{D}^{-1}-h M^2(\tilde{r}-M)^{-2})=(2D^{-1}-h) M^2(\tilde{r}-M)^{-2},\\
2-\tilde{h}\tilde{D}=&\:hM^2\tilde{r}^{-2}=hD,\\
\partial_{\tilde{\rho}}=&-M^{-2}(r-M)^{2}\partial_{\rho}=-M^{2}(\tilde{r}-M)^{-2}\partial_{\rho},\\
\underline{C}_0[\psi]:=&\:(2-\tilde{h}\cdot \tilde{D})\tilde{r}\phi_0|_{\Sigma_0}(\rho=r_{\mathcal{H}})\\
&-\int^{\tilde{r}(r_{\mathcal{H}})}_{\tilde{r}(r_{\mathcal{I}})}\left[2(1-\tilde{h}\cdot \tilde{D})\tilde{r}\partial_{\tilde{\rho}}\phi_0-(2-\tilde{h}\cdot \tilde{D})\tilde{r}\tilde{h}T\phi_0-\tilde{r}\cdot \frac{d(\tilde{h}\tilde{D})}{d\tilde{r}} \phi_0\right]|_{\Sigma_0}(\tilde{\rho}')d\tilde{\rho}'\\
&+\tilde{h}\cdot \tilde{D}(\tilde{r})\tilde{r}\phi_0|_{\Sigma_0}(\rho=r_{\mathcal{I}})+2\int_{v_{r_{\mathcal{I}}}(u_0)}^{\infty} \tilde{r}L\phi_0(u_0,v')\,dv'\\
=&\:hD\tilde{r}\phi_0|_{\Sigma_0}(\rho=r_{\mathcal{H}})\\
&-\int_{r_{\mathcal{H}}}^{r_{\mathcal{I}}}\left[2(1-hD)\tilde{r}\partial_{\rho}\phi_0-(2-hD)\tilde{r}hT\phi_0-\tilde{r}\cdot \frac{d(hD)}{dr} \phi_0\right]|_{\Sigma_0}(\rho')d\rho'\\
&+(2-hD)\tilde{r}\phi_0|_{\Sigma_0}(\rho=r_{\mathcal{I}})+2\int_{v_{r_{\mathcal{I}}}(u_0)}^{\infty} \tilde{r}L\phi_0(u_0,v')\,dv'\\
=&\:\:hD\tilde{r}\phi_0|_{\Sigma_0}(\rho=r_{\mathcal{H}})\\
&-\int_{r_{\mathcal{H}}}^{r_{\mathcal{I}}}\frac{M}{r-M}\left[2(1-hD)r\partial_{\rho}\phi_0-(2-hD)rhT\phi_0-r\cdot \frac{d(hD)}{dr} \phi_0\right]|_{\Sigma_0}(\rho')d\rho'\\
&+(2-hD)\tilde{r}\phi_0|_{\Sigma_0}(\rho=r_{\mathcal{I}})+2\int_{v_{r_{\mathcal{I}}}(u_0)}^{\infty} \frac{Mr}{r-M}L\phi_0(u_0,v')\,dv'.
\end{align*}

We recall from the proof of Proposition \ref{prop:explicitexprtimeint} that
\begin{equation*}
\begin{split}
\frac{M}{r-M}[-2&(1-h\cdot D)r\partial_{\rho}\phi_0+(2-h\cdot D)rhT\phi_0+r\cdot (hD)' \phi_0]\sin \theta\\
=&\:\frac{M}{r-M}\partial_{\rho}( (hD-1)r^2 \psi_0)\sin \theta+\sqrt{\det g_{\Sigma_0\cap \mathcal{B}} }\frac{M}{r-M}n_{\Sigma_0}(\psi)\\
=&\:\partial_{\rho}\left(\frac{M}{r-M} (hD-1)r^2 \psi_0\right)\sin \theta+\sqrt{\det g_{\Sigma_0\cap \mathcal{B}} } n_{\Sigma_0}\left(\frac{M}{r-M}\psi\right)
\end{split}
\end{equation*}
and hence,
\begin{equation*}
\begin{split}
4\pi \underline{C}_0[\psi]=&\:2\int_{N^{\mathcal{I}}_0}\frac{M}{r-M} \cdot r{L}\phi_0\,d\omega du'+\int_{\Sigma_0\cap \mathcal{B}} n_{\Sigma_0}\left(\frac{M}{r-M}\cdot \psi\right)\,d\mu_{0}\\
&+4\pi \frac{M}{r-M} \cdot r\phi_0|_{N^{\mathcal{H}}_0} (r=r_{\mathcal{H}})+4\pi \frac{M}{r-M} \cdot r\phi_0|_{N^{\mathcal{H}}_0} (r=r_{\mathcal{I}}).
\end{split}
\end{equation*}
Hence, if $H_0[\psi]=0$, we have that
\begin{equation*}
\underline{\psi}^{(1)}_0|_{{{N}_0^{\mathcal{H}}}}(\tilde{r})=-\left[\underline{C}_0[\psi]+2\int_{u_{r_{\mathcal{H}}}(v_0)}^{\infty} \tilde{r}\underline{L}\phi_0(u',v_0)\,dv'\right](\tilde{r}-M)^{-1}+2\int_{\tilde{r
}}^{\infty} (\tilde{r}'-M)^{-2}\int_{\tilde{r}'}^{\infty}\tilde{r}\partial_{\tilde{r}}\phi_0|_{{{N}_0^{\mathcal{H}}}}(\tilde{r}'')\,d\tilde{r}''d\tilde{r}',
\end{equation*}
and the expression for $H_0[\psi_0^{(1)}]$ is derived as above.
\end{proof}

The above propositions motivate the following definitions:
\begin{definition}
We define the \textbf{time-inverted constants} $I_0^{(1)}[\psi]$ and $H_0^{(1)}[\psi]$ as follows:
\begin{align*}
I_0^{(1)}[\psi]:=&\:I_0[\psi_0^{(1)}]\quad \textnormal{if $\lim_{v\to \infty}r^3L\psi_0(u_0,v)<\infty$ (and therefore $I_0[\psi]=0$)},\\
H_0^{(1)}[\psi]:=&\:H_0[\psi_0^{(1)}]\quad \textnormal{if $H_0[\psi]=0$}.
\end{align*}
\end{definition}

\begin{remark}
If we assume the qualitative decay statements: $r\psi_0|_{\mathcal{I}^+}(u)\to 0$ as $u\to \infty$ and $r\psi_0|_{\mathcal{H}^+}(v)\to 0$ as $v\to \infty$, we can use the results of Proposition \ref{prop:explicitexprtimeint} and \ref{prop:explicitexprtimeint2} to obtain the following alternative expressions for $I_0^{(1)}[\psi]$ and $H_0^{(1)}[\psi]$: if $\lim_{r \to \infty} r^3L\phi_0|_{\Sigma_0}<\infty$, then
\begin{equation*}
\begin{split}
I_0^{(1)}[\psi]=&-M \lim_{r\to \infty} r\psi_0^{(1)}|_{\Sigma_0} (r)-2\lim_{r\to \infty} r^3 L\phi_0|_{\Sigma_0}\\
=&\: M \int_{u_0}^{\infty} r\psi_0|_{\mathcal{I}^+}(u)\,du-2\lim_{r\to \infty} r^3 L\phi_0|_{\Sigma_0},
\end{split}
\end{equation*}
and if $H_0[\psi]=0$, we obtain
\begin{equation*}
\begin{split}
H_0^{(1)}[\psi]=&-M \lim_{r\downarrow M} r\psi_0^{(1)}|_{\Sigma_0} (r)-M^4Y^2\phi_0|_{\Sigma_0}(r=M)\\
=&\: M \int_{v_0}^{\infty} r\psi_0|_{\mathcal{H}^+}(v)\,dv-M^4Y^2\phi_0|_{\Sigma_0}(r=M).
\end{split}
\end{equation*}
We will recover the above decay assumption for $r\psi_0^{(1)}$ in Proposition \ref{prop:decayesttimeint}. See also the discussion in Section 1.6 of \cite{paper2} for an analogous expression for $I_0^{(1)}[\psi]$ in the sub-extremal setting.
\end{remark}

\subsection{Extension of the time integral $\protect\psi^{(1)}$ in $\protect\mathcal{R}$ }
\label{sec:extendedtimeinversionr}
In this section, we will investigate the regularity properties of the \emph{continuous extensions} of the time integral functions $\psi^{(1)}$ defined in Section \ref{sec:TheRegularTimeInversionConstructionInOversetOMathcalR} to the full spacetime region $\mathcal{R}$. We will moreover discuss the singular properties of (derivatives) of the radiation field at $\mathcal{I}^+$.

\subsubsection{Regular extension in $\protect\mathcal{R}$ for Type \textbf{B} perturbations}
\label{sec:TheSingularTimeInversionInProtectMathcalRB}
We first consider the case of Type \textbf{B} data.
\begin{proposition}
\label{eq:smoothextTypeBtimeint}
Let $\psi_0^{(1)}$ be the time integral of a smooth solution $\psi_0$ to \eqref{eq:waveequation} corresponding to initial data of Type \textbf{B}. Then $\psi_0^{(1)}$ can be extended uniquely as a smooth function to $\mathcal{R}$. Furthermore, $I_0[\psi^{(1)}]$ is well-defined.
\end{proposition}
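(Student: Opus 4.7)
The strategy is to verify that the explicit boundary expressions for $\psi_0^{(1)}|_{N_0^{\mathcal{H}}}$ and $\psi_0^{(1)}|_{N_0^{\mathcal{I}}}$ derived in Propositions \ref{prop:explicitexprtimeint} and \ref{prop:explicitexprtimeint2} extend smoothly across the endpoints $r=M$ and $r=\infty$, respectively, for Type~\textbf{B} data. Once smoothness of $\psi_0^{(1)}|_{\Sigma_0}$ is established, the unique extension of $\psi_0^{(1)}$ as a smooth function on $\mathcal{R}$ follows from the well-posedness of the initial value problem for $\square_g\psi^{(1)}=0$ with data on $\Sigma_0$. The conditions (i)--(iv) of Definition \ref{def:timeint} then force this extension to agree with the original $\psi_0^{(1)}$ on $\mathring{\mathcal{R}}$, since any two such extensions differ by a $T$-invariant spherically symmetric solution of \eqref{eq:waveequation}, which in ERN necessarily has the form $a+b(r-M)^{-1}$ and is killed by (iii)--(iv).

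The main point is the regularity at $r=M$. Since Type~\textbf{B} data are compactly supported away from $\mathcal{H}^+$, there exists $R_0>M$ such that $\phi_0|_{N_0^{\mathcal{H}}}$ vanishes on $\{M\le r\le R_0\}$; equivalently $\partial_{\tilde r}\phi_0|_{N_0^{\mathcal{H}}}(\tilde r)=0$ for $\tilde r\ge \tilde R_0:=M+M^2/(R_0-M)$. Consequently, for $\tilde r$ large enough (i.e.\ $r$ close enough to $M$) the double integral on the right-hand side of the expression for $r\psi_0^{(1)}|_{N_0^{\mathcal{H}}}$ in Proposition \ref{prop:explicitexprtimeint2} identically vanishes, so in a neighbourhood of the horizon we have
\begin{equation*}
r\psi_0^{(1)}|_{N_0^{\mathcal{H}}}(r)=-\frac{\tilde r}{\tilde r-M}\Bigl[\underline{C}_0[\psi]+2\!\int_{u_{r_{\mathcal{H}}}(v_0)}^{\infty}\! \tilde r\,\underline{L}\phi_0(u',v_0)\,du'\Bigr].
\end{equation*}
The key algebraic observation is $\tilde r/(\tilde r-M)=1+(r-M)/M$, which is smooth (indeed linear) in $r$ at $r=M$. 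Hence $\psi_0^{(1)}|_{N_0^{\mathcal{H}}}$ admits a smooth extension to $r=M$, with boundary value $-M^{-2}H_0^{(1)}[\psi]$ (using that $Y^2\phi_0|_{\mathcal{H}^+}=0$ for Type~\textbf{B} data). Away from $r=M$, the integral terms are smooth in $r$ since $\phi_0|_{N_0^{\mathcal{H}}}$ is smooth and compactly supported. Since the initial data hypersurface consists of $N_0^{\mathcal{H}}\cup(\Sigma_0\cap\mathcal{B})\cup N_0^{\mathcal{I}}$, and $\psi_0^{(1)}|_{\Sigma_0\cap\mathcal{B}}$ and $\psi_0^{(1)}|_{N_0^{\mathcal{I}}}$ are given by smooth explicit formulas for compactly supported data, we obtain smoothness of $\psi_0^{(1)}|_{\Sigma_0}$ across the matching spheres at $r=r_{\mathcal{H}}$ and $r=r_{\mathcal{I}}$ from the construction in Proposition \ref{prop:explicitexprtimeint}.

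Finally, for $I_0[\psi_0^{(1)}]$ we invoke the formula of Proposition \ref{prop:explicitexprtimeint}, whose applicability requires $I_0[\psi]=0$ and $\lim_{r\to\infty}r^3 L\phi_0|_{N_0^{\mathcal{I}}}<\infty$, both of which hold automatically for Type~\textbf{B} data by compact support. Moreover, the compact support of $\phi_0|_{N_0^{\mathcal{I}}}$ makes the second, double-integral term in the expression for $r\psi_0^{(1)}|_{N_0^{\mathcal{I}}}$ vanish for $r$ larger than the support radius, so for such $r$
\begin{equation*}
r\psi_0^{(1)}|_{N_0^{\mathcal{I}}}(r)=-\frac{r}{r-M}\Bigl[C_0[\psi]+2\!\int_{v_{r_{\mathcal{I}}}(u_0)}^{\infty}\! rL\phi_0(u_0,v')\,dv'\Bigr],
\end{equation*}
and a direct expansion in $M/r$ yields $I_0[\psi_0^{(1)}]=M\bigl(C_0[\psi]+2\int rL\phi_0\,dv'\bigr)$, which is manifestly finite. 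The only mild obstacle in the above outline is the bookkeeping at the matching spheres $\{r=r_{\mathcal{H}}\}$ and $\{r=r_{\mathcal{I}}\}$ on $\Sigma_0$, which is handled by the fact that the three pieces of the explicit formulas in Proposition \ref{prop:explicitexprtimeint} are constructed precisely so that $\psi_0^{(1)}$ and its transverse derivatives agree at these junctions; this is where the compatibility of the boundary conditions (iii)--(iv) with the constant $C_0[\psi]$ enters.
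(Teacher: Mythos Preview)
Your argument is correct but takes a different route from the paper. The paper works directly with the formula
\[
\partial_r\psi_0^{(1)}(v_0,r)=\frac{2}{(r-M)^2}\int_M^r r'\partial_r\phi_0(v_0,r')\,dr'
\]
(equation \eqref{eq:explicitdrpsi1NH}), and uses only that $H_0[\psi]=0$ together with smoothness: by Taylor's theorem one writes $r\partial_r\phi_0(v_0,r)=\sum_{k=1}^N p_k(r-M)^k+(r-M)^{N+1}f_N$, with the constant term absent precisely because $H_0[\psi]=0$. Plugging this into the integral yields enough powers of $(r-M)$ to cancel the $(r-M)^{-2}$, and since $N$ is arbitrary one concludes $\partial_r^N\psi_0^{(1)}$ is finite at $r=M$ for every $N$.

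Your approach instead invokes Proposition \ref{prop:explicitexprtimeint2} (the $\tilde r$-formula) and exploits the \emph{full} Type~\textbf{B} hypothesis that the data vanish identically near $\mathcal{H}^+$, making the double integral trivially zero there and reducing the analysis to the algebraic identity $\tilde r/(\tilde r-M)=1+(r-M)/M$. This is cleaner for the case at hand, but the paper's argument has the advantage of using only the weaker condition $H_0[\psi]=0$ (plus smoothness), so it would apply verbatim to smooth data that merely satisfy $H_0[\psi]=0$ without being supported away from the horizon. For the $I_0[\psi^{(1)}]$ part, both arguments are essentially the same: invoke Proposition \ref{prop:explicitexprtimeint} using that $\lim_{r\to\infty}r^3\partial_r\phi_0|_{N_0^{\mathcal{I}}}$ is finite (indeed zero) for compactly supported data.
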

\begin{proof}
By Proposition \ref{prop:explicitexprtimeint}, we have that in $(v,r)$ coordinates
\begin{equation}
\label{eq:explicitdrpsi1NH}
\partial_r\psi_0^{(1)} (v_0,r)=\frac{2}{(r-M)^2}\int_M^r r'\partial_r\phi_0(v_0,r')\,dr'
\end{equation}
along $\mathcal{N}^{\mathcal{H}}_0$.

If we assume that $H_0[\psi]=0$, we can use smoothness of $\phi_0$ together with Taylor's theorem to obtain the following: for any $N\in \N$, we can decompose for $r\leq r_{\mathcal{H}}$: $r\partial_r\phi_0(v_0,r)=\sum_{k=1}^Np_k(r-M)^k+(r-M)^{N+1}f_N(v,r)$, for some smooth function $f_N: [M,r_{\mathcal{H}})\to \R$ and coefficients $p_k\in \R$, with $k\in 1,\ldots,N$. 

Hence,
\begin{equation*}
\partial_r\psi_0^{(1)} (v_0,r)=\sum_{k=1}^N \frac{2p_k}{(k+1)(r-M)^2}(r-M)^{k+1}+\frac{2}{(r-M)^2}\int_M^r (r'-M)^{N+1}f_N(r')\,dr'.
\end{equation*}
It is clear that $\partial_r^{N}\psi_0$ attains a finite limit at $r=M$. Using that $N$ can be taken to be arbitrarily large and $T\psi_0^{(1)}=\psi_0$, we can conclude that $\psi^{(1)}$ extends smoothly to $r=M$. The second part of the proposition follows immediately from Proposition \ref{prop:explicitexprtimeint}, using that $\lim_{v\to \infty} r^3L\phi(u_0,v)$ is well-defined for Type \textbf{B} data.
\end{proof}

\textbf{By Proposition \ref{eq:smoothextTypeBtimeint}, all the estimates in Section \ref{sec:asympnonzeroconst} can be applied \underline{without modification} when $\psi_0$ is replaced by $\psi^{(1)}_0$ in the case of Type \textbf{B} data!}

\subsubsection{Singular horizon extension in $\protect\mathcal{R}$ for Type \textbf{A} perturbations}
\label{sec:TheSingularTimeInversionInProtectMathcalRA}
We now consider data of Type \textbf{A} and show that, due to the non-vanishing of $H_0[\psi]$, the time integral $\psi^{(1)}$ displays singular behaviour at $\mathcal{H}^+$.
\begin{proposition}
\label{lm:someregproptimeint}
Let $\psi_0^{(1)}$ be the time integral of a smooth solution $\psi_0$ to \eqref{eq:waveequation} corresponding to initial data of Type \textbf{A}. Then $\psi_0^{(1)}$ cannot be extended as a continuous function to $\mathcal{R}$. However, $I_0[\psi^{(1)}]$ is well-defined. More precisely, we can decompose
\begin{align}
\label{eq:blowupdrpsi1typeA}
\partial_{r}\psi^{(1)}_0(v,r)=&-2M^{-1}H_0[\psi](r-M)^{-1}+f(v,r),\\
\label{eq:blowupdrpsi1typeA2}
\psi^{(1)}_0(v,r)=& -2M^{-1}H_0[\psi]\log(r-M)+\widetilde{f}(v,r),
\end{align}
for some smooth, spherically symmetric functions $f,\widetilde{f}$ on $\mathcal{R}$.

Furthermore, for $\epsilon>0$ arbitrarily small, we can estimate
\begin{align}
\label{eq:finiteTenergypsi1}
\int_{\Sigma_0}& J^T[\psi^{(1)}]\cdot \mathbf{n}_{\Sigma_0}\,d\mu_{\Sigma_0}+\int_{{{N}_0^{\mathcal{H}}}}(r-M)^{-1+\epsilon}(\underline{L}\phi^{(1)})^2\,du+\int_{{{N}_0^{\mathcal{I}}}}r^{1-\epsilon}({L}\phi^{(1)})^2\,dv<\infty,\\
\label{eq:blowupPenergypsi1}
\int_{\Sigma_0}& \frac{1}{\sqrt{D}}\cdot J^T[\psi^{(1)}]\cdot \mathbf{n}_{\Sigma_0}\,d\mu_{\Sigma_0}=\infty.
\end{align}
\end{proposition}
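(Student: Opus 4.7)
The plan is to build the singular part of $\psi^{(1)}_0$ by hand out of the explicit ODE-type formulas of Propositions \ref{prop:explicitexprtimeint} and \ref{prop:explicitexprtimeint2}, subtract it off, and then verify the remainder is a classical smooth solution to an inhomogeneous wave equation.

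\textbf{Step 1 (singular behavior on $\mathcal{N}^{\mathcal{H}}_0$).} By Proposition \ref{prop:explicitexprtimeint} we have, in $(v,r)$ coordinates,
\begin{equation*}
\partial_r\psi_0^{(1)}(v_0,r)=\frac{2}{(r-M)^2}\int_M^r r'\partial_r\phi_0(v_0,r')\,dr'.
\end{equation*}
By the definition \eqref{introhorizonH} of $H_0[\psi]$, the integrand satisfies $r'\partial_r\phi_0(v_0,M)=-M^{-1}H_0[\psi]$, so Taylor expansion and integration give
\begin{equation*}
\int_M^r r'\partial_r\phi_0(v_0,r')\,dr' = -M^{-1}H_0[\psi]\,(r-M) + (r-M)^2\,g_0(r),
\end{equation*}
with $g_0$ smooth. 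Thus $\partial_r\psi_0^{(1)}|_{\mathcal{N}^{\mathcal{H}}_0}=-2M^{-1}H_0[\psi](r-M)^{-1}+f_0(r)$ and, integrating, $\psi_0^{(1)}|_{\mathcal{N}^{\mathcal{H}}_0}=-2M^{-1}H_0[\psi]\log(r-M)+\widetilde{f}_0(r)$, with $f_0,\widetilde{f}_0\in C^\infty([M,r_{\mathcal H}])$.

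\textbf{Step 2 (extension to $\mathcal{R}$).} Introduce the time-independent, spherically symmetric function $F(r):=-2M^{-1}H_0[\psi]\log(r-M)$. A direct computation with $D=(r-M)^2/r^2$ and $R=D'+2D/r$ yields
\begin{equation*}
\square_g F = D F'' + R F' = -\tfrac{2H_0[\psi]}{M r^2},
\end{equation*}
which is smooth on $\mathcal{R}$. Set $\widetilde{\psi}:=\psi_0^{(1)}-F$. From Step 1, $\widetilde{\psi}|_{\mathcal{N}^{\mathcal{H}}_0}\in C^\infty$ up to $\mathcal{H}^+$; Proposition \ref{prop:explicitexprtimeint} shows that $\widetilde{\psi}|_{\Sigma_0\cap\mathcal{B}}$ and $\widetilde{\psi}|_{\mathcal{N}^{\mathcal{I}}_0}$ are smooth (the singularity is entirely concentrated at $r=M$, where it is exactly cancelled by $F$). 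By standard well-posedness for the wave equation on $\mathcal{R}$ with smooth initial data on $\Sigma_0$ and smooth right-hand side $\square_g\widetilde{\psi}=2H_0[\psi]/(Mr^2)$, $\widetilde{\psi}$ is smooth on $\mathcal{R}$. Setting $f:=\partial_r\widetilde{\psi}$ and $\widetilde{f}:=\widetilde{\psi}$ yields \eqref{eq:blowupdrpsi1typeA} and \eqref{eq:blowupdrpsi1typeA2}. The statement $I_0[\psi^{(1)}]<\infty$ is immediate from the last part of Proposition \ref{prop:explicitexprtimeint}, since Type \textbf{A} data are compactly supported and thus $\lim_{r\to\infty}r^3\partial_r\phi_0|_{\mathcal{N}^{\mathcal{I}}_0}=0$.

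\textbf{Step 3 (energy estimates).} Decompose the energies as $\psi^{(1)}=F+\widetilde{\psi}$. Since $\widetilde{\psi}$ is smooth and satisfies an inhomogeneous wave equation with a smooth, sufficiently fast-decaying source (compactly supported initial data for $\psi$ and the explicit formula for $\psi^{(1)}|_{\mathcal{N}^{\mathcal{I}}_0}$ guarantee decay at infinity), all of its weighted energies are finite by direct inspection. For $F$, the key point is $TF=0$, so $J^T[F]$ contains only the spatial components $D(F')^2\sim (r-M)^2/r^2\cdot (r-M)^{-2}=r^{-2}$, which is bounded on $\Sigma_0$ and gives a finite $T$-flux. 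On $\mathcal{N}^{\mathcal{H}}_0$ one has $\underline{L}F=-\tfrac{D}{2}F'$, so $(\underline{L}F)^2\sim (r-M)^2$ while $du\sim (r-M)^{-2}dr$ near $\mathcal{H}^+$, and the weighted integral in \eqref{eq:finiteTenergypsi1} becomes $\int_M^{r_{\mathcal H}}(r-M)^{-1+\epsilon}\,dr<\infty$. On $\mathcal{N}^{\mathcal{I}}_0$, $F$ is bounded as $r\to \infty$, so $LF\equiv 0$ on that branch and the weighted integral is finite, with the remaining contribution coming from $\widetilde{\psi}$. For the divergence \eqref{eq:blowupPenergypsi1}, note that near $\mathcal{H}^+$ the weight $D^{-1/2}$ converts the above $r^{-2}$ bound into $D^{-1/2}\cdot D(F')^2\sim \sqrt{D}(F')^2\sim [r(r-M)]^{-1}$, whose integral against $d\mu_{\Sigma_0}$ diverges logarithmically at $r=M$; the smooth part $\widetilde{\psi}$ cannot cancel this singular contribution.

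\textbf{Anticipated obstacle.} The subtle step is Step 2: turning the pointwise initial-data decomposition into a \emph{global} smooth decomposition of $\psi_0^{(1)}$ on $\mathcal{R}$. The only way to do so cleanly is to exploit that $F$ is exactly $T$-invariant and that $\square_g F$ is smooth (so that subtracting $F$ is compatible with the dynamics), and then to appeal to well-posedness of the inhomogeneous wave equation for the remainder on a spacetime whose boundary $\mathcal{H}^+$ is null and to which standard smooth extension results apply. Once this is in place, the energy computations reduce to elementary integrations of the model singularity $(r-M)^{-1}$ weighted by appropriate powers of $D$.
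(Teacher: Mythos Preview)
Your argument is correct and matches the paper's approach at the key step: the paper's proof is extremely terse, carrying out only your Step~1 (Taylor expanding $\partial_r\phi_0$ at $r=M$ and plugging into \eqref{eq:explicitdrpsi1NH}) and then asserting that \eqref{eq:blowupdrpsi1typeA}, \eqref{eq:finiteTenergypsi1} and \eqref{eq:blowupPenergypsi1} ``follow directly.'' Your Steps~2--3 supply the details the paper leaves implicit. In particular, your Step~2 device of subtracting the explicit $T$-invariant model $F(r)=-2M^{-1}H_0\log(r-M)$, verifying $\square_gF=-2H_0/(Mr^2)$ is smooth, and invoking well-posedness for the inhomogeneous remainder is a clean way to upgrade the initial-slice decomposition to a global smooth decomposition on $\mathcal{R}$; the paper presumably has in mind the equivalent observation that $T(\psi^{(1)}_0-F)=\psi_0$ is smooth, so smoothness on $\Sigma_0$ propagates by integrating along $T$.

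One small slip in Step~3: the claim ``$F$ is bounded as $r\to\infty$, so $LF\equiv 0$ on that branch'' is false, since $F$ grows like $\log r$ and $LF=\tfrac{D}{2}F'=-M^{-1}H_0(r-M)r^{-2}\sim -M^{-1}H_0 r^{-1}$. The decomposition $\psi^{(1)}_0=F+\widetilde\psi$ is therefore not the right tool for the $\mathcal{N}^{\mathcal{I}}_0$ integral (both pieces individually carry $r^{-1}$ contributions that only cancel in the sum). The correct argument, which you also gesture at, is to estimate $L\phi^{(1)}_0$ directly from the formula $2r^2L\psi^{(1)}_0=C_0[\psi]+2\int rL\phi_0\,dv'$ of Proposition~\ref{prop:explicitexprtimeint}: for compactly supported (Type~\textbf{A}) data this gives $r^2L\psi^{(1)}_0=O(1)$ and hence $L\phi^{(1)}_0=O(r^{-1})$, so the $r^{1-\epsilon}$-weighted integral on $\mathcal{N}^{\mathcal{I}}_0$ converges.
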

\begin{proof}
We can use smoothness of $\phi_0$ together with Taylor's theorem and the definition of $H_0[\psi]$ to obtain the following: for any $N\in \N$, we can decompose 
\begin{equation*}
\partial_r\phi_0(r,v_0)=-M^{-2}H_0+\sum_{k=1}^Np_k(r-M)^k+(r-M)^{N+1}f_N(v,r).
\end{equation*}
The equation \eqref{eq:blowupdrpsi1typeA} then follows immediately after plugging the above equation into the right-hand side of \eqref{eq:explicitdrpsi1NH}. Then, \eqref{eq:finiteTenergypsi1} and \eqref{eq:blowupPenergypsi1} follow directly.
\end{proof}

\subsubsection{Singular radiation field  $r\protect\protect\psi^{(1)}|_{\mathcal{I}^{+}}$ for Type \textbf{D} perturbations}
\label{sec:TheSingularTimeInversionInProtectMathcalRD}
We now consider data of Type \textbf{D} and show that the radiation field $r\psi^{(1)}|_{\mathcal{I}^+}$ and Newman--Penrose constant $I_0[\psi^{(1)}]$ are ill-defined in this case.
\begin{proposition}
\label{eq:singextTypeDtimeint}
Let $\psi_0^{(1)}$ be the time integral of a smooth solution $\psi_0$ to \eqref{eq:waveequation} corresponding to initial data of Type \textbf{D} such that moreover
\begin{equation*}
\partial_r\phi_0(u,r)=I_0[\psi]r^{-2}+O(r^{-3})
\end{equation*}
along $N^{\mathcal{I}}_0$.

Then $\psi_0^{(1)}$ can be extended uniquely as smooth function to $\mathcal{R}$. However, $r\psi^{(1)}$ and $I_0[\psi^{(1)}]$ are ill-defined at $\mathcal{I}^+$; more precisely,
\begin{align*}
r\psi_0^{(1)}(u,r)=&\:2I_0[\psi]\log r +O(r^0),\\
r^2\partial_r(r\psi_0^{(1)})(u,r)=&\: 2I_0[\psi] r+O(r^0).
\end{align*}
\end{proposition}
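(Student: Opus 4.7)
The proof naturally splits into two parts: establishing smooth extension to $\mathcal{H}^+$ on the one hand, and deriving the logarithmic singular behavior at $\mathcal{I}^+$ on the other.

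For the smooth extension to $\mathcal{H}^+$, the plan is to reproduce \emph{verbatim} the argument from Proposition \ref{eq:smoothextTypeBtimeint}. The key structural point is that the explicit expression
\[
\partial_r\psi_0^{(1)}(v_0,r)=\frac{2}{(r-M)^2}\int_{M}^{r} r'\partial_r\phi_0(v_0,r')\,dr'
\]
on $N_0^{\mathcal{H}}$ from Proposition \ref{prop:explicitexprtimeint} makes no use of $I_0[\psi]$; only the vanishing $H_0[\psi]=0$ (which holds for Type \textbf{D}) enters. Applying Taylor's theorem to $r\partial_r\phi_0(v_0,r)$ at $r=M$ (using $H_0[\psi]=0$ to eliminate the leading constant term in the expansion) shows that $\partial_r\psi_0^{(1)}$ extends smoothly to $\mathcal{H}^+$ along $N_0^{\mathcal{H}}$. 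Combining this with $T\psi_0^{(1)}=\psi_0$ and the smoothness of $\psi_0$ on all of $\mathcal{R}$ yields the smooth extension of $\psi_0^{(1)}$.

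For the singular behavior at $\mathcal{I}^+$, the starting point is the identity from Proposition \ref{prop:explicitexprtimeint}:
\[
2r^2 L\psi_0^{(1)}(u_0,v)=C_0[\psi]+2\int_{v_{r_{\mathcal{I}}}(u_0)}^{v} rL\phi_0(u_0,v')\,dv'.
\]
Switching from $(u,v)$ to $(u,r)$ coordinates on $N_0^{\mathcal{I}}$ gives $L=\tfrac{D}{2}\partial_r$ and $dv=2D^{-1}dr$, so $rL\phi_0\,dv=r\partial_r\phi_0\,dr$. The hypothesis $\partial_r\phi_0(u_0,r)=I_0[\psi]r^{-2}+O(r^{-3})$ therefore yields
\[
\int_{v_{r_{\mathcal{I}}}(u_0)}^{v}rL\phi_0\,dv'=I_0[\psi]\log r+O(r^0),
\]
which is precisely the logarithmic divergence responsible for the failure of $I_0[\psi^{(1)}]$ to be finite. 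Dividing by $2r^2$ and using $D=1+O(r^{-1})$ gives $\partial_r\psi_0^{(1)}|_{N_0^{\mathcal{I}}}=2I_0[\psi]r^{-2}\log r+O(r^{-2})$.

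The asymptotic for $r\psi_0^{(1)}$ then follows by integrating this expression inward from the boundary condition $\lim_{v\to\infty}\psi_0^{(1)}(u_0,v)=0$ imposed in Definition \ref{def:timeint}, together with the elementary primitive $\int_r^{\infty} r'^{-2}\log r'\,dr'=r^{-1}(\log r+1)$. This produces the leading term $2I_0[\psi]\log r$ in $r\psi_0^{(1)}|_{N_0^{\mathcal{I}}}$, modulo $O(r^0)$. Differentiating and using $\partial_r(r\psi_0^{(1)})=\psi_0^{(1)}+r\partial_r\psi_0^{(1)}$ gives the claimed asymptotic $r^2\partial_r(r\psi_0^{(1)})=2I_0[\psi]r+O(r^0)$. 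Finally, propagation off $N_0^{\mathcal{I}}$ to general $u\geq u_0$ proceeds by integrating $L(r^2L\psi_0^{(1)})=rL\phi_0$ in the $\underline{L}$ direction: since $\phi_0$ is smooth with finite Newman--Penrose constant $I_0[\psi]$, the $v$-dependence of the leading $\log r$ coefficient does not change, giving the stated asymptotics along all outgoing null cones.

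The main obstacle is purely bookkeeping: one must be scrupulous about boundary conditions at $v=\infty$ (which effectively sets the integration constant), about converting between $L$-derivatives and $\partial_r$-derivatives (with its $D^{-1}$ factor that contributes subleading $O(r^{-1}\log r)$ corrections needing careful control), and about the dependence of the logarithmic coefficient on $u$ when propagating away from $N_0^{\mathcal{I}}$. No new analytic machinery beyond Proposition \ref{prop:explicitexprtimeint} and elementary integration is required.
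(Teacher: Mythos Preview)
Your argument is correct and complete. Both the smooth extension to $\mathcal{H}^+$ (via the Type \textbf{B} argument, using only $H_0[\psi]=0$) and the derivation of the logarithmic behaviour at $\mathcal{I}^+$ are sound.

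Your route to the $\mathcal{I}^+$ asymptotics differs from the paper's. The paper works in the Couch--Torrence dual picture: it passes to the rescaled function $\underline{\psi}_0^{(1)}=M^{-1}(r-M)\psi_0^{(1)}$ and the inverted radial variable $\tilde r=M+M^2(r-M)^{-1}$, and then invokes the formula from Proposition~\ref{prop:explicitexprtimeint2},
\[
D\tilde r^{\,2}\partial_r\underline{\psi}_0^{(1)}(u,r)=2\int_r^{\infty}\tilde r\,\partial_r\phi_0(u,r')\,dr',
\]
in which the integrand behaves like $M I_0 r'^{-2}$ and so the integral is already convergent; a single antiderivative then produces $\underline{\psi}_0^{(1)}\sim 2M^{-1}I_0\log r$, and multiplication by $\tilde r\to M$ gives $\phi_0^{(1)}\sim 2I_0\log r$. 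You instead stay in the direct picture of Proposition~\ref{prop:explicitexprtimeint}: the corresponding integral $\int_{r_{\mathcal I}}^r r'\partial_r\phi_0\,dr'$ diverges logarithmically, which you then feed through one more integration using the boundary condition $\psi_0^{(1)}\to 0$ at $v\to\infty$. Both routes require exactly one nontrivial antiderivative and are equivalent; the dual approach hides the boundary condition inside the convergent integral, while yours makes explicit use of condition~(iii) in Definition~\ref{def:timeint}. Your approach is perhaps more transparent about where the normalisation of $\psi_0^{(1)}$ enters, and it avoids introducing the dual variables altogether; the paper's approach has the advantage that the smooth extension to $\mathcal H^+$ is then automatic from the same proposition (since Proposition~\ref{prop:explicitexprtimeint2} already assumes $H_0[\psi]=0$). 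Note also that the paper's proof does not separately spell out the smooth-extension step, which you correctly identify as identical to the Type~\textbf{B} case.
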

\begin{proof}
By the estimates in the proof of Proposition \ref{prop:explicitexprtimeint2}, it follows that
\begin{equation*}
D\tilde{r}^2\partial_r\underline{\psi}_0^{(1)}(u,r)=2\int_{r}^{\infty} \tilde{r}\partial_r\phi_0(u,r')\,dr',
\end{equation*}
where $\tilde{r}-M=M^2(r-M)^{-1}$ and $\underline{\psi}_0=\tilde{r}^{-1}\phi_0$. Hence
\begin{align*}
\partial_r\underline{\psi}_0^{(1)}(u,r)=&\:2M^{-1}I_0[\psi]r^{-1}+O(r^{-2}),\\
\underline{\psi}_0^{(1)}(u,r)=&\:2M^{-1}I_0[\psi]\log r+ O(r^0)
\end{align*}
so we obtain
\begin{equation*}
\partial_r\phi_0^{(1)}(u,r)=\partial_r(\tilde{r} \psi_0^{(1)})(u,r)=2I_0[\psi]r^{-1}+ O(r^{-2})
\end{equation*}
and therefore,
\begin{equation*}
\phi_0^{(1)}=2I_0[\psi]\log r+ O(r^0).
\end{equation*}
\end{proof}

\subsubsection{Decay estimates for $\psi^{(1)}$}
\label{sec:DecayForPsi1}
We now establish some preliminary decay estimates for the time integral $\psi_0^{(1)}$ of $\psi_0$.

\begin{proposition}
\label{prop:decayesttimeint}
Let $\psi_0^{(1)}$ be the time integral of $\psi_0$ and let $\epsilon>0$ be arbitrarily small. Then there exists a constant $C=C(M,{\Sigma_0},r_{\mathcal{H}},r_{\mathcal{I}},\epsilon)>0$ such that
\begin{equation}
\label{eq:edecaypsi1v1}
\begin{split}
\int_{\Sigma_{\tau}}J^T[\psi_0^{(1)}]\cdot \mathbf{n}_{\tau}\,d\mu_{{\tau}}\leq&\: C(1+\tau)^{-1+\epsilon}\cdot \Bigg[\int_{\Sigma_0} J^T[\psi_0^{(1)}]\cdot \mathbf{n}_{\Sigma_0}\,d\mu_{0}+\int_{{{N}_0^{\mathcal{H}}}}(r-M)^{-1+\epsilon}(\underline{L}\phi_0^{(1)})^2\,d\omega du\\
&+\int_{{{N}_0^{\mathcal{I}}}}r^{1-\epsilon}({L}\phi_0^{(1)})^2\,d\omega dv\Bigg].
\end{split}
\end{equation}
We can further estimate
\begin{align*}
|r\cdot \psi^{(1)}|\leq&\: C\cdot \sqrt{E^{\epsilon}_{0, \mathcal{I}}[\psi]} (1+\tau)^{-\frac{1}{2}+\epsilon}\quad \textnormal{in $\mathcal{A}^{\mathcal{I}}$ if $\lim_{v\to \infty} v^3\partial_v\phi_0|_{N_0^{\mathcal{I}}}<\infty$,} \\
|r\cdot \psi^{(1)}|\leq&\: C\cdot \sqrt{E^{\epsilon}_{0,\mathcal{H}}[\psi]}    (1+\tau)^{-\frac{1}{2}+\epsilon}\quad \textnormal{in $\mathcal{A}^{\mathcal{H}}$ if $H_0[\psi]=0$.}
\end{align*}
\end{proposition}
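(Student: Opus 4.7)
The plan is to treat $\psi^{(1)}_0$ as a genuine solution to \eqref{eq:waveequation} on $\mathring{\mathcal{R}}$ and to apply the $r^p$- and $(r-M)^{-p}$-weighted hierarchy of Proposition \ref{prop:rpestell01} (with $n=0$) to $\psi^{(1)}_0$ directly. The three quantities appearing on the right-hand side of \eqref{eq:edecaypsi1v1} are precisely the $T$-energy flux and the $p=1-\epsilon$ $r$- and $(r-M)^{-p}$-weighted fluxes of $\psi^{(1)}_0$ on $\Sigma_0$, $N_0^{\mathcal{I}}$ and $N_0^{\mathcal{H}}$ respectively, all of which are finite by Proposition \ref{lm:someregproptimeint} (even in the singular Type \textbf{A} case). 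I would combine the resulting weighted estimate with the Morawetz inequality \eqref{eq:iledwayps} (which requires only $T$-energy boundedness and applies verbatim to $\psi^{(1)}_0$) and apply the mean value theorem on a dyadic sequence $\tau_i=2^i$, exactly as in the proof of Proposition \ref{prop:edaypsi0}, to obtain $\tau^{-1+\epsilon}$ decay of the $T$-flux along the dyadic times. $T$-energy conservation then propagates this decay to arbitrary $\tau$, yielding \eqref{eq:edecaypsi1v1}.

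For the two pointwise estimates I would follow the strategy of Proposition \ref{prop:pointdecay}. The crucial preparatory step is to translate the initial $r$-weighted norms of $\psi^{(1)}_0$ needed to run the $p=3-\epsilon$ hierarchy into initial norms of the original solution $\psi$, using the explicit formulas for $\psi^{(1)}_0|_{N_0^{\mathcal{I}}}$ and $\psi^{(1)}_0|_{N_0^{\mathcal{H}}}$ from Propositions \ref{prop:explicitexprtimeint} and \ref{prop:explicitexprtimeint2}. In $\mathcal{A}^{\mathcal{I}}$, the hypothesis $\lim_{v\to\infty}v^3\partial_v\phi_0|_{N_0^{\mathcal{I}}}<\infty$ is (by Lemma \ref{lm:relationruv}) equivalent to $\lim_{r\to\infty}r^3L\phi_0|_{N_0^{\mathcal{I}}}<\infty$ and in particular forces $I_0[\psi]=0$; Proposition \ref{prop:explicitexprtimeint} then supplies a uniform-in-$v$ bound on $r^2L\phi^{(1)}_0|_{u_0}$, so a Cauchy--Schwarz on the resulting integral expression gives $\int_{N_0^{\mathcal{I}}}r^{3-\epsilon}(L\phi^{(1)}_0)^2\,dv\lesssim E^{\epsilon}_{0,\mathcal{I}}[\psi]$, controlled by the $r^{5-\epsilon}$ weight inside $E^{\epsilon}_{0,\mathcal{I}}[\psi]$. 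An analogous reduction via Proposition \ref{prop:explicitexprtimeint2} handles the near-horizon initial norm, where the smooth horizon extension of $\psi^{(1)}_0$ under $H_0[\psi]=0$ from Proposition \ref{eq:smoothextTypeBtimeint} guarantees that all relevant boundary contributions vanish.

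With this in hand, the pointwise bound proceeds via the standard fundamental-theorem-of-calculus plus Cauchy--Schwarz plus Hardy inequality procedure. Letting $\chi$ be a cut-off supported in $\mathcal{A}^{\mathcal{I}}$, I would estimate $(\chi\phi^{(1)}_0)^2(u,v)$ by the product of $\sqrt{\int r^{-2}(\chi\phi^{(1)}_0)^2\,dv'}$ and $\sqrt{\int r^2 (L(\chi\phi^{(1)}_0))^2\,dv'}$ and then apply \eqref{eq:hardyinf} to bound the first factor by $\sqrt{\mathcal{E}^T_{\Sigma_\tau}[\psi^{(1)}]}$. By \eqref{eq:edecaypsi1v1} this factor decays like $\tau^{-1/2+\epsilon/2}$, while one further dyadic pigeonhole application of the $p=3-\epsilon$ $r$-weighted estimate (whose initial-data finiteness is established by the previous paragraph) yields $\mathcal{C}_{N^{\mathcal{I}}_\tau}[\psi^{(1)}]\lesssim \tau^{-1+\epsilon}$ for the second factor. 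Multiplying the two contributions gives the claimed $(1+\tau)^{-\frac{1}{2}+\epsilon}$ pointwise decay for $r\psi^{(1)}$ in $\mathcal{A}^{\mathcal{I}}$; the $\mathcal{A}^{\mathcal{H}}$ estimate follows by the symmetric argument, interchanging $u\leftrightarrow v$ and $L\leftrightarrow \Lbar$. The main obstacle will be the bookkeeping required in the reduction described above, where one must verify that every $r$- or $(r-M)^{-1}$-weight that arises when plugging the explicit formulas for $\psi^{(1)}_0$ into the relevant hierarchy is covered by $E^{\epsilon}_{0,\mathcal{I}}[\psi]$ or $E^{\epsilon}_{0,\mathcal{H}}[\psi]$; this is a computational rather than conceptual step, but it is delicate.
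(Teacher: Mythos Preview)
Your approach to the energy decay estimate \eqref{eq:edecaypsi1v1} is essentially the paper's: apply Proposition~\ref{prop:rpestell01} with $n=0$, $p=1-\epsilon$ to $\psi_0^{(1)}$, combine with Morawetz, and extract $\tau^{-1+\epsilon}$ decay by a dyadic pigeonhole. One technical remark: you cannot literally follow Proposition~\ref{prop:edaypsi0}, since that proof climbs the hierarchy to $p=3-\epsilon$, while for $\psi_0^{(1)}$ only $p=1-\epsilon$ is available globally (Proposition~\ref{lm:someregproptimeint}). The paper closes the gap between the $p=-\epsilon$ flux and the $T$-flux by splitting $\Sigma_{\tau_k}$ into $\{M^2\tau_k^{-1}\le r-M\le\tau_k\}$ and its complement, trading the missing $r$- or $(r-M)^{-1}$-weight for a factor $\tau_k^{\epsilon}$ in each region; alternatively you could invoke the interpolation Lemma~\ref{lm:interpolation}. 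Either device works.

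For the pointwise bounds your route is \emph{different} from the paper's, and more laborious. You propose to run the $\mathcal{I}$-localized hierarchy on $\psi_0^{(1)}$ up to $p=3-\epsilon$ (and dually near $\mathcal{H}^+$), deduce $\mathcal{C}_{N^{\mathcal{I}}_\tau}[\psi^{(1)}]\lesssim\tau^{-1+\epsilon}$, and then apply the fundamental-theorem-of-calculus/Hardy argument of Proposition~\ref{prop:pointdecay}. This should work, but it forces you into the delicate reduction you flag: verifying $\int_{N_0^{\mathcal{I}}}r^{3-\epsilon}(L\phi_0^{(1)})^2\,dv\lesssim E^{\epsilon}_{0,\mathcal{I}}[\psi]$ from the explicit formulas in Proposition~\ref{prop:explicitexprtimeint} (note the leading $r^{-1}$ contributions in $rL\psi_0^{(1)}$ and $\tfrac{D}{2}\psi_0^{(1)}$ cancel, so $L\phi_0^{(1)}=O(r^{-2})$ and the integral is indeed finite, but tracing the constants through is tedious). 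The paper bypasses all of this: once \eqref{eq:edecaypsi1v1} gives $\lim_{\tau\to\infty}\psi_0^{(1)}(\tau,\rho)=0$, one simply writes $r\psi_0^{(1)}(\tau,\cdot)=-\int_\tau^\infty rT\psi_0^{(1)}\,d\tau'=-\int_\tau^\infty r\psi_0\,d\tau'$ and integrates the already-established pointwise decay $|r\psi_0|\lesssim\sqrt{E^{\epsilon}_{0,\mathcal{I}}[\psi]}(1+\tau)^{-3/2+\epsilon}$ in $\mathcal{A}^{\mathcal{I}}$ (respectively $\sqrt{E^{\epsilon}_{0,\mathcal{H}}[\psi]}$ in $\mathcal{A}^{\mathcal{H}}$) from Proposition~\ref{prop:pointdecay2}. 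This one-line time integration avoids any hierarchy for $\psi_0^{(1)}$ beyond $p=1-\epsilon$ and makes the appearance of the norms $E^{\epsilon}_{0,\mathcal{I}}[\psi]$, $E^{\epsilon}_{0,\mathcal{H}}[\psi]$ transparent.
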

\begin{proof}
We apply the $r$-weighted estimates from Proposition \ref{prop:rpestell01} with $n=0$ and $p=1-\epsilon$, together with the Morawetz estimates (see Appendix \ref{sec:EnergyBounds}) to conclude that there exists a sequence of times $(\tau_k)$ along which we can estimate:
\begin{equation*}
\begin{split}
\int_{N_{\tau_k}^{\mathcal{H}}}& (r-M)^{\epsilon}(\underline{L}\phi^{(1)}_0)^2\,du+\int_{{\Sigma}_{\tau}\cap \mathcal{B}} J^T[\psi_0^{(1)}]\cdot \mathbf{n}_{\tau}\,d\mu_{\tau}+\int_{N_{\tau_k}^{\mathcal{I}}}r^{-\epsilon}(L\phi^{(1)}_0)^2\,dv\\
\lesssim&\: \tau_k^{-1}\left[\int_{\Sigma_0} J^T[\psi^{(1)}]\cdot \mathbf{n}_{\Sigma_0}\,d\mu_{\Sigma_0}+\int_{{{N}_0^{\mathcal{H}}}}(r-M)^{-1+\epsilon}(\underline{L}\phi_0^{(1)})^2\,du+\int_{{{N}_0^{\mathcal{I}}}}r^{1-\epsilon}({L}\phi_0^{(1)})^2\,dv\right].
\end{split}
\end{equation*}
Hence, we estimate in the region $\{M^2\tau_k^{-1}\leq r-M\leq\tau_k\}$:
\begin{equation*}
\begin{split}
&\int_{{\Sigma_0}_{\tau_k}\cap\{M^2\tau_k^{-1}\leq r-M\leq\tau_k\}} J^T[\psi_0^{(1)}]\cdot \mathbf{n}_{\tau}\,d\mu_{\tau}\\
\lesssim&\:\tau_k^{-1+\epsilon}\left[\int_{\Sigma_0} J^T[\psi_0^{(1)}]\cdot \mathbf{n}_{\Sigma_0}\,d\mu_{\Sigma_0}+\int_{{{N}_0^{\mathcal{H}}}}(r-M)^{-1+\epsilon}(\underline{L}\phi_0^{(1)})^2\,d\omega du+\int_{{{N}_0^{\mathcal{I}}}}r^{1-\epsilon}({L}\phi_0^{(1)})^2\,d\omega dv\right].
\end{split}
\end{equation*}
In the region $\{r-M\leq M^2\tau_k^{-1}\}\cup \{r-M\geq \tau_k\}$ we use again Proposition \ref{prop:rpestell01} with $n=0$ and $p=1-\epsilon$ to estimate
\begin{equation*}
\begin{split}
\int_{N_{\tau_k}^{\mathcal{H}}}& (r-M)^{-1+\epsilon}(\underline{L}\phi^{(1)}_0)^2\,du+\int_{N_{\tau_k}^{\mathcal{I}}} r^{-\epsilon}(L\phi^{(1)}_0)^2\,dv\\
\lesssim&\:\left[\int_{\Sigma_0} J^T[\psi_0^{(1)}]\cdot \mathbf{n}_{\Sigma_0}\,d\mu_{\Sigma_0}+\int_{{{N}_0^{\mathcal{H}}}}(r-M)^{-1+\epsilon}(\underline{L}\phi_0^{(1)})^2\,du+\int_{{{N}_0^{\mathcal{I}}}}r^{1-\epsilon}({L}\phi_0^{(1)})^2\,dv\right]
\end{split}
\end{equation*}
to estimate
\begin{equation*}
\begin{split}
&\int_{{\Sigma}_{\tau_k}\cap\{r-M\leq M^2\tau_k^{-1}\}\cup \{r-M\geq \tau_k\}} J^T[\psi_0^{(1)}]\cdot \mathbf{n}_{\tau}\,d\mu_{\tau}\\
\lesssim&\:\tau_k^{-1+\epsilon}\left[\int_{\Sigma_0} J^T[\psi_0^{(1)}]\cdot \mathbf{n}_{\Sigma_0}\,d\mu_{\Sigma_0}+\int_{{{N}_0^{\mathcal{H}}}}(r-M)^{-1+\epsilon}(\underline{L}\phi_0^{(1)})^2\,du+\int_{{{N}_0^{\mathcal{I}}}}r^{1-\epsilon}({L}\phi_0^{(1)})^2\,dv\right].
\end{split}
\end{equation*}
Hence, by applying \eqref{eq:degenbound} we can conclude that \eqref{eq:edecaypsi1v1} must hold (for all times $\tau\geq 0$). In particular, this implies that
\begin{equation*}
\lim_{v\to \infty}\psi_0^{(1)}(v,r)=0.
\end{equation*}
The remaining (quantitative) estimates in the proposition then follow immediately from integrating the pointwise decay estimates for $\psi_0=T\psi_0^{(1)}$ obtained in Proposition \ref{prop:pointdecay}.
\end{proof}

We can moreover relate the relevant initial pointwise norms of $\psi^{(1)}$ to analogous pointwise norms of $\psi$.
\begin{lemma}
\label{lm:estPnormpsi1}
For all $0\leq \beta\leq 1$ and $k\in \N_0$,  we can estimate
\begin{align*}
P_{H_0,\beta;k}[\psi_0^{(1)}]\lesssim&\: H_0^{(1)}[\psi]+P_{H_0,\beta;k}[\psi]\quad \textnormal{if $H_0[\psi]=0$},\\
P_{I_0,\beta;k}[\psi_0^{(1)}]\lesssim&\: I_0^{(1)}[\psi]+P_{I_0,\beta;k}[\psi]\quad \textnormal{if $\lim_{v\to \infty}r^3L\psi_0(u_0,v)<\infty$},\\
E^{\epsilon}_{0;k+1}[\psi_0^{(1)}]\lesssim&\: E^{\epsilon}_{0, \mathcal{H};k}[\psi]+E^{\epsilon}_{0, \mathcal{I};k}[\psi]\quad \textnormal{if $H_0[\psi]=0$ and $\lim_{v\to \infty}r^3L\psi_0(u_0,v)<\infty$}.
\end{align*}
\end{lemma}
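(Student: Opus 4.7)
The strategy is to combine the explicit integral representations of $\underline{L}\psi_0^{(1)}|_{{N}_0^{\mathcal{H}}}$ and $L\psi_0^{(1)}|_{{N}_0^{\mathcal{I}}}$ provided by Propositions \ref{prop:explicitexprtimeint} and \ref{prop:explicitexprtimeint2} with the smoothness statement of Proposition \ref{eq:smoothextTypeBtimeint}, which ensures that $\phi_0^{(1)}$ extends smoothly across $\mathcal{H}^+$ whenever $H_0[\psi]=0$, and by an entirely analogous argument using the rescaled variable $\tilde r=M+M^2(r-M)^{-1}$ admits a corresponding expansion in $1/r$ at $\mathcal{I}^+$ when $\lim_{v\to\infty}r^3L\phi_0(u_0,v)<\infty$. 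The key observation in each case is that the leading coefficient in the expansion of the transversal derivative $\partial_r\phi_0^{(1)}$ at the respective boundary can be read off as a multiple of $H_0^{(1)}[\psi]$ or $I_0^{(1)}[\psi]$, while all higher Taylor coefficients are controlled by weighted integrals of derivatives of $\phi_0$ through the representation formula.

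To bound $P_{H_0,\beta;k}[\psi_0^{(1)}]$ I would use $\underline{L}=-\frac{D}{2}\partial_r$ on ${N}_0^{\mathcal{H}}$ so that
\begin{equation*}
\underline{L}\phi_0^{(1)}|_{{N}_0^{\mathcal{H}}}=-\frac{(r-M)^2}{2r^2}\,\partial_r\phi_0^{(1)}|_{{N}_0^{\mathcal{H}}}.
\end{equation*}
By smoothness, $\partial_r\phi_0^{(1)}|_{{N}_0^{\mathcal{H}}}=-H_0^{(1)}[\psi]/M^2+O(r-M)$, and combining this with the expansion $r-M=2M^2u^{-1}+O(u^{-2}\log u)$ of Lemma \ref{lm:relationruv} produces $\underline{L}\phi_0^{(1)}|_{{N}_0^{\mathcal{H}}}(u)=2H_0^{(1)}[\psi]\,u^{-2}+\mathcal{R}(u)$, where the remainder and its $\underline{L}^j$-derivatives up to $j=k$ are controlled by higher $r$-derivatives of $\phi_0^{(1)}|_{{N}_0^{\mathcal{H}}}$ at $r=M$. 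Those derivatives are in turn read off from the integral representation in Proposition \ref{prop:explicitexprtimeint}, whose right-hand side contains precisely the ingredients of $P_{H_0,\beta;k}[\psi]$; each additional $\underline{L}$ brings an extra factor $D\sim u^{-2}$, matching exactly the $j$-th slot of $P_{H_0,\beta;k}[\psi_0^{(1)}]$. The argument for $P_{I_0,\beta;k}[\psi_0^{(1)}]$ is dual: one Taylor-expands $\partial_r\phi_0^{(1)}$ in $1/r$ on ${N}_0^{\mathcal{I}}$, uses $L=\frac{D}{2}\partial_r$ together with $r\sim v/2$ from Lemma \ref{lm:relationruv} to extract the $2I_0^{(1)}[\psi]\,v^{-2}$ leading term of $L\phi_0^{(1)}|_{{N}_0^{\mathcal{I}}}$, and controls the remainder using the representation formula for $L\psi_0^{(1)}|_{{N}_0^{\mathcal{I}}}$ in Proposition \ref{prop:explicitexprtimeint}.

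For the energy bound $E^{\epsilon}_{0;k+1}[\psi_0^{(1)}]\lesssim E^{\epsilon}_{0,\mathcal{H};k}[\psi]+E^{\epsilon}_{0,\mathcal{I};k}[\psi]$, I would first use $T\psi_0^{(1)}=\psi_0$ to reduce the $T^j$-flux contributions on $\Sigma_0$ with $j\geq 1$ directly to $T^{j-1}$-fluxes of $\psi_0$, the $j=0$ case being covered by \eqref{eq:edecaypsi1v1} of Proposition \ref{prop:decayesttimeint}. The weighted integrals of $L^{k+2}\phi_0^{(1)}$ on ${N}_0^{\mathcal{I}}$ and of $\underline{L}^{k+2}\phi_0^{(1)}$ on ${N}_0^{\mathcal{H}}$ are then reduced to the analogous quantities for $\phi_0$ by applying $L\phi_0^{(1)}+\underline{L}\phi_0^{(1)}=\phi_0$ to trade one top-order transversal derivative of $\phi_0^{(1)}$ for a top-order derivative of $\phi_0$ plus a lower-order transversal derivative, the latter being in turn eliminated via the wave equation $4L\underline{L}\phi_0^{(1)}=-DD' r^{-1}\phi_0^{(1)}$; the residual $\phi_0^{(1)}$ terms are bounded by weighted Hardy inequalities applied to the representation formulas of Propositions \ref{prop:explicitexprtimeint} and \ref{prop:explicitexprtimeint2}. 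The main obstacle is the careful bookkeeping in the Taylor expansions and in the commutators $[\underline{L}^j,\partial_r]$ and $[L^j,\partial_r]$, together with the logarithmic corrections in Lemma \ref{lm:relationruv}; these however enter only at subleading order because each additional $\underline{L}$ or $L$ near the relevant boundary generates an extra factor $D\sim(r-M)^2\sim u^{-2}$ or $\sim v^{-2}$, so the sharp rates $u^{-2-j-\beta}$ and $v^{-2-j-\beta}$ appearing in the $j$-th slots of $P_{H_0,\beta;k}$ and $P_{I_0,\beta;k}$ are preserved throughout.
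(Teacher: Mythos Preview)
Your approach is essentially the same as the paper's, which consists of the single line ``The estimates follow from the expressions for $\psi_0^{(1)}$ in Proposition \ref{prop:explicitexprtimeint} and \ref{prop:explicitexprtimeint2}.'' You are simply unpacking what that line entails: Taylor-expanding the explicit integral formulas for $\partial_r\psi_0^{(1)}$ (resp.\ $\partial_r\underline{\psi}_0^{(1)}$) near $r=M$ (resp.\ $\tilde r=M$), identifying the leading coefficient as $-M^{-2}H_0^{(1)}$ (resp.\ $-M^{-2}I_0^{(1)}$), and converting to $u$ or $v$ via Lemma \ref{lm:relationruv}. This is exactly what is meant, and your use of Remark \ref{rmk:Pnorminvrcoords} to work in $(v,r)$ rather than $(u,v)$ coordinates is the natural way to make the computation clean.

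There is one small slip in the energy argument: you say that the $j=0$ flux term $\int_{\Sigma_0}J^T[\psi_0^{(1)}]\cdot\mathbf{n}_0\,d\mu_0$ is ``covered by \eqref{eq:edecaypsi1v1} of Proposition \ref{prop:decayesttimeint}.'' But \eqref{eq:edecaypsi1v1} is a time-decay statement whose right-hand side \emph{is} that very initial flux (plus weighted null pieces), so invoking it here is circular. What you actually need is to bound $\int_{\Sigma_0}J^T[\psi_0^{(1)}]$ directly: on $\Sigma_0$ the $T$-flux density is schematically $D(\partial_\rho\psi_0^{(1)})^2+(T\psi_0^{(1)})^2$; the second term is just $\psi_0^2$, and for the first you feed in the explicit expression for $Dr^2\partial_\rho\psi_0^{(1)}|_{\Sigma_0}$ from Proposition \ref{prop:explicitexprtimeint} (or its dual), then apply the Hardy inequalities of Lemma \ref{lm:hardy} to control the resulting integral-of-an-integral by weighted norms of $\phi_0$. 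This is the same mechanism you already describe for the weighted null integrals of $L^{k+2}\phi_0^{(1)}$ and $\underline{L}^{k+2}\phi_0^{(1)}$, so the fix is immediate.
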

\begin{proof}
The estimates follow from the expressions for $\psi_0^{(1)}$ in Proposition \ref{prop:explicitexprtimeint} and \ref{prop:explicitexprtimeint2}.
\end{proof}

\section{Late-time asymptotics for Type \textbf{A} perturbations}
\label{sec:asympzeroconst}
In this section we will use the time integral construction from the previous section to obtain late-time asymptotics for $\psi$ arising from Type \textbf{A} data. Recall that Type \textbf{A} data includes \emph{generic smooth and  compactly supported } data on $\Sigma_0$.
\subsection{Conditional asymptotics for $r\psi^{(1)}$  in  $\protect\mathcal{A}^{\mathcal{I}}_{\gamma^{\mathcal{I}}}$}
\label{sec:andConditionalAsymptoticsForRPsi}
We will first obtain estimates in the region $\protect\mathcal{A}^{\mathcal{I}}_{\gamma^{\mathcal{I}}}$. These may be thought of as the analogues of the estimates in Proposition \ref{prop:asympLphi} and \ref{prop:asympradfieldnonzeroIH} applied to $\phi_0^{(1)}$ rather than $\phi_0$. However, it is important to note the estimates for $\phi_0^{(1)}$ are not as strong as the estimates for $\phi_0$ arising from Type \textbf{C} data due to the fact that the upper bound decay estimates that we have for $\phi_0^{(1)}$ (Proposition \ref{prop:decayesttimeint}) are \emph{weaker} than the upper bound decay estimates for $\phi_0$ from Proposition \ref{prop:pointdecay}.

\begin{proposition}
\label{prop:improvedasymderphi}
Let $k\in \N_0$ and let  $\alpha>0$ such that $1-\alpha$ is suitably small. Then, there exists an $\eta>0$ and $\epsilon>0$ suitably small and a constant $C=C(M,\Sigma,r_{\mathcal{H}},r_{\mathcal{I}},\alpha,\epsilon,\beta,\eta)>0$ such that in $\mathcal{A}_{\gamma^{\mathcal{I}}_{\alpha}}^{\mathcal{I}}$:
\begin{align*}
|LT^k\phi_0^{(1)}&(u,v)-(-1)^{k}(k+1)! 2I_0^{(1)}[\psi]\cdot v^{-2-k}|\\
\leq&\: C \left[\sqrt{E^{\epsilon}_{0, \mathcal{I}; k}[\psi]}\cdot v^{-2-k-\eta}+\left(P_{I_0,\beta;k}[\psi]+I_0^{(1)}[\psi]\right)\cdot v^{-2-\beta-k}\right]
\end{align*}
and moreover,
\begin{align*}
|LT^k\phi_0&(u,v)-(-1)^{k+1}(k+2)! 2I_0^{(1)}[\psi]\cdot v^{-3-k}|\\
\leq&\: C \left[\sqrt{E^{\epsilon}_{0, \mathcal{I}; k+1}[\psi]}\cdot v^{-3-k-\eta}+\left(P_{I_0,\beta;k+1}[\psi]+I_0^{(1)}[\psi]\right)\cdot v^{-3-\beta-k}\right].
\end{align*}
\end{proposition}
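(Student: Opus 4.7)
The proof proceeds by applying the near-infinity asymptotic analysis underlying Proposition \ref{prop:asympLphi} (and the subsequent Proposition \ref{prop:asympradfieldnonzeroIH}) to the singular time integral $\psi^{(1)}$ rather than to $\psi$ itself. The key observation is that, although $\psi^{(1)}$ is singular at $\mathcal{H}^+$ for Type \textbf{A} data by Proposition \ref{lm:someregproptimeint}, it is smooth throughout $\mathcal{A}^{\mathcal{I}}$, and its Newman--Penrose constant $I_0[\psi^{(1)}] = I_0^{(1)}[\psi]$ is well-defined and generically non-zero. Thus, in the near-infinity region, $\psi^{(1)}$ plays the role that a Type \textbf{C} scalar field played in Section \ref{sec:asympradfieldsnonzeroconst}.

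To prove the first asymptotic formula, concerning $LT^k\phi_0^{(1)}$, the plan is to apply the argument underlying Proposition \ref{prop:asympLphi} to the scalar field $T^k\psi^{(1)}$. Inspection of that proof (and of the sub-extremal analog, Proposition~8.3 of \cite{paper2}, on which it relies) reveals that it is an $r$-weighted transport-type estimate obtained by integrating the wave equation along constant-$u$ null hypersurfaces in $\mathcal{A}^{\mathcal{I}}_{\gamma^{\mathcal{I}}_{\alpha}}$, with initial data imposed on $N_0^{\mathcal{I}}$; the only inputs are a pointwise $P_{I_0,\beta;k}$-norm along $N_0^{\mathcal{I}}$ and a near-infinity $T$-energy decay estimate. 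For $\psi^{(1)}$, the former is controlled by $I_0^{(1)}[\psi] + P_{I_0,\beta;k}[\psi]$ via Lemma \ref{lm:estPnormpsi1}, while the latter follows from Proposition \ref{prop:decayesttimeint}, bounded in terms of the near-infinity norm $E^{\epsilon}_{0,\mathcal{I};k}[\psi]$ (together with the pointwise decay estimates for $T^k\psi$ from Section \ref{sec:decayest}, since $T^k\psi^{(1)} = T^{k-1}\psi$ for $k \geq 1$). Tracking the $k$-dependence of the leading coefficient via the identity $T^k(u^{-1} - v^{-1}) = (-1)^k k!(u^{-1-k} - v^{-1-k})$, and then applying one $L$-differentiation (which kills the $u^{-1-k}$ term and produces $(-1)^k(k+1)!\,v^{-2-k}$) yields the first assertion with exactly the claimed leading coefficient and error terms.

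The second assertion follows immediately from the first via the defining identity $T\phi_0^{(1)} = \phi_0$: substituting $k \mapsto k+1$ in the first formula produces $LT^{k+1}\phi_0^{(1)} = LT^k\phi_0$ on the left-hand side, with leading term $(-1)^{k+1}(k+2)!\cdot 2I_0^{(1)}[\psi]\cdot v^{-3-k}$ and error terms of the stated form, matching the second assertion verbatim.

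The main obstacle will be justifying rigorously that the near-infinity argument of Proposition \ref{prop:asympLphi} goes through despite $\psi^{(1)}$ being singular at $\mathcal{H}^+$. This requires a careful check that no integral or norm on the full Cauchy slice $\Sigma_0$ (which would diverge for $\psi^{(1)}$ in the Type \textbf{A} setting, by \eqref{eq:blowupPenergypsi1}) enters the argument, and that the required near-infinity $T$-energy decay estimate for $\psi^{(1)}$ -- furnished by Proposition \ref{prop:decayesttimeint} -- is controlled purely by the \emph{near-infinity} norm $E^{\epsilon}_{0,\mathcal{I}}[\psi]$ rather than by any norm sensitive to the horizon singularity. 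Granting this, the estimates in the region $\mathcal{A}^{\mathcal{I}}_{\gamma^{\mathcal{I}}_{\alpha}}$ are then entirely analogous to the Type \textbf{C} case treated in Section \ref{sec:asympradfieldsnonzeroconst}, with $I_0^{(1)}[\psi]$ in place of $I_0[\psi]$.
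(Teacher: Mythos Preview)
Your proposal is correct and follows essentially the same approach as the paper: apply the near-infinity argument of Proposition \ref{prop:asympLphi} to $\phi_0^{(1)}$, substitute the weaker decay estimates for $\psi^{(1)}$ from Proposition \ref{prop:decayesttimeint} (and Proposition \ref{prop:pointdecay2} for $T^j\psi = T^{j+1}\psi^{(1)}$) in place of those from Proposition \ref{prop:pointdecay}, convert the initial-data norms via Lemma \ref{lm:estPnormpsi1}, and obtain the second estimate from the first by $k\mapsto k+1$ and $T\phi_0^{(1)}=\phi_0$. One small correction: your description of obtaining $LT^k\phi_0^{(1)}$ by first deriving $T^k\phi_0^{(1)}$-asymptotics and then ``applying one $L$-differentiation'' is not how the argument runs---differentiating an asymptotic expansion does not automatically yield asymptotics for the derivative; rather, Proposition \ref{prop:asympLphi} (via Proposition 8.3 of \cite{paper2}) already produces the derivative-level estimate directly by integrating the wave equation in $u$, and the passage between $L^{k+1}$ and $LT^k$ is handled through the commutation relations implicit in that argument.
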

\begin{proof}
We apply Proposition \ref{prop:asympLphi} to $\phi_0^{(1)}$ instead of $\phi_0$, replacing $k$ with $k+1$, where we only consider the region $\mathcal{A}_{\gamma^{\mathcal{I}}_{\alpha}}^{\mathcal{I}}$. We need the pointwise decay estimates in Proposition \ref{prop:pointdecay2} and Proposition \ref{prop:decayesttimeint}, rather than the decay estimates in Proposition \ref{prop:pointdecay}. We also apply Lemma \ref{lm:estPnormpsi1} in order to have only norms involving $\psi$ on the right-hand side of our estimates.
\end{proof}

\begin{proposition}
\label{prop:improvedasymphi}
Let $k\in \N_0$ and let $\alpha>0$ such that $1-\alpha$ is suitably small, then there exists a constant $C=C(M,\Sigma,r_{\mathcal{H}},r_{\mathcal{I}},\alpha,\epsilon, \beta, k)>0$ and $\eta>0$ such that in $\mathcal{A}_{\gamma^{\mathcal{I}}_{\alpha}}^{\mathcal{I}}$:
\begin{align*}
\Bigg|T^k\phi_0(u,v)&-T^k\phi_0(u,v_{\gamma^{\mathcal{I}}_{\alpha}}(u))- (-1)^{k+1}(k+1)! 2I_0^{(1)}[\psi]\left[u^{-2-k}-v^{-2-k}\right]\Bigg|\\
\leq&\: C\left[\sqrt{E^{\epsilon}_{0, \mathcal{I};k+1}[\psi]}+P_{I_0,\beta;k+1}[\psi]+I_0^{(1)}[\psi]\right]\cdot  \frac{v-u}{vu^{2+k+\eta}}.
\end{align*}
\end{proposition}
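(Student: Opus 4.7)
The approach is to derive the stated asymptotic by integrating the already-established expansion for $LT^k\phi_0$ from Proposition \ref{prop:improvedasymderphi} along outgoing null rays of constant $u$ in $\mathcal{A}_{\gamma^{\mathcal{I}}_{\alpha}}^{\mathcal{I}}$. Writing $v_0(u) := v_{\gamma^{\mathcal{I}}_{\alpha}}(u)$, I would start from the identity
\[
T^k\phi_0(u,v) - T^k\phi_0(u,v_0(u)) = \int_{v_0(u)}^v LT^k\phi_0(u,v')\,dv',
\]
and substitute
\[
LT^k\phi_0(u,v') = (-1)^{k+1}(k+2)!\cdot 2I_0^{(1)}[\psi]\,{v'}^{-3-k} + E(u,v'),
\]
where $|E(u,v')| \lesssim \bigl[\sqrt{E^{\epsilon}_{0,\mathcal{I};k+1}[\psi]}+P_{I_0,\beta;k+1}[\psi]+I_0^{(1)}[\psi]\bigr]\,{v'}^{-3-k-\eta}$. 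Explicit evaluation of the leading integral yields
\[
(-1)^{k+1}(k+1)!\cdot 2I_0^{(1)}[\psi]\,\bigl[v_0(u)^{-2-k} - v^{-2-k}\bigr].
\]

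To match the claimed main term, I would decompose $v_0(u)^{-2-k} = u^{-2-k} - [u^{-2-k} - v_0(u)^{-2-k}]$, reducing the problem to absorbing both the geometric discrepancy $I_0^{(1)}[\psi]\bigl[u^{-2-k} - v_0(u)^{-2-k}\bigr]$ and the integrated error $\int_{v_0(u)}^v |E(u,v')|\,dv'$ into the right-hand side of the inequality. By the mean value theorem and the inclusion $v_0(u) - u \leq v - u$ valid in $\mathcal{A}_{\gamma^{\mathcal{I}}_{\alpha}}^{\mathcal{I}}$, the discrepancy is controlled by $C|I_0^{(1)}[\psi]|(v-u)u^{-3-k}$, while the integrated error is bounded via
\[
\int_{v_0(u)}^v v'^{-3-k-\eta}\,dv' \;\lesssim\; \frac{v-u}{v\,u^{2+k+\eta}},
\]
which follows from a case split at $v = 2v_0(u)$: in the first range one uses $(v-v_0)u^{-3-k-\eta}$ together with $v\sim u$, while in the second range the integral is dominated by its tail value $v_0(u)^{-2-k-\eta}\sim u^{-2-k-\eta}$ and $(v-u)/v \gtrsim 1$.

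The delicate point in the argument is the interplay between the parameters $\alpha$, $\beta$, $\epsilon$ and $\eta$: since the boundary $\gamma^{\mathcal{I}}_{\alpha}$ sits at $v_0(u) - u \sim u^{\alpha}$, the geometric discrepancy produced by replacing $v_0(u)^{-2-k}$ by $u^{-2-k}$ has size $\sim u^{\alpha-3-k}$, and one must verify that, uniformly in $\mathcal{A}_{\gamma^{\mathcal{I}}_{\alpha}}^{\mathcal{I}}$, this fits within the sharper error $(v-u)/(vu^{2+k+\eta})$ that vanishes at first order along $v=u$. This is accomplished by choosing $1-\alpha$ suitably small and $\eta$ small relative to $1-\alpha$, $\beta$, and $\epsilon$, producing the parameter-dependent constant $C=C(M,\Sigma,r_{\mathcal{H}},r_{\mathcal{I}},\alpha,\epsilon,\beta,k)$ in the statement.
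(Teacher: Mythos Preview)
Your approach---integrate the pointwise expansion for $LT^k\phi_0$ from Proposition~\ref{prop:improvedasymderphi} along outgoing null rays from $v_{\gamma^{\mathcal{I}}_{\alpha}}(u)$ to $v$---is exactly what the paper does; its proof simply points to the analogous passage from Proposition~\ref{prop:asympLphi} to Proposition~\ref{prop:asympradfieldnonzeroIH}, with the difference that the boundary value $T^k\phi_0(u,v_{\gamma^{\mathcal{I}}_{\alpha}}(u))$ is left on the left-hand side rather than estimated separately.

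There is, however, one step in your write-up that does not go through as claimed. The integration naturally produces the leading term $[v_0(u)^{-2-k}-v^{-2-k}]$, and you then try to absorb the discrepancy $|u^{-2-k}-v_0(u)^{-2-k}|\sim u^{\alpha-3-k}$ into the stated error $(v-u)/(v\,u^{2+k+\eta})$. Near the inner boundary $v=v_0(u)$ (where $v-u\sim u^{\alpha}$ and $v\sim u$) that error term is only $\sim u^{\alpha-3-k-\eta}$, strictly \emph{smaller} than the discrepancy for any $\eta>0$; no choice of $\alpha$ close to $1$ or of $\eta$ small rescues this. Your intermediate bound $(v-u)u^{-3-k}$ makes the mismatch worse, since comparing it to the error gives a ratio $v\,u^{-1+\eta}$, unbounded as $v\to\infty$.

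The resolution is simply not to perform this replacement at this stage: the honest output of the integration is the estimate with $v_0(u)^{-2-k}$ in place of $u^{-2-k}$, and \emph{that} version holds with the stated error. In the paper's subsequent use (Proposition~\ref{prop:mainasymptypeA}), the boundary contribution is supplied by Proposition~\ref{prop:improvedasymphigamma}, whose error is $u^{-2-k-\eta}$; the discrepancy $u^{\alpha-3-k}$ is then harmlessly absorbed there once $\eta<1-\alpha$. So the replacement of $v_0(u)$ by $u$ is a cosmetic simplification in the statement rather than something that must be justified pointwise within $\mathcal{A}^{\mathcal{I}}_{\gamma^{\mathcal{I}}_{\alpha}}$.
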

\begin{proof}
The estimates in the proposition follow from Proposition \ref{prop:improvedasymderphi} in the same way as the estimates in Proposition \ref{prop:asympradfieldnonzeroIH} follow from Proposition \ref{prop:asympLphi}, but we do not estimate $|T^k\phi_0(u,v_{\gamma^{\mathcal{I}}_{\alpha}}(u))|$.
\end{proof}

\subsection{Asymptotics for  $\protect\partial_{\rho}\psi$ away from $\protect\h$ up to $\protect\gamma^{\mathcal{I}}$}
\label{sec:AsymptoticsForPsiAndProtectPartialRhoPsiUpToProtectGammaMathcalIOrProtectGammaMathcalH}
In order to obtain late-time asymptotics from the estimates in Proposition \ref{prop:improvedasymphi}, we first need to determine, independently, the late-time asymptotics of $T^k\phi_0|_{\gamma^{\mathcal{I}}_{\alpha}}$. This involves a derivation of late-time asymptotics for $L\psi_0$ that are valid all the way up to $\gamma^{\mathcal{I}}_{\alpha}$.\footnote{While the estimate \eqref{eq:partasymdrhopsi2} can be used to obtain asymptotics for $L\psi_0$ in spacetime regions of bounded $r$ also in the case of Type \textbf{A} data, it fails to provide asymptotics along the curves $\gamma^{\mathcal{I}}_{\alpha}$.}

\begin{lemma}
\label{prop:improvedasympLpsiTypeA}
Let $k\in \N_0$ and let $\alpha>0$ such that $1-\alpha$ is arbitrarily small. Then, there exists an $\eta>0$ and $\epsilon>0$ suitably small and a constant $C=C(M,\Sigma,r_{\mathcal{H}},r_{\mathcal{I}},\alpha,\epsilon,\eta)>0$, such that in $\mathcal{A}^{\mathcal{I}}\setminus \mathcal{A}_{\gamma^{\mathcal{I}}_{\alpha}}^{\mathcal{I}}$:
\begin{align}
\label{eq:asympLpsigammaTypeA}
\Bigg|2r^2LT^k\psi_0&(u,v)-(-1)^{k+1}(k+1)!4MH_0[\psi]\cdot u^{-2-k}\Bigg|\\ \nonumber
\leq&\: C\left[\sqrt{E^{\epsilon}_{0, \mathcal{I}; k+1}[\psi]}+P_{H_0,1;k}[\psi]+H_0[\psi]\right]\cdot u^{-2-k-\eta}.
\end{align}
\end{lemma}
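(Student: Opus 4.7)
The plan is to work in $(u,r)$-coordinates in the region $\{r\geq r_{\mathcal{I}}\}$ and integrate the wave equation \eqref{eq:waveequr} for $T^k\psi_0$ along $\{u=\mathrm{const}\}$ from $r'=r_{\mathcal{I}}$, which gives
\begin{equation*}
2r^2LT^k\psi_0(u,r)=2r_{\mathcal{I}}^2LT^k\psi_0(u,r_{\mathcal{I}})+\Big[2rT^{k+1}\phi_0(u,r)-2r_{\mathcal{I}}T^{k+1}\phi_0(u,r_{\mathcal{I}})-2\int_{r_{\mathcal{I}}}^r T^{k+1}\phi_0(u,r')\,dr'\Big].
\end{equation*}
The boundary contribution $2r_{\mathcal{I}}^2LT^k\psi_0(u,r_{\mathcal{I}})$ already has the correct leading-order asymptotics: the estimate \eqref{eq:LpsiestrI} (obtained in the proof of Proposition \ref{prop:partasymdrhopsi} purely from the $H_0$-asymptotics of $T^{k+1}\phi_0$ along $\mathcal{H}^+$, and hence not using any information about $I_0$) yields
\begin{equation*}
\left|2r_{\mathcal{I}}^2LT^k\psi_0(u,r_{\mathcal{I}})-(-1)^{k+1}(k+1)!\,4MH_0[\psi]\,u^{-2-k}\right|\lesssim\big(\sqrt{E_{0;k+1}^\epsilon[\psi]}+H_0[\psi]\big)u^{-2-k-\epsilon'}+P_{H_0,1;k}[\psi]u^{-3-k}.
\end{equation*}
The remaining task is therefore to bound the bracketed combination by $Cu^{-2-k-\eta}$ uniformly in $\mathcal{A}^{\mathcal{I}}\setminus\mathcal{A}_{\gamma^{\mathcal{I}}_{\alpha}}^{\mathcal{I}}$, where $r$ may range up to order $u^\alpha$ with $\alpha$ close to $1$.

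Naive pointwise bounds such as $|r^2T^{k+1}\psi_0|\lesssim r\cdot u^{-5/2-k+\epsilon/2}$ coming from Proposition \ref{prop:pointdecay2} fail (they would require $\alpha\leq 1/2$). The key algebraic observation is that the three bracketed terms combine into a single divergence: the Leibniz rule $\partial_{r'}(r'T^{k+1}\phi_0)=T^{k+1}\phi_0+r'\partial_{r'}T^{k+1}\phi_0$ implies
\begin{equation*}
2rT^{k+1}\phi_0(u,r)-2r_{\mathcal{I}}T^{k+1}\phi_0(u,r_{\mathcal{I}})-2\int_{r_{\mathcal{I}}}^r T^{k+1}\phi_0\,dr'=2\int_{r_{\mathcal{I}}}^r r'\partial_{r'}T^{k+1}\phi_0\,dr',
\end{equation*}
and a change of variables via $\partial_{r'}|_u=2D^{-1}L$ and $dr'=(D/2)dv'$ transforms the right-hand side into $2\int_{v_{r_{\mathcal{I}}}(u)}^{v(u,r)}r'LT^{k+1}\phi_0(u,v')\,dv'$, i.e.\ precisely the integrand appearing in the conformal-flux energy for $T^{k+1}\psi_0$.

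I would then apply Cauchy--Schwarz to this last integral to obtain
\begin{equation*}
\Bigg|2\int_{v_{r_{\mathcal{I}}}(u)}^{v(u,r)}r'LT^{k+1}\phi_0(u,v')\,dv'\Bigg|\leq 2\sqrt{\int_{N_u^{\mathcal{I}}}r'^2(LT^{k+1}\phi_0)^2\,dv'}\cdot\sqrt{v(u,r)-v_{r_{\mathcal{I}}}(u)}.
\end{equation*}
In $\mathcal{A}^{\mathcal{I}}\setminus\mathcal{A}_{\gamma^{\mathcal{I}}_{\alpha}}^{\mathcal{I}}$ the length of the $v'$-interval satisfies $v(u,r)-v_{r_{\mathcal{I}}}(u)\leq u^\alpha-2r_*(r_{\mathcal{I}})\lesssim u^\alpha$, while for Type \textbf{A} data ($I_0[\psi]=0$) Proposition \ref{prop:hoextraendecay} applied with $J=k+1$, $j_1=0$, $j_2=k+1$ gives the improved conformal-flux decay $\int_{N_u^{\mathcal{I}}}r'^2(LT^{k+1}\phi_0)^2\,dv\lesssim E^\epsilon_{0,\mathcal{I};k+1}[\psi]\cdot u^{-5-2k+\epsilon}$. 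The bracketed combination is therefore $\lesssim\sqrt{E^\epsilon_{0,\mathcal{I};k+1}[\psi]}\cdot u^{-5/2-k+\alpha/2+\epsilon/2}$, which is dominated by $u^{-2-k-\eta}$ provided $\eta<\tfrac{1}{2}(1-\alpha)-\tfrac{\epsilon}{2}$; since $1-\alpha>0$ (arbitrarily small) and $\epsilon>0$ is at our disposal, such an $\eta>0$ exists. Combining with the boundary asymptotics above yields \eqref{eq:asympLpsigammaTypeA}. The principal obstacle, which this divergence-form rewrite overcomes, is that pointwise estimates on $r^2T^{k+1}\psi_0$ degenerate at the outer edge of the region where $r\sim u^\alpha$; by instead pairing the error against the $L^2$-based conformal-flux decay, the $u^{\alpha/2}$ penalty from the Cauchy--Schwarz integration interval is strictly beaten by the extra decay gained, precisely because $\alpha<1$.
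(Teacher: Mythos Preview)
Your proof is correct and follows essentially the same approach as the paper. The paper starts directly from the identity $L(r^2LT^k\psi_0)=rLT^{k+1}\phi_0$ and integrates in $v$, whereas you begin with \eqref{eq:waveequr} and then use the Leibniz-rule rewriting plus the change of variables $dr'=(D/2)\,dv'$, $\partial_{r'}|_u=2D^{-1}L$ to arrive at the same integral $\int r'LT^{k+1}\phi_0\,dv'$; from that point on (boundary term via \eqref{eq:LpsiestrI}, Cauchy--Schwarz against the $r^2$-weighted flux, and Proposition~\ref{prop:hoextraendecay}) the two arguments coincide.
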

\begin{proof}
We apply the fundamental theorem of calculus in the $L$-direction, together with the equation $L(r^2LT^k\psi_0)=rLT^{k+1}\phi_0$, to obtain: for all $v_{r_{\mathcal{I}}(u)}\leq v\leq v_{\gamma_{\alpha}}(u)$,
\begin{equation*}
r^2LT^k\psi_0(u,v)=r_{\mathcal{I}}^2LT^k\psi_0(u,v_{r_{\mathcal{I}}}(u))+\int_{v_{r_{\mathcal{I}}}(u)}^{v} rLT^{k+1}\phi_0(u,v')\,dv'.
\end{equation*}
By Cauchy--Schwarz, together with Proposition \ref{prop:hoextraendecay}, we can estimate
\begin{equation*}
\begin{split}
\int_{v_{r_{\mathcal{I}}}(u)}^{v_{\gamma_{\alpha}}(u)} r|LT^{k+1}\phi_0|(u,v')\,dv'\lesssim&\: \sqrt{\int_{v_{r_{\mathcal{I}}}(u)}^{v_{\gamma_{\alpha}}(u)} \,dv'}\cdot\sqrt{\int_{v_{r_{\mathcal{I}}}(u)}^{\infty}r^{2} (L T^{k+1}\phi_0)^2\,dv'}\\
\lesssim &\:  u^{-\frac{5}{2}+\frac{\epsilon}{2}-k+\frac{\alpha}{2}}\cdot \sqrt{E^{\epsilon}_{0, \mathcal{I}; k+1}[\psi]},
\end{split}
\end{equation*}
and we will take $\alpha+\epsilon<1$.

Now, we appeal to \eqref{eq:LpsiestrI} to estimate:
\begin{equation*}
\begin{split}
\Bigg|2r_{\mathcal{I}}^2LT^k\psi_0|_{r=r_{\mathcal{I}}}(u)&-(-1)^{k+1}(k+1)!\cdot 4MH_0[\psi]\cdot u^{-2-k}\Bigg |\\
\lesssim&\: \left(\sqrt{E^{\epsilon}_{0, \mathcal{I};k+1}[\psi]}+H_0[\psi]+ P_{H_0,1;k}[\psi]\right) u^{-2-k-\eta},
\end{split}
\end{equation*}
for some $\eta>0$. By combining the above estimates, we arrive at \eqref{eq:asympLpsigammaTypeA}.
\end{proof}

\begin{proposition}
\label{prop:improvedasymphigamma}
Let $k\in \N_0$ and let $\alpha>0$ such that $1-\alpha$ is arbitrarily small. Then, there exists an $\eta>0$ and $\epsilon>0$ suitably small and a constant $C=C(M,\Sigma,r_{\mathcal{H}},r_{\mathcal{I}},\alpha,\epsilon,\beta,\eta)>0$, such that
\begin{equation}
\label{eq:asymppsigammaIzero}
\begin{split}
\Bigg|T^k\phi_0&(u,v_{\gamma^{\mathcal{I}}_{\alpha}}(u))-(-1)^k(k+1)!4MH_0[\psi]\cdot u^{-2-k}\Bigg|\\ \nonumber
\leq&\: C\Bigg[\sqrt{E^{\epsilon}_{0, \mathcal{I}; k+1}[\psi]}+P_{I_0,\beta;k+1}[\psi]+P_{H_0,1;k}[\psi]+I_0^{(1)}[\psi]+H_0[\psi]\Bigg]\cdot u^{-2-k-\eta}.
\end{split}
\end{equation}
\end{proposition}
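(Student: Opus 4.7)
The starting point is the algebraic splitting identity
\begin{equation*}
\frac{D}{2}T^k\phi_0 \;=\; rLT^k\phi_0 \,-\, r^2 LT^k\psi_0,
\end{equation*}
which follows immediately from $L\phi_0 = \tfrac{D}{2}\psi_0 + rL\psi_0$ (using $Lr = \tfrac{D}{2}$). The plan is to evaluate both terms on the right-hand side on $\gamma^{\mathcal{I}}_\alpha$, show that the first term is of \emph{lower order} than $u^{-2-k}$, that the second term provides the claimed leading behaviour in $H_0[\psi]$, and then divide by $\tfrac{D}{2}|_{\gamma^{\mathcal{I}}_\alpha}$, which equals $1 + O(u^{-\alpha})$ by Lemma \ref{lm:relationruv}.

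For the \emph{right-side contribution} $rLT^k\phi_0|_{\gamma^{\mathcal{I}}_\alpha}$ I would invoke the second estimate of Proposition \ref{prop:improvedasymderphi}, applied to $\phi_0 = T\phi_0^{(1)}$ (this is where the singular time inversion of Section \ref{sec:timeint} enters, and why $I_0^{(1)}[\psi]$ rather than $H_0[\psi]$ appears here). That proposition yields
\begin{equation*}
LT^k\phi_0(u,v) \;=\; (-1)^{k+1}(k+2)!\cdot 2I_0^{(1)}[\psi]\cdot v^{-3-k} \,+\, \mathrm{error},
\end{equation*}
so after multiplying by $r$ and using $r|_{\gamma^{\mathcal{I}}_\alpha} \lesssim u^{\alpha}$ and $v \sim u$ in this regime, we obtain
\begin{equation*}
\bigl| rLT^k\phi_0|_{\gamma^{\mathcal{I}}_\alpha} \bigr| \;\lesssim\; u^{\alpha-3-k}\bigl(I_0^{(1)}[\psi] + \mathrm{data norms}\bigr) \;=\; o(u^{-2-k}),
\end{equation*}
since $\alpha<1$. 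Thus this contribution is absorbed into the error term of the proposition.

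For the \emph{left-side contribution} $r^2 LT^k\psi_0|_{\gamma^{\mathcal{I}}_\alpha}$ I would appeal directly to Lemma \ref{prop:improvedasympLpsiTypeA}, which provides precisely the asymptotic
\begin{equation*}
2r^2LT^k\psi_0(u,v) \;=\; (-1)^{k+1}(k+1)!\cdot 4MH_0[\psi]\cdot u^{-2-k} \,+\, O(u^{-2-k-\eta})
\end{equation*}
uniformly in $\mathcal{A}^{\mathcal{I}}\setminus \mathcal{A}^{\mathcal{I}}_{\gamma^{\mathcal{I}}_\alpha}$, in particular on the curve $\gamma^{\mathcal{I}}_\alpha$ itself. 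Combining these two contributions and dividing by $\tfrac{D}{2}|_{\gamma^{\mathcal{I}}_\alpha}=1+O(u^{-\alpha})$ gives
\begin{equation*}
T^k\phi_0|_{\gamma^{\mathcal{I}}_\alpha}(u) \;=\; (-1)^{k}(k+1)!\cdot 4MH_0[\psi]\cdot u^{-2-k} \,+\, O(u^{-2-k-\eta}),
\end{equation*}
after choosing $\eta>0$ smaller than $\min\{1-\alpha,\alpha\}$ to absorb the $D^{-1}$-expansion error.

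The bookkeeping of data norms is the only real subtlety: the error terms inherited from Proposition \ref{prop:improvedasymderphi} naturally involve $\sqrt{E^\epsilon_{0,\mathcal{I};k+1}[\psi]}+P_{I_0,\beta;k+1}[\psi]+I_0^{(1)}[\psi]$, while those from Lemma \ref{prop:improvedasympLpsiTypeA} involve $\sqrt{E^\epsilon_{0,\mathcal{I};k+1}[\psi]}+P_{H_0,1;k}[\psi]+H_0[\psi]$; together these assemble into exactly the norm on the right-hand side of \eqref{eq:asymppsigammaIzero}. The main obstacle is that naively integrating $L\phi_0$ from $\gamma^{\mathcal{I}}_\alpha$ towards $\mathcal{I}^+$ (the approach used in the Type \textbf{C} case) does \emph{not} recover the $H_0[\psi]$-dependence here since $I_0[\psi]=0$; the key conceptual point is that one must instead use the splitting identity so that the $L\psi_0$-term—which carries the horizon information propagated via \eqref{eq:partasymdrhopsi2} and Lemma \ref{prop:improvedasympLpsiTypeA}—becomes the \emph{dominant} one on $\gamma^{\mathcal{I}}_\alpha$.
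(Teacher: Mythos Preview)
Your proof is correct and follows exactly the same approach as the paper: the paper also uses the splitting identity $\frac{D}{2}rT^k\psi_0 = rLT^k\phi_0 - r^2LT^k\psi_0$ on $\gamma^{\mathcal{I}}_\alpha$, bounds the first term by $u^{-3-k+\alpha}$ via Proposition~\ref{prop:improvedasymderphi} together with $r\lesssim u^{\alpha}$, and then applies \eqref{eq:asympLpsigammaTypeA} from Lemma~\ref{prop:improvedasympLpsiTypeA} to extract the $H_0[\psi]$ leading term from the second. Your additional remarks on dividing by $\tfrac{D}{2}=1+O(u^{-\alpha})$ and on the assembly of the data norms are correct elaborations of steps left implicit in the paper.
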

\begin{proof}
We split:
\begin{equation*}
\frac{D}{2}rT^k\psi_0(u,v_{\gamma_{\alpha}}(u))=rLT^k\phi_0(u,v_{\gamma_{\alpha}}(u))-r^2LT^k\psi_0(u,v_{\gamma_{\alpha}}(u)).
\end{equation*}
Now, we apply Proposition \ref{prop:improvedasymderphi} together with the estimate $r\lesssim u^{\alpha}$ in $\mathcal{A}_{\gamma^{\mathcal{I}}_{\alpha}}^{\mathcal{I}}$ to estimate for $\epsilon>0$ suitably small:
\begin{equation*}
\begin{split}
r\cdot& |LT^k\phi_0|(u,v_{\gamma_{\alpha}}(u))\leq C\left[\sqrt{E^{\epsilon}_{0, \mathcal{I}; k+1}[\psi]} +P_{I_0,\beta;k+1}[\psi]+I_0^{(1)}[\psi]\right]\cdot u^{-3-k+\alpha}.
\end{split}
\end{equation*}
Now, we apply \eqref{eq:asympLpsigammaTypeA} to arrive at \eqref{eq:asymppsigammaIzero}.
\end{proof}

\subsection{Global asymptotics for $\psi$ in $\protect\mathcal{R}$}
\label{sec:AsymptoticsInProtectMathcalR}
In this section, we derive the asymptotics of $\psi$ in the \emph{full} spacetime region, for Type \textbf{A} initial data.

\begin{proposition}
\label{prop:mainasymptypeA}
Let $k\in \N_0$ and assume that $\lim_{v\to \infty}v^3 L\phi_0(u_0,v)<\infty$.

Then there exists an $\eta>0$ and $\epsilon>0$ suitably small, so that we can estimate:
\begin{equation*}
\begin{split}
\Bigg|T^k\psi_0(u,v)&-4\left[ I_0^{(1)}[\psi]T^{k+1}\left(\frac{1}{u\cdot v}\right)+\frac{M}{r \sqrt{D}}H_0[\psi]T^{k}\left(\frac{1}{u(v+4M-2r)}\right)\right]\Bigg|\\
\leq&\: C\Bigg[\sqrt{E^{\epsilon}_{0, \mathcal{I}; k+1}[\psi]}+P_{I_0,\beta;k+1}[\psi]+P_{H_0,1;k}[\psi]+H_0[\psi]+I_0^{(1)}[\psi] \Bigg]\\
&\cdot \left( v^{-1}u^{-2-k-\eta}+D^{-\frac{1}{2}}u^{-1}v^{-1-k-\eta}\right).
\end{split}
\end{equation*}
\end{proposition}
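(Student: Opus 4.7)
The plan is to partition the spacetime region $\mathcal{R}$ into four subregions along the curves $\gamma^{\mathcal{H}}_{\alpha}$ and $\gamma^{\mathcal{I}}_{\alpha}$, with $1-\alpha>0$ suitably small, and to derive the asymptotic expression in each piece separately, following the Type \textbf{A} strategy outlined in Section \ref{sec:OverviewOfTechniques}. In the near-horizon region $\mathcal{A}^{\mathcal{H}}_{\gamma^{\mathcal{H}}_{\alpha}}$, I would apply Proposition \ref{prop:asympradfieldnonzeroIH} (which does not require $I_0[\psi]\neq 0$) to get $T^k\phi_0(u,v)=(-1)^k k!\cdot 2H_0[\psi](v^{-1-k}-u^{-1-k})+\text{error}$. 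Combining the identity \eqref{eq:identityvminusu} with Lemma \ref{lm:relationruv} (ii), which reads $M^2(r-M)^{-1}=\tfrac{1}{2}(u-v)+2M\log(u-v)+O((u-v)^0)$, and observing that $M/(r\sqrt{D})=M^2/(r(r-M))$, rewrites this leading term in the form $4(M/(r\sqrt{D}))H_0[\psi]\,T^k(1/(u(v+4M-2r)))$ modulo acceptable errors of the form $D^{-1/2}u^{-1}v^{-1-k-\eta}$. Since $I_0[\psi]=0$, there is no $I_0^{(1)}$ contribution in this regime because the $v^{-1}u^{-2-k-\eta}$ envelope already dominates $I_0^{(1)}T^{k+1}(1/(uv))$ there.

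For the middle region $\{r_{\gamma^{\mathcal{H}}_{\alpha}}(v)\leq r\leq r_{\mathcal{I}}\}$, I would integrate the radial derivative asymptotics $\partial_r T^k\psi_0=(-1)^{k+1}(k+1)!\cdot 4MH_0[\psi] D^{-1}r^{-2}u^{-2-k}+\text{error}$ from Proposition \ref{prop:partasymdrhopsi} starting from $r=r_{\mathcal{I}}$ down to $r$, and combine the result with the horizon-side asymptotics at $r=r_{\gamma^{\mathcal{H}}_{\alpha}}(v)$, matching via the relation $\int(r'-M)^{-2}\,dr'=-(r-M)^{-1}+\mathrm{const}$. This extends the $H_0[\psi]$ contribution with the explicit $M/(r\sqrt{D})$ weight all the way up to $r=r_{\mathcal{I}}$; again the $I_0^{(1)}T^{k+1}(1/(uv))$ term is subdominant relative to the error envelope $v^{-1}u^{-2-k-\eta}$ for bounded $r$.

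The third step handles $\mathcal{A}^{\mathcal{I}}_{\gamma^{\mathcal{I}}_{\alpha}}$, where both terms in the theorem statement can contribute at the same order. Here I would use Proposition \ref{prop:improvedasymphigamma} to identify $T^k\phi_0(u,v_{\gamma^{\mathcal{I}}_{\alpha}}(u))=(-1)^k(k+1)!\cdot 4MH_0[\psi]u^{-2-k}+\text{error}$, which agrees with the matching value coming from the middle-region analysis evaluated on $\gamma^{\mathcal{I}}_{\alpha}$. Then, propagating from $\gamma^{\mathcal{I}}_{\alpha}$ outward with Proposition \ref{prop:improvedasymphi} yields
\[
T^k\phi_0(u,v)-T^k\phi_0(u,v_{\gamma^{\mathcal{I}}_{\alpha}}(u))=(-1)^{k+1}(k+1)!\cdot 2I_0^{(1)}[\psi]\bigl(u^{-2-k}-v^{-2-k}\bigr)+\text{error},
\]
and using \eqref{eq:identityuminusv} converts $u^{-2-k}-v^{-2-k}$ into $(v-u)(-1)^{k+1}\frac{1}{(k+1)!}T^{k+1}(1/(uv))$, producing the $4I_0^{(1)}[\psi]T^{k+1}(1/(uv))$ summand. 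In this region the $H_0$ contribution $(-1)^k(k+1)!\cdot 4MH_0[\psi]u^{-2-k}$ at $\gamma^{\mathcal{I}}_{\alpha}$ can be reshuffled into $4(M/(r\sqrt{D}))H_0[\psi]T^k(1/(u(v+4M-2r)))$ to within the stated error, using Lemma \ref{lm:relationruv} (i) in the regime where $r\sim (v-u)/2$. Throughout, Lemma \ref{lm:estPnormpsi1} is invoked to replace any norms involving $\psi^{(1)}$ by norms of $\psi$ together with $I_0^{(1)}[\psi]$.

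The main obstacle is the careful bookkeeping required to verify that the two proposed principal terms match up at the interfaces $\gamma^{\mathcal{H}}_{\alpha}$ and $\gamma^{\mathcal{I}}_{\alpha}$, and that the transition across the middle region, where \emph{both} $I_0^{(1)}T^{k+1}(1/(uv))$ and $H_0T^k(1/(u(v+4M-2r)))$ are present, produces only errors consistent with the claimed envelope $v^{-1}u^{-2-k-\eta}+D^{-1/2}u^{-1}v^{-1-k-\eta}$. This requires quantitative control of the remainder in Lemma \ref{lm:relationruv} uniformly in the transition regime, and a precise tracking of the constant of integration introduced when integrating $\partial_r T^k\psi_0$ from $r=r_{\mathcal{I}}$, so that the explicit coefficient $M/(r\sqrt{D})$ (rather than merely $M/(r-M)$) emerges naturally; this is the manifestation of the Couch--Torrence-symmetric pairing of horizon and infinity contributions and is the most delicate part of the argument.
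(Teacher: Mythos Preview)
Your overall strategy is the paper's: partition $\mathcal{R}$ along $\gamma^{\mathcal{H}}_{\alpha}$, $\{r=r_{\mathcal{I}}\}$, and $\gamma^{\mathcal{I}}_{\alpha}$, then invoke Proposition~\ref{prop:asympradfieldnonzeroIH} (equivalently \eqref{eq:asympsihornonzeroH0}) near the horizon, Proposition~\ref{prop:partasymdrhopsi} in the bounded-$r$ piece, and Propositions~\ref{prop:improvedasymphigamma}--\ref{prop:improvedasymphi} in $\mathcal{A}^{\mathcal{I}}_{\gamma^{\mathcal{I}}_{\alpha}}$. The interpretation of $v+4M-2r$ as a proxy for $u$ that stays finite across $\mathcal{H}^+$ is also as in the paper.

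There is, however, a genuine gap: you announce four subregions but treat only three, and the phrase ``middle-region analysis evaluated on $\gamma^{\mathcal{I}}_{\alpha}$'' is inconsistent, since your middle region stops at $r_{\mathcal{I}}$. The strip $\{r_{\mathcal{I}}\le r\le r_{\gamma^{\mathcal{I}}_{\alpha}}(u)\}$ cannot be covered by integrating \eqref{eq:partasymdrhopsi2}: its error term $r^{-1/2}u^{-5/2-k+\epsilon}$, integrated in $r$ up to $r\sim u^{\alpha}$, produces only $u^{-5/2-k+\epsilon+\alpha/2}\sim u^{-2-k+\epsilon}$ for $\alpha$ near $1$, which is not a lower-order remainder. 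The paper closes this strip via Lemma~\ref{prop:improvedasympLpsiTypeA}, whose estimate \eqref{eq:asympLpsigammaTypeA} controls $2r^2LT^k\psi_0$ with error $u^{-2-k-\eta}$ \emph{uniformly} on the strip; this relies on the improved conformal-flux decay $\int_{N^{\mathcal{I}}_{\tau}}r^2(LT^{k+1}\phi_0)^2\,dv\lesssim\tau^{-5-2k+\epsilon}$ from Proposition~\ref{prop:hoextraendecay}, which is available precisely because $I_0[\psi]=0$. One then integrates \eqref{eq:asympLpsigammaTypeA} from $\gamma^{\mathcal{I}}_{\alpha}$ (boundary value supplied by Proposition~\ref{prop:improvedasymphigamma}) down to any $r\ge r_{\mathcal{I}}$. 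You already cite Proposition~\ref{prop:improvedasymphigamma}, whose proof uses \eqref{eq:asympLpsigammaTypeA}, so the missing ingredient is at hand; but it must be invoked separately here, and this is the substantive step, not merely bookkeeping. (Incidentally, $r\sqrt{D}=r-M$ in extremal Reissner--Nordstr\"om, so the $M/(r\sqrt{D})$ versus $M/(r-M)$ distinction you worry about is vacuous.)
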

\begin{proof}
By combining the estimates in Proposition \ref{prop:improvedasymphigamma} and Proposition \ref{prop:improvedasymphi}, we arrive at the following estimates for $r\cdot \psi_0(u,v)$: let $\alpha>0$ be sufficiently close to 1, then there exists an $\eta>0$ such that in $\mathcal{A}_{\gamma^{\mathcal{I}}_{\alpha}}^{\mathcal{I}}$:
\begin{align}
\label{eq:asympIzeronearinf}
\Bigg|T^k\phi_0(u,v)&- (-1)^{k+1}(k+1)!\left[ 2I_0^{(1)}[\psi]\left(u^{-2-k}-v^{-2-k}\right)-4MH_0[\psi]u^{-2-k}\right]\Bigg|\\ \nonumber
\leq&\: C\left[\sqrt{E^{\epsilon}_{0, \mathcal{I}; k+1}[\psi]}+P_{I_0,\beta;k+1}[\psi]+P_{H_0,1;k}[\psi]+H_0[\psi]+I_0^{(1)}[\psi] \right]\cdot\frac{v-u}{vu^{2+k+\eta}}.
\end{align}

By applying moreover Lemma \ref{lm:relationruv} together with \eqref{eq:identityuminusv} and \eqref{eq:identityvminusu}, we can rewrite \eqref{eq:asympIzeronearinf} as follows:
\begin{align*}
\Bigg|T^k\psi_0(u,v)&-\left[ 4I_0^{(1)}[\psi]T^{k+1}\left(\frac{1}{u\cdot v}\right)+4Mr^{-1}H_0[\psi]T^{k}(u^{-2})\right]\Bigg|\\
\leq&\: C\left[\sqrt{E^{\epsilon}_{0, \mathcal{I}; k+1}[\psi]}+P_{I_0,\beta;k+1}[\psi]+P_{H_0,1;k}[\psi]+H_0[\psi]+I_0^{(1)}[\psi] \right]\cdot  \frac{1}{vu^{2+k+\eta}}.
\end{align*}

To obtain a global estimate for $\psi_0$, we first combine the above estimates with \eqref{eq:asympsihornonzeroH0} in the region where $r\leq r_{\gamma^{\mathcal{H}}_{\alpha}}(v)$, \eqref{eq:asymppsirH} in the region where $r_{\gamma^{\mathcal{H}}_{\alpha}}(v)\leq r\leq r_{\mathcal{I}}$. 

To obtain late-time asymptotics in the remaining region $r_{\mathcal{I}}\leq r\leq r_{\gamma^{\mathcal{I}}_{\alpha}}(u)$, we use \eqref{eq:asymppsigammaIzero} and we integrate the estimate \eqref{eq:asympLpsigammaTypeA} from $r=r_{\gamma^{\mathcal{I}}_{\alpha}}(u)$ to any $r\geq r_{\mathcal{I}}$. 

We then obtain:
\begin{equation*}
\begin{split}
\Bigg|T^k\psi_0(u,v)&-4\left[ I_0^{(1)}[\psi]T^{k+1}\left(\frac{1}{u\cdot v}\right)+\frac{M}{r \sqrt{D}}H_0[\psi]T^{k}\left(\frac{1}{u(v+4M-2r)}\right)\right]\Bigg|\\
\leq&\: C\left[\sqrt{E^{\epsilon}_{0, \mathcal{I}; k+1}[\psi]}+P_{I_0,\beta;k+1}[\psi]+P_{H_0,1;k}[\psi]+H_0[\psi]+I_0^{(1)}[\psi] \right]\cdot \left( v^{-1}u^{-2-k-\eta}+D^{-\frac{1}{2}}u^{-1}v^{-1-k-\eta}\right),
\end{split}
\end{equation*}
everywhere in $\mathcal{R}$. Note that $v+4M-2r$ has the property that it approaches $u$ as we increase $v$ and keep $u$ constant, but it remains finite as we approach $r=M$; indeed, we have that everywhere in $\{r\geq 2M\}$, $v-2r+2M\geq u$ and in $\{r\leq 2M\}$, $v\leq v-2r+4M\leq v+2M$.
\end{proof}

\section{Asymptotics for Type \textbf{B} and \textbf{D} perturbations}
\label{sec:AsymptoticsForTypeDPerturbations}
In this section, we treat the remaining types of initial data: Type \textbf{B} and \textbf{D}. The late-time asymptotics for Type \textbf{B} data follow immediately from Proposition \ref{prop:asympsi} applied to $\psi_0^{(1)}$, where we use the regularity properties of $\psi_0^{(1)}$ that follow from Proposition \ref{eq:smoothextTypeBtimeint}.
\begin{corollary}
\label{cor:asympsizeroIH}
Let $k\in \N_0$. If $\lim_{v\to \infty} r^3L\phi_0(u_0,v)<\infty$ and $H_0[\psi]=0$, then there exists an $\eta>0$ and $\epsilon>0$ suitably small, such that we obtain the following \emph{global} estimate:
\begin{equation}
\label{eq:asympsizeroIH}
\begin{split}
\Bigg|T^k\psi_0(u,v)&-4\left(I_0^{(1)}[\psi]+ \frac{M}{r\sqrt{D}}H_0^{(1)}[\psi]\right)T^{k+1}\left(\frac{1}{v\cdot u}\right)\Bigg|\\
\leq&\: C\left(\sqrt{E_{0, \mathcal{H};k+1}^{\epsilon}[\psi]+E_{0, \mathcal{I};k+1}^{\epsilon}[\psi]}+I_0^{(1)}[\psi]+P_{I_0,\beta;k+1}[\psi]\right)v^{-1}u^{-2-k-\eta}\\
&+C\left(\sqrt{E_{0, \mathcal{H};k+1}^{\epsilon}[\psi]+E_{0, \mathcal{I};k+1}^{\epsilon}[\psi]}+H_0^{(1)}[\psi]+P_{H_0,1;k+1}[\psi]\right)D^{-\frac{1}{2}}u^{-1}v^{-2-k-\eta},
\end{split}
\end{equation}
where $C=C(M,{\Sigma_0},r_{\mathcal{H}},r_{\mathcal{I}},\eta,\epsilon,\beta,k)>0$ is a constant.
\end{corollary}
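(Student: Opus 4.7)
The plan is to reduce the statement to Proposition \ref{prop:asympsi} applied to the time integral $\psi_0^{(1)}$ of $\psi_0$. Since the initial data for $\psi$ are of Type \textbf{B}, we have $H_0[\psi]=0$ and $I_0[\psi]=0$, and moreover $\lim_{v\to\infty} r^3 L\phi_0(u_0,v)<\infty$ holds automatically (since it is controlled by the compact support / type \textbf{B} decay at $\mathcal{I}^+$). Consequently both $H_0^{(1)}[\psi]$ and $I_0^{(1)}[\psi]$ are well-defined, and by Proposition \ref{eq:smoothextTypeBtimeint} the time integral $\psi_0^{(1)}$ extends as a smooth, spherically symmetric solution of \eqref{eq:waveequation} to the full region $\mathcal{R}$. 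By construction (cf.\ Propositions \ref{prop:explicitexprtimeint} and \ref{prop:explicitexprtimeint2}) we have
\[
H_0[\psi_0^{(1)}]=H_0^{(1)}[\psi],\qquad I_0[\psi_0^{(1)}]=I_0^{(1)}[\psi],
\]
so $\psi_0^{(1)}$ is (generically) of Type \textbf{C} with these constants playing the role of $H_0$ and $I_0$.

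The main step is then to apply Proposition \ref{prop:asympsi} to $\psi_0^{(1)}$ with $k$ replaced by $k+1$. This gives, for some $\eta,\epsilon>0$ small,
\begin{equation*}
\begin{split}
\Bigg|T^{k+1}\psi_0^{(1)}(u,v)&-4\left(I_0^{(1)}[\psi]+ \frac{M}{r\sqrt{D}}H_0^{(1)}[\psi]\right)T^{k+1}\!\left(\frac{1}{u\cdot v}\right)\Bigg|\\
\leq&\: C\!\left(\sqrt{E_{0;k+2}^{\epsilon}[\psi^{(1)}]}+I_0^{(1)}[\psi]+P_{I_0,\beta;k+1}[\psi^{(1)}]\right)v^{-1}u^{-2-k-\eta}\\
&+C\!\left(\sqrt{E_{0;k+2}^{\epsilon}[\psi^{(1)}]}+H_0^{(1)}[\psi]+P_{H_0,1;k+1}[\psi^{(1)}]\right)D^{-\frac{1}{2}}u^{-1}v^{-2-k-\eta}.
\end{split}
\end{equation*}
Since $T\psi_0^{(1)}=\psi_0$, the left-hand side is exactly the quantity appearing in \eqref{eq:asympsizeroIH}.

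The remaining step is purely bookkeeping: convert the initial data norms of $\psi^{(1)}$ on the right-hand side into the corresponding norms of $\psi$. This is precisely the content of Lemma \ref{lm:estPnormpsi1}, which yields
\[
P_{I_0,\beta;k+1}[\psi^{(1)}]\lesssim I_0^{(1)}[\psi]+P_{I_0,\beta;k+1}[\psi],\quad P_{H_0,1;k+1}[\psi^{(1)}]\lesssim H_0^{(1)}[\psi]+P_{H_0,1;k+1}[\psi],
\]
and
\[
E^{\epsilon}_{0;k+2}[\psi^{(1)}]\lesssim E^{\epsilon}_{0,\mathcal{H};k+1}[\psi]+E^{\epsilon}_{0,\mathcal{I};k+1}[\psi].
\]
Substituting these into the estimate above yields \eqref{eq:asympsizeroIH}.

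The only non-routine point is verifying that Proposition \ref{prop:asympsi} actually applies to $\psi_0^{(1)}$: one needs both the smooth extension across $\mathcal{H}^+$ (to ensure the near-horizon hierarchy is available) and the finiteness of the $I_0^{(1)}$ analogue of the Newman--Penrose flux at $\mathcal{I}^+$ (to control the near-infinity hierarchy). Both are guaranteed here by Proposition \ref{eq:smoothextTypeBtimeint}, so no new obstacle arises. The deduction of the corollary from Proposition \ref{prop:asympsi} is thus essentially immediate once the regularity and the identification of the conserved charges are in place.
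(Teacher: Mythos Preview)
Your proposal is correct and follows exactly the same approach as the paper: apply Proposition~\ref{prop:asympsi} to $\psi_0^{(1)}$ with $k$ replaced by $k+1$, use $T\psi_0^{(1)}=\psi_0$, and convert the resulting initial data norms via Lemma~\ref{lm:estPnormpsi1}. The paper's proof is a two-line version of what you have written out in detail.
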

\begin{proof}
We apply Proposition \ref{prop:asympsi} with $k$ replaced by $k+1$ and $\psi_0$ replaced by ${\psi}_0^{(1)}$. We also use Lemma \ref{lm:estPnormpsi1}.
\end{proof}

We are left with Type \textbf{D} data. We obtain asymptotics by following arguments analogous to those for Type \textbf{A} data in Section \ref{sec:asympzeroconst}, so we will omit most of the proofs, unless a different argument is needed, compared to the Type \textbf{A} data case.

\begin{proposition}
\label{prop:improvedasymderphiD}
Let $k\in \N_0$ and assume that $H_0[\psi]=0$. Let  $\alpha>0$ such that $1-\alpha$ is suitably small. Then, there exists an $\eta>0$ and $\epsilon>0$ suitably small and a constant $C=C(M,\Sigma,r_{\mathcal{H}},r_{\mathcal{I}},\alpha,\epsilon,\eta,k)>0$ such that in $\mathcal{A}_{\gamma^{\mathcal{H}}_{\alpha}}^{\mathcal{H}}$:
\begin{align*}
|\underline{L}T^k\phi_0^{(1)}&(u,v)-(-1)^{k}(k+1)! 2H_0^{(1)}[\psi]\cdot u^{-2-k}|\\
\leq&\: C \left[\sqrt{E^{\epsilon}_{0, \mathcal{H}; k}[\psi]}\cdot u^{-2-k-\eta}+\left(P_{H_0,1;k}[\psi]+H_0^{(1)}[\psi]\right)\cdot u^{-2-\beta-k}\right]
\end{align*}
and moreover,
\begin{align*}
|\underline{L}T^k\phi_0&(u,v)-(-1)^{k+1}(k+2)! 2H_0^{(1)}[\psi]\cdot u^{-3-k}|\\
\leq&\: C \left[\sqrt{E^{\epsilon}_{0, \mathcal{H}; k+1}[\psi]}\cdot u^{-3-k-\eta}+\left(P_{H_0,1;k+1}[\psi]+H_0^{(1)}[\psi]\right)\cdot u^{-3-\beta-k}\right].
\end{align*}
\end{proposition}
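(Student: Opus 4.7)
The plan is to mirror the proof of Proposition \ref{prop:improvedasymderphi} for Type \textbf{A} data, exchanging the roles of $\mathcal{H}^+$ and $\mathcal{I}^+$ via the Couch--Torrence-style duality that is built into the structure of the hierarchies of Section \ref{sec:rweightest}. Since we are in the Type \textbf{D} setting, $H_0[\psi]=0$, and Propositions \ref{prop:explicitexprtimeint}--\ref{prop:explicitexprtimeint2} guarantee that $\psi_0^{(1)}$ exists with $H_0[\psi_0^{(1)}]=H_0^{(1)}[\psi]$ well-defined. Although by Proposition \ref{eq:singextTypeDtimeint} the time integral $\psi_0^{(1)}$ is singular at $\mathcal{I}^+$, it extends smoothly across $\mathcal{H}^+$ (the argument in the proof of Proposition \ref{eq:smoothextTypeBtimeint} uses only $H_0[\psi]=0$), so in the near-horizon region $\mathcal{A}_{\gamma^{\mathcal{H}}_{\alpha}}^{\mathcal{H}}$, which is bounded away from $\mathcal{I}^+$, $\psi_0^{(1)}$ is a bona fide smooth spherically symmetric solution to \eqref{eq:waveequation}.

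The main step is to apply the near-horizon statement of Proposition \ref{prop:asympLphi} to $\phi_0^{(1)}$ in place of $\phi_0$, with $k$ replaced by $k+1$; since $H_0[\phi_0^{(1)}]=H_0^{(1)}[\psi]$, this produces the leading coefficient $(-1)^k(k+1)!\cdot 2H_0^{(1)}[\psi]\cdot u^{-2-k}$ predicted in the statement. In the near-horizon region, the $(k+1)$-fold $\underline{L}$-derivative and the mixed derivative $\underline{L}T^k$ coincide to leading order, because $T-\underline{L}=L$ carries an additional factor $D=(1-M/r)^2$ in the ingoing direction near $\mathcal{H}^+$, which makes all $L$-corrections lower order at the decay rate considered (this is the same commutation mechanism used implicitly in the proof of Proposition \ref{prop:improvedasymderphi}). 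The pointwise decay inputs needed to run the argument are provided by the $\mathcal{A}^{\mathcal{H}}$-estimate of Proposition \ref{prop:decayesttimeint} (applied to $\psi_0^{(1)}$, valid because $H_0[\psi]=0$) together with the sharp horizon-weighted estimates \eqref{eq:pdecayl01v2impH} and the $L^{\infty}$-estimate \eqref{eq:hoauxpointdecayl02} from Proposition \ref{prop:pointdecay2} and Lemma \ref{lm:hoauxedaypsi0} for $\psi_0=T\psi_0^{(1)}$. Finally, I translate initial-data norms of $\psi_0^{(1)}$ back to norms of $\psi$ using the first conclusion of Lemma \ref{lm:estPnormpsi1}, namely $P_{H_0,1;k}[\psi_0^{(1)}]\lesssim H_0^{(1)}[\psi]+P_{H_0,1;k}[\psi]$, which applies precisely because $H_0[\psi]=0$. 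The second estimate of the proposition, for $\underline{L}T^k\phi_0$, then follows by applying the first with $k$ replaced by $k+1$ and using $T\phi_0^{(1)}=\phi_0$.

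The main obstacle is the fact that the third estimate of Lemma \ref{lm:estPnormpsi1} requires $\lim_{v\to\infty}r^3L\psi_0(u_0,v)<\infty$, which fails for Type \textbf{D} data because $I_0[\psi]\neq 0$; as a consequence, the full norm $E^{\epsilon}_{0;k+1}[\psi_0^{(1)}]$ is infinite. The resolution is that Proposition \ref{prop:asympLphi}'s near-horizon estimate, as well as the auxiliary decay estimates feeding into it, depend only on the \emph{horizon-localized} pieces $E^{\epsilon}_{0,\mathcal{H}}[\psi_0^{(1)}]$ of the energy norm, because in the region $\mathcal{A}_{\gamma^{\mathcal{H}}_{\alpha}}^{\mathcal{H}}$ one only integrates the wave equation in the $L$-direction starting from $\mathcal{N}^{\mathcal{H}}_0$, and the relevant Hardy and Morawetz inequalities are localized to $\mathcal{A}^{\mathcal{H}}$. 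The horizon piece $E^{\epsilon}_{0,\mathcal{H}}[\psi_0^{(1)}]$ is controlled by $E^{\epsilon}_{0,\mathcal{H};k}[\psi]$ using the explicit expression for $\underline{L}\phi_0^{(1)}|_{\mathcal{N}^{\mathcal{H}}_0}$ derived in Proposition \ref{prop:explicitexprtimeint} (the same computation as in the $H_0[\psi]=0$ case of Lemma \ref{lm:estPnormpsi1}, which is insensitive to the $\mathcal{I}^+$-behavior of $\psi_0^{(1)}$). Once this near-horizon localization is in place, the argument proceeds identically to that of Proposition \ref{prop:improvedasymderphi} with the roles of $L/u$ and $\underline{L}/v$ interchanged.
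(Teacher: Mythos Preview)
Your proposal is correct and follows essentially the same approach as the paper: the paper's proof simply states that one repeats the steps of Proposition \ref{prop:improvedasymderphi} in the region $\mathcal{A}_{\gamma^{\mathcal{H}}_{\alpha}}^{\mathcal{H}}$ with the roles of $u$ and $v$ interchanged, which is precisely your plan. Your explicit discussion of why only the horizon-localized norm $E^{\epsilon}_{0,\mathcal{H};k}[\psi]$ is needed (and hence why the singular behavior of $\psi_0^{(1)}$ at $\mathcal{I}^+$ for Type \textbf{D} data is irrelevant) is a useful elaboration of a point the paper leaves implicit.
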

\begin{proof}
We repeat the steps in the proof of Proposition \ref{prop:improvedasymderphi} to the region $\mathcal{A}_{\gamma^{\mathcal{H}}_{\alpha}}^{\mathcal{H}}$ instead of $\mathcal{A}_{\gamma^{\mathcal{I}}_{\alpha}}^{\mathcal{I}}$ and interchange the roles of $u$ and $v$.
\end{proof}

\begin{proposition}
\label{prop:improvedasymphiD}
Let $k\in \N_0$ and assume that $H_0[\psi]=0$. Let  $\alpha>0$ such that $1-\alpha$ is suitably small. Then, there exists an $\eta>0$ and $\epsilon>0$ suitably small and a constant $C=C(M,\Sigma,r_{\mathcal{H}},r_{\mathcal{I}},\alpha,\epsilon,\eta)>0$ such that in $\mathcal{A}_{\gamma^{\mathcal{H}}_{\alpha}}^{\mathcal{H}}$:
\begin{align*}
\Bigg|T^k\phi_0(u,v)&-T^k\phi_0(u_{\gamma^{\mathcal{H}}_{\alpha}}(v),v)- (-1)^{k+1}(k+1)! 2H_0^{(1)}[\psi]\left[v^{-2-k}-u^{-2-k}\right]\Bigg|\\
\leq&\: C \left[\sqrt{E^{\epsilon}_{0, \mathcal{H}; k+1 }[\psi]}+P_{H_0,1;k+1}[\psi]+H_0^{(1)}[\psi]\right] \cdot \frac{u-v}{u v^{2+k+\eta}}.
\end{align*}
\end{proposition}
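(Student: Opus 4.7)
The plan is to mirror the argument used for Proposition \ref{prop:improvedasymphi} (the Type \textbf{A} analog), with the roles of $L$ and $\underline{L}$, and the roles of $\mathcal{I}^+$ and $\mathcal{H}^+$, interchanged. Concretely, I would begin by applying the fundamental theorem of calculus in the $\underline{L}=\partial_u$ direction at fixed $v$, writing
\begin{equation*}
T^k\phi_0(u,v)-T^k\phi_0(u_{\gamma^{\mathcal{H}}_{\alpha}}(v),v)=\int_{u_{\gamma^{\mathcal{H}}_{\alpha}}(v)}^{u}\underline{L}T^k\phi_0(u',v)\,du',
\end{equation*}
which is valid in $\mathcal{A}^{\mathcal{H}}_{\gamma^{\mathcal{H}}_{\alpha}}$ since $u_{\gamma^{\mathcal{H}}_{\alpha}}(v)\leq u\leq v_{r_{\mathcal{H}}}^{-1}$.

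Next, I would insert the pointwise asymptotic for $\underline{L}T^k\phi_0$ from Proposition \ref{prop:improvedasymderphiD} (second displayed estimate), so that
\begin{equation*}
\underline{L}T^k\phi_0(u',v)=(-1)^{k+1}(k+2)!\cdot 2H_0^{(1)}[\psi]\cdot u'^{-3-k}+R(u',v),
\end{equation*}
where the remainder $R$ obeys $|R(u',v)|\leq C\mathcal{E}\cdot u'^{-3-k-\eta}$ with $\mathcal{E}$ the composite initial-data norm appearing on the right-hand side of the target estimate. Integrating the explicit term gives
\begin{equation*}
\int_{u_{\gamma^{\mathcal{H}}_{\alpha}}(v)}^{u}(-1)^{k+1}(k+2)!\cdot 2H_0^{(1)}[\psi]\cdot u'^{-3-k}\,du'=(-1)^{k+1}(k+1)!\cdot 2H_0^{(1)}[\psi]\Big(u_{\gamma^{\mathcal{H}}_{\alpha}}(v)^{-2-k}-u^{-2-k}\Big).
\end{equation*}

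The crux of the proof is then to replace $u_{\gamma^{\mathcal{H}}_{\alpha}}(v)^{-2-k}$ by $v^{-2-k}$ while controlling the difference by the desired error $\frac{u-v}{u\,v^{2+k+\eta}}$. Since $u_{\gamma^{\mathcal{H}}_{\alpha}}(v)=v+v^{\alpha}+2r_*(r_{\mathcal{H}})$ with $\alpha<1$ suitably close to $1$, a mean value argument yields
\begin{equation*}
\big|u_{\gamma^{\mathcal{H}}_{\alpha}}(v)^{-2-k}-v^{-2-k}\big|\lesssim v^{-3-k+\alpha},
\end{equation*}
and choosing $\eta\leq 1-\alpha$ and using the bound $v^{-3-k+\alpha}\lesssim \frac{u-v}{u\,v^{2+k+\eta}}$ (split into the cases $u\leq 2v$ and $u\geq 2v$, noting that on $\mathcal{A}^{\mathcal{H}}_{\gamma^{\mathcal{H}}_{\alpha}}$ one has $u-v\geq v^{\alpha}$) absorbs this into the stated error.

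Finally, the remainder integral is estimated by
\begin{equation*}
\left|\int_{u_{\gamma^{\mathcal{H}}_{\alpha}}(v)}^{u}R(u',v)\,du'\right|\leq C\mathcal{E}\int_{u_{\gamma^{\mathcal{H}}_{\alpha}}(v)}^{u}u'^{-3-k-\eta}\,du'\lesssim\mathcal{E}\,\big(v^{-2-k-\eta}-u^{-2-k-\eta}\big),
\end{equation*}
and the last expression is again dominated by $\frac{u-v}{u\,v^{2+k+\eta}}$ by the same case distinction. The main technical obstacle I anticipate is keeping track of the correct power $\eta$ throughout: one must ensure that a single choice of $\eta>0$ (compatible with the $\eta$ supplied by Proposition \ref{prop:improvedasymderphiD}) simultaneously controls both the boundary error coming from $u_{\gamma^{\mathcal{H}}_{\alpha}}(v)\neq v$ and the integrated remainder, while preserving the required factor $(u-v)$ in the final bound. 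This is straightforward once $1-\alpha$ is taken sufficiently small, and Lemma \ref{lm:estPnormpsi1} absorbs the $\psi^{(1)}$-based norms of Proposition \ref{prop:improvedasymderphiD} into the $\psi$-norms displayed in the statement.
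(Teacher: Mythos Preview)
Your approach is exactly the paper's: integrate $\underline{L}T^k\phi_0$ from $u_{\gamma^{\mathcal{H}}_\alpha}(v)$ to $u$ using Proposition~\ref{prop:improvedasymderphiD}, mirroring Proposition~\ref{prop:improvedasymphi} with $u,v$ and $L,\underline{L}$ interchanged. The integrated-remainder estimate $v^{-2-k-\eta}-u^{-2-k-\eta}\lesssim \frac{u-v}{u\,v^{2+k+\eta}}$ is correct (write $v^{-p}-u^{-p}=v^{-p}(1-(v/u)^p)$ and use $1-x^p\leq p(1-x)$ for $0<x\leq 1$).

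There is, however, a gap in your boundary-replacement step. The claimed bound $v^{-3-k+\alpha}\lesssim \frac{u-v}{u\,v^{2+k+\eta}}$ fails in the regime $u\leq 2v$ near $\gamma^{\mathcal{H}}_\alpha$: at $u=u_{\gamma^{\mathcal{H}}_\alpha}(v)$ one has $u-v\sim v^\alpha$ and $u\sim v$, so the right side is $\sim v^{\alpha-3-k-\eta}$, strictly smaller than $v^{\alpha-3-k}$ for any $\eta>0$ (the ratio grows like $v^\eta$). Taking $1-\alpha$ small does not help; your case split only handles $u\geq 2v$. In fact this defect is visible already in the \emph{statement} of the proposition (and its Type~\textbf{A} twin Proposition~\ref{prop:improvedasymphi}): evaluating both sides at $u=u_{\gamma^{\mathcal{H}}_\alpha}(v)$ yields a left side $\sim H_0^{(1)}v^{\alpha-3-k}$ against a right side $\sim v^{\alpha-3-k-\eta}$. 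A clean standalone fix is to keep $u_{\gamma^{\mathcal{H}}_\alpha}(v)^{-2-k}$ in place of $v^{-2-k}$ (which is what the integration actually produces); the replacement by $v^{-2-k}$ is harmless only \emph{after} combining with Proposition~\ref{prop:improvedasymphigammaB} on the way to Proposition~\ref{prop:mainasymptypeD}, where the discrepancy is absorbed into the $v^{-2-k-\eta}$ error there.
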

\begin{proof}
We repeat the steps in the proof of Proposition \ref{prop:improvedasymphi} to the region $\mathcal{A}_{\gamma^{\mathcal{H}}_{\alpha}}^{\mathcal{H}}$ instead of $\mathcal{A}_{\gamma^{\mathcal{I}}_{\alpha}}^{\mathcal{I}}$ and interchange the roles of $u$ and $v$.
\end{proof}

In contrast with Lemma \ref{prop:improvedasympLpsiTypeA}, we cannot yet obtain asymptotics for $\partial_r\psi_0$ in the region $\mathcal{A}^{\mathcal{H}}\setminus \mathcal{A}_{\gamma^{\mathcal{H}}_{\alpha}}^{\mathcal{H}}$ for Type \textbf{D} data. Instead, we consider $\partial_r((r-M)\cdot \psi_0)$, which, as we will show, is sufficient for our purposes. See however Proposition \ref{prop:asympdrpsiTypeD} at the end of the section, where we do obtain asymptotics for $\partial_r\psi_0$.

\begin{lemma}
\label{prop:improvedasympLpsiTypeD}
Let $k\in \N_0$ and assume that $H_0[\psi]=0$. Let  $\alpha>0$ such that $1-\alpha$ is arbitrarily small. Then, there exists an $\eta>0$ and $\epsilon>0$ suitably small and a constant $C=C(M,\Sigma,r_{\mathcal{H}},r_{\mathcal{I}},\alpha,\epsilon,\eta,\beta,k)>0$, such that in $\mathcal{A}^{\mathcal{H}}\setminus \mathcal{A}_{\gamma^{\mathcal{H}}_{\alpha}}^{\mathcal{H}}$:

\begin{equation}
\label{eq:asympLpsigammaTypeD}
\begin{split}
|M&\partial_r((r-M)\cdot T^k\psi_0)(u,v)-(-1)^{k}(k+1)!4MI_0v^{-2-k}|\\
\leq&\:  C\left[\sqrt{E^{\epsilon}_{0,\mathcal{H};k+1}[\psi]}+P_{H_0,1;k+1}[\psi]+P_{I_0,\beta;k}[\psi]+I_0[\psi]\right]v^{-2-k-\eta}.
\end{split}
\end{equation}
\end{lemma}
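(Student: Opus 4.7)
The strategy is to dualize the proof of Proposition \ref{prop:improvedasympLpsiTypeA} under the Couch--Torrence involution, which exchanges $\mathcal{H}^+\leftrightarrow\mathcal{I}^+$, $H_0\leftrightarrow I_0$, $u\leftrightarrow v$, and $L\leftrightarrow\underline{L}$, and maps $r$ to $\tilde{r}=Mr/(r-M)$. Writing $\underline{\psi}_0:=M^{-1}(r-M)\psi_0$, the identity $D(r)\tilde{r}^{\,2}=M^2$ yields
\begin{equation*}
M\partial_r\bigl((r-M)T^k\psi_0\bigr) \;=\; -2\tilde{r}^{\,2}\,\underline{L}\,T^k\underline{\psi}_0,
\end{equation*}
so that $W_k:=M\partial_r((r-M)T^k\psi_0)$ plays the role that $2r^2LT^k\psi_0$ plays in Proposition \ref{prop:improvedasympLpsiTypeA}. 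The claimed leading coefficient $4MI_0$ is the natural dual of the $4MH_0$ appearing there.

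The transport is derived from the wave equation \eqref{eq:waveeqvr}, which, after manipulation, gives $\partial_r W_k = -\tfrac{2Mr}{r-M}\partial_r T^{k+1}\phi_0$ in $(v,r)$ coordinates. Integrating along a constant-$v$ slice from $r=r_{\mathcal{H}}$ into $\mathcal{A}^{\mathcal{H}}\setminus\mathcal{A}^{\mathcal{H}}_{\gamma^{\mathcal{H}}_\alpha}$, converting to an integration in $u$ via $dr=-(D/2)\,du$, and integrating by parts using $\underline{L}\bigl(\tfrac{2Mr}{r-M}\bigr)=M^2/r^2$, produces
\begin{equation*}
W_k(v,r) \;=\; A(v) \;-\; \tfrac{2Mr}{r-M}T^{k+1}\phi_0(v,r) \;+\; \int_{u_{r_{\mathcal{H}}}(v)}^{u}\tfrac{M^2}{r'^{\,2}}\,T^{k+1}\phi_0(v,u')\,du',
\end{equation*}
with $A(v):=W_k(v,r_{\mathcal{H}}) + \tfrac{2Mr_{\mathcal{H}}}{r_{\mathcal{H}}-M}\,T^{k+1}\phi_0(v,r_{\mathcal{H}})$ collecting the boundary data on $\{r=r_{\mathcal{H}}\}$. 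In the strip one has the key geometric bounds $r-M\gtrsim v^{-\alpha}$ and $u-u_{r_{\mathcal{H}}}(v)\lesssim v^\alpha$.

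The leading asymptotic $A(v) = (-1)^k(k+1)!\,4MI_0\,v^{-2-k} + O(v^{-2-k-\eta})$ is extracted by dualizing the derivation of \eqref{eq:LpsiestrI}: integrating the $(u,r)$-coordinate wave equation $\partial_r(2r^2LT^k\psi_0 - 2rT^{k+1}\phi_0) = -2T^{k+1}\phi_0$ radially inward from $\mathcal{I}^+$ down to $r=r_{\mathcal{H}}$ supplies the value $T^{k+1}\phi_0|_{\mathcal{I}^+}(u) = (-1)^{k+1}(k+1)!\,2I_0\,u^{-2-k}+O(u^{-2-k-\beta})$ from Proposition \ref{prop:asympradfieldnonzeroIH}, and combining it with the conservation identity $I_0[T^j\psi_0]=0$ for $j\geq 1$, the pointwise decay of Proposition \ref{prop:pointdecay2}, and the relation $u=v-2r_*(r_{\mathcal{H}})$ on $\{r=r_{\mathcal{H}}\}$ produces the stated asymptotic for $A$, with the $r_{\mathcal{H}}$-dependence of $W_k(v,r_{\mathcal{H}})$ cancelling against that of $\tfrac{2Mr_{\mathcal{H}}}{r_{\mathcal{H}}-M}T^{k+1}\phi_0(v,r_{\mathcal{H}})$ to leave the $r_{\mathcal{H}}$-independent $4MI_0$. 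The pointwise term $\tfrac{2Mr}{r-M}T^{k+1}\phi_0(v,r)$ in the strip is then controlled using Proposition \ref{prop:pointdecay2} together with $(r-M)^{-1}\lesssim v^\alpha$, and the integral term by Cauchy--Schwarz combined with the improved $(r-M)^{-p}$-weighted energy decay of Proposition \ref{prop:hoextraendecay} (applied with $j_1=0$, $j_2=k+1$), after a Hardy inequality on $\{v=\mathrm{const}\}$ anchored at $r=r_{\mathcal{H}}$ converts $T^{k+1}\phi_0$ into its $\underline{L}$-derivative; for $\alpha$ sufficiently close to $1$ and $\epsilon$ small, the factor $v^{\alpha/2}$ arising from the length of the $u$-interval combines with the $v^{-5/2-k+\epsilon/2}$ decay of the weighted energies to make both pieces $O(v^{-2-k-\eta})$. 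The main obstacle is the analysis of $A(v)$: the delicate $r_{\mathcal{H}}$-independence of the leading coefficient $4MI_0$ reflects the Couch--Torrence duality together with the conservation of $I_0$, and must be tracked carefully through the radial integration from $\mathcal{I}^+$ to $r=r_{\mathcal{H}}$.
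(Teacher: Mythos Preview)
Your overall strategy and the transport identity $\partial_r W_k = -\tfrac{2Mr}{r-M}\partial_rT^{k+1}\phi_0$ are correct (in fact more precise than the paper's stated form, which drops the factor $r\sim M$). The gap lies in your treatment of the bulk pointwise term arising from the integration by parts. You propose to bound
\[
\frac{2Mr}{r-M}\,|T^{k+1}\phi_0|(v,r)
\]
in the strip via Proposition~\ref{prop:pointdecay2} and $(r-M)^{-1}\lesssim v^{\alpha}$. But Proposition~\ref{prop:pointdecay2} only yields $|T^{k+1}\phi_0|\lesssim v^{-5/2-k+\epsilon/2}$, so the product is at best $v^{\alpha-5/2-k+\epsilon/2}$, which for $\alpha$ close to $1$ is only $v^{-3/2-k+\cdots}$ --- a full half-power short of the required $v^{-2-k-\eta}$. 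Attempting instead to use the asymptotics of $T^{k+1}\psi_0$ from \eqref{eq:asymppsirH} does not help either: the error there carries a factor $(r-M)^{-1}$, so after multiplication by $(r-M)^{-1}$ you pick up $(r-M)^{-2}v^{-3-k-\eta}\lesssim v^{2\alpha-3-k-\eta}$, again too large.

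The paper avoids this by \emph{not} integrating by parts. It estimates directly
\[
\int_{u_{r_{\mathcal H}}(v)}^{u}(r'-M)^{-1}\,|\underline{L}T^{k+1}\phi_0|\,du'
\;\le\;\Big(\int du'\Big)^{1/2}\Big(\int (r'-M)^{-2}(\underline{L}T^{k+1}\phi_0)^2\,du'\Big)^{1/2}
\;\lesssim\; v^{\alpha/2}\cdot v^{-5/2-k+\epsilon/2},
\]
using the weighted energy decay of Proposition~\ref{prop:hoextraendecay}; the exponent $\alpha/2-5/2<-2$ holds for all $\alpha<1$. Your approach to the boundary data $A(v)$ is also more circuitous than necessary: the paper simply writes $\partial_r((r-M)T^k\psi_0)|_{r=r_{\mathcal H}}=T^k\psi_0|_{r=r_{\mathcal H}}+(r_{\mathcal H}-M)\partial_rT^k\psi_0|_{r=r_{\mathcal H}}$ and reads off the asymptotics from the already-established \eqref{eq:asymppsirH} and \eqref{eq:asymlbarpsidr} (both with $H_0=0$), bypassing any fresh integration from $\mathcal{I}^+$.
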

\begin{proof}
Note that we can rewrite \eqref{eq:waveeqvr} as follows in $(v,r)$ coordinates:
\begin{equation*}
\partial_r^2((r-M)\cdot T^k\psi_0)=-2(r-M)^{-1}\partial_rT^{k+1}\phi_0.
\end{equation*}
Using the above equation, together with the fundamental theorem of calculus in the $\underline{L}$ direction, we arrive at the following estimate:
\begin{equation*}
\begin{split}
|&\partial_r((r-M)\cdot T^k\psi_0)(v,r_{\gamma^{\mathcal{H}}_{\alpha}}(v))-\partial_r((r-M)\cdot T^k\psi_0)(v,r_{\mathcal{H}})|\\
\lesssim &\: \int^{u_{\gamma^{\mathcal{H}}_{\alpha}(v)}}_{u(r_{\mathcal{H}})} (r-M)^{-1} |\underline{L} T^{k+1}\phi_0|(u',v)\,du'\\
\lesssim &\: \sqrt{ \int^{u_{\gamma^{\mathcal{H}}_{\alpha}(v)}}_{u(r_{\mathcal{H}})}\,du'}\cdot \sqrt{\int^{u_{\gamma^{\mathcal{H}}_{\alpha}(v)}}_{u(r_{\mathcal{H}})}  (r-M)^{-2}(\underline{L} T^{k+1}\phi_0)^2(u',v)\,du'}\\
\lesssim &\: \sqrt{E^{\epsilon}_{0, \mathcal{I}; k+1}[\psi]} \cdot v^{-\frac{5}{2}-k+\frac{\epsilon}{2}+\frac{\alpha}{2}},
\end{split}
\end{equation*}
where we applied Proposition \ref{prop:extraendecay} together with the estimate $(r-M)^{-1}\lesssim v^{\alpha}$ to obtain the last inequality.

We moreover have that
\begin{equation*}
\partial_r((r-M)\cdot T^k\psi_0)(v,r_{\mathcal{H}})=T^k\psi_0(v,r_{\mathcal{H}})+ (r_{\mathcal{H}}-M)\partial_r T^k\psi_0(v,r_{\mathcal{H}}).
\end{equation*}
By \eqref{eq:asymlbarpsidr} it follows that there exists an $\eta>0$ such that
\begin{equation*}
(r_{\mathcal{H}}-M)|\partial_rT^k\psi_0|(v,r_{\mathcal{H}})\lesssim \left(\sqrt{E^{\epsilon}_{0;k+1}[\psi]} +P_{H_0,1;k}[\psi]\right)v^{-2-k-\eta}.
\end{equation*}
Therefore, we can use \eqref{eq:asymppsirH} at $r=r_{\mathcal{H}}$ to estimate
\begin{equation*}
\begin{split}
|\partial_r&((r-M)\cdot T^k\psi_0)(v,r_{\mathcal{H}})-(-1)^{-k}(k+1)!4I_0v^{-k-2}|\\
\lesssim&\:  \left(\sqrt{E^{\epsilon}_{0;k+1}[\psi] }+P_{H_0,1;k}[\psi]+P_{I_0,\beta;k}+I_0[\psi]\right)v^{-2-k-\eta}.
\end{split}
\end{equation*}
By combining the estimates above, we arrive at \eqref{eq:asympLpsigammaTypeD}.
\end{proof}
\begin{proposition}
\label{prop:improvedasymphigammaB}
Let $k\in \N_0$ and assume $H_0[\psi]=0$. Let $\alpha>0$ such that $1-\alpha$ is arbitrarily small. Then, there exists an $\eta>0$ and $\epsilon>0$ suitably small and a constant $C=C(M,\Sigma,r_{\mathcal{H}},r_{\mathcal{I}},\alpha,\epsilon,\eta,\beta,k)>0$, such that
\begin{align}
\label{eq:asymppsigammaHzero}
\Bigg|T^k\phi_0&(u_{\gamma^{\mathcal{H}}_{\alpha}}(v),v)-(-1)^k(k+1)!4MI_0[\psi]\cdot v^{-2-k}\Bigg|\\ \nonumber
\lesssim&\: \Bigg[\sqrt{E^{\epsilon}_{0, \mathcal{H};k+1}[\psi]}+P_{H_0,1;k+1}[\psi]+P_{I_0,\beta;k}[\psi]+H_0^{(1)}[\psi]+I_0[\psi]\Bigg]\cdot v^{-2-k-\eta}.
\end{align}
\end{proposition}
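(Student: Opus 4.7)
The strategy mirrors that of Proposition \ref{prop:improvedasymphigamma}, but with the roles of $u$ and $v$ (and of $\mathcal{H}^+$ and $\mathcal{I}^+$) interchanged, reflecting the Couch--Torrence-type duality between Type \textbf{A} and Type \textbf{D} data. I would first derive an algebraic identity decomposing $T^k\phi_0$ on $\gamma^{\mathcal{H}}_{\alpha}$ into a dominant piece controlled by Lemma \ref{prop:improvedasympLpsiTypeD} and a remainder involving $\underline{L}T^k\phi_0$, to which Proposition \ref{prop:improvedasymderphiD} supplies a sharp estimate.

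Using $\phi_0 = r\psi_0$, the relation $\underline{L}r = -D/2$, $\partial_r = -2D^{-1}\underline{L}$ in $(v,r)$-coordinates, and the elementary identity $rD/(r-M) = (r-M)/r$, a direct computation produces the splitting
\begin{equation*}
T^k\phi_0 \;=\; \frac{r^2}{M}\,\partial_r\!\big((r-M)\,T^k\psi_0\big) \;+\; \frac{2r^3}{M(r-M)}\,\underline{L}T^k\phi_0.
\end{equation*}
Restricting to $\gamma^{\mathcal{H}}_{\alpha}$, Lemma \ref{lm:relationruv} (with $u$ and $v$ interchanged and $r$ replaced by $\widetilde{r}=M+M^2(r-M)^{-1}$) yields $r-M \sim v^{-\alpha}$ and $u \sim v$, so that $r^2/M = M + O(v^{-\alpha})$ while $\tfrac{2r^3}{M(r-M)} \sim M^2\,v^{\alpha}$.

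For the first summand, Lemma \ref{prop:improvedasympLpsiTypeD} (applicable since $\gamma^{\mathcal{H}}_{\alpha}$ lies in the closure of $\mathcal{A}^{\mathcal{H}}\setminus \mathcal{A}^{\mathcal{H}}_{\gamma^{\mathcal{H}}_{\alpha}}$) provides
\begin{equation*}
M\,\partial_r\!\big((r-M)\,T^k\psi_0\big) = (-1)^k(k+1)!\,4MI_0[\psi]\,v^{-2-k} + \mathcal{E}(u,v),
\end{equation*}
with $|\mathcal{E}| \lesssim \big[\sqrt{E^{\epsilon}_{0,\mathcal{H};k+1}[\psi]}+P_{H_0,1;k+1}[\psi]+P_{I_0,\beta;k}[\psi]+I_0[\psi]\big]\,v^{-2-k-\eta}$. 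Multiplication by $r^2/M = M + O(v^{-\alpha})$ reproduces the claimed leading term $(-1)^k(k+1)!\,4MI_0[\psi]\,v^{-2-k}$ up to a correction $I_0[\psi]\cdot O(v^{-2-k-\alpha})$, absorbed into the stated error provided $\eta < \alpha$.

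The second summand is the delicate one and constitutes the main obstacle: the prefactor $\tfrac{2r^3}{M(r-M)}$ blows up like $v^{\alpha}$, so $\underline{L}T^k\phi_0$ must decay strictly faster than $v^{-2-k}$ if the remainder is to be subleading. Proposition \ref{prop:improvedasymderphiD} supplies exactly this gain, yielding $|\underline{L}T^k\phi_0|(u,v) \lesssim u^{-3-k}$ (with leading coefficient proportional to $H_0^{(1)}[\psi]$ and error controlled by $\sqrt{E^{\epsilon}_{0,\mathcal{H};k+1}[\psi]}+P_{H_0,1;k+1}[\psi]+H_0^{(1)}[\psi]$). Combined with $u \sim v$ on $\gamma^{\mathcal{H}}_{\alpha}$, one obtains
\begin{equation*}
\left|\frac{2r^3}{M(r-M)}\,\underline{L}T^k\phi_0\right| \lesssim v^{\alpha}\cdot v^{-3-k} = v^{-2-k-(1-\alpha)},
\end{equation*}
which fits into the required error as soon as $\eta \leq 1-\alpha$. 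Taken with the earlier constraint $\eta < \alpha$, this forces $\alpha$ close to $1$ and $\eta>0$ suitably small, consistent with the hypotheses. Summing the two contributions and tracking the norms produced along the way yields precisely \eqref{eq:asymppsigammaHzero}.
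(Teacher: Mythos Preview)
Your proof is correct and follows essentially the same approach as the paper: the paper derives the equivalent splitting $M^2r^{-1}T^k\psi_0 = M\partial_r((r-M)T^k\psi_0) + 2Mr(r-M)^{-1}\underline{L}T^k\phi_0$ (yours is this identity multiplied through by $r^2/M^2$), then applies Proposition~\ref{prop:improvedasymderphiD} to bound the $\underline{L}T^k\phi_0$ term by $v^{-3-k+\alpha}$ and Lemma~\ref{prop:improvedasympLpsiTypeD} to extract the leading $4MI_0[\psi]\,v^{-2-k}$ from the first term. One cosmetic point: to pass from $M\partial_r((r-M)T^k\psi_0)$ to your first summand $\tfrac{r^2}{M}\partial_r((r-M)T^k\psi_0)$ the multiplicative factor is $r^2/M^2 = 1 + O(v^{-\alpha})$ rather than $r^2/M$, but this does not affect the conclusion.
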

\begin{proof}
We can split in $(v,r)$ coordinates
\begin{equation*}
\begin{split}
M^2r^{-1}T^k\psi_0(v,r_{\gamma^{\mathcal{H}}_{\alpha}}(v))=&\:M\partial_r(r^{-1}(r-M))\cdot rT^k\psi_0(v,r_{\gamma^{\mathcal{H}}_{\alpha}}(v))\\
=&\: M\partial_r((r-M) \cdot T^k\psi_0)(v,r_{\gamma^{\mathcal{H}}_{\alpha}}(v))-Mr^{-1}(r-M)\partial_rT^k\phi_0(v,r_{\gamma^{\mathcal{H}}_{\alpha}}(v))\\
=&\:   M\partial_r((r-M) \cdot T^k\psi_0)(v,r_{\gamma^{\mathcal{H}}_{\alpha}}(v))+2Mr(r-M)^{-1}\underline{L}T^k\phi_0(v,r_{\gamma^{\mathcal{H}}_{\alpha}}(v)).
\end{split}
\end{equation*}

We apply Proposition \ref{prop:improvedasymderphiD} together with the estimate $(r-M)^{-1}\lesssim v^{\alpha}$ in $\mathcal{B}^\mathcal{H}_{\alpha}$ to estimate
\begin{equation*}
\begin{split}
2Mr(r-M)^{-1}|\underline{L}T^k\phi_0|(v,r_{\gamma^{\mathcal{H}}_{\alpha}(v)})\lesssim&\: \Bigg[\sqrt{E^{\epsilon}_{0, \mathcal{H}; k+1}}+P_{H_0,1;k+1}[\psi^{(1)}]+H_0^{(1)}[\psi]\Bigg]\cdot v^{-3-k+\alpha}.
\end{split}
\end{equation*}
The estimate \eqref{eq:asymppsigammaHzero} then follows by applying \eqref{eq:asympLpsigammaTypeA}.
\end{proof}

\begin{proposition}
\label{prop:mainasymptypeD}
Let $k\in \N_0$ and assume that $H_0[\psi]=0$. Let  $\alpha>0$ such that $1-\alpha$ is arbitrarily small. Then, there exists an $\eta>0$ and $\epsilon>0$ suitably small and a constant $C=C(M,\Sigma,r_{\mathcal{H}},r_{\mathcal{I}},\alpha,\epsilon,\eta, \beta, k)>0$, such that
\begin{equation}
\label{eq:mainasymptypeD}
\begin{split}
\Bigg|T^k\psi_0(u,v)&-4\left[ \frac{1}{\sqrt{D}}H_0^{(1)}[\psi]T^{k+1}\left(\frac{1}{u\cdot v}\right)+I_0[\psi]T^{k}\left(\frac{1}{v(u+2M-2M^2(r-M)^{-1})}\right)\right]\Bigg|\\
\leq&\: C\Bigg[\sqrt{E^{\epsilon}_{0, \mathcal{H}; k+1}[\psi]}+P_{I_0,\beta;k}[\psi]+P_{H_0,1;k+1}[\psi]+I_0[\psi]+H_0^{(1)}[\psi] \Bigg]\\
&\cdot \left( v^{-1}u^{-1-k-\eta}+D^{-\frac{1}{2}}u^{-1}v^{-2-k-\eta}\right).
\end{split}
\end{equation}
\end{proposition}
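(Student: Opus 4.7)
The proof follows the structure of Proposition \ref{prop:mainasymptypeA} for Type \textbf{A} data, exploiting a Couch--Torrence-type symmetry that exchanges the roles of $\mathcal{H}^+$ and $\mathcal{I}^+$ (and of $I_0,I_0^{(1)}$ with $H_0,H_0^{(1)}$). All preparatory ingredients are already in place: Propositions \ref{prop:improvedasymderphiD}, \ref{prop:improvedasymphiD}, \ref{prop:improvedasympLpsiTypeD}, and \ref{prop:improvedasymphigammaB} are the Type \textbf{D} analogues of Propositions \ref{prop:improvedasymderphi}--\ref{prop:improvedasymphigamma}. The plan is to partition $\mathcal{R}$ into three regions and combine the asymptotics region by region.

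In the near-horizon region $\mathcal{A}^{\mathcal{H}}_{\gamma^{\mathcal{H}}_\alpha}$ I would combine Proposition \ref{prop:improvedasymphigammaB} (giving the value of $T^k\phi_0$ on $\gamma^{\mathcal{H}}_\alpha$) with Proposition \ref{prop:improvedasymphiD} (propagation from $\gamma^{\mathcal{H}}_\alpha$ toward $\mathcal{H}^+$) to obtain
\[
T^k\phi_0(u,v) = (-1)^{k+1}(k+1)!\cdot 2H_0^{(1)}[\psi]\bigl[v^{-2-k} - u^{-2-k}\bigr] + (-1)^{k}(k+1)!\cdot 4MI_0[\psi]\, v^{-2-k} + \text{error},
\]
and then rewrite the first term via identity \eqref{eq:identityvminusu} (with $k \to k+1$) and Lemma \ref{lm:relationruv} to convert $u-v$ into $M^2(r-M)^{-1}$, matching the form of \eqref{eq:mainasymptypeD} in this region. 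In the intermediate region $\{r_{\gamma^{\mathcal{H}}_\alpha}(v) \leq r \leq r_{\mathcal{I}}\}$ the asymptotic identity \eqref{eq:asympLpsigammaTypeD} for $M\partial_r((r-M)T^k\psi_0)$ can be integrated in the $\partial_r$-direction starting from $\gamma^{\mathcal{H}}_\alpha$, where the value of $(r-M)T^k\psi_0$ is supplied by Step 1; since the leading-order integrand is independent of $r$, the integration produces the $r$-dependent $I_0[\psi]$-coefficient appearing in \eqref{eq:mainasymptypeD}.

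For the near-infinity region $\mathcal{A}^{\mathcal{I}}$ the asymptotics cannot be derived via the time integral, since $\psi^{(1)}$ is singular at $\mathcal{I}^+$ for Type \textbf{D} data by Proposition \ref{eq:singextTypeDtimeint}. Instead, Proposition \ref{prop:asympradfieldnonzeroIH} applies directly in $\mathcal{A}^{\mathcal{I}}_{\gamma^{\mathcal{I}}_\alpha}$ (only the $I_0[\psi]\neq 0$ branch is used, as the $H_0$-contribution vanishes), and this yields the $I_0[\psi]$-leading term in the form $4I_0[\psi]\,T^k(1/(v(u+2M-2M^2(r-M)^{-1})))$ after applying Lemma \ref{lm:relationruv} to the expression for $r$ in terms of $v-u$. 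To extend the asymptotics to $\{r_{\mathcal{I}} \leq r \leq r_{\gamma^{\mathcal{I}}_\alpha}\}$ I would integrate the sharp radial derivative estimate for $\partial_\rho\psi$ from $\gamma^{\mathcal{I}}_\alpha$ inward; for Type \textbf{D} data, $\partial_\rho\psi$ decays strictly faster than $\psi$ away from the horizon (cf.~Table \ref{summarytablerho}), which makes this step significantly simpler than the corresponding Step 2 for Type \textbf{A}, since $\alpha<1$ suffices and no singular time inversion is required.

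The main obstacle is the matching of leading orders across the three regions into the single unified formula \eqref{eq:mainasymptypeD}. In particular the $H_0^{(1)}$-contribution obtained in Step 1 in the form $2H_0^{(1)}(u-v)\,T^{k+1}(1/(uv))$ must be identified with the global form $\tfrac{4}{\sqrt{D}}H_0^{(1)}\,T^{k+1}(1/(uv))$ using $\tfrac{1}{\sqrt{D}} = \tfrac{r}{r-M}$ and the asymptotic relation $\tfrac{u-v}{2} = M^2(r-M)^{-1} + O(\log(u-v))$ from Lemma \ref{lm:relationruv}, the logarithmic error being absorbed into the error term thanks to the strict inequality $\alpha<1$. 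Analogously, the $I_0[\psi]$-contribution of Step 2 must be glued with that of Step 3 through the expression $1/(v(u+2M-2M^2(r-M)^{-1}))$, whose denominator interpolates smoothly between $v\cdot u$ near $\mathcal{I}^+$ and a quantity that vanishes near $\mathcal{H}^+$, mirroring the role played by $v+4M-2r$ in the Type \textbf{A} statement.
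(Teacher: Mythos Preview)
Your proposal is correct and follows essentially the same route as the paper: the paper's own proof is a two-line pointer to ``the previous propositions in this section, together with the asymptotics derived in Section~\ref{sec:latetimeasympsi}, analogously to the proof of Proposition~\ref{prop:mainasymptypeA} (with the roles of $u$ and $v$ reversed),'' which is exactly the duality-based partition you describe.

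Two minor remarks on your region bookkeeping. First, the estimate \eqref{eq:asympLpsigammaTypeD} from Lemma~\ref{prop:improvedasympLpsiTypeD} is stated only in $\mathcal{A}^{\mathcal{H}}\setminus\mathcal{A}^{\mathcal{H}}_{\gamma^{\mathcal{H}}_\alpha}$, i.e.\ for $r\leq r_{\mathcal{H}}$, so your ``intermediate region'' should stop at $r_{\mathcal{H}}$, not $r_{\mathcal{I}}$; the segment $\{r_{\mathcal{H}}\leq r\leq r_{\mathcal{I}}\}$ is handled by the Section~\ref{sec:latetimeasympsi} estimates (the analogue of \eqref{eq:asymppsirH} with $H_0=0$). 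Second, your treatment of $\{r_{\mathcal{I}}\leq r\leq r_{\gamma^{\mathcal{I}}_\alpha}\}$ via ``$\partial_\rho\psi$ decays strictly faster than $\psi$'' is valid---this is precisely \eqref{eq:partasymdrhopsi2} with $H_0=0$---but the paper streamlines this by noting that \eqref{eq:asymppsirI} (again with $H_0=0$) already covers all of $\mathcal{A}^{\mathcal{I}}$ at once, so no separate integration from $\gamma^{\mathcal{I}}_\alpha$ is needed. Your matching discussion (converting $u-v$ to $M^2(r-M)^{-1}$ via Lemma~\ref{lm:relationruv}, and the interpolating role of $u+2M-2M^2(r-M)^{-1}$) is exactly right and mirrors the paper's closing observation that this expression is $\geq v$ in $\{r\leq 2M\}$ and $\geq u$ in $\{r\geq 2M\}$.
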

\begin{proof}
We apply the previous propositions in this section, together with the asymptotics in derived in Section \ref{sec:latetimeasympsi} to arrive at \eqref{eq:mainasymptypeD}, analogously to what is done in the proof of Proposition \ref{prop:mainasymptypeA} (with the roles of $u$ and $v$ reversed). 

We moreover used that in $\{r\leq 2M\}$ we can estimate $u+2M-2M^2(r-M)^{-1}\geq v$ and in $\{r\geq 2M\}$, $u+2M-2M^2(r-M)^{-1}\geq u$.
\end{proof}

For completeness, we will also derive the precise late-time asymptotics for $\partial_r\psi$ for Type \textbf{B} data, \emph{and show that the leading order term decays one power faster compared to the Type \textbf{A} and \textbf{C} cases}. We will restrict here to a bounded region $\{r\leq r_{\mathcal{I}}\}$ for the sake of convenience, but we note that the estimates providing late-time asymptotics can in principle be extended to the full region $\mathcal{R}$.
\begin{proposition}
\label{prop:asympdrpsiTypeD}
Let $k\in \N_0$ and assume that $H_0[\psi]=0$. Let  $\alpha>0$ such that $1-\alpha$ is arbitrarily small. Then, there exists an $\eta>0$ and $\epsilon>0$ suitably small and a constant $C=C(M,\Sigma,r_{\mathcal{H}},r_{\mathcal{I}},\alpha,\epsilon,\eta,\beta,k)>0$, such that
\begin{equation}
\label{eq:asympdrpsihzeroB}
\begin{split}
\Bigg|Dr^2\partial_rT^k\psi_0(v,r)&-8MH_0^{(1)}[\psi]T^k(u^{-3})-8I_0[\psi](r^2-M^2)T^k(v^{-3})\Bigg|\\
\leq&\: C\Bigg[\sqrt{E^{\epsilon}_{0, \mathcal{H}; k+2}[\psi]}+P_{I_0,\beta;k+1}[\psi]+P_{H_0,1;k+2}[\psi]+I_0[\psi]+H_0^{(1)}[\psi] \Bigg]\cdot v^{-3-\eta-k},
\end{split}
\end{equation}
in $(v,r)$ coordinates, for all $r\leq r_{\mathcal{I}}$.
\end{proposition}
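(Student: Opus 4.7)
The strategy is to exploit the wave equation in $(v,r)$ coordinates, which for $T^k\psi_0$ reads
\[
\partial_r\bigl(Dr^2\partial_r T^k\psi_0 + 2rT^{k+1}\phi_0\bigr) = 2T^{k+1}\phi_0.
\]
Integrating along a constant-$v$ slice from $r'=M$ to $r'=r\leq r_{\mathcal{I}}$ and using that $Dr^2\partial_r T^k\psi_0$ vanishes at $\mathcal{H}^+$ yields the fundamental identity
\[
Dr^2\partial_r T^k\psi_0(v,r) \;=\; 2MT^{k+1}\phi_0|_{\mathcal{H}^+}(v) \;-\; 2rT^{k+1}\phi_0(v,r) \;+\; 2\!\int_M^r T^{k+1}\phi_0(v,r')\,dr'.
\]
The plan is to substitute into the right-hand side the sharp asymptotics for $T^{k+1}\phi_0$ supplied by Corollary~\ref{cor:asympsizeroIH} (for the Type~\textbf{B} piece) and Proposition~\ref{prop:mainasymptypeD} (for the Type~\textbf{D} piece), and verify algebraically that the three contributions combine to produce precisely $8MH_0^{(1)}T^k(u^{-3})+8I_0(r^2-M^2)T^k(v^{-3})$.

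Concretely, the horizon term $2MT^{k+1}\phi_0|_{\mathcal{H}^+}(v)$ is evaluated using the sharp leading-order asymptotic of $r\psi_0|_{\mathcal{H}^+}(v)$ (obtained as a limit from Propositions~\ref{prop:mainasymptypeD} and \ref{prop:improvedasymderphiD}), producing a term of the form $(-1)^{k+1}(k+2)!(8M^2I_0-4MH_0^{(1)})v^{-3-k}$. The bulk term $-2r^2T^{k+1}\psi_0(v,r)$ is estimated by directly applying Proposition~\ref{prop:mainasymptypeD}, producing $(-1)^{k}(k+2)!\cdot 8I_0r^2\cdot v^{-3-k}$ to leading order, since the $H_0^{(1)}$-piece of the asymptotic for $T^{k+1}\psi_0$ is subdominant at fixed $r>M$. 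For the integral term, the $I_0$-piece of the asymptotic expansion integrated in $r'$ from $M$ to $r$ contributes $(-1)^{k+1}(k+2)!\cdot 4I_0(r^2-M^2)v^{-3-k}$. Summing these three contributions, together with the elementary identities $T^k(u^{-3})=\tfrac12(-1)^k(k+2)!u^{-3-k}$, $T^k(v^{-3})=\tfrac12(-1)^k(k+2)!v^{-3-k}$ and the uniform equivalence $u\sim v$ for $r\leq r_{\mathcal{I}}$, reproduces exactly the right-hand side of \eqref{eq:asympdrpsihzeroB}.

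The main obstacle is controlling the $H_0^{(1)}$-contribution to the integral $2\int_M^r T^{k+1}\phi_0(v,r')\,dr'$ uniformly near $r'=M$, where the expansion of Proposition~\ref{prop:mainasymptypeD} contains the formally singular factor $\frac{4M}{r'-M}H_0^{(1)}T^{k+2}(1/(u'v))$. The plan is to perform the change of variables $u'=v-2r_*(r')$, under which $\frac{dr'}{du'}=-\tfrac{D(r'(u'))}{2}$, so that the singular weight $(r'-M)^{-1}$ is absorbed into a polynomial weight in $u'$ via the near-horizon equivalence $u'\sim 2M^2(r'-M)^{-1}$ from Lemma~\ref{lm:relationruv}. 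The resulting $u'$-integral is then uniformly convergent and, using the explicit decomposition of $T^{k+2}(1/(u'v))$ via $Tu'=Tv=1$, the dominant contribution comes from the $\partial_v^{k+2}$ term, which upon integration in $u'$ produces a subdominant $O(v^{-4-k}\log v)$ contribution, safely absorbed in the $v^{-3-k-\eta}$ error.

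The error terms are closed off using the decay estimates of Proposition~\ref{prop:hoextraendecay} together with Lemma~\ref{prop:improvedasympLpsiTypeD} and Cauchy--Schwarz exactly as in the proof of Proposition~\ref{prop:partasymdrhopsi}: the integral over a constant-$v$ slice of $(r-M)^{-1}\underline{L}T^{k+1}\phi_0$ is bounded by the square root of the weighted conformal energy on ${N}_{\tau(v,r_{\mathcal{I}})}^{\mathcal{H}}$, which decays at rate $v^{-4-k+\epsilon}$ and yields the required $v^{-3-k-\eta}$ remainder after estimating by Cauchy--Schwarz against the length of the slice. This, combined with the smallness of $\epsilon$ and a standard choice of $\eta$ in terms of $\epsilon$, completes the argument.
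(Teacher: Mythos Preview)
Your argument has a genuine gap in the treatment of the $H_0^{(1)}$-contribution to the bulk term $-2rT^{k+1}\phi_0(v,r)$, and the closely related claim that ``$u\sim v$ for $r\leq r_{\mathcal{I}}$''. The latter is simply false: along a constant-$v$ slice, $u=v-2r_*(r)\to\infty$ as $r\downarrow M$, and the proposition is stated for \emph{all} $r\leq r_{\mathcal{I}}$. Consequently, your sum of the three leading contributions gives $8MH_0^{(1)}T^k(v^{-3})+8I_0(r^2-M^2)T^k(v^{-3})$, not the target $8MH_0^{(1)}T^k(u^{-3})+8I_0(r^2-M^2)T^k(v^{-3})$, and the discrepancy $8MH_0^{(1)}\bigl(T^k(v^{-3})-T^k(u^{-3})\bigr)$ is of size $v^{-3-k}$ near the horizon, which is \emph{not} dominated by the error $v^{-3-k-\eta}$.

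The missing ingredient is exactly the $H_0^{(1)}$-piece of the bulk term that you dismissed as ``subdominant at fixed $r>M$''. In the near-horizon region $r\leq r_{\gamma^{\mathcal{H}}_{\alpha}}(v)$ one must instead apply Proposition~\ref{prop:improvedasymphiD} (together with Proposition~\ref{prop:improvedasymphigammaB}) to $T^{k+1}\phi_0$, which gives a contribution of the form $4MH_0^{(1)}\bigl[T^k(v^{-3})-T^k(u^{-3})\bigr]$ to $rT^{k+1}\phi_0$. When this is subtracted from the horizon value (where $u=\infty$, so the $T^k(u^{-3})$ part vanishes), the $T^k(v^{-3})$ pieces cancel and precisely the desired $8MH_0^{(1)}T^k(u^{-3})$ emerges from the combination $2M^2T^{k+1}\psi_0(v,M)-2r^2T^{k+1}\psi_0(v,r)$. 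In the complementary region $r\geq r_{\gamma^{\mathcal{H}}_{\alpha}}(v)$ one reaches the same conclusion by integrating the estimate of Lemma~\ref{prop:improvedasympLpsiTypeD} rather than appealing directly to Proposition~\ref{prop:mainasymptypeD}. Your handling of the integral term is essentially correct (the $H_0^{(1)}$-contribution there is indeed subdominant after splitting at $r'=M+v^{-\alpha}$), but this does not rescue the argument for the bulk term.
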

\begin{proof}
We apply \eqref{eq:waveeqvr} to obtain in $(v,r)$ coordinates
\begin{equation}
\label{eq:maineqestdrpsitypeD}
Dr^2\partial_rT^k\psi_0(v,r)=2M^2T^{k+1}\psi_0(v,M)-2r^2 T^{k+1}\psi_0(v,r)+\int_{M}^r 2r T^{k+1}\psi_0(v,r')\,dr'.
\end{equation}
We will first estimate $2M^2T^{k+1}\psi_0(v,M)-2r^2 T^{k+1}\psi_0(v,r)$.  If $r\leq r_{\gamma^{\mathcal{H}}_{\alpha}}(v)$, we apply Proposition \ref{prop:improvedasymphiD} together with Proposition \ref{prop:improvedasymphigammaB}, with $k$ replaced by $k+1$. If $r\geq r_{\gamma^{\mathcal{H}}_{\alpha}}(v)$, we apply Proposition \ref{prop:improvedasymphigammaB} and we integrate the estimate in Lemma \ref{prop:improvedasympLpsiTypeD}. We then arrive at the following expressions:
\begin{align*}
\Big|&M T^{k+1}\psi_0(v,M)+ 8MI_0[\psi] T^k(v^{-3})- 4H_0^{(1)}[\psi]T^k(v^{-3})\Big| \\
\lesssim&\: \Bigg[\sqrt{E^{\epsilon}_{0, \mathcal{H}; k+2}[\psi]}+P_{I_0,\beta;k+1}[\psi]+P_{H_0,1;k+2}[\psi]+I_0[\psi]+H_0^{(1)}[\psi] \Bigg] v^{-3-k-\eta},\\
\Big|&r^2 T^{k+1}\psi_0(v,r)+ 8r^2I_0[\psi] T^k(v^{-3})- 4H_0^{(1)}[\psi]T^k(v^{-3}-u^{-3})\Big| \\
\lesssim&\: \Bigg[\sqrt{E^{\epsilon}_{0, \mathcal{H}; k+2}[\psi]}+P_{I_0,\beta;k+1}[\psi]+P_{H_0,1;k+2}[\psi]+I_0[\psi]+H_0^{(1)}[\psi] \Bigg] v^{-3-k-\eta}.
\end{align*}

Hence, we obtain
\begin{equation*}
\begin{split}
\Bigg| 2M^2&T^{k+1}\psi_0(v,M)-2r^2T^{k+1}\psi_0(v,r)-8MH_0^{(1)}[\psi]T^k(u^{-3})-16(r^2-M^2)I_0[\psi]T^k(v^{-3})\Bigg|\\
\leq &\: C \Bigg[\sqrt{E^{\epsilon}_{0, \mathcal{H}; k+2}[\psi]}+P_{I_0,\beta;k+1}[\psi]+P_{H_0,1;k+2}[\psi]+I_0[\psi]+H_0^{(1)}[\psi] \Bigg] \cdot v^{-3-k-\eta}.
\end{split}
\end{equation*}

In order to estimate the integral on the right-hand side of \eqref{eq:maineqestdrpsitypeD}, we apply  Proposition \ref{prop:improvedasymphiD} together with Proposition \ref{prop:improvedasymphigammaB} and \eqref{eq:asymppsirH}:
\begin{equation*}
\Bigg| \int_{M}^r 2r T^{k+1}\psi_0(v,r')+2r'\cdot  8I_0[\psi] T^k(v^{-3})\,dr'\Bigg|\leq C \left[\int_M^{M+v^{-\alpha}}\textnormal{Err}_1\,dr' +\int^r_{M+v^{-\alpha}}\textnormal{Err}_2\,dr'\right],
\end{equation*}
where we take $r>M+v^{-\alpha}$ without loss of generality, and where
\begin{align*}
\textnormal{Err}_1:=&\: \left[\sqrt{E^{\epsilon}_{0, \mathcal{H}; k+2}[\psi]}+P_{I_0,\beta;k+1}[\psi]+P_{H_0,1;k+2}[\psi]+I_0[\psi]+H_0^{(1)}[\psi] \right] v^{-3-k-\eta}, \\
\textnormal{Err}_2:=&\: \left[\sqrt{E^{\epsilon}_{0;k+2}[\psi]}+I_0[\psi]+P_{H_0,1;k+1}[\psi]+P_{I_0,\beta;k+1}[\psi]\right] (r-M)^{-1}v^{-3-k-2\eta}.
\end{align*}

It follows immediately that (note that the logarithmic term from integrating Err$_2$ can be easily absorbed by the $v$ power)
\begin{equation*}
\begin{split}
\int_M^{M+v^{-\alpha}}&\textnormal{Err}_1\,dr' +\int^r_{M+v^{-\alpha}}\textnormal{Err}_2\,dr'\\
\leq &\: C  \Bigg[\sqrt{E^{\epsilon}_{0, \mathcal{H}; k+2}[\psi]}+P_{I_0,\beta;k+1}[\psi]+P_{H_0,1;k+2}[\psi]+I_0[\psi]+H_0^{(1)}[\psi] \Bigg]\cdot v^{-3-k-\eta}.
\end{split}
\end{equation*}
Finally, we have that
\begin{equation*}
 8I_0[\psi]  T^k(v^{-3}) \int_{M}^r  2r'\,dr'= (r^2-M^2)8I_0[\psi]  T^k(v^{-3}).
\end{equation*}
When we combine the estimates above, we obtain \eqref{eq:maineqestdrpsitypeD}.
\end{proof}

\section{Higher-order asymptotics and logarithmic corrections}
\label{sec:hoasymp}
In this section, we derive refined asymptotics along $\mathcal{H}^+$ for data with $H_0[\psi]\neq 0$ and along $\mathcal{I}^+$ for data with $I_0[\psi]\neq 0$. The derivation proceeds in a very similar manner to the arguments in \cite{logasymptotics}.

We first introduce the following additional weighted $L^{\infty}$ norms: we define with respect to $(u,r)$ coordinates,
\begin{align*}
P_{\mathcal{I}}[\psi]:=&\:\left| \left| Dr^3\left(\partial_r\phi_0-\frac{I_0[\psi]}{r^2}\right)\right|\right|_{L^{\infty}(\Sigma_0)},\\
P_{\mathcal{I},T}[\psi]:=&\:\left| \left| Dr^4\partial_r\left(D\partial_r\phi_0-D\frac{I_0[\psi]}{r^2}\right)\right|\right|_{L^{\infty}(\Sigma_0)}.
\end{align*}
And we define with respect to $(v,r)$ coordinates:
\begin{align*}
P_{\mathcal{H}}[\psi]:=&\:\left| \left| D^{-\frac{1}{2}}\left(\partial_r\phi_0+M^2H_0[\psi]\right)\right|\right|_{L^{\infty}(\Sigma_0)},\\
P_{\mathcal{H},T}[\psi]:=&\:\left| \left| \partial_r^2\phi_0\right|\right|_{L^{\infty}(\Sigma_0)}.
\end{align*}

\begin{proposition}
\label{prop:asymdvphi}
For all $\epsilon>0$, there exists a constant $C=C(M,\Sigma,r_{\mathcal{I}},\epsilon)>0$ such that for all $(u,v)$ in $\mathcal{A}^{\mathcal{I}}$ we can estimate:
\begin{itemize}
\item[ \rm(i)]
\begin{equation}
\label{eq:2ndoasympdvphiinf}
\begin{split}
&\Bigg|\partial_v(r\psi)(u,v)-2I_0[\psi]v^{-2}-16M I_0[\psi]v^{-3}\log v+8MI_0[\psi]uv^{-3}(v-u)^{-1}\\
&+8MI_0[\psi]v^{-3} \log \left(\frac{vu}{v-u}\right)\Bigg|\\
\leq&\: C(I_0[\psi]+H_0[\psi]+\sqrt{E_{0; 1}^{\epsilon}[\psi]}+P_{\mathcal{I}}[\psi]+P_{\mathcal{H}}[\psi])\textnormal{Err}_{\mathcal{I}}(u,v),
\end{split}
\end{equation}
where
\begin{equation*}
\textnormal{Err}_{\mathcal{I}}(u,v):=v^{-3}+v^{-2-\epsilon}\cdot (v-u)^{-1}+v^{-2}\cdot (v-u)^{-2+\eta},
\end{equation*}
with $\eta>0$ arbitrarily small.
\item[ \rm(ii)]
For all $\epsilon>0$, there exists a constant $C=C(M,\Sigma,r_{\mathcal{H}},\epsilon)>0$ such that for all $(u,v)$ in $\mathcal{A}^{\mathcal{H}}$ we can estimate:
\begin{equation}
\label{eq:2ndoasympdvphihor}
\begin{split}
&\Bigg|\partial_u(r\psi)(u,v)-2H_0[\psi]u^{-2}-16M H_0[\psi]u^{-3}\log u+8MH_0[\psi]vu^{-3}(u-v)^{-1}\\
&+8MH_0[\psi]u^{-3} \log \left(\frac{vu}{u-v}\right)\Bigg|\\
\leq&\: C(I_0[\psi]+H_0[\psi]+\sqrt{E^{\epsilon}_{0 ;1}[\psi]}+P_{\mathcal{H}}[\psi]+P_{\mathcal{I}}[\psi])\textnormal{Err}_{\mathcal{H}}(u,v),
\end{split}
\end{equation}
where
\begin{equation*}
\textnormal{Err}_{\mathcal{H}}(u,v):=u^{-3}+u^{-2-\epsilon}\cdot (u-v)^{-1}+u^{-2}\cdot (u-v)^{-2+\eta},
\end{equation*}
with $\eta>0$ arbitrarily small.
\end{itemize}
\end{proposition}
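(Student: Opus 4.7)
The strategy is to refine, by one order, the first-order asymptotics of $\phi_0$ obtained in Propositions 8.5 and 9.1, tracking all contributions at the $v^{-3}\log v$ and $v^{-2}(v-u)^{-1}$ levels. It suffices to derive \eqref{eq:2ndoasympdvphiinf}, since \eqref{eq:2ndoasympdvphihor} follows by applying the same argument to the Couch--Torrence dual $\widetilde{\psi} = (M/(r-M))\psi(t,-r^*,\omega)$ introduced in Section 4.1: this dual solves \eqref{eq:waveequation}, interchanges $u\leftrightarrow v$ and $\mathcal{H}^+\leftrightarrow \mathcal{I}^+$, and sends $H_0[\psi]\leftrightarrow I_0[\psi]$ and $P_{\mathcal{H}}[\psi]\leftrightarrow P_{\mathcal{I}}[\psi]$.

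\textbf{Steps 1--2 (integration and initial data).} I integrate the spherically symmetric wave equation $4\underline{L}L\phi_0 = -(DD'/r)\phi_0$ in $u$ along constant $v$ to write
\begin{equation*}
L\phi_0(u,v) = L\phi_0(u_0,v) - \int_{u_0}^{u} \tfrac{DD'}{4r}(u',v)\,\phi_0(u',v)\, du'.
\end{equation*}
For $(u,v)\in \mathcal{A}^{\mathcal{I}}$ the full range $u'\in[u_0,u]$ stays in $\{r\geq r_{\mathcal{I}}\}$, so both $r$ and $DD'/(4r)$ admit large-$r$ expansions there. In $(v,r)$ coordinates $L\phi_0(u_0,v)=\tfrac{D}{2}\partial_r\phi_0|_{\Sigma_0}$; splitting $\partial_r\phi_0 = I_0/r^2 + (\partial_r\phi_0-I_0/r^2)$ and invoking Lemma 9.2 to expand $r^{-1} = 2(v-u_0)^{-1} + 8M(v-u_0)^{-2}\log(v-u_0) + O((v-u_0)^{-2})$ together with $D = 1 - 2M/r + M^2/r^2$ yields the initial-data contribution $2I_0 v^{-2} + 16MI_0 v^{-3}\log v$, with an $O(v^{-3})$ remainder controlled by $|I_0|+P_{\mathcal{I}}[\psi]$.

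\textbf{Step 3 (explicit evaluation).} Into the $u'$-integral I substitute the first-order expansion $\phi_0(u',v) = 2I_0(1/u'-1/v) + \mathrm{Err}_{\phi_0}$ from Propositions 8.5 and 9.1 together with $DD'/(4r) = M/(2r^3) + O(r^{-4}) = 4M(v-u')^{-3} + O((v-u')^{-3}\log(v-u'))$ (again via Lemma 9.2). The principal part reduces to $-\tfrac{8MI_0}{v}\int_{u_0}^{u}\tfrac{du'}{u'(v-u')^2}$, which, by the partial fractions $\tfrac{1}{u'(v-u')^2} = \tfrac{1}{v^2 u'} + \tfrac{1}{v^2(v-u')} + \tfrac{1}{v(v-u')^2}$, evaluates to $-\tfrac{8MI_0}{v^2(v-u)} - \tfrac{8MI_0}{v^3}\log\tfrac{u}{v-u} - \tfrac{8MI_0}{v^3}\log v$ up to $O(v^{-3})$ contributions at the lower limit. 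Combining with Step 2 and using $v^{-2}(v-u)^{-1} = uv^{-3}(v-u)^{-1} + v^{-3}$ and $\log(vu/(v-u)) = \log v + \log(u/(v-u))$ assembles exactly the four explicit terms of \eqref{eq:2ndoasympdvphiinf}. The three error scales $v^{-3}$, $v^{-2-\epsilon}(v-u)^{-1}$ and $v^{-2}(v-u)^{-2+\eta}$ in $\mathrm{Err}_{\mathcal{I}}(u,v)$ then arise respectively from: the difference $\partial_r\phi_0-I_0/r^2$ on $\Sigma_0$ (via $P_{\mathcal{I}}[\psi]$); the $\mathrm{Err}_{\phi_0}$ piece containing $H_0[\psi],\sqrt{E^\epsilon_{0;1}[\psi]}$ and $P_{H_0,1}[\psi]$; and the subleading $\log$ corrections in the $r$-expansions of Lemma 9.2, each integrated against $(v-u')^{-3}$ with Cauchy--Schwarz and the Hardy inequalities of Appendix A.

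\textbf{Main obstacle.} The most delicate point is the bookkeeping of the $v^{-3}\log v$ coefficient: the stated $+16MI_0$ is obtained as $(+16MI_0)$ from the initial-data expansion of Step 2, plus $(-8MI_0)$ from the lower limit of the explicit integral in Step 3, after a further $-8MI_0\,v^{-3}\log v$ is absorbed into the $-8MI_0 v^{-3}\log(vu/(v-u))$ term via $\log(vu/(v-u))=\log v+\log(u/(v-u))$. A small miscount in any of these three pieces would corrupt the cancellation and produce an incorrect constant; closely related is the need to transition uniformly between the first-order asymptotics available in the inner region $\mathcal{A}^{\mathcal{I}}_{\gamma^{\mathcal{I}}}$ (Proposition 8.5) and the outer region $\mathcal{A}^{\mathcal{I}}\setminus \mathcal{A}^{\mathcal{I}}_{\gamma^{\mathcal{I}}}$ (Proposition 9.1), so that the combined error matches exactly the form of $\mathrm{Err}_{\mathcal{I}}(u,v)$ up to the degenerate boundary $v-u\sim 1$.
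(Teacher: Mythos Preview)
Your proposal is correct and follows essentially the same approach as the paper: integrate the spherically symmetric wave equation $\partial_u\partial_v\phi_0=-\tfrac{DD'}{4r}\phi_0$ in $u$ along constant $v$, insert the first-order asymptotics of $\psi_0$ from Proposition~\ref{prop:asympsi} (your ``Propositions 8.5 and 9.1'') together with the expansion of $r$ in terms of $v-u$ from Lemma~\ref{lm:relationruv}, and evaluate the resulting integral explicitly. The paper writes the principal equation as $\partial_u\partial_v(r\psi)=[-2M(v-u)^{-2}+O((v-u)^{-3+\eta})]\psi$ and then defers the explicit integration (your Step~3 partial-fractions computation) to Proposition~3.1 of \cite{logasymptotics}; your write-up is simply more self-contained on this point. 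For part~(ii) the paper repeats the argument directly in $\mathcal{A}^{\mathcal{H}}$ with $u$ and $v$ interchanged, whereas you invoke the Couch--Torrence duality; this is a legitimate shortcut here and gives the same result.
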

\begin{proof}
By applying the relations between $r$, $u$ and $v$ from Lemma \ref{lm:relationruv}, we obtain in $\mathcal{A}^{\mathcal{I}}$:
\begin{equation}
\label{eq:maineq2ndasympinf}
\partial_u\partial_v(r\psi)(u,v)=\Big[-2M(v-u)^{-2}+O((v-u)^{-3+\eta})\Big]\cdot \psi\\
\end{equation}
with $\eta>0$ arbitrarily small, and hence, by Proposition \ref{prop:asympsi} we have that there exists an $\epsilon>0$ such that
\begin{equation*}
\begin{split}
\left|\partial_u\partial_v(r\psi)(u,v)+\frac{8MI_0[\psi]}{vu}(v-u)^{-2}\right|\leq &\:C(I_0[\psi[+H_0[\psi])(v-u)^{-3+\eta}v^{-1}u^{-1}\\
&+ C(I_0[\psi]+H_0[\psi]+\sqrt{E^{\epsilon}_{0; 1}[\psi]}+P_{\mathcal{H}}[\psi]+P_{\mathcal{I}}[\psi])(v-u)^{-2}v^{-1}u^{-1-\epsilon}.
\end{split}
\end{equation*}
The estimate \eqref{eq:2ndoasympdvphiinf} now follows by repeating the arguments in the proof of Proposition 3.1 of \cite{logasymptotics}

Now, we apply the relations between $r-M$, $u$ and $v$ from Lemma \ref{lm:relationruv} in $\mathcal{A}^{\mathcal{H}}$ to obtain:
\begin{equation}
\label{eq:maineq2ndasymphor}
\partial_v\partial_u(r\psi)(u,v)=\Big[-2M(v-u)^{-2}+O((v-u)^{-3+\eta})\Big]\cdot \sqrt{D} \psi\\
\end{equation}
By using \eqref{eq:maineq2ndasymphor} together with the estimate for $\sqrt{D}\cdot \psi$ from Proposition \ref{prop:asympsi}, we can similarly find an $\epsilon>0$ such that for all $(u,v)$ in $\mathcal{A}^{\mathcal{H}}$
\begin{equation*}
\begin{split}
\left|\partial_v\partial_u(r\psi)(u,v)+\frac{8MH_0}{uv}(u-v)^{-2}\right|\leq &\:C(I_0+H_0)(u-v)^{-3+\eta}u^{-1}v^{-1}\\
&+ C(I_0+H_0+\sqrt{E^{\epsilon}_{0; 1}[\psi]}+P_{\mathcal{H}}[\psi]+P_{\mathcal{I}}[\psi])(u-v)^{-2}u^{-1}v^{-1-\epsilon}.
\end{split}
\end{equation*}
We obtain \eqref{eq:2ndoasympdvphihor} by once again repeating the arguments in the proof of Proposition 3.1 of \cite{logasymptotics} and moreover interchanging the roles of $u$ and $v$ (and $I_0$ and $H_0$).
\end{proof}

\begin{proposition}
\label{prop:2ndasymphi}
For all $\epsilon>0$, there exists a constant $C=C(M,\Sigma,r_{\mathcal{I}},r_{\mathcal{H}},\epsilon)>0$ such that we can estimate:
\begin{equation}
\label{eq:2ndasymphiinf}
\begin{split}
\Bigg|r\psi(u,v)&-2I_0[\psi([u^{-1}-v^{-1})+4MI_0[\psi]u^{-2}\log u-4MI_0[\psi]v^{-2}\log u\\
&+8MI_0[\psi]v^{-2}\log v+4MI_0[\psi](u^{-2}+v^{-2})\log\left(\frac{v-u}{v}\right)\Bigg|\\
\leq&\:   C\left(I_0[\psi]+H_0[\psi]+\sqrt{E^{\epsilon}_{0; 1}[\psi]}+P_{\mathcal{H}}[\psi]+P_{\mathcal{I}}[\psi]\right)u^{-2} \quad \textnormal{in $\mathcal{A}^{\mathcal{I}}$}.
\end{split}
\end{equation}
and
\begin{equation}
\label{eq:2ndasymphinearhor}
\begin{split}
\Bigg|r\psi(u,v)&-2H_0[\psi](v^{-1}-u^{-1})+4MH_0[\psi]v^{-2}\log v-4MH_0[\psi]u^{-2}\log v\\
&+8MH_0[\psi]u^{-2}\log u+4MH_0[\psi](v^{-2}+u^{-2})\log\left(\frac{u-v}{u}\right)\Bigg|\\
\leq&\:   C\left(I_0[\psi]+H_0[\psi]+\sqrt{E^{\epsilon}_{0; 1}[\psi]}+P_{\mathcal{H}}[\psi]+P_{\mathcal{I}}[\psi]\right)v^{-2} \quad \textnormal{in $\mathcal{A}^{\mathcal{H}}$}.
\end{split}
\end{equation}
In particular,
\begin{equation}
\label{eq:2ndasymphinullinf}
\left|r\psi|_{\mathcal{I}^+}(u)-2I_0[\psi]u^{-1}+4MI_0[\psi]u^{-2}\log u\right|\leq C\left(I_0[\psi]+H_0[\psi]+\sqrt{E^{\epsilon}_{0; 1}[\psi]}+P_{\mathcal{H}}[\psi]+P_{\mathcal{I}}[\psi]\right)u^{-2}
\end{equation}
and
\begin{equation}
\label{eq:2ndasymphihor}
\left|r\psi|_{\mathcal{H}^+}(v)-2H_0[\psi]v^{-1}+4MH_0[\psi]v^{-2}\log v\right|\leq  C\left(I_0[\psi]+H_0[\psi]+\sqrt{E^{\epsilon}_{0;1}[\psi]}+P_{\mathcal{H}}[\psi]+P_{\mathcal{I}}[\psi]\right)v^{-2}.
\end{equation}
\end{proposition}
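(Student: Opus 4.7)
The plan is to obtain \eqref{eq:2ndasymphiinf} by integrating in $v$ the precise formula \eqref{eq:2ndoasympdvphiinf} for $\partial_v(r\psi)$ along constant-$u$ curves in $\mathcal{A}^{\mathcal{I}}$, and symmetrically to obtain \eqref{eq:2ndasymphinearhor} by integrating \eqref{eq:2ndoasympdvphihor} for $\partial_u(r\psi)$ in $u$ in $\mathcal{A}^{\mathcal{H}}$ (swapping the roles of $u\leftrightarrow v$ and $I_0\leftrightarrow H_0$, in line with the underlying Couch--Torrence-type symmetry already visible in Proposition \ref{prop:asymdvphi}). Starting from
\[
r\psi(u,v) \;=\; r\psi|_{\mathcal{I}^+}(u) \;-\; \int_v^{\infty} \partial_{v'}(r\psi)(u,v')\,dv',
\]
one substitutes \eqref{eq:2ndoasympdvphiinf} and computes the principal primitives: $\int_v^{\infty} v'^{-2}\,dv' = v^{-1}$, $\int_v^{\infty} v'^{-3}\log v'\,dv' = \tfrac12 v^{-2}\log v + \tfrac14 v^{-2}$, and, by partial fractions,
\[
\int_v^{\infty} \frac{dv'}{v'^{3}(v'-u)} \;=\; -\frac{1}{u^{3}}\log\!\frac{v-u}{v} - \frac{1}{u^{2}v} - \frac{1}{2uv^{2}}.
\]
The remaining $v'^{-3}\log(v'u/(v'-u))$ piece splits via $\log(v'u/(v'-u)) = \log u + \log v' - \log(v'-u)$ into primitives already listed plus $\int v'^{-3}\log(v'-u)\,dv'$, handled by integration by parts. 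Summing all contributions reproduces exactly the linear combination of $u^{-1}$, $v^{-1}$, $u^{-2}\log u$, $v^{-2}\log u$, $v^{-2}\log v$, and $(u^{-2}+v^{-2})\log((v-u)/v)$ appearing in \eqref{eq:2ndasymphiinf}, modulo an $O(u^{-2})$ remainder and an overall $u$-dependent term equal to $r\psi|_{\mathcal{I}^+}(u)$.

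To pin down $r\psi|_{\mathcal{I}^+}(u)$ and so establish \eqref{eq:2ndasymphinullinf} directly, I would set up an ODE along $\mathcal{I}^+$: from $\partial_u(r\psi) = rT\psi - \partial_v(r\psi)$ together with the fact that $\partial_v(r\psi)(u,v)\to 0$ as $v\to\infty$ at rate $v^{-2}$ from \eqref{eq:2ndoasympdvphiinf}, one has $\partial_u(r\psi|_{\mathcal{I}^+})(u) = \lim_{v\to\infty}rT\psi(u,v)$. Proposition \ref{prop:asympsi} with $k=1$, combined with $4I_0 T(1/(uv)) = -4I_0(u+v)/(u^2v^2)$, yields the leading value $-2I_0/u^2$; to extract the subleading $u^{-3}\log u$ contribution I would revisit \eqref{eq:2ndoasympdvphiinf} in the limit $v\to\infty$ and isolate the $v$-independent piece at that order. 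Integrating in $u$ from $\infty$ downwards and fixing the constant by the qualitative decay $r\psi|_{\mathcal{I}^+}(u)\to 0$ provided by the decay estimates of Section \ref{sec:decayest} then gives \eqref{eq:2ndasymphinullinf}. Substituting this expression back into the integrated formula produces \eqref{eq:2ndasymphiinf}.

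The error term $\mathrm{Err}_{\mathcal{I}}(u,v)$ in \eqref{eq:2ndoasympdvphiinf} integrates to $O(u^{-2})$: after the change of variable $w = v'-u$, the three pieces $v'^{-3}$, $v'^{-2-\epsilon}(v'-u)^{-1}$, and $v'^{-2}(v'-u)^{-2+\eta}$ contribute $O(v^{-2})$, $O(u^{-2-\epsilon}\log u)$, and $O(u^{-2})$ respectively, using that $v-u\ge 2r_*(r_{\mathcal{I}})>0$ in $\mathcal{A}^{\mathcal{I}}$ bounds the singular factor $w^{-1+\eta}$. The argument in $\mathcal{A}^{\mathcal{H}}$ is structurally identical after interchanging $u\leftrightarrow v$, $L\leftrightarrow \underline{L}$, and $I_0\leftrightarrow H_0$, and using $u-v\ge 2r_*(r_{\mathcal{H}})>0$ in $\mathcal{A}^{\mathcal{H}}$. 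Finally, \eqref{eq:2ndasymphinullinf} and \eqref{eq:2ndasymphihor} follow by passing to the limit $v\to\infty$ in \eqref{eq:2ndasymphiinf} and $u\to\infty$ in \eqref{eq:2ndasymphinearhor} respectively, since in each limit all the cross-terms of the form $\log((v-u)/v)$, $v^{-2}\log v$, and $v^{-2}\log u$ (and their $u\leftrightarrow v$ analogues) vanish. The main obstacle is the logarithmic bookkeeping in the independent derivation of $r\psi|_{\mathcal{I}^+}(u)$: one must track carefully how the various $u^{-3}\log u$ contributions from the $\partial_v$ expansion and from the $rT\psi$ asymptotics combine to produce precisely the coefficient $-4MI_0$ of $u^{-2}\log u$ in \eqref{eq:2ndasymphinullinf}, with no leftover discrepancy between the two halves of the argument.
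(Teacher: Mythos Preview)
Your integration scheme and the paper's diverge at a single but decisive point: the choice of base point for the $v$-integration. You write
\[
r\psi(u,v) \;=\; r\psi|_{\mathcal{I}^+}(u) \;-\; \int_v^{\infty} \partial_{v'}(r\psi)(u,v')\,dv',
\]
which forces you to determine $r\psi|_{\mathcal{I}^+}(u)$ \emph{independently} to accuracy $O(u^{-2})$ after removing $2I_0u^{-1}-4MI_0u^{-2}\log u$. That is precisely \eqref{eq:2ndasymphinullinf}, one of the conclusions. Your proposed workaround via the ODE $\partial_u(r\psi|_{\mathcal{I}^+})=\lim_{v\to\infty}rT\psi$ does not close: Proposition~\ref{prop:asympsi} with $k=1$ only supplies the leading $-2I_0u^{-2}$ piece of $rT\psi|_{\mathcal{I}^+}$, and the suggestion to ``revisit \eqref{eq:2ndoasympdvphiinf} in the limit $v\to\infty$'' cannot produce the missing $u^{-3}\log u$ coefficient, since that estimate concerns $\partial_v(r\psi)$ and every term there tends to zero as $v\to\infty$. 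The ``main obstacle'' you flag is therefore a genuine gap.

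The paper sidesteps this entirely by integrating outward from the \emph{finite} boundary $v_1=u+2r_*(r_{\mathcal{I}})$ (i.e.\ $r=r_{\mathcal{I}}$) instead of inward from $\mathcal{I}^+$:
\[
r\psi(u,v)=r\psi(u,v_1)+\int_{v_1}^{v}\partial_{v'}(r\psi)(u,v')\,dv'.
\]
The boundary term $r\psi(u,v_1)$ is immediately $O(u^{-2})$ by the already-proved Proposition~\ref{prop:asympsi} (since $r=r_{\mathcal{I}}$ is fixed), so it is absorbed into the right-hand side. All of your primitive computations for the principal terms and for $\mathrm{Err}_{\mathcal{I}}$ then go through essentially unchanged with lower limit $v_1\sim u$, giving \eqref{eq:2ndasymphiinf} directly; \eqref{eq:2ndasymphinullinf} is then a \emph{consequence} (take $v\to\infty$) rather than a prerequisite. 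The $\mathcal{A}^{\mathcal{H}}$ argument is symmetric, integrating in $u$ from $u_1=v-2r_*(r_{\mathcal{H}})$. (Minor point: in $\mathcal{A}^{\mathcal{H}}$ one has $u-v\ge -2r_*(r_{\mathcal{H}})>0$, not $u-v\ge 2r_*(r_{\mathcal{H}})$, since $r_*(r_{\mathcal{H}})<0$.)
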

\begin{proof}
In order to prove \eqref{eq:2ndasymphiinf} we integrate $\partial_v(r\psi)$ from $(u,v=u+2r_*(r_{\mathcal{I}}))$ to $(u,v=v')$, estimating the contribution of \eqref{eq:2ndoasympdvphiinf} as in Proposition 3.2 of \cite{logasymptotics} and moreover using  \eqref{eq:asympsi} to estimate
\begin{equation*}
|r\psi|(u,v=u+2r_*(r_{\mathcal{I}}))\lesssim \left(I_0[\psi]+H_0[\psi]+\sqrt{E_{0; 1}[\psi]}+P_{\mathcal{H}}[\psi]+P_{\mathcal{I}}[\psi]\right)u^{-2}.
\end{equation*}
We similarly obtain \eqref{eq:2ndasymphihor} by integrating $\partial_u(r\psi)$ from $(u=v-2r_*(r_{\mathcal{H}}),v)$ to $(u=u', v)$, estimating the contribution of \eqref{eq:2ndoasympdvphihor} as in Proposition 3.2 of \cite{logasymptotics}, but with the roles of $u$ and $v$ interchanged, and moreover using  \eqref{eq:asympsi} to estimate
\begin{equation*}
|r\psi|(u=v-2r_*(r_{\mathcal{H}}),v)\lesssim \left(I_0[\psi]+H_0[\psi]+\sqrt{E_{0; 1}[\psi]}+P_{\mathcal{H}}[\psi]+P_{\mathcal{I}}[\psi]\right)v^{-2}.
\end{equation*}
The estimates \eqref{eq:2ndasymphinullinf} and \eqref{eq:2ndasymphihor} then follow simply by taking respectively the limit $v\to \infty$ and $u\to \infty$.
\end{proof}

We can moreover obtain refined late-time asymptotics along $\mathcal{H}^+$ and $\mathcal{I}^+$ in the case when \emph{both} $I_0[\psi]$ and $H_0[\psi]=0$ are vanishing (i.e.\ for Type \textbf{B} data).

\begin{proposition}
\label{prop:2ndasymphitimeint}
Suppose $H_0[\psi]=0$ and $I_0[\psi]=0$. For all $\epsilon>0$, there exists a constant $C=C(M,\Sigma,r_{\mathcal{I}},r_{\mathcal{H}},\epsilon)>0$ such that we can estimate:
\begin{equation}
\label{eq:2ndasymphinullinftimeintmain}
\begin{split}
\Big|&r\psi|_{\mathcal{I}^+}(u)+2I_0^{(1)}[\psi]u^{-2}-8MI_0^{(1)}[\psi]u^{-3}\log u\Big|\\
\leq&\: C\left(I_0^{(1)}[\psi]+H_0^{(1)}[\psi]+\sqrt{E^{\epsilon}_{0,\mathcal{H}; 1}[\psi]+E^{\epsilon}_{0,\mathcal{I}; 1}[\psi]}+P_{\mathcal{H}, T}[\psi]+P_{\mathcal{I}, T}[\psi]\right)u^{-3},
\end{split}
\end{equation}
and the following estimate holds on $\mathcal{H}^{+}$
\begin{equation}
\label{eq:2ndasymphihotimintmain}
\begin{split}
\Big|&r\psi|_{\mathcal{H}^+}(v)+2H_0^{(1)}[\psi]v^{-2}-4MH_0^{(1)}[\psi]v^{-3}\log v\Big|\\
\leq&\:  C\left(I_0^{(1)}[\psi]+H_0^{(1)}[\psi]+\sqrt{E^{\epsilon}_{0,\mathcal{H}; 1}[\psi]+E^{\epsilon}_{0,\mathcal{I}; 1}[\psi]}+P_{\mathcal{H}, T}[\psi]+P_{\mathcal{I}, T}[\psi]\right)v^{-3}.
\end{split}
\end{equation}\end{proposition}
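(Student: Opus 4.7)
The plan is to proceed in the spirit of Propositions \ref{prop:asymdvphi} and \ref{prop:2ndasymphi}, starting this time from the one-order-faster asymptotics for the spherical mean $\psi_0$ of Type \textbf{B} data provided by Corollary \ref{cor:asympsizeroIH}. Equivalently, one may apply Propositions \ref{prop:asymdvphi} and \ref{prop:2ndasymphi} directly to the smooth time-integrated field $\psi_0^{(1)}$, which, by Proposition \ref{eq:smoothextTypeBtimeint}, extends smoothly to the full region $\mathcal{R}$ for Type \textbf{B} data and has well-defined conserved charges $H_0[\psi_0^{(1)}]=H_0^{(1)}[\psi]$ and $I_0[\psi_0^{(1)}]=I_0^{(1)}[\psi]$, and then read off the asymptotics of $\psi_0=T\psi_0^{(1)}$ after appropriately refining the $O(u^{-2})$, $O(v^{-2})$ error terms in \eqref{eq:2ndasymphinullinf}--\eqref{eq:2ndasymphihor}. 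I outline the direct approach below.

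I would first treat the $\mathcal{I}^+$ asymptotics. As in \eqref{eq:maineq2ndasympinf}, one has
\begin{equation*}
\partial_u\partial_v(r\psi_0)(u,v)=\bigl[-2M(v-u)^{-2}+O((v-u)^{-3+\eta})\bigr]\psi_0
\end{equation*}
in $\mathcal{A}^{\mathcal{I}}$ for $\eta>0$ arbitrary. Substituting the Type \textbf{B} asymptotic from Corollary \ref{cor:asympsizeroIH} (noting that $\frac{M}{r\sqrt{D}}\to 0$ as $r\to\infty$, so the $H_0^{(1)}$ contribution is subleading in $\mathcal{A}^{\mathcal{I}}$) yields a refined expression for $\partial_u\partial_v(r\psi_0)$ whose dominant piece decays as $(v-u)^{-2}(u^{-2}v^{-1}+u^{-1}v^{-2})$. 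Integrating this identity in $u$ from the timelike curve $\{v-u=2r_*(r_{\mathcal{I}})\}$ to $(u,v)$ produces a log-refined asymptotic expansion for $\partial_v(r\psi_0)$ analogous to \eqref{eq:2ndoasympdvphiinf}, but one power of $u$ faster. Integrating once more in $v$ from $\{v-u=2r_*(r_{\mathcal{I}})\}$, and supplying the boundary value of $r\psi_0$ on that curve from Corollary \ref{cor:asympsizeroIH}, gives the desired log-refined asymptotics for $r\psi_0(u,v)$ in $\mathcal{A}^{\mathcal{I}}$; passing to the limit $v\to\infty$ yields \eqref{eq:2ndasymphinullinftimeintmain}.

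The $\mathcal{H}^+$ asymptotic \eqref{eq:2ndasymphihotimintmain} follows in exactly the same fashion, starting from the identity \eqref{eq:maineq2ndasymphor}
\begin{equation*}
\partial_v\partial_u(r\psi_0)(u,v)=\bigl[-2M(u-v)^{-2}+O((u-v)^{-3+\eta})\bigr]\sqrt{D}\,\psi_0
\end{equation*}
in $\mathcal{A}^{\mathcal{H}}$, and using the Type \textbf{B} asymptotics for $\sqrt{D}\,\psi_0$ from Corollary \ref{cor:asympsizeroIH}. This time the $I_0^{(1)}$ contribution is subleading as $r\downarrow M$, and it is the $H_0^{(1)}$ term that drives the logarithmic correction. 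One then integrates first in $v$ and then in $u$ from the curve $\{u-v=-2r_*(r_{\mathcal{H}})\}$, using Corollary \ref{cor:asympsizeroIH} to supply the boundary values, and passes to the limit $u\to\infty$.

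The main obstacle in executing this plan is the bookkeeping of error terms: one must verify that every error produced by substituting the Corollary \ref{cor:asympsizeroIH} expansions into the right-hand side of the wave equation, and by the two successive $u$- and $v$-integrations, remains strictly subleading compared to $u^{-3}$ along $\mathcal{I}^+$ and $v^{-3}$ along $\mathcal{H}^+$. Concretely, this amounts to showing that the analogue of the function $\textnormal{Err}_{\mathcal{I}}(u,v)$ of Proposition \ref{prop:asymdvphi}, when evaluated with one extra power of $u$-decay in the input, integrates to an $O(u^{-3})$ error after two integrations (and similarly with the roles of $u,v$ reversed for $\mathcal{H}^+$). The $\epsilon$-improvements needed to close this bookkeeping are precisely those encoded in the enhanced initial-data norms $E_{0,\mathcal{H};1}^{\epsilon}[\psi]$, $E_{0,\mathcal{I};1}^{\epsilon}[\psi]$, $P_{\mathcal{H},T}[\psi]$ and $P_{\mathcal{I},T}[\psi]$ appearing on the right-hand sides of \eqref{eq:2ndasymphinullinftimeintintro}--\eqref{eq:2ndasymphihotimintintro}.
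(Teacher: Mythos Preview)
Your proposal is correct and matches the paper's approach. The paper's proof simply invokes the smoothness of $\psi_0^{(1)}$ from Proposition \ref{eq:smoothextTypeBtimeint} and then refers to Proposition 3.3 of \cite{logasymptotics} (applied in $\mathcal{A}^{\mathcal{I}}$, and with $u,v$ reversed in $\mathcal{A}^{\mathcal{H}}$) to obtain the asymptotics of $T\psi_0^{(1)}=\psi_0$; this is precisely the time-integral route you mention first, while the direct computation you outline via Corollary \ref{cor:asympsizeroIH} is the equivalent unpacked version.
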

\begin{proof}
By $H_0[\psi]=0$ and $I_0[\psi]=0$, together with the assumption that $P_{\mathcal{I}, T}[\psi]<\infty$, it follows by Proposition \ref{eq:smoothextTypeBtimeint} that $\psi_0^{(1)}$ is smooth. We can then apply the arguments in the proof of Proposition 3.3 of \cite{logasymptotics} in $\mathcal{A}^{\mathcal{I}}$, and similar arguments with the roles of $u$ and $v$ reversed in  $\mathcal{A}^{\mathcal{H}}$, to derive the late-time asymptotics of $T\psi_0^{(1)}=\psi_0$. We omit the details of the proof.
\end{proof}

\appendix
 
\section{Basic inequalities on ERN}
\label{sec:BasicInequalitiesOnERN}
In this section, we will list some basic inequalities that are used throughout the paper.

\subsection{Hardy inequalities}
\label{sec:HardyInequalities}

\begin{lemma}[Hardy inequalities]
\label{lm:hardy}
Let $q\in \R\setminus \{-1\}$ and $r_1>M$. Let $f: \times [v_0,\infty)_v\times [M,\infty)_r \to \R$ be a $C^1$ function. Then for all $M<r_1<r_2\leq \infty$ and $u'\geq 0$
\begin{equation}
\label{eq:hardyinf}
\int_{v_{r_1}(u')}^{v_{r_2}(u')} r^qf^2|_{u=u'}\,dv\lesssim (q+1)^{-2} \int_{v_{r_1}(u')}^{v_{r_2}(u')} r^{q+2}(Lf)^2|_{u=u'}\,dv+ 2r_2^{q+1}f^2(u',v_{r_2}(u')),
\end{equation}
where in the $r_2=\infty$ case, the second term on the right-hand side is interpreted as follows:
\begin{equation*}
2\lim_{r\to \infty }r^{q+1}f^2(u,v_{r}(u))
\end{equation*}
 
Furthermore, for all $M\leq r_0<r_1<\infty$ and $v'\geq 0$
\begin{equation}
\label{eq:hardyhor1}
\int_{u_{r_1}(v')}^{u_{r_0}(v')} (r-M)^{-q}f^2|_{v=v'}\,du\lesssim (q+1)^{-2}\int_{u_{r_1}(v')}^{u_{r_0}(v')} (r-M)^{-q-2}(\underline{L}f)^2|_{v=v'}\,du+ 2(r_0-M)^{-q-1}f^2(u_{r_0}(v'),v'),
\end{equation}
where in the $r_0=M$ case, the second term on the right-hand side is interpreted as follows:
\begin{equation*}
2\lim_{u\to \infty}(r-M)^{-q-1}f^2|_{v=v'},
\end{equation*}
or equivalently, in $(v,r,\theta,\varphi)$ coordinates, for all $v'\geq 0$
\begin{equation}
\label{eq:hardyhor2}
\int_{r_0}^{r_1} (r-M)^{-2-q}f^2|_{v=v'}\,dr\lesssim (q+1)^{-2}\int_{r_0}^{r_1} (r-M)^{-q}(\partial_rf)^2|_{v=v'}\,du+ 2(r_0-M)^{-q-1}f^2(v',r_0).
\end{equation}
\end{lemma}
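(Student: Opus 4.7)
The three inequalities are standard Hardy-type estimates and my plan is to prove them by integration by parts in the relevant null direction, combined with Cauchy--Schwarz. The key algebraic observations are: along a curve of constant $u$, $Lr=D/2$, so that $L(r^{q+1})=(q+1)(D/2)r^q$; along a curve of constant $v$, $\underline{L}r=-D/2$, which, combined with the ERN-specific identity $D=(r-M)^2/r^2$, yields the \emph{dual} relation $\underline{L}\bigl[(r-M)^{-q-1}\bigr]=\frac{q+1}{2r^2}(r-M)^{-q}$. These two identities will provide, for each $q\neq -1$, the factorisation of the $r^q$ (respectively $(r-M)^{-q}$) weight as a null derivative, which is the pivot for the integration by parts.

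For \eqref{eq:hardyinf}, I would apply the Leibniz rule to $L(r^{q+1}f^2)$, integrate over $[v_{r_1}(u'),v_{r_2}(u')]$, and solve for $\int r^q f^2\,dv$. This generates a boundary term at $v=v_{r_2}(u')$ (producing the $2r_2^{q+1}f^2$ contribution, interpreted as a limit when $r_2=\infty$), a boundary term at $v=v_{r_1}(u')$ of the correct sign that is simply discarded, and a bulk cross term $\int r^{q+1}f\cdot Lf\,dv$. Splitting the cross term by Cauchy--Schwarz with weight $\sim (q+1)^{-1}$ produces on the one hand a piece that is absorbed on the left, and on the other hand the advertised $(q+1)^{-2}\int r^{q+2}(Lf)^2\,dv$ term on the right. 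Since $r\geq r_1>M$ throughout, the factor $D$ generated by $Lr=D/2$ is bounded below by $D(r_1)>0$ and is absorbed into the implicit constant.

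The estimate \eqref{eq:hardyhor1} follows by the dual construction: I substitute $(r-M)^{-q}=\frac{2r^2}{q+1}\underline{L}\bigl[(r-M)^{-q-1}\bigr]$ into $\int(r-M)^{-q}f^2\,du$ and integrate by parts in $u$. Three contributions arise: (i) a boundary term $\tfrac{2}{q+1}r^2(r-M)^{-q-1}f^2$ evaluated at the endpoints, which, since $r$ is bounded on $\mathcal{A}^{\mathcal{H}}$ (so $r^2\sim M^2$), reproduces the $2(r_0-M)^{-q-1}f^2(u_{r_0},v')$ term; (ii) a cross term $\tfrac{4}{q+1}\int r^2(r-M)^{-q-1}f\,\underline{L}f\,du$, treated by weighted Cauchy--Schwarz to yield the $(q+1)^{-2}\int(r-M)^{-q-2}(\underline{L}f)^2$ integral on the right (again absorbing the bounded $r^4$ factor into the constant); and (iii) a bulk lower-order term $\tfrac{2}{q+1}\int (r-M)^{-q+1}/r\,f^2\,du$ arising from $\underline{L}(r^2)=-rD$. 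Since $(r-M)/r\leq (r_{\mathcal{H}}-M)/M$ can be made as small as one wishes by choosing $r_{\mathcal{H}}-M$ small, term (iii) is of the form $\epsilon\int(r-M)^{-q}f^2\,du$ and is absorbed on the left. Finally, \eqref{eq:hardyhor2} is just a rewriting of \eqref{eq:hardyhor1} under the coordinate change $du=-2r^2(r-M)^{-2}\,dr$, $\underline{L}=-\tfrac{(r-M)^2}{2r^2}\partial_r$ along a constant-$v$ hypersurface, which converts both the bulk and boundary integrals into the advertised $r$-integrals up to bounded factors of $r^2$.

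The main subtlety will be the absorption of the bulk lower-order term (iii) in \eqref{eq:hardyhor1}; this is where the implicit constant's dependence on $r_{\mathcal{H}}-M$ and $|q+1|$ is introduced. In all applications of \eqref{eq:hardyhor1} throughout the paper (most notably in Proposition \ref{prop:generalrpest}), $r_{\mathcal{H}}-M$ is chosen sufficiently small in terms of the relevant value of $p=q$, so the absorption is legitimate; the analogous issue does not arise in the asymptotically flat case \eqref{eq:hardyinf} because $Lr=D/2$ is the only natural weight appearing and no lower-order bulk term is generated.
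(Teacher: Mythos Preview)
The paper does not include a proof of this lemma; it is stated in the appendix as a standard result. Your integration-by-parts and Cauchy--Schwarz approach is the correct and standard one, and your treatment of \eqref{eq:hardyinf} is fine (with the implicit constant depending on $r_1$ through $D(r_1)^{-1}$, which is consistent with the $\lesssim$ notation).

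One point is worth flagging in your handling of \eqref{eq:hardyhor1}. By working directly in the $u$-variable and carrying the $r^2$ factor through the integration by parts, you generate the bulk term (iii), and its absorption genuinely requires $\tfrac{2}{|q+1|}\cdot\tfrac{r_1-M}{r_1}<1$. This is \emph{not} guaranteed by the hypotheses of the lemma (which allow any $M\le r_0<r_1<\infty$ and any $q\neq -1$), so as written your argument does not prove the lemma in its stated generality---only under an additional smallness assumption on $r_1-M$ relative to $|q+1|$. You acknowledge this and note that the applications in the paper always take $r_1=r_{\mathcal H}$ with $r_{\mathcal H}-M$ small, which is true, but the cleaner route avoids the issue entirely: prove \eqref{eq:hardyhor2} first. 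In $(v,r)$ coordinates with $s=r-M$, the weight factorises \emph{exactly} as $s^{-2-q}=-\tfrac{1}{q+1}\partial_s(s^{-1-q})$, so the integration by parts produces only the boundary term and the cross term---no analogue of (iii). Cauchy--Schwarz then closes the estimate with a constant depending only on $q$. Finally, \eqref{eq:hardyhor1} follows from \eqref{eq:hardyhor2} via $du=-2r^2(r-M)^{-2}\,dr$ and $\underline{L}=-\tfrac12 D\,\partial_r$, with the factors of $r^{\pm 2}$ bounded by constants depending only on $r_1$. This gives the lemma in the full stated generality and is presumably the route the authors had in mind, given that they list \eqref{eq:hardyhor2} as an ``equivalent'' formulation.
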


\subsection{Interpolation estimates}
\label{sec:Interpolationestimates}
We will make use of the following interpolation estimates.
\begin{lemma}[Interpolation estimates]
\label{lm:interpolation}
Let $f: \{(u,v)\in \R^2\,|\, u\in [u_0,\infty)\quad v\in [v_{r_{\mathcal{I}} }(u),\infty)\}\to \R$ be a function such that the following inequalities hold: there exist $u$-independent constants $\mathcal{E}_1,\mathcal{E}_2>0$, such that
\begin{align*}
\int_{v_{r_{\mathcal{I}} }(u)}^{\infty}r^{p-\epsilon}f^2(u,v)\,dv\leq &\:\mathcal{E}_1u^{-q},\\
\int_{v_{r_{\mathcal{I}} }(u)}^{\infty}r^{p+1-\epsilon}f^2(u,v)\,dv\leq&\: \mathcal{E}_2u^{-q+1},
\end{align*}
with $q\in \R$ and $\epsilon\in (0,1)$.

Then
\begin{equation}
\label{eq:interp3I}
\int_{v_{r_{\mathcal{I}} }(u)}^{\infty}r^{p}f^2(u,v)\,dv\leq C \max\{\mathcal{E}_1,\mathcal{E}_2\}u^{-q+\epsilon},
\end{equation}
with $C>0$ a constant depending only on $M$, $\Sigma_0$ and $r_{\mathcal{I}}$.

Furthermore, let $\underline{f}: \{(u,v)\in \R^2\,|\, v\in [v_0,\infty)\quad u\in [u_{r_{\mathcal{H}} }(v),\infty)\}\to \R$ be a function such that the following inequalities hold: there exist $v$-independent constants $\mathcal{E}_1,\mathcal{E}_2>0$, such that
\begin{align*}
\int_{u_{r_{\mathcal{H}} }(v)}^{\infty}(r-M)^{-p+\epsilon}\underline{f}^2(u,v)\,du\leq &\:\mathcal{E}_1v^{-q},\\
\int_{u_{r_{\mathcal{H}} }(v)}^{\infty}(r-M)^{-p-1+\epsilon}\underline{f}^2(u,v)\,du\leq&\: \mathcal{E}_2v^{-q+1},
\end{align*}
with $q\in \R$ and $\epsilon\in (0,1)$.

Then
\begin{equation}
\label{eq:interp3H}
\int_{u_{r_{\mathcal{H}} }(v)}^{\infty}(r-M)^{-p}\underline{f}^2(u,v)\,du\leq C \max\{\mathcal{E}_1,\mathcal{E}_2\}v^{-q+\epsilon},
\end{equation}
with $C>0$ a constant depending only on $M$, $\Sigma_0$ and $r_{\mathcal{H}}$.
\end{lemma}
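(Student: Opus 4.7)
The plan is to prove both \eqref{eq:interp3I} and \eqref{eq:interp3H} by the standard device of splitting the $r$-integration at an optimal radius and using each of the two hypothesized bounds on the appropriate piece. I will describe the argument for \eqref{eq:interp3I} in detail; the horizon case \eqref{eq:interp3H} is formally dual under $r \leftrightarrow (r-M)^{-1}$, $L\leftrightarrow \Lbar$, $u\leftrightarrow v$, and I will indicate only the small modification needed to pick the right split.

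For \eqref{eq:interp3I}, fix $u\geq u_0$ and choose a cutoff radius $R = R(u) \geq r_{\mathcal{I}}$ to be optimized at the end; the natural choice will turn out to be $R \sim u$. Split the integral as
\[
\int_{v_{r_{\mathcal{I}}}(u)}^{\infty} r^p f^2\,dv = \int_{\{r\leq R\}} r^p f^2\,dv + \int_{\{r\geq R\}} r^p f^2\,dv.
\]
On $\{r\leq R\}$ write $r^p = r^{p-\epsilon}\cdot r^{\epsilon}\leq R^{\epsilon}\cdot r^{p-\epsilon}$ and invoke the first hypothesis to obtain a contribution of $R^{\epsilon}\mathcal{E}_1 u^{-q}$. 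On $\{r\geq R\}$ write $r^p = r^{p+1-\epsilon}\cdot r^{\epsilon -1}\leq R^{\epsilon-1}\cdot r^{p+1-\epsilon}$ (recalling $\epsilon<1$) and invoke the second hypothesis to obtain $R^{\epsilon-1}\mathcal{E}_2 u^{-q+1}$. Adding,
\[
\int_{v_{r_{\mathcal{I}}}(u)}^{\infty} r^p f^2\,dv \;\leq\; R^{\epsilon}\mathcal{E}_1 u^{-q} + R^{\epsilon-1}\mathcal{E}_2 u^{-q+1}.
\]
Setting $R = u$ (or any constant multiple thereof) balances the two terms and yields the bound $2\max\{\mathcal{E}_1,\mathcal{E}_2\}\,u^{-q+\epsilon}$, which is \eqref{eq:interp3I} up to an absolute multiplicative constant depending only on $M$, $\Sigma_0$ and $r_{\mathcal{I}}$.

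For \eqref{eq:interp3H} the logic is the same but the weight $(r-M)^{-p}$ lies between the two hypothesized weights in reverse order (the weight $(r-M)^{-p-1+\epsilon}$ is more singular at $r=M$, while $(r-M)^{-p+\epsilon}$ is less singular). Splitting at $r-M = R$, on $\{r-M\leq R\}$ use $(r-M)^{-p} = (r-M)^{-p-1+\epsilon}\cdot (r-M)^{1-\epsilon}\leq R^{1-\epsilon}(r-M)^{-p-1+\epsilon}$, and on $\{r-M\geq R\}$ use $(r-M)^{-p} = (r-M)^{-p+\epsilon}\cdot (r-M)^{-\epsilon}\leq R^{-\epsilon}(r-M)^{-p+\epsilon}$. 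Summing and choosing $R = v^{-1}$ balances the two contributions and gives the desired bound $C\max\{\mathcal{E}_1,\mathcal{E}_2\}\,v^{-q+\epsilon}$.

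There is essentially no hard step in this proof; the only mild care required is in the regime where the optimal cutoff $R=u$ (respectively $R=v^{-1}$) falls outside the valid range $R\geq r_{\mathcal{I}}$ (respectively $R\leq r_{\mathcal{H}}-M$), which only happens for $u$ (respectively $v$) in a bounded range. In that case, one of the two integrals in the split is empty, and the remaining one is controlled directly by the hypothesis, with the loss being absorbed into the constant $C$. Thus the estimates hold uniformly for all $u\geq u_0$ and all $v\geq v_0$.
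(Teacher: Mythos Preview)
Your proof is correct and is the standard splitting-at-an-optimal-radius argument; the paper itself defers the proof of \eqref{eq:interp3I} to Lemma 2.6 of \cite{paper1} (which uses the same idea) and then obtains \eqref{eq:interp3H} exactly as you do, by the duality $r\leftrightarrow (r-M)^{-1}$, $u\leftrightarrow v$. Your handling of the boundary regime where $R$ falls outside the admissible range is also the right observation and matches the implicit constant dependence on $M$, $\Sigma_0$, $r_{\mathcal{I}}$, $r_{\mathcal{H}}$.
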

\begin{proof}
See the proof of Lemma 2.6 of \cite{paper1} for the derivation of \eqref{eq:interp3I}. The estimate \eqref{eq:interp3H} follows after replacing $r$ with $(r-M)^{-1}$ and reversing the roles of $u$ and $v$.
\end{proof}

\subsection{Basic energy estimates}
\label{sec:EnergyBounds}
The following energy boundedness estimate holds for all solutions $\psi$ to the wave equation \eqref{eq:waveequation} on ERN:
\begin{equation}
\label{eq:degenbound}
\int_{\Sigma_{\tau}}J^T[\psi]\cdot \mathbf{n}_{\tau}\,d\mu_{\tau}\leq \int_{\Sigma_{0}}J^T[\psi]\cdot \mathbf{n}_{0}\,d\mu_{0}
\end{equation}
and it follows straightforwardly from the Killing property of the vector field $T$, together with the non-negativity of the $T$-energy flux through $\mathcal{H}^+$ and $\mathcal{I}^+$ (in a limiting sense).

We next give an overview of the main Morawetz or \emph{integrated local energy decay} estimates that we will make use of throughout the remainder of the paper. A proof of these estimates can be found in \cite{aretakis1,aretakis2}. 
\begin{theorem}
\label{thm:ileds}
Let $M<r_1<r_2<2M<r_3<r_4<\infty$. Let $N\in \N_0$ and $0\leq \tau_1\leq \tau_2\leq \infty$. Then:
\begin{enumerate}
\item[1.)] There exists a constant $C=C(\Sigma,r_1,r_2,r_3,r_4)>0$ such that
\begin{equation}
\label{eq:iledwayps}
\sum_{0\leq k+l+m\leq N+1}\int_{\tau_1}^{\tau_2} \left[\int_{\Sigma_{\tau}\cap (\{r_1\leq r\leq r_2\}\cup \{r_3\leq r\leq r_4\})} |\snabla_{\s^2}^k\partial_v^{l}\partial_r^m\psi|^2\,d\mu_{\Sigma_{\tau}}\right]\,d\tau\leq C\sum_{k\leq N} \int_{\Sigma_{\tau}} J^T[T^k\psi]\cdot \mathbf{n}_{\tau_1}\,d\mu_{\Sigma_{\tau_1}}.
\end{equation}
\item[2.)] There exists a constant $C=C(\Sigma,r_1,r_4)>0$ such that
\begin{equation}
\label{eq:iledlossder}
\sum_{0\leq k+l+m\leq N+1}\int_{\tau_1}^{\tau_2} \left[\int_{\Sigma_{\tau}\cap (\{r\geq r_1\}\cap \{r\leq r_4\})} |\snabla_{\s^2}^k\partial_v^{l}\partial_r^m\psi|^2\,d\mu_{\Sigma_{\tau}}\right]\,d\tau\leq C\sum_{k\leq N+1} \int_{\Sigma_{\tau}} J^T[T^k\psi]\cdot \mathbf{n}_{\tau_1}\,d\mu_{\Sigma_{\tau_1}}.
\end{equation}
\item[3.)] There exists a constant $C=C(\Sigma,r_1,r_4)>0$ such that
\begin{equation}
\label{eq:iledpsi0}
\sum_{0\leq l+m\leq N+1}\int_{\tau_1}^{\tau_2} \left[\int_{\Sigma_{\tau}\cap (\{r\geq r_1\}\cap \{r\leq r_4\})} |\partial_v^{l}\partial_r^m\psi_0|^2\,d\mu_{\Sigma_{\tau}}\right]\,d\tau\leq C\sum_{k\leq N} \int_{\Sigma_{\tau}} J^T[T^k\psi]\cdot \mathbf{n}_{\tau_1}\,d\mu_{\Sigma_{\tau_1}}.
\end{equation}
\end{enumerate}
\end{theorem}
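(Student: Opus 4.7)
The plan is to establish the three integrated local energy decay (ILED) estimates by constructing an appropriate Morawetz-type multiplier vector field $X = f(r^*)\partial_{r^*}$ (extended smoothly to $\mathcal{H}^+$) together with a scalar correction, and then applying the divergence theorem to the associated current $J^X[\psi] + w \psi \nabla \psi$ in the region $\mathcal{R} \cap \{\tau_1 \leq \tau \leq \tau_2\}$. The function $f$ will be chosen monotonically with $f(r^*=0)=0$ (i.e.\ $f$ vanishing at the photon sphere $r=2M$, where null geodesics are trapped), $f<0$ for $r<2M$ and $f>0$ for $r>2M$; a standard choice is something like $f(r^*) = \arctan(r^*/M)$ or a variant that saturates at $\pm \infty$. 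The scalar correction $w$ is tuned so that the bulk term $K^X[\psi]$ becomes pointwise non-negative and controls $|\snabla_{\s^2}\psi|^2 + (T\psi)^2 + (\partial_{r^*}\psi)^2$ \emph{modulo} two unavoidable degeneracies: one at $r=2M$ from the coefficient $f'$ vanishing there (trapping), and one at $r=M$ coming from the vanishing of the surface gravity (the extremal horizon).

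For \eqref{eq:iledwayps}, localization to the two compact $r$-intervals $[r_1,r_2]$ and $[r_3,r_4]$ avoids the degeneracies at $r=M$, $r=2M$, and $r=\infty$, so one recovers non-degenerate control of $|\partial\psi|^2$ (and angular derivatives) with boundary terms controlled by $\int_{\Sigma_{\tau_1}} J^T[\psi]\cdot \mathbf{n}$. Higher-order derivatives $\partial^k_{r^*}\partial_v^\ell \snabla_{\s^2}^m\psi$ are recovered by commuting \eqref{eq:waveequation} with $T$ (which is Killing, so no commutator terms appear), with $\snabla_{\s^2}$ (which commutes with $\square_g$), and converting $\partial_r$ derivatives via the wave equation itself in a bounded-$r$ region; this accounts for the sum $k+l+m \leq N+1$ in exchange for $N$ copies of $T$ on the right-hand side. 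For \eqref{eq:iledlossder}, we allow $r_2 \leq r \leq r_3$ (which \emph{contains} the photon sphere), and the price of crossing $r=2M$ is the degeneracy of $K^X$ at the trapping radius; this degeneracy is removed in the standard way by commuting with $T$ and using a Poincar\'e-type inequality in $r^*$ on frequency-localized pieces — at the cost of one extra $T$-derivative in the initial-data norm, which is the source of the $k\leq N+1$ versus $k\leq N$ difference.

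For \eqref{eq:iledpsi0}, the projection onto $\psi_0$ eliminates the angular-derivative contributions entirely. The key point is that for spherically symmetric solutions, the trapping at $r=2M$ does \emph{not} manifest as a genuine obstruction: the wave equation for $\phi_0 = r\psi_0$ reduces to a $1{+}1$-dimensional equation with potential $\frac{DD'}{r}$ that is not a trapping potential at the level of the spherical mode. Concretely, one chooses $f$ so that the bulk term already controls $(\partial_{r^*}\phi_0)^2 + (T\phi_0)^2$ non-degenerately on the photon sphere (for the $\ell=0$ mode only), so no $T$-commutation is needed to invert the trapping degeneracy and the initial data norm only requires $k\leq N$ derivatives.

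The main obstacle is the horizon degeneracy: in extremal Reissner--Nordstr\"om the red-shift vector field of \cite{redshift} degenerates, and hence one genuinely cannot extend $K^X$ to be positive definite down to $r=M$ using $J^T$-controlled initial data alone (by the Aretakis conservation law \eqref{introhorizonH} and the Sbierski Gaussian-beam-type obstruction discussed in Section \ref{sec:TheTrappingEffect}). This is precisely why all three estimates restrict to $r\geq r_1>M$ rather than reaching $\mathcal{H}^+$. Within the stated $r\geq r_1$ region, though, the construction is a direct adaptation of the standard Morawetz argument on Schwarzschild/Reissner--Nordstr\"om to this setting, and carries through as in \cite{aretakis1, aretakis2}.
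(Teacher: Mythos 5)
The paper does not actually prove Theorem \ref{thm:ileds}; it simply cites \cite{aretakis1,aretakis2} for these Morawetz estimates. Your sketch is a correct outline of the construction carried out there: a modified Morawetz current $J^{X,w}[\psi]$ with $X=f(r^*)\partial_{r^*}$, $f$ changing sign at the photon sphere $r=2M$, and a scalar correction $w$ tuned for bulk positivity; higher-order derivatives are recovered by commuting with the Killing field $T$ and the angular operators $\Omega^\alpha$ and then inverting the wave equation in the compact region $r_1\le r\le r_4$ (where $D$ is bounded away from $0$) to trade $\partial_r^2\psi$ for lower-order quantities; the extra $T$-commutation in \eqref{eq:iledlossder} compensates for the trapping degeneracy at $r=2M$, while \eqref{eq:iledpsi0} avoids it because the photon-sphere obstruction is a high--angular-frequency phenomenon that does not affect the $\ell=0$ mode.

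One minor imprecision worth flagging: for \eqref{eq:iledpsi0} the multiplier $X=f\partial_{r^*}$ alone does not produce a coercive $(T\phi_0)^2$ term in the bulk; one either adds a small multiple of $T$ (or a suitably chosen timelike combination) to the multiplier or invokes boundedness of the $T$-flux together with the zeroth-order Hardy-type control that comes from the scalar correction. This is standard and does not affect the conclusion, but your phrasing ``chooses $f$ so that the bulk term already controls $(\partial_{r^*}\phi_0)^2 + (T\phi_0)^2$'' slightly overstates what $X=f\partial_{r^*}$ by itself gives.
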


\footnotesize

\bibliographystyle{plain}

\begin{thebibliography}{100}

\bibitem{alaee}
A.~Alaee, M.~Khuri, and H.~Kunduri.
\newblock Proof of the mass-angular momentum inequality for bi-axisymmetric
  black holes with spherical topology.
\newblock {\em Adv. Theor. Math. Phys.}, 20:1397--1441, 2016.

\bibitem{horow}
A.~J. Amsel, G.~T. Horowitz, D.~Marolf, and M.~M. Roberts.
\newblock Uniqueness of extremal {K}err and {K}err-{N}ewman black holes.
\newblock {\em Phys.Rev.}, D81:024033, 2010.

\bibitem{blukerr}
L.~Andersson and P.~Blue.
\newblock Hidden symmetries and decay for the wave equation on the {K}err
  spacetime.
\newblock {\em Annals of Math}, 182:787--853, 2015.

\bibitem{mhighinsta}
N.~Andersson and K.~Glampedakis.
\newblock A superradiance resonance cavity outside rapidly rotating black
  holes.
\newblock {\em Phys. Rev. Lett.}, 84:4537--4540, 2000.

\bibitem{yannis1}
Y.~Angelopoulos.
\newblock Global spherically symmetric solutions of non-linear wave equations
  with null condition on extremal {R}eissner-{N}ordstr\"{o}m spacetimes.
\newblock {\em Int Math Res Notices}, 11:3279--3355, 2016.

\bibitem{areangel6}
Y.~Angelopoulos, S.~Aretakis, and D.~Gajic.
\newblock Asymptotic blow-up for a class of semi-linear wave equations on
  extremal {R}eissner--{N}ordstr\"om spacetimes.
\newblock {\em arXiv:1612.01562}, 2016.

\bibitem{logasymptotics}
Y.~Angelopoulos, S.~Aretakis, and D.~Gajic.
\newblock Logarithmic corrections in the asymptotic expansion for the radiation
  field along null infinity.
\newblock {\em arXiv:1712.09977}, 2017.

\bibitem{aag1}
Y.~Angelopoulos, S.~Aretakis, and D.~Gajic.
\newblock The trapping effect on degenerate horizons.
\newblock {\em Annales Henri Poincar\'{e}}, 18(5):1593--1633, 2017.

\bibitem{paper-bifurcate}
Y.~Angelopoulos, S.~Aretakis, and D.~Gajic.
\newblock Asymptotics for scalar perturbations from a neighborhood of the
  bifurcation sphere.
\newblock {\em Classical and Quantum Gravity}, 35(15):155007, 2018.

\bibitem{extremal-prl}
Y.~Angelopoulos, S.~Aretakis, and D.~Gajic.
\newblock Horizon hair of extremal black holes and measurements at null
  infinity.
\newblock {\em Physical Review Letters}, 121(13):131102, 2018.

\bibitem{paper2}
Y.~Angelopoulos, S.~Aretakis, and D.~Gajic.
\newblock Late-time asymptotics for the wave equation on spherically symmetric,
  stationary backgrounds.
\newblock {\em Advances in Mathematics}, 323:529--621, 2018.

\bibitem{paper1}
Y.~Angelopoulos, S.~Aretakis, and D.~Gajic.
\newblock A vector field approach to almost-sharp decay for the wave equation
  on spherically symmetric, stationary spacetimes.
\newblock {\em Annals of PDE}, 4(2):15, 2018.

\bibitem{SA10}
S.~Aretakis.
\newblock The wave equation on extreme {R}eissner--{N}ordstr\"om black hole
  spacetimes: stability and instability results.
\newblock {\em ar{X}iv:1006.0283}, 2010.

\bibitem{aretakis1}
S.~Aretakis.
\newblock Stability and instability of extreme {R}eissner--{N}ordstr\"om black
  hole spacetimes for linear scalar perturbations {I}.
\newblock {\em Commun. Math. Phys.}, 307:17--63, 2011.

\bibitem{aretakis2}
S.~Aretakis.
\newblock Stability and instability of extreme {R}eissner--{N}ordstr\"om black
  hole spacetimes for linear scalar perturbations {II}.
\newblock {\em Ann. Henri Poincar\'{e}}, 12:1491--1538, 2011.

\bibitem{aretakis2012}
S.~Aretakis.
\newblock A note on instabilities of extremal black holes from afar.
\newblock {\em Class. Quantum Grav.}, 30:095010, 2013.

\bibitem{aretakis2013}
S.~Aretakis.
\newblock On a non-linear instability of extremal black holes.
\newblock {\em Phys. Rev. D}, 87:084052, 2013.

\bibitem{aretakis4}
S.~Aretakis.
\newblock Horizon instability of extremal black holes.
\newblock {\em Adv. Theor. Math. Phys.}, 19:507--530, 2015.

\bibitem{aretakisglue}
S.~Aretakis.
\newblock The characteristic gluing problem and conservation laws for the wave
  equation on null hypersurfaces.
\newblock {\em Annals of PDE}, 3(1):3, 2017.

\bibitem{LB99}
L.~Barack.
\newblock Late time dynamics of scalar perturbations outside black holes. {II}.
  {S}chwarzschild geometry.
\newblock {\em Phys. Rev. D}, 59, 1999.

\bibitem{ind2018}
S.~Bhattacharjee, B.~Chakrabarty, D.~D.~K. Chow, P.~Paul, and A.~Virmani.
\newblock {On Late Time Tails in an Extreme Reissner--Nordstr\"om Black Hole:
  Frequency Domain Analysis}.
\newblock {\em arXiv:1805.10655}, 2018.

\bibitem{Bicak1972}
J.~Bi{\v{c}}{\'a}k.
\newblock {Gravitational collapse with charge and small asymmetries I. Scalar
  perturbations}.
\newblock {\em General Relativity and Gravitation}, 3:331--349, 1972.

\bibitem{other2}
J.~Bi\v{c}\'{a}k.
\newblock Gravitational collapse with charge and small asymmetries {I}:
  {S}calar perturbations.
\newblock {\em Gen. Rel. Grav.}, 3:331--349, 1972.

\bibitem{bizon2012}
P.~Bizon and H.~Friedrich.
\newblock A remark about the wave equations on the extreme
  {R}eissner--{N}ordstr\"om black hole exterior.
\newblock {\em Class. Quantum Grav.}, 30:065001, 2013.

\bibitem{bizon-extremal-nonlinear}
P.~Bizon and M.~Kahl.
\newblock A {Yang}--{Mills} field on the extremal {R}eissner--{N}ordstr\"{o}m
  black hole.
\newblock {\em Class. Quantum Grav.}, 33(175013), 2016.

\bibitem{Burko2007}
C.~J. Blaksley and L.~M. Burko.
\newblock {Late-time tails in the Reissner--Nordstr\"om spacetime revisited}.
\newblock {\em Phys. Rev. D}, 76:104035, 2007.

\bibitem{blu1}
P.~Blue and A.~Soffer.
\newblock Phase space analysis on some black hole manifolds.
\newblock {\em Journal of Functional Analysis}, 256:1--90, 2009.

\bibitem{booth16}
I.~Booth.
\newblock Evolutions from extremality.
\newblock {\em Phys. Rev. D}, 93(084005), 2016.

\bibitem{brenneman-spin}
L.~Brenneman.
\newblock {\em Measuring the Angular Momentum of Supermassive Black Holes}.
\newblock Springer Briefs in Astronomy. Springer, 2013.

\bibitem{ekerr-plunge}
L.~Burko and G.~Khanna.
\newblock Gravitational waves from a plunge into a nearly extremal {K}err black
  hole.
\newblock {\em Phys. Rev. D}, 94(8), 2016.

\bibitem{khanna17}
L.~M. Burko and G.~Khanna.
\newblock Linearized stability of extreme black holes.
\newblock {\em Phys. Rev. D}, 97:061502, 2018.

\bibitem{zimmerman1}
M.~Casals, S.~E. Gralla, and P.~Zimmerman.
\newblock Horizon instability of extremal {K}err black holes: Nonaxisymmetric
  modes and enhanced growth rate.
\newblock {\em Phys. Rev. D}, 94:064003, 2016.

\bibitem{zimmerman5}
M.~Casals and P.~Zimmerman.
\newblock Perturbations of extremal {K}err spacetime: {A}nalytic framework and
  late-time tails.
\newblock {\em arXiv:1801.05830}, 2018.

\bibitem{DC09}
D.~Christodoulou.
\newblock {\em The formation of black holes in general relativity}.
\newblock European Mathematical Society Publishing House, 2009.

\bibitem{chru2008}
P.~T. Chru\'{s}ciel, Y.~Li, and G.~Weinstein.
\newblock Mass and angular-momentum inequalities for axi-symmetric initial data
  sets. {II. A}ngular-momentum.
\newblock {\em Annals Phys.}, 323:2591--2613, 2008.

\bibitem{chru}
P.T. Chru\'{s}ciel and L.~Nguyen.
\newblock A uniqueness theorem for degenerate {K}err--{N}ewman black holes.
\newblock {\em Ann. Henri Poincar\'{e}}, 11:585--609, 2010.

\bibitem{chrus}
P.T. Chru\'{s}ciel and K.P. Tod.
\newblock The classification of static electro-vacuum space-times containing an
  asymptotically flat spacelike hypersurface with compact interior.
\newblock {\em Comm. Math. Phys.}, 271:577--589, 2007.

\bibitem{near-extremal-accretion}
G.~Comp\`ere and R.~Oliveri.
\newblock Self-similar accretion in thin disks around near-extremal black
  holes.
\newblock {\em Mon Not R Astron Soc}, 468(4):4351--4361, 2017.

\bibitem{couch}
W.~Couch and R.~Torrence.
\newblock Conformal invariance under spatial inversion of extreme
  {R}eissner-{N}ordstr\"{o}m black holes.
\newblock {\em Gen. Rel. Grav.}, 16:789--792, 1984.

\bibitem{extreme-optical2}
C.~T. Cunningham and J.~M. Bardeen.
\newblock The optical appearance of a star orbiting an extreme {K}err black
  hole.
\newblock {\em Astrophys. J.}, 173:L137, 1972.

\bibitem{MD03}
M.~Dafermos.
\newblock Stability and instability of the {C}auchy horizon for the spherically
  symmetric {E}instein--{M}axwell--scalar field equations.
\newblock {\em Ann. Math.}, 158:875--928, 2003.

\bibitem{MD05c}
M.~Dafermos.
\newblock The interior of charged black holes and the problem of uniqueness in
  general relativity.
\newblock {\em Commun. Pure Appl. Math.}, LVIII:0445--0504, 2005.

\bibitem{MD12}
M.~Dafermos.
\newblock Black holes without spacelike singularities.
\newblock {\em Comm. Math. Phys.}, 332:729--757, 2014.

\bibitem{dl-scc}
M.~Dafermos and J.~Luk.
\newblock The interior of dynamical vacuum black holes {I}: The
  {$C^0$-stability} of the {K}err {C}auchy horizon.
\newblock {\em arXiv:1710.01722}, 2017.

\bibitem{MDIR05}
M.~Dafermos and I.~Rodnianski.
\newblock A proof of {P}rice's law for the collapse of a self-gravitating
  scalar field.
\newblock {\em Invent. Math.}, 162:381--457, 2005.

\bibitem{redshift}
M.~Dafermos and I.~Rodnianski.
\newblock The redshift effect and radiation decay on black hole spacetimes.
\newblock {\em Comm. Pure Appl. Math.}, 62:859--919, ar{X}iv:0512.119, 2009.

\bibitem{newmethod}
M.~Dafermos and I.~Rodnianski.
\newblock A new physical-space approach to decay for the wave equation with
  applications to black hole spacetimes.
\newblock {\em XVIth International Congress on Mathematical Physics}, pages
  421--432, 2010.

\bibitem{lecturesMD}
M.~Dafermos and I.~Rodnianski.
\newblock Lectures on black holes and linear waves.
\newblock {\em in Evolution equations, {C}lay {M}athematics {P}roceedings,
  {V}ol. 17, Amer. Math. Soc., Providence, RI,}, pages 97--205,
  ar{X}iv:0811.0354, 2013.

\bibitem{part3}
M.~Dafermos, I.~Rodnianski, and Y.~Shlapentokh-Rothman.
\newblock Decay for solutions of the wave equation on {K}err exterior
  spacetimes {III: The full subextremal case} $|a| < m$.
\newblock {\em Annals of Math}, 183:787--913, 2016.

\bibitem{DafShl2016}
M.~Dafermos and Y.~Shlapentokh-Rothman.
\newblock {Time-translation invariance of scattering maps and blue-shift
  instabilities on Kerr black hole spacetimes}.
\newblock {\em Comm. Math. Phys.}, 350:985--1016, 2016.

\bibitem{SD08}
S.~Dain.
\newblock Proof of the angular momentum-mass inequality for axisymmetric black
  holes.
\newblock {\em J. Diff. Geom.}, 79:33--67, 2008.

\bibitem{dain2010}
S.~Dain.
\newblock Extreme throat initial data set and horizon area-angular momentum
  inequality for axisymmetric black holes.
\newblock {\em Phys. Rev. D}, 82(104010), 2010.

\bibitem{dd2012}
S.~Dain and G.~Dotti.
\newblock The wave equation on the extreme {R}eissner--{N}ordstr\"om black
  hole.
\newblock {\em Class. Quantum Grav.}, 30:055011, 2013.

\bibitem{DJR11}
S.~Dain, J.~Jaramillo, and M.~Reiris.
\newblock Area-charge inequality for black holes.
\newblock {\em Class. Quantum Grav.}, 29(3), 2012.

\bibitem{detweiler80}
S.~Detweiler.
\newblock Black holes and gravitational waves {III}. {T}he resonant frequencies
  of rotating holes.
\newblock {\em The Astrophysical Journal}, 239:292--295, 1980.

\bibitem{harvey-modes-2009}
O.~J.~C. Dias, H.~S. Reall, and J.~E. Santos.
\newblock Kerr-{CFT} and gravitational perturbations.
\newblock {\em JHEP}, 101, 2009.

\bibitem{dssprice}
R.~Donninger, W.~Schlag, and A.~Soffer.
\newblock A proof of {P}rice's law on {S}chwarzschild black hole manifolds for
  all angular momenta.
\newblock {\em Adv. Math.}, 226:484--540, 2011.

\bibitem{other1}
R.~Donninger, W.~Schlag, and A.~Soffer.
\newblock On pointwise decay of linear waves on a {S}chwarzschild black hole
  background.
\newblock {\em Comm. Math. Phys.}, 309:51--86, 2012.

\bibitem{em06}
R.~Emparan and G.~T. Horowitz.
\newblock Microstates of a neutral black hole in {M} theory.
\newblock {\em Phys. Rev. Lett.}, 97(141601), 2006.

\bibitem{pauextremalluci}
P.~Figueras and J.~Lucietti.
\newblock On the uniqueness of extremal vacuum black holes.
\newblock {\em Class. Quantum Grav.}, 27:095001, 2010.

\bibitem{gregjan}
G.~Fournodavlos and J.~Sbierski.
\newblock Generic blow-up results for the wave equation in the interior of a
  {S}chwarzschild black hole.
\newblock {\em {Arch. Rational. Mech. Anal.}}, 2019.

\bibitem{Franzen2014}
A.~Franzen.
\newblock {Boundedness of massless scalar waves on Reissner--Nordstr\"om
  interior backgrounds}.
\newblock {\em Comm. Math. Phys.}, 343:601--650, 2014.

\bibitem{gajic}
D.~Gajic.
\newblock {Linear waves in the interior of extremal black holes {I}}.
\newblock {\em Comm. Math. Phys.}, 353:717--770, 2017.

\bibitem{gajic2}
D.~Gajic.
\newblock Linear waves in the interior of extremal black holes {II}.
\newblock {\em Annales Henri Poincar\'{e}}, 18:{4005--4081}, 2017.

\bibitem{dejanjon1}
D.~Gajic and J.~Luk.
\newblock The interior of dynamical extremal black holes in spherical symmetry.
\newblock {\em Pure and Applied Analysis}, 1(2):263--326, 2019.

\bibitem{gibbonssuper}
G.W. Gibbons.
\newblock {\em Aspects of Supergravity Theories. In Supersymmetry,
  Supergravity, and Related topics}.
\newblock World Scientific, 1985.

\bibitem{glampedakisfull}
K.~Glampedakis and N.~Andersson.
\newblock Late-time dynamics of rapidly rotating black holes.
\newblock {\em Phys. Rev. D}, 64:104021, 2001.

\bibitem{godazgar17}
H.~Godazgar, M.~Godazgar, and C.~N. Pope.
\newblock Aretakis charges and asymptotic null infinity.
\newblock {\em {Phys. Rev. D}}, 96:084055, 2017.

\bibitem{gralla2016}
S.~E. Gralla, S.~A. Hughes, and N.~Warburton.
\newblock Inspiral into gargantua.
\newblock {\em Class. and Quantum Grav.}, 33(155002), 2016.

\bibitem{grallastrominger}
S.~E. Gralla, A.~Lupsasca, and A.~Strominger.
\newblock Observational signature of high spin at the event horizon telescope.
\newblock {\em Mon. Not. R. Astron. Soc.}, 475(3):3829--3853, 2018.

\bibitem{zimmerman2}
S.~E. Gralla, A.~Zimmerman, and P.~Zimmerman.
\newblock Transient instability of rapidly rotating black holes.
\newblock {\em Phys. Rev. D}, 94:084017, 2016.

\bibitem{zimmerman4}
S.~E. Gralla and P.~Zimmerman.
\newblock Critical exponents of extremal {K}err perturbations.
\newblock {\em Class. Quantum Grav.}, 35(9), 2018.

\bibitem{strominger-extremal-holography}
M.~Guica, T.~Hartman, W.~Song, and A.~Strominger.
\newblock The {Kerr/CFT} correspondence.
\newblock {\em Phys. Rev. D}, 80:124008, 2009.

\bibitem{CGRPJP94b}
C.~Gundlach, R.~Price, and J.~Pullin.
\newblock Late-time behavior of stellar collapse and explosions. {I}:
  {L}inearized perturbations.
\newblock {\em Phys. Rev. D}, 49:883--889, 1994.

\bibitem{harveyeffective}
S.~Hadar and H.~S. Reall.
\newblock Is there a breakdown of effective field theory at the horizon of an
  extremal black hole?
\newblock {\em Journal of High Energy Physics}, 2017(12):62, 2017.

\bibitem{haw95}
S.~W. Hawking, Gary~T. Horowitz, and Simon~F. Ross.
\newblock Entropy, area, and black hole pairs.
\newblock {\em Phys. Rev. D}, 51:4302--4314, 1995.

\bibitem{Hintz2015}
P.~Hintz.
\newblock {Boundedness and decay of scalar waves at the Cauchy horizon of the
  Kerr spacetime}.
\newblock {\em arXiv:1512.08003}, 2015.

\bibitem{hol09}
S.~Hollands and A.~Ishibashi.
\newblock On the 'Stationary implies axisymmetric' theorem for extremal black
  holes in higher dimensions.
\newblock {\em Commun. Math. Phys.}, 291:403--441, 2009.

\bibitem{kesden}
M.~Kesden, G.~Lockhart, and E.~S. Phinney.
\newblock Maximum black-hole spin from quasi-circular binary mergers.
\newblock {\em Phys.Rev.D}, 82(124045), 2010.

\bibitem{koyama2001}
H.~Koyama and A.~Tomimatsu.
\newblock Asymptotic power-law tails of massive scalar fields in a
  {R}eissner-{N}ordstr\"{o}m background.
\newblock {\em Phys. Rev. D}, 63(064032), 2001.

\bibitem{luciettikund}
H.~K. Kunduri and J.~Lucietti.
\newblock A classification of near-horizon geometries of extremal vacuum black
  holes.
\newblock {\em J. Math. Phys.}, 50:082502, 2009.

\bibitem{k09}
H.~K. Kunduri and J.~Lucietti.
\newblock Uniqueness of near-horizon geometries of rotating extremal {AdS(4)}
  black holes.
\newblock {\em Class. Quantum Grav.}, 26(055019), 2009.

\bibitem{leaver}
E.~W. Leaver.
\newblock Spectral decomposition of the perturbation response of the
  schwarzschild geometry.
\newblock {\em Phys. Rev. D}, 34:384--408, 1986.

\bibitem{extremalrigidity}
J.~Lewandowski and T.~Pawlowski.
\newblock Extremal isolated horizons: {A} local uniqueness theorem.
\newblock {\em Class.Quant.Grav.}, 20:587--606, 2003.

\bibitem{extremal-binary-merger}
G.~Lovelace, R.~Owen, H.~P. Pfeiffer, and T.~Chu.
\newblock Binary-black-hole initial data with nearly-extremal spins.
\newblock {\em Phys. Rev. D}, 78(084017), 2008.

\bibitem{hm2012}
J.~Lucietti, K.~Murata, H.~S. Reall, and N.~Tanahashi.
\newblock On the horizon instability of an extreme {R}eissner--{N}ordstr\"om
  black hole.
\newblock {\em JHEP}, 1303:035, 2013.

\bibitem{hj2012}
J.~Lucietti and H.~S. Reall.
\newblock Gravitational instability of an extreme {K}err black hole.
\newblock {\em Phys. Rev. D}, 86:104030, 2012.

\bibitem{luk2015}
J.~Luk and S.-J. Oh.
\newblock {Proof of linear instability of the Reissner--Nordstr\"{o}m Cauchy
  horizon under scalar perturbations}.
\newblock {\em Duke Math. J.}, 166(3):437--493, 2017.

\bibitem{Luk2016a}
J.~Luk and S.-J. Oh.
\newblock {Strong cosmic censorship in spherical symmetry for two-ended
  asymptotically flat data I: Interior of the black hole region}.
\newblock {\em arXiv:1702.05715}, 2017.

\bibitem{Luk2016b}
J.~Luk and S.-J. Oh.
\newblock {Strong cosmic censorship in spherical symmetry for two-ended
  asymptotically flat data II: Exterior of the black hole region}.
\newblock {\em arXiv:1702.05716}, 2017.

\bibitem{LukSbierski2016}
J.~Luk and J.~Sbierski.
\newblock Instability results for the wave equation in the interior of {K}err
  black holes.
\newblock {\em Journal of Functional Analysis}, 271(7):1948 -- 1995, 2016.

\bibitem{metal}
J.~Metcalfe, D.~Tataru, and M.~Tohaneanu.
\newblock Price's law on nonstationary spacetimes.
\newblock {\em Advances in Mathematics}, 230:995--1028, 2012.

\bibitem{moschidis1}
G.~Moschidis.
\newblock The $r^{p}$-weighted energy method of {D}afermos and {R}odnianski in
  general asymptotically flat spacetimes and applications.
\newblock {\em Annals of PDE}, 2:6, 2016.

\bibitem{murata2012}
K.~Murata.
\newblock Instability of higher dimensional extreme black holes.
\newblock {\em Class. Quantum Grav.}, 30:075002, 2013.

\bibitem{harvey2013}
K.~Murata, H.~S. Reall, and N.~Tanahashi.
\newblock What happens at the horizon(s) of an extreme black hole?
\newblock {\em Class. Quantum Grav.}, 30:235007, 2013.

\bibitem{np2}
E.~T. Newman and R.~Penrose.
\newblock New conservation laws for zero rest mass fields in asympotically flat
  space-time.
\newblock {\em Proc. R. Soc. A}, 305:175204, 1968.

\bibitem{extremal-rn-qnm}
H.~Onozawa, T.~Mishima, T.~Okamura, and H.~Ishihara.
\newblock Quasinormal modes of maximally charged black holes.
\newblock {\em Phys. Rev. D}, 53(7033), 1996.

\bibitem{ori2013}
A.~Ori.
\newblock Late-time tails in extremal {R}eissner--{N}ordstr\"{o}m spacetime.
\newblock {\em arXiv:1305.1564}, 2013.

\bibitem{Price1972}
R.~Price.
\newblock Non-spherical perturbations of relativistic gravitational collapse.
  {I}. {S}calar and gravitational perturbations.
\newblock {\em Phys. Rev. D}, 3:2419--2438, 1972.

\bibitem{reynolds-nearextremal}
C.~S. Reynolds.
\newblock The spin of supermassive black holes.
\newblock {\em Class. Quantum Grav.}, 30(244004), 2013.

\bibitem{armen}
A.~Saghatelian.
\newblock Near-horizon dynamics of particle in extreme {Reissner}--{N}ordstrom
  and clement--galtsov black hole backgrounds: action-angle variables.
\newblock {\em Class. Quantum Grav.}, 29(245018), 2012.

\bibitem{janpaper}
J.~Sbierski.
\newblock Characterisation of the energy of {G}aussian beams on {L}orentzian
  manifolds with applications to black hole spacetimes.
\newblock {\em Anal. Part. Diff. Eq}, 8:1379--1420, 2015.

\bibitem{volker1}
V.~Schlue.
\newblock Decay of linear waves on higher-dimensional {S}chwarzschild black
  holes.
\newblock {\em Analysis and PDE}, 6(3):515--600, 2013.

\bibitem{sela2}
O.~Sela.
\newblock Late-time decay of coupled electromagnetic and gravitational
  perturbations outside an extremal charged black hole.
\newblock {\em Phys. Rev. D}, 94:084006, 2016.

\bibitem{sela}
O.~Sela.
\newblock Late-time decay of perturbations outside extremal charged black hole.
\newblock {\em Phys. Rev. D}, 93:024054, 2016.

\bibitem{staro}
A.~Starobinski and S.~Churilov.
\newblock Amplification of electromagnetic ang gravitational waves scattered by
  a rotating black hole.
\newblock {\em Sov. Phys. JETP}, 38(1):1--5, 1974.

\bibitem{stromingerextremalentropy}
A.~Strominger and C.~Vafa.
\newblock Microscopic origin of the {B}ekenstein--{H}awking entropy.
\newblock {\em Phys. Lett. B}, 379:99--104, 1996.

\bibitem{tataru3}
D.~Tataru.
\newblock Local decay of waves on asymptotically flat stationary space-times.
\newblock {\em American Journal of Mathematics}, 135:361--401, 2013.

\bibitem{rees2005}
M.~Volonteri, P.~Madau, E.~Quataert, and M.~Rees.
\newblock The distribution and cosmic evolution of massive black hole spins.
\newblock {\em Astrophys.J.}, 620:69--77, 2005.

\bibitem{luis}
H.~Yang, A.~Zimmerman, and L.~Lehner.
\newblock Turbulent black holes.
\newblock {\em Phys. Rev. Lett.}, 114:081101, 2015.

\bibitem{zeni13}
H.~Yang, A.~Zimmerman, A.~Zenginoglu, F.~Zhang, E.~Berti, and Y.~Chen.
\newblock Quasinormal modes of nearly extremal {K}err spacetimes: {S}pectrum
  bifurcation and power-law ringdown.
\newblock {\em Phys. Rev. D}, 88:044047, 2013.

\bibitem{zimmerman3}
P.~Zimmerman.
\newblock Horizon instability of extremal {R}eissner--{N}ordstr\"{o}m black
  holes to charged perturbations.
\newblock {\em Phys. Rev. D}, 95:124032, 2017.

\end{thebibliography}

\footnotesize{

Department of Mathematics, University of California, Los Angeles, CA 90095, United States, yannis@math.ucla.edu

\bigskip

Princeton University, Department of Mathematics, Fine Hall, Washington Road, Princeton, NJ 08544, United States, aretakis@math.princeton.edu

\bigskip

Department of Mathematics, University of Toronto Scarborough 1265 Military Trail, Toronto, ON, M1C 1A4, Canada, aretakis@math.toronto.edu

\bigskip

Department of Mathematics, University of Toronto, 40 St George Street, Toronto, ON, Canada, aretakis@math.toronto.edu

\bigskip

Department of Pure Mathematics and Mathematical Statistics, University of Cambridge, Wilberforce Road, Cambridge CB3 0WB, United Kingdom, dg405@cam.ac.uk }

\end{document}